\newcommand\myshade{85}
\colorlet{mylinkcolor}{violet}
\colorlet{mycitecolor}{Aquamarine}
\colorlet{myurlcolor}{YellowOrange}
\newtheoremstyle{mystyle}
  {}                                      
  {}                                      
  {\itshape}                              
  {}                                      
  {\bfseries}                             
  {.}                                     
  { }                                     
  {\thmname{#1}\thmnumber{ #2}\thmnote{ (#3)}}%
\theoremstyle{mystyle}
\newtheorem{theorem}{Theorem}
\newtheorem{definition}[theorem]{Definition} 
\newtheorem{lemma}[theorem]{Lemma}
\newcommand{\nth}{\bar{n}_{\scriptsize\textrm{th}}}
\newcommand{\nthtiny}{\bar{n}_{\tiny\textrm{th}}}
\newcommand{\reg}{\scriptsize\mathrm{reg}}
\begin{document}

\title{Quantum Computation and Communication in Bosonic Systems}
\author{Kyungjoo Noh}
\advisor{Liang Jiang}
\date{May 2020} 

\frontmatter

\begin{abstract}
Quantum computation and communication are important branches of quantum information science. However, noise in realistic quantum devices fundamentally limits the utility of these quantum technologies. A conventional approach towards large-scale and fault-tolerant quantum information processing is to use multi-qubit quantum error correction (QEC), that is, to encode a logical quantum bit (or a logical qubit) redundantly over many physical qubits such that the redundancy can be used to detect errors. The required resource overhead associated with the use of conventional multi-qubit QEC schemes, however, is too high for these schemes to be realized at scale with currently available quantum devices. Recently, bosonic (or continuous-variable) quantum error correction has risen as a promising hardware-efficient alternative to multi-qubit QEC schemes. 

In this thesis, I provide an overview of bosonic QEC and present my contributions to the field. Specifically, I present the benchmark and optimization results of various single-mode bosonic codes against practically relevant excitation loss errors. I also demonstrate that fault-tolerant bosonic QEC is possible by concatenating a single-mode bosonic code with a multi-qubit error-correcting code. Moreover, I discuss the fundamental aspects of bosonic QEC using the framework of quantum communication theory. In particular, I present improved bounds on important communication-theoretic quantities such as the quantum capacity of bosonic Gaussian channels. Furthermore, I provide explicit bosonic error correction schemes that nearly achieve the fundamental performance limit set by the quantum capacity. I conclude the thesis with discussions on the importance of non-Gaussian resources for continuous-variable quantum information processing.     
\end{abstract}

\maketitle
\makecopyright{2020} 
\tableofcontents
\listoffigures 
\listoftables 

\clearpage
\begin{center}
    \thispagestyle{empty}
    \vspace*{\fill}
    \large{\textit{To my partner Sunnie S. Y. Kim}}
    \vspace*{\fill}
\end{center}
\clearpage

\chapter{Acknowledgments} 

I would like to thank my advisor Liang Jiang for his guidance throughout my PhD studies. His broad research interests have allowed me to explore diverse fields in quantum information science and tackle various research questions from many different angles. I also thank him for being flexible with me freely pursuing what interests me the most.  

I am greatly indebted to my partner Sunnie S. Y. Kim for her incredible support throughout my PhD journey. Doing research has sometimes been unavoidably stressful, but it has never been difficult thanks to her love and support. In fact, she has made doing a PhD so much fun that I am even willing to repeat it infinitely many times if it is with her. (Of course, I would not otherwise!)

I acknowledge all the collaborators who have helped me learn new things. Among all, I want to especially thank Victor V. Albert and Christopher Chamberland. If it were not for them, I would not haven known what I now know about bosonic quantum error correction and fault-tolerant quantum computing.

I would also like to thank all members of the YQI community. Professor Steve Girvin has always been supportive and his insights and guidance have been instrumental in my research. All the exciting experiments from the groups of Professors Michel Devoret and Robert Schoelkopf have inspired me, let me stay motivated, and keep me grounded in reality. Being around such amazing and enthusiastic experimentalists is not a privilege that every theorist enjoys. I also want to thank all members of Liang's group for creating a welcoming and collaborative work environment. Special thanks go to Professors Meng Cheng, Michel Devoret, Steve Girvin, and Arne Grimsmo for serving as my thesis committee members.

Lastly, I would like to express my sincere gratitude to my parents for all their hard work that allowed me to enjoy all the opportunities that they did not necessarily have themselves.

\mainmatter

\chapter{Introduction and motivation}

\section{The field of study}

\subsection{Quantum computation}
\label{subsection:Quantum computation}

Quantum computers \cite{Nielsen2000} are a fundamentally different kind of computers than conventional computers as they take advantage of the unique quantum mechanical properties such as quantum superposition, interference, and quantum entanglement to process information. As an example, while the integer factorization problem is believed not to be solvable in polynomial time (in the size of the input) by using classical computers, quantum computers can factor large integers efficiently in polynomial time by using the Shor's factoring algorithm \cite{Shor1994}. Quantum computers can thus have a significant impact on the field of cryptography since the security of RSA encryption \cite{Rivest1978}, a widely used cryptographic method, is based on the assumption that factoring large integers is practically impossible. This assumption is not valid any more if reliable quantum computers can be built. 

In addition to factoring large integers, quantum computers can efficiently simulate the real-time dynamics of large quantum systems \cite{Lloyd1996}. Since various physical, chemical, and biological phenomena are inherently quantum, quantum computers can be used to simulate these various natural phenomena at scale. Thus, quantum computers can help us make new scientific discoveries in a more guided manner.   


\subsection{Quantum communication}
\label{subsection:Quantum communication}

Quantum communication \cite{Holevo2012,Wilde2013,Hayashi2016,Watrous2018} is another branch of quantum information science wherein the unique features of quantum mechanics can be harnessed to achieve tasks that are otherwise unachievable. As an example, quantum key distribution (QKD) \cite{Bennett1984,Pirandola2019} allows secure classical communication where the security is guaranteed by the validity of the laws of quantum mechanics, not by the computational intractability of certain mathematical problems. Hence, quantum communication provides an alternative cryptographic solution to RSA encryption that is secure against attacks by quantum computers.

More generally, quantum communication theory has a richer structure than classical communication theory. This is because in quantum communication, we can also consider transmitting a quantum bit or a qubit instead of a usual classical bit. Moreover, quantum entanglement plays an important role in quantum communication theory as it has an interesting interplay with quantum and classical information transmission via the quantum super-dense coding protocol \cite{Bennett1992} and the quantum teleportation protocol \cite{Bennett1993}.   

Quantum communication also plays an essential role in the broader context of quantum information science, including quantum computation. To be more specific, while a single quantum processor may be able to perform an interesting computational task that is intractable even by the most powerful conventional computer, it may not support sufficiently many qubits that are needed for a useful quantum computing application. In this case, it will be essential to connect distant quantum processors via quantum communication such that multiple quantum processors can be operated in concert. Hence, quantum communication goes hand in hand with quantum computation.

\section{A brief historical context}

\subsection{Conventional approach towards fault-tolerance}

While ideal quantum computers can efficiently factor large integers and simulate large quantum systems, realistic quantum devices are noisy and thus do not produce a reliable computational outcome. Also, noise in realistic quantum communication channels corrupts transmitted information. Thus, quantum error correction (QEC) \cite{Shor1995,Gottesman1997} is absolutely essential for realizing reliable and large-scale quantum information processing. 

For the past two decades, there has been significant theoretical progress in quantum error correction. In particular, it has been established that fault-tolerant quantum information processing is possible if the noise strength is below a certain threshold \cite{Shor1996,Gottesman2009}. The conventional approach towards fault-tolerance is to encode an error-corrected logical qubit redundantly over many qubits such that the redundancy can be used to detect errors. 

Topological quantum error-correcting codes \cite{Bravyi1998,Dennis2002,Bombin2006,Bombin2007,Fowler2012} such as the surface code and the color code are leading candidates for achieving fault-tolerance since they can be implemented by using only nearest-neighbor interactions and have relatively high fault-tolerance thresholds. However, the resource overhead associated with the use of these conventional approaches is too high for these schemes to be realized at scale with currently available quantum devices. For instance, it is estimated that roughly $10^{3}$ qubits are needed to encode a single reliable logical qubit given a physical error rate $p\sim 10^{-3}$ \cite{Fowler2012}. On the other hand, state-of-the-art quantum processors currently support about $50$ qubits with a physical error rate $p\sim 5\times 10^{-3}$ \cite{Arute2019}.    

While building a fully fault-tolerant quantum computer is still a distant goal, there has been significant progress in the experimental realization of QEC. In particular, it has been demonstrated that the lifetime of a qubit can be extended by using a quantum error-detecting code and post-selection \cite{Vuillot2017,Linke2017,Kraglund2019}. Moreover, there have been various theoretical proposals such as flag-qubit schemes for reducing the resource overhead required for fault-tolerant quantum error correction \cite{Chamberland2018a,Chao2018,Chao2018b,Chamberland2019,
Chao2019,Chamberland2020,Chamberland2020a}. Despite these recent progress, however, achieving fault-tolerance using is still very challenging.

\subsection{Bosonic quantum error correction (focus of the thesis)}

\begin{figure}[t!]
\centering
\includegraphics[width=5.5in]{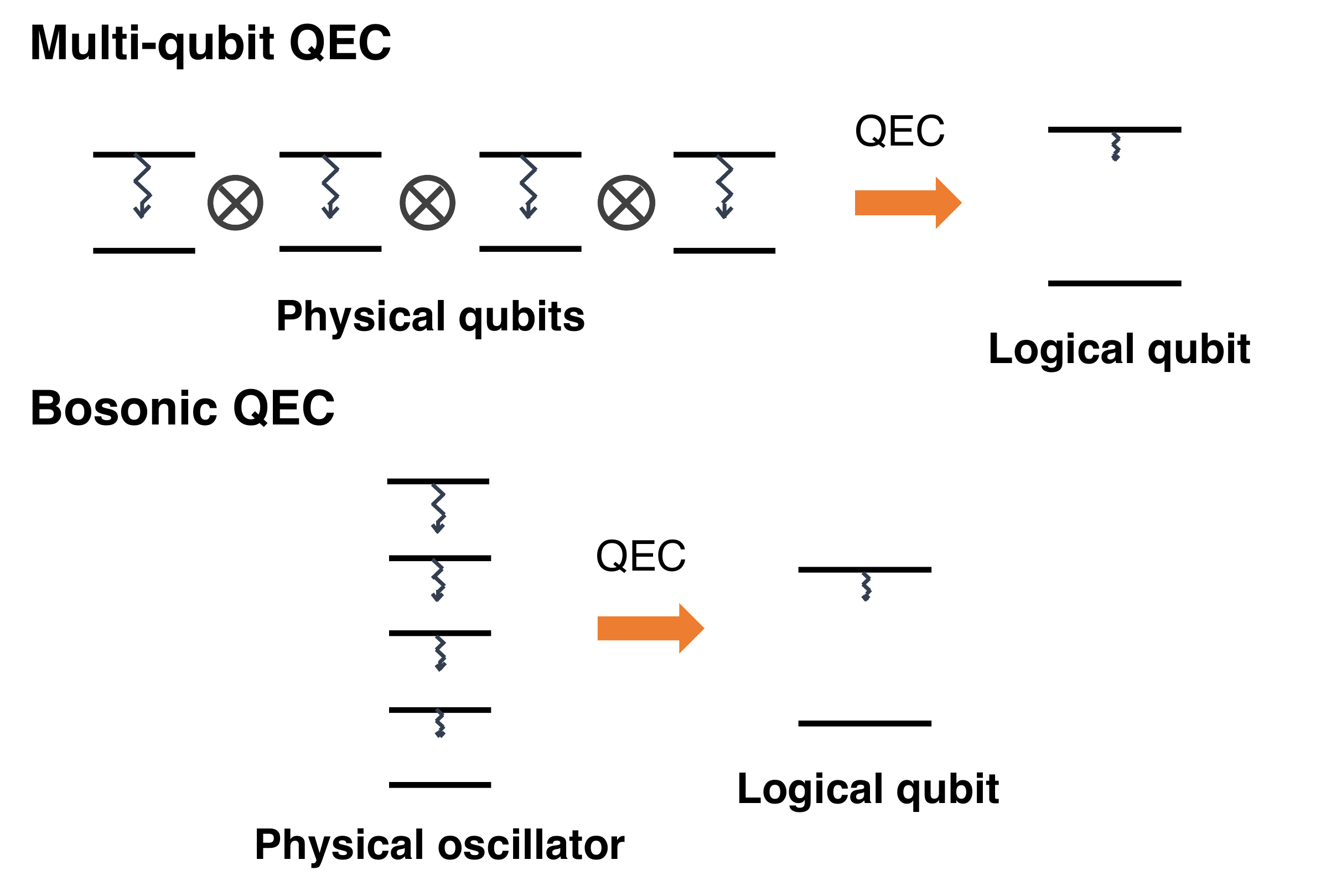}
\caption{Schematic illustration of multi-qubit QEC and bosonic QEC. In multi-qubit QEC, a logical qubit is redundantly encoded over multiple qubits. In bosonic QEC, on the other hand, a logical qubit is redundantly encoded by using multiple levels in a single bosonic mode.  }
\label{fig:bosonic QEC}
\end{figure}

Bosonic QEC \cite{Albert2018} has recently risen as a hardware-efficient alternative to the conventional multi-qubit QEC. Bosonic QEC takes advantage of the fact that even a single bosonic mode consists of infinitely many quantum states. In bosonic QEC, an error-corrected logical qubit is redundantly encoded by using multiple levels in a single bosonic mode (see Fig.\ \ref{fig:bosonic QEC} for an illustration). Because a single bosonic mode is sufficient for providing redundancy needed for detecting relevant errors, bosonic QEC is hardware efficient. Indeed, it was demonstrated that it is possible to extend the lifetime of a qubit by using a bosonic cat code \cite{Ofek2016}. In particular, only a single bosonic mode, an ancillary transmon qubit, and a readout cavity mode were needed to achieve the break-even error correction performance.  

Bosonic QEC is also relevant to quantum communication. This is because quantum communication is typically implemented by using light modes, which are described by an infinite-dimensional bosonic Hilbert space. In particular, it was demonstrated that an error-correctable bosonic code can be sent through a microwave quantum communication channel in a circuit QED system \cite{Axline2018}. Such a long-distance quantum state transfer can be used to connect distant quantum processors. 

\section{Summary and reading guide}

In Chapter \ref{chapter:Bosonic quantum error correction}, I provide a pedagogic review of bosonic quantum error correction. I discuss various error models that are relevant to realistic bosonic systems. Also, various single-mode bosonic codes such as cat, binomial, and GKP codes are reviewed. The rest of the thesis is not based on the cat and the binomial codes but they are reviewed for completeness and their experimental relevance. The square-lattice GKP code will be reviewed comprehensively in Subsection \ref{subsection:The square-lattice Gottesman-Kitaev-Preskill (GKP) code} as they are referenced in Chapters \ref{chapter:Benchmarking and optimizing single-mode bosonic codes} and \ref{chapter:Fault-tolerant bosonic quantum error correction}. Generalizations of the GKP codes to a more general lattice structure and multi-mode cases are reviewed in Subsection \ref{subsection:Generalization of GKP codes}. Among these generalizations, the hexagonal-lattice GKP code will be referenced in Chapter \ref{chapter:Benchmarking and optimizing single-mode bosonic codes} and the symplectic lattice codes will be used in Chapter \ref{chapter:Achievable communication rates with bosonic codes}.  

In Chapter \ref{chapter:Benchmarking and optimizing single-mode bosonic codes}, I provide benchmarking results for the performance of various single-mode bosonic codes against practically relevant excitation loss errors. It turns out that the GKP code families outperform many other bosonic code families in correcting excitation loss errors, despite the fact the GKP code families are not specifically designed to correct loss errors. Furthermore, I apply a biconvex optimization technique to search for an optimal single-mode bosonic code for correcting excitation loss errors. Surprisingly again, the hexagonal-lattice GKP code emerges as an optimal code from many independent random Haar initial codes. To explain the excellent, if not optimal, performance of the GKP codes against excitation loss errors, I provide an explicit decoding strategy for the GKP codes against loss errors. Fault-tolerance is not addressed in Chapter \ref{chapter:Benchmarking and optimizing single-mode bosonic codes}. That is, I assume that our attempts to correct errors are noiseless. I adopt this idealized assumption in the interest of simplicity and to focus on the intrinsic error-correcting capabilities of various bosonic codes.  

In Chapter \ref{chapter:Fault-tolerant bosonic quantum error correction}, I consider more realistic situations where even our attempts to correct for errors can be erroneous. In particular, motivated by the excellent performance of the GKP code in the idealized situation, I focus on the fault-tolerance properties of the GKP code. One of the main messages is that fault-tolerant bosonic quantum error correction is possible with the surface-GKP code, i.e., concatenation of the GKP code with the surface code. Also, I demonstrate that the additional information gathered during the GKP error correction can be used to significantly boost the performance of the outer multi-qubit surface code. 

In Chapter \ref{chapter:Quantum capacities of bosonic Gaussian channels}, I consider fundamental aspects of bosonic quantum error correction. More specifically, I study an important quantum communication-theoretic quantity, i.e., the quantum capacity of bosonic Gaussian channels, as it determines the fundamental performance limits of bosonic codes. In Section \ref{section:Upper bound of the Gaussian thermal-loss channel capacity}, I provide an improved upper bound of the Gaussian thermal-loss channel capacity based on a data-processing argument. In Section \ref{section:Lower bound of the Gaussian thermal-loss channel capacity}, I provide an improved lower bound of the energy-constrained quantum capacity of Gaussian thermal-loss channels and show that higher quantum state transmission rates can be achieved than previously believed. By doing so, I prove that Gaussian thermal-loss channels are superadditive with respect to Gaussian input states. 

In Chapter \ref{chapter:Achievable communication rates with bosonic codes}, I study explicit bosonic quantum error correction schemes in the context of quantum communication. Specifically, I investigate the achievable quantum state transmission rates of multi-mode symplectic lattice codes and of numerically optimized single-mode codes. I show they these codes nearly achieve the fundamental limits set by the quantum capacity which are studied in Chapter \ref{chapter:Quantum capacities of bosonic Gaussian channels}.   

In Chapter \ref{chapter:Non-Gaussian resources for bosonic quantum information processing}, I discuss the importance of non-Gaussian resources in continuous-variable quantum information processing. In Section \ref{section:GKP state as a non-Gaussian resource}, I construct a non-Gaussian oscillator-into-oscillators code, namely the GKP-two-mode-squeezing code, and show that the GKP states can be used as a valuable non-Gaussian resource to enable error-corrected bosonic quantum information processing. In Section \ref{section:Cubic phase state as a non-Gaussian resource}, I consider cubic phase states which are analogous to magic states for multi-qubit quantum information processing. In particular, I address the question of whether noisy cubic phase states can be distilled by using only Gaussian states, operations, and homodyne measurements. I show that direct translations of the conventional magic state distillation schemes to bosonic systems do not work.

In Appendix \ref{appendix:Gaussian states, unitaries, and channels}, fundamentals of bosonic systems and Gaussian operations are reviewed. The concepts such as Gaussian states, unitaries, channels, and measurements are used throughout the thesis. In particular, the material in Chapter \ref{chapter:Quantum capacities of bosonic Gaussian channels} relies heavily on the materials in Appendix \ref{appendix:Gaussian states, unitaries, and channels}.  

In Chapters \ref{chapter:Benchmarking and optimizing single-mode bosonic codes}--\ref{chapter:Non-Gaussian resources for bosonic quantum information processing}, I discuss open questions and possible future research directions at the end of each chapter.  

\chapter{Bosonic quantum error correction}
\label{chapter:Bosonic quantum error correction}

In this chapter, I will provide a pedagogic review of bosonic quantum error correction. The main goal of this chapter is to introduce various bosonic codes (i.e., cat, binomial, and GKP codes) and study their error-correcting capabilities against practically relevant errors.   

In Section \ref{section:Fundamentals of quantum error correction}, I will briefly review fundamentals of quantum error correction to introduce the notation and terminology. In Section \ref{section:Relevant error models in bosonic systems}, I will introduce several error models for bosonic systems, namely, excitation loss errors, Gaussian random shift errors in the phase space, and bosonic dephasing errors which are either experimentally relevant or theoretically important for understanding certain bosonic error-correcting codes. In Section \ref{section:Rotation-symmetric bosonic codes}, I will review rotation-symmetric bosonic codes such as cat and binomial codes. In Section \ref{section:Translation-symmetric bosonic codes}, I will review translation-symmetric bosonic codes, namely, Gottesman-Kitaev-Preskill (GKP) codes.  

\section{Fundamentals of quantum error correction}
\label{section:Fundamentals of quantum error correction}

In this section, we will briefly review fundamentals of quantum error correction (QEC). The main goal of this section is to introduce notations and terminologies rather than to give a comprehensive introduction to QEC. For a more through introduction to QEC, see for example Ref.\ \cite{Nielsen2000}.    

\subsection{Error-correcting code, correctable error set, and recovery}

An error-correcting code $\mathcal{C} = \textrm{span}\lbrace |\mu_{L}\rangle\rbrace_{\mu=0,\cdots, d-1}$ is a subspace of a physical Hilbert space $\mathcal{H}$. The orthonormalized basis states $|\mu_{L}\rangle$ of the code space $\mathcal{C}$ are called the logical codewords. Ideally, an error-correcting code $\mathcal{C}$ should be robust against relevant errors in the physical Hilbert space. A code space $\mathcal{C}$ can also be uniquely identified by using the projection operator to the code space: 
\begin{align}
\hat{P}_{\mathcal{C}} &\equiv \sum_{\mu=0}^{d-1}|\mu_{L}\rangle\langle\mu_{L}|. 
\end{align}

\begin{definition}[Correctable error set]
Consider a completely-positive noise map $\mathcal{N}(\hat{\rho}) = \sum_{k}\hat{N}_{k}\hat{\rho}\hat{N}_{k}^{\dagger}$. The error set $\lbrace \hat{N}_{k} \rbrace$ is said to be correctable by a code $\mathcal{C}$ if there exists a CPTP recovery map $\mathcal{R}$ such that 
\begin{align}
\mathcal{R}\cdot \mathcal{N} (\hat{\rho}) \propto \hat{\rho}, 
\end{align}  
for all density matrices $\hat{\rho} \in \mathcal{D}(\mathcal{C}) = \lbrace \hat{\rho}\in \mathcal{L}(\mathcal{C}) | \hat{\rho}^{\dagger}=\hat{\rho} \succeq 0,\mathrm{Tr}[\hat{\rho}]=1 \rbrace$. Here, $\mathcal{L}(\mathcal{C})$ is the space of linear operators on the code space $\mathcal{C}$. 
\end{definition}

Note that we used the proportionality $\propto$ instead of the equality $=$ in the definition. This is because the noise map $\mathcal{N}$ may not be trace-preserving. For example, we may take only the first few leading Kraus operators of an entire CPTP noise channel $\mathcal{N}'$ to define $\mathcal{N}$. In this case the noise map $\mathcal{N}$ may be trace-decreasing. However, the recovery map $\mathcal{R}$ has to be trace-preserving because here we only consider error correction protocols that succeed with probability $1$. Below, we review the Knill-Laflamme condition \cite{Knill1997} that allows us to directly check whether the error set $\lbrace \hat{N}_{k} \rbrace$ is correctable by the code space $\mathcal{C}$ or not. 

\subsection{Knill-Laflamme condition}

\begin{theorem}[Knill-Laflamme condition \cite{Knill1997}]
An error set $\lbrace \hat{N}_{k} \rbrace$ is correctable by a code $\mathcal{C}$ if and only if the following Knill-Laflamme condition is satisfied: 
\begin{align}
\hat{P}_{\mathcal{C}} \hat{N}_{k}^{\dagger}\hat{N}_{k'}\hat{P}_{\mathcal{C}}  = \alpha_{kk'}\hat{P}_{\mathcal{C}}
\end{align}
for all $k,k'$, where $\hat{P}_{\mathcal{C}}$ is the projection operator to the code space and $\alpha_{kk'}$ are the elements of a complex hermitian matrix $\alpha$.  
\label{theorem:knill-laflamme condition}
\end{theorem}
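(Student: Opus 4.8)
The plan is to prove both directions of the equivalence. The harder and more instructive direction is sufficiency: assuming the Knill–Laflamme condition holds, I must explicitly construct a CPTP recovery map $\mathcal{R}$ such that $\mathcal{R}\cdot\mathcal{N}(\hat\rho)\propto\hat\rho$ for all $\hat\rho\in\mathcal{D}(\mathcal{C})$. First I would diagonalize the Hermitian matrix $\alpha=(\alpha_{kk'})$, writing $\alpha = U D U^\dagger$ with $U$ unitary and $D$ diagonal with nonnegative entries $d_m$ (nonnegativity follows because $\alpha_{kk}\hat P_{\mathcal{C}} = \hat P_{\mathcal{C}}\hat N_k^\dagger\hat N_k\hat P_{\mathcal{C}}\succeq 0$, so $\alpha$ is positive semidefinite). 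Defining new Kraus operators $\hat F_m = \sum_k U_{km}^{*}\hat N_k$ gives an equivalent description of $\mathcal{N}$ with $\hat P_{\mathcal{C}}\hat F_m^\dagger \hat F_{m'}\hat P_{\mathcal{C}} = d_m\,\delta_{mm'}\hat P_{\mathcal{C}}$. Each $\hat F_m$ with $d_m>0$ thus maps $\mathcal{C}$ isometrically (up to the scalar $\sqrt{d_m}$) onto a subspace $\mathcal{C}_m \equiv \hat F_m\mathcal{C}$, and the orthogonality relation shows these subspaces are mutually orthogonal. Concretely, the polar decomposition $\hat F_m\hat P_{\mathcal{C}} = \hat V_m\sqrt{\hat P_{\mathcal{C}}\hat F_m^\dagger\hat F_m\hat P_{\mathcal{C}}} = \sqrt{d_m}\,\hat V_m$ yields a partial isometry $\hat V_m$ with $\hat V_m^\dagger\hat V_m = \hat P_{\mathcal{C}}$ and $\hat V_m\hat V_m^\dagger = \hat P_{\mathcal{C}_m}$.

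Next I would define the recovery Kraus operators as $\hat R_m \equiv \hat V_m^\dagger$ for each $m$ with $d_m>0$, plus one extra operator $\hat R_{\perp}$ projecting onto the orthogonal complement of $\bigoplus_m \mathcal{C}_m$ (this absorbs the $d_m=0$ components and any leftover subspace, and is needed to make $\mathcal{R}$ trace-preserving). One checks $\sum_m \hat R_m^\dagger\hat R_m + \hat R_\perp^\dagger\hat R_\perp = \sum_m \hat P_{\mathcal{C}_m} + (\hat I - \sum_m\hat P_{\mathcal{C}_m}) = \hat I$, so $\mathcal{R}$ is CPTP. Then for $\hat\rho\in\mathcal{D}(\mathcal{C})$,
\begin{align}
\mathcal{R}\cdot\mathcal{N}(\hat\rho) &= \sum_{m,m'}\hat R_m \hat F_{m'}\hat\rho\hat F_{m'}^\dagger \hat R_m^\dagger + (\text{terms killed by }\hat R_\perp) \nonumber \\
&= \sum_{m,m'}\hat V_m^\dagger\hat V_{m'}\sqrt{d_m d_{m'}}\,\hat\rho\,\hat V_{m'}^\dagger\hat V_m = \Big(\sum_m d_m\Big)\hat\rho,
\end{align}
using $\hat V_m^\dagger\hat V_{m'}=\delta_{mm'}\hat P_{\mathcal{C}}$ together with $\hat P_{\mathcal{C}}\hat\rho = \hat\rho\hat P_{\mathcal{C}} = \hat\rho$. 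Since $\sum_m d_m = \mathrm{Tr}[\alpha]$ is a fixed constant, this establishes $\mathcal{R}\cdot\mathcal{N}(\hat\rho)\propto\hat\rho$ as required.

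For necessity, I would argue by contrapositive/directly: suppose $\mathcal{R}$ corrects $\{\hat N_k\}$, so $\mathcal{R}\cdot\mathcal{N}(\hat\rho) = c\,\hat\rho$ for all $\hat\rho\in\mathcal{D}(\mathcal{C})$ (the proportionality constant $c$ must be independent of $\hat\rho$ by linearity). Writing $\mathcal{R}(\cdot)=\sum_j \hat R_j(\cdot)\hat R_j^\dagger$, the operator $\hat P_{\mathcal{C}}\cdot$ applied appropriately, or more cleanly the channel identity $\sum_{j,k}(\hat R_j\hat N_k)\hat P_{\mathcal{C}}(\cdot)\hat P_{\mathcal{C}}(\hat R_j\hat N_k)^\dagger = c\,\hat P_{\mathcal{C}}(\cdot)\hat P_{\mathcal{C}}$ on $\mathcal{L}(\mathcal{C})$, forces each $\hat R_j\hat N_k\hat P_{\mathcal{C}} = \lambda_{jk}\hat P_{\mathcal{C}}$ for scalars $\lambda_{jk}$ — this is the standard fact that a channel acting as a scalar multiple of the identity on a full matrix algebra $\mathcal{L}(\mathcal{C})$ must have Kraus operators proportional to $\hat P_{\mathcal{C}}$ on $\mathcal{C}$. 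Then using $\sum_j \hat R_j^\dagger\hat R_j = \hat I$,
\begin{align}
\hat P_{\mathcal{C}}\hat N_k^\dagger\hat N_{k'}\hat P_{\mathcal{C}} = \sum_j \hat P_{\mathcal{C}}\hat N_k^\dagger\hat R_j^\dagger\hat R_j\hat N_{k'}\hat P_{\mathcal{C}} = \sum_j \bar\lambda_{jk}\lambda_{jk'}\hat P_{\mathcal{C}} \equiv \alpha_{kk'}\hat P_{\mathcal{C}},
\end{align}
and $\alpha_{kk'}=\sum_j\bar\lambda_{jk}\lambda_{jk'}$ is manifestly a Hermitian (indeed positive semidefinite) matrix.

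The main obstacle I anticipate is the rigorous justification of the step "a channel acting as a scalar on $\mathcal{L}(\mathcal{C})$ has Kraus operators proportional to $\hat P_{\mathcal{C}}$ on $\mathcal{C}$" in the necessity direction — one must be careful that the proportionality constant in $\mathcal{R}\cdot\mathcal{N}(\hat\rho)\propto\hat\rho$ is genuinely $\hat\rho$-independent (which follows from linearity of the maps applied to a spanning set of $\mathcal{L}(\mathcal{C})$, e.g.\ rank-one projectors and their coherences) and then invoke uniqueness-of-Kraus-representation arguments to pin down the $\hat R_j\hat N_k\hat P_{\mathcal{C}}$. A secondary subtlety is the infinite-dimensional setting relevant to bosonic codes: since the codes $\mathcal{C}$ here are finite-dimensional ($d$-dimensional) subspaces of an infinite-dimensional $\mathcal{H}$, all the algebra above takes place within the finite-dimensional algebra $\mathcal{L}(\mathcal{C})$ and its images under the finitely many (or countably many, suitably truncated) Kraus operators, so no genuine functional-analytic difficulty arises — but this should be remarked upon.
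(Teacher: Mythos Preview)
The paper does not actually prove this theorem; immediately after stating it, the text reads ``The proof of Theorem~\ref{theorem:knill-laflamme condition} is given in, e.g., Refs.~\cite{Knill1997,Nielsen2000}.'' Your proposal therefore goes well beyond what the paper itself supplies, and what you have written is essentially the standard textbook proof (as in Nielsen--Chuang): diagonalize $\alpha$ to obtain orthogonal error subspaces, build the recovery from the polar isometries, and for necessity use the Kraus-uniqueness argument to force $\hat R_j\hat N_k\hat P_{\mathcal C}=\lambda_{jk}\hat P_{\mathcal C}$.

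One small computational slip: in your displayed recovery calculation the factor should be $d_{m'}$ rather than $\sqrt{d_m d_{m'}}$, since both Kraus factors $\hat F_{m'}\hat P_{\mathcal C}$ and $\hat P_{\mathcal C}\hat F_{m'}^\dagger$ carry the same index $m'$. This is harmless because the orthogonality $\hat V_m^\dagger\hat V_{m'}=\delta_{mm'}\hat P_{\mathcal C}$ collapses the sum to $m=m'$ anyway, so your conclusion $\sum_m d_m\,\hat\rho$ is correct. Your remarks on the $\hat\rho$-independence of the proportionality constant and on the finite-dimensionality of $\mathcal{C}$ inside the bosonic Hilbert space are apt and address the only genuine subtleties.
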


The proof of Theorem \ref{theorem:knill-laflamme condition} is given in, e.g., Refs.\ \cite{Knill1997,Nielsen2000}. The Knill-Laflamme condition will be frequently used in the following sections to probe the error-correcting capabilities of various bosonic codes. Note that the Knill-Laflamme condition can also be expressed as
\begin{align}
\langle \mu_{L} | \hat{N}_{k}^{\dagger}\hat{N}_{k'} |\nu_{L}\rangle = \alpha_{kk'}\delta_{\mu\nu}
\end{align} 
for all $k,k'$ and $\mu,\nu$, where $|\mu_{L}\rangle$ and $|\nu_{L}\rangle$ are orthonormal basis states of the code space $\mathcal{C}$.       

\section{Relevant error models in bosonic systems}
\label{section:Relevant error models in bosonic systems}

Here, we review three important error models for bosonic systems, namely, excitation loss errors, Gaussian random shift errors, and bosonic dephasing errors.  See Tables \ref{table:excitation loss errors}, \ref{table:random shift errors}, and \ref{table:bosonic dephasing errors} for a summary.  

\subsection{Excitation loss errors}
\label{subsection:Excitation loss errors}

In this subsection, we consider excitation loss errors which are ubiquitous in many realistic bosonic systems. Specifically, we will discuss four different ways to describe excitation loss errors (see Table \ref{table:excitation loss errors}). 

\begin{table}[h!]
  \centering
     \def\arraystretch{2}
  \begin{tabular}{ V{3} c V{1.5} c V{3} }
   \hlineB{3}  
     \textbf{Representation} & \textbf{Excitation loss errors}  \\  \hlineB{3} 
      Lindblad equation & $\frac{d\hat{\rho}(t)}{dt} = \kappa \mathcal{D}[\hat{a}](\hat{\rho}(t)) =\kappa \big{(} \hat{a}\hat{\rho}(t)\hat{a}^{\dagger}-\frac{1}{2}\hat{a}^{\dagger}\hat{a}\hat{\rho}(t)-\frac{1}{2}\hat{\rho}(t)\hat{a}^{\dagger}\hat{a} \big{)} $  \\   
      (Eqs.\ \eqref{eq:excitation loss Lindbladian equation}, \eqref{eq:dissipation superoperator}) & $\rightarrow \hat{\rho}(t) =  e^{\kappa t \mathcal{D}[\hat{a}]}  \hat{\rho}(0)$.    \\  \hlineB{1.5}    
      Kraus representation & $\hat{\rho}(t) =  e^{\kappa t \mathcal{D}[\hat{a}]}  \hat{\rho}(0) = \sum_{\ell=0}^{\infty} \hat{N}_{\ell}\hat{\rho}(0)\hat{N}_{\ell}^{\dagger}$ where  \\ 
       (Eqs.\ \eqref{eq:excitation loss CPTP map explicit}, \eqref{eq:excitation loss Kraus operators})& $\hat{N}_{\ell} \equiv \sqrt{ \frac{(1-e^{-\kappa t})^{\ell}}{\ell!} } e^{-\frac{\kappa t}{2} \hat{n} } \hat{a}^{\ell}$.   \\ \hlineB{1.5}   
       Heisenberg picture & $\hat{q}(t) = \hat{q}e^{-\frac{\kappa t}{2}}$, $\hat{p}(t) = \hat{p}e^{-\frac{\kappa t}{2}}$,  \\ 
       (Eq.\ \eqref{eq:excitation loss Heisenberg picture summarized})  & $\hat{q}^{2} (t)  = \hat{q}^{2} e^{-\kappa t} + \frac{1}{2}(1-e^{-\kappa t})$,   \\ 
        & $\frac{1}{2}( \hat{q}\hat{p} + \hat{p}\hat{q} )(t) = \frac{1}{2}( \hat{q}\hat{p} + \hat{p}\hat{q} )e^{-\kappa t}$,   \\ 
        & $\hat{p}^{2} (t)  = \hat{p}^{2} e^{-\kappa t} + \frac{1}{2}(1-e^{-\kappa t})$.  \\ \hlineB{1.5}   
        Gaussian channels & $e^{\kappa t \mathcal{D}[\hat{a}]} = \mathcal{N}[\eta = e^{-\kappa t}, 0]$  \\  
      (Eq.\ \eqref{eq:excitation loss errors equal pure loss channels})   & $ \leftrightarrow (\boldsymbol{T},\boldsymbol{N},\boldsymbol{d}) = ( e^{-\frac{\kappa t}{2}} \boldsymbol{I}_{2}, \frac{1}{2}(1-e^{-\kappa t})\boldsymbol{I}_{2} , 0 )$.  \\ \hlineB{3}   
  \end{tabular}
  \caption{Various representations of the excitation loss errors.}
  \label{table:excitation loss errors}
\end{table}

\subsubsection{Lindblad equation}

Realistic bosonic modes are typically subject to excitation loss errors which are described by the following Lindblad equation: 
\begin{align}
\frac{d\hat{\rho}(t)}{dt} = \kappa \mathcal{D}[\hat{a}](\hat{\rho}(t)). \label{eq:excitation loss Lindbladian equation}
\end{align}
Here, $\hat{a}$ is the annihilation operator of the bosonic mode and $\mathcal{D}[\hat{A}]$ is the dissipation superoperator (mapping a density matrix to another density matrix) defined as 
\begin{align}
\mathcal{D}[\hat{A}](\hat{\rho}) \equiv \hat{A}\hat{\rho}\hat{A}^{\dagger} -\frac{1}{2}\lbrace \hat{A}^{\dagger}\hat{A} , \hat{\rho} \rbrace. \label{eq:dissipation superoperator}
\end{align}
$\lbrace \hat{A} , \hat{B} \rbrace \equiv \hat{A}\hat{B} + \hat{B}\hat{A}$ is the anti-commutator. By solving Eq.\ \eqref{eq:excitation loss Lindbladian equation}, we get a completely-positive and trace-preserving (CPTP) map \cite{Choi1975} 
\begin{align}
\hat{\rho}(0) \rightarrow \hat{\rho}(t) =  e^{\kappa t \mathcal{D}[\hat{a}]}  \hat{\rho}(0) .  \label{eq:excitation loss CPTP map formal}
\end{align}
While being concise, the expression in Eq.\ \eqref{eq:excitation loss CPTP map formal} does not provide any useful information about how to evaluate $\hat{\rho}(t)$ given an initial density matrix $\hat{\rho}(0)$. 

\subsubsection{Kraus representation}

A more useful way to represent the CPTP map in Eq.\ \eqref{eq:excitation loss CPTP map formal} is to use the Kraus representation. Here, we will show that the CPTP map $e^{\kappa t \mathcal{D}[\hat{a}]}$ is explicitly given by the Kraus form  
\begin{align}
\hat{\rho}(t) =  e^{\kappa t \mathcal{D}[\hat{a}]}  \hat{\rho}(0) = \sum_{\ell=0}^{\infty} \hat{N}_{\ell}\hat{\rho}(0)\hat{N}_{\ell}^{\dagger}, \label{eq:excitation loss CPTP map explicit}
\end{align}
where the Kraus operators are given by 
\begin{align}
\hat{N}_{\ell} &\equiv \sqrt{ \frac{(1-e^{-\kappa t})^{\ell}}{\ell!} } e^{-\frac{\kappa t}{2} \hat{n} } \hat{a}^{\ell}. \label{eq:excitation loss Kraus operators}
\end{align}
Here, $\hat{n}\equiv \hat{a}^{\dagger}\hat{a}$ is the excitation number operator. The derivation of Eqs.\ \eqref{eq:excitation loss CPTP map explicit} and \eqref{eq:excitation loss Kraus operators} is given below. Note that the Kraus operator $\hat{N}_{\ell}$ corresponds to an $\ell$-excitation loss event. Often times, the decay term $e^{-\frac{\kappa t}{2} \hat{n} }$ is referred to as the no-jump evolution term and the excitation-number-decreasing term $\hat{a}^{\ell}$ is referred to as the jump term. These terminologies are motivated by the quantum trajectory picture which we use to derive Eqs.\ \eqref{eq:excitation loss CPTP map explicit} and \eqref{eq:excitation loss Kraus operators} below.   

Recall the Lindblad equation in Eq.\ \eqref{eq:excitation loss Lindbladian equation}: 
\begin{align}
\frac{d\hat{\rho}(t)}{dt} = \kappa \hat{a}\hat{\rho}(t)\hat{a}^{\dagger} - \frac{\kappa}{2}\hat{n}\hat{\rho}(t) - \frac{\kappa}{2}\hat{\rho}(t)\hat{n}.  \label{eq:excitation loss Lindbladian equation more explicit}
\end{align}
The first term on the right hand side is referred to as the jump term and the second and the third terms are referred to as the back-action terms. The back-action terms can be intuitively understood as terms that are generated by a non-hermitian Hamiltonian $\hat{H} = - i\kappa\hat{n}/2$. Then, it is useful to define an interaction picture that takes the back-action terms as the unperturbed Lindbladian and the jump term as the perturbative Lindbladian. That is, we define 
\begin{align}
\hat{\rho}_{I}(t) \equiv e^{\frac{\kappa t}{2} \hat{n} } \hat{\rho}(t) e^{\frac{\kappa t}{2} \hat{n} }. 
\end{align} 
Then, the density matrix in the interaction picture $\hat{\rho}_{I}(t)$ evolves under the following equation:   
\begin{align}
\frac{d\hat{\rho}_{I}(t)}{dt} &= e^{\frac{\kappa t}{2} \hat{n} } \Big{[} \frac{\kappa}{2}\hat{n} \hat{\rho}(t) + \frac{d\hat{\rho}(t)}{dt}  + \frac{\kappa}{2}\hat{\rho}(t)\hat{n} \Big{]} e^{\frac{\kappa t}{2} \hat{n} }
\nonumber\\
&= \kappa e^{\frac{\kappa t}{2} \hat{n} } \hat{a} \hat{\rho}(t) \hat{a}^{\dagger}  e^{\frac{\kappa t}{2} \hat{n} }
\nonumber\\
&= \kappa e^{\frac{\kappa t}{2} \hat{n} } \hat{a} e^{-\frac{\kappa t}{2} \hat{n} } \hat{\rho}_{I}(t) e^{-\frac{\kappa t}{2} \hat{n} } \hat{a}^{\dagger}  e^{\frac{\kappa t}{2} \hat{n} }
\nonumber\\
&= \kappa e^{-\kappa t} \hat{a}\hat{\rho}_{I}(t)\hat{a}^{\dagger}. \label{eq:excitation loss jump evolution interaction picture}
\end{align}
Here, we used
\begin{align}
e^{\theta \hat{n}} \hat{a} e^{-\theta \hat{n}} &=  e^{-\theta} \hat{a}, 
\nonumber\\
e^{-\theta \hat{n}} \hat{a}^{\dagger} e^{\theta \hat{n}} &=  e^{-\theta} \hat{a}^{\dagger}.   
\end{align}
to derive the last equation. By iteratively integrating the both sides of Eq.\ \eqref{eq:excitation loss jump evolution interaction picture}, we find 
\begin{align}
\hat{\rho}_{I}(t) &= \hat{\rho}_{I}(0) + \int_{0}^{t}dt_{1} \kappa e^{-\kappa t_{1}}\hat{a}\hat{\rho}_{I}(t_{1})\hat{a}^{\dagger} 
\nonumber\\
&=\hat{\rho}_{I}(0) + \int_{0}^{t}dt_{1} \kappa e^{-\kappa t_{1}}\hat{a}\hat{\rho}_{I}(0)\hat{a}^{\dagger} + \int_{0}^{t}dt_{1}\int_{0}^{t_{1}}dt_{2} \kappa^{2} e^{-\kappa (t_{1} +t_{2} )}\hat{a}^{2}\hat{\rho}_{I}(t_{2}) ( \hat{a}^{\dagger} )^{2} 
\nonumber\\
&=\hat{\rho}_{I}(0) +  \sum_{\ell=1}^{\infty}  \kappa^{\ell} \int_{0}^{t} dt_{1}\cdots \int_{0}^{t_{\ell-1}} dt_{\ell} e^{-\kappa(t_{1}+\cdots +t_{\ell})} \hat{a}^{\ell}  \hat{\rho}_{I}(0) (\hat{a}^{\dagger})^{\ell} . \label{eq:excitation loss jump evolution intermediate}
\end{align} 
One can show by mathematical induction that the time integral in Eq.\ \eqref{eq:excitation loss jump evolution intermediate} is given by 
\begin{align}
\kappa^{\ell} \int_{0}^{t} dt_{1}\cdots \int_{0}^{t_{\ell-1}} dt_{\ell} e^{-\kappa(t_{1}+\cdots +t_{\ell})} = \frac{(1-e^{-\kappa t})^{\ell}}{\ell!}. \label{eq:excitation loss relevant time integral}
\end{align}
Thus, combining Eqs.\ \eqref{eq:excitation loss jump evolution intermediate} and \eqref{eq:excitation loss relevant time integral}, we get 
\begin{align}
\hat{\rho}_{I}(t) &= \sum_{\ell=0}^{\infty} \frac{(1-e^{-\kappa t})^{\ell}}{\ell!} \hat{a}^{\ell} \hat{\rho}_{I}(0)(\hat{a}^{\dagger})^{\ell} . \label{eq:excitation loss jump final}
\end{align}
Recall Eq.\ \eqref{eq:excitation loss jump evolution interaction picture} and observe that the time-evolution of $\hat{\rho}_{I}(t)$ is solely generated by the jump term (modulo the additional time-dependent factor $e^{-\kappa t}$). This is why the resulting term $\hat{a}^{\ell}$ is called the jump term. 

On the other hand, the actual density matrix in the laboratory frame $\hat{\rho}(t)$ is also affected by the back-action terms (or the non-Hermitian Hamiltonian terms) as $\hat{\rho}(t)$ and $\hat{\rho}_{I}(t)$ are related by the equation 
\begin{align}
\hat{\rho}(t) &= e^{-\frac{\kappa t}{2}\hat{n}} \hat{\rho}_{I}(t)e^{-\frac{\kappa t}{2}\hat{n}}. \label{eq:excitation loss no jump evolution}
\end{align}
Note that the decay term $e^{-\frac{\kappa t}{2} \hat{n}}$ is solely generated by the back-action terms. This is why the decay term $e^{-\frac{\kappa t}{2} \hat{n}}$ is called the no-jump evolution term. Finally, combining Eqs.\ \eqref{eq:excitation loss jump final} and \eqref{eq:excitation loss no jump evolution} and using $\hat{\rho}_{I}(0) = \hat{\rho}(0)$, we end up with the desired result. 
\begin{align}
\hat{\rho}(t) &= \sum_{\ell =0}^{\infty} \frac{(1-e^{-\kappa t})^{\ell}}{\ell!} e^{-\frac{\kappa t}{2}\hat{n}} \hat{a}^{\ell} \hat{\rho}(0)(\hat{a}^{\dagger})^{\ell} e^{-\frac{\kappa t}{2}\hat{n}}. 
\end{align}

\subsubsection{Heisenberg picture}

Now, we will try to understand the excitation loss errors by inspecting how the expectation value of an observable changes over time under the excitation loss. Recall the Lindblad equation for excitation loss errors (i.e., Eq.\ \eqref{eq:excitation loss Lindbladian equation}). 
\begin{align}
\frac{d\hat{\rho}(t)}{dt} &= \kappa \Big{[} \hat{a}\hat{\rho}(t)\hat{a}^{\dagger} - \frac{1}{2}\hat{a}^{\dagger}\hat{a}\hat{\rho}(t) - \frac{1}{2}\hat{\rho}(t)\hat{a}^{\dagger}\hat{a}  \Big{]}. 
\end{align}
Then, the expectation value of an observable $\hat{A}$ (i.e., $\langle \hat{A}\rangle_{t} \equiv \mathrm{Tr}[\hat{\rho}(t)\hat{A}]$ evolves under the following equation: 
\begin{align}
\frac{d\langle \hat{A}\rangle_{t}}{dt} &= \kappa \Big{[} \langle \hat{a}^{\dagger}\hat{A}\hat{a} \rangle_{t} - \frac{1}{2} \langle \hat{A}\hat{a}^{\dagger}\hat{a} \rangle_{t} - \frac{1}{2} \langle \hat{a}^{\dagger}\hat{a}\hat{A} \rangle_{t} \Big{]}
\nonumber\\
&= \frac{\kappa}{2} \Big{[} \langle [\hat{a}^{\dagger},\hat{A}]\hat{a} \rangle_{t} + \langle \hat{a}^{\dagger} [\hat{A},\hat{a}] \rangle_{t} \Big{]}. \label{eq:time evolution of observables under excitation loss}
\end{align}

Let us first consider the annihilation and creation operators $\hat{a}$ and $\hat{a}^{\dagger}$. 
Note that the relevant commutation relations are given by 
\begin{align}
&[\hat{a}^{\dagger} , \hat{a} ] = -1, \quad [\hat{a},\hat{a}] = 0,  \quad [\hat{a}^{\dagger} , \hat{a}^{\dagger} ] = 0 .  
\end{align}
Specializing Eq.\ \eqref{eq:time evolution of observables under excitation loss} to $\hat{A} = \hat{a}$ and $\hat{A} = \hat{a}^{\dagger}$, we find 
\begin{align}
\frac{d\langle \hat{a} \rangle_{t}}{dt} &= -\frac{\kappa}{2}\langle \hat{a} \rangle_{t},   \quad \frac{d\langle \hat{a}^{\dagger} \rangle_{t}}{dt} = -\frac{\kappa}{2}\langle \hat{a}^{\dagger} \rangle_{t}, 
\end{align}
yielding
\begin{align}
\langle \hat{a} \rangle_{t} &= \langle \hat{a} \rangle_{0}e^{-\frac{\kappa t}{2}}, \quad \langle \hat{a}^{\dagger} \rangle_{t} = \langle \hat{a}^{\dagger} \rangle_{0}e^{-\frac{\kappa t}{2}} .  \label{eq:excitation loss Heisenberg picture anni and cre first order}
\end{align}
Since the position and momentum operators are defined as 
\begin{align}
\hat{q} &\equiv \frac{1}{\sqrt{2}}(\hat{a}^{\dagger}+\hat{a}), \quad \hat{p} \equiv \frac{i}{\sqrt{2}}(\hat{a}^{\dagger}-\hat{a}), 
\end{align}
the expectation values of the position and momentum operators also decrease exponentially over time and eventually converge to zero as $t\rightarrow \infty$: 
\begin{align}
\langle \hat{q} \rangle_{t} &= \langle \hat{q} \rangle_{0}e^{-\frac{\kappa t}{2}}, \quad \langle \hat{p} \rangle_{t} = \langle \hat{p} \rangle_{0}e^{-\frac{\kappa t}{2}} .  \label{eq:excitation loss Heisenberg picture first moments}
\end{align}

Let us now move on to the second-order operators in $\hat{a}$ and $\hat{a}$, i.e., $\hat{a}^{2}$, $\hat{a}^{\dagger}\hat{a}$, and $(\hat{a}^{\dagger})^{2}$.
Then, the relevant commutation relations are given by
\begin{alignat}{2}
&[\hat{a}^{\dagger},\hat{a}^{2}] = -2\hat{a}, &\quad &[\hat{a}^{2},\hat{a}] =0, 
\nonumber\\
&[\hat{a}^{\dagger}, \hat{a}^{\dagger}\hat{a} ] = -\hat{a}^{\dagger}, &\quad &[\hat{a}^{\dagger}\hat{a} ,\hat{a}] =-\hat{a}, 
\nonumber\\
&[\hat{a}^{\dagger}, (\hat{a}^{\dagger})^{2} ] = 0, &\quad &[(\hat{a}^{\dagger})^{2} ,\hat{a}] =-2\hat{a}^{\dagger}.  
\end{alignat}
Specializing Eq.\ \eqref{eq:time evolution of observables under excitation loss} to $\hat{A} = \hat{a}^{2}$, $\hat{A} = \hat{a}^{\dagger}\hat{a}$, and $\hat{A} = (\hat{a}^{\dagger})^{2}$, we find 
\begin{align}
\frac{d\langle \hat{a}^{2} \rangle_{t}}{dt} &= -\kappa \langle \hat{a}^{2} \rangle_{t}, \quad \frac{d\langle \hat{a}^{\dagger}\hat{a} \rangle_{t}}{dt} = -\kappa \langle \hat{a}^{\dagger}\hat{a} \rangle_{t}, \quad \frac{d\langle (\hat{a}^{\dagger})^{2} \rangle_{t}}{dt} = -\kappa \langle (\hat{a}^{\dagger})^{2} \rangle_{t},
\end{align}
yielding 
\begin{align}
\langle \hat{a}^{2} \rangle_{t} &= \langle \hat{a}^{2} \rangle_{0}e^{-\kappa t}, \quad \langle \hat{a}^{\dagger}\hat{a} \rangle_{t} = \langle \hat{a}^{\dagger}\hat{a} \rangle_{0}e^{-\kappa t}, \quad \langle (\hat{a}^{\dagger})^{2} \rangle_{t} = \langle (\hat{a}^{\dagger})^{2} \rangle_{0}e^{-\kappa t}. \label{eq:excitation loss Heisenberg picture second moments cre and ani}
\end{align}
Note that the second equation shows that the average excitation number (or energy) $\langle \hat{a}^{\dagger}\hat{a}\rangle_{t}$ decreases exponentially over time and eventually converges to zero as $t\rightarrow \infty$.  

Now consider the second-order operators in $\hat{q}$ and $\hat{p}$: 
\begin{align}
\hat{q}^{2} &= \frac{1}{2}(\hat{a}^{\dagger}+\hat{a})(\hat{a}^{\dagger}+\hat{a}) = \frac{1}{2}( (\hat{a}^{\dagger})^{2} + 2\hat{a}^{\dagger}\hat{a} + \hat{a}^{2} +1 ), 
\nonumber\\
\frac{1}{2}(\hat{q}\hat{p} + \hat{p}\hat{q}) &= \frac{i}{2} \Big{[} (\hat{a}^{\dagger}+\hat{a})(\hat{a}^{\dagger}-\hat{a}) + (\hat{a}^{\dagger}-\hat{a})(\hat{a}^{\dagger}+\hat{a}) \Big{]} = \frac{i}{2}( (\hat{a}^{\dagger})^{2} - \hat{a}^{2} ), 
\nonumber\\
\hat{p}^{2} &= -\frac{1}{2}(\hat{a}^{\dagger}-\hat{a})(\hat{a}^{\dagger}-\hat{a}) = \frac{1}{2}( -(\hat{a}^{\dagger})^{2} +2\hat{a}^{\dagger}\hat{a} -\hat{a}^{2} +1 ). \label{eq:second order operators in q and p} 
\end{align}
Then, combining Eqs.\ \eqref{eq:excitation loss Heisenberg picture second moments cre and ani} and \eqref{eq:second order operators in q and p}, we get  
\begin{align}
\langle \hat{q}^{2} \rangle_{t} &= \langle \hat{q}^{2} \rangle_{0}e^{-\kappa t} + \frac{1}{2}(1-e^{-\kappa t}), 
\nonumber\\
\Big{\langle} \frac{1}{2}( \hat{q}\hat{p} + \hat{p}\hat{q} ) \Big{\rangle}_{t} &= \Big{\langle} \frac{1}{2}( \hat{q}\hat{p} + \hat{p}\hat{q} ) \Big{\rangle}_{0}e^{-\kappa t}, 
\nonumber\\
\langle \hat{p}^{2} \rangle_{t} &= \langle \hat{p}^{2} \rangle_{0}e^{-\kappa t} + \frac{1}{2}(1-e^{-\kappa t}). \label{eq:excitation loss Heisenberg picture second moments}
\end{align}
In the Heisenberg picture, operators evolve over time and states remain unchanged. Thus, from Eqs.\ \eqref{eq:excitation loss Heisenberg picture first moments} and \eqref{eq:excitation loss Heisenberg picture second moments}, we can conclude 
\begin{align}
&\hat{q}(t) = \hat{q}e^{-\frac{\kappa t}{2}}, \quad \hat{p}(t) = \hat{p}e^{-\frac{\kappa t}{2}}, \quad \hat{q}^{2} (t)  = \hat{q}^{2} e^{-\kappa t} + \frac{1}{2}(1-e^{-\kappa t})
\nonumber\\
&\frac{1}{2}( \hat{q}\hat{p} + \hat{p}\hat{q} )(t) = \frac{1}{2}( \hat{q}\hat{p} + \hat{p}\hat{q} )e^{-\kappa t} ,  \quad \hat{p}^{2} (t)  = \hat{p}^{2} e^{-\kappa t} + \frac{1}{2}(1-e^{-\kappa t}) . 
  \label{eq:excitation loss Heisenberg picture summarized}
\end{align}
We remark that the time evolution of an operator is governed by the adjoint master equation and, in general, its solution cannot be directly inferred from the time evolution of the expectation value of the operator. However, in the case of photon loss, one can verify that the solution of the adjoint master equation can be directly read off from the expectation value by plugging in the expectation-value-inspired solutions in Eq.\ \eqref{eq:excitation loss Heisenberg picture summarized} to the adjoint master equation. Note also that we did not consider higher than second order operators in $\hat{q}$ and $\hat{p}$. However, since an excitation loss error is a Gaussian channel as we will show below, it is sufficient to understand the first and the second-order operators in $\hat{q}$ and $\hat{p}$.

\subsubsection{Gaussian channels}

Lastly, the CPTP map $e^{\kappa t \mathcal{D}[\hat{a}]}$ due to excitation loss is equivalent to a bosonic pure-loss channel $\mathcal{N}[\eta, 0]$ with transmissivity $\eta = e^{-\kappa t}$: 
\begin{align}
e^{\kappa t \mathcal{D}[\hat{a}]} = \mathcal{N}[\eta = e^{-\kappa t}, 0] \leftrightarrow (\boldsymbol{T},\boldsymbol{N},\boldsymbol{d}) = ( e^{-\frac{\kappa t}{2}} \boldsymbol{I}_{2}, \frac{1}{2}(1-e^{-\kappa t})\boldsymbol{I}_{2} , 0 ) . \label{eq:excitation loss errors equal pure loss channels} 
\end{align}
The bosonic pure-loss channel $\mathcal{N}[\eta, 0]$ is a Gaussian channel characterized by $(\boldsymbol{T},\boldsymbol{N},\boldsymbol{d}) = ( \sqrt{\eta} \boldsymbol{I}_{2}, \frac{1}{2}(1-\eta)\boldsymbol{I}_{2} , 0 )$ (see also Definition \ref{definition:Bosonic pure loss channels}). An introduction to Gaussian states, unitaries, and channels is given in Appendix \ref{appendix:Gaussian states, unitaries, and channels}. Specifically, the definitions of Gaussian channels and their characterization $(\boldsymbol{T},\boldsymbol{N},\boldsymbol{d})$ are given in Appendix \ref{section:Gaussian channels}. 

The characterization $(\boldsymbol{T},\boldsymbol{N},\boldsymbol{d}) = ( \sqrt{\eta} \boldsymbol{I}_{2}, \frac{1}{2}(1-\eta)\boldsymbol{I}_{2} , 0 )$ implies that the expectation values of the first and the second-order operators in $\hat{q}$ and $\hat{p}$ are transformed via the bosonic pure-loss channel $\mathcal{N}[\eta,0]$ as follows: 
\begin{align}
\langle \hat{q}\rangle &\rightarrow \langle \hat{q}\rangle' = \sqrt{\eta}\langle \hat{q}\rangle, 
\nonumber\\
\langle \hat{p}\rangle &\rightarrow \langle \hat{p}\rangle' = \sqrt{\eta}\langle \hat{p}\rangle, 
\nonumber\\
\langle \hat{q}^{2}\rangle &\rightarrow \langle \hat{q}^{2}\rangle' = \eta\langle \hat{q}^{2}\rangle + \frac{1}{2}(1-\eta),
\nonumber\\
\Big{\langle} \frac{1}{2}(\hat{q}\hat{p} + \hat{p}\hat{q}) \Big{\rangle} &\rightarrow\Big{\langle} \frac{1}{2}(\hat{q}\hat{p} + \hat{p}\hat{q}) \Big{\rangle}' = \eta \Big{\langle} \frac{1}{2}(\hat{q}\hat{p} + \hat{p}\hat{q}) \Big{\rangle}, 
\nonumber\\ 
\langle \hat{p}^{2}\rangle &\rightarrow \langle \hat{p}^{2}\rangle' = \eta\langle \hat{p}^{2}\rangle + \frac{1}{2}(1-\eta). \label{eq:excitation loss expectation value change under pure loss channel}
\end{align}
Note that Eq.\ \eqref{eq:excitation loss expectation value change under pure loss channel} is consistent with Eq.\ \eqref{eq:excitation loss Heisenberg picture summarized} if one sets $\eta = e^{-\kappa t}$. Hence, we have Eq.\ \eqref{eq:excitation loss errors equal pure loss channels}.  



\subsection{Gaussian random shift errors in the phase space}
\label{subsection:Random shift errors in the phase space}

In this subsection, we consider Gaussian random shift errors in the phase space. This error model is also known as a Gaussian random displacement error or an additive Gaussian noise error. Typically, natural errors in realistic bosonic systems are not described by random shift errors. However, random shift errors are important for understanding GKP codes. Here, we will discuss three different ways to describe Gaussian random shift errors (see Table \ref{table:random shift errors}).

\begin{table}[h!]
  \centering
     \def\arraystretch{2}
  \begin{tabular}{ V{3} c V{1.5} c V{3} }
   \hlineB{3}  
     \textbf{Representation} & \textbf{Gaussian random shift errors in the phase space}  \\  \hlineB{3} 
      Kraus representation & $\mathcal{N}_{B_{2}}[\sigma](\hat{\rho}) \equiv \frac{1}{\pi \sigma^{2}} \int d^{2}\alpha \exp\big{[} -\frac{|\alpha|^{2}}{\sigma^{2}}\big{]} \hat{D}(\alpha) \hat{\rho}\hat{D}^{\dagger}(\alpha)$  \\ 
       (Eqs.\ \eqref{eq:random shift Kraus in terms of displacement}, \eqref{eq:random shift Kraus in terms of shift}) & $=\frac{1}{2\pi\sigma^{2}} \int_{-\infty}^{\infty}d\xi_{q}\int_{-\infty}^{\infty}d\xi_{p} \exp\big{[} -\frac{\xi_{q}^{2}+\xi_{p}^{2}}{2\sigma^{2}} \big{]} e^{i(\xi_{p}\hat{q}-\xi_{q}\hat{p})} \hat{\rho} e^{-i(\xi_{p}\hat{q}-\xi_{q}\hat{p})}$.    \\ \hlineB{1.5}   
       Heisenberg picture & $\hat{q} \rightarrow \hat{q}'  = \hat{q}  + \xi_{q}$ and $\hat{p} \rightarrow \hat{p}'  = \hat{p}  + \xi_{p}$    \\ 
       (Eq.\ \eqref{eq:random shift Heisenberg picture})  & where $\xi_{q},\xi_{p} \sim \mathcal{N}(0,\sigma^{2})$.    \\  \hlineB{1.5}   
        Gaussian channels & $\mathcal{N}_{B_{2}}[\sigma] \leftrightarrow (\boldsymbol{T} , \boldsymbol{N}, \boldsymbol{d}) = ( \boldsymbol{I}_{2} , \sigma^{2}\boldsymbol{I}_{2} , 0  )$.  \\  
        (Eq.\ \eqref{eq:random shift errors Gaussian channels}) &   \\   \hlineB{3}   
  \end{tabular}
  \caption{Various representations of the Gaussian random shift errors in the phase space.}
  \label{table:random shift errors}
\end{table}

\subsubsection{Kraus representation}

Gaussian random shift errors are defined as follows: 
\begin{align}
\mathcal{N}_{B_{2}}[\sigma](\hat{\rho}) &\equiv \frac{1}{\pi \sigma^{2}} \int d^{2}\alpha \exp\Big{[} -\frac{|\alpha|^{2}}{\sigma^{2}}\Big{]} \hat{D}(\alpha) \hat{\rho}\hat{D}^{\dagger}(\alpha).   \label{eq:random shift Kraus in terms of displacement}
\end{align}
Here, $\hat{D}(\alpha) \equiv \exp[\alpha \hat{a}^{\dagger} -\alpha^{*}\hat{a} ]$ is the displacement operator and $\sigma$ is the standard deviation of the random displacement. The expression in Eq.\ \eqref{eq:random shift Kraus in terms of displacement} can be understood as a continuous Kraus representation, i.e., 
\begin{align}
\mathcal{N}_{B_{2}}[\sigma](\hat{\rho}) = \int d^{2}\alpha \hat{E}(\alpha)\hat{\rho}\hat{E}^{\dagger}(\alpha), \,\,\, \textrm{where}\,\,\,  \hat{E}(\alpha) = \sqrt{ \frac{1}{\pi\sigma^{2}} \exp\Big{[} -\frac{|\alpha|^{2}}{\sigma^{2}} \Big{]}} \hat{D}(\alpha). 
\end{align}
Equivalently, one can also write 
\begin{align}
\mathcal{N}_{B_{2}}[\sigma](\hat{\rho}) &= \frac{1}{2\pi\sigma^{2}} \int_{-\infty}^{\infty}d\xi_{q}\int_{-\infty}^{\infty}d\xi_{p} \exp\Big{[} -\frac{\xi_{q}^{2}+\xi_{p}^{2}}{2\sigma^{2}} \Big{]} e^{i(\xi_{p}\hat{q}-\xi_{q}\hat{p})} \hat{\rho} e^{-i(\xi_{p}\hat{q}-\xi_{q}\hat{p})} . \label{eq:random shift Kraus in terms of shift}
\end{align}

\subsubsection{Heisenberg picture}

The definitions in Eqs.\ \eqref{eq:random shift Kraus in terms of displacement} and \eqref{eq:random shift Kraus in terms of shift} clearly show that the Gaussian random shift error $\mathcal{N}_{B_{2}}[\sigma]$ makes the system drift in the phase space. More explicitly, the action of $\mathcal{N}_{B_{2}}[\sigma]$ transforms the position and momentum operators as follows in the Heisenberg picture: 
\begin{align}
\hat{q} \rightarrow \hat{q}'  = \hat{q}  + \xi_{q} , \quad  \hat{p} \rightarrow \hat{p}'  = \hat{p}  + \xi_{p} , \,\,\, \textrm{where} \,\,\, \xi_{q},\xi_{p} \sim \mathcal{N}(0,\sigma^{2}). \label{eq:random shift Heisenberg picture}
\end{align}
Here, $\mathcal{N}(0,\sigma^{2})$ is the Gaussian normal distribution with zero mean and variance $\sigma^{2}$. To see why this is the case, let us consider an observable $\hat{A}$ and its expectation value $\langle \hat{A} \rangle = \mathrm{Tr}[\hat{\rho}\hat{A}]$. Under the Gaussian random shift error $\mathcal{N}_{B_{2}}[\sigma]$, $\langle \hat{A} \rangle$ is transformed as follows: 
\begin{align}
\langle \hat{A} \rangle \rightarrow \langle \hat{A} \rangle'  &= \mathrm{Tr}\Big{[} \frac{1}{2\pi\sigma^{2}} \int_{-\infty}^{\infty}d\xi_{q}\int_{-\infty}^{\infty}d\xi_{p} \exp\Big{[} -\frac{\xi_{q}^{2}+\xi_{p}^{2}}{2\sigma^{2}} \Big{]} e^{i(\xi_{p}\hat{q}-\xi_{q}\hat{p})} \hat{\rho} e^{-i(\xi_{p}\hat{q}-\xi_{q}\hat{p})}  \hat{A}   \Big{]}
\nonumber\\
&=  \frac{1}{2\pi\sigma^{2}} \int_{-\infty}^{\infty}d\xi_{q}\int_{-\infty}^{\infty}d\xi_{p} \exp\Big{[} -\frac{\xi_{q}^{2}+\xi_{p}^{2}}{2\sigma^{2}} \Big{]}  \Big{\langle }  e^{-i(\xi_{p}\hat{q}-\xi_{q}\hat{p})}  \hat{A}  e^{i(\xi_{p}\hat{q}-\xi_{q}\hat{p})} \Big{\rangle }
\end{align}
Note that for $\hat{A} = \hat{q}$ and $\hat{A} = \hat{p}$, we have
\begin{align}
e^{-i(\xi_{p}\hat{q}-\xi_{q}\hat{p})}  \hat{q}  e^{i(\xi_{p}\hat{q}-\xi_{q}\hat{p})} &= \hat{q}  +\xi_{q}, \quad e^{-i(\xi_{p}\hat{q}-\xi_{q}\hat{p})}  \hat{p}  e^{i(\xi_{p}\hat{q}-\xi_{q}\hat{p})} = \hat{p}  +\xi_{p}, 
\end{align}
and thus the expectation values of the quadrature operators $\hat{q}$ and $\hat{p}$ do not change, i.e.,
\begin{align}
\langle \hat{q}\rangle' = \langle \hat{q} +\xi_{q}\rangle = \langle \hat{q}\rangle, \quad  \langle \hat{p}\rangle' = \langle \hat{p} +\xi_{p}\rangle = \langle \hat{p}\rangle, \label{eq:random shift observable change first order}
\end{align}
where we used $\langle \xi_{q} \rangle =0$ and $\langle \xi_{p}\rangle =0$. 

Let us now consider the second-order operators in $\hat{q}$ and $\hat{p}$, i.e., $\hat{A} = \hat{q}^{2}$, $\hat{A} = \frac{1}{2}(\hat{q}\hat{p} + \hat{p}\hat{q})$, and $\hat{A}=\hat{p}^{2}$. Note that 
\begin{align}
&e^{-i(\xi_{p}\hat{q}-\xi_{q}\hat{p})}  \hat{q}^{2}  e^{i(\xi_{p}\hat{q}-\xi_{q}\hat{p})} = (\hat{q}  +\xi_{q})^{2} , 
\nonumber\\
&e^{-i(\xi_{p}\hat{q}-\xi_{q}\hat{p})}  \frac{1}{2}(\hat{q}\hat{p}+\hat{p}\hat{q})  e^{i(\xi_{p}\hat{q}-\xi_{q}\hat{p})} = \frac{1}{2}\Big{[} (\hat{q}+\xi_{q})(\hat{p}+\xi_{p}) + (\hat{p}+\xi_{p})(\hat{q}+\xi_{q})  \Big{]}  , 
\nonumber\\
&e^{-i(\xi_{p}\hat{q}-\xi_{q}\hat{p})}  \hat{p}^{2}  e^{i(\xi_{p}\hat{q}-\xi_{q}\hat{p})} = (\hat{p}  +\xi_{p})^{2}. 
\end{align}
Thus, the expectation values of the second-order operators change under the Gaussian random shift error $\mathcal{N}_{B_{2}}[\sigma]$ as follows:   
\begin{align}
\langle \hat{q}^{2}\rangle' &= \langle (\hat{q}+\xi_{q})^{2}\rangle = \langle \hat{q}^{2}\rangle+ 2\langle\hat{q}\rangle \langle \xi_{q}\rangle + \langle \xi_{q}^{2}\rangle = \langle \hat{q}^{2}\rangle  + \sigma^{2} , 
\nonumber\\
\Big{\langle} \frac{1}{2}(\hat{q}\hat{p} + \hat{p}\hat{q} ) \Big{\rangle}' &= \Big{\langle} \frac{1}{2}(\hat{q}\hat{p} + \hat{p}\hat{q} ) \Big{\rangle} + \langle \hat{q}\rangle\langle\xi_{p}\rangle  + \langle \hat{p}\rangle\langle \xi_{q}\rangle + \langle \xi_{q}\rangle\langle\xi_{p}\rangle = \Big{\langle} \frac{1}{2}(\hat{q}\hat{p} + \hat{p}\hat{q} ) \Big{\rangle} , 
\nonumber\\
\langle \hat{p}^{2}\rangle' &= \langle (\hat{p}+\xi_{p})^{2}\rangle = \langle \hat{p}^{2}\rangle+ 2\langle\hat{p}\rangle \langle \xi_{p}\rangle + \langle \xi_{p}^{2}\rangle = \langle \hat{p}^{2}\rangle  + \sigma^{2}, \label{eq:random shift observable change second order}
\end{align}
where we used $\langle \xi_{q}\rangle = \langle \xi_{p}\rangle = 0$ and $\langle\xi_{q}^{2}\rangle =\langle\xi_{p}^{2}\rangle = \sigma^{2}$. 

\subsubsection{Gaussian channels} 

The Gaussian random shift error $\mathcal{N}_{B_{2}}[\sigma]$ is a Gaussian channel with the following characterization: 
\begin{align}
\mathcal{N}_{B_{2}}[\sigma] \leftrightarrow (\boldsymbol{T} , \boldsymbol{N}, \boldsymbol{d}) = ( \boldsymbol{I}_{2} , \sigma^{2}\boldsymbol{I}_{2} , 0  ).  \label{eq:random shift errors Gaussian channels}
\end{align}
Note that the characterization $(\boldsymbol{T} , \boldsymbol{N}, \boldsymbol{d}) = ( \boldsymbol{I}_{2} , \sigma^{2}\boldsymbol{I}_{2} , 0  )$ yields
\begin{align}
\langle \hat{q} \rangle &\rightarrow \langle \hat{q} \rangle' = \langle \hat{q} \rangle, 
\nonumber\\
\langle \hat{p} \rangle &\rightarrow \langle \hat{p} \rangle' = \langle \hat{p} \rangle, 
\nonumber\\
\langle \hat{q}^{2} \rangle &\rightarrow \langle \hat{q}^{2} \rangle' = \langle \hat{q}^{2} \rangle + \sigma^{2} , 
\nonumber\\
\Big{\langle} \frac{1}{2}(\hat{q}\hat{p} + \hat{p}\hat{q} ) \Big{\rangle} &\rightarrow \Big{\langle} \frac{1}{2}(\hat{q}\hat{p} + \hat{p}\hat{q} ) \Big{\rangle}' = \Big{\langle} \frac{1}{2}(\hat{q}\hat{p} + \hat{p}\hat{q} ) \Big{\rangle}, 
\nonumber\\
\langle \hat{p}^{2} \rangle &\rightarrow \langle \hat{p}^{2} \rangle' = \langle \hat{p}^{2} \rangle + \sigma^{2} . \label{eq:random shift errors obvservable change Gaussian channel}
\end{align}
That is, the channel adds a noise variance $\sigma^{2}$ only to the diagonal elements of the covariance matrix. Note that Eq.\ \eqref{eq:random shift errors obvservable change Gaussian channel} is consistent with Eqs.\ \eqref{eq:random shift observable change first order} and \eqref{eq:random shift observable change second order}. Thus, Eq.\ \eqref{eq:random shift errors Gaussian channels} follows. 

\subsection{Bosonic dephasing errors}
\label{subsection:Bosonic dephasing errors}

Realistic bosonic modes are sometimes subject to bosonic dephasing errors as well as excitation loss errors. Here, we provide three different ways to describe bosonic dephasing errors (see Table \ref{table:bosonic dephasing errors}). 

\begin{table}[h!]
  \centering
     \def\arraystretch{2}
  \begin{tabular}{ V{3} c V{1.5} c V{3} }
   \hlineB{3}  
     \textbf{Representation} & \textbf{Bosonic dephasing errors}  \\  \hlineB{3} 
      Lindblad equation & $\frac{d\hat{\rho}(t)}{dt} = \kappa_{\phi} \mathcal{D}[\hat{a}^{\dagger}\hat{a}] (\hat{\rho}(t)) \rightarrow \hat{\rho}(t) = e^{\kappa_{\phi} t \mathcal{D}[\hat{a}^{\dagger}\hat{a}]} \hat{\rho}(0)$  \\ 
       (Eqs.\ \eqref{eq:bosonic dephasing Lindblad equation}, \eqref{eq:bosonic dephasing CPTP map},  & $e^{\kappa_{\phi} t \mathcal{D}[\hat{a}^{\dagger}\hat{a}]} \hat{\rho}(0)  =  \sum_{m,n=0}^{\infty}\rho_{mn} e^{-\frac{1}{2}(m-n)^{2}\kappa_{\phi}t} |m\rangle \langle n|$,    \\
       and \eqref{eq:bosonic dephasing CPTP map Fock basis explicit}) & where $\rho_{mn} = \langle m|\hat{\rho}(0)|n\rangle$. \\ \hlineB{1.5}   
       Continuous Kraus  & $e^{\kappa_{\phi}t \mathcal{D}[\hat{a}^{\dagger}\hat{a}] } = \mathcal{N}_{D}[\sigma = \sqrt{ \kappa_{\phi} t} ]$ where    \\ 
          representation & $\mathcal{N}_{D}[\sigma](\hat{\rho}) \equiv \frac{1}{\sqrt{2\pi \sigma^{2}}} \int_{-\infty}^{\infty} d\phi e^{-\frac{\phi^{2}}{2\sigma^{2}}} e^{i\phi \hat{n}} \hat{\rho} e^{-i\phi\hat{n}}$.    \\ 
       (Eqs.\ \eqref{eq:bosonic dephasing continuous Kraus representation}, \eqref{eq:bosonic dephasing equals random rotation}) & \\ \hlineB{1.5}   
        Discrete Kraus & $\mathcal{N}_{D}[\sigma](\hat{\rho}) =  \sum_{k=0}^{\infty} \hat{N}_{k} \hat{\rho}\hat{N}_{k}^{\dagger}$ where  \\  
        representation & $\hat{N}_{k} = \sqrt{ \frac{\sigma^{2k}}{k!} }  e^{-\frac{\sigma^{2}}{2} \hat{n}^{2} }\hat{n}^{k}$.  \\ 
      (Eq.\ \eqref{eq:bosonic dephasing discrete Kraus representation})  & \\\hlineB{3}   
  \end{tabular}
  \caption{Various representations of the bosonic dephasing errors.}
  \label{table:bosonic dephasing errors}
\end{table}

\subsubsection{Lindblad equation}

Bosonic dephasing errors are described by the following Lindblad equation: 
\begin{align}
\frac{d\hat{\rho}(t)}{dt} &= \kappa_{\phi} \mathcal{D}[\hat{a}^{\dagger}\hat{a}] (\hat{\rho}(t)). \label{eq:bosonic dephasing Lindblad equation}
\end{align}
Note that the jump operator is given by the excitation number operator $\hat{n} = \hat{a}^{\dagger}\hat{a}$ for the dephasing errors, whereas it is given by the annihilation operator $\hat{a}$ for the excitation loss errors. By solving Eq.\ \eqref{eq:bosonic dephasing Lindblad equation}, we get the following CPTP map
\begin{align}
\hat{\rho}(0) \rightarrow \hat{\rho}(t) = e^{\kappa_{\phi} t \mathcal{D}[\hat{a}^{\dagger}\hat{a}]} \hat{\rho}(0) .  \label{eq:bosonic dephasing CPTP map}
\end{align}
In the Fock basis, the Lindblad equation in Eq.\ \eqref{eq:bosonic dephasing Lindblad equation} is explicitly given by
\begin{align}
\langle m| \frac{d\hat{\rho}(t)}{dt} | n\rangle &= \kappa_{\phi}\langle m| \Big{[} \hat{n} \hat{\rho}(t)\hat{n} - \frac{1}{2}\hat{n}^{2}\hat{\rho}(t) - \frac{1}{2}\hat{\rho}(t)\hat{n}^{2} \Big{]} |n\rangle. 
\end{align}
Thus, we have
\begin{align}
\frac{d\rho_{mn}(t)}{dt} &= \kappa_{\phi}\Big{(} mn-\frac{1}{2}m^{2} - \frac{1}{2}n^{2} \Big{)} \rho_{mn}(t)  = -\frac{\kappa_{\phi}}{2} (m-n)^{2} \rho_{mn}(t), 
\end{align}
yielding $\rho_{mn}(t) = \rho_{mn}(0)e^{-\frac{1}{2}(m-n)^{2}\kappa_{\phi}t}$ and 
\begin{align}
\hat{\rho}(t) = \sum_{m,n=0}^{\infty}|m\rangle \langle m|\hat{\rho}(0)|n\rangle \langle n| e^{-\frac{1}{2}(m-n)^{2}\kappa_{\phi}t} . \label{eq:bosonic dephasing CPTP map Fock basis explicit}
\end{align}

\subsubsection{Kraus representation (continuous)}

Bosonic dephasing errors can also be understood as a random phase rotation error. Consider the error channel 
\begin{align}
\mathcal{N}_{D}[\sigma](\hat{\rho}) &\equiv \frac{1}{\sqrt{2\pi \sigma^{2}}} \int_{-\infty}^{\infty} d\phi e^{-\frac{\phi^{2}}{2\sigma^{2}}} e^{i\phi \hat{n}} \hat{\rho} e^{-i\phi\hat{n}},  \label{eq:bosonic dephasing continuous Kraus representation}
\end{align}
which is in a continuous Kraus representation where the Kraus operators are given by a rotation operator $\hat{E}(\phi) \propto e^{i\phi \hat{n}}$. Here, the random rotation angle $\phi$ follows the Gaussian normal distribution with zero mean and variance $\sigma^{2}$, i.e., $\phi \sim \mathcal{N}(0, \sigma^{2})$. 

To show that the CPTP map in Eq.\ \eqref{eq:bosonic dephasing continuous Kraus representation} is equivalent to the CPTP map generated by the Lindblad equation in Eq.\ \eqref{eq:bosonic dephasing Lindblad equation}, we explicitly write down Eq.\ \eqref{eq:bosonic dephasing continuous Kraus representation} in the Fock basis: 
\begin{align}
\mathcal{N}_{D}[\sigma](\hat{\rho}) &= \frac{1}{\sqrt{2\pi \sigma^{2}}} \int_{-\infty}^{\infty} d\phi e^{-\frac{\phi^{2}}{2\sigma^{2}}} e^{i\phi \hat{n}} \Big{(} \sum_{m=0}^{\infty}|m\rangle\langle m|\Big{)}  \hat{\rho}  \Big{(} \sum_{n=0}^{\infty}|n\rangle\langle n| \Big{)} e^{-i\phi\hat{n}}
\nonumber\\
&= \sum_{m,n=0}^{\infty} |m\rangle \langle m|\hat{\rho}(0)|n\rangle \langle n| \frac{1}{\sqrt{2\pi\sigma^{2}}} \int_{-\infty}^{\infty} d\phi e^{-\frac{\phi^{2}}{2\sigma^{2}}} e^{i\phi (m-n)} 
\nonumber\\
&= \sum_{m,n=0}^{\infty} |m\rangle \langle m|\hat{\rho}(0)|n\rangle \langle n| e^{-\frac{1}{2}(m-n)^{2}\sigma^{2} }.   \label{eq:bosonic dephasing continuous Kraus representation Fock basis}
\end{align} 
Comparing Eq.\ \eqref{eq:bosonic dephasing continuous Kraus representation Fock basis} with Eq.\ \eqref{eq:bosonic dephasing CPTP map Fock basis explicit}, we can conclude 
\begin{align}
e^{\kappa_{\phi}t \mathcal{D}[\hat{a}^{\dagger}\hat{a}] } = \mathcal{N}_{D}[\sigma = \sqrt{ \kappa_{\phi} t} ]  . \label{eq:bosonic dephasing equals random rotation} 
\end{align}

\subsubsection{Kraus representation (discrete)} 

Lastly, we provide a discrete Kraus representation of the bosonic dephasing channel $\mathcal{N}_{D}[\sigma]$. To do so, recall Eq.\ \eqref{eq:bosonic dephasing continuous Kraus representation Fock basis} and note that 
\begin{align}
\mathcal{N}_{D}[\sigma](\hat{\rho}) &= \sum_{m,n=0}^{\infty} |m\rangle \langle m|\hat{\rho}(0)|n\rangle \langle n| e^{-\frac{1}{2}(m-n)^{2}\sigma^{2} } 
\nonumber\\
&= \sum_{m,n=0}^{\infty} |m\rangle \langle m|\hat{\rho}(0)|n\rangle \langle n| e^{-\frac{1}{2}\sigma^{2} m^{2} } e^{-\frac{1}{2}\sigma^{2} n^{2} }  e^{mn\sigma^{2}}
\nonumber\\
&= \sum_{m,n=0}^{\infty} |m\rangle \langle m|\hat{\rho}(0)|n\rangle \langle n| e^{-\frac{1}{2}\sigma^{2} m^{2} } e^{-\frac{1}{2}\sigma^{2} n^{2} } \sum_{k=0}^{\infty} \frac{1}{k!} (mn\sigma^{2})^{k} 
\nonumber\\
&=  \sum_{k=0}^{\infty} \frac{\sigma^{2k}}{k!} e^{-\frac{\sigma^{2}}{2}\hat{n}^{2}} \hat{n}^{k} \hat{\rho}  \hat{n}^{k} e^{-\frac{\sigma^{2}}{2}\hat{n}^{2}}. 
\end{align}
Thus, the bosonic dephasing channel $\mathcal{N}_{D}[\sigma]$ can also be expressed in the following discrete Kraus form: 
\begin{align}
\mathcal{N}_{D}[\sigma](\hat{\rho}) &=  \sum_{k=0}^{\infty} \hat{N}_{k} \hat{\rho}\hat{N}_{k}^{\dagger} , \,\,\, \textrm{where} \,\,\, \hat{N}_{k} = \sqrt{ \frac{\sigma^{2k}}{k!} }  e^{-\frac{\sigma^{2}}{2} \hat{n}^{2} }\hat{n}^{k} . \label{eq:bosonic dephasing discrete Kraus representation}
\end{align}

\section{Rotation-symmetric bosonic codes}
\label{section:Rotation-symmetric bosonic codes}

In this section, we review rotation-symmetric bosonic codes \cite{Grimsmo2019} such as the cat codes and the binomial codes that are invariant under a discrete set of rotations. 

\subsection{Cat codes}
\label{subsection:Cat codes}

Here, we review the basic properties of the two-component and the four-component cat codes and discuss their experimental implementations. See Table \ref{table:cat codes} for a summary. At the end of this subsection, we also briefly review the recent developments in the cat-code-based bosonic QEC.  

\begin{table}[h!]
  \centering
     \def\arraystretch{1.5}
  \begin{tabular}{ V{3} c V{1.5} c V{1.5} c V{3} }
   \hlineB{3}  
     & \textbf{Two-component}& \textbf{Four-component} \\
     & \textbf{cat codes} \cite{Cochrane1999} & \textbf{cat codes} \cite{Leghtas2013} \\  \hlineB{3} 
     Logical states & $|0_{2-\textrm{cat}}^{(\alpha)}\rangle \propto |\alpha\rangle + |-\alpha\rangle$ & $|0_{4-\textrm{cat}}^{(\alpha)}\rangle \propto |\alpha\rangle + |i\alpha\rangle + |-\alpha\rangle + |-i\alpha\rangle$  \\
     &  $|1_{2-\textrm{cat}}^{(\alpha)}\rangle \propto |\alpha\rangle - |-\alpha\rangle$ & $|1_{4-\textrm{cat}}^{(\alpha)}\rangle \propto |\alpha\rangle - |i\alpha\rangle + |-\alpha\rangle - |-i\alpha\rangle$   \\ \hlineB{1.5}   
    Correctable errors & Dephasing errors & Loss and dephasing errors \\ \hlineB{1.5}
    Active QEC & Teleportation-based  & Parity measurement and  \\ 
     & error correction \cite{Grimsmo2019}  & amplitude recovery \cite{Li2017}  \\  \hdashline
     Experiments & N/A & Refs.\ \cite{Sun2014,Ofek2016,Rosenblum2018}  \\  \hlineB{1.5}     
    Autonomous QEC & Engineered two-photon & Engineered four-photon  \\  
    & dissipation \cite{Mirrahimi2014} & dissipation \cite{Mundhada2017}  \\  \hdashline
    Experiments & Refs.\ \cite{Leghtas2015,Touzard2018,Lescanne2019} & Ref.\ \cite{Mundhada2019}  \\  \hlineB{3}     
  \end{tabular}
  \caption{Basic properties of the two-component and the four-component cat codes. Note that the teleportation-based error correction scheme in Ref.\ \cite{Grimsmo2019} works for any rotation-symmetric bosonic codes. }
  \label{table:cat codes}
\end{table}

\subsubsection{Two-component cat codes}

Logical states of the two-component cat code $\mathcal{C}_{2-\textrm{cat}}^{(\alpha)}$ \cite{Cochrane1999} are given by Schr\"odinger's cat states: 
\begin{align}
|0^{(\alpha)}_{2-\textrm{cat}}\rangle &= \frac{1}{\sqrt{2N_{0}^{2-\textrm{cat}}(\alpha)}} ( |\alpha\rangle + |-\alpha\rangle  ), 
\nonumber\\ 
|1^{(\alpha)}_{2-\textrm{cat}}\rangle &= \frac{1}{\sqrt{2N_{1}^{2-\textrm{cat}}(\alpha)}} ( |\alpha\rangle - |-\alpha\rangle ).  
\end{align}
Here, the normalization constant $N_{\mu}^{2-\textrm{cat}}(\alpha)$ is given by
\begin{align}
N_{\mu}^{2-\textrm{cat}}(\alpha)&\equiv 1 +(-1)^{\mu} e^{-2|\alpha|^{2}} , \,\,\, \mu\in \lbrace 0,1 \rbrace. 
\end{align}
Note that $|0^{(\alpha)}_{2-\textrm{cat}}\rangle$ and $|1^{(\alpha)}_{2-\textrm{cat}}\rangle$ have even and odd excitation numbers, respectively. Wigner functions of the logical states of the two-component cat code are shown in Fig.\ \ref{fig:wigner functions two component cat code}. 

\begin{figure}[t!]
\centering
\includegraphics[width=6.0in]{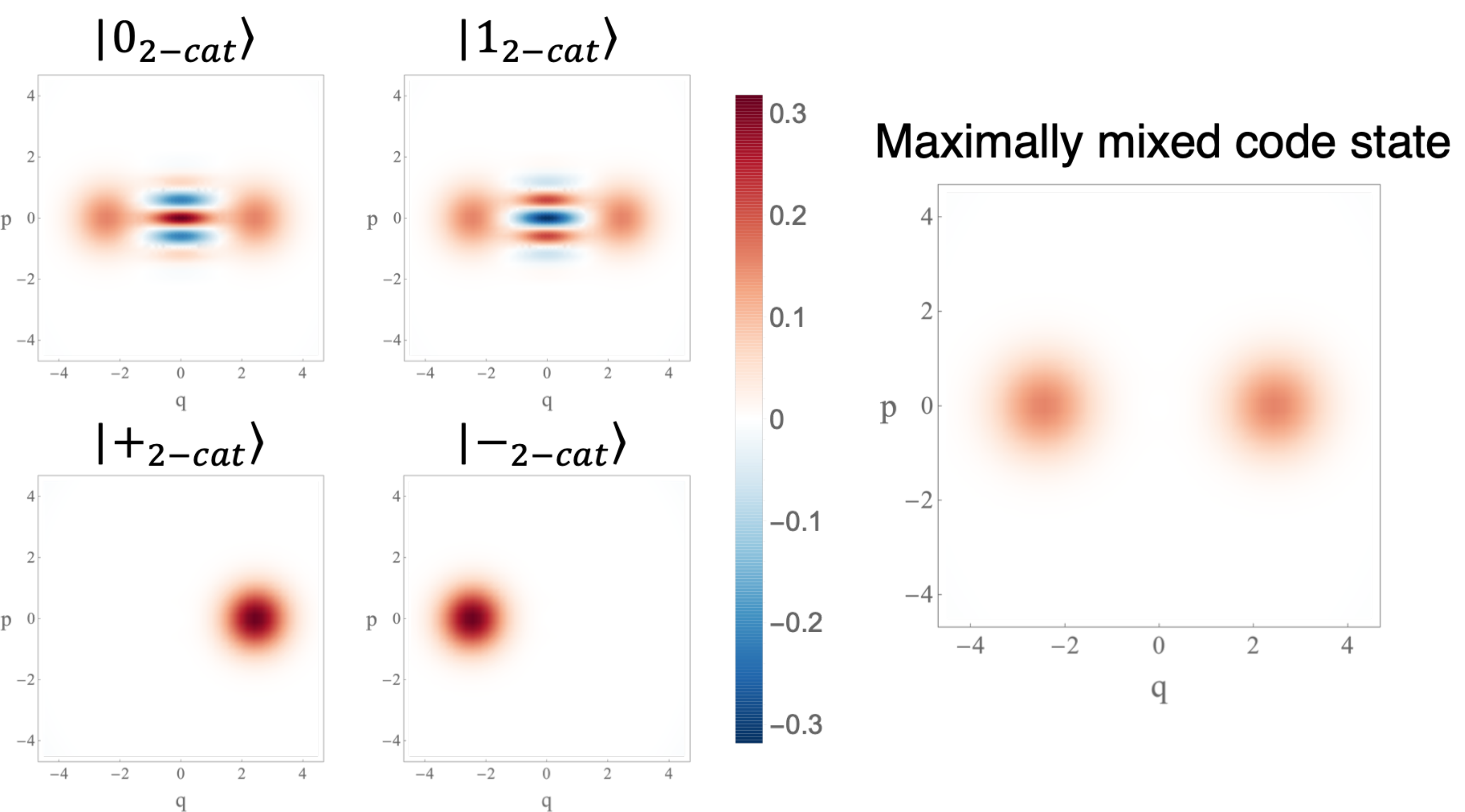}
\caption{Wigner functions of the logical states of the two-component cat code $\mathcal{C}_{2-\textrm{cat}}^{(\alpha)}$ with $\alpha = \sqrt{3}$. The maximally mixed code state is defined as the projection operator to the code space divided by $2$.  }
\label{fig:wigner functions two component cat code}
\end{figure}

Two-component cat codes with sufficiently large $\alpha$ are capable of correcting bosonic dephasing errors (see Subsection \ref{subsection:Bosonic dephasing errors}). To see why this is the case, let us recall the discrete Kraus operators of the bosonic dephasing errors (Eq.\ \eqref{eq:bosonic dephasing discrete Kraus representation}): 
\begin{align}
\hat{N}_{k} = \sqrt{ \frac{\sigma^{2k}}{k!} } e^{-\frac{\sigma^{2}}{2}\hat{n}^{2}} \hat{n}^{k}. 
\end{align}
Note that the first two Kraus operators are given by 
\begin{align}
\hat{N}_{0} = \hat{I} + \mathcal{O}( \sigma^{2} ) , \quad \hat{N}_{1} = \sigma \hat{a}^{\dagger}\hat{a} + \mathcal{O}( \sigma^{2} ). 
\end{align}
The relevant first-order dephasing error set is thus given by $\lbrace \hat{I} , \hat{a}^{\dagger}\hat{a} \rbrace$, i.e., $\hat{N}_{0} = \hat{I}$ (no error) and $\hat{N}_{1} = \hat{a}^{\dagger}\hat{a}$ (single dephasing), where the prefactor $\sigma$ in $\hat{N}_{1}$ is omitted. 

Since these dephasing error operators cannot change the parity of the excitation number, we have 
\begin{align}
\langle \mu^{(\alpha)}_{2-\textrm{cat}} | \hat{N}_{k}^{\dagger}\hat{N}_{k'} |\nu^{(\alpha)}_{2-\textrm{cat}}\rangle = 0,  \,\,\, \textrm{for all} \,\,\,  k\in \lbrace 0,1\rbrace \,\,\, \textrm{and} \,\,\, \mu\neq \nu.  \label{eq:two component cat code dephasing KL condition off diagonal}
\end{align}   
Also for $\mu=\nu$, $\langle \mu^{(\alpha)}_{2-\textrm{cat}} | \hat{N}_{k}^{\dagger}\hat{N}_{k'} |\mu^{(\alpha)}_{2-\textrm{cat}}\rangle$ is given by 
\begin{align}
&(k,k')=(0,0) : 
\nonumber\\
&\quad   \langle \mu^{(\alpha)}_{2-\textrm{cat}} | \mu^{(\alpha)}_{2-\textrm{cat}}\rangle = 1, 
\nonumber\\ 
&(k,k')=(0,1),(1,0) :   
\nonumber\\
&\quad \langle \mu^{(\alpha)}_{2-\textrm{cat}} | \hat{a}^{\dagger}\hat{a} | \mu^{(\alpha)}_{2-\textrm{cat}}\rangle =  \frac{N_{1-\mu}^{2-\textrm{cat}}(\alpha)}{N_{\mu}^{2-\textrm{cat}}(\alpha)} |\alpha|^{2} =  \frac{1-(-1)^{\mu} e^{-2|\alpha|^{2}}}{1+(-1)^{\mu} e^{-2|\alpha|^{2}} } |\alpha|^{2} , 
\nonumber\\
&(k,k')=(1,1) : 
\nonumber\\
&\quad \langle \mu^{(\alpha)}_{2-\textrm{cat}} | (\hat{a}^{\dagger}\hat{a})^{2} | \mu^{(\alpha)}_{2-\textrm{cat}}\rangle =  |\alpha|^{4} + \frac{N_{1-\mu}^{2-\textrm{cat}}(\alpha)}{N_{\mu}^{2-\textrm{cat}}(\alpha)} |\alpha|^{2} =  |\alpha|^{4} +  \frac{1-(-1)^{\mu} e^{-2|\alpha|^{2}}}{1+(-1)^{\mu} e^{-2|\alpha|^{2}} } |\alpha|^{2} . \label{eq:two component cat code dephasing KL condition diagonal}
\end{align}
Thus, if $e^{-2|\alpha|^{2}} \ll 1$, $\langle \mu^{(\alpha)}_{2-\textrm{cat}} | \hat{N}_{k}^{\dagger}\hat{N}_{k'} |\mu^{(\alpha)}_{2-\textrm{cat}}\rangle$ is independent of $\mu$ for all $k,k'\in\lbrace 0,1\rbrace$. Hence, the two-component cat code $\mathcal{C}_{2-\textrm{cat}}^{(\alpha)}$ satisfies the Knill-Laflamme condition for the first-order dephasing error set $\lbrace \hat{I} ,\hat{a}^{\dagger}\hat{a} \rbrace$ if $e^{-2|\alpha|^{2}} \ll 1$. This implies that two-component cat codes with sufficiently large $\alpha$ can correct dephasing errors. 

\subsubsection{Autonomous QEC of two-component cat codes} 

Recall Theorem \ref{theorem:knill-laflamme condition} and note that there will be a recovery method for the two-component cat code subject to a single dephasing error, since two-component cat codes with sufficiently large $\alpha$ satisfy the Knill-Laflamme condition for the first-order dephasing error set. A simple way to implement the dephasing recovery for the two-component cat code is to use an engineered two-photon dissipation to autonomously stabilize the code space \cite{Mirrahimi2014}:  
\begin{align}
\frac{d\hat{\rho}(t)}{dt} &= \kappa_{2\textrm{ph}} \mathcal{D}[\hat{a}^{2} - \alpha^{2}]  ( \hat{\rho}(t) ) . \label{eq:two component cat code engineered two photon dissipation}
\end{align}
Here, $\kappa_{2\textrm{ph}}$ is the engineered two-photon dissipation rate. Since the basis states of the two-component cat code $\mathcal{C}_{2-\textrm{cat}}^{(\alpha)}$ are annihilated by the jump operator $\hat{F}_{2\textrm{ph}} \equiv \hat{a}^{2}-\alpha^{2}$, i.e., 
\begin{align}
\hat{F}_{2\textrm{ph}} |\mu^{(\alpha)}_{2-\textrm{cat}}\rangle &= (\hat{a}^{2}-\alpha^{2}) |\mu^{(\alpha)}_{2-\textrm{cat}}\rangle = (\hat{a}^{2}-\alpha^{2}) \frac{1}{\sqrt{ 2N_{\mu}^{2-\textrm{cat}}(\alpha) }} ( |\alpha\rangle +(-1)^{\mu}|-\alpha\rangle ) =0, 
\end{align}
the two-photon dissipation $\mathcal{D}[\hat{a}^{2} - \alpha^{2}]$ stabilizes the two-component cat code manifold.

Let us add a bosonic dephasing error on top of the engineered dissipation:  
\begin{align}
\frac{d\hat{\rho}(t)}{dt} &= \Big{(} \kappa_{2\textrm{ph}} \mathcal{D}[\hat{a}^{2} - \alpha^{2}]   + \kappa_{\phi}\mathcal{D}[\hat{a}^{\dagger}\hat{a}] \Big{)}(\hat{\rho}(t)) . 
\end{align}
Here, $\kappa_{\phi}$ is the dephasing rate. Since dephasing errors cannot change the parity of the excitation number, it cannot induce logical bit-flip errors between $|0^{(\alpha)}_{2-\textrm{cat}}\rangle$ and $|1^{(\alpha)}_{2-\textrm{cat}}\rangle$ which have even and odd excitation number parity, respectively. The absence of logical bit-flip errors is directly related to the fact that the Knill-Laflamme conditions in Eq.\ \eqref{eq:two component cat code dephasing KL condition off diagonal} are exactly satisfied. However, since the two-component cat code $\mathcal{C}_{2-\textrm{cat}}^{(\alpha)}$ does not exactly satisfy the Knill-Laflamme conditions in Eq.\ \eqref{eq:two component cat code dephasing KL condition diagonal} (for any finite value of $\alpha$), there can be logical phase-flip errors. The logical phase-flip rate is computed in Ref.\ \cite{Mirrahimi2014} and is given by
\begin{align}
\gamma_{\textrm{phase-flip}} \xrightarrow{ \kappa_{\phi} \ll \kappa_{2\textrm{ph}} } \kappa_{\phi} \frac{|\alpha|^{2}}{ \sinh (2|\alpha|^{2}) } =\kappa_{\phi} \frac{2|\alpha|^{2}}{ e^{ 2|\alpha|^{2}} - e^{ -2|\alpha|^{2}} } .  
\end{align}  
Thus, the logical phase-flip rate decreases exponentially as $\alpha$ increases. In particular, if $e^{-2|\alpha|^{2}} \ll 1$, the logical phase-flip rate is negligible. This is consistent with the fact that the Knill-Laflamme conditions in Eq.\ \eqref{eq:two component cat code dephasing KL condition diagonal} is satisfied if $e^{-2|\alpha|^{2}} \ll 1$. While two-component cat codes can correct bosonic dephasing errors, they cannot correct excitation loss errors. One can readily see this by observing that a single-excitation loss maps an even cat state to an odd cat state and vice versa: 
\begin{align}
\hat{a} |0^{(\alpha)}_{2-\textrm{cat}}\rangle &\propto   \hat{a} ( |\alpha\rangle + |-\alpha\rangle ) =  \alpha (|\alpha\rangle - |-\alpha\rangle ) \propto \hat{a} |1^{(\alpha)}_{2-\textrm{cat}}\rangle, 
\nonumber\\
\hat{a} |1^{(\alpha)}_{2-\textrm{cat}}\rangle &\propto   \hat{a} ( |\alpha\rangle - |-\alpha\rangle ) =  \alpha (|\alpha\rangle + |-\alpha\rangle ) \propto \hat{a} |0^{(\alpha)}_{2-\textrm{cat}}\rangle. 
\end{align}
Thus, excitation loss errors cause logical bit-flip errors to two-component cat codes. See Subsection \ref{subsection:Concatenation of a cat code with a multi-qubit code} for a review of several recent proposals for dealing with the residual bit-flip errors due to excitation loss errors.

Let us now move on to the experimental realization of the engineered two-photon dissipation $\mathcal{D}[\hat{a}^{2}-\alpha^{2}]$. Note that 
\begin{align}
\frac{d\hat{\rho}(t)}{dt} &= \kappa_{2\textrm{ph}}\mathcal{D}[\hat{a}^{2}-\alpha^{2}]( \hat{\rho}(t) ) = \frac{\kappa_{2\textrm{ph}}}{2} \Big{[}  \alpha^{2}(\hat{a}^{\dagger})^{2} - \alpha^{*2}\hat{a}^{2}  , \hat{\rho}(t) \Big{]} + \kappa_{2\textrm{ph}} \mathcal{D}[ \hat{a}^{2} ]( \hat{\rho}(t) ). 
\end{align}
The first term can be implemented by using a two-photon driving Hamiltonian $\hat{H} = i\kappa_{2\textrm{ph}} (\alpha^{2}(\hat{a}^{\dagger})^{2} - \alpha^{*2}\hat{a}^{2}  ) /2$, which is a generator of the single-mode squeezing operation. Since the second term $\mathcal{D}[\hat{a}^{2}]$ is dissipative, it cannot be generated by using only Hamiltonian interactions. Instead, we need a fast-decaying ancilla system. To be more concrete, consider an ancilla bosonic mode which is described by the annihilation and creation operators $\hat{b}$ and $\hat{b}^{\dagger}$. Then, let us assume that we can engineer the following Hamiltonian interaction between the mode $\hat{a}$ and the ancilla mode $\hat{b}$: 
\begin{align}
\hat{H}_{\textrm{int}} = g  \hat{a}^{2}\hat{b}^{\dagger} + g^{*} (\hat{a}^{\dagger})^{2}\hat{b}, 
\end{align}
and thus the time evolution of the joint system is described by 
\begin{align}
\frac{d\hat{\rho}_{T}(t)}{dt} &= -i[ \hat{H}_{\textrm{int}} , \hat{\rho}_{T}(t) ]  +\kappa_{b} \mathcal{D}[\hat{b}]( \hat{\rho}_{T}(t) ). 
\end{align}
Here, $\hat{\rho}_{T}(t)$ is the density matrix of the joint system of mode $a$ and $b$. Also, $\kappa_{b}$ is the decay rate of the fast-decaying ancilla mode $b$. If the coupling strength $|g|$ is much smaller than the decay rate $\kappa_{b}$ of the ancilla mode (i.e., $|g|\ll \kappa_{b}$), one can show by using adiabatic elimination \cite{Verstraete2009,Reiter2012,Zanardi2016} that $\hat{\rho}_{T}(t)$ is approximately given by $\hat{\rho}_{T}(t) = \hat{\rho}(t)\otimes |0\rangle\langle 0|_{b}$ where the system density matrix $\hat{\rho}(t)$ evolves under the desired two-photon dissipation. 
\begin{align}
\frac{d\hat{\rho}(t)}{dt} &= \frac{4|g|^{2}}{\kappa_{b}} \mathcal{D}[\hat{a}^{2}]( \hat{\rho}(t) ) . 
\end{align}

Putting all these components together, the engineered two-photon dissipation $\mathcal{D}[\hat{a}^{2} - \alpha^{2}]$ was realized experimentally in circuit QED systems \cite{Leghtas2015,Touzard2018,Lescanne2019}. In particular, Ref.\ \cite{Touzard2018} demonstrated a coherent quantum oscillation between the protected logical states of the two-component cat code. Also, Ref.\ \cite{Lescanne2019} demonstrated an exponential suppression of the phase-flip error in the stabilized two-component cat code manifold.

\subsubsection{Four-component cat codes} 

While two-component cat codes cannot correct excitation loss errors, four-component cat codes \cite{Leghtas2013} are capable of correcting excitation loss errors. Logical states of the four-component cat code $\mathcal{C}_{4-\textrm{cat}}^{(\alpha)}$ are given by
\begin{align}
|0^{(\alpha)}_{\textrm{4-\textrm{cat}}}\rangle &= \frac{1}{\sqrt{ 4N_{0}^{4-\textrm{cat}}(\alpha) }} ( |\alpha\rangle + |i\alpha\rangle + |-\alpha\rangle + |-i\alpha\rangle ), 
\nonumber\\
|1^{(\alpha)}_{\textrm{4-\textrm{cat}}}\rangle &= \frac{1}{\sqrt{ 4N_{1}^{4-\textrm{cat}}(\alpha) }} ( |\alpha\rangle - |i\alpha\rangle +|-\alpha\rangle  - |-i\alpha\rangle ). 
\end{align}
Here, the normalization constant $N_{\mu}^{4-\textrm{cat}}(\alpha)$ is given by
\begin{align}
N_{\mu}^{4-\textrm{cat}}(\alpha) &= 1 + e^{-2|\alpha|^{2}} +(-1)^{\mu} 2e^{-|\alpha|^{2}}\cos|\alpha|^{2}. 
\end{align}
Wigner functions of the logical states of the four-component cat code are shown in Fig.\ \ref{fig:wigner functions four component cat code}. 

\begin{figure}[t!]
\centering
\includegraphics[width=6.0in]{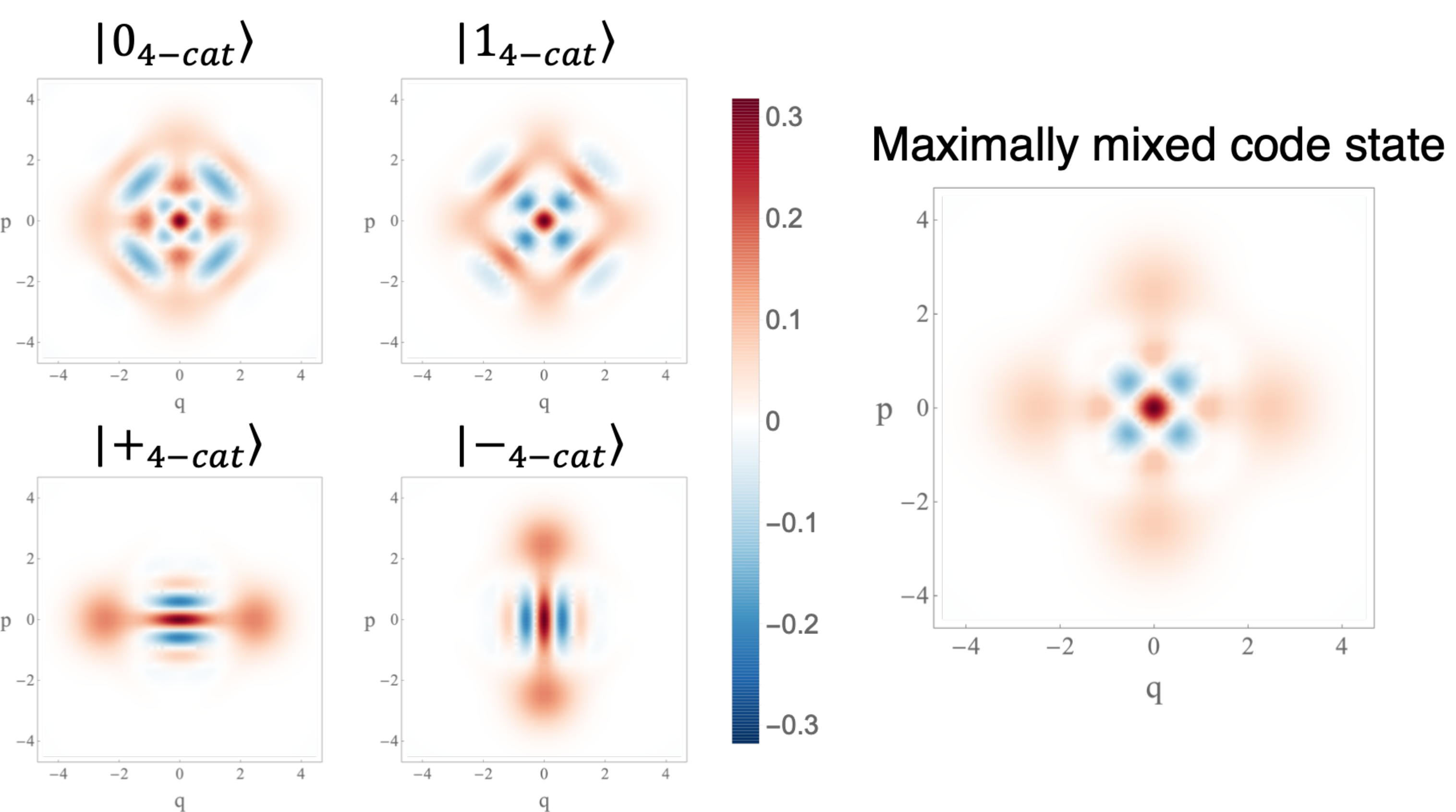}
\caption{Wigner functions of the logical states of the four-component cat code $\mathcal{C}_{4-\textrm{cat}}^{(\alpha)}$ with $\alpha = \sqrt{3}$. The maximally mixed code state is defined as the projection operator to the code space divided by $2$.  }
\label{fig:wigner functions four component cat code}
\end{figure}

Unlike the two-component cat code, both logical states of the four-component cat code have even excitation number parity. Also, this means that the logical basis states of the four-component cat code is invariant under the $180\degree$ rotation, i.e., 
\begin{align}
e^{i\pi \hat{n}} |\mu^{(\alpha)}_{\textrm{4-\textrm{cat}}}\rangle, \,\,\, \textrm{for all} \,\,\, \mu\in\lbrace 0,1 \rbrace, 
\end{align}  
and thus the four-component cat code $\mathcal{C}_{4-\textrm{cat}}^{(\alpha)}$ is an example of rotation-symmetric bosonic codes \cite{Grimsmo2019}. Note also that $|0^{(\alpha)}_{\textrm{4-\textrm{cat}}}\rangle$ has $0$ excitations mod $4$ and  $|1^{(\alpha)}_{\textrm{4-\textrm{cat}}}\rangle$ has $2$ excitations mod $4$, so they are clearly orthogonal to each other.

Four-component cat codes with sufficiently large $\alpha$ are capable of correcting excitation loss errors (see Subsection \ref{subsection:Excitation loss errors}), which are dominant error sources in many realistic bosonic modes. To see why this is the case, let us recall the Kraus representation of the excitation loss errors (Eq.\ \eqref{eq:excitation loss Kraus operators}), i.e., 
\begin{align}
\hat{N}_{\ell} = \sqrt{ \frac{(1-e^{-\kappa t})^{\ell} }{\ell!}} e^{-\frac{\kappa t}{2} \hat{n} } \hat{a}^{\ell}. 
\end{align}   
Note that the first two Kraus operators are given by
\begin{align}
\hat{N}_{0} &= \hat{I} + \mathcal{O}(\kappa t) , \quad \hat{N}_{1} = \sqrt{\kappa t} \hat{a} + \mathcal{O}(\kappa t). 
\end{align}
The relevant first-order excitation loss error set is thus given by $\lbrace \hat{I},\hat{a} \rbrace$, i.e., $\hat{N}_{0} = \hat{I}$ (no error) and $\hat{N}_{1} = \hat{a}$ (single-excitation loss), where the prefactor $\sqrt{\kappa t}$ in $\hat{N}_{1}$ is omitted. 

Upon a single-excitation loss, the logical states of the four-component cat code are transformed into the following error states: 
\begin{align}
\hat{a}|0^{(\alpha)}_{\textrm{4-\textrm{cat}}}\rangle &= \frac{\alpha}{\sqrt{ 4N_{0}^{4-\textrm{cat}}(\alpha) }}  ( |\alpha\rangle + i |i\alpha\rangle - |-\alpha\rangle -i |-i\alpha\rangle ),
\nonumber\\
\hat{a}|1^{(\alpha)}_{\textrm{4-\textrm{cat}}}\rangle &= \frac{\alpha}{\sqrt{ 4N_{1}^{4-\textrm{cat}}(\alpha) }}  ( |\alpha\rangle - i |i\alpha\rangle - |-\alpha\rangle +i |-i\alpha\rangle ) . 
\end{align}
While the code states have even excitation numbers, their corresponding error states have odd excitation numbers. Thus, we can distinguish the no error event $\hat{N}_{0} = \hat{I}$ from the single-excitation loss event $\hat{N}_{1} = \hat{a}$ by measuring the excitation number parity of the system. 

Let us now investigate the error correction capability of four-component cat codes by inspecting the Knill-Laflamme condition for the first-order excitation loss error set $\lbrace \hat{I},\hat{a} \rbrace$. Observe that the error state derived from the logical zero state has $3$ excitations mod $4$ and the error state derived from the logical one state has $1$ excitations mod $4$. Thus, all the relevant states $|0^{(\alpha)}_{\textrm{4-\textrm{cat}}}\rangle$, $|1^{(\alpha)}_{\textrm{4-\textrm{cat}}}\rangle$, $\hat{a}|0^{(\alpha)}_{\textrm{4-\textrm{cat}}}\rangle$, and $\hat{a}|1^{(\alpha)}_{\textrm{4-\textrm{cat}}}\rangle$ have different excitation numbers modulo $4$ so are mutually orthogonal. Thus, we have
\begin{align}
\langle \mu_{4-\textrm{cat}}^{(\alpha)} | \hat{N}_{\ell}^{\dagger}\hat{N}_{\ell'} |\nu_{4-\textrm{cat}}^{(\alpha)}\rangle =0, \,\,\, \textrm{for all} \,\,\, \ell,\ell'\in\lbrace 0,1 \rbrace \,\,\, \textrm{and} \,\,\, \mu\neq \nu, 
\end{align}
and the Knill-Laflamme condition is satisfied for all $\mu\neq \nu$. 

To analyze the $\mu=\nu$ case, it is convenient to define the following normalized error states 
\begin{align}
|0_{4-\textrm{cat},e}^{(\alpha)}\rangle &= \frac{1}{\sqrt{ 4N_{0,e}^{4-\textrm{cat}}(\alpha) }} ( |\alpha\rangle + i |i\alpha\rangle - |-\alpha\rangle -i |-i\alpha\rangle ), 
\nonumber\\
|1_{4-\textrm{cat},e}^{(\alpha)}\rangle &= \frac{1}{\sqrt{ 4N_{1,e}^{4-\textrm{cat}}(\alpha) }} ( |\alpha\rangle - i |i\alpha\rangle - |-\alpha\rangle +i |-i\alpha\rangle ), 
\end{align}
where the normalization constant $N_{\mu,e}^{4-\textrm{cat}}(\alpha)$ is given by
\begin{align}
N_{0,e}^{4-\textrm{cat}}(\alpha) &= 1-e^{-2|\alpha|^{2}} - 2(-1)^{\mu} e^{-|\alpha|^{2}}\sin|\alpha|^{2} . 
\end{align}
Then, the unnormalized error state $\hat{a}|\mu_{4-\textrm{cat}}^{(\alpha)}\rangle$ is given by
\begin{align}
\hat{a}|\mu_{4-\textrm{cat}}^{(\alpha)}\rangle &= \alpha \sqrt{ \frac{N_{\mu,e}^{4-\textrm{cat}}(\alpha)}{N_{\mu}^{4-\textrm{cat}}(\alpha)} } |\mu_{4-\textrm{cat},e}^{(\alpha)}\rangle. 
\end{align} 
The relevant $\mu=\nu$ terms in the Knill-Laflamme condition, i.e., $\langle \mu_{4-\textrm{cat}}^{(\alpha)} | \hat{N}_{\ell}^{\dagger}\hat{N}_{\ell'} |\mu_{4-\textrm{cat}}^{(\alpha)}\rangle$, are given by 
\begin{align}
&(\ell,\ell')=(0,0):
\nonumber\\
&\quad \langle \mu_{4-\textrm{cat}}^{(\alpha)} | \mu_{4-\textrm{cat}}^{(\alpha)}\rangle = 1, 
\nonumber\\
&(\ell,\ell')=(0,1),(1,0):
\nonumber\\
&\quad \langle \mu_{4-\textrm{cat}}^{(\alpha)} | \hat{a} | \mu_{4-\textrm{cat}}^{(\alpha)}\rangle = 0, 
\nonumber\\
&(\ell,\ell')=(1,1):
\nonumber\\
&\quad \langle \mu_{4-\textrm{cat}}^{(\alpha)} | \hat{a}^{\dagger}\hat{a} | \mu_{4-\textrm{cat}}^{(\alpha)}\rangle =  \frac{N_{\mu,e}^{4-\textrm{cat}}(\alpha)}{N_{\mu}^{4-\textrm{cat}}(\alpha)}  |\alpha|^{2} = \frac{1-e^{-2|\alpha|^{2}} - 2(-1)^{\mu}e^{-|\alpha|^{2}}\sin|\alpha|^{2}}{1+e^{-2|\alpha|^{2}} + 2(-1)^{\mu}e^{-|\alpha|^{2}}\cos|\alpha|^{2} } |\alpha|^{2} 
\nonumber\\
&\quad\qquad\qquad\qquad\qquad\qquad\qquad\qquad\qquad = \frac{ \sinh|\alpha|^{2} - (-1)^{\mu}\sin|\alpha|^{2}}{\cosh|\alpha|^{2} + (-1)^{\mu}\cos|\alpha|^{2} } |\alpha|^{2}  . 
\end{align}      
Similarly as in the case of the two-component cat code, $\langle \mu_{4-\textrm{cat}}^{(\alpha)} | \hat{a}^{\dagger}\hat{a} | \mu_{4-\textrm{cat}}^{(\alpha)}\rangle$ is $\mu$-independent if $\alpha$ is sufficiently large such that $e^{-|\alpha|^{2}} \ll 1$. 

What distinguishes the four-component cat codes from the two-component cat codes is that in the former case, there are values of $\alpha$ where $\langle \mu_{4-\textrm{cat}}^{(\alpha)} | \hat{a}^{\dagger}\hat{a} | \mu_{4-\textrm{cat}}^{(\alpha)}\rangle$ is precisely $\mu$-independent, i.e., 
\begin{align}
&\frac{ \sinh|\alpha|^{2} -\sin|\alpha|^{2}}{\cosh|\alpha|^{2} + \cos|\alpha|^{2} }  = \frac{ \sinh|\alpha|^{2} + \sin|\alpha|^{2}}{\cosh|\alpha|^{2} - \cos|\alpha|^{2} } 
\nonumber\\
&\leftrightarrow -\cosh|\alpha|^{2} \sin|\alpha|^{2} = \sinh|\alpha|^{2}\cos|\alpha|^{2} 
\nonumber\\
&\leftrightarrow \tan|\alpha|^{2} = -\tanh|\alpha|^{2} . 
\end{align}
The first three non-trivial solutions to $\tan|\alpha|^{2} = -\tanh|\alpha|^{2}$ are given by 
\begin{align}
|\alpha^{\star}_{1}| &= 1.538, \quad |\alpha^{\star}_{2}| = 2.345 , \quad   |\alpha^{\star}_{3}| =2.939. 
\end{align} 
These values are also called the sweet spots of the four-component cat code \cite{Li2017,Albert2018}.  

\subsubsection{Active QEC of four-component cat codes}

As was implied earlier, active QEC of four-component cat codes can be done by measuring the excitation number parity operator 
\begin{align}
\hat{\Pi}_{2} &\equiv e^{i \pi \hat{n}} = \sum_{n=0}^{\infty} |n\rangle\langle n| \times \begin{cases}
+1 & n \textrm{ even}\\
-1 & n \textrm{ odd}
\end{cases} . 
\end{align}
Note that the parity operator $\hat{\Pi}_{2}$ is equivalent to the phase rotation by $180\degree$. The parity operator $\hat{\Pi}_{2}$ is also a stabilizer of the four-component cat code, in the sense that any logical state of the four-component cat code is stabilized by $\hat{\Pi}_{2}$, i.e., 
\begin{align}
\hat{\Pi}_{2} |\psi_{4-\textrm{cat}}^{(\alpha)}\rangle = |\psi_{4-\textrm{cat}}^{(\alpha)}\rangle , \,\,\,  \textrm{for all} \,\,\, |\psi_{4-\textrm{cat}}^{(\alpha)}\rangle\in \mathcal{C}_{4-\textrm{cat}}^{(\alpha)}. 
\end{align}
Thus, whenever the parity measurement yields an odd parity measurement outcome (i.e., $\hat{\Pi}_{2} = -1$), we are alerted that some error has happened.  

To see how parity measurement can be used to correct excitation loss errors, let us consider the excitation loss error $e^{\kappa t \mathcal{D}[\hat{a}]} \hat{\rho} = \sum_{\ell =0}^{\infty} \hat{N}_{\ell} \hat{\rho} \hat{N}_{\ell}^{\dagger}$. When there is no excitation loss error (i.e., $\ell=0$), the logical state $|\mu_{4-\textrm{cat}}^{(\alpha)}\rangle$ undergoes the no-jump evolution. 
\begin{align}
\hat{N}_{0}|\mu_{4-\textrm{cat}}^{(\alpha)}\rangle &=  \frac{1}{ \sqrt{4N_{\mu}^{4-\textrm{cat}}(\alpha) } } e^{-\frac{\kappa t}{2} \hat{n} } \Big{(} |\alpha\rangle +(-1)^{\mu}|i\alpha\rangle +|-\alpha\rangle +(-1)^{\mu}|-i\alpha\rangle  \Big{)}
\nonumber\\
&= \frac{1}{ \sqrt{4N_{\mu}^{4-\textrm{cat}}(\alpha) } }  \Big{(} |\alpha e^{-\frac{\kappa t}{2}} \rangle +(-1)^{\mu}|i\alpha e^{-\frac{\kappa t}{2}} \rangle +|-\alpha e^{-\frac{\kappa t}{2}} \rangle +(-1)^{\mu}|-i\alpha e^{-\frac{\kappa t}{2}} \rangle  \Big{)} 
\nonumber\\
&= \sqrt{ \frac{N_{\mu}^{4-\textrm{cat}}(\alpha e^{-\frac{\kappa t}{2}} )}{N_{\mu}^{4-\textrm{cat}}(\alpha)} } |\mu_{4-\textrm{cat}}^{ ( \alpha e^{-\frac{\kappa t}{2}} ) }\rangle . 
\end{align}    
That is, due to the decay term $e^{-\frac{\kappa t}{2} \hat{n} }$, we end up with a code state with smaller amplitude $\alpha' = \alpha e^{-\frac{\kappa t}{2}}$. Also in this case, the states are still in the even-parity subspace. Note that since $\sqrt{ N_{\mu}^{4-\textrm{cat}}(\alpha e^{-\frac{\kappa t}{2}} )/ N_{\mu}^{4-\textrm{cat}}(\alpha) }$ is $\mu$-independent in the $\kappa t \rightarrow 0$ limit, we have
\begin{align}
\hat{N}_{0} ( c_{0}|0_{4-\textrm{cat}}^{(\alpha)}\rangle + c_{1}|1_{4-\textrm{cat}}^{(\alpha)}\rangle ) &\xrightarrow{ \kappa t \rightarrow 0 } c_{0}|0_{4-\textrm{cat}}^{(\alpha)}\rangle + c_{1}|1_{4-\textrm{cat}}^{(\alpha)}\rangle, 
\end{align}
i.e., no distortion of the encoded logical information. 

Let us now consider the one-excitation loss event ($\ell =1$). Upon a single-excitation loss, the logical state $|\mu_{4-\textrm{cat}}^{(\alpha)}\rangle$ is mapped to 
\begin{align}
\hat{N}_{1} |\mu_{4-\textrm{cat}}^{(\alpha)}\rangle &= \sqrt{ 1-e^{-\kappa t} }   \frac{ e^{-\frac{\kappa t}{2} \hat{n} } \hat{a} }{ \sqrt{4N_{\mu}^{4-\textrm{cat}}(\alpha) } } \Big{(} |\alpha\rangle +(-1)^{\mu}|i\alpha\rangle +|-\alpha\rangle +(-1)^{\mu}|-i\alpha\rangle  \Big{)} 
\nonumber\\
&=    \frac{ \sqrt{ 1-e^{-\kappa t} } \alpha }{ \sqrt{4N_{\mu}^{4-\textrm{cat}}(\alpha) } }  \Big{(} |\alpha e^{-\frac{\kappa t}{2}} \rangle +i(-1)^{\mu}|i\alpha e^{-\frac{\kappa t}{2}} \rangle -|-\alpha e^{-\frac{\kappa t}{2}} \rangle -i(-1)^{\mu}|-i\alpha e^{-\frac{\kappa t}{2}} \rangle  \Big{)}  
\nonumber\\
&= \sqrt{ 1-e^{-\kappa t} } \alpha \sqrt{ \frac{N_{\mu,e}^{4-\textrm{cat}}(\alpha e^{-\frac{\kappa t}{2}})}{N_{\mu}^{4-\textrm{cat}}(\alpha)} } |\mu_{4-\textrm{cat},e}^{( \alpha e^{-\frac{\kappa t}{2}} )} \rangle. 
\end{align}
Again, due to the decay term $e^{-\frac{\kappa t}{2} \hat{n} }$, we end up with an error state with smaller amplitude $\alpha' = \alpha e^{-\frac{\kappa t}{2}}$. Also, due to the annihilation operator $\hat{a}$, the states are now in the odd-parity subspace and therefore the single-excitation loss event can be flagged by measuring the parity operator $\hat{\Pi}_{2} = e^{i\pi \hat{n}}$. Note also that at the sweet spots, $\sqrt{ N_{\mu,e}^{4-\textrm{cat}}(\alpha e^{-\frac{\kappa t}{2}})/ N_{\mu}^{4-\textrm{cat}}(\alpha) }$ is $\mu$-independent in the $\kappa t \rightarrow 0$ limit. Hence, we have 
\begin{align}
&\hat{N}_{1} ( c_{0}|0_{4-\textrm{cat}}^{(\alpha)}\rangle + c_{1}|1_{4-\textrm{cat}}^{(\alpha)}\rangle ) 
\nonumber\\
&\xrightarrow{ \kappa t \rightarrow 0 } \sqrt{\kappa t} \alpha  \Big{(} c_{0}\sqrt{ \frac{N_{0,e}^{4-\textrm{cat}}(\alpha )}{N_{0}^{4-\textrm{cat}}(\alpha)}  } |0_{4-\textrm{cat},e}^{( \alpha  )} \rangle + c_{1}\sqrt{ \frac{N_{1,e}^{4-\textrm{cat}}(\alpha )}{N_{1}^{4-\textrm{cat}}(\alpha)}  } |1_{4-\textrm{cat},e}^{( \alpha  )} \rangle  \Big{)} 
\nonumber\\
&\propto c_{0}  |0_{4-\textrm{cat},e}^{( \alpha  )} \rangle + c_{1} |1_{4-\textrm{cat},e}^{( \alpha  )} \rangle, 
\end{align}
i.e., no distortion of the encoded logical information at sweet spots (or when $\alpha$ satisfies $\tan|\alpha|^{2} = -\tanh|\alpha|^{2}$). 

Lastly, when there is a two-excitation loss event ($\ell =2$), the logical state is mapped to 
\begin{align}
\hat{N}_{2} |\mu_{4-\textrm{cat}}^{(\alpha)}\rangle &= \sqrt{ \frac{ (1-e^{-\kappa t})^{2} }{2} } \frac{ e^{-\frac{\kappa t}{2} \hat{n} } \hat{a}^{2} }{ \sqrt{4N_{\mu}^{4-\textrm{cat}}(\alpha) } }   \Big{(} |\alpha\rangle +(-1)^{\mu}|i\alpha\rangle +|-\alpha\rangle +(-1)^{\mu}|-i\alpha\rangle  \Big{)} 
\nonumber\\
&= \frac{ (1-e^{-\kappa t})\alpha^{2} }{ \sqrt{8N_{\mu}^{4-\textrm{cat}}(\alpha) } }   \Big{(} |\alpha e^{-\frac{\kappa t}{2}} \rangle -(-1)^{\mu}|i\alpha e^{-\frac{\kappa t}{2}}\rangle +|-\alpha e^{-\frac{\kappa t}{2}} \rangle -(-1)^{\mu}|-i\alpha e^{-\frac{\kappa t}{2}} \rangle  \Big{)} 
\nonumber\\
&= \frac{(1-e^{-\kappa t}) \alpha^{2} }{\sqrt{2}} \sqrt{ \frac{N_{1-\mu}^{4-\textrm{cat}}(\alpha e^{-\frac{\kappa t}{2}})}{N_{\mu}^{4-\textrm{cat}}(\alpha)} } |(1-\mu)_{4-\textrm{cat}}^{( \alpha e^{-\frac{\kappa t}{2}} ) } \rangle .
\end{align}
Similarly as above, the decay term $e^{-\frac{\kappa t}{2} \hat{n} }$ reduces the amplitude from $\alpha$ to $\alpha ' = \alpha e^{-\frac{\kappa t}{2}}$. Also, the states are in the even-parity subspace after a two-excitation loss event. In this case, however, the logical zero state is mapped to a logical one state with a smaller amplitude, and the logical one state is mapped to a logical zero state with a smaller amplitude. Two-excitation loss events thus cause a logical bit-flip error. Since these bit-flip events are not flagged by the parity measurement, the performance of the four-component cat code is ultimately limited by the two-excitation (or more) loss events. 

It is clear by now that in the error recovery process, we should take care of both the overall amplitude damping due to the decay term $e^{-\frac{\kappa t}{2} \hat{n} }$ and excitation losses due to the loss term $\hat{a}$. Note that single-excitation loss events can be addressed by measuring the parity operator $\hat{\Pi}_{2} = e^{i\pi\hat{n}}$ in a non-destructive way. In particular, all single-excitation loss events will be flagged this way. However, two-excitation (or more) loss events will not be detected. 

In addition to measuring the parity operator, we should also recover the reduced amplitude $\alpha ' = \alpha e^{-\frac{\kappa t}{2}}$ back to $\alpha$. Such an amplitude recovery can be done by performing a unitary operation $\hat{U}_{\alpha' \rightarrow \alpha}$ that has the following property: 
\begin{align}
\hat{U}_{\alpha' \rightarrow \alpha} |0_{4-\textrm{cat}}^{(\alpha')}\rangle &= |0_{4-\textrm{cat}}^{(\alpha)}\rangle,
\nonumber\\
\hat{U}_{\alpha' \rightarrow \alpha} |1_{4-\textrm{cat}}^{(\alpha')}\rangle &= |1_{4-\textrm{cat}}^{(\alpha)}\rangle, 
\nonumber\\
\hat{U}_{\alpha' \rightarrow \alpha} |0_{4-\textrm{cat},e}^{(\alpha')}\rangle &= |0_{4-\textrm{cat},e}^{(\alpha)}\rangle,
\nonumber\\
\hat{U}_{\alpha' \rightarrow \alpha} |1_{4-\textrm{cat},e}^{(\alpha')}\rangle &= |1_{4-\textrm{cat},e}^{(\alpha)}\rangle. \label{eq:four component cat code amplitude recovery}
\end{align}

Note that one could consider coherently mapping the amplitude-recovered error states $|0_{4-\textrm{cat},e}^{(\alpha)}\rangle$ and $|1_{4-\textrm{cat},e}^{(\alpha)}\rangle$ back to the code states $|0_{4-\textrm{cat}}^{(\alpha)}\rangle$ and $|1_{4-\textrm{cat}}^{(\alpha)}\rangle$ by using a unitary operator when the parity measurement yields an odd parity outcome. However, we remark that it is not necessary to physically map these error states back to the code states because we can simply keep track of the classical data of the parity measurement outcomes and interpret the quantum data appropriately in reference to the parity measurement outcomes. 

A thorough analysis of the above error recovery process (i.e., parity measurement and amplitude recovery) is given in Ref.\ \cite{Li2017} and we do not review it here. Instead, we will discuss below how the parity measurement and an amplitude recovery operation can be implemented experimentally in circuit QED systems.  

\subsubsection{Experimental realization of the parity measurement and amplitude recovery}

Note that the parity operator $\hat{\Pi}_{2} = e^{i\pi\hat{n}}$ is a unitary operator and satisfies $(\hat{\Pi}_{2})^{2} = \hat{I}$. In general, any unitary operator $\hat{U}$ satisfying $\hat{U}^{2} = \hat{I}$ can be measured in a non-destructive way by using an ancilla qubit (see Fig.\ \ref{fig:one bit phase estimation circuit}). More explicitly, 
\begin{align}
|\psi\rangle  |+\rangle = \frac{1}{\sqrt{2}} |\psi\rangle  ( |0\rangle + |1\rangle )   &\xrightarrow{ \textrm{controlled-}\hat{U} } \frac{1}{\sqrt{2}} |\psi\rangle|0\rangle + \frac{1}{\sqrt{2}}\hat{U}|\psi\rangle|1\rangle 
\nonumber\\
&\qquad\qquad\quad = \frac{1}{2}(\hat{I}+\hat{U})|\psi\rangle|+\rangle + \frac{1}{2}(\hat{I}-\hat{U})|\psi\rangle|-\rangle . 
\end{align}
One can readily see that $(\hat{I} \pm \hat{U}) / 2$ is the projection operator to the $\hat{U}=\pm 1$ subspace. Thus, if we measure the $|\pm \rangle$ state at the end of the circuit, we are projecting the system to the $\hat{U}=\pm 1$ subspace, hence measuring the unitary operator $\hat{U}$ in a non-destructive way. Specializing the circuit in Fig.\ \ref{fig:one bit phase estimation circuit} to the case of $\hat{U} = \hat{\Pi}_{2} = e^{i\pi \hat{n}}$, we can realize that we need an ancilla qubit prepared in the $|+\rangle$ state, the ability to perform controlled $180\degree$ rotation
\begin{align}
e^{ i\pi\hat{n} |1\rangle\langle 1| } = \hat{I} \otimes  |0\rangle\langle 0| + e^{i\pi\hat{n}}\otimes |1\rangle\langle 1|, 
\end{align} 
and the ability to measure the ancilla qubit in the $X$ basis. In general, the most challenging step is to perform the rotation of a bosonic mode conditioned on the ancilla qubit state. 

\begin{figure}[t!]
\centering
\includegraphics[width=2.5in]{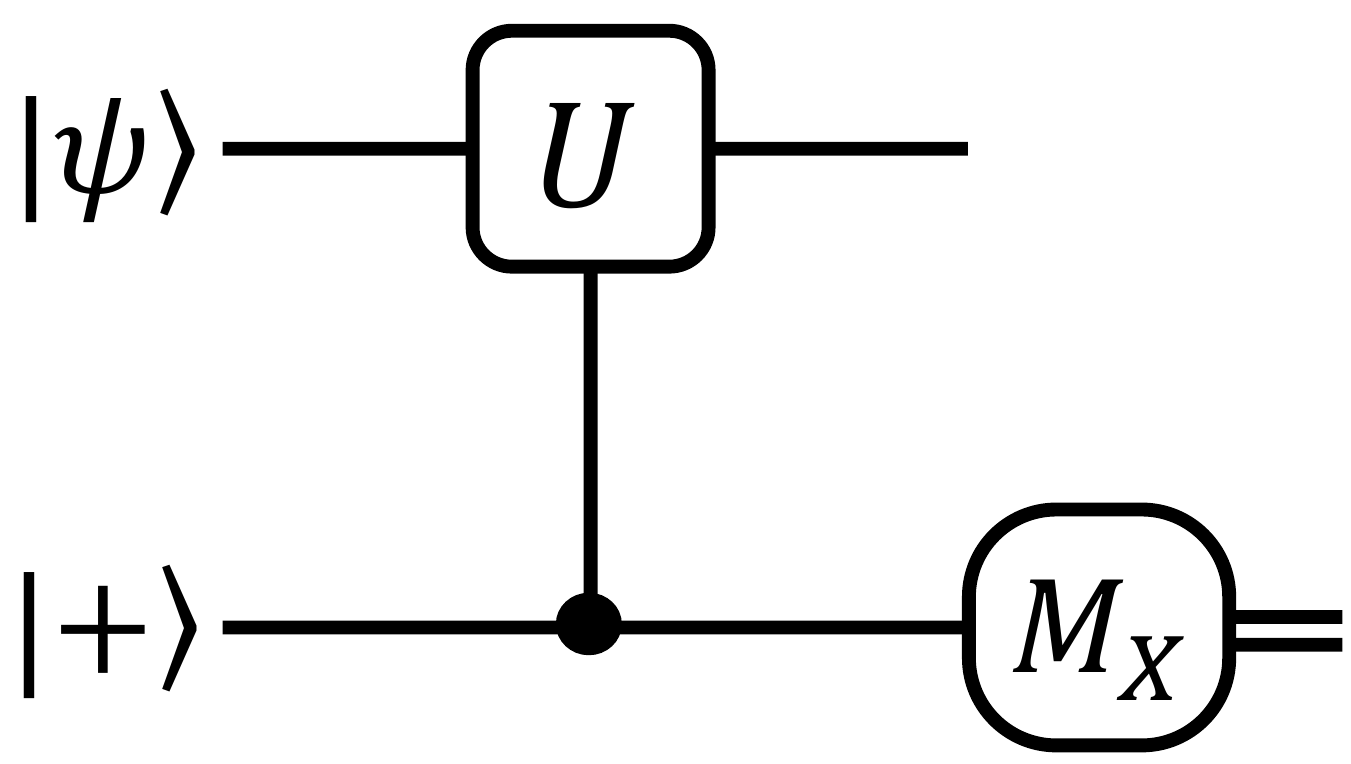}
\caption{A circuit for quantum non-demolition measurement of a unitary operator $\hat{U}$ that satisfies $\hat{U}^{2} = \hat{I}$. $|+\rangle \equiv \frac{1}{\sqrt{2}} (|0\rangle + |1\rangle)$ and $\mathcal{M}_{X}$ represents the Pauli X measurement (or measurement of a qubit in the $|+\rangle,|-\rangle$ basis). Note that this circuit is equivalent to the $1$-bit phase estimation circuit given in Fig.\ \ref{fig:n bit phase estimation circuit}(a).  }
\label{fig:one bit phase estimation circuit}
\end{figure}

In circuit QED systems, one can use a microwave cavity mode as a bosonic mode and a transmon qubit as an ancilla qubit to implement bosonic QEC. In the dispersive coupling regime, a cavity mode and a transmon qubit are coupled via the dispersive coupling, i.e., 
\begin{align}
\hat{H} = \omega_{C}\hat{a}^{\dagger}\hat{a} + \omega_{T} |e\rangle\langle e| -\chi \hat{a}^{\dagger}\hat{a} |e\rangle\langle e| . \label{eq:cavity-transmon Hamiltonian}
\end{align}
Here, $\omega_{C}$ is the frequency of the cavity mode, $\omega_{T}$ is the frequency of the transmon, and $\chi$ is the strength of the dispersive coupling. Also, $\hat{a}$ and $\hat{a}^{\dagger}$ are the annihilation and creation operators of the cavity mode and $|e\rangle\langle e|$ is the projection operator onto the excited state of the transmon qubit. Thus, in the interaction picture, the joint system of the cavity mode and the transmon system undergoes the desired controlled rotation: 
\begin{align}
\hat{U}(t) = e^{i \chi t \hat{a}^{\dagger}\hat{a} |e\rangle\langle e|} = \hat{I} \otimes |g\rangle\langle g| +  e^{i\chi t \hat{n}} \otimes |e\rangle\langle e|.
\end{align}
Here, $|g\rangle$ is the ground state of the transmon qubit. Hence, by letting the system evolve for the time interval $\Delta t = \pi /\chi$, one can implement the desired qubit-controlled $180\degree$ rotation of a bosonic mode, which is precisely what we need for the parity measurement. 

The parity measurement scheme discussed above was realized experimentally in a circuit QED system \cite{Sun2014}. Subsequently, the active QEC of the four-component cat code was implemented in a similar experimental setup based on the above parity measurement scheme \cite{Ofek2016}. However, note that in Ref.\ \cite{Ofek2016}, an amplitude recovery operation $\hat{U}_{\alpha'\rightarrow \alpha}$ (see Eq.\ \eqref{eq:four component cat code amplitude recovery}) was not implemented so the overall amplitude still decays over time exponentially, i.e., $\alpha \rightarrow \alpha' = \alpha e^{-\frac{\kappa t}{2}}$. We remark that the experiment in Ref.\ \cite{Ofek2016} nevertheless achieved the ``break-even'' point (i.e., outperforming the best error-uncorrected physical qubit element by using QEC). This is because the amplitude decay is deterministic unlike excitation losses which are stochastic. Thus in Ref.\ \cite{Ofek2016}, the adverse effects of the deterministic amplitude decay were taken into account in the classical decoding process and therefore were mitigated. However, an amplitude recovery operation $\hat{U}_{\alpha'\rightarrow \alpha}$ is still essential if we want to go way beyond the break-even point. This is because otherwise the amplitude $\alpha$ will almost vanish as the elapsed time $t$ becomes much larger than the natural lifetime of the cavity mode $1/\kappa$ and thus the classical post-processing will yield diminishing returns.  

Having discussed the parity measurement and the necessity of an amplitude recovery, let us now move on to the implementation of an amplitude recovery unitary operation $\hat{U}_{\alpha'\rightarrow \alpha}$ in circuit QED systems. Note that the dispersive coupling $-\chi \hat{a}^{\dagger}\hat{a} |e\rangle\langle e|$ in Eq.\ \eqref{eq:cavity-transmon Hamiltonian} is a non-linear interaction as it is cubic in $\hat{a}$, $\hat{a}^{\dagger}$, and $\hat{\sigma}_{z} = |e\rangle\langle e| - |g\rangle\langle g|$, going beyond the quadratic Hamiltonian. At the conceptual level, one could immediately infer at this point that this non-linear interaction could be used to implement an arbitrary unitary operation on the joint cavity-transmon system, including a desired amplitude recovery operation $\hat{U}_{\alpha'\rightarrow \alpha}$. 

In Ref.\ \cite{Krastanov2015}, it was shown that one can implement a selective number-dependent arbitrary phase (SNAP) gate $\hat{U}_{\textrm{SNAP}}(\vec{\theta})$ on a microwave cavity bosonic mode by using the dispersive coupling to a transmon qubit. Explicitly, the SNAP gate $\hat{U}_{\textrm{SNAP}}(\vec{\theta})$ is defined as
\begin{align}
\hat{U}_{\textrm{SNAP}}(\vec{\theta}) &\equiv \sum_{n=0}^{\infty} e^{i\theta_{n}} |n\rangle\langle n|, 
\end{align}
where the $\theta_{n}$ is the number-dependent phase which can take an arbitrary value. The key underlying idea behind the implementation of the SNAP gate is that the oscillation frequency between the transmon qubit states $|g\rangle$ and $|e\rangle$ are dependent on the photon number $n$ in the cavity mode, i.e., 
\begin{align}
\omega_{ge} (n) = \omega_{T} - n\chi, 
\end{align}
where $\chi$ is the strength of the dispersive coupling. This means that all these frequencies can be addressed selectively and thus we can control each Fock state $|n\rangle$ of the cavity mode in a selective manner. See Refs.\ \cite{Krastanov2015,Heeres2015} for more details.  

Note that SNAP gates are generally non-Gaussian. For example, a unitary operation generated by a self-Kerr nonlinearity (which is non-Gaussian) 
\begin{align}
e^{-i\frac{Kt}{2}(\hat{a}^{\dagger})^{2}\hat{a}^{2}} = \sum_{n=0}^{\infty} e^{-\frac{Kt}{2} n(n-1) } |n\rangle\langle n| 
\end{align}
is a specific instance of the general SNAP gates with $\theta_{n} = -n(n-1)Kt/2$. Since it was shown in Ref.\ \cite{Lloyd1999} that self-Kerr nonlinearity combined with Gaussian operations are universal, we can guess that SNAP gates should also be very useful for universal quantum control. Indeed in Ref.\ \cite{Krastanov2015}, it was shown that an arbitrary unitary operation on a bosonic mode can be implemented by combining displacement operations $\hat{D}(\alpha) \equiv \exp[ \alpha\hat{a}^{\dagger} - \alpha^{*}\hat{a} ]$ (easily realizable by a linear drive) and SNAP gates $\hat{U}_{\textrm{SNAP}}(\vec{\theta})$ which can be implemented by using dispersive coupling as discussed above. Thus, it is in principle possible to implement an amplitude recovery operation $\hat{U}_{\alpha'\rightarrow \alpha}$ in circuit QED systems. On the other hand, note that using the SNAP gates may not be the most practical way to implement the amplitude recovery operation. It thus remains to be answered whether there is a more tailored method for the amplitude recovery operation that can perform better in practice than the generic SNAP gate approach.

\subsubsection{Autonomous QEC of four-component cat codes} 

Similar to the case of the two-component cat code, it is possible to autonomously stabilize the four-component cat code by using an engineered dissipation. Note that in the case of the four-component cat code, we want to stabilize the space spanned by the four coherent states $|\alpha\rangle$, $|i\alpha\rangle$, $|-\alpha\rangle$, and $|-\alpha\rangle$. This can be done by using the following engineered four-photon dissipation: 
\begin{align}
\frac{d\hat{\rho}(t)}{dt} &= \kappa_{4\textrm{ph}} \mathcal{D}[ \hat{a}^{4}-\alpha^{4} ] ( \hat{\rho}(t) ) . \label{eq:four component cat code engineered dissipation}
\end{align} 
Note that all the four coherent states given above are annihilated by the engineered jump operator $\hat{F}_{4\textrm{ph}}\equiv \hat{a}^{4}-\alpha^{4}$. Thus, the engineered four-photon dissipation $\mathcal{D}[ \hat{a}^{4} - \alpha^{4} ]$ stabilizes the four-component cat code manifold.  

Similar to the case of the two-component cat codes, the engineered four-photon dissipation in Eq.\ \eqref{eq:four component cat code engineered dissipation} protect the four-component cat code manifold against bosonic dephasing errors. However, the engineered four-photon dissipation $\mathcal{D}[ \hat{a}^{4}-\alpha^{4} ]$ does not protect the code space against photon loss errors. This can be readily seen by observing that the engineered four-photon dissipation does not change the photon number parity of the system. Thus, when there is a single-photon loss, although the encoded logical information is well preserved in the odd-parity subspace, we cannot recover the encoded information if we only use the engineered four-photon dissipation. 

To address this issue, one might think of a hybrid approach where one actively measures the photon number parity by using the measurement circuit in Fig.\ \ref{fig:one bit phase estimation circuit} while autonomously stabilizing the code space by using the engineered four-photon dissipation. Unfortunately, however, the parity measurement circuit in Fig.\ \ref{fig:one bit phase estimation circuit} and the engineered four-photon dissipation $\mathcal{D}[\hat{a}^{4}-\alpha^{4}]$ do not commute with each other and thus they cannot be implemented simultaneously. To be more precise, while the parity operator $\hat{\Pi}_{2} = e^{i\pi\hat{n}}$ commutes with the jump operator $\hat{F}_{4\mathrm{ph}} = \hat{a}^{4}-\alpha^{4}$, the generator of the parity operator (or $180\degree$ rotation) $\hat{n}$ does not commute with $\hat{F}_{4\mathrm{ph}}$. It means that the engineered jump operator $\hat{F}_{4\mathrm{ph}}$ does not commute with the qubit-state-conditional rotation process at all times (or at all angles). Hence, we should turn off the engineered four-photon dissipation while measuring the photon number parity and then turn it on again while we wait for the next round of the parity measurement. Note that in this alternating scheme, one can view the engineered four-photon dissipation as a non-unitary amplitude recovery operation that maps the contracted cat code space $\mathcal{C}_{4-\textrm{cat}}^{(\alpha  e^{-\frac{\kappa t}{2}})}$ back to the original code space $\mathcal{C}_{4-\textrm{cat}}^{(\alpha )}$. We also remark that the non-commutativity of the engineered dissipation and the parity measurement is specific to the parity measurement scheme based on the circuit in Fig.\ \ref{fig:one bit phase estimation circuit}. An alternative parity measurement scheme that is compatible with the engineered four-photon dissipation was proposed in Ref.\ \cite{Cohen2017}. 

Similarly as in the case of the engineered two-photon dissipation $\mathcal{D}[ \hat{a}^{2}-\alpha^{2} ]$, the engineered four-photon dissipation $\mathcal{D}[\hat{a}^{4}-\alpha^{4}]$ can be decomposed into a four-photon drive and a four-photon dissipation: 
\begin{align}
\frac{d\hat{\rho}(t)}{dt} &= \kappa_{4\textrm{ph}} \mathcal{D}[\hat{a}^{4}-\alpha^{4}]( \hat{\rho}(t) ) =  \frac{\kappa_{4\textrm{ph}}}{2} \Big{[}  \alpha^{4}(\hat{a}^{\dagger})^{4} - \alpha^{*4}\hat{a}^{4}  , \hat{\rho}(t) \Big{]} + \kappa_{4\textrm{ph}} \mathcal{D}[ \hat{a}^{4} ]( \hat{\rho}(t) ).  
\end{align}
In Ref.\ \cite{Mundhada2017}, it was proposed that one can realize the four-photon dissipation $\mathcal{D}[\hat{a}^{4}]$ by coupling the system to a fast-decaying ancilla mode via an interaction Hamiltonian $\hat{H}_{\textrm{int}}  = g( \hat{a}^{4}|f\rangle\langle g| + (\hat{a}^{\dagger})^{4}|g\rangle\langle f|)$. Such a sixth-order interaction was subsequently realized experimentally in a circuit QED system \cite{Mundhada2019}.

\subsubsection{Generalization to higher order error correction}

Recall that the four-component cat code cannot correct two-excitation loss events. It is possible however to generalize the cat codes by adding more coherent state components such that they are robust against $\ell$-excitation loss errors with some $\ell \ge 2$ \cite{Li2017}. For example, we can define six-component cat codes $\mathcal{C}_{6-\textrm{cat}}^{(\alpha)}$ as follows: 
\begin{align}
|0_{6-\textrm{cat}}^{(\alpha)}\rangle &= \frac{1}{ \sqrt{ 6N_{0}^{6-\textrm{cat}}(\alpha) } }\Big{(} |\alpha\rangle + |e^{i\frac{\pi}{3}}\alpha\rangle + |e^{i\frac{2\pi}{3}}\alpha\rangle + |e^{i\pi}\alpha\rangle + |e^{i\frac{4\pi}{3}}\alpha\rangle + |e^{i\frac{5\pi}{3}}\alpha\rangle \Big{)} , 
\nonumber\\
|1_{6-\textrm{cat}}^{(\alpha)}\rangle &= \frac{1}{ \sqrt{ 6N_{1}^{6-\textrm{cat}}(\alpha) } }\Big{(} |\alpha\rangle - |e^{i\frac{\pi}{3}}\alpha\rangle + |e^{i\frac{2\pi}{3}}\alpha\rangle - |e^{i\pi}\alpha\rangle + |e^{i\frac{4\pi}{3}}\alpha\rangle - |e^{i\frac{5\pi}{3}}\alpha\rangle \Big{)}  . 
\end{align}
The normalization constants $N_{\mu}^{6-\textrm{cat}}(\alpha)$ are defined as
\begin{align}
N_{\mu}^{6-\textrm{cat}}(\alpha) &\equiv \frac{1}{6} \sum_{k,\ell =0}^{5} e^{-i ( k -\ell)\pi} \langle e^{i\frac{k\pi}{3}} \alpha | e^{i\frac{\ell\pi}{3}} \alpha\rangle  . 
\end{align}
Most importantly, the logical zero state $|0_{6-\textrm{cat}}^{(\alpha)}\rangle$ has $0$ excitations mod $6$ and the logical one state $|0_{6-\textrm{cat}}^{(\alpha)}\rangle$ has $3$ excitations mod $6$. This implies that all the logical states have $0$ excitations mod $3$ and thus the six-component cat code $\mathcal{C}_{6-\textrm{cat}}^{(\alpha)}$ is stabilized by the $120\degree$ phase rotation. 
\begin{align}
\hat{\Pi}_{3} \equiv e^{i\frac{\pi}{3}\hat{n}} =  \sum_{n=0}^{\infty}|n\rangle\langle n|\times  \begin{cases}
1 & n=0 \textrm{ mod }3 \\
e^{i\frac{\pi}{3}} & n=1 \textrm{ mod }3 \\
e^{i\frac{2\pi}{3}} & n=2 \textrm{ mod }3
\end{cases} .  
\end{align}
Hence, the six-component cat code is an example of rotation-symmetric bosonic codes \cite{Grimsmo2019}. 

Since the logical states of the six-component cat code have $0$ excitations mod $3$, they will be mapped via single-excitation loss to some error states with $2$ excitations mod $3$, and similarly via two-excitation loss to some error states with $1$ excitations mod $3$. Thus, by measuring the stabilizer of the six-component cat code $\hat{\Pi}_{3} \equiv e^{i\frac{\pi}{3}\hat{n}}$ (or equivalently, the excitation number modulo $3$), we can detect any single-excitation and two-excitation loss events. Thus, the six-component cat codes are robust against two-excitation loss events. More generally, one can define a $2d$-component cat code by using $2d$ coherent state components that is robust against all $\ell$-excitation loss events for $\ell \le d$ (see Ref.\ \cite{Li2017} for more details).

\subsubsection{Recent developments}  

Recall that two-component cat codes $\mathcal{C}_{2-\textrm{cat}}^{(\alpha)}$ can be realized by using an engineered two-photon dissipation $\mathcal{D}[\hat{a}^{2}-\alpha^{2}]$. An alternative way to implement the two-component cat code is to use the self-Kerr nonlinearity and two-photon drive \cite{Puri2017}. Specifically, the scheme in Ref.\ \cite{Puri2017} is based on the fact that the Hamiltonian 
\begin{align}
\hat{H} = -K( \hat{a}^{\dagger} )^{2}\hat{a}^{2} + ( \epsilon_{p}(\hat{a}^{\dagger})^{2} + \epsilon_{p}^{*}\hat{a}^{2} ) = -K \Big{(} (\hat{a}^{\dagger})^{2} -\frac{\epsilon_{p}^{*}}{K} \Big{)}\Big{(} \hat{a}^{2} - \frac{\epsilon_{p}}{K} \Big{)} + \frac{|\epsilon_{p}|^{2}}{K} 
\end{align}
has the two coherent states $|\pm \alpha \rangle$ with $\alpha = \sqrt{\epsilon_{p}/K}$ as its degenerate ground states. Hence, if the system is described by the above Hamiltonian, one can stabilize the two-component cat code manifold simply by cooling the system to its ground state manifold. This scheme was recently realized experimentally in a circuit QED system \cite{Grimm2019}.  

Note that the two-component cat code is not robust against excitation loss errors regardless of how it is implemented. Specifically, excitation loss errors cause logical bit-flip errors in the two-component cat code manifold. One way to make the cat code robust against excitation loss errors is to use the four-component cat code as discussed above. On the other hand, it is also possible to concatenate the two-component cat code with a conventional multi-qubit error-correcting code to correct the residual bit-flip errors in the two-component cat code. For instance, concatenation of the two-component cat code with a repetition code (i.e., repetition-cat code) was explored in Refs.\ \cite{Cohen2017a,Guillaud2019}. 

Recently, it has been observed that one might be able to reduce the required resource overhead associated with the use of conventional multi-qubit error-correcting codes by using the two-component cat qubits. This is because the two-component cat qubits are subject predominantly to bit-flip errors due to excitation loss errors but not phase-flip errors (since phase-flip errors are suppressed exponentially in the size of the cat code). Thus, the next layer of the multi-qubit error-correcting code can be tailored to such biased-noise models. Along this line, it has been shown recently that if each qubit in the surface code is subject to a biased noise, the fault-tolerance thresholds for the surface code can be significantly relaxed by using a tailored decoding scheme for the biased-noise model \cite{Tuckett2018,Tuckett2019a,Tuckett2019}. Also, various schemes for bias-preserving gates for the two-component cat code have been proposed \cite{Guillaud2019,Puri2019} so that the noise bias can be maintained even during the application of quantum operations. See also Subsection \ref{subsection:Concatenation of a cat code with a multi-qubit code} for more discussions.  

Getting back to the single-mode bosonic QEC, recall that we can directly deal with the excitation loss errors by using the four-component cat code and measuring the excitation number parity. In all the circuit QED implementations, an ancilla transmon qubit was used to measure the photon number parity of a microwave cavity bosonic mode. However, note that the ancilla transmon qubits used in the parity measurement scheme are noisy. For example, the excited state of a transmon qubit $|e\rangle$ may decay the ground state $|g\rangle$ during the parity measurement. Note that coherence times of a transmon qubit are typically given by $10–50\mu$s. On the other hand, the parity measurement based on the qubit-conditional $180\degree$ phase rotation takes $\Delta t = \pi /\chi \sim 1\mu$s where $\chi$ is the strength of the dispersive coupling between a cavity mode and a transmon qubit. Thus, each parity measurement causes additional errors to the bosonic cavity mode with an error rate roughly given by $0.01–0.1$. This was the limiting factor in the previous circuit QED implementations of the parity measurement \cite{Sun2014,Ofek2016}. 

A simple way to address the transmon decay during the parity measurement is to use higher excited states of the transmon qubit. For instance, one could use the second excited state of a transmon qubit $|f\rangle$ instead of the first excited state $|e\rangle$ to perform the parity measurement. In this case, the states $|g\rangle$ and $|f\rangle$ form the basis of the ancilla qubit and the $|e\rangle$ state serves as a buffer state. Then, although the second excited state $|f\rangle$ may decay to the first excited state $|e\rangle$ during the parity measurement, this decay event can be detected by measuring the buffer state $|e\rangle$ at the end of the parity measurement circuit. Thus, the parity measurement scheme can be made robust against the single transmon decay error by discarding all the measurement runs that ended in the $|e\rangle$ state. However, this simple scheme will not be scalable because the success probability will decrease exponentially as we repeat the parity measurements. 

Recently, an improved alternative scheme has been proposed and implemented experimentally \cite{Rosenblum2018}. In this more sophisticated scheme, one carefully engineers the dispersive coupling between a cavity mode and a transmon qubit such that
\begin{align}
\hat{H}_{\textrm{int}} &= -\chi_{e}\hat{a}^{\dagger}\hat{a}|e\rangle\langle e|-\chi_{f}\hat{a}^{\dagger}\hat{a}|f\rangle\langle f| = -\chi \hat{a}^{\dagger}\hat{a} ( |e\rangle\langle e| + |f\rangle\langle f| ), 
\end{align}    
i.e., $\chi_{e} = \chi_{f} = \chi$. Using this ``$\chi$-matching'' technique, one can ensure that the cavity state is not decohered even when the qubit state is measured in the $|e\rangle$ state due to the transmon decay. Thus, one does not need to discard the measurement runs with the $|e\rangle$ state and instead can simply reset the qubit and retry the parity measurement (see Ref.\ \cite{Rosenblum2018} for more details). A similar technique also proved to be useful for improving the fidelity of the SNAP gates \cite{Reinhold2019,Ma2019}.  

There has also been several progress on the autonomous QEC of cat codes. Recall that the engineered four-photon dissipation for the four-component cat code $\mathcal{D}[\hat{a}^{4} - \alpha^{4}]$ does not correct excitation loss errors. Thus parity measurements are necessary if we want to fully benefit from the error correction capability of the four-component cat code. As discussed above, however, the usual parity measurement scheme based on the circuit in Fig.\ \ref{fig:one bit phase estimation circuit} is not compatible with the engineered four-photon dissipation. In Ref.\ \cite{Cohen2017}, an alternative parity measurement scheme that is compatible with the engineered four-photon dissipation was proposed. Furthermore, the pair-cat code \cite{Albert2019} has recently been proposed as an alternative to the four-component cat code.

\subsection{Binomial codes} 

Here, we review the binomial codes \cite{Michael2016}. Note that the cat codes are composed of multiple components of the coherent states. Since a coherent coherent state $|\alpha\rangle$ occupies the entire infinite-dimensional bosonic Hilbert space, i.e.,
\begin{align}
|\alpha\rangle &= e^{-\frac{1}{2}|\alpha|^{2}} \sum_{n=0}^{\infty}\frac{\alpha^{n}}{\sqrt{n!}}|n\rangle, 
\end{align}   
we need a large Hilbert space dimension to faithfully describe cat code states. For instance, for the smallest sweet-spot value of the four-component cat code $|\alpha_{1}^{\star}|=1.538$, we need to have $n_{\textrm{cut}}\ge 9$ to capture more than $99.9\%$ of the state's total population.

\begin{table}[t!]
  \centering
     \def\arraystretch{1.5}
  \begin{tabular}{ V{3} c V{1.5} c V{3} }
   \hlineB{3}  
     & \textbf{$(1,1)$-binomial code} \cite{Michael2016}  \\  \hlineB{3} 
     Logical states & $|0_{\textrm{bin}}^{(1,1)}\rangle = \frac{1}{\sqrt{2}} ( |0\rangle + |4\rangle ) $   \\
     &  $\!\!\!\!\!\!\!\!\!\!\!\!\!\!\!\!\!\!\!\!\!\!\!\!\!\!\! |1_{\textrm{bin}}^{(1,1)}\rangle = |2\rangle$  \\ \hlineB{1.5}   
    Correctable errors & Single-excitation loss  \\ \hlineB{1.5}
    Active QEC & Parity measurement and   \\ 
     & recovery unitaries \cite{Michael2016} \\  \hdashline
     Experiment & Ref.\ \cite{Hu2019}   \\  \hlineB{1.5}     
    Autonomous QEC & Engineered dissipation \cite{Lihm2018} \\ \hdashline
    Experiment & Ref.\ \cite{Ma2019a}  \\  \hlineB{3}     
  \end{tabular}
  \caption{Basic properties of the $(1,1)$-binomial code. }
  \label{table:binomial codes}
\end{table}

In many aspects, binomial codes are similar to cat codes, especially in the sense that they are both rotation-symmetric \cite{Grimsmo2019}. However, binomial codes are distinguished from cat codes because binomial codes occupy only a finite-dimensional subspace with at most $n_{\textrm{cut}} < \infty$ excitations. Below, we review the properties of the binomial codes and discuss their experimental implementations. See Table \ref{table:binomial codes} for a summary.

\subsubsection{The $(1,1)$-binomial code}

Logical states of the smallest non-trivial binomial code $\mathcal{C}_{\textrm{bin}}^{(1,1)}$ are given by 
\begin{align}
|0_{\textrm{bin}}^{(1,1)}\rangle &= \frac{1}{\sqrt{2}} ( |0\rangle + |4\rangle ), 
\nonumber\\
|1_{\textrm{bin}}^{(1,1)}\rangle &= |2\rangle. 
\end{align}
The superscript $(1,1)$ is due to the fact that the above binomial code is a special instance of the general binomial code $\mathcal{C}_{\textrm{bin}}^{(N,S)}$, where the two parameters $N$ and $S$ are given by $N=S=1$ (see below for more details about the general binomial code). From now on, we will refer to this binomial code as the $(1,1)$-binomial code.

\begin{figure}[t!]
\centering
\includegraphics[width=6.0in]{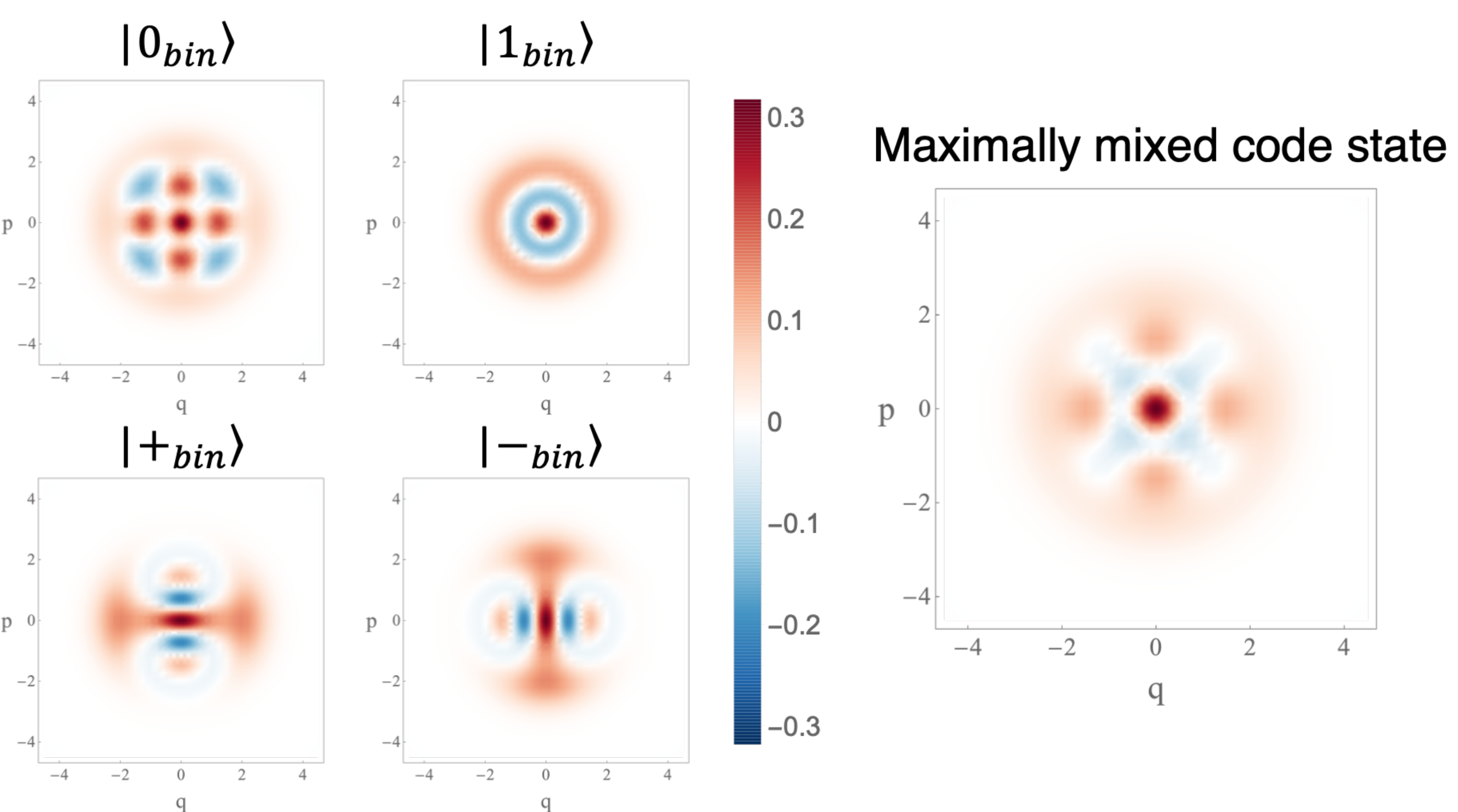}
\caption{Wigner functions of the logical states of the $(1,1)$-binomial code $\mathcal{C}_{\textrm{bin}}^{(1,1)}$. The maximally mixed code state is defined as the projection operator to the code space divided by $2$. }
\label{fig:wigner functions binomial code}
\end{figure}

Note that the logical states of the $(1,1)$-binomial code consist of even excitation number states. Thus, the $(1,1)$-binomial code is stabilized by the parity operator $\hat{\Pi}_{2} = e^{i\pi\hat{n}}$, i.e.,
\begin{align}
\hat{\Pi}_{2} |\psi_{\textrm{bin}}^{(1,1)}\rangle &= e^{i\pi\hat{n}} |\psi_{\textrm{bin}}^{(1,1)}\rangle = |\psi_{\textrm{bin}}^{(1,1)}\rangle, \,\,\, \textrm{for all}\,\,\, |\psi_{\textrm{bin}}^{(1,1)}\rangle\in \mathcal{C}_{\textrm{bin}}^{(1,1)}
\end{align}
and is invariant under the $180\degree$ rotation. Hence, the $(1,1)$-binomial code is an instance of rotation-symmetric bosonic codes \cite{Grimsmo2019}. Note also that $|0_{\textrm{bin}}^{(1,1)}\rangle$ has $0$ excitations mod $4$ and $|1_{\textrm{bin}}^{(1,1)}\rangle$ has $2$ excitations mod $4$, so they are clearly orthogonal to each other. Logical states of the $(1,1)$-binomial code are visualized in Fig.\ \ref{fig:wigner functions binomial code}.

The $(1,1)$-binomial code is capable of correcting the excitation loss errors to the first order. That is, the $(1,1)$-binomial code $\mathcal{C}_{\textrm{bin}}^{(1,1)}$ satisfies the Knill-Laflamme condition for the first-order excitation loss error set $\lbrace \hat{I},  \hat{a} \rbrace$, i.e., $\hat{N}_{0}=\hat{I}$ (no error) and $\hat{N}_{1}=\hat{a}$ (single-excitation loss). Before thoroughly checking the Knill-Laflamme condition, let us consider an arbitrary encoded state $|\psi_{\textrm{bin}}^{(1,1)}\rangle = c_{0} |0_{\textrm{bin}}^{(1,1)}\rangle + c_{1} |1_{\textrm{bin}}^{(1,1)}\rangle$ to see why the $(1,1)$-binomial code works in an intuitive way. Note that upon a single-excitation loss event, the encoded state is mapped to the following corrupted state 
\begin{align}
\hat{a}|\psi_{\textrm{bin}}^{(1,1)}\rangle &= \hat{a} \Big{[} c_{0} \frac{1}{\sqrt{2}}\Big{(}|0\rangle + |4\rangle\Big{)} +c_{1} |2\rangle \Big{]} = \sqrt{2}\Big{[} c_{0} |3\rangle + c_{1} |1\rangle \Big{]}. 
\end{align} 
Since the corrupted state is now in the odd excitation parity subspace whereas the encoded state is in the even excitation parity subspace, we can detect the loss event by measuring the excitation number parity $\hat{\Pi}_{2} = e^{i\pi\hat{n}}$. Moreover, since the error states $\hat{a}|0_{\textrm{bin}}^{(1,1)}\rangle = \sqrt{2}|3\rangle$ and $\hat{a}|1_{\textrm{bin}}^{(1,1)}\rangle = \sqrt{2}|1\rangle$ have the same normalization constant (i.e., $\sqrt{2}$) and are orthogonal to each other, the corrupted state $\hat{a}|\psi_{\textrm{bin}}^{(1,1)}\rangle = \sqrt{2} ( c_{0} |3\rangle + c_{1} |1\rangle )$ contains the same logical quantum information as the uncorrupted state $|\psi_{\textrm{bin}}^{(1,1)}\rangle$. That is, by mapping the error states back to the logical states, i.e., 
\begin{align}
|3\rangle &\rightarrow \frac{1}{\sqrt{2}} ( |0\rangle +|4\rangle ), 
\nonumber\\
|1\rangle &\rightarrow |2\rangle, \label{eq:binomial code mapping error states back to code states}
\end{align}   
we can recover the original encoded state: 
\begin{align}
\hat{a}|\psi_{\textrm{bin}}^{(1,1)}\rangle = \sqrt{2} ( c_{0} |3\rangle + c_{1} |1\rangle ) \rightarrow \sqrt{2} \Big{[} c_{0} \frac{1}{\sqrt{2}}\Big{(}|0\rangle + |4\rangle\Big{)} +c_{1} |2\rangle \Big{]}  \propto |\psi_{\textrm{bin}}^{(1,1)}\rangle. 
\end{align}
Note that the mapping in Eq.\ \eqref{eq:binomial code mapping error states back to code states} can be implemented by a unitary operation because all the four states that are involved are mutually orthogonal.  

Let us now explicitly check the Knill-Laflamme condition. As mentioned above, the error states are given by $\hat{a}|0_{\textrm{bin}}^{(1,1)}\rangle = \sqrt{2}|3\rangle$ and $\hat{a}|1_{\textrm{bin}}^{(1,1)}\rangle = \sqrt{2}|1\rangle$ and thus all the relevant states $|0_{\textrm{bin}}^{(1,1)}\rangle$, $|1_{\textrm{bin}}^{(1,1)}\rangle$, $\hat{a}|0_{\textrm{bin}}^{(1,1)}\rangle$, and $\hat{a}|1_{\textrm{bin}}^{(1,1)}\rangle$ are mutually orthogonal. Hence, we have 
\begin{align}
\langle \mu_{\textrm{bin}}^{(1,1)} | \hat{N}_{\ell}^{\dagger}\hat{N}_{\ell'} |\nu_{\textrm{bin}}^{(1,1)} \rangle =0, \,\,\, \textrm{for all}\,\,\, \ell,\ell'\in \lbrace 0,1 \rbrace \,\,\, \textrm{and}\,\,\, \mu\neq \nu, 
\end{align}
and the Knill-Laflamme condition is satisfied for all $\mu\neq \nu$. The relevant $\mu=\nu$ terms in the Knill-Laflamme condition, i.e., $\langle \mu_{\textrm{bin}}^{(1,1)} | \hat{N}_{\ell}^{\dagger}\hat{N}_{\ell'} |\mu_{\textrm{bin}}^{(1,1)} \rangle$, are given by 
\begin{align}
&(\ell,\ell')= (0,0):
\nonumber\\
&\quad \langle \mu_{\textrm{bin}}^{(1,1)}  |\mu_{\textrm{bin}}^{(1,1)} \rangle =1,
\nonumber\\ 
&(\ell,\ell')= (0,1),(1,0): 
\nonumber\\
&\quad \langle \mu_{\textrm{bin}}^{(1,1)}  | \hat{a} |\mu_{\textrm{bin}}^{(1,1)} \rangle =0, 
\nonumber\\
&(\ell,\ell')= (1,1): 
\nonumber\\
&\quad \langle 0_{\textrm{bin}}^{(1,1)}  | \hat{a}^{\dagger}\hat{a} |0_{\textrm{bin}}^{(1,1)} \rangle = 2\langle 3|3\rangle = 2, \,\,\, \textrm{and}\,\,\, \langle 1_{\textrm{bin}}^{(1,1)}  | \hat{a}^{\dagger}\hat{a} |1_{\textrm{bin}}^{(1,1)} \rangle = 2\langle 1|1\rangle = 2. 
\end{align}
Thus, $\langle \mu_{\textrm{bin}}^{(1,1)} | \hat{N}_{\ell}^{\dagger}\hat{N}_{\ell'} |\mu_{\textrm{bin}}^{(1,1)} \rangle$ is $\mu$-independent for all $\ell,\ell'\in\lbrace 0,1 \rbrace$ and the $(1,1)$-binomial code can correct the excitation loss errors to the first order. Note that the coefficients of the logical zero state $|0_{\textrm{bin}}^{(1,1)}\rangle = \frac{1}{\sqrt{2}}( |0\rangle + |4\rangle )$ (i.e., $\frac{1}{\sqrt{2}}$ and $\frac{1}{\sqrt{2}}$) are carefully chosen such that $\langle \mu_{\textrm{bin}}^{(1,1)}  | \hat{a}^{\dagger}\hat{a} |\mu_{\textrm{bin}}^{(1,1)} \rangle$ is $\mu$-independent. Also, such $\mu$-independence is essential to ensure that the corrupted state $\hat{a}|\psi_{\textrm{bin}}^{(1,1)}\rangle$ contains the same logical quantum information as the original encoded state $|\psi_{\textrm{bin}}^{(1,1)}\rangle$.

\subsubsection{Active QEC of the $(1,1)$-binomial code}

Similarly to the case of four-component cat codes, the most important ingredient of the active QEC of the $(1,1)$-binomial code is the quantum non-demolition measurement of the excitation number parity operator
\begin{align}
\hat{\Pi}_{2} \equiv e^{i\pi\hat{n}} = \sum_{n=0}^{\infty} |n\rangle\langle n| \times \begin{cases}
+1 & n\textrm{ even}\\
-1 & n\textrm{ odd}
\end{cases}. 
\end{align} 
Since the logical states of the $(1,1)$-binomial code is in the even excitation parity subspace, we can infer that there was no excitation loss when we measure the even parity ($\hat{\Pi}_{2} = +1$) and a single-excitation loss error when we measure the odd parity ($\hat{\Pi}_{2} = -1$). As was briefly explained above, when we measure the odd parity, we need to apply a recovery unitary operator $\hat{U}_{\textrm{rec}}^{(\textrm{odd})}$ such that the error states $|0_{\textrm{bin},e}^{(1,1)}\rangle=|3\rangle$ and $|1_{\textrm{bin},e}^{(1,1)}\rangle = |1\rangle$ are mapped back to the code states $|0_{\textrm{bin}}^{(1,1)}\rangle$ and $|1_{\textrm{bin}}^{(1,1)}\rangle$: 
\begin{align}
\hat{U}_{\textrm{rec}}^{(\textrm{odd})}|0_{\textrm{bin},e}^{(1,1)}\rangle &= |0_{\textrm{bin}}^{(1,1)}\rangle = \frac{1}{\sqrt{2}} ( |0\rangle + |4\rangle ), 
\nonumber\\
\hat{U}_{\textrm{rec}}^{(\textrm{odd})}|1_{\textrm{bin},e}^{(1,1)}\rangle &= |1_{\textrm{bin}}^{(1,1)}\rangle = |2\rangle.  \label{eq:binomial code recovery unitary odd}
\end{align}   
When we measure the even parity, we might be tempted to think that no recovery operation is needed because the system did not lose any excitations. However, this is not true because the system is disturbed by the no-jump evolution term $\hat{N}_{0} = e^{-\frac{\kappa t}{2} \hat{n} }$ even when it did not lose any excitations. Below, we discuss this subtlety further. 

Let us take a closer look into the no-excitation loss case (i.e., $\ell=0$). Consider an arbitrary encoded state $|\psi_{\textrm{bin}}^{(1,1)}\rangle = c_{0}|0_{\textrm{bin}}^{(1,1)}\rangle + c_{1} |1_{\textrm{bin}}^{(1,1)}\rangle$. Upon the no-jump evolution $\hat{N}_{0}$, the state is mapped to
\begin{align}
\hat{N}_{0}|\psi_{\textrm{bin}}^{(1,1)}\rangle &= e^{-\frac{\kappa t}{2} \hat{n} } \Big{[} c_{0}\frac{1}{\sqrt{2}} \Big{(} |0\rangle + |4\rangle \Big{)} + c_{1}|2\rangle \Big{]}
\nonumber\\
&= c_{0}\frac{1}{\sqrt{2}} \Big{(} |0\rangle + e^{-2\kappa t}|4\rangle \Big{)} + c_{1}e^{-\kappa t}|2\rangle 
\nonumber\\
&= c_{0}\frac{1}{\sqrt{2}} \Big{(} |0\rangle + |4\rangle - 2\kappa t |4\rangle \Big{)} + c_{1} (1-\kappa t) |2\rangle  + \mathcal{O}\Big{(} (\kappa t)^{2} \Big{)}
\nonumber\\
&= c_{0} \Big{(} (1-\kappa t)|0_{\textrm{bin}}^{(1,1)}\rangle + \kappa t \frac{1}{\sqrt{2}} ( |0\rangle - |4\rangle )  \Big{)}+ c_{1} (1-\kappa t) |1_{\textrm{bin}}^{(1,1)}\rangle  + \mathcal{O}\Big{(} (\kappa t)^{2} \Big{)}. \label{eq:binomial code no-jump evolution populates the residual state}
\end{align}
Note that this state would have been proportional to the original code state $|\psi_{\textrm{bin}}^{(1,1)}\rangle$ to the order $\kappa t$ if it were not for the term $\kappa t \frac{1}{\sqrt{2}}(|0\rangle -|4\rangle)$. The emergence of the residual state $|\phi\rangle = \frac{1}{\sqrt{2}}(|0\rangle -|4\rangle)$ is due to the non-trivial no-jump evolution term $e^{-\frac{\kappa t}{2}\hat{n}}$. This residual state has to be removed since it has non-trivial effects on the encoded information to the first order in $\kappa t$ whereas we want to suppress the excitation loss errors to the second order in $\kappa t$. 

To counter the non-trivial effects of the no-jump evolution term $e^{-\frac{\kappa t}{2}\hat{n}}$, we need to apply a recovery unitary $\hat{U}_{\textrm{rec}}^{(\textrm{even})}$ such that
\begin{align}
\hat{U}_{\textrm{rec}}^{(\textrm{even})} |0_{\textrm{bin}}^{(1,1)}\rangle &= \cos(\kappa t) |0_{\textrm{bin}}^{(1,1)}\rangle - \sin(\kappa t)|\phi\rangle, 
\nonumber\\
\hat{U}_{\textrm{rec}}^{(\textrm{even})} |\phi\rangle &= \sin(\kappa t)|0_{\textrm{bin}}^{(1,1)}\rangle + \cos(\kappa t) |\phi\rangle, 
\nonumber\\
\hat{U}_{\textrm{rec}}^{(\textrm{even})} |1_{\textrm{bin}}^{(1,1)}\rangle &=|1_{\textrm{bin}}^{(1,1)}\rangle. \label{eq:binomial code recovery unitary even}
\end{align}
Then, it follows that
\begin{align}
\hat{U}_{\textrm{rec}}^{(\textrm{even})} \hat{N}_{0}|\psi_{\textrm{bin}}^{(1,1)}\rangle &= (1-\kappa t) |\psi_{\textrm{bin}}^{(1,1)}\rangle + \mathcal{O}\Big{(} (\kappa t)^{2} \Big{)}, \label{eq:binomial code no excitation loss final}
\end{align}
and thus the encoded quantum information is preserved to the first order in $\kappa t$ as desired. 

Let us move on to the single-excitation loss event (i.e., $\ell = 1$). Upon a single-excitation loss $\hat{N}_{1} = \sqrt{ \kappa t } e^{-\frac{\kappa t}{2} \hat{n} } \hat{a}$, the encoded state $|\psi_{\textrm{bin}}^{(1,1)}\rangle = c_{0}|0_{\textrm{bin}}^{(1,1)}\rangle + c_{1} |1_{\textrm{bin}}^{(1,1)}\rangle$ is mapped to 
\begin{align}
\hat{N}_{1}|\psi_{\textrm{bin}}^{(1,1)}\rangle &= \sqrt{ \kappa t } e^{-\frac{\kappa t}{2} \hat{n} } \hat{a} \Big{[} c_{0}\frac{1}{\sqrt{2}} \Big{(} |0\rangle + |4\rangle \Big{)} + c_{1}|2\rangle \Big{]}
\nonumber\\
&= \sqrt{ \kappa t } e^{-\frac{\kappa t}{2} \hat{n} } \Big{[} c_{0} \sqrt{2}|3\rangle + c_{1} \sqrt{2} |1\rangle \Big{]}
\nonumber\\
&= \sqrt{ 2\kappa t }  \Big{[} c_{0} e^{-\frac{3\kappa t}{2}  }|3\rangle + c_{1} e^{-\frac{\kappa t}{2}  } |1\rangle \Big{]}
\nonumber\\
&= \sqrt{ 2\kappa t }  \Big{[} c_{0} |3\rangle + c_{1}  |1\rangle \Big{]} + \mathcal{O}\Big{(} (\kappa t)^{\frac{3}{2}} \Big{)}. 
\end{align}
Thus, by applying the recovery operation $\hat{U}_{\textrm{rec}}^{(\textrm{odd})}$, we can map the error states $|3\rangle$ and $|1\rangle$ back to the code states $|0_{\textrm{bin}}^{(1,1)}\rangle$ and $|1_{\textrm{bin}}^{(1,1)}\rangle$ (see Eq.\ \eqref{eq:binomial code recovery unitary odd}). Hence, we have 
\begin{align}
\hat{U}_{\textrm{rec}}^{(\textrm{odd})}\hat{N}_{1}|\psi_{\textrm{bin}}^{(1,1)}\rangle &= \sqrt{ 2\kappa t } |\psi_{\textrm{bin}}^{(1,1)}\rangle + \mathcal{O}\Big{(} (\kappa t)^{\frac{3}{2}} \Big{)} , \label{eq:binomial code single excitation loss final}
\end{align} 
as desired. 

To summarize, no-error events and the single-excitation loss events can be corrected by using the $(1,1)$-binomial code and the following recovery map 
\begin{align}
\mathcal{R}_{\textrm{bin}} (\hat{\rho}) &\equiv \hat{R}_{\textrm{even}}\hat{\rho}\hat{R}_{\textrm{even}}^{\dagger} + \hat{R}_{\textrm{odd}}\hat{\rho}\hat{R}_{\textrm{odd}}^{\dagger}, 
\end{align}
where $\hat{R}_{\textrm{even}}$ and $\hat{R}_{\textrm{odd}}$ are defined as 
\begin{align}
\hat{R}_{\textrm{even}} &\equiv \hat{U}_{\textrm{rec}}^{(\textrm{even})}\hat{P}_{\textrm{even}} = \hat{U}_{\textrm{rec}}^{(\textrm{even})} \frac{1}{2} ( \hat{I} + \hat{\Pi}_{2} ) , 
\nonumber\\ 
\hat{R}_{\textrm{odd}} &\equiv \hat{U}_{\textrm{rec}}^{(\textrm{odd})}\hat{P}_{\textrm{odd}} = \hat{U}_{\textrm{rec}}^{(\textrm{odd})} \frac{1}{2} ( \hat{I} - \hat{\Pi}_{2} ). 
\end{align}
In particular, by considering the error channel $\mathcal{N} = e^{\kappa \mathcal{D}[\hat{a}] t}$ and putting everything together, we find 
\begin{align}
\mathcal{R}_{\textrm{bin}} \cdot \mathcal{N} ( |\psi_{\textrm{bin}}^{(1,1)}\rangle\langle \psi_{\textrm{bin}}^{(1,1)} | ) &=\mathcal{R}_{\textrm{bin}} \Big{(}  \sum_{\ell=0}^{\infty}  \hat{N}_{\ell} |\psi_{\textrm{bin}}^{(1,1)}\rangle\langle \psi_{\textrm{bin}}^{(1,1)} | \hat{N}_{\ell}^{\dagger} \Big{)} 
\nonumber\\
&= \hat{U}_{\textrm{rec}}^{(\textrm{even})} \hat{N}_{0} |\psi_{\textrm{bin}}^{(1,1)}\rangle\langle \psi_{\textrm{bin}}^{(1,1)} | \hat{N}_{0}^{\dagger} ( \hat{U}_{\textrm{rec}}^{(\textrm{even})} )^{\dagger} 
\nonumber\\
&\quad + \hat{U}_{\textrm{rec}}^{(\textrm{odd})} \hat{N}_{1} |\psi_{\textrm{bin}}^{(1,1)}\rangle\langle \psi_{\textrm{bin}}^{(1,1)} | \hat{N}_{1}^{\dagger} ( \hat{U}_{\textrm{rec}}^{(\textrm{odd})} )^{\dagger}  + \mathcal{O} \Big{(} (\kappa t)^{2} \Big{)}
\nonumber\\
&= (1-\kappa t)^{2}  |\psi_{\textrm{bin}}^{(1,1)}\rangle\langle \psi_{\textrm{bin}}^{(1,1)} | +  2\kappa t |\psi_{\textrm{bin}}^{(1,1)}\rangle\langle \psi_{\textrm{bin}}^{(1,1)} | + \mathcal{O} \Big{(} (\kappa t)^{2} \Big{)}  
\nonumber\\
&= |\psi_{\textrm{bin}}^{(1,1)}\rangle\langle \psi_{\textrm{bin}}^{(1,1)} | +  \mathcal{O} \Big{(} (\kappa t)^{2} \Big{)}. 
\end{align}
Here, we used Eqs.\ \eqref{eq:binomial code no excitation loss final} and \eqref{eq:binomial code single excitation loss final} to derive the third equality. Hence, the logical error probability is suppressed to the second order in $\kappa t$, i.e., 
\begin{align}
1- \langle \psi_{\textrm{bin}}^{(1,1)} | \mathcal{R}_{\textrm{bin}} \cdot \mathcal{N} ( |\psi_{\textrm{bin}}^{(1,1)}\rangle\langle \psi_{\textrm{bin}}^{(1,1)} | )|\psi_{\textrm{bin}}^{(1,1)}\rangle = \mathcal{O}\Big{(} (\kappa t)^{2} \Big{)}, 
\end{align}
for any encoded input state $|\psi_{\textrm{bin}}^{(1,1)}\rangle \in \mathcal{C}_{\textrm{bin}}^{(1,1)}$.

\subsubsection{Autonomous QEC of the $(1,1)$-binomial code}

Instead of actively measuring the excitation number parity, we can autonomously stabilize the $(1,1)$-binomial code space by using a strong engineered dissipation \cite{Lihm2018}. More explicitly, we consider the following Lindblad master equation: 
\begin{align}
\frac{d\hat{\rho}(t)}{dt} &= \kappa_{\textrm{eng}} \sum_{j=1}^{J} \mathcal{D}[ \hat{F}_{\textrm{eng},j} ] ( \hat{\rho}(t) ) + \kappa \mathcal{D}[\hat{a}] ( \hat{\rho}(t) ). 
\end{align} 
Here, the dissipation superoperator $\mathcal{D}[\hat{A}](\hat{\rho})$ is defined as $\mathcal{D}[\hat{A}](\hat{\rho}) \equiv \hat{A}\hat{\rho}\hat{A}^{\dagger}-\frac{1}{2} \lbrace \hat{A}^{\dagger}\hat{A},\hat{\rho} \rbrace$. Note that the second term on the right hand side represents the excitation loss error with a loss rate $\kappa$. We aim to protect the encoded logical information against such an excitation loss error by using an engineered dissipation which is represented in the first term on the right hand side. $\kappa_{\textrm{eng}}$ is the engineered dissipation rate which ideally has to be much larger than the natural dissipation rate $\kappa$. Also, $\hat{F}_{\textrm{eng},j}$ is the engineered jump operator which should be designed carefully such that the natural loss errors can be corrected. 

Recall that the active QEC of the $(1,1)$-binomial code consists of a parity measurement followed by a recovery unitary operation. In particular, conditioned on measuring the odd parity, we apply the recovery unitary $\hat{U}_{\textrm{rec}}^{(\textrm{odd})}$ that maps the error states $|0_{\textrm{bin},e}^{(1,1)}\rangle = |3\rangle$ and $|1_{\textrm{bin},e}^{(1,1)}\rangle = |1\rangle$ back to the code states $|0_{\textrm{bin}}^{(1,1)}\rangle = \frac{1}{\sqrt{2}} ( |0\rangle + |4\rangle )$ and $|1_{\textrm{bin}}^{(1,1)}\rangle = |2\rangle$ (see Eq.\ \eqref{eq:binomial code recovery unitary odd}). In autonomous QEC, one does not need to perform the excitation parity measurement which should followed by a recovery unitary. Instead, one can implement the following engineered jump operator to correct for the single-excitation loss.   
\begin{align}
\hat{F}_{\textrm{eng},1} &= |0_{\textrm{bin}}^{(1,1)}\rangle\langle 0_{\textrm{bin},e}^{(1,1)}| + |1_{\textrm{bin}}^{(1,1)}\rangle\langle 1_{\textrm{bin},e}^{(1,1)}| 
\nonumber\\
&= \frac{1}{\sqrt{2}} ( |0\rangle +|4\rangle )\langle 3| + |2\rangle\langle 1| , \label{eq:binomial code corrective jump}
\end{align}
Note that this engineered jump operator maps the error states back to the code states similarly to the recovery unitary operation $\hat{U}_{\textrm{rec}}^{(\textrm{odd})}$. Importantly, the engineered jump operator $\hat{F}_{\textrm{eng},1}$ is triggered only when the system is in the error space, i.e., $\textrm{span}\lbrace |1\rangle, |3\rangle \rbrace$. Hence, the excitation number parity measurement is not needed in the case of autonomous QEC because the engineered jump operator $\hat{F}_{\textrm{eng},1}$ can stay turned on continuously regardless of the parity of the system.    

Similarly to the case of active QEC, we might be tempted to think that the engineered jump operator in Eq.\ \eqref{eq:binomial code corrective jump} is the only thing that we need. However, this is not true because the no-jump evolution $e^{-\frac{\kappa t}{2}\hat{n}}$ can populate the residual state $|\phi\rangle = \frac{1}{\sqrt{2}}( |0\rangle -|4\rangle )$ even when the system did not lose any excitations (see Eq.\ \eqref{eq:binomial code no-jump evolution populates the residual state}). In the Lindbladian picture, this is due to the non-trivial effects of the back-action term $-\frac{1}{2}\lbrace \hat{a}^{\dagger}\hat{a},\hat{\rho}(t) \rbrace$ in the superoperator $\mathcal{D}[\hat{a}] ( \hat{\rho}(t) )$. In the case of active QEC, the non-trivial effects of the no-jump evolution (or the back-action) are countered by applying a recovery unitary operation $\hat{U}_{\textrm{rec}}^{(\textrm{even})}$ conditioned on measuring the even parity (see Eq.\ \eqref{eq:binomial code recovery unitary even}). In the case of autonomous QEC, one can counter the adverse effects of the no-jump evolution simply by emptying the population in the residual state $|\phi\rangle$. That is, any engineered jump operator of the following form would work: 
\begin{align}
\hat{F}_{\textrm{eng},2} &= |\Phi\rangle\langle \phi| , \label{eq:binomial code preventive jump} 
\end{align}     
where $|\Phi\rangle$ is a state in the relevant physical Hilbert space $\mathcal{H} = \textrm{span}\lbrace |0\rangle,\cdots,|4\rangle \rbrace$ which is perpendicular to the residual state $|\phi\rangle =  \frac{1}{\sqrt{2}} ( |0\rangle - |4\rangle )$, i.e., $\langle \Phi|\phi\rangle =0$. Since the jump operator $\hat{F}_{\textrm{eng},2}$ is triggered only when the system has a non-zero population in the residual state $|\phi\rangle$, it can be turned on continuously regardless of the parity of the system. Thus, active excitation number parity measurement is not needed. 

Compared to the active QEC, autonomous QEC has more flexibility in dealing with the no-jump evolution term. In the case of active QEC, the recovery unitary operation $\hat{U}_{\textrm{rec}}^{(\textrm{even})}$ has to be fine-tuned: That is, the rotation angle between the states $|0_{\textrm{bin}}^{(1,1)}\rangle =  \frac{1}{\sqrt{2}}( |0\rangle+|4\rangle )$ and $|\phi\rangle = \frac{1}{\sqrt{2}} (|0\rangle -|4\rangle)$ has to be precisely $\kappa t + o(\kappa t)$. Hence, a precise knowledge of the loss rate $\kappa$ is needed in the case of active QEC. On the other hand, in the case of autonomous QEC, precise knowledge of the loss rate $\kappa$ is not needed. Moreover, $|\Phi\rangle\in \textrm{span}\lbrace |0\rangle, \cdots, |4\rangle \rbrace$ in Eq.\ \eqref{eq:binomial code preventive jump} can be chosen arbitrarily as long as it is perpendicular to the residual state $|\phi\rangle$. This flexibility can be used to make the experimental implementation more feasible. For instance, by choosing $|\Phi\rangle = |2\rangle$, we have 
\begin{align}
\hat{F}_{\textrm{eng},2} &= \frac{1}{\sqrt{2}}|2\rangle ( \langle 0| - \langle 4| ). 
\end{align}
In this case, at most two-excitation exchanges (i.e., $|0\rangle \leftrightarrow |2\rangle$ and $|2\rangle \leftrightarrow |4\rangle$) are needed. For other choices of $|\Phi\rangle$, more than three-excitation exchanges are needed to empty the residual state $|\phi\rangle = \frac{1}{\sqrt{2}}( |0\rangle - |4\rangle )$. 

At glance, it might seem that the jump operator $\hat{F}_{\textrm{eng},2} = \frac{1}{\sqrt{2}}|2\rangle ( \langle 0| - \langle 4| )$ will cause a logical bit-flip error because it maps the residual state $\frac{1}{\sqrt{2}}( |0\rangle-|4\rangle)$, which is derived via the back-action term from the logical zero state $\frac{1}{\sqrt{2}}( |0\rangle+|4\rangle)$, to the logical one state $|2\rangle$. However, this is not the case because the back-action terms merely induce an undesirable coherence between the logical zero state and the residual state to the first order. In other words, the back-action terms do not immediately cause any population transfer to the residual state to the first order. Such a population transfer is a second order effect in the back-action terms. Therefore, the only role of the secondary jump operator $\hat{F}_{\textrm{eng},2}$ is to prevent any population transfer to the residual state and thus any secondary jump operator that empties the residual state suffices. See Ref.\ \cite{Lihm2018} for more details on the flexibility of the secondary jump operator $\hat{F}_{\textrm{eng},2}$ and the performance of the autonomous QEC of the $(1,1)$-binomial code. We remark that active QEC of the $(1,1)$-binomial code was demonstrated experimentally in Ref.\ \cite{Hu2019} and autonomous QEC of the $(1,1)$-binomial code was demonstrated experimentally in Ref.\ \cite{Ma2019a}.

\subsubsection{Generalization to higher-order error correction} 

As discussed above, the $(1,1)$-binomial code can correct single-excitation loss errors. In Ref.\ \cite{Michael2016}, it was shown that the $(1,1)$-binomial code can be generalized to a higher order so that the logical error probability can be suppressed to a higher order than $\mathcal{O}\big{(} (\kappa t)^{2} \big{)}$. More explicitly, we can define the logical states of the $(N,S)$-binomial code as follows:
\begin{align}
|0_{\textrm{bin}}^{(N,S)}\rangle &= \frac{1}{\sqrt{2^{N}}} \sum_{ p\textrm{ even} }^{[0,N+1]} \sqrt{ \binom{N+1}{p} } |p(S+1)\rangle, 
\nonumber\\
|1_{\textrm{bin}}^{(N,S)}\rangle &= \frac{1}{\sqrt{2^{N}}} \sum_{ p\textrm{ odd} }^{[0,N+1]} \sqrt{ \binom{N+1}{p} } |p(S+1)\rangle. 
\end{align}  
Here, $\binom{N+1}{p}$ is the binomial coefficient and this is the reason why this code family is referred to as the binomial code. Ref.\ \cite{Michael2016} showed that the $(L,L)$-binomial code can correct any $\ell$-excitation loss errors if $\ell \le L$. 

For instance, the $(2,2)$-binomial code can correct two-excitation loss errors and the logical states of the $(2,2)$-binomial code are explicitly given by 
\begin{align}
|0_{\textrm{bin}}^{(2,2)}\rangle &= \frac{|0\rangle + \sqrt{3}|6\rangle  }{2}, 
\nonumber\\
|1_{\textrm{bin}}^{(2,2)}\rangle &=\frac{\sqrt{3}|3\rangle + |9\rangle  }{2} . 
\end{align}
Since the logical states of the $(2,2)$-binomial code has $0$ excitations modulo $3$, the $(2,2)$-binomial code is stabilized by the $120\degree$ phase rotation $\hat{\Pi}_{3} \equiv e^{i\frac{\pi}{3}\hat{n}}$. Hence, active QEC of the $(2,2)$-binomial code can be implemented by measuring the excitation number modulo $3$, or equivalently the stabilizer $\hat{\Pi}_{3} \equiv e^{i\frac{\pi}{3}\hat{n}}$, and then apply an appropriate recovery unitary operation conditioned on the parity measurement outcome.  

We also remark that there are multi-mode variants of the binomial code based on $\chi^{(2)}$ non-linear interactions \cite{Niu2018,Niu2018b}. 

\subsection{Logical gates on rotation-symmetric bosonic codes}
\label{subsection:Logical gates on rotation-symmetric bosonic codes}

Certain logical gates on rotation-symmetric codes can be straightforwardly constructed by taking advantage of the rotation-symmetric structure. Consider, for instance, an even-parity bosonic code that is invariant under the $180\degree$ phase rotation. Then, the computational basis states have the following parity structure: 
\begin{align}
|0_{\textrm{even codes}}\rangle &: 0\textrm{ excitations modulo }4, 
\nonumber\\ 
|1_{\textrm{even codes}}\rangle &: 2\textrm{ excitations modulo }4. 
\end{align}    
Then, as shown in Ref.\ \cite{Grimsmo2019}, we can implement the logical Z, phase S, and T gates on these even-parity codes by using the following operations: 
\begin{align}
\hat{Z}_{\textrm{even codes}} &= \exp\Big{[} i\frac{\pi}{2} \hat{n} \Big{]} = \sum_{n=0}^{\infty} |n\rangle\langle n| \times \begin{cases}
1 & n=0\textrm{ mod }4 \\
-1 & n=2\textrm{ mod }4
\end{cases}, 
\nonumber\\
\hat{S}_{\textrm{even codes}} &= \exp\Big{[} i\frac{\pi}{8} \hat{n}^{2} \Big{]} = \sum_{n=0}^{\infty} |n\rangle\langle n| \times \begin{cases}
1 & n=0\textrm{ mod }4 \\
i & n=2\textrm{ mod }4
\end{cases},  
\nonumber\\
\hat{T}_{\textrm{even codes}} &= \exp\Big{[} i\frac{\pi}{64} \hat{n}^{4} \Big{]} = \sum_{n=0}^{\infty} |n\rangle\langle n| \times \begin{cases}
1 & n=0\textrm{ mod }4 \\
\exp[ i\frac{\pi}{4} ] & n=2\textrm{ mod }4
\end{cases}. 
\end{align}
Note that these operations impart desired phases on the logical code space of even-parity codes: 
\begin{align}
\hat{Z}_{\textrm{even codes}} &\rightarrow \begin{bmatrix}
1 & 0\\
0 & -1
\end{bmatrix}, \quad \hat{S}_{\textrm{even codes}} \rightarrow \begin{bmatrix}
1 & 0\\
0 & i
\end{bmatrix} , \quad \hat{T}_{\textrm{even codes}} \rightarrow \begin{bmatrix}
1 & 0\\
0 & e^{i\frac{\pi}{4}}
\end{bmatrix} . 
\end{align} 
These operations are special cases of single-qubit rotations along the Z axis. Similarly, we can implement the logical controlled-Z operation by using the following controlled-rotation operation: 
\begin{align}
\textrm{CZ}_{\textrm{even codes}} = \exp\Big{[}  i \frac{\pi}{2}\hat{n}_{1}\hat{n}_{2} \Big{]} . 
\end{align}
We can see that this controlled-rotation operation imparts the desired phase on the even-parity code space, i.e., 
\begin{align}
\exp\Big{[}  i \frac{\pi}{2} n_{1}n_{2} \Big{]} &= \begin{cases}
1 & (n_{1},n_{2}) = (0,0)\textrm{ mod }4 \\
1 & (n_{1},n_{2}) = (0,2)\textrm{ mod }4\\
1 & (n_{1},n_{2}) = (2,0)\textrm{ mod }4\\
-1 & (n_{1},n_{2}) = (2,2)\textrm{ mod }4
\end{cases}
\end{align}

In circuit QED systems, we can directly use a SNAP gate \cite{Krastanov2015} to implement any desired single-logical-qubit rotation along the Z axis. Recall that a general SNAP gate imparts an arbitrary phase on each excitation number state: 
\begin{align}
\hat{U}_{\textrm{SNAP}} (\vec{\theta}) &= \sum_{n=0}^{\infty} e^{i\theta_{n}}|n\rangle\langle n| ,  
\end{align} 
where $\theta_{n}$ is the number-dependent phase. Then, to implement any logical operation of the form 
\begin{align}
\begin{bmatrix}
1 & 0 \\
0 & e^{i\theta}
\end{bmatrix}, 
\end{align}
one can simply use a SNAP gate with 
\begin{align}
\theta_{n} = \begin{cases}
0 & n=0\textrm{ mod }4 \\
\theta & n=2\textrm{ mod }4
\end{cases} . 
\end{align}

While single-logical-qubit rotations along the Z axis on the even-parity codes can be readily realized by using unitary operations that are diagonal in the excitation number basis, implementing the logical Hadamard operation, i.e., 
\begin{align}
\frac{1}{\sqrt{2}}\begin{bmatrix}
1 & 1 \\
1 & -1
\end{bmatrix} , 
\end{align}
on the even-parity codes is relatively more challenging. Ref.\ \cite{Grimsmo2019} proposed a teleportation-based method for implementing the logical Hadamard gate. However, the scheme requires an additional bosonic mode encoded in an even-parity code and the use of a Pauli frame \cite{Knill2005,DiVincenzo2007,Terhal2015,Chamberland2018} to keep track of an undesired Pauli X operation that occurs with $50\%$ probability during the teleportation.         

\section{Translation-symmetric bosonic codes}
\label{section:Translation-symmetric bosonic codes}

In this section, we review the Gottesman-Kitaev-Preskill (GKP) codes \cite{Gottesman2001,Harrington2001,Harrington2004}, which are invariant under a discrete set of translations.

\subsection{The square-lattice Gottesman-Kitaev-Preskill (GKP) code} 
\label{subsection:The square-lattice Gottesman-Kitaev-Preskill (GKP) code}

GKP codes are designed to correct random shift errors in the phase space. Since shift errors can occur both in the position and the momentum directions, it is essential to measure both the position and the momentum operators to correct for the random shift errors. However, since the position and the momentum operators do not commute with each other (i.e., $[\hat{q},\hat{p}] = i$), they cannot be measured simultaneously, as implied by the Heisenberg uncertainty principle. The key idea behind the design of the square-lattice GKP code is that the following two displacement operators commute with each other nevertheless: 
\begin{align}
\hat{S}_{q} &= e^{i2\sqrt{\pi}\hat{q}}, \quad \hat{S}_{p} = e^{-i2\sqrt{\pi}\hat{p}}.  
\end{align}
Thus, these two displacement operators can be measured simultaneously. Measuring the displacement operator $\hat{S}_{q} = e^{i2\sqrt{\pi}\hat{q}}$ (or $\hat{S}_{p} = e^{-i2\sqrt{\pi}\hat{p}}$) is equivalent to measuring its phase angle $2\sqrt{\pi}\hat{q}$ (or $-2\sqrt{\pi}\hat{p}$) modulo $2\pi$. Hence, the commutativity of the two displacement operators $\hat{S}_{q}$ and $\hat{S}_{p}$ implies that we can simultaneously measure both the position and the momentum operators $\hat{q}$ and $\hat{p}$ modulo $\sqrt{\pi}$. In this subsection, we provide a detailed review of the square-lattice GKP states. See Table \ref{table:square-lattice GKP code} for a summary.

\begin{table}[t!]
  \centering
     \def\arraystretch{1.5}
  \begin{tabular}{ V{3} c V{1.5} c V{3} }
   \hlineB{3}  
     & \textbf{The square-lattice GKP code} \cite{Gottesman2001}  \\  \hlineB{3} 
     Logical states & $|0_{\textrm{gkp}}^{(\textrm{sq})}\rangle = \sum_{n\in\mathbb{Z}} |\hat{q} = 2n\sqrt{\pi} \rangle $   \\
     &  $|1_{\textrm{gkp}}^{(\textrm{sq})}\rangle = \sum_{n\in\mathbb{Z}} |\hat{q} = (2n+1)\sqrt{\pi} \rangle  $  \\ \hlineB{1.5}   
    Correctable errors & Random shift errors in the phase space   \\
    & $e^{i( \xi_{p}\hat{q} - \xi_{q}\hat{p}) }$ with $|\xi_{q}|,|\xi_{p}| < \frac{\sqrt{\pi}}{2}$ \\     \hlineB{1.5}
    Stabilizers & $\hat{S}_{q} = e^{i2\sqrt{\pi} \hat{q} }$ \\
     & $\hat{S}_{p} = e^{-i2\sqrt{\pi} \hat{p} }$ \\ \hlineB{1.5}
     Logical Pauli operations & $\hat{Z}_{\textrm{gkp}}  = (\hat{S}_{q})^{\frac{1}{2}} = e^{i\sqrt{\pi} \hat{q} }$ \\
    & $\hat{X}_{\textrm{gkp}}  = (\hat{S}_{p})^{\frac{1}{2}} = e^{-i\sqrt{\pi} \hat{p} }$ \\ \hlineB{1.5}
    Logical Pauli measurement & Homodyne measurement of a quadrature operator  \\ \hlineB{1.5}
    Logical Clifford operations & $\hat{S}_{\textrm{gkp}}  = e^{ i \hat{q}^{2}/2 }$ \\
    & $\hat{H}_{\textrm{gkp}}  = e^{ i (\pi/2)\hat{a}^{\dagger}\hat{a} }$ \\
    & $\textrm{CNOT}_{\textrm{gkp}}^{j\rightarrow k}  = \textrm{SUM}_{j\rightarrow k} = e^{-i\hat{q}_{j}\hat{p}_{k}}$ \\ \hlineB{1.5}
    Active QEC & Measurement of the stabilizers $\hat{S}_{q}$ and $\hat{S}_{p}$  \\ \hdashline
    Experiments & Refs.\ \cite{Fluhmann2018,Fluhmann2019,Fluhmann2019b,Campagne2019}  \\  \hlineB{1.5}     
    Magic state preparation & Stabilizer measurements on the vacuum state $|0\rangle$ \cite{Terhal2016,Baragiola2019} \\  \hlineB{3}     
  \end{tabular}
  \caption{Basic properties of the square-lattice GKP code. }
  \label{table:square-lattice GKP code}
\end{table}

\subsubsection{Stabilizers of the square-lattice GKP code}

The square-lattice GKP code $\mathcal{C}_{\textrm{gkp}}^{(\textrm{sq})}$ is defined as the space of the state vectors $|\psi_{\textrm{gkp}}^{(\textrm{sq})}\rangle$ that are stabilized by the two stabilizers $\hat{S}_{q}$ and $\hat{S}_{p}$, i.e., 
\begin{align}
\hat{S}_{q}|\psi_{\textrm{gkp}}^{(\textrm{sq})}\rangle = \hat{S}_{p}|\psi_{\textrm{gkp}}^{(\textrm{sq})}\rangle = |\psi_{\textrm{gkp}}^{(\textrm{sq})}\rangle, \,\,\, \textrm{for all}\,\,\, |\psi_{\textrm{gkp}}^{(\textrm{sq})}\rangle \in \mathcal{C}_{\textrm{gkp}}^{(\textrm{sq})}. 
\end{align}  
Since $|\psi_{\textrm{gkp}}^{(\textrm{sq})}\rangle$ is stabilized by $\hat{S}_{q}$ and $\hat{S}_{p}$, it satisfies $\hat{q} = \hat{p} =0$ mod $\sqrt{\pi}$. The square-lattice GKP code $\mathcal{C}_{\textrm{gkp}}^{(\textrm{sq})}$ is two-dimensional and in the computational basis, the two logical states are explicitly given by 
\begin{align}
|0_{\textrm{gkp}}^{(\textrm{sq})}\rangle &= \sum_{n\in\mathbb{Z}} |\hat{q} = 2n\sqrt{\pi}\rangle, 
\nonumber\\
|1_{\textrm{gkp}}^{(\textrm{sq})}\rangle &= \sum_{n\in\mathbb{Z}} |\hat{q} = (2n+1)\sqrt{\pi}\rangle.  
\end{align}
These logical states are visualized in Fig.\ \ref{fig:wigner functions square-lattice GKP code}.

\begin{figure}[t!]
\centering
\includegraphics[width=6.0in]{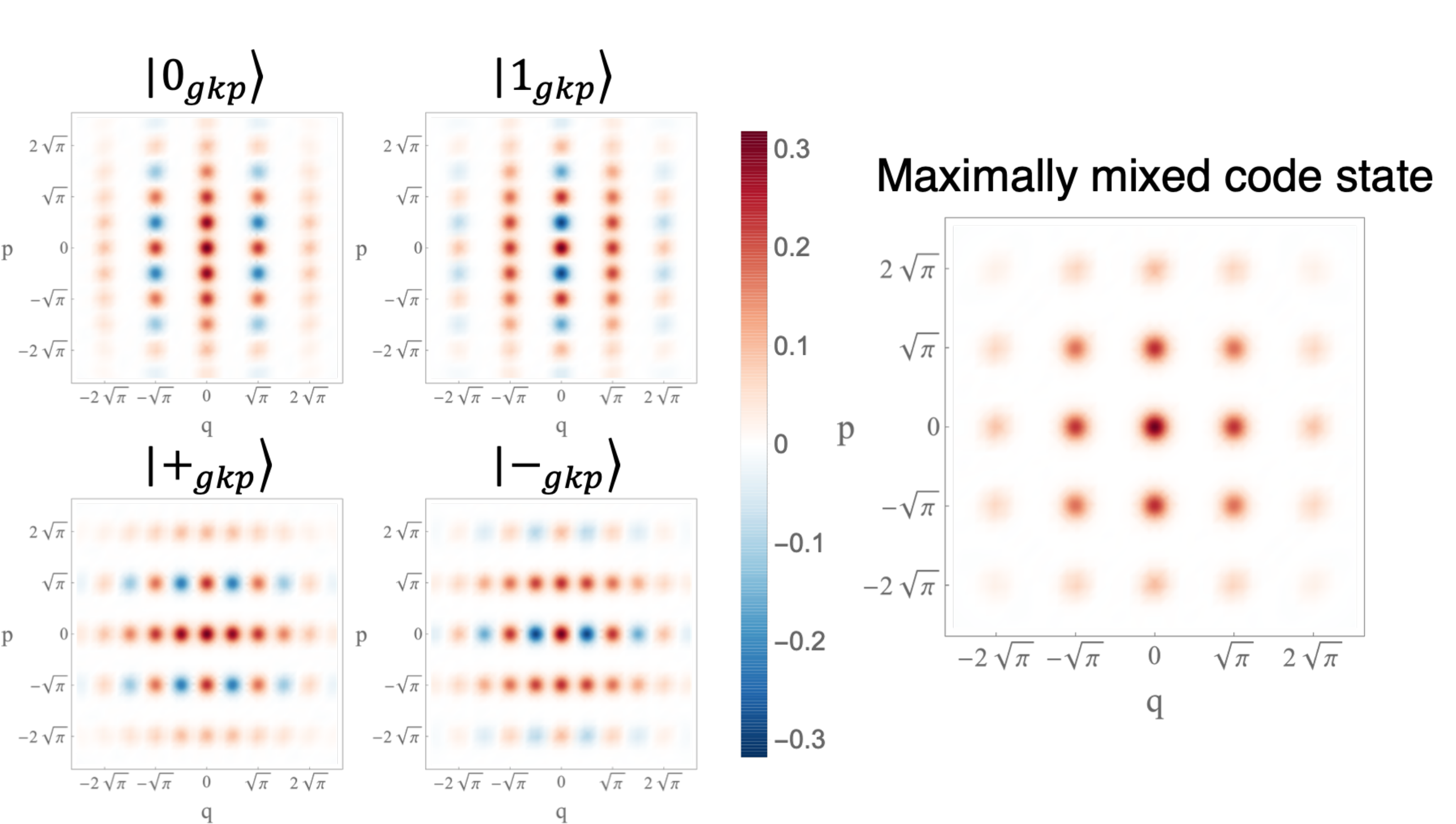}
\caption{Wigner functions of the logical states of the square-lattice GKP code $\mathcal{C}_{\textrm{gkp}}^{(\textrm{sq})}$ with an average photon number $\bar{n}=5$. The maximally mixed code state is defined as the projection operator to the code space divided by $2$. }
\label{fig:wigner functions square-lattice GKP code}
\end{figure}

In the computational basis, it is apparent that the logical states satisfy $\hat{q}=0$ mod $\sqrt{\pi}$ and thus are stabilized by $\hat{S}_{q}$. Also, one can explicitly see 
\begin{align}
\hat{S}_{q}|\mu_{\textrm{gkp}}^{(\textrm{sq})}\rangle &= e^{i2\sqrt{\pi}\hat{q}} \sum_{n\in\mathbb{Z}} |\hat{q} = (2n+\mu)\sqrt{\pi}\rangle 
\nonumber\\
&= \sum_{n\in\mathbb{Z}} e^{i2\pi(2n+\mu)}  |\hat{q} = (2n+\mu)\sqrt{\pi}\rangle  = \sum_{n\in\mathbb{Z}}  |\hat{q} = (2n+\mu)\sqrt{\pi}\rangle = |\mu_{\textrm{gkp}}^{(\textrm{sq})}\rangle. 
\end{align}
While it is less apparent, these logical states also satisfy $\hat{p}=0$ mod $\sqrt{\pi}$ and thus are also stabilized by $\hat{S}_{p}$. Indeed, one can explicitly confirm that  
\begin{align}
\hat{S}_{p} |\mu_{\textrm{gkp}}^{(\textrm{sq})}\rangle &= e^{-i2\sqrt{\pi}\hat{p}} \sum_{n\in\mathbb{Z}} |\hat{q} = (2n+\mu)\sqrt{\pi}\rangle 
\nonumber\\
&= \sum_{n\in\mathbb{Z}} |\hat{q} = (2(n+1)+\mu)\sqrt{\pi}\rangle = \sum_{n\in\mathbb{Z}} |\hat{q} = (2n+\mu)\sqrt{\pi}\rangle = |\mu_{\textrm{gkp}}^{(\textrm{sq})}\rangle, 
\end{align}
for all $\mu \in \lbrace 0,1 \rbrace$. 

In the complementary basis, logical states of the square-lattice GKP code are given by 
\begin{align}
|+_{\textrm{gkp}}^{(\textrm{sq})}\rangle &\equiv \frac{1}{\sqrt{2}}( |0_{\textrm{gkp}}^{(\textrm{sq})}\rangle + |1_{\textrm{gkp}}^{(\textrm{sq})}\rangle ) =  \sum_{n\in\mathbb{Z}} |\hat{p} = 2n\sqrt{\pi}\rangle, 
\nonumber\\
|-_{\textrm{gkp}}^{(\textrm{sq})}\rangle &\equiv \frac{1}{\sqrt{2}}( |0_{\textrm{gkp}}^{(\textrm{sq})}\rangle - |1_{\textrm{gkp}}^{(\textrm{sq})}\rangle ) = \sum_{n\in\mathbb{Z}} |\hat{p} = (2n+1)\sqrt{\pi}\rangle.  
\end{align}
In this basis, it is more apparent that the logical states satisfy $\hat{p}=0$ mod $\sqrt{\pi}$ and thus are stabilized by $\hat{S}_{p}$. While it is less apparent, $|\pm_{\textrm{gkp}}^{(\textrm{sq})}\rangle$ is also stabilized by $\hat{S}_{q}$ (hence satisfying $\hat{q}=0$ mod $\sqrt{\pi}$) because $|\pm_{\textrm{gkp}}^{(\textrm{sq})}\rangle$ is a linear combination of $|0_{\textrm{gkp}}^{(\textrm{sq})}\rangle$ and $|1_{\textrm{gkp}}^{(\textrm{sq})}\rangle$ which are stabilized by $\hat{S}_{q}$. 

\subsubsection{Logical Pauli operators of the square-lattice GKP code}

Logical Pauli operators on the square-lattice GKP code can be readily implemented by using displacement operations. More specifically, the logical Z and X operators on the square-lattice GKP code are given by the square root of the stabilizers $\hat{S}_{q}$ and $\hat{S}_{p}$, i.e.,
\begin{align}
\hat{Z}_{\textrm{gkp}} &= ( \hat{S}_{q} )^{\frac{1}{2}}  = e^{i\sqrt{\pi}\hat{q}}, 
\nonumber\\
\hat{X}_{\textrm{gkp}} &= ( \hat{S}_{p} )^{\frac{1}{2}}  = e^{-i\sqrt{\pi}\hat{p}}. 
\end{align}  
Indeed, one can explicitly check that these displacement operators act as a logical Pauli operation on the square-lattice GKP code in the computational basis: 
\begin{align}
\hat{Z}_{\textrm{gkp}} |\mu_{\textrm{gkp}}^{(\textrm{sq})}\rangle &= e^{i\sqrt{\pi}\hat{q}} \sum_{n\in\mathbb{Z}} |\hat{q} = (2n+\mu)\sqrt{\pi}\rangle 
\nonumber\\
&= \sum_{n\in\mathbb{Z}} e^{i\pi(2n+\mu)}  |\hat{q} = (2n+\mu)\sqrt{\pi}\rangle = (-1)^{\mu} |\mu_{\textrm{gkp}}^{(\textrm{sq})}\rangle, 
\nonumber\\
\hat{X}_{\textrm{gkp}} |\mu_{\textrm{gkp}}^{(\textrm{sq})}\rangle &= e^{-i\sqrt{\pi}\hat{p}} \sum_{n\in\mathbb{Z}} |\hat{q} = (2n+\mu)\sqrt{\pi}\rangle 
\nonumber\\
&= \sum_{n\in\mathbb{Z}}  |\hat{q} = (2n+\mu+1)\sqrt{\pi}\rangle =  |(\mu\oplus 1)_{\textrm{gkp}}^{(\textrm{sq})}\rangle, 
\end{align}
for all $\mu\in\lbrace 0,1 \rbrace$ as desired. Here, $\oplus$ is the addition modulo $2$. 

Note that Pauli measurements for the square-lattice GKP code can be implemented by using a homodyne measurement. For instance, the Pauli Z measurement on the GKP code can be implemented by performing a homodyne measurement of the position quadrature $\hat{q}$. Similarly, the Pauli X measurement can be done by performing a homodyne measurement of the momentum quadrature $\hat{p}$. In the case of the Pauli Z measurement, the logical zero (one) state should ideally yield a measurement outcome that is an even (odd) integer multiple of $\sqrt{\pi}$. However, if the measurement outcome is noisy, the measurement outcome may take a value that is not precisely an integer multiple $\sqrt{\pi}$. In general, therefore, if the measurement outcome $z$ lies in the range $|z-n\sqrt{\pi}| < \frac{\sqrt{\pi}}{2}$ for some even (odd) $n$, we conclude that $\hat{Z}_{\textrm{gkp}} = 1$ ($\hat{Z}_{\textrm{gkp}} = -1$). At this point, we can already see that the GKP code has some robustness against shift errors in the phase space. We will make this even clearer below. We also remark that because the Pauli measurements may be destructive, modular quadrature measurements are not necessary.

\subsubsection{Error correction capability of the square-lattice GKP code}

Let us now discuss the error correction capability of the square-lattice GKP code. Recall that the logical states of the square-lattice GKP code are stabilized by $\hat{S}_{q} = e^{i2\sqrt{\pi}\hat{q}}$ and $\hat{S}_{p} = e^{-i2\sqrt{\pi}\hat{p}}$ and thus satisfy $\hat{q} = \hat{p} = 0$ mod $\sqrt{\pi}$. Thus, a natural way to detect errors acting on the GKP code is to measure its stabilizers $\hat{S}_{q}$ and $\hat{S}_{p}$, or equivalently, the position and the momentum quadrature operators modulo $\sqrt{\pi}$. If the modular quadrature measurement outcomes deviate from the desired result $\hat{q} = \hat{p} =  0$ mod $\sqrt{\pi}$, we can infer that there was an error. 

Since the GKP code works by measuring the quadrature operators, it is perfectly suited for correcting random shift errors in the phase space. To make the discussion more concrete, let us consider a random shift error that adds random noise to the position and the momentum quadrature operators: 
\begin{align}
\hat{q} &\rightarrow \hat{q} + \xi_{q}, 
\nonumber\\
\hat{p} &\rightarrow \hat{q} + \xi_{p} .  
\end{align}
Here, $\xi_{q}$ and $\xi_{p}$ are the position and the momentum quadrature noise, respectively, and are random variables drawn from a probability distribution. For example in the case of the Gaussian random shift error $\mathcal{N}_{B_{2}}[\sigma]$ (see Subsection \ref{subsection:Random shift errors in the phase space} for the definition and more details on $\mathcal{N}_{B_{2}}[\sigma]$), the stochastic variables $\xi_{q}$ and $\xi_{p}$ are drawn from an independent and identically distributed Gaussian distribution with zero mean and variance $\sigma^{2}$, i.e., $\xi_{q},\xi_{p} \sim_{\textrm{iid}}\mathcal{N}(0,\sigma^{2})$. 

If the square-lattice GKP code undergoes a random shift error $\boldsymbol{\xi} = (\xi_{q},\xi_{p})$, the logical states will be displaced and have $\hat{q} = \xi_{q}$ and $\hat{p} = \xi_{p}$ mod $\sqrt{\pi}$. Then, by measuring the position and the momentum operators modulo $\sqrt{\pi}$, we can extract the values of the random shifts $\xi_{q}$ and $\xi_{p}$ modulo $\sqrt{\pi}$. That is, after the stabilizer measurements, we know that $\xi_{q}$ and $\xi_{p}$ are given by 
\begin{align}
\xi_{q} &= R_{\sqrt{\pi}}( \xi_{q} ) + n_{q}\sqrt{\pi}, 
\nonumber\\
\xi_{p} &= R_{\sqrt{\pi}}( \xi_{p} ) + n_{p}\sqrt{\pi}, \label{eq:shift error candidates}
\end{align}   
for some integers $n_{q},n_{p}\in \mathbb{Z}$. Here, $R_{s}( z )$ is defined as 
\begin{align}
R_{s}(z) &\equiv z - n^{\star}s \,\,\, \textrm{where}\,\,\, n^{\star} = \textrm{argmin}_{n\in\mathbb{Z}} |z-ns| .  \label{eq:definition of the R function}
\end{align}
Equivalently, $R_{s}(z)$ is a periodic function with period $s$ such that $R_{s}(z) = z$ for $|z| < \frac{s}{2}$. Note that in Eq.\ \eqref{eq:shift error candidates}, we are given with multiple candidates of $\xi_{q}$ and $\xi_{p}$ that are compatible with the modular measurement outcome. In the maximum likelihood estimation, we decide that $\xi_{q}$ and $\xi_{p}$ are the ones with the largest likelihood (or probability) among the compatible error candidates. Typically, smaller shift errors are more likely to occur than larger shift errors. For instance, this is certainly true in the case of Gaussian random shift errors. In this case, the maximum likelihood estimation is equivalent to inferring that $\xi_{q}$ and $\xi_{p}$ are the ones with the smallest size among the compatible error candidates. Hence, assuming smaller shifts are more likely, we infer that $\xi_{q}$ and $\xi_{p}$ are given by 
\begin{align}
\bar{\xi}_{q} &= R_{\sqrt{\pi}}( \xi_{q} ), 
\nonumber\\
\bar{\xi}_{p} &= R_{\sqrt{\pi}}( \xi_{p} ).  
\end{align}  
Then, to correct for the shifts, we apply the counter displacement operations $\exp[ -i\bar{\xi}_{p}\hat{q} ]$ and $\exp[i\bar{\xi}_{q}\hat{p}]$ based on the estimates $\bar{\xi}_{q}$ and $\bar{\xi}_{p}$. 

If the shift errors $\xi_{q}$ and $\xi_{p}$ are small enough to be contained in the square $|\xi_{q}|,|\xi_{p}| < \frac{\sqrt{\pi}}{2}$, we have 
\begin{align}
\bar{\xi}_{q} &= R_{\sqrt{\pi}}( \xi_{q} ) = \xi_{q}, 
\nonumber\\
\bar{\xi}_{p} &= R_{\sqrt{\pi}}( \xi_{p} ) = \xi_{p}. 
\end{align} 
Thus in this case, the maximum likelihood estimation succeeds and we can completely remove the shift errors. Thus, any small shift errors such that $|\xi_{q}|,|\xi_{p}| < \frac{\sqrt{\pi}}{2}$ can be corrected. 

On the other hand, let us consider the case where the position shift is small $|\xi_{q}| < \frac{\sqrt{\pi}}{2}$ but the momentum shift is not, e.g., $\frac{\sqrt{\pi}}{2}<\xi_{p} < \frac{3\sqrt{\pi}}{2}$. In this case, the inferred shifts are given by 
\begin{align}
\bar{\xi}_{q} &= R_{\sqrt{\pi}}( \xi_{q} ) = \xi_{q}, 
\nonumber\\
\bar{\xi}_{p} &= R_{\sqrt{\pi}}( \xi_{p} ) = \xi_{p} - \sqrt{\pi}. 
\end{align}    
Hence, the maximum likelihood estimation fails in the momentum direction. In particular, the momentum shift error will be under-corrected by the counter displacement operation. Thus, we are left with the following overall shift 
\begin{align}
e^{i(\xi_{p}-\bar{\xi}_{p})\hat{q}} &= e^{i\sqrt{\pi}\hat{q}} = \hat{Z}_{\textrm{gkp}}   
\end{align}
and the GKP code will undergo a Pauli Z error at the end of the error correction protocol. Similarly, if the momentum shift error is small $|\xi_{p}| < \frac{\sqrt{\pi}}{2}$ but the position shift is not, e.g., $\frac{\sqrt{\pi}}{2}<\xi_{q} < \frac{3\sqrt{\pi}}{2}$, the inferred shifts are given by 
\begin{align}
\bar{\xi}_{q} &= R_{\sqrt{\pi}}( \xi_{q} ) = \xi_{q} - \sqrt{\pi}, 
\nonumber\\
\bar{\xi}_{p} &= R_{\sqrt{\pi}}( \xi_{p} ) = \xi_{p} .  
\end{align}    
Hence, the maximum likelihood estimation fails in the position direction and we under-correct the position shift error. As a result, we are left with the overall shift
\begin{align}
e^{-i(\xi_{q}-\bar{\xi}_{q})\hat{p}} &= e^{-i\sqrt{\pi}\hat{p}} = \hat{X}_{\textrm{gkp}}   
\end{align}
and the GKP code will undergo a Pauli X error at the end of the error correction protocol. One can similarly show that a Pauli Y error occurs after the error correction protocol if both $\xi_{q}$ and $\xi_{p}$ lie in the range $[\frac{\sqrt{\pi}}{2},\frac{3\sqrt{\pi}}{2}]$.  

In the most general case, given the shifts $\xi_{q}$ and $\xi_{p}$, we can find the integers $n_{q}$ and $n_{p}$ that satisfy
\begin{align}
\Big{(}n_{q} -\frac{1}{2}\Big{)}\sqrt{\pi}< \xi_{q} < \Big{(}n_{q} +\frac{1}{2}\Big{)}\sqrt{\pi}, 
\nonumber\\
\Big{(}n_{p} -\frac{1}{2}\Big{)}\sqrt{\pi}< \xi_{p} < \Big{(}n_{p} +\frac{1}{2}\Big{)}\sqrt{\pi} . 
\end{align}
In this case, $R_{\sqrt{\pi}}(\xi_{q})$ and $R_{\sqrt{\pi}}(\xi_{p})$ are given by $R_{\sqrt{\pi}}(\xi_{q}) = \xi_{q} - n_{q}\sqrt{\pi}$ and $R_{\sqrt{\pi}}(\xi_{p}) = \xi_{p} - n_{p}\sqrt{\pi}$. Then, after the syndrome measurement and the error correction, we are left with the overall shifts 
\begin{align}
e^{i(\xi_{p} - R_{\sqrt{\pi}}(\xi_{p}) )\hat{q}} &= e^{i n_{p} \sqrt{\pi}\hat{q}} = ( \hat{Z}_{\textrm{gkp}} )^{n_{p}}, 
\nonumber\\
e^{-i(\xi_{q} - R_{\sqrt{\pi}}(\xi_{q}) )\hat{p}} &= e^{-i n_{q}\sqrt{\pi} \hat{p}} = ( \hat{X}_{\textrm{gkp}} )^{n_{q}} .  
\end{align}
Thus, the GKP code will undergo the following error depending on the parity of $n_{q}$ and $n_{p}$:  
\begin{align}
\begin{cases}
\textrm{no error} & (n_{q},n_{p}) =(\textrm{even},\textrm{even}) \\
\textrm{Pauli Z error} & (n_{q},n_{p}) =(\textrm{even},\textrm{odd}) \\
\textrm{Pauli X error} & (n_{q},n_{p}) =(\textrm{odd},\textrm{even}) \\
\textrm{Pauli Y error} & (n_{q},n_{p}) =(\textrm{odd},\textrm{odd}) 
\end{cases}  . 
\end{align}
Hence, if we consider the Gaussian random shift error $\mathcal{N}_{B_{2}}[\sigma]$ with the noise standard deviation $\sigma$, the success probability of the error correction protocol is given by 
\begin{align}
p_{\textrm{succ}}^{(\textrm{sq})}(\sigma) &= \sum_{n_{q},n_{p} \in 2\mathbb{Z} } \int_{(n_{q}-\frac{1}{2})\sqrt{\pi}}^{(n_{q}+\frac{1}{2})\sqrt{\pi}}d\xi_{q}\int_{(n_{p}-\frac{1}{2})\sqrt{\pi}}^{(n_{p}+\frac{1}{2})\sqrt{\pi}}d\xi_{p} \frac{1}{2\pi\sigma^{2}} \exp\Big{[} -\frac{\xi_{q}^{2} + \xi_{p}^{2} }{2\sigma^{2}} \Big{]} , 
\end{align}
where $2\mathbb{Z}$ is the set of even integers. The failure probability $p_{\textrm{fail}}^{(\textrm{sq})}(\sigma)$ is then defined as $p_{\textrm{fail}}^{(\textrm{sq})}(\sigma) = 1- p_{\textrm{succ}}^{(\textrm{sq})}(\sigma)$. 

\begin{figure}[t!]
\centering
\includegraphics[width=5.0in]{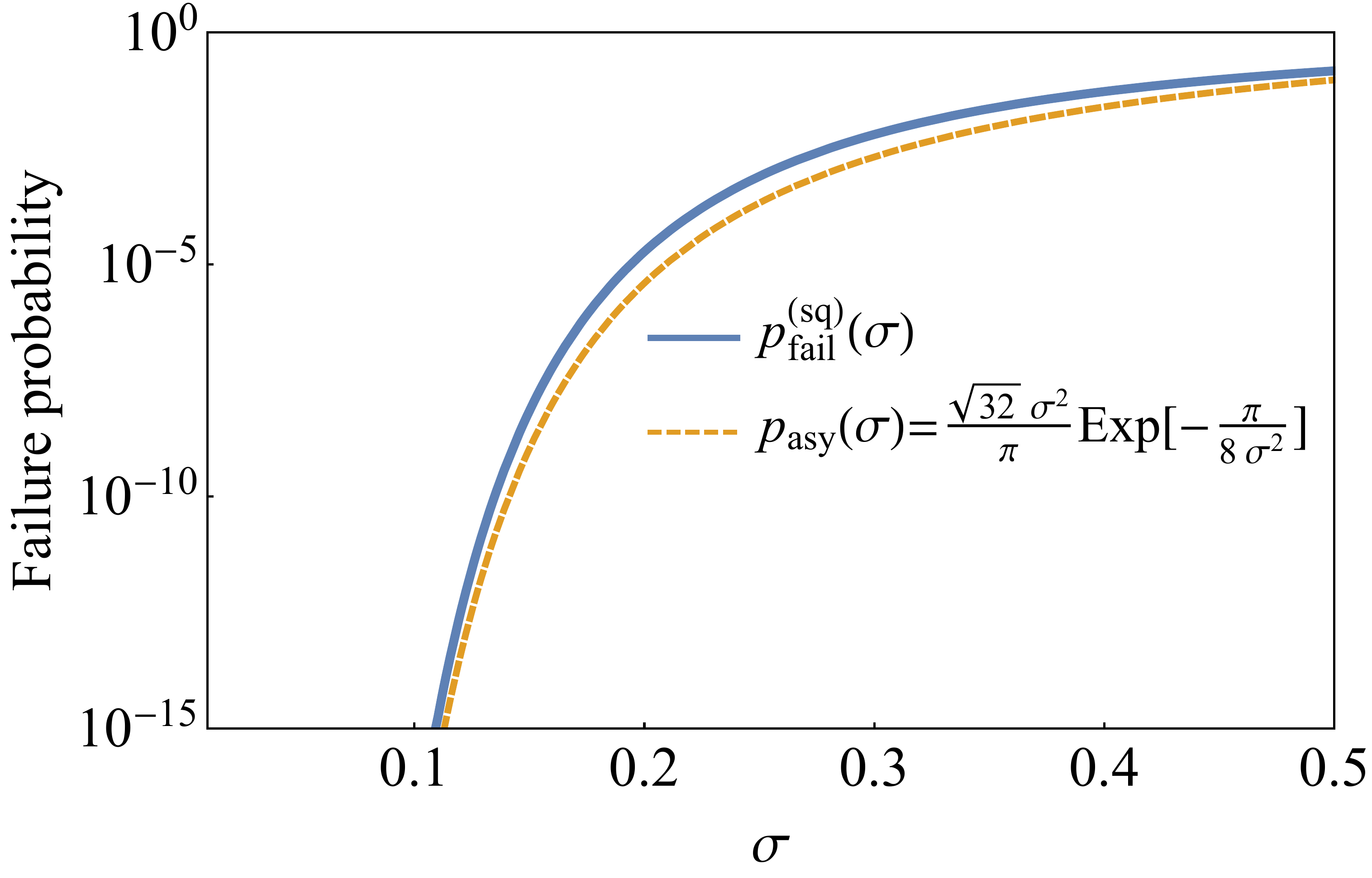}
\caption{Failure probability of the square-lattice GKP code subject to the Gaussian random shift error $\mathcal{N}_{B_{2}}[\sigma]$, i.e., $p_{\textrm{fail}}^{(\textrm{sq})}(\sigma) \equiv 1- p_{\textrm{succ}}^{(\textrm{sq})}(\sigma)$ (solid blue line). The asymptotic expression $p_{\textrm{asy}}(\sigma) = \frac{\sqrt{32}\sigma}{\pi} \exp[ -\frac{\pi}{8\sigma^{2}} ]$ is represented by the dashed orange line. Note that the asymptotic expression agrees well with the exact result in the $\sigma \ll \sqrt{\pi}$ limit.  }
\label{fig:square lattice GKP code failure probability}
\end{figure}

An important spacial case is when the noise standard deviation $\sigma$ is much smaller than the spacing of the square-lattice GKP code $\sqrt{\pi}$ (i.e., $\sigma \ll \sqrt{\pi}$). In this case, the $(n_{q},n_{p}) = (0,0)$ term dominates and we have 
\begin{align}
p_{\textrm{succ}}^{(\textrm{sq})}(\sigma) &\xrightarrow{\sigma \ll \sqrt{\pi} }  \int_{-\frac{\sqrt{\pi}}{2} }^{ \frac{\sqrt{\pi}}{2} }d\xi_{q} \frac{1}{\sqrt{2\pi\sigma^{2}}} \exp\Big{[} -\frac{\xi_{q}^{2} }{2\sigma^{2}} \Big{]}  \int_{-\frac{\sqrt{\pi}}{2} }^{ \frac{\sqrt{\pi}}{2} } d\xi_{p} \frac{1}{\sqrt{ 2\pi\sigma^{2}} } \exp\Big{[} -\frac{ \xi_{p}^{2} }{2\sigma^{2}} \Big{]} 
\nonumber\\
&= \Big{[} \textrm{erf}\Big{(}  \frac{\sqrt{ \pi }}{ \sqrt{8} \sigma} \Big{)} \Big{]}^{2}  \xrightarrow{\sigma \ll \sqrt{\pi} } \Big{[} 1- \frac{\sqrt{8}\sigma}{\pi} \exp\Big{[} -\frac{\pi}{8\sigma^{2}} \Big{]} \Big{]}^{2} \xrightarrow{\sigma \ll \sqrt{\pi} } 1- \frac{\sqrt{32}\sigma}{\pi} \exp\Big{[} -\frac{\pi}{8\sigma^{2}} \Big{]}. 
\end{align}   
Here, $\textrm{erf}(x)$ is the error function defined as $\textrm{erf}(x) \equiv \frac{1}{\sqrt{\pi}} \int_{-x}^{x}dt e^{-t^{2}} $. To derive the third line, we used $\textrm{erfc}(x)\equiv 1-\textrm{erf}(x) \xrightarrow{x\gg 1} \frac{e^{-x^{2}}}{x\sqrt{\pi}}$. Thus, the failure probability of the square-lattice GKP code is given by 
\begin{align}
p_{\textrm{fail}}^{(\textrm{sq})}(\sigma)  &\equiv 1-p_{\textrm{succ}}^{(\textrm{sq})}(\sigma) \xrightarrow{\sigma \ll \sqrt{\pi} } \frac{\sqrt{32}\sigma}{\pi} \exp\Big{[} -\frac{\pi}{8\sigma^{2}} \Big{]}
\end{align} 
and decreases very rapidly as the noise standard deviation $\sigma$ decreases. See Fig.\ \ref{fig:square lattice GKP code failure probability} for the illustration.    

\subsubsection{Measurement of the stabilizers of the square-lattice GKP code}

Note that it is essential to measure the position and the momentum operators modulo $\sqrt{\pi}$ (or equivalently, the stabilizers $\hat{S}_{q}$ and $\hat{S}_{p}$) to achieve the excellent performance of the square-lattice GKP code shown in Fig.\ \ref{fig:square lattice GKP code failure probability}. It is relatively straightforward to measure a quadrature operator (e.g., the position or the momentum operator) via a homodyne measurement which is a Gaussian measurement. However, measurement of a quadrature operator modulo some spacing $s$ (e.g., $\hat{q}$ and $\hat{p}$ modulo $\sqrt{\pi}$) is a non-Gaussian measurement. Thus, performing a modulo quadrature measurement is more challenging than performing a homodyne measurement.    

One of many desirable properties of the GKP code is that we can measure its stabilizers by using Gaussian operations and consuming a GKP state $|0_{\textrm{gkp}}^{(\textrm{sq})}\rangle$ or $|+_{\textrm{gkp}}^{(\textrm{sq})}\rangle$. That is, preparation of a GKP state is the only non-Gaussian resource needed for implementing the stabilizer measurements. Since the required GKP states can be prepared offline and then supplied to the stabilizer measurement chain whenever they are needed, no online non-Gaussian operations (e.g., cubic phase gate \cite{Gottesman2001} or Kerr nonlinearity \cite{Lloyd1999}) are needed to implement the GKP error correction. More details on the preparation of the GKP states will be provided later in the chapter. Here, we assume that the GKP states $|0_{\textrm{gkp}}^{(\textrm{sq})}\rangle$ and $|+_{\textrm{gkp}}^{(\textrm{sq})}\rangle$ are already available and discuss how they can be used to measure the stabilizers of the GKP code.      

\begin{figure}[t!]
\centering
\includegraphics[width=3.5in]{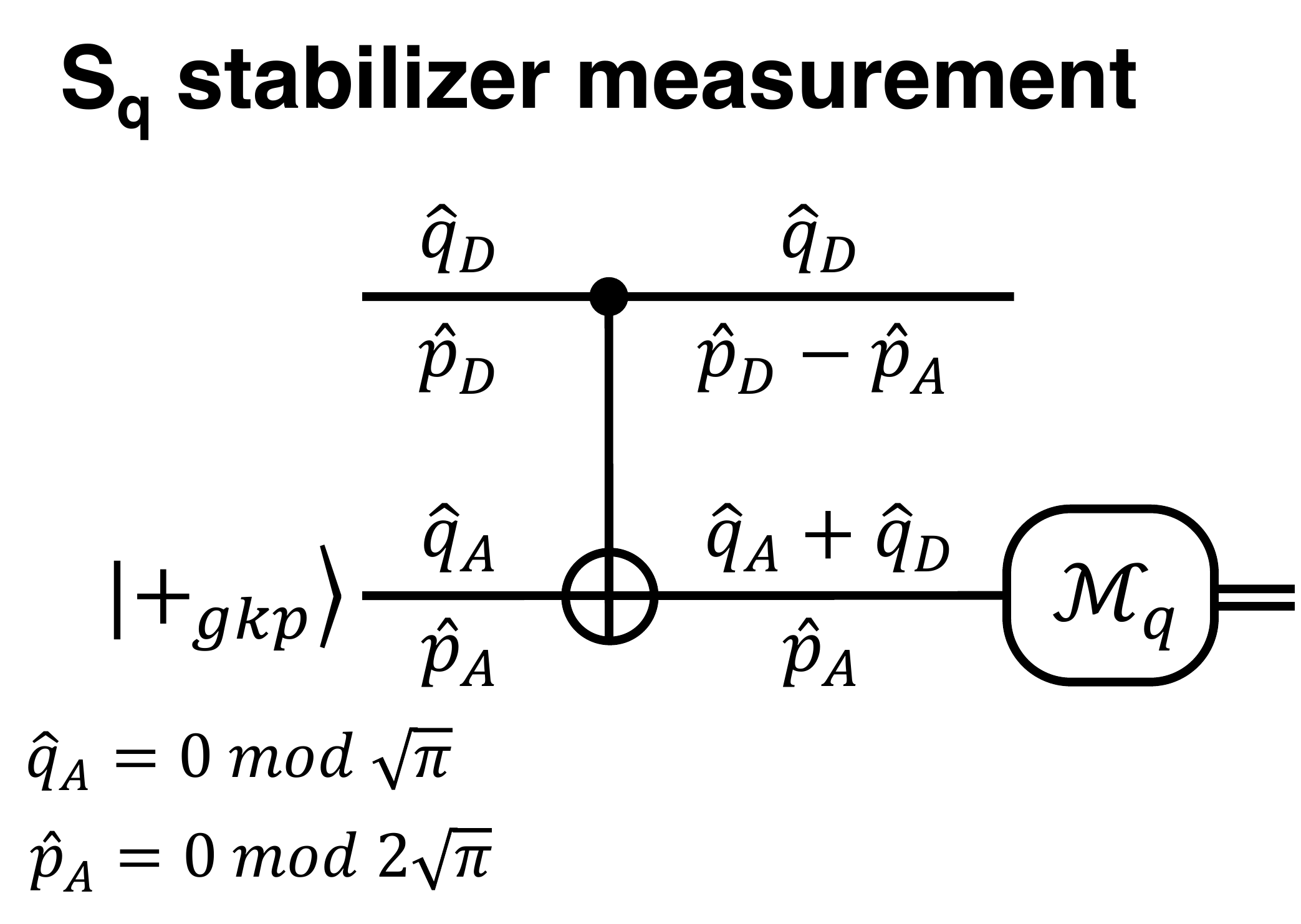}
\caption{Measurement of the $\hat{S}_{q}= e^{i2\sqrt{\pi}\hat{q}}$ stabilizer of the square-lattice GKP code. The upper and the lower lines represent a data and an ancilla mode, respectively. The controlled-$\oplus$ symbol represents the SUM gate $\textrm{SUM}_{D\rightarrow A} = \exp[ -i\hat{q}_{D}\hat{p}_{A} ]$ and $\mathcal{M}_{q}$ represents the homodyne measurement of the position operator.   }
\label{fig:GKP stabilizer measurement position}
\end{figure}

The circuits for the measurement of the GKP stabilizers $\hat{S}_{q}$ and $\hat{S}_{p}$ are given in Figs.\ \ref{fig:GKP stabilizer measurement position} and \ref{fig:GKP stabilizer measurement momentum}, respectively. Let us first consider the $\hat{S}_{q}$ stabilizer measurement, i.e., the measurement of the position operator of the data mode $\hat{q}_{D}$ modulo $\sqrt{\pi}$. Note that the GKP state in the ancilla mode $|+_{\textrm{gkp}}^{(\textrm{sq})}\rangle$ is given by 
\begin{align}
|+_{\textrm{gkp}}^{(\textrm{sq})}\rangle &= \sum_{n\in \mathbb{Z}} |\hat{p}_{A} = 2\sqrt{\pi} n\rangle = \frac{1}{\sqrt{2}} ( |0_{\textrm{gkp}}^{(\textrm{sq})}\rangle + |1_{\textrm{gkp}}^{(\textrm{sq})}\rangle ) = \frac{1}{\sqrt{2}} \sum_{n\in\mathbb{Z}} |\hat{q}_{A} = n\sqrt{\pi}\rangle. 
\end{align}  
Thus, the quadrature operators of the ancilla mode satisfy 
\begin{align}
\hat{q}_{A} &=0 \,\,\, \textrm{mod} \,\,\, \sqrt{\pi}, 
\nonumber\\
\hat{p}_{A} &=0 \,\,\, \textrm{mod} \,\,\, 2\sqrt{\pi} . \label{eq:position and momentum quadrature of the plus GKP state}
\end{align} 
One can also see this by observing that the state $|+_{\textrm{gkp}}^{(\textrm{sq})}\rangle$ is stabilized by the stabilizer $\hat{S}_{q}^{(A)} = e^{i2\sqrt{\pi}\hat{q}_{A}}$ and the Pauli X operator $\hat{X}_{\textrm{gkp}}^{(A)} = e^{-i\sqrt{\pi}\hat{p}_{A}}$, i.e., $\hat{S}_{q}^{(A)} |+_{\textrm{gkp}}^{(\textrm{sq})}\rangle =  \hat{X}_{\textrm{gkp}}^{(A)}  |+_{\textrm{gkp}}^{(\textrm{sq})}\rangle =  |+_{\textrm{gkp}}^{(\textrm{sq})}\rangle$. Then in the Heisenberg picture, the SUM gate 
\begin{align}
\textrm{SUM}_{D\rightarrow A} \equiv \exp[ -i\hat{q}_{D}\hat{p}_{A} ]
\end{align}
transforms the quadrature operators of the data and the ancilla modes as follows: 
\begin{align}
\hat{q}_{D} &\rightarrow \hat{q}'_{D}  = \hat{q}_{D}, 
\nonumber\\
\hat{p}_{D} &\rightarrow \hat{p}'_{D}  = \hat{p}_{D} - \hat{p}_{A},
\nonumber\\
\hat{q}_{A} &\rightarrow \hat{q}'_{A}  = \hat{q}_{A} + \hat{q}_{D},
\nonumber\\ 
\hat{p}_{A} &\rightarrow \hat{p}'_{A}  = \hat{p}_{A}, 
\end{align}
where $\hat{X} ' \equiv \textrm{SUM}_{D\rightarrow A}^{\dagger} \cdot    \hat{X}  \cdot \textrm{SUM}_{D\rightarrow A}$. Most importantly, the position operator of the data mode $\hat{q}_{D}$ is transferred via the SUM gate to the ancilla mode, i.e., $\hat{q}'_{A} = \hat{q}_{A} + \hat{q}_{D}$. The transformed ancilla position operator $\hat{q}'_{A}$ is then measured by the homodyne measurement $\mathcal{M}_{q}$. Since $\hat{q}_{A} =0$ mod $\sqrt{\pi}$ (see Eq.\ \eqref{eq:position and momentum quadrature of the plus GKP state}), measuring $\hat{q}'_{A} = \hat{q}_{A} + \hat{q}_{D}$ is equivalent to measuring $\hat{q}_{D}$ modulo $\sqrt{\pi}$. Thus, the circuit in Fig.\ \ref{fig:GKP stabilizer measurement position} implements the measurement of the stabilizer of the data mode $\hat{S}_{q}^{(D)} = e^{i2\sqrt{\pi}\hat{q}_{D}}$.   

It is also important to note that in the $\hat{S}_{q}^{(D)}$ stabilizer measurement circuit, the momentum operator of the ancilla mode $\hat{p}_{A}$ is transferred via the SUM gate to the data mode, i.e., $\hat{p}'_{D}  = \hat{p}_{D} - \hat{p}_{A}$. That is, the data mode is displaced in the momentum direction by $\hat{p}_{A}$. However, since $\hat{p}_{A} =0$ mod $2\sqrt{\pi}$ (see Eq.\ \eqref{eq:position and momentum quadrature of the plus GKP state}), the size of the momentum shifts in the data mode is an integer multiple of $2\sqrt{\pi}$. Since the logical states of the square-lattice GKP code are stabilized by $\hat{S}_{q} = e^{i2\sqrt{\pi}\hat{q}}$, they are invariant under the momentum shifts of the size an integer multiple of $2\sqrt{\pi}$. Thus, the propagation of the ancilla momentum operator to the data mode does not really impact the data mode. Hence, the $\hat{S}_{q}^{(D)}$ stabilizer measurement circuit in Fig.\ \ref{fig:GKP stabilizer measurement position} is non-destructive as desired.

\begin{figure}[t!]
\centering
\includegraphics[width=3.5in]{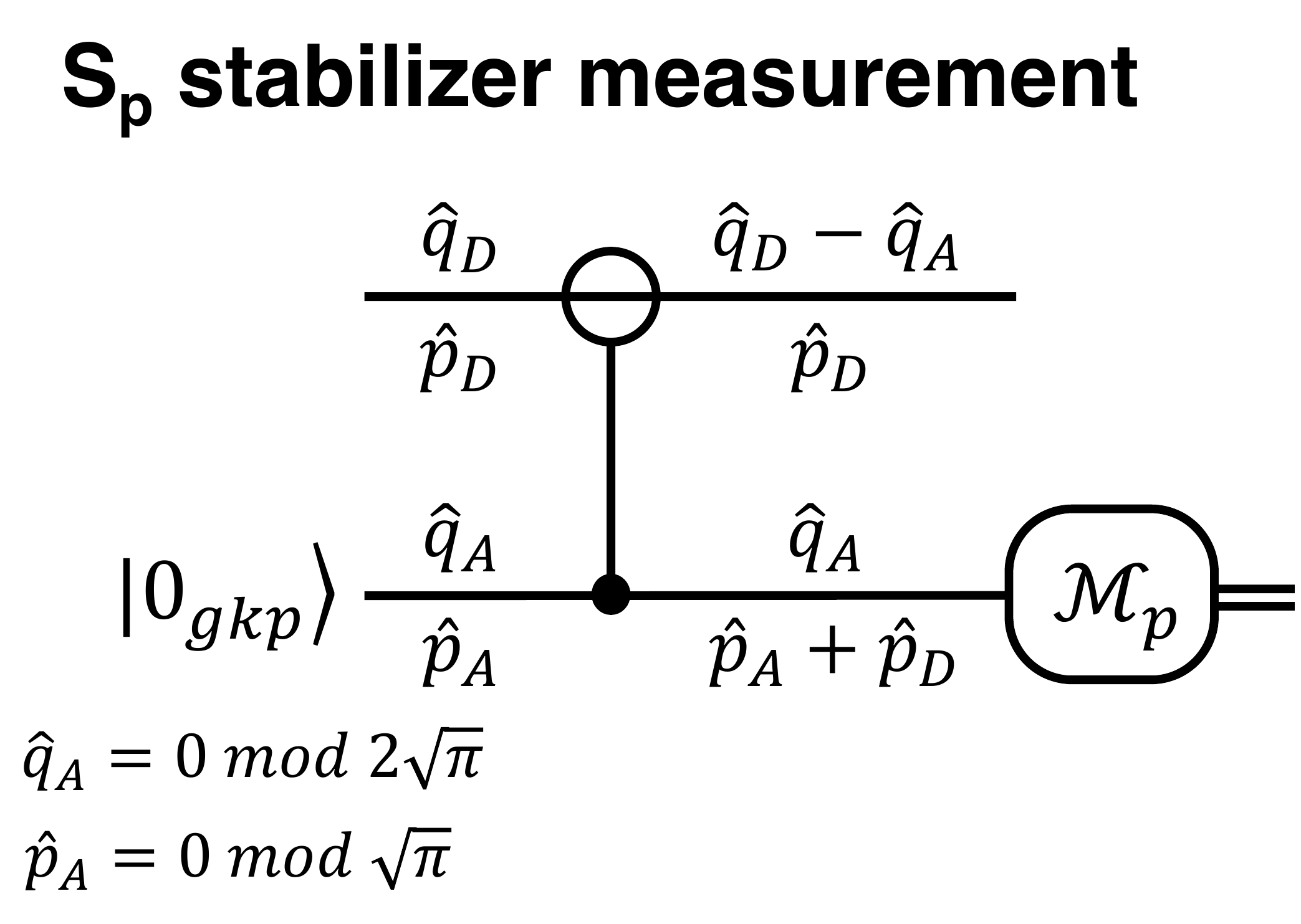}
\caption{Measurement of the $\hat{S}_{p}= e^{-i2\sqrt{\pi}\hat{p}}$ stabilizer of the square-lattice GKP code. The upper and the lower lines represent a data and an ancilla mode, respectively. The controlled-$\ominus$ symbol represents the inverse-SUM gate $\textrm{SUM}_{A\rightarrow D}^{\dagger} = \exp[ i\hat{q}_{A}\hat{p}_{D} ]$ and $\mathcal{M}_{p}$ represents the homodyne measurement of the momentum operator.   }
\label{fig:GKP stabilizer measurement momentum}
\end{figure}

Let us now move on the the $\hat{S}_{p}$ stabilizer measurement, i.e., the measurement of the momentum operator of the data mode $\hat{p}_{D}$ modulo $\sqrt{\pi}$. The circuit for the $\hat{S}_{p}$ stabilizer measurement is given in Fig.\ \ref{fig:GKP stabilizer measurement momentum}. Note that the GKP state in the ancilla mode $|0_{\textrm{gkp}}^{(\textrm{sq})}\rangle$ is given by 
\begin{align}
|0_{\textrm{gkp}}^{(\textrm{sq})}\rangle &= \sum_{n\in \mathbb{Z}} |\hat{q}_{A} = 2\sqrt{\pi} n\rangle = \frac{1}{\sqrt{2}} ( |+_{\textrm{gkp}}^{(\textrm{sq})}\rangle + |-_{\textrm{gkp}}^{(\textrm{sq})}\rangle ) = \frac{1}{\sqrt{2}} \sum_{n\in\mathbb{Z}} |\hat{p}_{A} = n\sqrt{\pi}\rangle. 
\end{align}  
Thus, the quadrature operators of the ancilla mode satisfy 
\begin{align}
\hat{q}_{A} &=0 \,\,\, \textrm{mod} \,\,\, 2\sqrt{\pi}, 
\nonumber\\
\hat{p}_{A} &=0 \,\,\, \textrm{mod} \,\,\, \sqrt{\pi} . \label{eq:position and momentum quadrature of the zero GKP state}
\end{align} 
Similarly as above, one can also see this by observing that the state $|+_{\textrm{gkp}}^{(\textrm{sq})}\rangle$ is stabilized by the stabilizer $\hat{S}_{p}^{(A)} = e^{-i2\sqrt{\pi}\hat{p}_{A}}$ and the Pauli Z operator $\hat{Z}_{\textrm{gkp}}^{(A)} = e^{i\sqrt{\pi}\hat{q}_{A}}$. Then in the Heisenberg picture, the inverse-SUM gate in Fig.\ \ref{fig:GKP stabilizer measurement momentum}
\begin{align}
\textrm{SUM}_{A\rightarrow D}^{\dagger} \equiv \exp[ i\hat{q}_{A}\hat{p}_{D} ]
\end{align}
transforms the quadrature operators of the data and the ancilla modes as follows: 
\begin{align}
\hat{q}_{D} &\rightarrow \hat{q}'_{D}  = \hat{q}_{D} - \hat{q}_{A}, 
\nonumber\\
\hat{p}_{D} &\rightarrow \hat{p}'_{D}  = \hat{p}_{D}, 
\nonumber\\
\hat{q}_{A} &\rightarrow \hat{q}'_{A}  = \hat{q}_{A},
\nonumber\\ 
\hat{p}_{A} &\rightarrow \hat{p}'_{A}  = \hat{p}_{A} + \hat{p}_{D}, 
\end{align}
where $\hat{X} ' \equiv \textrm{SUM}_{A\rightarrow D} \cdot    \hat{X}  \cdot \textrm{SUM}_{A\rightarrow D}^{\dagger}$. Most importantly, the momentum operator of the data mode $\hat{p}_{D}$ is transferred via the inverse-SUM gate to the ancilla mode, i.e., $\hat{p}'_{A} = \hat{p}_{A} + \hat{p}_{D}$. The transformed ancilla position operator $\hat{p}'_{A}$ is then measured by the homodyne measurement $\mathcal{M}_{p}$. Since $\hat{p}_{A} =0$ mod $\sqrt{\pi}$ (see Eq.\ \eqref{eq:position and momentum quadrature of the zero GKP state}), measuring $\hat{p}'_{A} = \hat{p}_{A} + \hat{p}_{D}$ is equivalent to measuring $\hat{p}_{D}$ modulo $\sqrt{\pi}$. Thus, the circuit in Fig.\ \ref{fig:GKP stabilizer measurement momentum} implements the measurement of the stabilizer of the data mode $\hat{S}_{p}^{(D)} = e^{-i2\sqrt{\pi}\hat{p}_{D}}$.   

Similarly to the case of the $\hat{S}_{q}^{(D)}$ stabilizer measurement, the $\hat{S}_{p}^{(D)}$ stabilizer measurement circuit in Fig.\ \ref{fig:GKP stabilizer measurement momentum} is non-destructive. Note that the position operator of the ancilla mode $\hat{q}_{A}$ is transferred via the inverse-SUM gate to the data mode, i.e., $\hat{q}'_{D}  = \hat{q}_{D} - \hat{q}_{A}$. That is, the data mode is displaced in the position direction by $\hat{q}_{A}$. However, since $\hat{q}_{A} =0$ mod $2\sqrt{\pi}$ (see Eq.\ \eqref{eq:position and momentum quadrature of the zero GKP state}), the size of the position shifts in the data mode is an integer multiple of $2\sqrt{\pi}$. Since the logical states of the square-lattice GKP code are stabilized by $\hat{S}_{p} = e^{-i2\sqrt{\pi}\hat{p}}$, they are invariant under the position shifts of the size an integer multiple of $2\sqrt{\pi}$. Thus, the propagation of the ancilla position operator to the data mode does not impact the data mode.

\subsubsection{Approximate GKP states with a finite squeezing}

It is clear by now that the ability to prepare a GKP state $|0_{\textrm{gkp}}^{(\textrm{sq})}\rangle$ or $|+_{\textrm{gkp}}^{(\textrm{sq})}\rangle$ is a useful non-Gaussian resource for implementing the GKP error correction. On the other hand, it is also very important to realize that the logical states of the square-lattice GKP code are not realistic because they are superpositions of infinitely many (i.e., $\sum_{n\in\mathbb{Z}}$) infinitely-squeezed states (i.e., $|\hat{q} = (2n+\mu)\sqrt{\pi}\rangle$ or $|\hat{p} = (2n+\mu)\sqrt{\pi}\rangle$). Nevertheless, one can define an approximate GKP state by replacing the infinitely-squeezed states with finitely-squeezed states and then introducing an overall Gaussian envelope function. Since the finitely-squeezed approximate GKP states have a bounded energy, they can be realized experimentally. Indeed, there have been many proposals for preparing an approximate GKP state in various experimental platforms \cite{Gottesman2001,Travaglione2002,Pirandola2004,Pirandola2006,
Vasconcelos2010,Terhal2016,Motes2017,Weigand2018,Arrazola2019,
Su2019,Eaton2019,Shi2019,Weigand2019,Hastrup2019}. Notably, the proposal in Ref.\ \cite{Travaglione2002} has recently been realized in a trapped ion system \cite{Fluhmann2018,Fluhmann2019,Fluhmann2019b} and a variation of the scheme in Ref.\ \cite{Terhal2016} has recently been realized in a circuit QED system \cite{Campagne2019}. Among these proposals, we will review the phase estimation method \cite{Terhal2016} below. Here, we instead focus more on the mathematical descriptions of an approximate GKP state.  

A comprehensive review of various representations of an approximate GKP state and their relations is given in Ref.\ \cite{Matsuura2019}. Here, we only review the representations that will be referenced later in the thesis. One simple way to represent an approximate GKP state is to apply a non-unitary envelope operator $\exp[-\Delta^{2} \hat{n}]$ to the ideal GKP state $|\psi_{\textrm{gkp}}^{(\textrm{sq})}\rangle$, i.e., 
\begin{align}
|\psi_{\textrm{gkp},\Delta}^{(\textrm{sq})}\rangle = \exp[-\Delta^{2} \hat{n}] |\psi_{\textrm{gkp}}^{(\textrm{sq})}\rangle, 
\end{align}
where $\Delta$ characterizes the width of each peak in the Wigner function of an approximate GKP state. While an ideal GKP state $ |\psi_{\textrm{gkp}}^{(\textrm{sq})}\rangle,$ is not normalizable, the approximate GKP state $|\psi_{\textrm{gkp},\Delta}^{(\textrm{sq})}\rangle$ is normalizable thanks to the Gaussian envelope operator $ \exp[-\Delta^{2} \hat{n}] $. Using the fact that any bounded operator on the bosonic Hilbert space can be expanded in terms of displacement operators, one can expand the envelope operator $\exp[-\Delta^{2} \hat{n}]$ as follows: 
\begin{align}
\exp[-\Delta^{2} \hat{n}] &= \int_{\alpha \in \mathbb{C} } \frac{d^{2}\alpha}{\pi} \mathrm{Tr}\big{[} \exp[-\Delta^{2}\hat{n}] \hat{D}^{\dagger}(\alpha)\big{]} \hat{D}(\alpha)  
\nonumber\\
&\propto \int_{\alpha \in \mathbb{C} } \frac{d^{2}\alpha}{\pi} \exp\Big{[} -\frac{|\alpha|^{2}}{2\sigma_{\textrm{gkp}}^{2}} \Big{]}  \hat{D}(\alpha)  , 
\end{align}
where $\sigma_{\textrm{gkp}}^{2} = (1-e^{-\Delta^{2}}) / (1+e^{-\Delta^{2}}) \xrightarrow{\Delta\ll 1} \Delta^{2} /2$. To derive the second line, we used
\begin{align}
\mathrm{Tr}\big{[} \exp[-\Delta^{2}\hat{n}] \hat{D}^{\dagger}(\alpha)\big{]} &= \sum_{n=0}^{\infty}e^{-\Delta^{2} n} \langle n| \hat{D}^{\dagger}(\alpha)|n\rangle 
\nonumber\\
&= \exp\Big{[} -\frac{|\alpha|^{2}}{2} \Big{]} \sum_{n=0}^{\infty} e^{-\Delta^{2} n} L_{n}(|\alpha|^{2}) 
\nonumber\\
&= \exp\Big{[} -\frac{|\alpha|^{2}}{2} \Big{]} \frac{1}{1-e^{-\Delta^{2}}} \exp\Big{[}-\frac{ e^{-\Delta^{2}} }{ 1-e^{-\Delta^{2}} } |\alpha|^{2} \Big{]} 
\nonumber\\
&= \frac{1}{1-e^{-\Delta^{2}}} \exp\Big{[} -\frac{ 1+e^{-\Delta^{2}} }{2( 1-e^{-\Delta^{2}}) } |\alpha|^{2} \Big{]}. \label{eq:approximate GKP laguerre proof}
\end{align}  
Thus, we can understand an approximate GKP state as the state that results from applying coherent superpositions of displacement operations with a Gaussian envelope to an ideal GKP state, i.e., 
\begin{align}
|\psi_{\textrm{gkp},\Delta}^{(\textrm{sq})}\rangle &\propto \int_{\alpha\in\mathbb{C}} d^{2}\alpha \exp\Big{[} -\frac{|\alpha|^{2}}{ 2\sigma_{\textrm{gkp}}^{2} }  \Big{]} \hat{D}(\alpha) |\psi_{\textrm{gkp}}^{(\textrm{sq})} \rangle . \label{eq:approximate GKP state coherent displacement}
\end{align}
      
For the purpose of efficiently simulating error correction schemes involving approximate GKP states, it is often useful to make the approximate GKP state $|\psi_{\textrm{gkp},\Delta}^{(\textrm{sq})}\rangle$ a bit more noisy by using a noise twirling technique. The purpose of the twirling is to transform the coherent superposition of displacement errors in Eq.\ \eqref{eq:approximate GKP state coherent displacement} into an incoherent mixture of displacement errors. More concretely, by applying random shifts of the size integer multiples of $2\sqrt{\pi}$ in both the position and the momentum directions, we can convert the approximate GKP state $|\psi_{\textrm{gkp},\Delta}^{(\textrm{sq})}\rangle$ into 
\begin{align}
\hat{\psi}_{\textrm{gkp},\Delta}^{(\textrm{sq})} &\propto \sum_{n_{1},n_{2} \in \mathbb{Z} } (\hat{S}_{q})^{n_{1}}(\hat{S}_{p})^{n_{2}} |\psi_{\textrm{gkp},\Delta}^{(\textrm{sq})}\rangle \langle \psi_{\textrm{gkp},\Delta}^{(\textrm{sq})}| (\hat{S}_{p}^{\dagger})^{n_{2}}(\hat{S}_{q}^{\dagger})^{n_{1}} 
\nonumber\\
&= \sum_{n_{1},n_{2}\in \mathbb{Z}} \int_{\alpha,\beta\in\mathbb{C}} d^{2}\alpha d^{2}\beta \exp\Big{[} -\frac{ |\alpha|^{2} + |\beta|^{2} }{ 2\sigma_{\textrm{gkp}}^{2} } \Big{]}  
\nonumber\\
&\qquad\qquad\qquad\qquad\qquad \times (\hat{S}_{q})^{n_{1}} (\hat{S}_{p})^{n_{2}} \hat{D}(\alpha) |\psi_{\textrm{gkp}}\rangle\langle \psi_{\textrm{gkp}}| \hat{D}^{\dagger}(\beta)  (\hat{S}_{p}^{\dagger})^{n_{2}} (\hat{S}_{q}^{\dagger})^{n_{1}} 
\nonumber\\
&\propto \sum_{k_{1},k_{2}\in \mathbb{Z}}   \exp\Big{[} -\frac{ \pi |k_{1}+ ik_{2}|^{2}      }{ 2\sigma_{\textrm{gkp}}^{2} } \Big{]} 
\nonumber\\
&\quad\times \int_{\alpha\in\mathbb{C}} d^{2}\alpha \exp\Big{[} -\frac{ |\alpha - \sqrt{\frac{\pi}{2}} (k_{1} + ik_{2}) |^{2}  }{\sigma_{\textrm{gkp}}^{2} } \Big{]}  \hat{D}(\alpha) |\psi_{\textrm{gkp}}\rangle\langle \psi_{\textrm{gkp}}|   \hat{D}^{\dagger}(\alpha -\sqrt{2\pi} (k_{1}+ik_{2}) ). 
\end{align}  
See Appendix A of Ref.\ \cite{Noh2020} for the derivation of the last proportionality. In the small noise limit (i.e., $\sigma_{\textrm{gkp}} \ll \sqrt{\pi}$), we can neglect all the $(k_{1},k_{2}) \neq (0,0)$ terms due to the exponentially decaying prefactor $ \exp[ -\frac{ \pi |k_{1}+ ik_{2}|^{2}      }{ 2\sigma_{\textrm{gkp}}^{2} } ] $ and get the noise model
\begin{align}
\hat{\psi}_{\textrm{gkp},\Delta}^{(\textrm{sq})}  &\propto   \int_{\alpha\in\mathbb{C}} \frac{d^{2}\alpha }{ \pi\sigma_{\textrm{gkp}}^{2} } \exp\Big{[} -\frac{ |\alpha  |^{2}  }{\sigma_{\textrm{gkp}}^{2} } \Big{]}  \hat{D}(\alpha) |\psi_{\textrm{gkp}}^{(\textrm{sq})}\rangle\langle \psi_{\textrm{gkp}}^{(\textrm{sq})}|   \hat{D}^{\dagger}(\alpha  )   . \label{eq:approximate GKP state incoherent displacement}
\end{align}
Thus, the coherent displacement error model in Eq.\ \eqref{eq:approximate GKP state coherent displacement} is transformed into an incoherent displacement error model. In particular, the incoherent noise model in Eq.\ \eqref{eq:approximate GKP state incoherent displacement} is equivalent to the Gaussian random shift error $\mathcal{N}_{B_{2}}[\sigma_{\textrm{gkp}}]$ with a noise standard deviation $\sigma_{\textrm{gkp}}$. The incoherent noise model is easy to work with numerically because we can simply sample the shift errors from a classical Gaussian random distribution $\mathcal{N}(0,\sigma_{\textrm{gkp}}^{2})$. Indeed, all the previous works on the large-scale simulation of the GKP code \cite{Menicucci2014,Wang2017,Fukui2018a,Vuillot2019,Fukui2019,Noh2020} were performed by assuming the incoherent noise model in Eq.\ \eqref{eq:approximate GKP state incoherent displacement}.

The quality of an approximate GKP state is typically measured by the GKP squeezing
\begin{align}
s_{\textrm{gkp}} \equiv -10\log_{10}(2\sigma_{\textrm{gkp}}^{2}). 
\end{align} 
Note that the GKP squeezing $s_{\textrm{gkp}}$ quantifies how much an approximate GKP state is squeezed in both the position and the momentum quadrature in comparison to the vacuum noise variance $1/2$. We also remark that the squeezing of the experimentally realized GKP states ranges from $5.5$dB to $9.5$dB \cite{Fluhmann2019,Campagne2019}. 

The natural question is then whether the finitely-squeezed GKP states can be used, for example, to realize fault-tolerant quantum error correction and computation. Over the past few years, it has been shown that the answer is affirmative in the case of fault-tolerant quantum error correction \cite{Menicucci2014,Wang2017,Fukui2018a,Vuillot2019,Fukui2019,Noh2020}. Detailed issues related to the use of finitely-squeezed GKP states will be addressed in Chapter \ref{chapter:Fault-tolerant bosonic quantum error correction}.      

\subsubsection{Preparation of a GKP state via phase estimation}

Phase estimation algorithms are used to measure an arbitrary unitary operation. One way to prepare a GKP state is to use a phase estimation algorithm to measure the stabilizers of the GKP code, which are unitary operations \cite{Terhal2016}. While there are many variants of phase estimation algorithms, we only introduce an adaptive phase estimation algorithm developed by Kitaev \cite{Kitaev1996} due to its simplicity. 

The goal of the phase estimation is to measure the phase $\theta$ of a unitary operator $\hat{U}$, i.e., 
\begin{align}
\hat{U}|\psi_{\theta}\rangle= e^{i\theta}|\psi_{\theta}\rangle, 
\end{align}
where $|\psi_{\theta}\rangle$ is an eigenstate of the unitary operator $\hat{U}$. More specifically, given an input state $|\psi\rangle$, an ideal implementation of the phase estimation of the unitary operator $\hat{U}$ should yield 
\begin{itemize}
\item an eigenvalue $e^{i\theta}$ (or phase $\theta$) of the unitary operator $\hat{U}$ as a measurement outcome
\item with probability $P_{\theta}=\langle \psi |\hat{P}_{\theta}|\psi\rangle$, where $\hat{P}_{\theta}$ is the projection operator to the eigenspace $\mathcal{H}_{\theta}  \equiv \lbrace |\chi\rangle : \hat{U}|\chi\rangle = e^{i\theta}|\chi\rangle\rbrace$
\item and the state $|\psi\rangle$ should collapse to $\hat{P}_{\theta}|\psi\rangle / \sqrt{\langle \psi| \hat{P}_{\theta} |\psi\rangle}$ after the measurement. 
\end{itemize}
The simplest case is when $\hat{U}^{2} = \hat{I}$ where $\theta$ can only take the values $0$ and $\pi$. In this case, $1$-bit precision is enough and the measurement circuit shown in Fig.\ \ref{fig:one bit phase estimation circuit} (or in Fig.\ \ref{fig:n bit phase estimation circuit}(a)) implements the desired phase estimation in $1$-bit precision. In a more general case where a unitary operator $\hat{U}$ satisfies $\hat{U}^{2^{n}}=\hat{I}$ for some natural number $n$, the unitary operator $\hat{U}$ can take $2^{n}$ phase values, i.e., 
\begin{align}
\frac{2\pi k }{2^{n}}  \,\,\, \textrm{for}\,\,\, k\in \lbrace 0,1,\cdots, 2^{n}-1 \rbrace. 
\end{align} 
Thus in this case, we need an $n$-bit precision phase estimation circuit shown in Fig.\ \ref{fig:n bit phase estimation circuit} (for $n=1,2,3$). 

\begin{figure}[t!]
\centering
\includegraphics[width=5.4in]{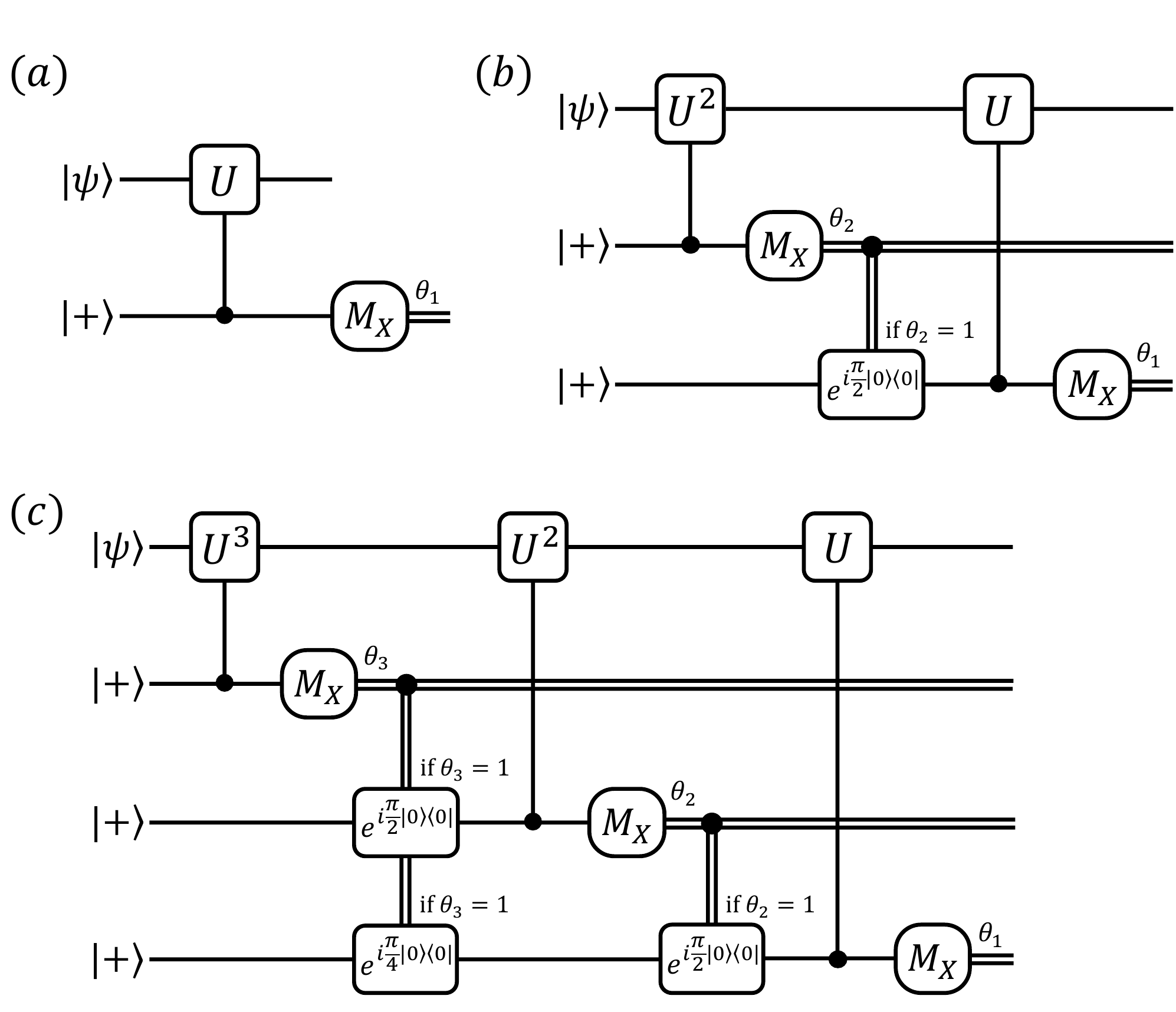}
\caption{An adaptive phase estimation circuit for measuring a unitary operator $\hat{U}$ in $n$-bit precision for (a) $n=1$, (b) $n=2$, and (c) $n=3$. The $1$-bit precision measurement circuit in (a) is the same as the circuit in Fig.\ \ref{fig:one bit phase estimation circuit}. $\exp[i\phi |0\rangle\langle 0|] \equiv e^{i\phi}|0\rangle\langle 0| + |1\rangle\langle 1|$ imparts a phase $e^{i\phi}$ to the computational zero state of the qubit. $\mathcal{M}_{X}$ represents the Pauli X measurement. 	The estimated phase $\theta$ of the unitary operator (i.e., $\hat{U} = e^{i\theta}$) is given by $\theta = 2\pi \times 0.\theta_{1}\cdots\theta_{n}$ in binary representation. That is, $\theta = 2\pi \sum_{k=1}^{n} \theta_{k}  2^{-k}$.   }
\label{fig:n bit phase estimation circuit}
\end{figure}

Let us now get back to the task of preparing a computational zero GKP state, i.e., $|0_{\textrm{gkp}}^{(\textrm{sq})}\rangle$. Note that the target state $|0_{\textrm{gkp}}^{(\textrm{sq})}\rangle$ is the unique state (up to an overall phase and normalization) that is stabilized by the following two commuting displacement operators.   
\begin{align}
\hat{S}_{p} = e^{-i2\sqrt{\pi} \hat{p} } , \quad \hat{Z}_{\textrm{gkp}} = e^{i\sqrt{\pi} \hat{q} } . \label{eq:stabilizers for the GKP computational zero phase estimation}
\end{align}   
Since the quadrature operators $\hat{q}$ and $\hat{p}$ can take any real number, the phases of these two unitary operators can take any real value between $-\pi$ and $\pi$. Thus, to measure these two unitary operators precisely, we need to perform an $n$-bit phase estimation with an infinitely large $n$. In practice, however, we can use an $n$-bit precision phase estimation circuit with some finite $n$ to measure the two unitary operators in Eq.\ \eqref{eq:stabilizers for the GKP computational zero phase estimation} approximately. Then, correcting for the shift errors based on the measurement outcomes of these two unitary operators, we can prepare an approximate computational zero state of the GKP code \cite{Terhal2016}.       

Note that the phase estimation circuit in Fig.\ \ref{fig:n bit phase estimation circuit} is sensitive to any measurement errors in the ancilla qubits for all $n\ge 2$. This is because there are qubit rotations that are applied conditioned on the ancilla qubit measurement outcomes in the previous rounds. Thus, non-adaptive phase estimation schemes can be more robust against such ancilla qubit errors than the adaptive scheme shown in Fig.\ \ref{fig:n bit phase estimation circuit}. We remark that Ref.\ \cite{Shi2019} provided a fault-tolerant scheme for preparing an approximate GKP state by making a non-adaptive phase estimation circuit fault-tolerant using flag qubits.

\subsection{Universal set of logical gates on the square-lattice GKP code}

Recall that the GKP code has a desirable property that the preparation of a GKP state $|0_{\textrm{gkp}}^{(\textrm{sq})}\rangle$ or $|+_{\textrm{gkp}}^{(\textrm{sq})}\rangle$ is the only non-Gaussian resource needed to implement the error correction protocol. Surprisingly, the GKP code has an even more desirable property, that is, the ability to prepare a GKP state $|0_{\textrm{gkp}}^{(\textrm{sq})}\rangle$ or $|+_{\textrm{gkp}}^{(\textrm{sq})}\rangle$ enables universal logical operations on the GKP code, when combined with Gaussian operations. Here, we will review that any Clifford operations on the GKP code can be implemented by using a Gaussian operation \cite{Gottesman2001}. Furthermore, we will also review how a GKP state $|0_{\textrm{gkp}}^{(\textrm{sq})}\rangle$ or $|+_{\textrm{gkp}}^{(\textrm{sq})}\rangle$ can be consumed to prepare a magic state encoded in the GKP code \cite{Terhal2016,Baragiola2019}. Since Clifford operations and magic states allow universal quantum computation \cite{Bravyi2005}, the ability to prepare a GKP state and perform Gaussian operations is sufficient for realizing any logical operation on the GKP code.   

\subsubsection{Logical Clifford operations on the square-lattice GKP qubit}

The set of $n$-qubit Clifford operations is defined as the set of all $n$-qubit operations that maps an $n$-qubit Pauli operator into another $n$-qubit Pauli operator under conjugation \cite{Gottesman1999}, i.e.,
\begin{align}
\mathcal{C}^{(n)} \equiv \lbrace \hat{U} | \hat{U}^{\dagger} \hat{P} \hat{U} \in \mathcal{P}^{(n)} \textrm{ for all } \hat{P}\in\mathcal{P}^{(n)} \rbrace . 
\end{align}   
Here, $\mathcal{P}^{(n)}$ is the $n$-qubit Pauli group generated by $\lbrace \pm 1, \pm i, \hat{X}_{j},\hat{Z}_{j},\hat{Y}_{j} | j\in \lbrace 1,\cdots, n \rbrace \rbrace $, where $\hat{X}_{j},\hat{Z}_{j},\hat{Y}_{j}$ are the Pauli X, Z, Y operators acting on the $j^{\textrm{th}}$ qubit. Pauli operators are explicitly given by 
\begin{align}
\hat{X} = \begin{bmatrix}
0 & 1\\
1 & 0
\end{bmatrix}, \quad \hat{Z} = \begin{bmatrix}
1 & 0\\
0 & -1
\end{bmatrix}, \quad \hat{Y} = i\hat{X}\hat{Z} = \begin{bmatrix}
0 & -i\\
i & 0
\end{bmatrix} 
\end{align} 
in the computational basis. 

Precisely due to the property that Clifford operations map a Pauli operator to another Pauli operator, their actions on quantum circuits can be efficiently simulated by a classical computer \cite{Gottesman1998,Aaronson2004}. Nevertheless, Clifford operations are extremely useful for conventional multi-qubit quantum error correction. In fact, most, if not all, of the leading multi-qubit error-correcting codes are a stabilizer code \cite{Gottesman1997} which can be implemented by using only computational zero states $|0\rangle$, Clifford operations, and Pauli measurements.  

It is known that the set of all Clifford operations is generated by the phase gate, the Hadamard gate, and the CNOT gate, i.e., 
\begin{align}
\hat{S} &: \quad \hat{S} |0\rangle = |0\rangle, \quad  \hat{S} |1\rangle = i|1\rangle, 
\nonumber\\
\hat{H} &: \quad \hat{H} |0\rangle = |+\rangle, \quad  \hat{H} |1\rangle = |-\rangle, 
\nonumber\\
\textrm{CNOT}_{j\rightarrow k} &: \quad \textrm{CNOT}_{j\rightarrow k} |\mu\rangle_{j} |\nu\rangle_{k} =  |\mu\rangle_{j} |\nu\oplus \mu\rangle_{k} , 
\end{align}
for all $\mu,\nu\in \lbrace 0,1 \rbrace $. Here, $|0\rangle$ and $|1\rangle$ are the computational basis states, $|\pm\rangle = \frac{1}{\sqrt{2}} (|0\rangle \pm |1\rangle)$ are the complementary basis states, and $\oplus$ is the addition modulo $2$. Gaussian operations are analogous to Clifford operations in the sense that Gaussian operations map a displacement operator to another displacement operator under conjugation (see Appendix \ref{appendix:Gaussian states, unitaries, and channels}). For the GKP code, any logical Clifford operations can be implemented by using Gaussian operations. More specifically, the phase gate, the Hadamard gate, and the CNOT gate on the square-lattice GKP code can be realized by using the following Gaussian operations: 
\begin{align}
\hat{S}_{\textrm{gkp}} &= e^{i\frac{\hat{q}^{2}}{2}} , 
\nonumber\\
\hat{H}_{\textrm{gkp}} &= e^{i\frac{\pi}{2}\hat{a}^{\dagger}\hat{a} } , 
\nonumber\\
\textrm{CNOT}_{\textrm{gkp}}^{j\rightarrow k} &= \textrm{SUM}_{j\rightarrow k} = e^{ - i \hat{q}_{j}\hat{p}_{k} }. \label{eq:square lattice GKP code Clifford gates}
\end{align}
Indeed, we can explicitly check that 
\begin{align}
\hat{S}_{\textrm{gkp}} |0_{\textrm{gkp}}^{(\textrm{sq})} \rangle &= e^{i\frac{\hat{q}^{2}}{2}} \sum_{ n\in\mathbb{Z} } | \hat{q} = 2n\sqrt{\pi}  \rangle = \sum_{ n\in\mathbb{Z} } e^{i2 n^{2}\pi }   | \hat{q} = 2n\sqrt{\pi}  \rangle  =|0_{\textrm{gkp}}^{(\textrm{sq})} \rangle, 
\nonumber\\ 
\hat{S}_{\textrm{gkp}} |1_{\textrm{gkp}}^{(\textrm{sq})} \rangle &= e^{i\frac{\hat{q}^{2}}{2}} \sum_{ n\in\mathbb{Z} } | \hat{q} = (2n+1)\sqrt{\pi}  \rangle  = \sum_{ n\in\mathbb{Z} } e^{i \frac{1}{2} (2n+1)^{2}\pi }   | \hat{q} = (2n+1)\sqrt{\pi}  \rangle  = i |1_{\textrm{gkp}}^{(\textrm{sq})} \rangle, \label{eq:square lattice GKP logical phase check}
\end{align}
and
\begin{align}
\hat{H}_{\textrm{gkp}} |0_{\textrm{gkp}}^{(\textrm{sq})} \rangle &= e^{i\frac{\pi}{2} \hat{a}^{\dagger}\hat{a} } \sum_{ n\in\mathbb{Z} } | \hat{q} = 2n\sqrt{\pi}  \rangle = \sum_{ n\in\mathbb{Z} } | \hat{p} = 2n\sqrt{\pi}  \rangle  =|+_{\textrm{gkp}}^{(\textrm{sq})} \rangle, 
\nonumber\\
\hat{H}_{\textrm{gkp}} |1_{\textrm{gkp}}^{(\textrm{sq})} \rangle &= e^{i\frac{\pi}{2} \hat{a}^{\dagger}\hat{a} } \sum_{ n\in\mathbb{Z} } | \hat{q} = (2n+1)\sqrt{\pi}  \rangle = \sum_{ n\in\mathbb{Z} } | \hat{p} = (2n+1)\sqrt{\pi}  \rangle  =|-_{\textrm{gkp}}^{(\textrm{sq})} \rangle, \label{eq:square lattice GKP logical Hadamard check}
\end{align}
and
\begin{align}
\textrm{CNOT}_{\textrm{gkp}}^{j\rightarrow k} |\mu_{\textrm{gkp}}^{(\textrm{sq})} \rangle|\nu_{\textrm{gkp}}^{(\textrm{sq})} \rangle &= \textrm{SUM}_{j\rightarrow k} |\mu_{\textrm{gkp}}^{(\textrm{sq})} \rangle|\nu_{\textrm{gkp}}^{(\textrm{sq})} \rangle  
\nonumber\\
&= e^{-i\hat{q}_{j}\hat{p}_{k}} \sum_{m,n\in\mathbb{Z}} |\hat{q}_{j} = (2m+\mu)\sqrt{\pi}\rangle|\hat{q}_{k} = (2n+\nu)\sqrt{\pi}\rangle
\nonumber\\
&= \sum_{m,n\in\mathbb{Z}} |\hat{q}_{j} = (2m+\mu)\sqrt{\pi}\rangle|\hat{q}_{k} = (2(m+n)+\nu+\mu)\sqrt{\pi}\rangle
\nonumber\\
&= \sum_{m,n\in\mathbb{Z}} |\hat{q}_{j} = (2m+\mu)\sqrt{\pi}\rangle|\hat{q}_{k} = (2n+\nu+\mu)\sqrt{\pi}\rangle  
\nonumber\\
&= |\mu_{\textrm{gkp}}^{(\textrm{sq})} \rangle|(\nu\oplus\mu)_{\textrm{gkp}}^{(\textrm{sq})} \rangle, \label{eq:square lattice GKP logical CNOT check}
\end{align}
as desired. Here, we used $e^{i\frac{\pi}{2} \hat{a}^{\dagger}\hat{a} } |\hat{q} = q\rangle = |\hat{p} = q\rangle$ and $e^{-i\hat{q}_{j}\hat{p}_{j}} |\hat{q}_{j} = q\rangle|\hat{q}_{k} = q'\rangle = |\hat{q}_{j} = q\rangle|\hat{q}_{k} = q'+q\rangle$ to derive Eqs.\ \eqref{eq:square lattice GKP logical Hadamard check} and \eqref{eq:square lattice GKP logical CNOT check}, respectively.

\subsubsection{H-type magic state for the GKP qubit}

As discussed above, Clifford operations can be efficiently simulated by using a classical computer. For a quantum circuit to be classically intractable, it should have a non-Clifford resource in it. While Clifford operations themselves are trivial in terms of computational power, their eigenstates may be non-trivial. Consider the following state:  
\begin{align}
|H\rangle &\equiv \cos\Big{(} \frac{\pi}{8} \Big{)} |0\rangle + \sin\Big{(} \frac{\pi}{8} \Big{)} |1\rangle . 
\end{align} 
The state $|H\rangle$ is called an H-type magic state \cite{Bravyi2005}, because it is a $+
1$ eigenstate of the Hadamard operator $\hat{H}$, i.e., 
\begin{align}
\hat{H}  |H\rangle &=  \cos\Big{(} \frac{\pi}{8} \Big{)} |+\rangle + \sin\Big{(} \frac{\pi}{8} \Big{)} |-\rangle 
\nonumber\\
&= \frac{1}{\sqrt{2}} \Big{[} \cos\Big{(} \frac{\pi}{8} \Big{)} + \sin\Big{(} \frac{\pi}{8} \Big{)} \Big{]} |0\rangle  + \frac{1}{\sqrt{2}} \Big{[} \cos\Big{(} \frac{\pi}{8} \Big{)} - \sin\Big{(} \frac{\pi}{8} \Big{)} \Big{]} |1\rangle 
\nonumber\\
&= \cos\Big{(} \frac{\pi}{4} - \frac{\pi}{8} \Big{)} |0\rangle + \sin\Big{(} \frac{\pi}{4} - \frac{\pi}{8} \Big{)} |1\rangle  =  \cos\Big{(} \frac{\pi}{8} \Big{)} |0\rangle + \sin\Big{(} \frac{\pi}{8} \Big{)} |1\rangle  = |H\rangle. 
\end{align}
Note that the H-type magic state $|H\rangle$ is not exactly the same as the more commonly known magic state $|0\rangle + e^{i\frac{\pi}{4}}|1\rangle$. However, these two states are equivalent in the sense that they can be mapped to each other via a Clifford operation.   

The H-type magic state $|H\rangle$ can be generated by applying a non-Clifford gate $\hat{T}$ to the computational basis state $|0\rangle$, i.e., 
\begin{align}
|H\rangle = \hat{T} |0\rangle, 
\end{align}
where the non-Clifford operation $\hat{T}$ is defined as 
\begin{align}
\hat{T} \equiv \hat{Y}\Big{(} \frac{\pi}{4} \Big{)}  = e^{  -i \frac{\pi}{8} \hat{Y} } &= \cos\Big{(} \frac{\pi}{8} \Big{)} \hat{I} -i\sin\Big{(} \frac{\pi}{8} \Big{)}  \hat{Y} =  \begin{bmatrix}
\cos \frac{\pi}{8} & -\sin \frac{\pi}{8} \\
\sin \frac{\pi}{8} & \cos \frac{\pi}{8} 
\end{bmatrix} . 
\end{align}
Note that the $\hat{T}$ gate maps the Pauli $X$ operator to the Hadamard operator $\hat{H}$ under conjugation, i.e., 
\begin{align}
\hat{T}^{\dagger}\hat{X}\hat{T} = \hat{H} = \frac{1}{\sqrt{2}} (\hat{X} + \hat{Z}).  
\end{align} 
Hence, the $\hat{T}$ gate is non-Clifford as it maps a Pauli operator to a non-Pauli operator. Note also that the term ``T gate'' is frequently used to refer to another non-Clifford gate $|0\rangle\langle 0| + e^{i\frac{\pi}{4}}|1\rangle\langle 1|$. Similarly as in the case of the magic state, the $\hat{T}$ gate we introduced above is equivalent to the more commonly known non-Clifford gate $|0\rangle\langle 0| + e^{i\frac{\pi}{4}}|1\rangle\langle 1|$ up to Clifford operations.        

\begin{figure}[t!]
\centering
\includegraphics[width=4.5in]{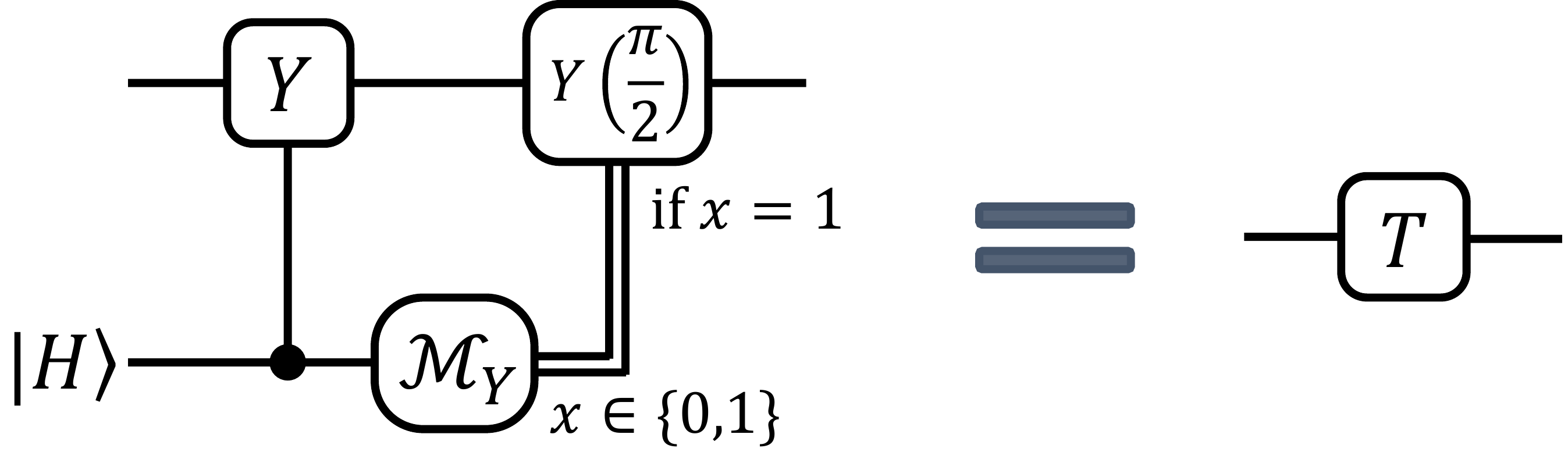}
\caption{A circuit for magic state injection. To implement the non-Clifford $\hat{T}$ gate on the data qubit, an ancilla H-type magic state $|H\rangle$ is supplied to the circuit. The controlled-Y symbol represents the controlled-Y gate and $\mathcal{M}_{Y}$ represents the Pauli Y measurement. $Y(\frac{\pi}{2})$ is defined as $Y(\frac{\pi}{2}) = e^{-i\frac{\pi}{4}\hat{Y}}$ and is explicitly given by $Y(\frac{\pi}{2}) = \frac{1}{\sqrt{2}} (\hat{I}-i\hat{Y})$. The $Y(\frac{\pi}{2})$ gate is applied only when the Pauli Y measurement outcome $x$ is $1$.   }
\label{fig:magic state injection}
\end{figure}

It might appear that the H-type magic state $|H\rangle$ is a weaker non-Clifford resource than the $\hat{T}$ gate since $|H\rangle$ can be generated by applying the $\hat{T}$ gate to the computational basis state $|0\rangle$. Remarkably, however, the converse is also true. That is, the H-type magic state $|H\rangle$ is a strong enough non-Clifford resource that it can be used to realize the non-Clifford $\hat{T}$ gate via a magic state injection protocol \cite{Bravyi2005}, when it is assisted by Clifford operations and Pauli measurements. 

The magic state injection circuit is given in Fig.\ \ref{fig:magic state injection}. To see how this works in more detail, let us consider an arbitrary input state $|\psi_{D}\rangle$ to the data qubit and an ancilla H-type magic state, i.e., 
\begin{align}
|\Psi_{0}\rangle &= |\psi_{D}\rangle |H_{A}\rangle = \cos\Big{(} \frac{\pi}{8} \Big{)} |\psi_{D}\rangle |0_{A}\rangle + \sin\Big{(} \frac{\pi}{8} \Big{)} |\psi_{D}\rangle |1_{A}\rangle . 
\end{align}     
After the controlled-Y gate, the state is transformed into 
\begin{align}
|\Psi_{1}\rangle &= \textrm{CY}_{A\rightarrow D} |\Psi_{0}\rangle = \cos\Big{(} \frac{\pi}{8} \Big{)} |\psi_{D}\rangle |0_{A}\rangle + \sin\Big{(} \frac{\pi}{8} \Big{)} \hat{Y}_{D} |\psi_{D}\rangle |1_{A}\rangle  . 
\end{align}
Note that the eigenstates of the Pauli Y operator are given by $|\pm Y\rangle = \frac{1}{\sqrt{2}} ( |0\rangle \pm i|1\rangle )$. Thus, we have $|0\rangle = \frac{1}{\sqrt{2}} ( |+Y\rangle + |-Y\rangle )$ and $|1\rangle = -\frac{i}{\sqrt{2}} ( |+Y\rangle - |-Y\rangle )$ and 
\begin{align}
|\Psi_{1}\rangle &= \frac{1}{\sqrt{2}}\Big{[} \cos\Big{(} \frac{\pi}{8} \Big{)}\hat{I}_{D} -i\sin\Big{(} \frac{\pi}{8} \Big{)} \hat{Y}_{D} \Big{]}  |\psi_{D}\rangle |+Y_{A}\rangle 
\nonumber\\  
&\quad + \frac{1}{\sqrt{2}}\Big{[} \cos\Big{(} \frac{\pi}{8} \Big{)}\hat{I}_{D} +i\sin\Big{(} \frac{\pi}{8} \Big{)} \hat{Y}_{D} \Big{]}  |\psi_{D}\rangle |-Y_{A}\rangle 
\nonumber\\
&= \frac{1}{\sqrt{2}} \hat{Y}\Big{(} \frac{\pi}{4} \Big{)}_{D} |\psi_{D}\rangle |+Y_{A}\rangle  + \frac{1}{\sqrt{2}} \hat{Y}\Big{(} -\frac{\pi}{4} \Big{)}_{D} |\psi_{D}\rangle |+Y_{A}\rangle . 
\end{align} 
Thus, conditioned on having $|\pm Y_{A}\rangle$ state in the ancilla qubit, the data qubit undergoes an evolution by the $\hat{Y}( \pm \frac{\pi}{4} )$ gate. Therefore, by measuring the Pauli Y operator of the ancilla qubit and conditioned on obtaining $\hat{Y}_{A} = (-1)^{x} $ (where $ x\in \lbrace 0,1 \rbrace$), we have 
\begin{align}
|\bar{\Psi}_{2}(x)\rangle &= \begin{cases}
 \langle +Y _{A} |\Psi_{1}\rangle  & x = 0  \\ 
 \langle -Y_{A} |\Psi_{1}\rangle  & x = 1
\end{cases} 
\nonumber\\
&= \begin{cases}
\frac{1}{\sqrt{2}} \hat{Y}( \frac{\pi}{4} )_{D}|\psi_{D}\rangle & x = 0  \\ 
\frac{1}{\sqrt{2}} \hat{Y}( -\frac{\pi}{4} )_{D}|\psi_{D}\rangle  & x = 1
\end{cases}  = \frac{1}{\sqrt{2}} \hat{Y}\Big{(} (-1)^{x} \frac{\pi}{4} \Big{)}_{D}|\psi_{D}\rangle . 
\end{align}  
The prefactor $\frac{1}{\sqrt{2}}$ indicates that both the $x=0$ and the $x=1$ outcomes happen with $50\%$ probability. Also, the prefactor $\frac{1}{\sqrt{2}}$ disappears once we normalize the state, i.e., $|\Psi_{2}(x)\rangle = \sqrt{2} |\bar{\Psi}_{2}(x)\rangle$. 

Conditioned on measuring $x=0$ or the $|+Y_{A}\rangle$ state in the ancilla qubit, the non-Clifford gate $\hat{T} = \hat{Y}(\frac{\pi}{4})$ is applied to the data qubit as desired. So in this case, we do not need to do anything further. On the other hand, if we measure $x=1$ or the $|-Y_{A}\rangle$ state in the ancilla qubit, another non-Clifford gate $\hat{T}^{\dagger} =\hat{Y}(-\frac{\pi}{4})$ is applied to the data qubit. In this case, we can further apply $\hat{Y}(\frac{\pi}{2})$ to the data qubit to get the following desired result: 
\begin{align}
|\Psi_{3}(x)\rangle = \begin{cases}
|\Psi_{3}(x)\rangle  & x=0 \\
\hat{Y}(\frac{\pi}{2})_{D}|\Psi_{3}(x)\rangle  & x=1
\end{cases}  = \hat{Y}\Big{(} \frac{\pi}{2} \Big{)}_{D} |\psi_{D}\rangle. 
\end{align}   
Note that $\hat{Y}(\frac{\pi}{2})$ is a Clifford operation. Therefore, the only non-Clifford resource needed in the above magic state injection scheme is the preparation of an H-type magic state $|H\rangle$. This then implies that the preparation of an H-type magic state is the only required non-Clifford resource to implement the universal quantum computation. This is due to a well-known result stating that the generators of the Clifford operations and any non-Clifford gate form a universal gate set (see, e.g., Ref.\ \cite{Nielsen2000} for more details).    

Getting back to the GKP code, we have so far realized that the GKP stabilizer measurements can be done by using Gaussian operations, homodyne measurements, and an ancilla GKP state $|0_{\textrm{gkp}}^{(\textrm{sq})}\rangle$ or $|+_{\textrm{gkp}}^{(\textrm{sq})}\rangle = \hat{H}_{\textrm{gkp}}|0_{\textrm{gkp}}^{(\textrm{sq})}\rangle$ (See Figs.\ \ref{fig:GKP stabilizer measurement position} and \ref{fig:GKP stabilizer measurement momentum}). Furthermore, any Clifford operation on the GKP code can be implemented by using Gaussian operations (Eq.\ \eqref{eq:square lattice GKP code Clifford gates}). The only remaining piece for implementing the universal gate set on the GKP code is, for instance, to prepare an H-type magic state encoded in the GKP code, i.e., 
\begin{align}
|H_{\textrm{gkp}}^{(\textrm{sq})}\rangle = \cos\Big{(} \frac{\pi}{8} \Big{)} |0_{\textrm{gkp}}^{(\textrm{sq})}\rangle + \sin\Big{(} \frac{\pi}{8} \Big{)} |1_{\textrm{gkp}}^{(\textrm{sq})}\rangle . 
\end{align}  

One way to prepare the H-type magic state is to measure the Hadamard operator in a non-destructive way. More specifically, assuming that the state is already in the code space (i.e., $|\psi\rangle \in \mathcal{C}_{\textrm{gkp}}^{(\textrm{sq})}$), if we measure the Hadamard operator non-destructively and get $\hat{H}_{\textrm{gkp}} = +1$, we are guaranteed to be left with the desired H-type magic state $|H_{\textrm{gkp}}^{(\textrm{sq})}\rangle$ as it is the only state (up to an overall phase) that is stabilized by the GKP stabilizers $\hat{S}_{q}$ and $\hat{S}_{p}$, and the logical Hadamard operator $\hat{H}_{\textrm{gkp}}$. On the other hand, if the measurement outcome is $\hat{H}_{\textrm{gkp}} = -1$, we get 
\begin{align}
|-H_{\textrm{gkp}}^{(\textrm{sq})}\rangle &= \hat{Y}_{\textrm{gkp}} |H_{\textrm{gkp}}^{(\textrm{sq})}\rangle, 
\end{align}
where $\hat{Y}_{\textrm{gkp}}$ is the Pauli Y operator on the GKP code. In this case, we can simply apply the Pauli Y operator (via a displacement operation) to the state $|-H_{\textrm{gkp}}^{(\textrm{sq})}\rangle$ and end up with the desired magic state $|H_{\textrm{gkp}}^{(\textrm{sq})}\rangle$. 

Recall that the logical Hadamard operator of the square-lattice GKP code is given by 
\begin{align}
\hat{H}_{\textrm{gkp}} = e^{ i\frac{\pi}{2}\hat{n} } , 
\end{align} 
i.e., a phase rotation by $90\degree$. Thus, the measurement of the Hadamard operator for the GKP code can be implemented by measuring the excitation number operator $\hat{n}$ modulo $4$ \cite{Gottesman2001}. Such a modular measurement of the excitation number is a non-Gaussian resource. In circuit QED systems, this can be done by using the $2$-bit phase estimation circuit in Fig.\ \ref{fig:n bit phase estimation circuit}(b). Similarly as in the case of the excitation parity measurement, we need a qubit-state-conditional displacement operation $\hat{I} \otimes |g\rangle \langle g| + \hat{D}(\alpha) |e\rangle\langle e|$. In circuit QED systems, the qubit-state-conditional displacement operation can be realized by using dispersive coupling $\hat{H} = -\chi \hat{a}^{\dagger}\hat{a} |e\rangle\langle e|$ between a microwave cavity mode and a transmon qubit.        

However, not all physical systems are equipped with a qubit-state-conditional displacement operation and thus measuring the excitation number modulo $4$ can be challenging. On the other hand, if one ever wants to implement the GKP code, it is necessary to have an ability to prepare a non-Gaussian GKP state $|0_{\textrm{gkp}}^{(\textrm{sq})}\rangle$ or  $|+_{\textrm{gkp}}^{(\textrm{sq})}\rangle = \hat{H}_{\textrm{gkp}} |0_{\textrm{gkp}}^{(\textrm{sq})}\rangle$ in any case. Therefore, it would be ideal if one can leverage such an ability to prepare a non-Gaussian GKP state and use it to prepare a magic state encoded in the GKP code. Surprisingly, this is indeed the case. That is, we can prepare an H-type magic state in the GKP code by using a vacuum state $|0\rangle$, Gaussian operations, homodyne measurements, and the GKP states $|0_{\textrm{gkp}}^{(\textrm{sq})}\rangle$ and $|+_{\textrm{gkp}}^{(\textrm{sq})}\rangle$ \cite{Terhal2016,Baragiola2019}. Below, we will review why this is the case. 


It was first realized in Ref.\ \cite{Terhal2016} that one can avoid the use of excitation number measurement modulo $4$ if one starts with a state that is already invariant under the $90\degree$ rotation. More specifically, the vacuum state $|0\rangle$, which is a Gaussian state, is already stabilized by the logical Hadamard operator, i.e., 
\begin{align}
\hat{H}_{\textrm{gkp}} |0\rangle =  e^{i\frac{\pi}{2}\hat{n}}|0\rangle = |0\rangle. 
\end{align} 
On the other hand, the vacuum state $|0\rangle$ is apparently not stabilized by the GKP stabilizers $\hat{S}_{q}$ and $\hat{S}_{p}$, and therefore is not a valid logical state. We can address this by measuring the GKP stabilizers $\hat{S}_{q}$ and $\hat{S}_{p}$ on the vacuum state in a non-destructive way. In particular, because the stabilizers commute with the logical Hadamard operator $\hat{H}_{\textrm{gkp}}$, the state after the stabilizer measurement will still be stabilized by the logical Hadamard operator. This then implies that if we post-select the outcome with $\hat{S}_{q} = \hat{S}_{p}=1$, or $\hat{q} = \hat{p} =0$ modulo $\sqrt{\pi}$, we will end up with a state
\begin{align}
|\psi(0,0)\rangle &\propto \hat{\Pi}_{\hat{S}_{q} = 1} \hat{\Pi}_{\hat{S}_{p} = 1} |0\rangle, 
\end{align} 
where $\hat{\Pi}_{\hat{S}_{q} = 1} $ and $\hat{\Pi}_{\hat{S}_{p} = 1} $ are the projection operators to the subspace defined by $\hat{S}_{q} = 1$ and $\hat{S}_{p} = 1$, respectively. Since the post-selected state $|\psi(0,0)\rangle$ is now stabilized by the logical Hadamard operator as well as all the GKP stabilizers, it is equivalent to the encoded H-type magic state $|H_{\textrm{gkp}}^{(\textrm{sq})}\rangle$ up to an overall phase, i.e., 
\begin{align}
|\psi(0,0)\rangle &\propto |H_{\textrm{gkp}}^{(\textrm{sq})}\rangle. \label{eq:magic GKP state origin from words}
\end{align}
Note that the only non-Gaussian resource needed in the above magic state preparation scheme is the ability to perform the $\hat{S}_{q}$ and the $\hat{S}_{p}$ GKP stabilizer measurements, which can be implemented by using GKP states as a non-Gaussian resource. Therefore, preparation of a GKP state $|0_{\textrm{gkp}}^{(\textrm{sq})}\rangle$ or $|+_{\textrm{gkp}}^{(\textrm{sq})}\rangle$ is sufficient for realizing universal quantum computation with the GKP code. 

Note, however, that the above magic state preparation scheme is non-deterministic. In fact, the success probability of the scheme is zero as the only accepted measurement outcome is $\hat{q} = \hat{p} =0$ mod $\sqrt{\pi}$. Nevertheless, this is not a fundamental problem because, as observed in Ref.\ \cite{Terhal2016}, we would obtain a GKP state that is close to the desired H-type magic state if the measurement outcome is close to desired outcome $\hat{q} = \hat{p} =0$ mod $\sqrt{\pi}$ (i.e., if $\hat{S}_{q} \sim 1$ and $\hat{S}_{p}\sim 1$). This intuition has been rigorously verified in Ref.\ \cite{Baragiola2019}. In particular, Ref.\ \cite{Baragiola2019} showed post-selection is not really necessary because we obtain a non-trivial state that can be distilled to the ideal H-type magic with $100\%$ probability. We review this result below.    
 
To understand the general case where the measurement outcome is not necessarily given by $\hat{S}_{q} = \hat{S}_{p}=1$, let us unpack the stabilizer measurement circuits in Figs.\ \ref{fig:GKP stabilizer measurement position} and \ref{fig:GKP stabilizer measurement momentum} in more detail. Specifically, we will find the Kraus operator associated with each stabilizer measurement outcome. Recall the $\hat{S}_{q}$ stabilizer circuit in Fig.\ \ref{fig:GKP stabilizer measurement position} and consider the action of the SUM gate $\textrm{SUM}_{D\rightarrow A}$ on an arbitrary input state in the data mode $|\psi_{D}\rangle = \int_{-\infty}^{\infty} dq \psi(q) |\hat{q}_{D} = q \rangle$ and the ancilla GKP state $|+_{\textrm{gkp}}^{(\textrm{sq})}\rangle$: 
\begin{align}
\textrm{SUM}_{D\rightarrow A} |\psi_{D}\rangle |+_{\textrm{gkp}}^{(\textrm{sq})}\rangle &= \frac{1}{\sqrt{2}}\sum_{n\in\mathbb{Z}} \int_{-\infty}^{\infty} dq \psi(q)  e^{-i\hat{q}_{a}\hat{p}_{b}}  |\hat{q}_{D} = q \rangle    |\hat{q}_{A} = n\sqrt{\pi}\rangle 
\nonumber\\
&= \frac{1}{\sqrt{2}}\sum_{n\in\mathbb{Z}} \int_{-\infty}^{\infty} dq \psi(q)   |\hat{q}_{D} = q \rangle    |\hat{q}_{A} = n\sqrt{\pi} + q\rangle . 
\end{align}
Now suppose that we measured $\hat{q}_{A} = z_{q}$ in the ancilla mode via a homodyne measurement of the ancilla position operator. Then, we are left with the state 
\begin{align}
\langle \hat{q}_{A} = z_{q}| \textrm{SUM}_{a\rightarrow b} |\psi_{D}\rangle |+_{\textrm{gkp}}^{(\textrm{sq})}\rangle &= \frac{1}{\sqrt{2}}\sum_{n\in\mathbb{Z}}  \psi(z_{q}-n\sqrt{\pi})   |\hat{q}_{D} = z_{q}-n\sqrt{\pi} \rangle   
\nonumber\\
&= \Big{[} \frac{1}{\sqrt{2}}\sum_{n\in\mathbb{Z}}    |\hat{q}_{D} = z_{q}-n\sqrt{\pi} \rangle \langle \hat{q}_{D} = z_{q}-n\sqrt{\pi}| \Big{]} |\psi\rangle 
\nonumber\\
&= \Big{[} \frac{1}{\sqrt{2}}\sum_{n\in\mathbb{Z}}    |\hat{q}_{D} = z_{q}+n\sqrt{\pi} \rangle \langle \hat{q}_{D} = z_{q}+n\sqrt{\pi}| \Big{]} |\psi\rangle. 
\end{align}
Note that after correcting for the shift $z_{q}$ by applying a counter displacement operation $e^{i z_{q}  \hat{p}_{D}}$, we get 
\begin{align}
e^{i z_{q} \hat{p}_{D}} \langle \hat{q}_{A} = z_{q} | \textrm{SUM}_{D\rightarrow A} |\psi_{D}\rangle |+_{\textrm{gkp}}^{(\textrm{sq})}\rangle &= \Big{[} \frac{1}{\sqrt{2}}\sum_{n\in\mathbb{Z}}    |\hat{q}_{D} = n\sqrt{\pi}\rangle \langle \hat{q}_{D} = z_{q}+n\sqrt{\pi}| \Big{]} |\psi\rangle. \label{eq:GKP stabilizer measurement and shift correction for position}
\end{align} 
Thus, it is clear that after the $\hat{S}_{q}$ stabilizer measurement and the shift correction, the final state satisfies $\hat{q}_{D} =0$ mod $\sqrt{\pi}$ and is indeed stabilized by the stabilizer $\hat{S}_{q}^{(D)} = e^{i2\sqrt{\pi}\hat{q}_{D}}$ regardless of the input state $|\psi\rangle$. Note that the correction shift is given by $z_{q}$ instead of $R_{\sqrt{\pi}}(z_{q})$ as in the case of the usual GKP error correction. This is not a problem since the input state $|0\rangle$ is known. From Eq.\ \eqref{eq:GKP stabilizer measurement and shift correction for position}, we can see that the Kraus operator $\hat{K}_{\textrm{EC}}^{(q)}(z_{q})$ associated with the measurement outcome $z_{q}$ is given by 
\begin{align}
\hat{K}_{\textrm{EC}}^{(q)}(z_{q})&= \frac{1}{\sqrt{2}}\sum_{n\in\mathbb{Z}}    |\hat{q}_{D} = n\sqrt{\pi} \rangle \langle \hat{q}_{D} = z_{q}+ n\sqrt{\pi}|. 
\end{align}
One can similarly show that for the $\hat{S}_{p}$ stabilizer measurement (see Fig.\ \ref{fig:GKP stabilizer measurement momentum}) followed by a shift correction (by a counter displacement operation $e^{-i z_{p} \hat{q}_{D} }$), the Kraus operator associated with the measurement outcome $z_{p}$ is given by 
\begin{align}
\hat{K}_{\textrm{EC}}^{(p)}(z_{p}) &= \frac{1}{\sqrt{2}}\sum_{n\in\mathbb{Z}}    |\hat{p}_{D} = n\sqrt{\pi} \rangle \langle \hat{p}_{D} = z_{p}+n\sqrt{\pi}|. 
\end{align}

With all the tools ready, let us apply these Kraus operators to the vacuum state $|0\rangle$ sequentially (the $\hat{S}_{p}$ measurement first and then the $\hat{S}_{q}$ measurement) to get the output state conditioned on measuring $( z_{q} , z_{p} )$: 
\begin{align}
|\psi(z_{q} , z_{p} ) \rangle &\equiv \hat{K}_{\textrm{EC}}^{(q)}(z_{q}) \hat{K}_{\textrm{EC}}^{(p)}(z_{p})  |0\rangle 
\nonumber\\
&\propto \sum_{m,n\in \mathbb{Z}} |\hat{q} = n\sqrt{\pi} \rangle  \times \langle \hat{q} = z_{q} + n \sqrt{\pi} | \hat{p} = m\sqrt{\pi} \rangle \times \langle  \hat{p} = z_{p} + m \sqrt{\pi}  | 0\rangle 
\nonumber\\
&\propto \sum_{m,n\in \mathbb{Z}} |\hat{q} = n\sqrt{\pi} \rangle  \times  e^{i (z_{q} + n\sqrt{\pi})m\sqrt{\pi} }  \times e^{-\frac{1}{2} ( z_{p} + m \sqrt{\pi} )^{2} } 
\nonumber\\
&= \sum_{m,n\in \mathbb{Z}} |\hat{q} = n\sqrt{\pi} \rangle  \times e^{imn \pi}  \times  e^{i z_{q} m\sqrt{\pi} }  \times e^{-\frac{1}{2} ( z_{p} + m \sqrt{\pi} )^{2} } . 
\end{align} 
Note that for $m,n\in\mathbb{Z}$, $e^{imn\pi}$ is given by $1$ for even $n$ and $(-1)^{m}$ for odd $n$. Thus, we have 
\begin{align}
|\psi(z_{q} , z_{p} ) \rangle &\propto   \Big{[} \sum_{m\in \mathbb{Z}}  e^{i z_{q} m\sqrt{\pi} }  e^{-\frac{1}{2} ( z_{p} + m \sqrt{\pi} )^{2} } \Big{]}  \times \sum_{n\in \mathbb{Z}} |\hat{q} = 2n\sqrt{\pi} \rangle 
\nonumber\\
&+ \Big{[} \sum_{m\in \mathbb{Z}}  (-1)^{m} e^{i z_{q} m\sqrt{\pi} }  e^{-\frac{1}{2} ( z_{p} + m \sqrt{\pi} )^{2} } \Big{]}  \times \sum_{n\in \mathbb{Z}} |\hat{q} = (2n+1)\sqrt{\pi} \rangle 
\nonumber\\
&=  \sum_{\mu \in \lbrace 0,1 \rbrace} c_{\mu}(z_{q},z_{p}) | \mu_{\textrm{gkp}}^{(\textrm{sq})}\rangle , 
\end{align}
where the unnormalized coefficient $c_{\mu}$ is defined as 
\begin{align}
c_{\mu}(z_{q},z_{p}) \equiv \sum_{m\in \mathbb{Z}}  (-1)^{m\mu} e^{i z_{q} m\sqrt{\pi} }  e^{-\frac{1}{2} ( z_{p} + m \sqrt{\pi} )^{2} }, 
\end{align}
for $\mu \in \lbrace 0,1 \rbrace$. For the desired measurement outcome $(z_{q},z_{p})=(0,0)$, we have 
\begin{align}
c_{0}(0,0) &= \sum_{m\in\mathbb{Z}} e^{-\frac{\pi}{2} m^{2} }, 
\nonumber\\
c_{1}(0,0) &= \sum_{m\in\mathbb{Z}} (-1)^{m} e^{-\frac{\pi}{2} m^{2} }  =  \sum_{m\in 2\mathbb{Z}} e^{-\frac{\pi}{2} m^{2} } - \sum_{m\in 2\mathbb{Z} +1 } e^{-\frac{\pi}{2} m^{2} } . 
\end{align}
Note that 
\begin{align}
\sum_{m\in 2\mathbb{Z}}  e^{-\frac{\pi}{2} m^{2} } &= \sum_{m\in\mathbb{Z}} e^{-2\pi m^{2}}
\nonumber\\
&= \sum_{k\in\mathbb{Z}} \int_{-\infty}^{\infty} dx e^{i2\pi k x} e^{-2\pi x^{2}} = \frac{1}{\sqrt{2}} \sum_{k\in\mathbb{Z}} e^{-\frac{\pi}{2} k^{2} } = \frac{1}{\sqrt{2}} \sum_{m\in\mathbb{Z}} e^{-\frac{\pi}{2} m^{2} } ,  
\end{align}
where we used the Poisson summation formula to get the second equality. Also, $\sum_{m\in 2\mathbb{Z} + 1}  e^{-\frac{\pi}{2} m^{2} }$ is given by 
\begin{align}
\sum_{m\in 2\mathbb{Z} + 1 }  e^{-\frac{\pi}{2} m^{2} } &= \sum_{m\in \mathbb{Z}}  e^{-\frac{\pi}{2} m^{2} } - \sum_{m\in 2\mathbb{Z}}  e^{-\frac{\pi}{2} m^{2} } = \Big{(} 1 - \frac{1}{\sqrt{2}} \Big{)} \sum_{m\in \mathbb{Z}}  e^{-\frac{\pi}{2} m^{2} } . 
\end{align}
Putting everything together, we find that 
\begin{align}
\frac{ c_{1}(0,0) }{c_{0}( 0,0 )} &= \frac{ \sum_{m\in 2\mathbb{Z}}  e^{-\frac{\pi}{2} m^{2} } - \sum_{m\in 2\mathbb{Z} +1 }  e^{-\frac{\pi}{2} m^{2} } }{ \sum_{m\in \mathbb{Z}}  e^{-\frac{\pi}{2} m^{2} }  } = \frac{1}{\sqrt{2}} - \Big{(} 1 - \frac{1}{\sqrt{2}} \Big{)} = \sqrt{2} - 1 = \tan\Big{(} \frac{\pi}{8} \Big{)}, 
\end{align}
and thus 
\begin{align}
|\psi(0,0)\rangle &\propto \cos\Big{(} \frac{\pi}{8} \Big{)} |0_{\textrm{gkp}}^{(\textrm{sq})}\rangle + \sin\Big{(} \frac{\pi}{8} \Big{)} |1_{\textrm{gkp}}^{(\textrm{sq})}\rangle = |H_{\textrm{gkp}}^{(\textrm{sq})}\rangle, 
\end{align}
confirming the conclusion in Eq.\ \eqref{eq:magic GKP state origin from words} by an explicit calculation. 

\begin{figure}[t!]
\centering
\includegraphics[width=3.8in]{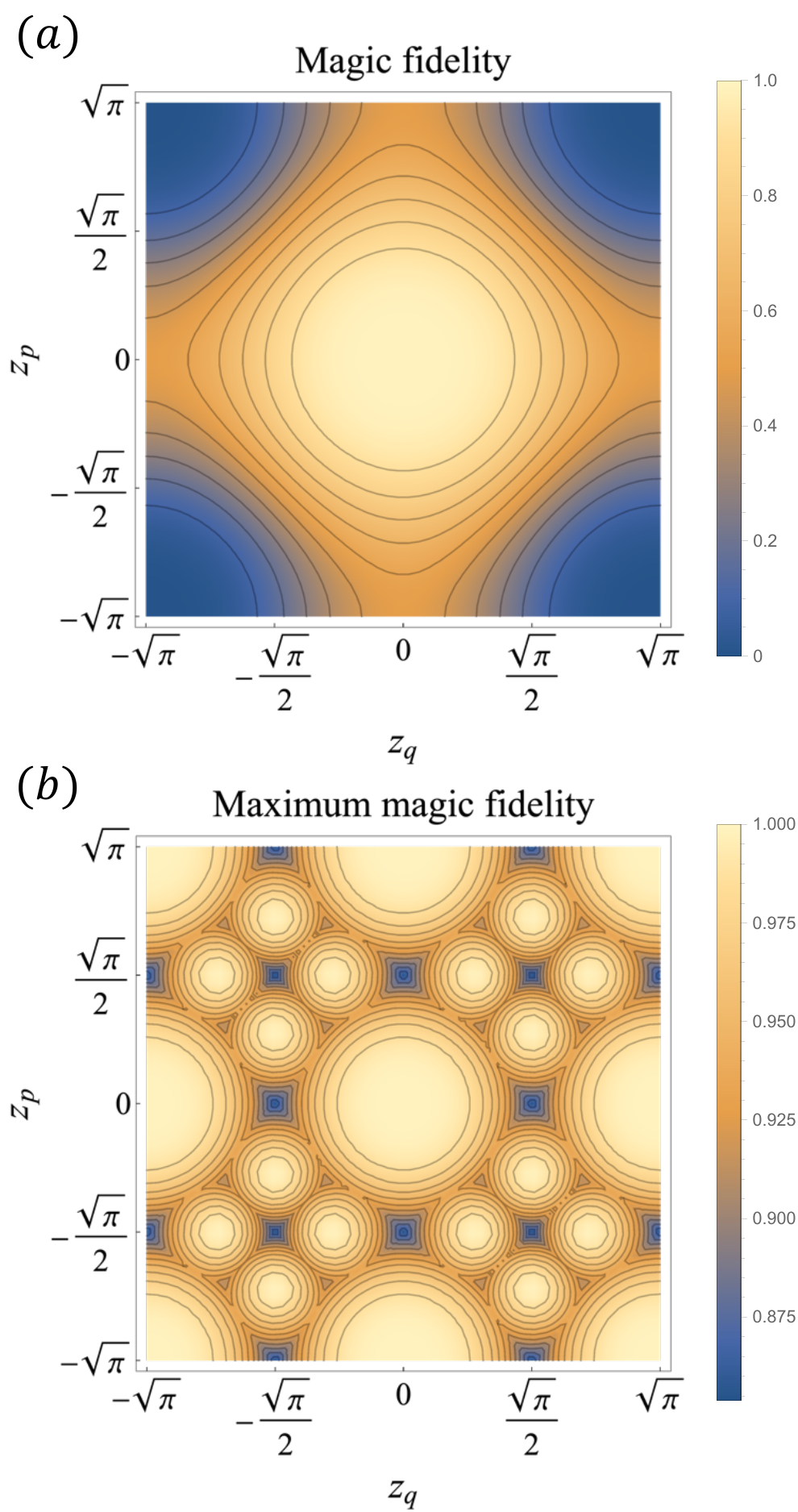}
\caption{(a) Fidelity of between the output state $|\psi(z_{q},z_{p})\rangle = \hat{K}_{\textrm{EC}}^{(q)}(z_{q}) \hat{K}_{\textrm{EC}}^{(p)}(z_{p})  |0\rangle $ and the ideal H-type magic state $|H\rangle = \cos(\frac{\pi}{8} ) |0\rangle + \sin( \frac{\pi}{8} )|1\rangle$. (b) [Reproduction of Fig.\ 2(a) in PRL \textbf{123}, 200502 (2019)] Maximum fidelity between the output state $|\psi(z_{q},z_{p})\rangle$ and $12$ different magic states that are equivalent to the magic state $|H\rangle$ up to a Clifford operation. The $12$ equivalent H-type magic states are $ \hat{S}^{n} |H\rangle$, $ \hat{S}^{n} \hat{X} |H\rangle$, $ \hat{S}^{n} \hat{H} \hat{S}^{\dagger} |H\rangle$  with $n\in \lbrace 0,1,2,3\rbrace$, where $\hat{S}$ is the phase gate, $\hat{X}$ is the Pauli X operator, and $\hat{H}$ is the Hadamard operator.      }
\label{fig:GKP magic fidelity}
\end{figure}

In the most general case where $(z_{q},z_{p})$ is not necessarily given by $(0,0)$, we can compute the fidelity between the output state $|\psi(z_{q},z_{p})\rangle$ with the ideal H-type magic state $|H_{\textrm{gkp}}^{(\textrm{sq})}\rangle$, i.e., 
\begin{align}
F_{H}(z_{q},z_{p}) &\equiv \frac{ |\langle \psi(z_{q},z_{p}) | H_{\textrm{gkp}}^{(\textrm{sq})}\rangle |^{2}  }{ |\langle \psi(z_{q},z_{p}) | \psi(z_{q},z_{p})\rangle|^{2} }= \frac{ |\cos(\frac{\pi}{8}) c_{0}(z_{q},z_{p}) + \sin(\frac{\pi}{8}) c_{1}(z_{q},z_{p})  |^{2} }{ |c_{0}(z_{q},z_{p})|^{2} + |c_{1}(z_{q},z_{p})|^{2} } . 
\end{align}  
The numerically evaluated magic fidelity $F_{H}(z_{q},z_{p})$ is plotted in Fig.\ \ref{fig:GKP magic fidelity}(a). While one might expect that the magic fidelity $F_{H}(z_{q},z_{p})$ is periodic in $z_{q}$ and $z_{p}$ with period $\sqrt{\pi}$, the period is in fact given by $2\sqrt{\pi}$. The reason for this is that in the shift correction, the sizes of the correction shifts were given by $z_{q}$ and $z_{p}$, instead of $R_{\sqrt{\pi}}(z_{q})$ and $R_{\sqrt{\pi}}(z_{p})$. Note also that the magic fidelity $F_{H}(z_{q},z_{p})$ vanishes when $(z_{q},z_{p}) = (\pm \sqrt{\pi},\pm\sqrt{\pi})$. This means that at these points, the system is in the other magic state $|-H_{\textrm{gkp}}^{(\textrm{sq})}\rangle = \hat{Y}_{\textrm{gkp}} |H_{\textrm{gkp}}^{(\textrm{sq})}\rangle$, which is orthogonal to $|H_{\textrm{gkp}}^{(\textrm{sq})}\rangle$. Note that this other magic state is equivalent to the H-type magic state $|H_{\textrm{gkp}}^{(\textrm{sq})}\rangle$ up to a Clifford operation. Thus, the output states $| \psi(\pm \sqrt{\pi},\pm \sqrt{\pi})\rangle$ are as resourceful as the ideal output state $| \psi(0,0)\rangle$. Hence, the magic fidelity $F_{H}(z_{q},z_{p})$ can be misleading as it gives an impression that the states $| \psi(\pm \sqrt{\pi},\pm \sqrt{\pi})\rangle$ are completely useless.          

Note that there are $12$ different magic states that are equivalent to the H-type magic state $|H\rangle$ up to a Clifford operation. These $12$ magic states are explicitly given by 
\begin{align}
\hat{S}^{n}|H\rangle &= \cos\Big{(} \frac{\pi}{8} \Big{)} |0\rangle + i^{n} \sin\Big{(} \frac{\pi}{8} \Big{)} |1\rangle, 
\nonumber\\
\hat{S}^{n}\hat{X}|H\rangle &= \sin\Big{(} \frac{\pi}{8} \Big{)} |0\rangle + i^{n} \cos\Big{(} \frac{\pi}{8} \Big{)} |1\rangle, 
\nonumber\\
\hat{S}^{n}\hat{H}\hat{S}^{\dagger}|H\rangle &= \frac{1}{\sqrt{2}} e^{-i\frac{\pi}{8}} |0\rangle +  \frac{i^{n}}{\sqrt{2}}  e^{i\frac{\pi}{8}} |1\rangle , \,\,\, \textrm{where}\,\,\, n \in \lbrace 0,1,2,3 \rbrace. \label{eq:H-type magic state list}
\end{align}    
In Fig.\ \ref{fig:GKP magic fidelity}(b), we plot the maximum fidelity between the output state $| \psi(z_{q},z_{p})\rangle$ and the $12$ equivalent H-type magic states in Eq.\ \eqref{eq:H-type magic state list} and thereby reproduce Fig.\ 2(a) in Ref.\ \cite{Baragiola2019} (i.e., the maximum magic fidelity $F_{H}^{\textrm{max}}(z_{q},z_{p})$). In this case, the maximum magic fidelity is always larger than $\frac{1}{2}(1+\frac{1}{\sqrt{2}}) = 0.8535\cdots$ for all values of $(z_{q},z_{p})\in\mathbb{R}^{2}$. Note that any state that has a fidelity larger than $\frac{1}{2}(1+\frac{1}{\sqrt{2}})$ with one of the $12$ H-type magic states can be distilled to an ideal H-type magic state by using Clifford operations and Pauli measurements via a magic state distillation protocol \cite{Reichardt2005}. This means that we do not need to discard any measurement outcome $(z_{q},z_{p})$ because all the output states $| \psi(z_{q},z_{p})\rangle$ are distillable to an ideal magic state by using Clifford operations and Pauli measurements which, in the case of the GKP code, can be implemented by using Gaussian operations and homodyne measurements.

\subsection{Generalizations of GKP codes} 
\label{subsection:Generalization of GKP codes}

We have so far discussed a single-mode GKP code that has a square-lattice structure and encodes a qubit. As was realized in Ref.\ \cite{Gottesman2001} and further explored in Ref.\ \cite{Harrington2001,Harrington2004}, it is possible to define a single-mode GKP code that has a different lattice structure (most importantly, a hexagonal-lattice structure) than the square-lattice structure. Furthermore, it is also possible to encode a $d$-dimensional qudit into an oscillator as well as to define a multi-mode GKP code that encodes logical qudits collectively over multiple bosonic modes. We will review these generalizations below.   

\subsubsection{Generalized single-mode GKP codes}

Recall that the stabilizers of the square-lattice GKP code are given by
\begin{align}
\hat{S}_{q} &= \exp[ i 2\sqrt{\pi} \hat{q} ], 
\nonumber\\
\hat{S}_{p} &= \exp[ -i 2\sqrt{\pi} \hat{p} ] . 
\end{align}
More generally, we can consider the following form of the stabilizers: 
\begin{align}
\hat{S}_{q}^{(\boldsymbol{S})} &= \exp[ i\sqrt{2\pi} (S_{qq} \hat{q} + S_{qp} \hat{p}  ) ] , 
\nonumber\\
\hat{S}_{p}^{(\boldsymbol{S})} &= \exp[ - i\sqrt{2\pi} (S_{pq} \hat{q} + S_{pp} \hat{p}  ) ] , 
\end{align}
where the $2\times 2$ matrix $\boldsymbol{S}$ is defined as follows 
\begin{align}
\boldsymbol{S} &\equiv \begin{bmatrix}
S_{qq} & S_{qp} \\
S_{pq} & S_{pp} \\
\end{bmatrix}. 
\end{align}
In the case of the square-lattice GKP code, we have 
\begin{align}
\boldsymbol{S} = \boldsymbol{S}^{(\textrm{sq})} &\equiv \sqrt{2} \boldsymbol{I}_{2}, 
\end{align}
where $\boldsymbol{I}_{n}$ is an $n\times n$ identity matrix. 

In the general case, we need to understand under which condition on $\boldsymbol{S}$ the two stabilizers $\hat{S}_{q}^{(\boldsymbol{S})} $ and $\hat{S}_{p}^{(\boldsymbol{S})} $ commute with each other. To do so, recall the Baker–Campbell–Hausdorff formula, i.e., 
\begin{align}
e^{\hat{A}}e^{\hat{B}}e^{-\hat{A}} = \exp\Big{[} \hat{B} + [\hat{A},\hat{B}] + \frac{1}{2!}[\hat{A},[\hat{A},\hat{B}]] + \frac{1}{3!}[\hat{A},[\hat{A},[\hat{A},\hat{B}]]] \cdots \Big{]} , \label{eq:BCH formula}
\end{align}
and note that 
\begin{align}
\hat{S}_{q}^{(\boldsymbol{S})}  \hat{S}_{p}^{(\boldsymbol{S})}   ( \hat{S}_{q}^{(\boldsymbol{S})}  )^{\dagger} &= \exp\Big{[} - i\sqrt{2\pi} (S_{pq} \hat{q} + S_{pp} \hat{p}  ) + [ i\sqrt{2\pi} (S_{qq} \hat{q} + S_{qp} \hat{p}  ),  - i\sqrt{2\pi} (S_{pq} \hat{q} + S_{pp} \hat{p}  ) ] \Big{]} 
\nonumber\\
&= \exp\Big{[} - i\sqrt{2\pi} (S_{pq} \hat{q} + S_{pp} \hat{p}  ) + 2\pi i ( S_{qq}S_{pp} - S_{qp}S_{pq} ) ] \Big{]} 
\nonumber\\
&= \hat{S}_{p}^{(\boldsymbol{S})}  \exp[ i2\pi \cdot  \textrm{det}(\boldsymbol{S}) ] , 
\end{align}
or equivalently, 
\begin{align}
\hat{S}_{q}^{(\boldsymbol{S})}  \hat{S}_{p}^{(\boldsymbol{S})}   &=  \hat{S}_{p}^{(\boldsymbol{S})}  \hat{S}_{q}^{(\boldsymbol{S})}  \exp[ i2\pi \cdot  \textrm{det}(\boldsymbol{S}) ]. 
\end{align}
Hence, for the two stabilizers $\hat{S}_{q}^{(\boldsymbol{S})}$ and $\hat{S}_{p}^{(\boldsymbol{S})} $ to commute with each other, the matrix $\boldsymbol{S}$ should satisfy 
\begin{align}
\textrm{det}( \boldsymbol{S} ) \in \mathbb{Z}. \label{eq:condition on the matrix single-mode GKP}
\end{align}
For instance, in the case of the square-lattice GKP code, $\boldsymbol{S}^{(\textrm{sq})}$ satisfies
\begin{align}
\textrm{det}( \boldsymbol{S}^{(\textrm{sq})} ) = \textrm{det}( \sqrt{2} \boldsymbol{I}_{2} ) = 2 \in \mathbb{Z}. 
\end{align}
As will be made clear below, it is not a coincidence that the integer $\textrm{det}( \boldsymbol{S}^{(\textrm{sq})} ) = 2$ equals the dimension of the code space of the square-lattice GKP code $\mathcal{C}_{\textrm{gkp}}^{(\textrm{sq})}$. 

Let us first consider the case with $\textrm{det}( \boldsymbol{S} ) = 1$. Even more specifically, consider the simplest case $\boldsymbol{S} = \boldsymbol{I}_{2}$. Then, the two stabilizers are given by 
\begin{align}
\hat{S}_{q}^{ (\boldsymbol{I}_{2}) } &= e^{i\sqrt{2\pi} \hat{q} } , 
\nonumber\\  
\hat{S}_{p}^{ (\boldsymbol{I}_{2}) } &= e^{i\sqrt{2\pi} \hat{p} }. 
\end{align}
We will show that the following state 
\begin{align}
|\textrm{GKP}\rangle &= \sum_{n\in \mathbb{Z} } |\hat{q} = \sqrt{2\pi}n\rangle 
\end{align}
is the unique state (up to an overall phase and normalization) that is stabilized by the two stabilizers $\hat{S}_{q}^{ (\boldsymbol{I}_{2}) } $ and $\hat{S}_{p}^{ (\boldsymbol{I}_{2}) } $. Note that if a state $|\psi\rangle = \int_{-\infty}^{\infty} dq \psi(q) |\hat{q} =q\rangle$ is stabilized by $\hat{S}_{q}^{ (\boldsymbol{I}_{2}) } $, i.e., $\hat{S}_{q}^{ (\boldsymbol{I}_{2}) } |\psi\rangle = |\psi\rangle$, we have 
\begin{align}
\psi(q) = \langle \hat{q} = q | \hat{S}_{q}^{ (\boldsymbol{I}_{2}) }   |\psi\rangle &= \langle \hat{q} = q  | e^{i\sqrt{2\pi} \hat{q} }  \int_{-\infty}^{\infty} dq' \psi(q') |\hat{q} =q'\rangle = \psi(q)e^{i\sqrt{2\pi} q}. 
\end{align}
Thus, for any $q$ such that $e^{i\sqrt{2\pi} q} \neq 1$ (or equivalently, for any $\hat{q}$ such that $\hat{q}  \neq 0 $ mod $\sqrt{\pi}$), $\psi(q)$ has to vanish. Hence, we are left with 
\begin{align}
|\psi\rangle &= \sum_{n\in\mathbb{Z}} \psi( \sqrt{2\pi}n ) |\hat{q} = \sqrt{2\pi}n\rangle = \sum_{n\in\mathbb{Z}} \psi_{n} |\hat{q} = \sqrt{2\pi}n\rangle, 
\end{align}
where $\psi_{n} \equiv \psi( \sqrt{2\pi}n )$. Further requiring that the state $|\psi\rangle $ should be stabilized by the other stabilizer $\hat{S}_{p}^{ (\boldsymbol{I}_{2}) } = e^{i\sqrt{2\pi} \hat{p} }$, we have 
\begin{align}
\psi_{n} &= \langle \hat{q} = \sqrt{2\pi} n |\psi\rangle 
\nonumber\\
&= \langle \hat{q}= \sqrt{2\pi} n | \hat{S}_{p}^{(\boldsymbol{I}_{2})} |\psi\rangle 
\nonumber\\
&= \langle \hat{q} = \sqrt{2\pi}n  | e^{-i\sqrt{2\pi} \hat{p} } \sum_{m\in\mathbb{Z}} \psi_{m} |\hat{q} = \sqrt{2\pi}m\rangle 
\nonumber\\
&= \sum_{m\in\mathbb{Z}}  \psi_{m} \langle \hat{q} = \sqrt{2\pi}n   |\hat{q} = \sqrt{2\pi}(m+1)\rangle  = \psi_{n-1}, \,\,\, \textrm{for all}\,\,\, n\in\mathbb{Z}. 
\end{align}  
We can therefore conclude that $\psi_{n}$ is independent on $n$ and the state $|\psi\rangle$ has to be equivalent to the state $|\textrm{GKP}\rangle$ up to an overall phase and normalization, i.e., 
\begin{align}
|\psi\rangle &= \sum_{n\in\mathbb{Z}} \psi_{n} |\hat{q} = \sqrt{2\pi}n\rangle \propto  \sum_{n\in\mathbb{Z}} |\hat{q} = \sqrt{2\pi}n\rangle = |\textrm{GKP}\rangle. 
\end{align}
Also, we will refer to the state $|\textrm{GKP}\rangle$ as the canonical GKP state. 

Let us move on the the case where $\textrm{det}( \boldsymbol{S} ) = 1$, but $\boldsymbol{S}$ is not necessarily given by the identity matrix $\boldsymbol{I}_{2}$. Similarly as above, the state that is stabilized by the stabilizers $\hat{S}_{q}^{(\boldsymbol{S})}$ and $\hat{S}_{p}^{(\boldsymbol{S})}$ is unique up to an overall phase and normalization. In particular, we will show that the unique state $|\textrm{GKP}_{\boldsymbol{S} }\rangle$ can be obtained by applying a Gaussian operation $\hat{U}_{\boldsymbol{S}^{-1}}$ to the canonical GKP state $|\textrm{GKP}\rangle$, i.e., 
\begin{align}
|\textrm{GKP}_{\boldsymbol{S} }\rangle &= \hat{U}_{\boldsymbol{S}^{-1}} |\textrm{GKP}\rangle . \label{eq:general GKP state from the canonical GKP state}
\end{align}   
This is a very desirable property because it means that the only non-Gaussian resource needed to prepare the state $|\textrm{GKP}_{\boldsymbol{S} }\rangle$ is the preparation of the canonical GKP state. Everything else can be done by using a Gaussian operation. To see why Eq.\ \eqref{eq:general GKP state from the canonical GKP state} holds, recall that a Gaussian operation $\hat{U}_{\boldsymbol{S}}$ satisfies the following property: 
\begin{align}
\hat{U}_{\boldsymbol{S}}^{\dagger} \boldsymbol{\hat{x}} \hat{U}_{\boldsymbol{S}} = \boldsymbol{S} \boldsymbol{\hat{x}}, \label{eq:property of the Gaussian operation with S}
\end{align}
where $\boldsymbol{\hat{x}} \equiv (\hat{q},\hat{p})^{T}$ (see Appendix \ref{appendix:Gaussian states, unitaries, and channels}). Thus, we can see that 
\begin{align}
\hat{S}_{q}^{(\boldsymbol{S})} |\textrm{GKP}_{\boldsymbol{S} }\rangle = \hat{S}_{q}^{(\boldsymbol{S})} \hat{U}_{\boldsymbol{S}^{-1}} |\textrm{GKP}\rangle &= \hat{U}_{\boldsymbol{S}^{-1}} \hat{U}_{\boldsymbol{S}^{-1}}^{\dagger} \hat{S}_{q}^{(\boldsymbol{S})} \hat{U}_{\boldsymbol{S}^{-1}} |\textrm{GKP}\rangle 
\nonumber\\
&= \hat{U}_{\boldsymbol{S}^{-1}} \hat{U}_{\boldsymbol{S}^{-1}}^{\dagger} \exp[ i\sqrt{\pi} (\boldsymbol{S}\boldsymbol{\hat{x}})_{1} ]\hat{U}_{\boldsymbol{S}^{-1}} |\textrm{GKP}\rangle 
\nonumber\\
&= \hat{U}_{\boldsymbol{S}^{-1}}  \exp[ i\sqrt{\pi} (\boldsymbol{S}  \boldsymbol{S}^{-1} \boldsymbol{\hat{x}})_{1} ]  |\textrm{GKP}\rangle 
\nonumber\\
&= \hat{U}_{\boldsymbol{S}^{-1}}    \hat{S}_{q}^{(\boldsymbol{I}_{2})}  |\textrm{GKP}\rangle = \hat{U}_{\boldsymbol{S}^{-1}}    |\textrm{GKP}\rangle = |\textrm{GKP}_{\boldsymbol{S} }\rangle, 
\end{align}  
that is, the state $|\textrm{GKP}_{\boldsymbol{S} }\rangle$ is stabilized by the stabilizer $\hat{S}_{q}^{(\boldsymbol{S})}$. Note that we used Eq.\ \eqref{eq:property of the Gaussian operation with S} to derive the fourth equality, and the fact that the canonical GKP state $ |\textrm{GKP}\rangle$ is stabilized by $ \hat{S}_{q}^{(\boldsymbol{I}_{2})}$ to derive the sixth equality. Similarly, one can also show that the state $|\textrm{GKP}_{\boldsymbol{S} }\rangle$ is stabilized by the other stabilizer $\hat{S}_{p}^{(\boldsymbol{S})}$. Hence, $|\textrm{GKP}_{\boldsymbol{S} }\rangle = \hat{U}_{\boldsymbol{S}^{-1}} |\textrm{GKP}\rangle$ is indeed the unique state (up to an overall phase and normalization) that is stabilized by the stabilizers $\hat{S}_{q}^{(\boldsymbol{S})}$ and $\hat{S}_{p}^{(\boldsymbol{S})}$. Note that the cases with $\textrm{det}( \boldsymbol{S} ) = 1$ are trivial in the context of quantum error correction since there is only one logical state encoded in the code space $\mathcal{C}_{\textrm{gkp}}^{( \boldsymbol{S} )}$. However, the basic properties we have discussed so far will be useful for understanding more interesting cases with $\textrm{dim}(\mathcal{C}_{\textrm{gkp}}^{( \boldsymbol{S} )}) \ge 2$.  

With these basic properties in our hands, we are now ready to analyze the most general case with $\textrm{det}( \boldsymbol{S} ) = d$, where $d$ is an integer such that $d\ge 2$. Consider the following operators
\begin{align}
\hat{Z}_{\textrm{gkp}}^{(\boldsymbol{S})} &\equiv ( \hat{S}_{q}^{(\boldsymbol{S})} )^{\frac{1}{d}}   =  \exp\Big{[} i\frac{ \sqrt{2\pi}}{d} (S_{qq} \hat{q} + S_{qp} \hat{p}  ) \Big{]} , 
\nonumber\\ 
\hat{X}_{\textrm{gkp}}^{(\boldsymbol{S})} &\equiv ( \hat{S}_{p}^{(\boldsymbol{S})} )^{\frac{1}{d}}   =  \exp\Big{[} i\frac{ \sqrt{2\pi}}{d} (S_{pq} \hat{q} + S_{pp} \hat{p}  ) \Big{]} .  
\end{align}
Here, we will show that the dimension of the code space $\mathcal{C}_{\textrm{gkp}}^{( \boldsymbol{S} )}$ equals $d$ when $\textrm{det}( \boldsymbol{S} ) = d$. Furthermore, we will show that the operators $\hat{Z}_{\textrm{gkp}}^{(\boldsymbol{S})}$ and $\hat{X}_{\textrm{gkp}}^{(\boldsymbol{S})}$ act as the logical Pauli Z and X operators (for a qudit) on the encoded states, respectively.  

Note that the two operators $\hat{Z}_{\textrm{gkp}}^{(\boldsymbol{S})}$ and $\hat{X}_{\textrm{gkp}}^{(\boldsymbol{S})}$ commute with the stabilizers $\hat{S}_{q}^{(\boldsymbol{S})}$ and $\hat{S}_{p}^{(\boldsymbol{S})}$, as can be verified by using the BCH formula (see Eq.\ \eqref{eq:BCH formula}). Thus, it makes sense to consider a state $| 0_{\textrm{gkp}}^{( \boldsymbol{S} )}\rangle$ that is simultaneously stabilized by the stabilizer $\hat{S}_{p}^{(\boldsymbol{S})}$ and the operator $\hat{Z}_{\textrm{gkp}}^{(\boldsymbol{S})} = ( \hat{S}_{q}^{(\boldsymbol{S})} )^{\frac{1}{d}} $, i.e., 
\begin{align}
\hat{S}_{p}^{(\boldsymbol{S})} | 0_{\textrm{gkp}}^{( \boldsymbol{S} )}\rangle =\hat{Z}_{\textrm{gkp}}^{(\boldsymbol{S})}  | 0_{\textrm{gkp}}^{( \boldsymbol{S} )}\rangle =  | 0_{\textrm{gkp}}^{( \boldsymbol{S} )}\rangle.  
\end{align}
The state $| 0_{\textrm{gkp}}^{( \boldsymbol{S} )}\rangle$ can also be regarded as a state that is stabilized by $\hat{S}_{q}^{(\boldsymbol{S'})} = \hat{Z}_{\textrm{gkp}}^{(\boldsymbol{S})}$ and $\hat{S}_{p}^{(\boldsymbol{S'})} = \hat{S}_{p}^{(\boldsymbol{S})}$, where $\boldsymbol{S'}$ is given by
\begin{align}
\boldsymbol{S'} = \textrm{diag}\Big{(}\frac{1}{d},1\Big{)} \cdot \boldsymbol{S} 
\end{align}
Since $\textrm{det}( \boldsymbol{S'} ) = \frac{1}{d} \textrm{det}( \boldsymbol{S} ) = 1$, the state $| 0_{\textrm{gkp}}^{( \boldsymbol{S} )}\rangle$ is unique up to an overall phase and normalization, as discussed above. Furthermore, since $\hat{S}_{q}^{(\boldsymbol{S})} = (\hat{Z}_{\textrm{gkp}}^{(\boldsymbol{S})} )^{d}$, the state $| 0_{\textrm{gkp}}^{( \boldsymbol{S} )}\rangle$ is also stabilized by the other stabilizer $\hat{S}_{q}^{(\boldsymbol{S})}$ of the original code space. Thus, the state $| 0_{\textrm{gkp}}^{( \boldsymbol{S} )}\rangle$ is a valid logical state, i.e., $| 0_{\textrm{gkp}}^{( \boldsymbol{S} )}\rangle \in \mathcal{C}_{\textrm{gkp}}^{( \boldsymbol{S} )}$. From now on, we will refer to the state $| 0_{\textrm{gkp}}^{( \boldsymbol{S} )}\rangle$ as the (encoded) computational zero state and the operator $\hat{Z}_{\textrm{gkp}}^{(\boldsymbol{S})}$ as the logical Z operator.  

Note that while the two operators $\hat{Z}_{\textrm{gkp}}^{(\boldsymbol{S})}$ and $\hat{X}_{\textrm{gkp}}^{(\boldsymbol{S})}$ commute with the stabilizers $\hat{S}_{q}^{(\boldsymbol{S})}$ and $\hat{S}_{p}^{(\boldsymbol{S})}$, they do not commute with each other.
\begin{align}
\hat{Z}_{\textrm{gkp}}^{(\boldsymbol{S})} \hat{X}_{\textrm{gkp}}^{(\boldsymbol{S})} = \hat{X}_{\textrm{gkp}}^{(\boldsymbol{S})} \hat{Z}_{\textrm{gkp}}^{(\boldsymbol{S})} \exp\Big{[} i\frac{2\pi}{d^{2}} \textrm{det}(\boldsymbol{S}) \Big{]} = \hat{X}_{\textrm{gkp}}^{(\boldsymbol{S})} \hat{Z}_{\textrm{gkp}}^{(\boldsymbol{S})} \exp\Big{[} i\frac{2\pi}{d} \Big{]}. \label{eq:commutation relation between qudit GKP Pauli operators}
\end{align}
In fact, they satisfy the same commutation relations that the qudit Pauli Z and the Pauli X operators satisfy. For this reason, we refer to the operator $\hat{X}_{\textrm{gkp}}^{(\boldsymbol{S})}$ as the logical Pauli X operator. Also, we can define other computational basis states as follows: 
\begin{align}
|\mu_{\textrm{gkp}}^{ ( \boldsymbol{S} ) } \rangle &\equiv  (\hat{X}_{\textrm{gkp}}^{(\boldsymbol{S})})^{\mu} |0_{\textrm{gkp}}^{ ( \boldsymbol{S} ) } \rangle,  \,\,\, \textrm{where}\,\,\, \mu \in \mathbb{Z}_{d} = \lbrace 0, \cdots, d-1\rbrace. 
\end{align}
Since the logical Pauli X operator $\hat{X}_{\textrm{gkp}}^{(\boldsymbol{S})}$ commutes with the stabilizers, the state $|\mu_{\textrm{gkp}}^{ ( \boldsymbol{S} ) } \rangle$ is also stabilized by the stabilizers and therefore is a valid logical state. Furthermore, these states indeed behave the same way as the usual computational basis states for a qudit. In particular, $|\mu_{\textrm{gkp}}^{ ( \boldsymbol{S} ) } \rangle$ is an eigenstate of the logical Pauli Z operator $\hat{Z}_{\textrm{gkp}}^{(\boldsymbol{S})}$ with an eigenvalue $\hat{Z}_{\textrm{gkp}}^{(\boldsymbol{S})} = e^{i\frac{2\pi}{d}\mu}$. 
\begin{align}
\hat{Z}_{\textrm{gkp}}^{(\boldsymbol{S})} |\mu_{\textrm{gkp}}^{ ( \boldsymbol{S} ) } \rangle &= \hat{Z}_{\textrm{gkp}}^{(\boldsymbol{S})}  (\hat{X}_{\textrm{gkp}}^{(\boldsymbol{S})})^{\mu} |0_{\textrm{gkp}}^{ ( \boldsymbol{S} ) } \rangle 
\nonumber\\
&= e^{i\frac{2\pi}{d}\mu} (\hat{X}_{\textrm{gkp}}^{(\boldsymbol{S})})^{\mu} \hat{Z}_{\textrm{gkp}}^{(\boldsymbol{S})}   |0_{\textrm{gkp}}^{ ( \boldsymbol{S} ) } \rangle 
\nonumber\\
&=  e^{i\frac{2\pi}{d}\mu} (\hat{X}_{\textrm{gkp}}^{(\boldsymbol{S})})^{\mu}   |0_{\textrm{gkp}}^{ ( \boldsymbol{S} ) } \rangle  
\nonumber\\
&= e^{i\frac{2\pi}{d}\mu}|\mu_{\textrm{gkp}}^{ ( \boldsymbol{S} ) } \rangle. 
\end{align} 
The code space $\mathcal{C}_{\textrm{gkp}}^{ (\boldsymbol{S}) }$ is thus the span of the computational basis set $\lbrace |0_{\textrm{gkp}}^{ ( \boldsymbol{S} ) } \rangle, \cdots, |(d-1)_{\textrm{gkp}}^{ ( \boldsymbol{S} ) } \rangle \rbrace$ and is $d$-dimensional. 

\subsubsection{Maximum likelihood decoding of a general single-mode GKP code} 

Let us now analyze the error-correcting capability of the general single-mode GKP code $\mathcal{C}_{\textrm{gkp}}^{( \boldsymbol{S} )}$ encoding a qudit into an oscillator, i.e., $\textrm{det}( \boldsymbol{S} ) = d$. Recall that the stabilizers are given by
\begin{align}
\hat{S}_{q}^{(\boldsymbol{S})} &= \exp[ i\sqrt{2\pi} (S_{qq} \hat{q} + S_{qp} \hat{p}  ) ] = \exp[ i\sqrt{2\pi} (\boldsymbol{S}\boldsymbol{\hat{x}} )_{1} ] , 
\nonumber\\
\hat{S}_{p}^{(\boldsymbol{S})} &= \exp[ - i\sqrt{2\pi} (S_{pq} \hat{q} + S_{pp} \hat{p}  ) ]  = \exp[ i\sqrt{2\pi} (\boldsymbol{S}\boldsymbol{\hat{x}} )_{2} ], 
\end{align}
where $\boldsymbol{\hat{x}} = (\hat{q},\hat{p})^{T}$. Thus, by measuring the stabilizers in the error correction protocol, we can measure the two quadrature operators $(\boldsymbol{S}\boldsymbol{\hat{x}} )_{1} $ and $(\boldsymbol{S}\boldsymbol{\hat{x}} )_{2} $ modulo $\sqrt{2\pi}$. Let us assume that the system is initially in the code space and thus satisfies $\boldsymbol{S}\boldsymbol{\hat{x}} =(0,0)^{T}$ modulo $\sqrt{2\pi}$. Then, consider the Gaussian random shift error $\mathcal{N}_{B_{2}}[\sigma]$, i.e., 
\begin{align}
\boldsymbol{\hat{x}'} = \boldsymbol{\hat{x}}  + \boldsymbol{\xi} , 
\end{align}
where the stochastic noise $\boldsymbol{\xi} =(\xi_{q},\xi_{p})^{T} $ is drawn from an independent and identically distributed Gaussian distribution $(\xi_{q},\xi_{p})\sim_{\textrm{iid}} \mathcal{N}(0, \sigma^{2})$. By measuring the two stabilizers, we can measure $\boldsymbol{S}\boldsymbol{\hat{x}'}$ modulo $\sqrt{2\pi}$. Let us assume that the measurement outcome is $\boldsymbol{z} = (z_{q},z_{p})^{T}$ modulo $\sqrt{2\pi}$. Then, we have
\begin{align}
\boldsymbol{S}\boldsymbol{\hat{x}'}  =  \boldsymbol{S}( \boldsymbol{\hat{x}} + \boldsymbol{\xi} )  = \boldsymbol{S}\boldsymbol{\xi} = \boldsymbol{z} - \sqrt{2\pi}\boldsymbol{n}, 
\end{align}
for some $\boldsymbol{n} = (n_{q},n_{p})^{T} \in \mathbb{Z}^{2}$. The second equality is due to the fact that $\boldsymbol{S}\boldsymbol{\hat{x}} = (0,0)^{T}$ modulo $\sqrt{2\pi}$. As a result, we can conclude that the random shift $\boldsymbol{\xi}$ is given by 
\begin{align}
\boldsymbol{\xi} = \boldsymbol{S}^{-1}\boldsymbol{z} - \sqrt{2\pi}\boldsymbol{S}^{-1}\boldsymbol{n}, 
\end{align}
for some $\boldsymbol{n} = (n_{q},n_{p})^{T} \in \mathbb{Z}^{2}$. In the case of Gaussian random shift errors, smaller shifts are more likely to occur than larger shifts. Thus, we infer that the random noise $\boldsymbol{\xi}$ is the one that has the smallest length among all possible error candidates that are compatible with the stabilizer measurement outcomes. That is, we infer that the random noise is 
\begin{align}
\boldsymbol{\bar{\xi}} = \boldsymbol{S}^{-1}\boldsymbol{z} - \sqrt{2\pi}\boldsymbol{S}^{-1}\boldsymbol{n}^{\star}( \boldsymbol{z} ), 
\end{align}
where $\boldsymbol{n}^{\star}( \boldsymbol{z} )$ is defined as 
\begin{align}
\boldsymbol{n}^{\star}( \boldsymbol{z} ) &\equiv \textrm{argmin}_{\boldsymbol{n}  \in \mathbb{Z}^{2} } | \boldsymbol{S}^{-1}\boldsymbol{z} - \sqrt{2\pi}\boldsymbol{S}^{-1}\boldsymbol{n} |. \label{eq:closest vector decoding}
\end{align}

Note that the optimization in Eq.\ \eqref{eq:closest vector decoding} is equivalent to finding a lattice point characterized by $\boldsymbol{n}^{\star}( \boldsymbol{z} )$ that is closest to the given vector $\boldsymbol{S}^{-1}\boldsymbol{z} $ in a $2$-dimensional lattice generated by the generator matrix $\sqrt{2\pi}\boldsymbol{S}^{-1}$, i.e., the closest vector problem. For this reason, it is useful to consider the notion of the Voronoi cell associated with the lattice generated by $\boldsymbol{L}$ and a lattice point $\boldsymbol{n}$: 
\begin{align}
\textrm{Vor}[ \boldsymbol{L} ] ( \boldsymbol{n} ) \equiv \lbrace \boldsymbol{r} \in \mathbb{R}^{\textrm{dim}(\boldsymbol{L})}  : |\boldsymbol{r} - \boldsymbol{L}\boldsymbol{n} | \le |\boldsymbol{r} - \boldsymbol{L}\boldsymbol{m}| \textrm{ for all }\boldsymbol{m} \in \mathbb{Z}^{\textrm{dim}(\boldsymbol{L})} \rbrace  . 
\end{align}
Thus, the Voronoi cell $\textrm{Vor}[ \boldsymbol{L} ] ( \boldsymbol{n} )$ is the set of all vectors whose closest lattice point is characterized by $\boldsymbol{n}$. Let us assume that the true random noise $\boldsymbol{\xi}$ is in the Voronoi cell $\textrm{Vor}[ \sqrt{2\pi} \boldsymbol{S}^{-1} ] ( \boldsymbol{n} )$, i.e., 
\begin{align}
\boldsymbol{\xi} \in \textrm{Vor}[ \sqrt{2\pi} \boldsymbol{S}^{-1} ] ( \boldsymbol{n} ). 
\end{align}
In this case, from the maximum likelihood decoding (or the closest vector decoding), we will estimate that the random noise is given by $\boldsymbol{\xi}_{\textrm{est}} = \boldsymbol{\xi} - \sqrt{2\pi} \boldsymbol{S}^{-1} \boldsymbol{n}$. Thus, it means that if the true noise $\boldsymbol{\xi}$ is certainly correctable if it is in the Voronoi cell associated with the origin, i.e., 
\begin{align}
\boldsymbol{\xi}_{\textrm{est}} = \boldsymbol{\xi} \,\,\, \textrm{for any}\,\,\, \boldsymbol{\xi} \in \textrm{Vor}[ \sqrt{2\pi} \boldsymbol{S}^{-1} ] ( \boldsymbol{0} ). 
\end{align}
Thus, we can optimize the design of the single-mode GKP code by choosing an optimal lattice $\boldsymbol{S}$ such that the probability that the noise $\boldsymbol{\xi}$ lies in the Voronoi cell $\textrm{Vor}[ \sqrt{2\pi} \boldsymbol{S}^{-1} ] ( \boldsymbol{0} )$ is maximized, i.e., 
\begin{align}
\boldsymbol{S}^{\star} = \textrm{argmax}_{ \boldsymbol{S}\in \mathbb{R}^{2\times 2}: \textrm{det}( \boldsymbol{S} ) = d }   \int_{\boldsymbol{\xi} \in \textrm{Vor}[ \sqrt{2\pi} \boldsymbol{S}^{-1} ] ( \boldsymbol{0} ) }  d^{2}\boldsymbol{\xi} \frac{1}{2\pi\sigma^{2}} \exp\Big{[} -\frac{ |\boldsymbol{\xi}|^{2} }{2\sigma^{2}} \Big{]}, \label{eq:optimal lattice single mode GKP code}
\end{align} 
Note that we assumed the Gaussian random shift error $\mathcal{N}_{B_{2}}[\sigma]$ to derive Eq.\ \eqref{eq:optimal lattice single mode GKP code}.  

In the case of the square-lattice GKP code encoding a qubit into an oscillator (i.e., $\boldsymbol{S}^{(\textrm{sq})} = \sqrt{2}\boldsymbol{I}_{2}$ and $d=2$), the lattice generator matrix is given by $\sqrt{2\pi} ( \boldsymbol{S}^{(\textrm{sq})} )^{-1} = \sqrt{\pi}\boldsymbol{I}_{2}$. Thus, the Voronoi cell assoicated with the origin is given by 
\begin{align}
\textrm{Vor}[ \sqrt{2\pi} ( \boldsymbol{S}^{(\textrm{sq})} )^{-1} ] (\boldsymbol{0}) = \textrm{Vor}[ \sqrt{\pi}\boldsymbol{I}_{2} ] (\boldsymbol{0}) = \Big{\lbrace} \boldsymbol{\xi} = (\xi_{q},\xi_{p}) : |\xi_{q}|,|\xi_{p}|< \frac{\sqrt{\pi}}{2} \Big{\rbrace} . 
\end{align} 
This is consistent with the fact that any small shift errors contained in the square $|\xi_{q}|,|\xi_{p}|< \frac{\sqrt{\pi}}{2}$ can be corrected for the square-lattice GKP code. Note also that $\frac{\sqrt{\pi}}{2}$ is the maximum radius of a circle that can be contained within the Voronoi cell $\textrm{Vor}[ \sqrt{2\pi} ( \boldsymbol{S}^{(\textrm{sq})} )^{-1} ] (\boldsymbol{0})$. In other words, the maximum radius of the correctable shift for the square-lattice GKP code is given by
\begin{align}
r_{\textrm{c}}^{( \textrm{sq} )} = \frac{\sqrt{\pi}}{2}. 
\end{align}

\subsubsection{The single-mode hexagonal-lattice GKP code}

Recall that in the case of the square-lattice GKP code, the logical X and Z error rates are the same but the logical Y error rate is much smaller. This is because the logical X and Z errors occur due to large shifts in either one of the position and the momentum directions, i.e., 
\begin{align}
&|\xi_{q}| < \frac{\sqrt{\pi}}{2}  \,\,\, \textrm{and}\,\,\,   \frac{\sqrt{\pi}}{2} < |\xi_{p}| < \frac{\sqrt{3\pi}}{2}  \rightarrow \textrm{Logical Z error}, 
\nonumber\\
&  \frac{\sqrt{\pi}}{2} < |\xi_{q}| < \frac{3\sqrt{\pi}}{2}  \,\,\, \textrm{and}\,\,\,   |\xi_{p}| < \frac{\sqrt{\pi}}{2}  \rightarrow \textrm{Logical X error}.  
\end{align}  
On the other hand, the logical Y error happens only both the position and the momentum shifts are large and thus is much less likely to occur than the logical X and Z errors (see also Fig.\ \ref{fig:Voronoi cell square verses hexagonal}(a) for an illustration): 
\begin{align}
&  \frac{\sqrt{\pi}}{2} < |\xi_{q}| < \frac{3\sqrt{\pi}}{2}  \,\,\, \textrm{and}\,\,\,  \frac{\sqrt{\pi}}{2} < |\xi_{p}| < \frac{3\sqrt{\pi}}{2}  \rightarrow \textrm{Logical Y error}.  
\end{align}
This indicates that the square-lattice GKP code is not in fact using the allowed area in the phase space in the most efficient way. In other words, it would have been better if we could sacrifice the extremely low Y error rate and instead improve both the X and Z error rates, which are the weakest links in the entire scheme. Below, we will show that the hexagonal-lattice structure allows us to realize this idea.    

The stabilizers of the hexagonal-lattice GKP state encoding a qubit into an oscillator (i.e., $d=2$) are given by 
\begin{align}
\hat{S}_{q}^{(\textrm{hex})} &= \exp\Big{[} i 2 \sqrt{\pi} \Big{(} \frac{2}{\sqrt{3}} \Big{)}^{\frac{1}{2}} \hat{q} \Big{]} , 
\nonumber\\
\hat{S}_{p}^{(\textrm{hex})} &= \exp\Big{[} -i 2 \sqrt{\pi} \Big{(} \frac{2}{\sqrt{3}} \Big{)}^{\frac{1}{2}}  \Big{(} \frac{1}{2} \hat{q} + \frac{\sqrt{3}}{2} \hat{p} \Big{)} \Big{]}.  
\end{align}  
Hence, the matrix $\boldsymbol{S}^{( \textrm{hex} )}$ associated with these stabilizers is given by 
\begin{align}
\boldsymbol{S}^{( \textrm{hex} )} &= \sqrt{2} \Big{(} \frac{2}{\sqrt{3}} \Big{)}^{\frac{1}{2}}  \begin{bmatrix}
1 & 0 \\
\frac{1}{2} & \frac{\sqrt{3}}{2}
\end{bmatrix} . 
\end{align} 
Note also that $\textrm{det}( \boldsymbol{S}^{( \textrm{hex} )} ) = 2$ and thus the dimension of the code space is indeed two. Following the general construction provided above, the logical Pauli operators of the hexagonal-lattice GKP code are given by 
\begin{align}
\hat{Z}_{\textrm{gkp}}^{ (\textrm{hex}) } &= ( \hat{S}_{q}^{(\textrm{hex})} )^{\frac{1}{2}} = \exp\Big{[} i  \sqrt{\pi} \Big{(} \frac{2}{\sqrt{3}} \Big{)}^{\frac{1}{2}} \hat{q} \Big{]} , 
\nonumber\\
\hat{X}_{\textrm{gkp}}^{ (\textrm{hex}) } &= (\hat{S}_{p}^{(\textrm{hex})})^{\frac{1}{2}} = \exp\Big{[} -i \sqrt{\pi} \Big{(} \frac{2}{\sqrt{3}} \Big{)}^{\frac{1}{2}}  \Big{(} \frac{1}{2} \hat{q} + \frac{\sqrt{3}}{2} \hat{p} \Big{)} \Big{]}.  
\end{align}
For the hexagonal-lattice GKP code, the relevant lattice generator $\sqrt{2\pi} ( \boldsymbol{S}^{(\textrm{hex})} )^{-1}$ for the performance analysis is given by  
\begin{align}
\sqrt{2\pi} ( \boldsymbol{S}^{(\textrm{hex})} )^{-1} &= \sqrt{\pi} \Big{(} \frac{2}{\sqrt{3}} \Big{)}^{\frac{1}{2}}  \begin{bmatrix}
\frac{\sqrt{3}}{2} & 0\\
-\frac{1}{2} & 1
\end{bmatrix} ,  
\end{align}
which generates another hexagonal lattice.

\begin{figure}[t!]
\centering
\includegraphics[width=5.0in]{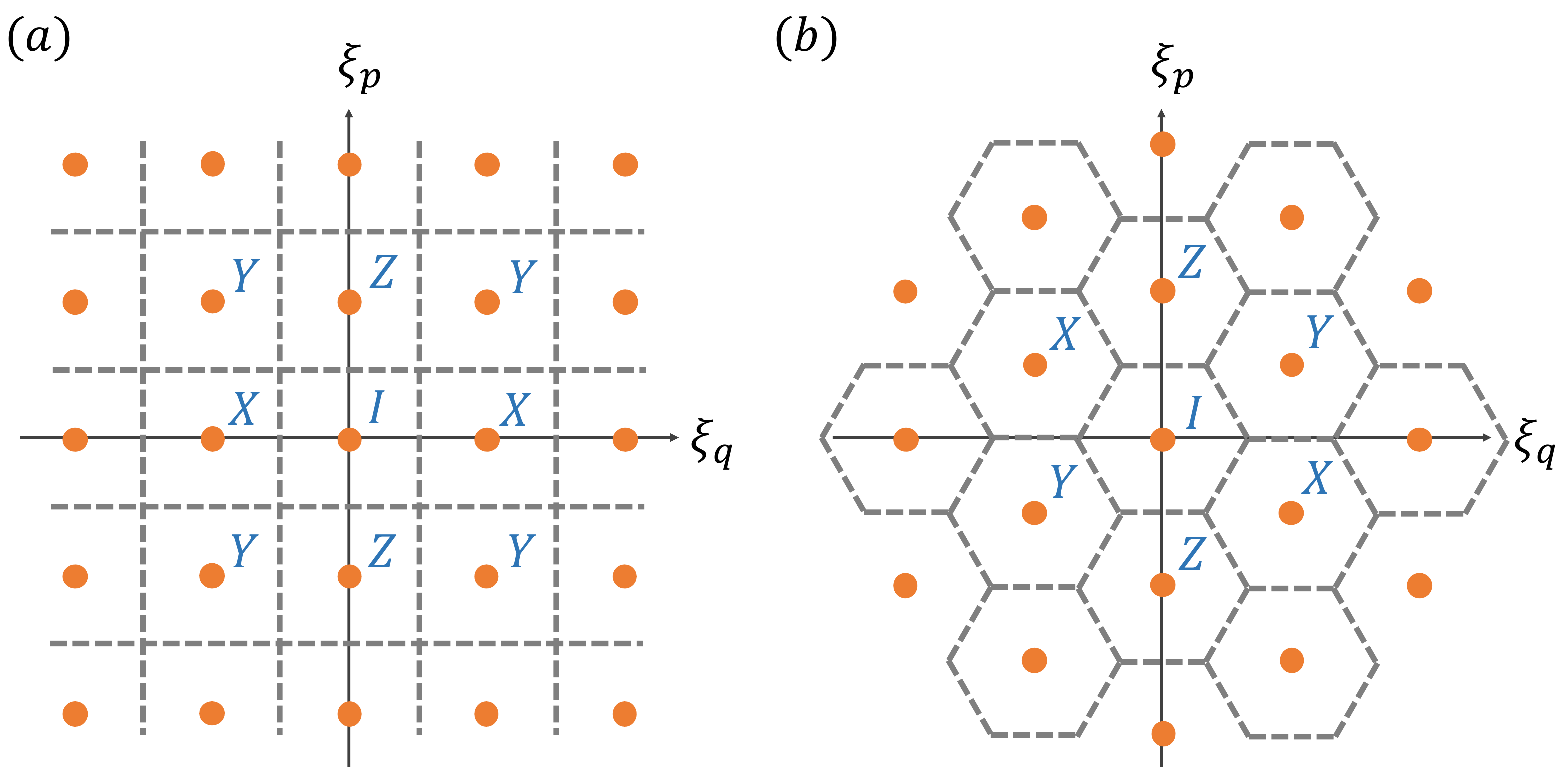}
\caption{Relevant Voronoi cells for the error analysis of (a) the square-lattice GKP code and (b) the hexagonal-lattice GKP code.}
\label{fig:Voronoi cell square verses hexagonal}
\end{figure}

In Fig.\ \ref{fig:Voronoi cell square verses hexagonal} (b), we visualized the Voronoi cells associated with the lattice $\sqrt{2\pi} ( \boldsymbol{S}^{(\textrm{hex})} )^{-1}$. Note that each Voronoi cell is given by a hexagon. Therefore, although each Voronoi cell has the same area $\sqrt{\pi}$ both in the case of the square-lattice GKP code and the hexagonal-lattice GKP code, the latter can contain a circle with a larger radius. In particular, the maximum radius of the circle that can be contained within the correctable Voronoi cell $\textrm{Vor}[ \sqrt{2\pi} ( \boldsymbol{S}^{(\textrm{hex})} )^{-1} ](\boldsymbol{0})$ is given by $\frac{\sqrt{\pi}}{2} ( \frac{2}{\sqrt{3}} )^{\frac{1}{2}} $. Thus, the maximum radius of the correctable shift for the hexagonal-lattice GKP code is given by 
\begin{align}
r_{\textrm{c}}^{ (\textrm{hex}) } = \frac{\sqrt{\pi}}{2} \Big{(} \frac{2}{\sqrt{3}} \Big{)}^{\frac{1}{2}} \simeq 1.07 r_{\textrm{c}}^{ (\textrm{sq}) } , 
\end{align}
and is about $1.07$ times larger than that of the square-lattice GKP code. This means that the hexagonal-lattice GKP code uses the allowed area in the phase space in a more efficient way than the square-lattice GKP code. Note also that in the case of the hexagonal-lattice GKP code, the logical X, Y, and Z error rates are all identical to each other, as desired. We also remark that the hexagonal lattice allows the densest sphere packing in the $2$-dimensional Euclidean space \cite{Fejes1942}.   

Lastly, we estimate the failure probability of the single-mode GKP code assuming the Gaussian random shift error $\mathcal{N}_{B_{2}}[ \sigma ]$ and the maximum correctable radius $r_{\textrm{c}}$ of the code. The success probability of the GKP error correction is lower bounded by
\begin{align}
p_{\textrm{succ}}(\sigma) &\ge \int_{ |\boldsymbol{\xi}|  < r_{\textrm{c}} } d^{2}\boldsymbol{\xi} \frac{1}{2\pi \sigma^{2}} \exp\Big{[} -\frac{|\boldsymbol{\xi}|^{2}}{2\sigma^{2}} \Big{]} 
\nonumber\\
&= \int_{0}^{r_{\textrm{c}}} 2rdr \frac{1}{2\sigma^{2}} \exp\Big{[} -\frac{r^{2}}{2\sigma^{2}} \Big{]} = \int_{0}^{r_{\textrm{c}}^{2}} dx \frac{1}{2\sigma^{2}} \exp\Big{[} -\frac{x}{2\sigma^{2}} \Big{]} = 1 - \exp\Big{[} -\frac{r_{\textrm{c}}^{2}}{2\sigma^{2}} \Big{]} . 
\end{align}
The failure probability is then upper bounded by 
\begin{align}
p_{\textrm{fail}}(\sigma) = 1-p_{\textrm{succ}}(\sigma) \le \exp\Big{[} -\frac{r_{\textrm{c}}^{2}}{2\sigma^{2}} \Big{]}. \label{eq:upper bound of failure probability against random shift single mode general}
\end{align}
Specializing Eq.\ \eqref{eq:upper bound of failure probability against random shift single mode general} to the cases of the square-lattice GKP code ($r_{\textrm{c}}^{(\textrm{sq})} = \frac{\sqrt{\pi}}{2}$) and the hexagonal-lattice GKP code ($r_{\textrm{c}}^{(\textrm{hex})} = \frac{\sqrt{\pi}}{2} ( \frac{2}{\sqrt{3}} )^{\frac{1}{2}}$), we find
\begin{align}
p_{\textrm{fail}}^{(\textrm{sq})}(\sigma) &\le \exp\Big{[} -\frac{(r_{\textrm{c}}^{(\textrm{sq})})^{2}}{2\sigma^{2}} \Big{]}  = \exp\Big{[} -\frac{\pi}{8\sigma^{2}} \Big{]}  , 
\nonumber\\
p_{\textrm{fail}}^{(\textrm{hex})}(\sigma) &\le \exp\Big{[} -\frac{(r_{\textrm{c}}^{(\textrm{hex})})^{2}}{2\sigma^{2}} \Big{]}  = \exp\Big{[} -\frac{\pi}{4\sqrt{3}\sigma^{2}} \Big{]}  . 
\end{align} 
We also remark if one encodes $d$ logical states  where $d$ is an integer such that $d\ge 2$, the above bounds are modified and we have 
\begin{align}
p_{\textrm{fail}}^{(\textrm{sq})}(\sigma;d) &\le  \exp\Big{[} -\frac{\pi}{4d \cdot \sigma^{2}} \Big{]}  , 
\nonumber\\
p_{\textrm{fail}}^{(\textrm{hex})}(\sigma;d) &\le \exp\Big{[} -\frac{\pi}{2\sqrt{3}d \cdot \sigma^{2}} \Big{]}  . \label{eq:performance of GKP codes against a Gaussian random shift error}
\end{align}

\subsubsection{Generalized multi-mode GKP codes}

The generalization of the single-mode GKP code to any lattice generators $\boldsymbol{S}$ (such that $\textrm{det}( \boldsymbol{S} ) \in \mathbb{Z}$) already exhibits the flexibility of the GKP code. One simple way to generalize the single-mode GKP codes to the multi-mode cases is to concatenate the single-mode codes with a conventional multi-qubit error-correcting code, such as the $[[4,1,2]]$ code, the Steane code (or the $[[7,1,3]]$ code), and the surface code, and so on. The concatenation method is certainly practical and we will discuss this in more detail in Chapter \ref{chapter:Fault-tolerant bosonic quantum error correction}. However, concatenation is not the most general approach and indeed it is possible to define a multi-mode GKP code that cannot be decomposed into a single-mode GKP code and a multi-qubit error-correcting code. Here, we review such a most general construction of the multi-mode GKP codes \cite{Gottesman2001,Harrington2001,Harrington2004}. 

In the general $N$-mode case, a multi-mode GKP code $\mathcal{C}_{\textrm{gkp}}^{ (\boldsymbol{S}) }$ is stabilized by $2N$ stabilizers:
\begin{align}
\hat{S}_{j}^{ (\boldsymbol{S}) } &= \exp\Big{[} i\sqrt{2\pi} (-1)^{ I( j>N ) } \sum_{k=1}^{N} \boldsymbol{S}_{jk} \boldsymbol{\hat{x}} _{k} \Big{]}, \,\,\, \textrm{where}\,\,\, j \in \lbrace 1,\cdots, 2N \rbrace. 
\end{align}
Here, $\boldsymbol{S}$ is a $2N\times 2N$ matrix and $\boldsymbol{\hat{x}} = (\hat{q}_{1}, \cdots, \hat{q}_{N}, \hat{p}_{1}, \cdots \hat{p}_{N})^{T}$ is a vector that consists of the $2N$ quadrature operators. $I(C)$ is an indicator function that is given by $1$ if $C$ is true and $0$ otherwise. Note that the quadrature operators satisfy the following commutation relation:
\begin{align}
[ \boldsymbol{\hat{x}}_{j} , \boldsymbol{\hat{x}}_{k}  ] &= i\boldsymbol{\Omega}_{jk},
\end{align}  
where $\boldsymbol{\Omega}$ is a $2N\times 2N$ matrix defined as 
\begin{align}
\boldsymbol{\Omega} \equiv \begin{bmatrix}
0 & \boldsymbol{I}_{N}  \\
-\boldsymbol{I}_{N} & 0 
\end{bmatrix} . 
\end{align}
Note that the convention for $\boldsymbol{\hat{x}}$ and $\boldsymbol{\Omega} $ used here is not the same as the convention used in Appendix \ref{appendix:Gaussian states, unitaries, and channels} and throughout the rest of the thesis. We use a different convention here because it is particularly suited for analyzing the multi-mode GKP code $\mathcal{C}_{\textrm{gkp}}^{ (\boldsymbol{S}) }$. 

By using the BCH formula (see Eq.\ \eqref{eq:BCH formula}), we find 
\begin{align}
\hat{S}_{j}^{ (\boldsymbol{S}) }\hat{S}_{k}^{ (\boldsymbol{S}) } &= \hat{S}_{k}^{ (\boldsymbol{S}) }\hat{S}_{j}^{ (\boldsymbol{S}) } \exp\Big{[}  [ i\sqrt{2\pi} (-1)^{ I(j>N) } \sum_{l=1}^{N} \boldsymbol{S}_{jl} \boldsymbol{\hat{x}} _{l} , i\sqrt{2\pi} (-1)^{ I(k>N) } \sum_{m=1}^{N} \boldsymbol{S}_{km} \boldsymbol{\hat{x}} _{m} ] \Big{]}  
\nonumber\\
&= \hat{S}_{k}^{ (\boldsymbol{S}) }\hat{S}_{j}^{ (\boldsymbol{S}) } \exp\Big{[} -2\pi (-1)^{ I(j>N) + I(k>N) }   \sum_{l,m=1}^{N} \boldsymbol{S}_{jl} [ \boldsymbol{\hat{x}} _{l} ,   \boldsymbol{\hat{x}} _{m} ]  \boldsymbol{S}_{km} \Big{]}  
\nonumber\\
&= \hat{S}_{k}^{ (\boldsymbol{S}) }\hat{S}_{j}^{ (\boldsymbol{S}) } \exp\Big{[} -2\pi i (-1)^{  I(j>N) + I(k>N) }   \sum_{l,m=1}^{N} \boldsymbol{S}_{jl} \boldsymbol{\Omega}_{lm} \boldsymbol{S}_{km} \Big{]}  
\nonumber\\
&= \hat{S}_{k}^{ (\boldsymbol{S}) }\hat{S}_{j}^{ (\boldsymbol{S}) } \exp\Big{[} -2\pi i (-1)^{  I(j>N) + I(k>N) }  ( \boldsymbol{S} \boldsymbol{\Omega} \boldsymbol{S}^{T} )_{jk}  \Big{]}   . 
\end{align} 
Therefore, for the stabilizers $\hat{S}_{j}^{ (\boldsymbol{S}) }$ and $\hat{S}_{k}^{ (\boldsymbol{S}) }$ to commute with each other for all $j,k\in\lbrace 1,\cdots, 2N \rbrace$, all the matrix elements of the $2N\times 2N$ matrix $\boldsymbol{S} \boldsymbol{\Omega} \boldsymbol{S}^{T}$ should be an integer, i.e.,
\begin{align}
\boldsymbol{S} \boldsymbol{\Omega} \boldsymbol{S}^{T} \in \mathbb{Z}^{2N\times 2N} . \label{eq:condition on the matrix multi-mode GKP}
\end{align}  
Note that the matrix $\boldsymbol{S} \boldsymbol{\Omega} \boldsymbol{S}^{T}$ is anti-symmetric since $\boldsymbol{\Omega}$ is an anti-symmetric matrix. In the single-mode case (i.e., $N=1$), the condition in Eq.\ \eqref{eq:condition on the matrix multi-mode GKP} reduces to 
\begin{align}
\boldsymbol{S} \boldsymbol{\Omega} \boldsymbol{S}^{T} &= \begin{bmatrix}
S_{11} & S_{12} \\
S_{21} & S_{22}
\end{bmatrix} \begin{bmatrix}
0 & 1 \\
-1 & 0
\end{bmatrix}\begin{bmatrix}
S_{11} & S_{21} \\
S_{12} & S_{22}
\end{bmatrix} = \begin{bmatrix}
0 & \textrm{det}(\boldsymbol{S}) \\
-\textrm{det}(\boldsymbol{S}) & 0 
\end{bmatrix} \in \mathbb{Z}^{2\times 2}, 
\end{align}
and thus is equivalent to the condition $\textrm{det}(\boldsymbol{S}) \in \mathbb{Z}$, as we discussed above (see Eq.\ \eqref{eq:condition on the matrix single-mode GKP}). However in the general multi-mode case with $N \ge 2$,  the condition in Eq.\ \eqref{eq:condition on the matrix multi-mode GKP} is not equivalent to the condition $\textrm{det}(\boldsymbol{S}) \in \mathbb{Z}$. 

As shown in Refs.\ \cite{Gottesman2001,Harrington2001,Harrington2004}, given that the condition in Eq.\ \eqref{eq:condition on the matrix multi-mode GKP} is fulfilled, one can always assume without loss of generality the matrix $\boldsymbol{S}$ is in the standard form such that  
\begin{align}
\boldsymbol{A} \equiv \boldsymbol{S} \boldsymbol{\Omega} \boldsymbol{S}^{T}  = \begin{bmatrix}
0 & \boldsymbol{D} \\
-\boldsymbol{D} & 0 
\end{bmatrix} \,\,\, \textrm{and} \,\,\, \boldsymbol{D} = \textrm{diag}(d_{1},\cdots, d_{N}), 
\end{align}
where $\boldsymbol{D}$ is an $N\times N$ diagonal matrix whose diagonal elements are given by a natural number. It will turn out below that the natural numbers $d_{1},\cdots, d_{N}$ are very closely related to the number of logical states encoded in the multi-mode GKP code $\mathcal{C}_{\textrm{gkp}}^{ (\boldsymbol{S}) }$. 

Let us first consider an important special case with $d_{j} = 1$ for all $j\in \lbrace 1,\cdots, N \rbrace$, i.e., $\boldsymbol{D} = \boldsymbol{I}_{N}$. In this case, the matrix $\boldsymbol{S}$ is a $2N\times 2N$ symplectic matrix as it satisfies 
\begin{align}
\boldsymbol{S} \boldsymbol{\Omega} \boldsymbol{S}^{T} &= \boldsymbol{\Omega} . 
\end{align}   
The simplest case is when $\boldsymbol{S} $ is given by an identity matrix $\boldsymbol{S} = \boldsymbol{I}_{2N}$. In this case, the $2N$ stabilizes are given by
\begin{align}
\hat{S}_{j}^{ (\boldsymbol{I}_{2N}) } &= e^{i\sqrt{2\pi}\hat{q}_{j}}, 
\nonumber\\
\hat{S}_{N+j}^{ (\boldsymbol{I}_{2N}) } &= e^{-i\sqrt{2\pi}\hat{p}_{j}}, \label{eq:stabilizers of the canonical N mode GKP state}
\end{align} 
where $j \in \lbrace 1,\cdots, N \rbrace$. In this case, following the same reasoning used for the single-mode case, one can show that the tensor product of the canonical GKP states 
\begin{align}
|\textrm{GKP}\rangle^{\otimes N}
\end{align} 
is the unique state (up to an overall phase and normalization) that is stabilized by the $2N$ stabilizers in Eq.\ \eqref{eq:stabilizers of the canonical N mode GKP state}. 

For a general $2N\times 2N$ symplectic matrix $\boldsymbol{S}$, one can define a Gaussian operation $\hat{U}_{\boldsymbol{S}^{-1}} = \hat{U}_{\boldsymbol{S}}^{\dagger}$ that transforms the quadrature operators as follows: 
\begin{align}
\hat{U}_{\boldsymbol{S}^{-1}}^{\dagger} \boldsymbol{\hat{x}} \hat{U}_{\boldsymbol{S}^{-1}} = \boldsymbol{S}^{-1}\boldsymbol{\hat{x}}. 
\end{align}   
Note that this is a valid transformation precisely because the matrix $\boldsymbol{S}$ and $\boldsymbol{S}^{-1}$ are symplectic (see Appendix \ref{appendix:Gaussian states, unitaries, and channels}). Then, similarly as in the single-mode case, one can see that the state 
\begin{align}
|\textrm{GKP}_{\boldsymbol{S}} \rangle &\equiv \hat{U}_{\boldsymbol{S}^{-1}}  |\textrm{GKP}\rangle^{\otimes N}
\end{align}   
is the unique state (up to an overall phase and normalization) that is stabilized by the stabilizers $\hat{S}_{j}^{ (\boldsymbol{S}) }$ for all $j\in\lbrace 1,\cdots, 2N \rbrace$. Because of the uniqueness, the GKP code $\mathcal{C}_{\textrm{gkp}}^{ (\boldsymbol{S}) }$ is not a very interesting quantum error-correcting code if $\boldsymbol{S}$ is a symplectic matrix, since it encodes only one logical state. However, all these basic facts will be useful for understanding more interesting cases with $(d_{1},\cdots, d_{N}) \neq (1,\cdots, 1)$.  

Let us now move on to the most general case with $\boldsymbol{D} = \textrm{diag}(d_{1},\cdots, d_{N}) \neq \boldsymbol{I}_{N}$. In this case, the matrix $\boldsymbol{S}$ is not symplectic. To understand the structure of the code space $\mathcal{C}_{\textrm{gkp}}^{ (\boldsymbol{S}) }$ in a more fine-grained way, we need to understand the logical operators of the code. To do so, let us first consider the following matrix 
\begin{align}
\boldsymbol{S}^{\perp} &\equiv \boldsymbol{A}^{-1} \boldsymbol{S} = \begin{bmatrix}
0 & -\boldsymbol{D}^{-1} \\
\boldsymbol{D}^{-1} & 0 
\end{bmatrix} \boldsymbol{S} , 
\end{align}   
and the associated $2N$ operators
\begin{align}
\hat{L}_{j}^{(\boldsymbol{S})} \equiv \exp\Big{[} i\sqrt{2\pi} \sum_{k=1}^{N}\boldsymbol{S}^{\perp}_{jk} \boldsymbol{\hat{x}}_{k} \Big{]}, \,\,\, \textrm{where}\,\,\, j\in\lbrace 1,\cdots, 2N \rbrace . 
\end{align}
We will show that these operators are the logical operators of the multi-mode GKP code $\mathcal{C}_{\textrm{gkp}}^{ (\boldsymbol{S}) }$. More specifically, we will show that 
\begin{align}
\hat{X}_{j}^{(\boldsymbol{S})} &\equiv \hat{L}_{j}^{(\boldsymbol{S})}, 
\nonumber\\
\hat{Z}_{j}^{(\boldsymbol{S})} &\equiv \hat{L}_{N+j}^{(\boldsymbol{S})}, 
\end{align} 
(where $j\in\lbrace 1,\cdots, N \rbrace$) act in the same way as the Pauli X and Z operators on the code space. Note that all these operators commute with all the stabilizers, i.e., 
\begin{align}
\hat{L}_{j}^{(\boldsymbol{S})} \hat{S}_{k}^{(\boldsymbol{S})}  &=  \hat{S}_{k}^{(\boldsymbol{S})}\hat{L}_{j}^{(\boldsymbol{S})} \exp \Big{[} -2\pi i (-1)^{k} ( \boldsymbol{S}^{\perp} \boldsymbol{\Omega} \boldsymbol{S}^{T} )_{jk} \Big{]} 
\nonumber\\
&=  \hat{S}_{k}^{(\boldsymbol{S})}\hat{L}_{j}^{(\boldsymbol{S})} \exp \Big{[} -2\pi i (-1)^{ I(k >N) } ( \boldsymbol{A}^{-1} \boldsymbol{S} \boldsymbol{\Omega} \boldsymbol{S}^{T} )_{jk} \Big{]} 
\nonumber\\
&=  \hat{S}_{k}^{(\boldsymbol{S})}\hat{L}_{j}^{(\boldsymbol{S})} \exp \Big{[} -2\pi i (-1)^{ I(k >N) } ( \boldsymbol{A}^{-1} \boldsymbol{A} )_{jk} \Big{]} 
\nonumber\\
&=  \hat{S}_{k}^{(\boldsymbol{S})}\hat{L}_{j}^{(\boldsymbol{S})} \exp \Big{[} -2\pi i (-1)^{ I(k >N) } \delta_{jk} \Big{]} 
\nonumber\\
&=  \hat{S}_{k}^{(\boldsymbol{S})}\hat{L}_{j}^{(\boldsymbol{S})} . 
\end{align}
Thus, it makes sense to consider a state $|\boldsymbol{0}_{\textrm{gkp}}^{ ( \boldsymbol{S} )}\rangle$ (where $\boldsymbol{0} \equiv (0,\cdots, 0)$ is a zero vector with $N$ zeros) that is simultaneously stabilized by the stabilizers $\hat{S}_{N+1}^{ (\boldsymbol{S}) }, \cdots, \hat{S}_{2N}^{ (\boldsymbol{S}) }$ and the operators $\hat{L}_{N+1}^{(\boldsymbol{S})},\cdots, \hat{L}_{2N}^{(\boldsymbol{S})}$. By inspection, one can realize that $|\boldsymbol{0}_{\textrm{gkp}}^{ ( \boldsymbol{S} )}\rangle$ can be regarded as a state stabilized by the stabilizers associated with the matrix 
\begin{align}
\boldsymbol{S}' \equiv \begin{bmatrix}
\boldsymbol{D}^{-1} & 0 \\
0 & \boldsymbol{I}_{N} \\
\end{bmatrix}  \boldsymbol{S}. 
\end{align} 
Note that the matrix $\boldsymbol{S}'$ is symplectic as it satisfies
\begin{align}
\boldsymbol{S}' \boldsymbol{\Omega} \boldsymbol{S}'^{T} = \begin{bmatrix}
\boldsymbol{D}^{-1} & 0 \\
0 & \boldsymbol{I}_{N}  
\end{bmatrix}  \begin{bmatrix}
0 & \boldsymbol{D} \\
-\boldsymbol{D} & 0 
\end{bmatrix}  \begin{bmatrix}
\boldsymbol{D}^{-1} & 0 \\
0 & \boldsymbol{I}_{N} 
\end{bmatrix} = \begin{bmatrix}
0 & \boldsymbol{I}_{N} \\
-\boldsymbol{I}_{N} & 0 
\end{bmatrix} = \boldsymbol{\Omega}. 
\end{align}
Thus, the stabilized state $|\boldsymbol{0}_{\textrm{gkp}}^{ ( \boldsymbol{S} )}\rangle$ is unique up to an overall phase and normalization. We call this unique state the computational zero state.   

To get the other logical states, observe that the operators $\hat{L}_{j}^{ (\boldsymbol{S}) }$ do not commute with each other, while they commute with all the stabilizers, i.e., 
\begin{align}
\hat{L}_{j}^{(\boldsymbol{S})} \hat{L}_{k}^{(\boldsymbol{S})}  &=  \hat{L}_{k}^{(\boldsymbol{S})}\hat{L}_{j}^{(\boldsymbol{S})} \exp \Big{[} -2\pi i  ( \boldsymbol{S}^{\perp} \boldsymbol{\Omega} (\boldsymbol{S}^{\perp} )^{T} )_{jk} \Big{]} 
\nonumber\\
&=  \hat{L}_{k}^{(\boldsymbol{S})}\hat{L}_{j}^{(\boldsymbol{S})} \exp \Big{[} -2\pi i  ( \boldsymbol{A}^{-1} \boldsymbol{S} \boldsymbol{\Omega} \boldsymbol{S}^{T} ( \boldsymbol{A}^{-1} )^{T} )_{jk} \Big{]} 
\nonumber\\
&=  \hat{L}_{k}^{(\boldsymbol{S})}\hat{L}_{j}^{(\boldsymbol{S})} \exp \Big{[} -2\pi i  ( \boldsymbol{A}^{-1} \boldsymbol{A} ( \boldsymbol{A}^{-1} )^{T} )_{jk} \Big{]}  
\nonumber\\
&= \hat{L}_{k}^{(\boldsymbol{S})}\hat{L}_{j}^{(\boldsymbol{S})} \exp \Big{[} -2\pi i  (  ( \boldsymbol{A}^{-1} )^{T} )_{jk} \Big{]}  . \label{eq:multi mode GKP code logical operators commutation relation less explicit}
\end{align}
Since $\boldsymbol{A}^{-1}$ is given by 
\begin{align}
 \boldsymbol{A}^{-1}  = \begin{bmatrix}
0 & \boldsymbol{D} \\
-\boldsymbol{D} & 0 
\end{bmatrix}^{-1} = \begin{bmatrix}
0 & -\boldsymbol{D}^{-1} \\
\boldsymbol{D}^{-1} & 0 
\end{bmatrix}, 
\end{align} 
Eq.\ \eqref{eq:multi mode GKP code logical operators commutation relation less explicit} is explicitly given by 
\begin{align}
\hat{Z}_{j}^{(\boldsymbol{S})} \hat{Z}_{k}^{(\boldsymbol{S})} &= \hat{Z}_{k}^{(\boldsymbol{S})} \hat{Z}_{j}^{(\boldsymbol{S})}, 
\nonumber\\
\hat{X}_{j}^{(\boldsymbol{S})} \hat{X}_{k}^{(\boldsymbol{S})} &= \hat{X}_{k}^{(\boldsymbol{S})} \hat{X}_{j}^{(\boldsymbol{S})}, 
\nonumber\\
\hat{Z}_{j}^{(\boldsymbol{S})} \hat{X}_{k}^{(\boldsymbol{S})} &= \hat{X}_{k}^{(\boldsymbol{S})} \hat{Z}_{j}^{(\boldsymbol{S})}  \exp \Big{[} i\frac{2\pi}{d_{j}} \delta_{jk}\Big{]} , \label{eq:multi mode GKP commutation relation logical operators}
\end{align}
for all $j,k\in \lbrace 1,\cdots, N\rbrace$ where $\hat{X}_{j}^{(\boldsymbol{S})} \equiv \hat{L}_{j}^{(\boldsymbol{S})}$ and $\hat{Z}_{j}^{(\boldsymbol{S})} \equiv \hat{L}_{N+j}^{(\boldsymbol{S})}$. Thus, these operators behave exactly the same way as the Pauli operators. Thus, we refer to $\hat{X}_{j}^{(\boldsymbol{S})}$ and $\hat{Z}_{j}^{(\boldsymbol{S})}$ as the Pauli X and Z operators acting on the $j^{\textrm{th}}$ degree of freedom, respectively (where $j\in\lbrace 1,\cdots, N \rbrace$). 

With all the basic facts ready, we can now construct the other logical states as follows: 
\begin{align}
|\boldsymbol{\mu}_{\textrm{gkp}}^{ (\boldsymbol{S}) } \rangle  &\equiv \prod_{j=1}^{N} ( \hat{X}_{j}^{(\boldsymbol{S})} )^{\mu_{j}}  |\boldsymbol{0}_{\textrm{gkp}}^{ (\boldsymbol{S}) } \rangle, 
\end{align} 
where $\boldsymbol{\mu} = (\mu_{1},\cdots, \mu_{N}) \in \mathbb{Z}_{d_{1}}\times \cdots \times \mathbb{Z}_{d_{N}}$. Using the commutation relation in Eq.\ \eqref{eq:multi mode GKP commutation relation logical operators}, we can show that 
\begin{align}
\hat{Z}_{j}^{(\boldsymbol{S})} |\boldsymbol{\mu}_{\textrm{gkp}}^{ (\boldsymbol{S}) } \rangle &= \exp \Big{[} i\frac{2\pi}{d_{j}} \Big{]} \prod_{j=1}^{N} ( \hat{X}_{j}^{(\boldsymbol{S})} )^{\mu_{j}}  \hat{Z}_{j}^{(\boldsymbol{S})} |\boldsymbol{0}_{\textrm{gkp}}^{ (\boldsymbol{S}) } \rangle
\nonumber\\
&= \exp \Big{[} i\frac{2\pi}{d_{j}} \Big{]} \prod_{j=1}^{N} ( \hat{X}_{j}^{(\boldsymbol{S})} )^{\mu_{j}}   |\boldsymbol{0}_{\textrm{gkp}}^{ (\boldsymbol{S}) } \rangle 
\nonumber\\
&= \exp \Big{[} i\frac{2\pi}{d_{j}} \Big{]} |\boldsymbol{\mu}_{\textrm{gkp}}^{ (\boldsymbol{S}) } \rangle, 
\end{align}
as desired. Note also that the $j^{\textrm{th}}$ degree of freedom encodes $d_{j}$ logical states. Thus, the dimension of the code space $\mathcal{C}_{\textrm{gkp}}^{ (\boldsymbol{S}) }$ is given by
\begin{align}
\textrm{dim} ( \mathcal{C}_{\textrm{gkp}}^{ (\boldsymbol{S}) } ) = \prod_{j=1}^{N} d_{j}. 
\end{align}  

\subsubsection{Symplectic lattice codes} 

Recall that in the standard form, the matrix $\boldsymbol{S}$ satisfies 
\begin{align}
\boldsymbol{A} = \boldsymbol{S} \boldsymbol{\Omega} \boldsymbol{S}^{T} = \begin{bmatrix}
0 & \boldsymbol{D} \\
-\boldsymbol{D} & 0 
\end{bmatrix}, 
\end{align}
where $\boldsymbol{D}$ is a diagonal matrix whose entries are given by a natural number. Moreover, the diagonal entry $d_{j}$ characterize the number of logical states encoded in the $j^{\textrm{th}}$ degree of freedom. Here, we consider a special case where the diagonal matrix $\boldsymbol{D}$ is given by 
\begin{align}
\boldsymbol{D} = d  \boldsymbol{I}_{N}. 
\end{align}   
That is, we consider the case where all the $N$ degrees of freedom uniformly encode $d$ logical states. In this case, the total dimension of the code space is given by
\begin{align}
\textrm{dim} ( \mathcal{C}_{\textrm{gkp}}^{ (\boldsymbol{S}) } )  = d^{N}, 
\end{align}
which grows exponentially as we increase the number of modes $N$, for any $d\ge 2$. Note that in this case, the matrix $\boldsymbol{S}$ satisfies 
\begin{align}
\boldsymbol{S} \boldsymbol{\Omega} \boldsymbol{S}^{T} = d \boldsymbol{\Omega} . 
\end{align}  
Thus, the matrix $\boldsymbol{S}$ can be rescaled to a symplectic matrix
\begin{align}
\boldsymbol{\bar{S}} \equiv \frac{1}{\sqrt{d}} \boldsymbol{S}, 
\end{align}
which does satisfy $\boldsymbol{\bar{S}} \boldsymbol{\Omega} \boldsymbol{\bar{S}}^{T} = \boldsymbol{\Omega} $. The multi-mode GKP codes that are constructed this way based on a symplectic matrix $\boldsymbol{\bar{S}}$ are called the symplectic lattice codes \cite{Harrington2004}. Recall that in the single-mode case, we had a freedom choose any $2$-dimensional lattice to define a single-mode GKP code. Indeed, it turned out that we can improve the performance of the GKP code by using the hexagonal-lattice structure instead of the square-lattice structure. Also in the case of the multi-mode GKP code, we can also optimize the performance of the code by choosing a lattice with a better sphere packing efficiency. In particular, we can freely choose any $2N$-dimensional symplectic lattice (generated by a $2N\times 2N$ symplectic matrix $\boldsymbol{\bar{S}}$) to define an $N$-mode GKP code. This idea is discussed in more detail in Chapter \ref{chapter:Achievable communication rates with bosonic codes}. Examples of interesting higher-dimensional symplectic lattices include the $D_{4}$ lattice (for $N=2$), the $F_{6}$ lattice (for $N=3$), the $E_{8}$ lattice (for $N=4$), the Barnes-Wall lattice $\Lambda_{16}$ (for $N=8$) and the leech lattice $\Lambda_{24}$ (for $N=12$). See Ref.\ \cite{Harrington2004} for more details.

\chapter{Benchmarking and optimizing single-mode bosonic codes}
\label{chapter:Benchmarking and optimizing single-mode bosonic codes}

In this chapter, I will present my contributions to the field of bosonic quantum error correction during the first half of my PhD studies \cite{Albert2018,Noh2019}. Two of the main goals of this chapter are to characterize intrinsic error-correcting capabilities of various single-mode bosonic codes against practically relevant excitation loss errors (Section \ref{section:Benchmarking single-mode bosonic codes}) \cite{Albert2018}, and to search for an optimal single-mode bosonic code via a comprehensive numerical optimization (Section \ref{section:Optimizing single-mode bosonic codes}) \cite{Noh2019}. The work in Ref.\ \cite{Albert2018}, spearheaded by Dr.\ Victor Albert, was a joint project among the groups of Professors Steve Girvin, Barbara Terhal, and Liang Jiang in which I took part. The work in Ref.\ \cite{Noh2019} was done in collaboration with Dr.\ Victor Albert and Professor Liang Jiang.     


In the benchmarking, it turned out that GKP codes significantly outperform many other bosonic codes in correcting excitation loss errors (see Fig.\ \ref{fig:benchmarking single-mode bosonic codes}). Moreover, from the code optimization, the hexagonal-lattice GKP code emerged as an optimal single-mode bosonic code for correcting excitation loss errors from Haar-random initial codes (see Fig.\ \ref{fig:biconvex optimization combined}). These results are surprising because GKP codes were not originally designed to correct excitation loss errors. Instead, they were designed to correct random shift errors in the phase space. In Section \ref{section:Decoding GKP codes subject to excitation loss errors}, I will provide a sub-optimal decoding strategy for GKP codes subject to excitation loss errors, which can be readily implemented in experiments. By doing so, I will explain why GKP codes work well against excitation loss errors as well as random shift errors. In one sentence, the explanation goes as follows: 
\begin{itemize}
\item ``The GKP codes work well against loss errors because loss errors can be converted via an amplification to shift errors, which the GKP codes can correct.''
\end{itemize}
I will conclude the chapter by outlining several open questions in Section \ref{section:Open questions benchmarking and optimization}.    

\section{Benchmarking single-mode bosonic codes}
\label{section:Benchmarking single-mode bosonic codes}

\subsection{Competitors and rules}

\subsubsection{Various bosonic codes}

Here, we will compare the performance of various single-mode bosonic codes. In Fig.\ \ref{fig:Wigner functions of bosonic codes for benchmarking}, we provide the Wigner functions of the maximally mixed code state of the four-component cat code with $\alpha = \sqrt{3}$, the $(1,1)$-binomial code, the square-lattice and the hexagonal-lattice GKP codes with an average excitation number $\bar{n}=3$. We choose to visualize the maximally mixed state of a code space because it is in one-to-one correspondence with the corresponding code space. 

As reviewed in Chapter \ref{chapter:Bosonic quantum error correction}, the four-component cat code and the $(1,1)$-binomial code are designed to correct the single excitation loss events and are rotation-symmetric. Specifically, they are invariant under the $180\degree$ phase rotation $\hat{\Pi}_{2} = e^{i\pi \hat{n}}$ and thus have even number of excitations. The square-lattice and the hexagonal-lattice GKP codes are designed to correct random shift errors in the phase space and are translation-symmetric. In particular, they are invariant under a discrete set of translations in the phase space and thus are stabilized by two displacement operations which generate the square-lattice or the hexagonal-lattice structure. Note that some GKP codes, e.g., the square-lattice GKP code, happen to be invariant under the $180\degree$ rotation. However, such rotational symmetry is not utilized and thus is not so relevant for the GKP codes. 

\begin{figure}[t!]
\centering
\includegraphics[width=4.5in]{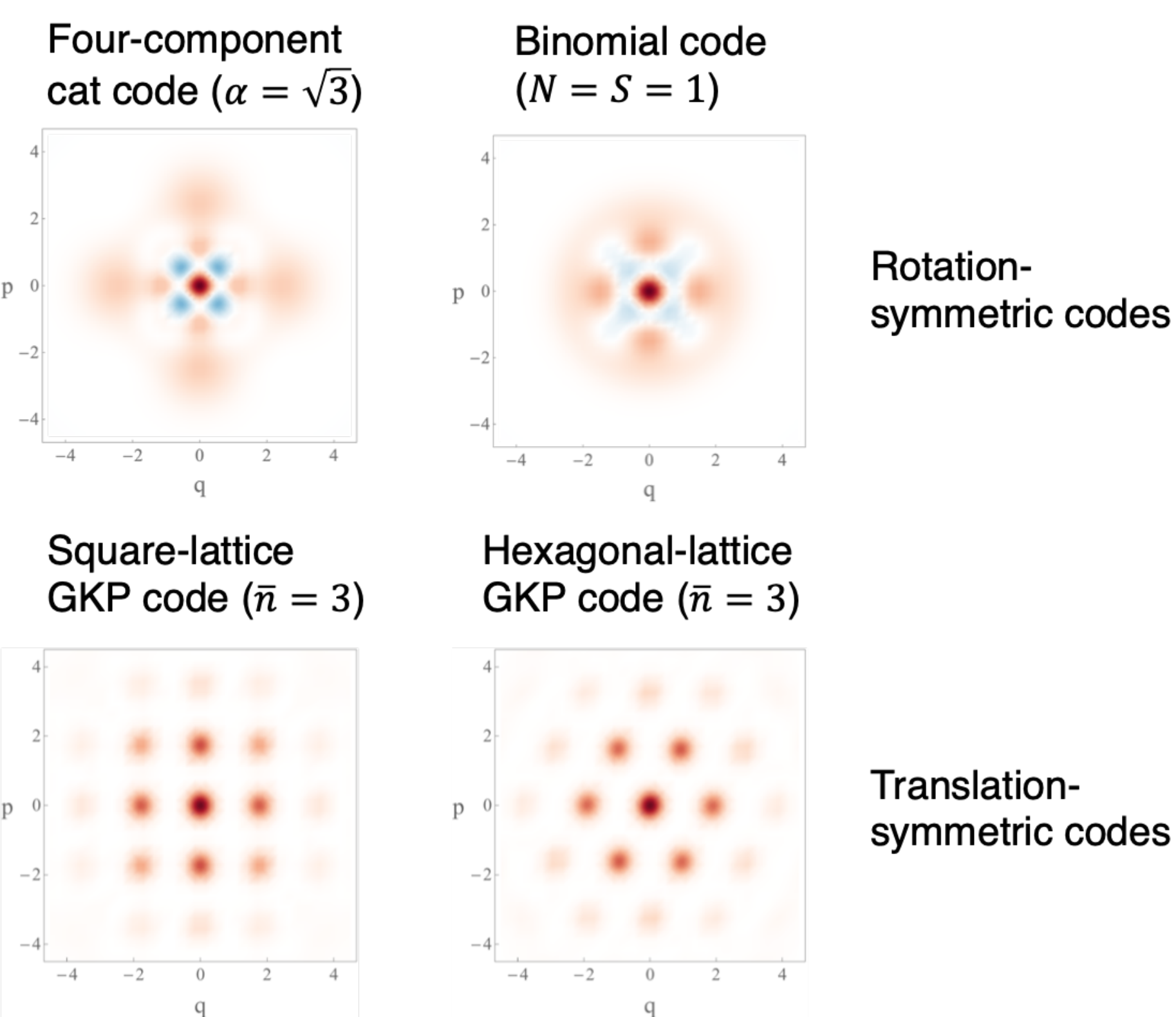}
\caption{Wigner functions of the maximally mixed code state of various single-mode bosonic codes. The four-component cat code and the $(1,1)$-binomial code are designed to correct single excitation loss events and are rotation-symmetric. The square-lattice and the hexagonal-lattice GKP codes are designed to correct random shift errors in the phase space and are translation-symmetric.  }
\label{fig:Wigner functions of bosonic codes for benchmarking}
\end{figure}

Besides the code families listed above, there are also several single-mode bosonic codes that are obtained from a numerical optimization (see the supplemental material of Ref.\ \cite{Albert2018}). An example is the $\sqrt{17}$ code \cite{Michael2016}: 
\begin{align}
|0_{\textrm{num}}^{(\sqrt{17})}\rangle &= \frac{1}{\sqrt{6}} \Big{[} \sqrt{ 7-\sqrt{17} } |0\rangle + \sqrt{ \sqrt{17}-1 }|3\rangle \Big{]}, 
\nonumber\\
|1_{\textrm{num}}^{(\sqrt{17})}\rangle &= \frac{1}{\sqrt{6}} \Big{[} \sqrt{ 9-\sqrt{17} } |1\rangle - \sqrt{ \sqrt{17}-3 }|4\rangle \Big{]} . \label{eq:sqrt 17 code logical states}
\end{align}  
The $\sqrt{17}$ code satisfies the Knill-Laflamme condition for the first-order loss error set $\lbrace \hat{I},\hat{a}  \rbrace$. Also because the Knill-Laflamme condition is satisfied for the error set $\lbrace \hat{I},\hat{a} \rbrace$, the logical states of the $\sqrt{17}$ code have the same average excitation number  
\begin{align}
\bar{n}_{\textrm{num}}^{( \sqrt{17} )} = \frac{\sqrt{17}-1}{2} \simeq 1.562 , 
\end{align}
which is less than those of the $(1,1)$-binomial code, i.e., $\bar{n}_{\textrm{bin}}^{(1,1)} = 2$. Unlike the four-component cat code and the binomial code, however, the $\sqrt{17}$ code does not have an apparent symmetry such as the even excitation number parity. Thus if we use the $\sqrt{17}$-code, we need to perform a general projective measurement (different from the parity measurement) that distinguishes the code space from the error space to look for single-excitation loss errors. 

\subsubsection{Average energy of a bosonic code}

Note that bosonic codes have a notion of ``size''. For instance, we can choose any coherent state amplitude $\alpha$ for the cat codes. As we increase $\alpha$, we can make the cat code have a larger energy and be more robust against bosonic dephasing errors. Also in the case of the binomial codes, we can increase the size of the code by increasing the parameters $N$ and $S$. By doing so, we can deal with higher-order loss events. In the case of the GKP codes, the size of the code is infinite in the ideal case. On the other hand, realistic GKP states have a finite energy. In particular, the size of a GKP state is controlled by the parameter $\Delta$, which characterizes the width of each peak of a GKP state in the phase space. In the ideal case, $\Delta$ vanishes. Similar to other bosonic codes, by allowing the GKP code to have a larger energy (or smaller $\Delta$), we can make the code more robust because any adverse effects due to the finite peaks will become milder. 

As bosonic codes can generally perform better if we allow them to have a larger energy, it is important to control the size of bosonic codes when we compare different code families. Here, we only consider bosonic codes of the qubit-into-an-oscillator type. Consider a qubit-into-an-oscillator bosonic code, i.e., $\mathcal{C} = \textrm{span}\lbrace |0_{\mathcal{C}}\rangle, |1_{\mathcal{C}}\rangle \rbrace$, where $|0_{\mathcal{C}}\rangle$ and $|1_{\mathcal{C}}\rangle$ are two orthonormal logical basis states. Then, we define the average energy (or excitation number) of the code to be the average energy of the maximally mixed code state: 
\begin{align}
\bar{n}_{\mathcal{C}} \equiv  \mathrm{Tr} \Big{[}  \hat{n} \frac{\hat{P}_{\mathcal{C}}}{2}  \Big{]} , 
\end{align} 
where $\hat{P}_{\mathcal{C}} = |0_{\mathcal{C}}\rangle\langle 0_{\mathcal{C}}| + |1_{\mathcal{C}}\rangle\langle 1_{\mathcal{C}}|$ is the projection operator to the code space $\mathcal{C}$. When we compare different code families, we will impose an energy constraint such that each code has an energy less than or equal to a maximum value $\bar{n}_{\textrm{max}}$ (i.e., $\bar{n}_{\mathcal{C}}\le \bar{n}_{\textrm{max}}$ for all codes $\mathcal{C}$).   

A caveat of the above notion of the size of a code is that a certain code might have a thicker tail in the excitation number distribution than some other code even when the two codes have the same average excitation number. For instance, cat and binomial codes respectively follow Poisson and binomial distributions in the excitation number basis and thus have a relatively thin tail. On the other hand, GKP codes follow a geometrical (or a thermal) distribution in the excitation number basis and thus have a thicker tail than those of cat and binomial codes. Despite this caveat, we will use the average energy as a convenient metric for characterizing the size of a bosonic code.    

\subsubsection{Error model}

As discussed in Chapter \ref{chapter:Bosonic quantum error correction}, excitation loss errors are dominant error sources in many realistic bosonic systems. This is especially the case for light modes in optical systems and for microwave cavity modes in circuit QED systems. We will thus focus on excitation loss errors. More specifically, we will consider bosonic pure-loss channels $\mathcal{N}[\eta,0]$ with transmissivity $\eta \in [0,1]$ or loss probability $\gamma = 1-\eta$. Note that the bosonic pure-loss channel $\mathcal{N}[\eta,0]$ is generated by the following Lindblad equation 
\begin{align}
\frac{d\hat{\rho} (t) }{dt} &= \kappa \mathcal{D}[\hat{a}]( \hat{\rho}(t) ) =  \kappa \Big{[}\hat{a}\hat{\rho}(t)\hat{a}^{\dagger}-\frac{1}{2} \lbrace \hat{a}^{\dagger}\hat{a} , \hat{\rho}(t) \rbrace  \Big{]}. 
\end{align}
In particular, we have the following identity 
\begin{align}
\mathcal{N}[\eta  = e^{-\kappa t},0]  &= e^{ \kappa t \mathcal{D}[\hat{a}] }. 
\end{align}  
Thus, the loss probability $\gamma$ is given by 
\begin{align}
\gamma = 1-e^{-\kappa t},
\end{align}
where $\kappa$ is the loss rate and $t$ is the time elapsed. Various other representations of the bosonic pure-loss channel $\mathcal{N}[\eta,0]$ are reviewed in Chapter \ref{chapter:Bosonic quantum error correction} and summarized in Table \ref{table:excitation loss errors}. In the benchmark to be presented below, we compare various bosonic codes against a bosonic pure-loss channel $\mathcal{N}[\eta = 1-\gamma,0]$ for various values of the loss probability $\gamma$. 

\subsubsection{Recovery operation} 

It is also important to realize that choosing an appropriate error recovery (or decoding) operation is crucial when evaluating the performance of a bosonic code. This is because even if a code has an excellent intrinsic error-correcting capability against a certain error model, the code will not perform well against the error if we use a poorly designed decoding scheme. Thus, it is essential to use a well-performing recovery operation when comparing different bosonic code families. In our benchmark below, to be fair to all code families, we report the best performance of each bosonic code by using an optimal recovery operation of the code. This way, we can focus on comparing the intrinsic error-correcting capabilities of various bosonic codes. 

We remark that in practice, even our attempts to correct for errors can be erroneous. For example, the recovery operation itself can fail and add undesirable noise to the system. Moreover, the encoding process can be noisy as well. For these reasons, if a code has a more complicated structure and cannot be prepared efficiently than other codes, the encoding and the error recovery processes for this code can fail with higher probability than those for other codes with a simpler structure. However, we do not address such realistic imperfections here when we compare various bosonic codes. Instead, we assume that the encoding and the error recovery processes can be implemented noiselessly. The reason for this simplistic assumption is again because we want to compare the ultimate error-correcting capability of various bosonic codes. 

We also remark that once we start worrying about realistic imperfections in experimental realizations, we should unavoidably perform a case-by-case study for each code family and for each physical architecture. This is because strategies to deal with realistic imperfections vary largely depending on the structure of the code and the type of the error in the physical system that hosts the code. An overview of such case-by-case considerations on fault-tolerance for various bosonic codes will indeed be provided In Chapter \ref{chapter:Fault-tolerant bosonic quantum error correction}. In particular, we will dive deeper into realistic imperfections of the GKP code and present a tailored method to deal with such imperfections to realize fault-tolerant bosonic quantum error correction with the GKP code. In this chapter, on the other hand, to focus more on the intrinsic error-correcting capability of various bosonic codes, we will work with the simplistic assumption that the error correction processes are noiseless. 

\subsection{Entanglement fidelity as a figure of merit} 

Let us now discuss the figure of merit that we use for benchmarking the performance of various bosonic codes against a bosonic pure-loss channel. Note that for a figure of merit to be useful, it should be sufficiently representative to capture an overall performance of an error-correcting code. Moreover as discussed above, we want to use an optimal recovery operation for each code to make the comparison fair. Thus, it should ideally be straightforward to find an optimal recovery operation that maximizes the chosen figure of merit. 

With these considerations in mind, we choose to use the entanglement fidelity \cite{Schumacher1996a} as a figure of merit. Below, we will explain why the entanglement fidelity is a reasonable choice. In particular, we will show that the entanglement fidelity is sufficiently representative as it is closely related to the average-case fidelity. Also, the entanglement fidelity can be readily maximized over all recovery operations via a convex optimization. 

To make the discussion more concrete, let us consider bosonic codes of the qudit-into-an-oscillator type. Also, we truncate the bosonic Hilbert space and only consider an $n$-dimensional subspace consisting of the $n$ lowest energy states, i.e., $\mathcal{H}_{n} \equiv \textrm{span}\lbrace |0\rangle, \cdots, |n-1\rangle \rbrace$. Note that we can associate a bosonic code $\mathcal{C}$ ($\subseteq \mathcal{H}_{n}$) with an isometry from a hypothetical $d$-dimensional logical Hilbert space $\mathcal{H}' = \textrm{span}\lbrace |0_{\mathcal{H}'}\rangle, \cdots,|(d-1)_{\mathcal{H}'}\rangle  \rbrace$ to the truncated bosonic Hilbert space $\mathcal{H}_{n}$ such that  
\begin{align}
|\mu_{\mathcal{H}'}\rangle\in \mathcal{H}' \rightarrow |\mu_{\mathcal{C}}\rangle \in \mathcal{H}_{n}, \,\,\, \textrm{for all}\,\,\, \mu\in\lbrace 0,\cdots, d-1 \rbrace, 
\end{align}  
where $ |\mu_{\mathcal{C}}\rangle$ is the logical state of the code. Even more generally, we can consider a completely-positive and trace-preserving (CPTP) \cite{Choi1975} encoding map $\mathcal{E} : \mathcal{L}(\mathcal{H}') \rightarrow \mathcal{L}(\mathcal{H}_{n})$ that maps an input density matrix in the hypothetical logical Hilbert space to a density matrix in the physical (truncated) bosonic Hilbert space. Here $\mathcal{L}(\mathcal{H})$ is the space of linear operators acting on the vector space $\mathcal{H}$. For example in the case of the isometry discussed above, we have
\begin{align}
\mathcal{E}( |\mu_{\mathcal{H}'}\rangle\langle \nu_{\mathcal{H}'} |  ) = |\mu_{\mathcal{C}}\rangle\langle \nu_{\mathcal{C}} |, \,\,\,  \textrm{for all}\,\,\, \mu,\nu \in \lbrace 0,\cdots, d-1 \rbrace. 
\end{align}   

Consider an arbitrary pure input state $|\psi_{\mathcal{H}'}\rangle$ in the hypothetical logical space $\mathcal{H}'$ and assume that it is encoded via an encoding map $\mathcal{E}$ into the physical bosonic Hilbert space $\mathcal{H}_{n}$, i.e., 
\begin{align}
|\psi_{\mathcal{H}'}\rangle\langle \psi_{\mathcal{H}'}| \xrightarrow{\mathcal{E}} \mathcal{E} ( |\psi_{\mathcal{H}'}\rangle\langle \psi_{\mathcal{H}'}| ) . 
\end{align}
The encoded state will then undergo a physical error process, which can be described by a CPTP noise map $\mathcal{N}: \mathcal{L}(\mathcal{H}_{n})\rightarrow \mathcal{L}(\mathcal{H}_{n})$. For example in the case of the bosonic pure-loss channel, the noise map $\mathcal{N}$ is given by $\mathcal{N} = \mathcal{N}[ \eta= 1-\gamma ,0 ]$. Upon the action of the noise channel $\mathcal{N}$, the state is further transformed into 
\begin{align}
\mathcal{N} \cdot \mathcal{E} ( |\psi_{\mathcal{H}'}\rangle\langle \psi_{\mathcal{H}'}| ). 
\end{align}  

Lastly, a recovery map is applied to correct for the noise. More specifically, we consider a CPTP recovery map $\mathcal{R}: \mathcal{L}( \mathcal{H}_{n} ) \rightarrow \mathcal{L}(\mathcal{H}')$ that maps the corrupted state in the physical Hilbert space $\mathcal{H}_{n}$ back to the hypothetical logical Hilbert space $\mathcal{H}'$. Then, we are left with the following recovered state: 
\begin{align}
\mathcal{R} \cdot \mathcal{N} \cdot \mathcal{E} ( |\psi_{\mathcal{H}'}\rangle\langle \psi_{\mathcal{H}'}| ).
\end{align} 
Then, the entire encoding, noise, and the recovery process can be summarized by a CPTP map 
\begin{align}
\mathcal{M} &\equiv \mathcal{R} \cdot \mathcal{N} \cdot \mathcal{E}: \mathcal{L}(\mathcal{H'}) \rightarrow \mathcal{L}(\mathcal{H'}). 
\end{align}
Note that the channel $\mathcal{M} $ characterizes how well the logical information is preserved.

Of course in the actual implementation of a bosonic code, the recovery operation happens within the physical Hilbert space. Nevertheless, we consider the recovery map from the physical space to the hypothetical logical space (i.e., $\mathcal{R} : \mathcal{L}( \mathcal{H}_{n} ) \rightarrow \mathcal{L}(\mathcal{H}')$) solely for the purpose of evaluating the error correction scheme. If we want, we can always define a valid physical recovery operation (that maps a corrupted state in the physical Hilbert state to a recovered state in the physical Hilbert space) by applying the encoding map $\mathcal{E}$ after the recovery map $\mathcal{R}$, i.e., $\mathcal{R}' \equiv \mathcal{E} \cdot \mathcal{R}$.             

Now getting back to the evaluation of the error correction scheme, we define the entanglement fidelity of the channel $\mathcal{M} : \mathcal{L}(\mathcal{H}') \rightarrow \mathcal{L}(\mathcal{H}')$ as follows: 
\begin{align}
F_{e}( \mathcal{M} ) \equiv \langle \Phi^{+} | (\mathcal{M} \otimes \textrm{id}_{\mathcal{H}''} ) ( | \Phi^{+}\rangle \langle \Phi^{+} | ) | \Phi^{+}\rangle. 
\end{align}
Here, $|\Phi^{+}\rangle$ is a maximally entangled state between the system $\mathcal{H}'$ and an ancillary system $\mathcal{H}''$ of the same dimension, i.e., 
\begin{align}
|\Phi^{+}\rangle &= \frac{1}{\sqrt{d}} \sum_{\mu =0 }^{d-1} |\mu_{\mathcal{H}'}\rangle|\mu_{\mathcal{H}''}\rangle. 
\end{align}
At glance it might appear that the entanglement fidelity is not representative enough as it quantifies the fidelity just for a single input state $|\Phi^{+}\rangle$. On the other hand, to evaluate the overall performance, we might want to consider a quantity like the average-case fidelity, i.e., 
\begin{align}
F_{\textrm{avg}}(\mathcal{M}) &\equiv \int d\psi \langle \psi| \mathcal{M} (|\psi\rangle\langle\psi| ) |\psi\rangle, 
\end{align}
where the input state $|\psi\rangle$ is drawn uniformly from the hypothetical logical Hilbert space $\mathcal{H}'$. Remarkably, the average-case fidelity $F_{\textrm{avg}}(\mathcal{M})$ and the entanglement fidelity $F_{e}( \mathcal{M} )$ are closely related to each other via the following relation \cite{Horodecki1999,Nielsen2002}: 
\begin{align}
F_{\textrm{avg}}(\mathcal{M}) &= \frac{d F_{e}( \mathcal{M} ) + 1}{d+1},  
\end{align}
where $d$ is the dimension of the hypothetical logical Hilbert space $\mathcal{H}'$. Thus, if we consider qubit-into-an-oscillator bosonic codes, $d$ is given by $d=2$ regardless of the dimension of the physical bosonic Hilbert space. Therefore in practice, the entanglement fidelity $F_{e}( \mathcal{M} )$ is very well correlated with the average-case fidelity $F_{\textrm{avg}}(\mathcal{M})$. In particular, the average-case infidelity $1- F_{\textrm{avg}}(\mathcal{M})$ differs from the entanglement infidelity $1-F_{e}(\mathcal{M})$ merely by a constant factor, i.e.,
\begin{align}
1- F_{\textrm{avg}}(\mathcal{M}) = \frac{d}{d+1} ( 1-F_{e}(\mathcal{M}) ) .  
\end{align} 
In the case of a qubit-into-an-oscillator code ($d=2$), the constant factor is given by $\frac{d}{d+1} = \frac{2}{3}$. Therefore, we can conclude that the entanglement fidelity is as representative as the average-case fidelity.   

\subsection{Maximization of the entanglement fidelity}

Another desirable property of the entanglement fidelity is that we can straightforwardly find an optimal recovery map $\mathcal{R}$ that maximizes the entanglement fidelity, via a convex optimization, for a given encoding scheme $\mathcal{E}$ and a noise map $\mathcal{N}$. More specifically, we can formulate a semidefinite programming (SDP) to find an optimal recovery $\mathcal{R}$ \cite{Reimpell2005,Fletcher2007} (see also Eq.\ \eqref{eq:SDP decoding optimization} in Section \ref{section:Optimizing single-mode bosonic codes}). Thanks to this property, we can readily characterize the intrinsic error-correcting capability of an encoding scheme $\mathcal{E}$ against a given noise model $\mathcal{N}$. Below, we will present the results of the code comparison based on the SDP optimization of the recovery operation. 

\subsection{Results}

Here, we present the benchmarking results of the single-mode bosonic codes against the bosonic pure-loss channels $\mathcal{N}[\eta = 1-\gamma,0]$ \cite{Albert2018}. Specifically, we compare the following code families 
\begin{align}
\lbrace \textrm{single-rail}, \textrm{cat},\textrm{bin},\textrm{num},\textrm{gkps},\textrm{gkp} \rbrace. \label{eq:code families for comparison}
\end{align}
Here, ``single-rail'' represents the trivial encoding scheme based on the vacuum state and the single-photon Fock state $|0_{\textrm{single-rail}}\rangle = |0\rangle$, $|1_{\textrm{single-rail}}\rangle = |1\rangle$. The single-rail code encodes two quantum states with the least energy, it is the best physical bosonic qubit that allows the longest lifetime without error correction. Thus, the performance of the single-rail code serves as a baseline. The ``cat'', ``binomial'', and GKP code families are comprehensively reviewed in Chapter \ref{chapter:Bosonic quantum error correction}. Among the GKP code families, ``gkps'' represents the square-lattice GKP code and ``gkp'' represents a family of general single-mode GKP codes with more general lattice structures than the square-lattice structure. ``num'' represents some numerically optimized codes (including the $\sqrt{17}$ code defined in Eq.\ \eqref{eq:sqrt 17 code logical states}) provided in the supplemental material of Ref.\ \cite{Albert2018}.

\begin{figure}[t!]
\centering
\includegraphics[width=3.7in]{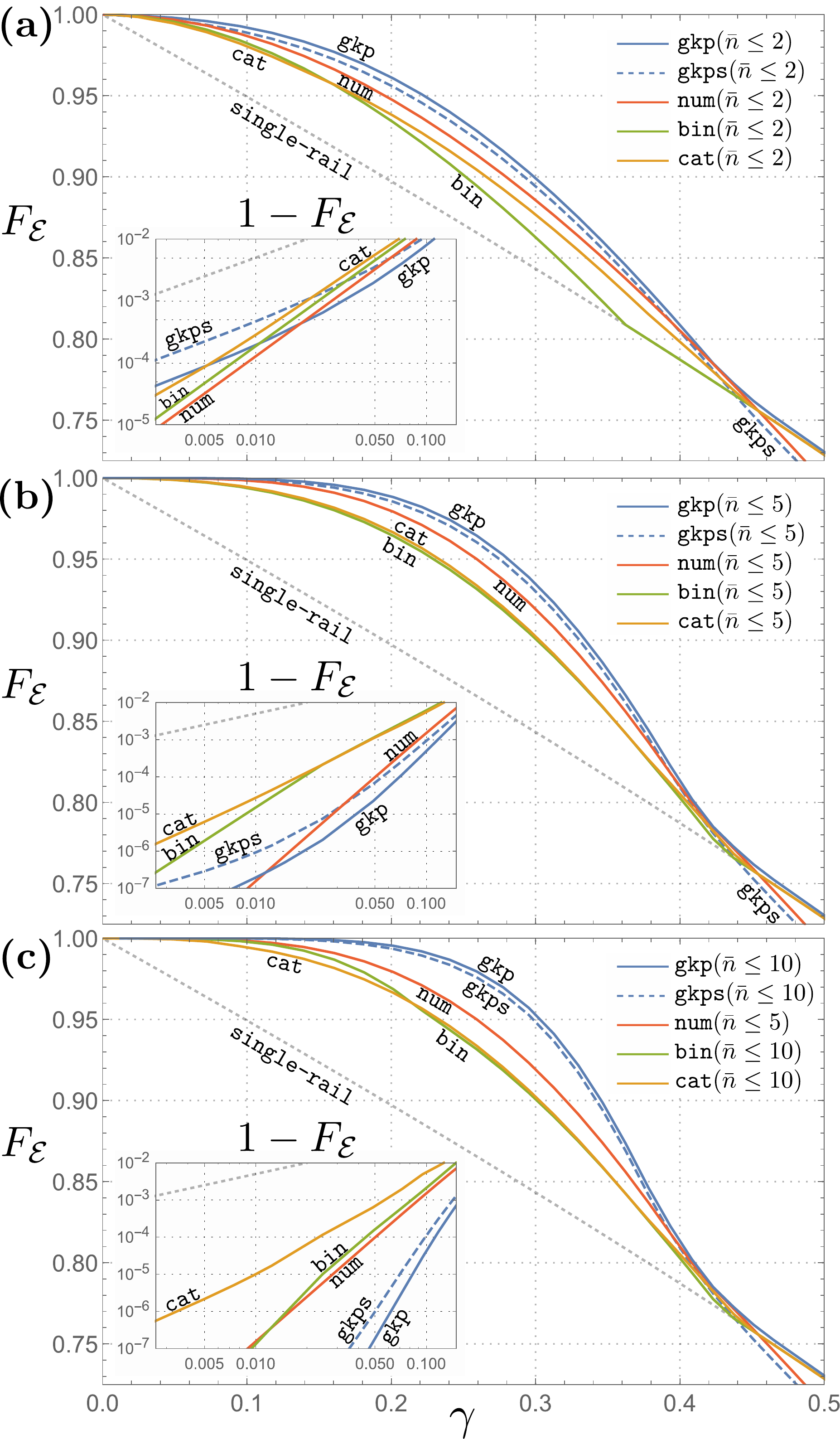}
\caption{[Fig.\ 2 in PRA \textbf{97}, 032346 (2018)] The optimal entanglement fidelity $F_{e}(\mathcal{M}^{\star})$ as a function of the loss probability $\gamma \in [0, \frac{1}{2}]$ for various single-mode code families $\lbrace \textrm{single-rail}, \textrm{cat},\textrm{bin},\textrm{num},\textrm{gkps},\textrm{gkp} \rbrace$ against the bosonic pure-loss channels $\mathcal{N}[\eta = 1-\gamma,0]$. The channel $\mathcal{M}^{\star}$ is defined as $\mathcal{M}^{\star} = \mathcal{R}^{\star} \cdot \mathcal{N} \cdot \mathcal{E}$ and the optimal recovery $\mathcal{R}^{\star}$ is obtained by maximizing the entanglement fidelity via a semidefinite programming, given an encoding map $\mathcal{E}$ (or a code $\mathcal{C}$) and a bosonic pure-loss channel $\mathcal{N} = \mathcal{N}[\eta = 1-\gamma,0]$. Within each code family, we report the performance of the best code optimized over all code parameters subject to an average excitation number constraint (a) $\bar{n}_{\mathcal{C}} \le 2$, (b) $\bar{n}_{\mathcal{C}} \le 5$, and (c) $\bar{n}_{\mathcal{C}} \le 10$. Note that the GKP code families are shown to outperform all the other code families in a wide range of loss parameters.  }
\label{fig:benchmarking single-mode bosonic codes}
\end{figure}

In Fig.\ \ref{fig:benchmarking single-mode bosonic codes} (Fig.\ 2 in Ref.\ \cite{Albert2018}), we plot the optimal entanglement fidelity $F_{e}(\mathcal{M}^{\star})$ for various code families in Eq.\ \eqref{eq:code families for comparison} as a function of the loss probability $\gamma \in [0, \frac{1}{2}]$ against the bosonic pure-loss channels $\mathcal{N}[\eta = 1-\gamma ,0]$. The channel $\mathcal{M}^{\star}$ is defined as $\mathcal{M}^{\star} = \mathcal{R}^{\star} \cdot \mathcal{N} \cdot \mathcal{E}$ and the optimal recovery $\mathcal{R}^{\star}$ is obtained by maximizing the entanglement fidelity via a semidefinite programming, given an encoding map $\mathcal{E}$ (or a code $\mathcal{C}$) and a bosonic pure-loss channel $\mathcal{N} = \mathcal{N}[\eta = 1-\gamma,0]$. Within each code family, we report the performance of the best code optimized over all code parameters subject to an average excitation number constraint (a) $\bar{n}_{\mathcal{C}} \le 2$, (b) $\bar{n}_{\mathcal{C}} \le 5$, and (c) $\bar{n}_{\mathcal{C}} \le 10$. 

First, recall that we generally expect larger bosonic codes to perform better than smaller ones within a fixed code family, because larger codes can correct higher-order error events. This intuition is numerically corroborated in Fig.\ \ref{fig:benchmarking single-mode bosonic codes}. That is, the entanglement fidelity of the effective error channel $\mathcal{M}^{\star} = \mathcal{R}^{\star} \cdot \mathcal{N} \cdot \mathcal{E}$ after the error correction becomes larger as we increase the maximum allowed average excitation number from $\bar{n}_{\textrm{max}} = 2$ to $\bar{n}_{\textrm{max}} = 10$. Also, for any loss probability $\gamma$ smaller than a certain critical value $\gamma_{c}(\bar{n}_{\textrm{max}})$, all the non-trivial code families (i.e., cat, bin, num, gkps, gkp) outperform the single-rail code ($\gamma_{c}( 2) \simeq 0.36 $ and $\gamma_{c}( 5) \simeq  \gamma_{c}(10) \simeq 0.43 $). This clearly shows the advantage of using an error-corrected bosonic qubit with a non-trivial bosonic code over an unprotected bosonic qubit with a trivial encoding scheme. On the other hand, it also indicates that if the error channel $\mathcal{N}$ is too noisy, error correction cannot really reduce the noise in the channel. Indeed using the framework of quantum communication theory (and based on the notion of quantum capacity), we can prove that quantum error correction cannot help for any bosonic pure-loss channel with $\gamma \ge 0.5$ (see Lemma \ref{lemma:Degradability or anti-degradability of bosonic pure-loss channels} and the discussion below). These fundamental communication-theoretic aspects will be discussed in more detail in Chapter \ref{chapter:Quantum capacities of bosonic Gaussian channels}.      

Let us now move on to comparing different code families in the regime where quantum error correction can actually help (i.e., $\gamma \le \gamma_{c}(\bar{n}_{\textrm{max}})$). Overall, we can see that the GKP code families (i.e., gkps and gkp) outperform all the other code families in a wide range of loss parameters. This is a really surprising result because the GKP codes are not designed to correct excitation loss errors, whereas other code families (i.e., cat, bin, num) are specifically designed to correct excitation loss errors. Instead, the GKP codes are designed to correct random shift errors in the phase space. One might be tempted to say that if the loss probability $\gamma$ is small, excitation loss errors can be effectively regarded as a small random shift error in the phase space and this is why the GKP codes work well against the loss errors. This could be a valid explanation in the small energy regime (e.g., for $\bar{n}_{\textrm{max}}=2$). However, as the GKP code gets larger (or $\bar{n}_{\textrm{max}}$ increases), even a small constant fraction of loss can be turned into a huge shift error (uncorrectable by the code), especially in the high-energy sectors of the phase space that are far away from the origin (or the vacuum). Nevertheless, as can be seen from Fig.\ \ref{fig:benchmarking single-mode bosonic codes}(c), even in the case of $\bar{n}_{\textrm{max}} = 10$, the GKP code families outperform all the other code families. Moreover, the performance gap between the GKP code families and other code families becomes even wider. Also, these large GKP codes can now handle even large loss probabilities as well as small ones. For instance, the GKP code families achieve an entanglement fidelity $F_{e}$ as high as $99\%$ (or an entanglement infidelity $1-F_{e}$ as low as $1\%$) starting from a loss channel with $20\%$ loss probability or $\gamma =0.2$. These observations clearly lead us to conclude that the excellent performance of the GKP code families are not merely due to the fact that small loss errors can be understood as small shift errors. Therefore, a more refined explanation that applies to all energy scales is needed to explain the exceptional performance of the GKP codes. We discuss these aspects in more detail in Section \ref{section:Decoding GKP codes subject to excitation loss errors}. 
 
Lastly, let us take a closer look into the small loss probability regime with $\gamma \le 0.1$. In this regime, the cat, bin, and num code families can indeed outperform the GKP code families. For example in the case of $\bar{n}_{\textrm{max}} =2$, the num code family starts to outperform the gkp code family when the loss probability becomes smaller than a critical value, i.e., $\gamma \lesssim  0.025$ (see the inset of Fig.\ \ref{fig:benchmarking single-mode bosonic codes}(a)). For $\bar{n}_{\textrm{max}} =5$, this critical value gets smaller and the num code family outperform the gkp code family when $\gamma \lesssim 0.01$ (see the inset of Fig.\ \ref{fig:benchmarking single-mode bosonic codes}(b)). When $\bar{n}_{\textrm{max}}=10$, the crossing point is not observed in the considered parameter regime and the critical value is clearly less than $0.01$. These results make sense because the cat, bin, and num code families are specifically designed to correct small excitation loss errors in a perturbative manner. However, as we allow the codes to have a larger energy, the relative advantage of the cat, bin, num code families over the GKP code families quickly disappear.   

As we discussed above, it is very important to realize that this performance benchmark captures only the intrinsic error-correcting capability of a bosonic code. Thus, while our results indicate that the GKP code families exhibit an excellent error-correcting capability against excitation loss errors, it does not immediately imply that the GKP code families will outperform all the other code families in practice. Indeed, preparation of a GKP state is generally more challenging than preparation of a cat, binomial, and numerically optimized code state of the same energy. Thus, the GKP error correction schemes may be more sensitive to realistic imperfections than, for instance, the cat code error correction schemes. For these reasons, our performance benchmark does not necessarily discourage the pursuit of cat, bin, and num code families. However, it does encourage a further pursuit of the GKP code families (despite the experimental challenges) as it shows that the GKP code families can exhibit an excellent performance for practically relevant excitation loss errors as well as random shift errors. 

The effects of realistic imperfections in the implementation of the GKP codes will be further discussed in Chapter \ref{chapter:Fault-tolerant bosonic quantum error correction}. Before moving on to the issues related to experimental imperfections, we will further discuss the intrinsic error-correcting capability of various bosonic codes in the rest of this chapter. In particular, we will address the question of whether there exists a code family that is even better than the GKP code families via a brute-force numerical biconvex optimization. We will also discuss the decoding of the GKP code families against excitation loss errors in more detail.     

\section{Optimizing single-mode bosonic codes}
\label{section:Optimizing single-mode bosonic codes}

We have so far compared the known code families, i.e., cat, bin, num, gkps, and gkp code families. One of the key takeaway messages is that the gkp code family outperforms many other known bosonic code families in a wide range of loss parameters. Then, the natural question is whether there exist some other code families that can even outperform the gkp code. Here, we address this question via a numerical optimization and gives a negative answer to the question for the single-mode bosonic codes subject to an average energy constraint. That is, we show that the hexagonal-lattice GKP code emerges as an optimal encoding scheme (of a qudit-into-an-oscillator type) starting from Haar-random initial codes against excitation loss errors. Thus, the numerical optimization results signal that the GKP code family may indeed be the most effective bosonic code family, not just among the families we are aware of, in correcting practically relevant excitation loss errors.    

More specifically, we formulate a biconvex optimization problem to search for an optimal single-mode bosonic code subject to an average energy constraint. Most importantly, we do not assume any structure of the encoding scheme and explore all possible encoding CPTP maps $\mathcal{E}:\mathcal{L}(\mathcal{H}')\rightarrow \mathcal{L}(\mathcal{H}_{n})$ that are allowed by the laws of quantum physics. Similarly as in the case of code comparison, we use the entanglement fidelity as a figure of merit because it is very well correlated with the average-case fidelity. Furthermore, as will be made clear below, the entanglement fidelity can be readily optimized.     

\subsection{Expressing the entanglement fidelity in terms of Choi matrices}

Recall that an error correction scheme consists of 
\begin{align}
\mathcal{E}&:\mathcal{L}(\mathcal{H}')\rightarrow \mathcal{L}(\mathcal{H}_{n}) : \textrm{an encoding map (fixed or to be optimized)}, 
\nonumber\\
\mathcal{N}&:\mathcal{L}(\mathcal{H}_{n})\rightarrow \mathcal{L}(\mathcal{H}_{n}) : \textrm{a noise map (fixed)},  
\nonumber\\
\mathcal{R}&:\mathcal{L}(\mathcal{H}_{n})\rightarrow \mathcal{L}(\mathcal{H}') : \textrm{an error recovery map (to be optimized)}. 
\end{align}
Here, $\mathcal{H}'$ is the hypothetical logical Hilbert space whose dimension is given by the number of logical states $d$. Also, $\mathcal{H}_{n}$ is the physical (truncated) bosonic Hilbert space whose dimension is given by $n$. At the end of the error correction scheme, we are left with an effective channel $\mathcal{M} = \mathcal{R} \cdot \mathcal{N} \cdot \mathcal{E}$ that characterizes how well the logical quantum information is preserved. Then, we evaluate the error correction scheme by using the entanglement fidelity measure: 
\begin{align}
F_{e} ( \mathcal{M} ) \equiv \langle \Phi^{+} | ( \mathcal{M} \otimes \textrm{id}_{\mathcal{H}''} ) ( |\Phi^{+}\rangle\langle \Phi^{+}| ) |\Phi^{+}\rangle,
\end{align}
where $|\Phi^{+}\rangle$ is the maximally entangled state between the system $\mathcal{H}'$ and an ancillary system $\mathcal{H}''$ of the same dimension, i.e., 
\begin{align}
|\Phi^{+}\rangle &= \frac{1}{\sqrt{d}} \sum_{\mu=0}^{d-1} |\mu_{\mathcal{H}'}\rangle|\mu_{\mathcal{H}''}\rangle .
\end{align}
To make the expression more explicit, note that 
\begin{align}
( \mathcal{M} \otimes \textrm{id}_{\mathcal{H}''} ) ( |\Phi^{+}\rangle\langle \Phi^{+}| ) &= \frac{1}{d} \sum_{\mu,\nu=0}^{d-1} \mathcal{M}( |\mu_{\mathcal{H}'}\rangle\langle \nu_{\mathcal{H}'}| ) \otimes |\mu_{\mathcal{H}''}\rangle\langle \nu_{\mathcal{H}''}| 
\nonumber\\
&= \frac{1}{d} \sum_{\mu,\nu,\rho,\sigma=0}^{d-1} ( \hat{X}_{\mathcal{M}} )_{ [ \mu\rho ], [ \nu\sigma ] } |\rho_{\mathcal{H}'}\rangle\langle \sigma_{\mathcal{H}'}|\otimes |\mu_{\mathcal{H}''}\rangle\langle \nu_{\mathcal{H}''}| , 
\end{align}
and thus 
\begin{align}
F_{e} ( \mathcal{M} ) &= \langle \Phi^{+} | ( \mathcal{M} \otimes \textrm{id}_{\mathcal{H}''} ) ( |\Phi^{+}\rangle\langle \Phi^{+}| ) |\Phi^{+}\rangle = \frac{1}{d^{2}} \sum_{\mu,\nu=0}^{d-1} ( \hat{X}_{\mathcal{M}} )_{ [ \mu\mu ], [ \nu\nu ] },   \label{eq:entanglement fidelity in terms of Choi matrix of M}
\end{align}
where $\hat{X}_{\mathcal{M}} \in \mathcal{L}( \mathcal{H}'\otimes \mathcal{H}' )$ is the Choi matrix \cite{Choi1975} of the channel $\mathcal{M}$ whose matrix elements are defined as
\begin{align}
( \hat{X}_{\mathcal{M}} )_{ [ \mu\rho ], [ \nu\sigma ] } &= \langle \rho_{\mathcal{H}'}| \mathcal{M}( |\mu_{\mathcal{H}'}\rangle\langle \nu_{\mathcal{H}'}| )|\sigma_{\mathcal{H}'}\rangle. 
\end{align}
The Choi matrix $\hat{X}_{\mathcal{M}}$ fully characterizes the channel $\mathcal{M}$ as it contains information about where an input basis element $|\mu_{\mathcal{H}'}\rangle\langle \nu_{\mathcal{H}'}|$ is mapped to for all basis elements (i.e., for all $\mu,\nu\in\lbrace 0,\cdots, d-1 \rbrace$). Similarly, we can define the Choi matrices of the encoding, noise, and recovery maps: 
\begin{align}
( \hat{X}_{\mathcal{E}} )_{ [ \mu\rho ], [ \nu\sigma ] } &= \langle \rho_{\mathcal{H}_{n}}| \mathcal{E}( |\mu_{\mathcal{H}'}\rangle\langle \nu_{\mathcal{H}'}| )|\sigma_{\mathcal{H}_{n}}\rangle, 
\nonumber\\
& ( \mu,\nu\in \lbrace 0, \cdots, d-1\rbrace \textrm{ and } \rho,\sigma\in \lbrace 0, \cdots, n-1\rbrace  ), 
\nonumber\\
( \hat{X}_{\mathcal{N}} )_{ [ \mu\rho ], [ \nu\sigma ] } &= \langle \rho_{\mathcal{H}_{n}}| \mathcal{N}( |\mu_{\mathcal{H}_{n}}\rangle\langle \nu_{\mathcal{H}_{n}}| )|\sigma_{\mathcal{H}_{n}}\rangle, 
\nonumber\\
& ( \mu,\nu\in \lbrace 0, \cdots, n-1\rbrace \textrm{ and } \rho,\sigma\in \lbrace 0, \cdots, n-1\rbrace  ), 
\nonumber\\
( \hat{X}_{\mathcal{R}} )_{ [ \mu\rho ], [ \nu\sigma ] } &= \langle \rho_{\mathcal{H}'}| \mathcal{R}( |\mu_{\mathcal{H}_{n}}\rangle\langle \nu_{\mathcal{H}_{n}}| )|\sigma_{\mathcal{H}'}\rangle, 
\nonumber\\
& ( \mu,\nu\in \lbrace 0, \cdots, n-1\rbrace \textrm{ and } \rho,\sigma\in \lbrace 0, \cdots, d-1\rbrace  ). 
\end{align}    

A nice property of the Choi matrix $\hat{X}_{\mathcal{A}}$ is that it can be used to directly check whether a map $\mathcal{A} : \mathcal{L}(\mathcal{H}_{1}) \rightarrow \mathcal{L}(\mathcal{H}_{2})$ is physically realizable or not. In general, it is known that a map $\mathcal{A}$ corresponds to a physically realizable quantum operation if and only if it is a completely-positive (CP) and trace-preserving (TP) map (i.e., a CPTP map) \cite{Nielsen2000}. The CP condition can be easily checked by inspecting whether the associated Choi matrix is positive semidefinite or not, i.e.,
\begin{align}
\mathcal{A} \textrm{ is completely positive } \leftrightarrow \hat{X}_{\mathcal{A}} \succeq 0. \label{eq:Choi matrix CP condition}
\end{align}    
Moreover, the TP condition can also be checked as follows: 
\begin{align}
\mathcal{A} \textrm{ is trace-preserving } \leftrightarrow \mathrm{Tr}_{\mathcal{H}_{2}} [ \hat{X}_{\mathcal{A}} ] \equiv \sum_{\rho = 0}^{\textrm{dim}(\mathcal{H}_{2}) } ( \hat{X}_{\mathcal{A}} )_{ [ \mu\rho ], [ \nu\sigma ] }  |\mu_{\mathcal{H}_{1}}\rangle\langle \nu_{\mathcal{H}_{1}}| = \hat{I}_{\mathcal{H}_{1}}, \label{eq:Choi matrix TP condition}
\end{align} 
where $\hat{I}_{\mathcal{H}_{1}}$ is the identity operation on the Hilbert space $\mathcal{H}_{1}$. Note that the two conditions in Eqs.\ \eqref{eq:Choi matrix CP condition} and \eqref{eq:Choi matrix TP condition} are convex.  

Because of this nice property, it is desirable to work with the Choi matrices when we optimize an objective function over all possible physical operations. In particular, if the objective function (to be minimized) is convex in the input Choi matrix, we can use an efficient convex optimization method \cite{Boyd2004} for the optimization because the constraints on the Choi matrix are convex as well, as shown above. For these reasons, it is desirable to break down the expression in Eq.\ \eqref{eq:entanglement fidelity in terms of Choi matrix of M} in terms of the Choi matrices of the encoding and the recovery maps $\mathcal{E}$ and $\mathcal{R}$ that we wish to optimize. 

To do so, let us consider the superoperator $\hat{T}_{\mathcal{A}}$ of a quantum map $\mathcal{A} : \mathcal{L}( \mathcal{H}_{1}) \rightarrow \mathcal{L}( \mathcal{H}_{2})$. Matrix elements of the the superoperator $\hat{T}_{\mathcal{A}}$ is defined as $( \hat{T}_{\mathcal{A}} )_{\rho\sigma,\mu\nu} \equiv ( \hat{X}_{\mathcal{A}} )_{[ \mu\rho ] , [ \nu\sigma ]}$. A nice property of the superoperators is that the superoperator of a composite channel $\mathcal{B}\cdot \mathcal{A}$ is given by the matrix multiplication of the superoperators of its constituting channels, i.e., 
\begin{align}
\hat{T}_{ \mathcal{B}\cdot  \mathcal{A}} =\hat{T}_{ \mathcal{B} }  \hat{T}_{ \mathcal{A}}  . 
\end{align}

With all the facts ready, let us now get back to the expression of the entanglement fidelity given in Eq.\ \eqref{eq:entanglement fidelity in terms of Choi matrix of M}. Note that
\begin{align}
F_{e}(\mathcal{M}) &=\frac{1}{d^{2}} \sum_{i,i'=0}^{d-1}(\hat{X}_{\mathcal{M}})_{[ii],[i'i']} 
\nonumber\\
&=  \frac{1}{d^{2}} \sum_{i,i'=0}^{d-1}(\hat{X}_{\mathcal{R}\cdot\mathcal{N}\cdot\mathcal{E}})_{[ii],[i'i']} =  \frac{1}{d^{2}} \sum_{i,i'=0}^{d-1}(\hat{T}_{\mathcal{R}\cdot\mathcal{N}\cdot\mathcal{E}})_{ii',ii'} = \frac{1}{d^{2}}\mathrm{Tr}[\hat{T}_{\mathcal{R}\cdot\mathcal{N}\cdot\mathcal{E}}], 
\end{align}
where $\hat{T}_{\mathcal{R}\cdot\mathcal{N}\cdot\mathcal{E}}$ is the superoperator of $ \mathcal{M} = \mathcal{R}\cdot\mathcal{N}\cdot\mathcal{E}$. Note that $\hat{T}_{\mathcal{R}\cdot\mathcal{N}\cdot\mathcal{E}}$ can be decomposed into $\hat{T}_{\mathcal{R}\cdot\mathcal{N}\cdot\mathcal{E}}= \hat{T}_{\mathcal{R}} \hat{T}_{\mathcal{N}} \hat{T}_{\mathcal{E}}$ and thus we have 
\begin{align} 
(\hat{T}_{\mathcal{R}\cdot\mathcal{N}\cdot\mathcal{E}})_{jj',ii'}  =\sum_{k,k',l,l'=0}^{n-1} (\hat{X}_{\mathcal{R}})_{[lj],[l'j']} (\hat{X}_{\mathcal{N}})_{[kl],[k'l']}  (\hat{X}_{\mathcal{E}})_{[ik],[i'k']}  \label{eq:superoperator of composite channel in terms of constituting Choi matrices}
\end{align}
for $i,i',j,j'\in \lbrace 0,\cdots,d-1 \rbrace$, where $\hat{X}_{\mathcal{R}}\in\mathcal{L}(\mathcal{H}'\otimes \mathcal{H}_{n})$, $\hat{X}_{\mathcal{N}}\in\mathcal{L}(\mathcal{H}_{n}\otimes \mathcal{H}_{n})$ and $\hat{X}_{\mathcal{E}}\in\mathcal{L}(\mathcal{H}_{n}\otimes \mathcal{H}')$ are the Choi matrices of the recovery map $\mathcal{R}$, the noise channel $\mathcal{N}$, and the encoding map $\mathcal{E}$, respectively. 

Thus, we can see at this point that the entanglement fidelity $F_{e}(\mathcal{M})$ is a bi-linear function of $\hat{X}_{\mathcal{R}}$ and $\hat{X}_{\mathcal{E}}$, since $\hat{T}_{\mathcal{R}\cdot\mathcal{N}\cdot\mathcal{E}}$ is bi-linear in $\hat{X}_{\mathcal{R}}$ and $\hat{X}_{\mathcal{E}}$, as can be seen from Eq.\ \eqref{eq:superoperator of composite channel in terms of constituting Choi matrices}. To make the bi-linearity more evident, we define a linear map $f_{\mathcal{N}}:\mathcal{L}(\mathcal{H}_{n}\otimes \mathcal{H}')\rightarrow \mathcal{L}(\mathcal{H}'\otimes \mathcal{H}_{n})$ such that 
\begin{equation}
\big{(}f_{\mathcal{N}}(\hat{X})\big{)}_{[l'i'],[li]} \equiv \sum_{k,k'=0}^{n-1} (\hat{X}_{\mathcal{N}})_{[kl],[k'l']}  (\hat{X})_{[ik],[i'k']}, 
\end{equation}
where $l,l'\in\lbrace 0, \cdots,n-1\rbrace$. The entanglement fidelity $F_{e}(\mathcal{M})$ is then given by 
\begin{equation}
F_{e}(\mathcal{M}  ) = F_{e} ( \mathcal{R} \cdot \mathcal{N} \cdot \mathcal{E} )   = \frac{1}{d^{2}}\mathrm{Tr}\Big{[}\hat{X}_{\mathcal{R}} f_{\mathcal{N}}(\hat{X}_{\mathcal{E}}) \Big{]},  
\label{eq:entanglement fidelity in terms of Choi matrices} 
\end{equation}
which is apparently bi-linear in $\hat{X}_{\mathcal{E}}$ and $\hat{X}_{\mathcal{R}}$. This is precisely a property that we wanted because linear functions are convex.   

\subsection{Convex optimization of error recovery operations} 

Suppose that we are given with an encoding map $\mathcal{E} = \bar{\mathcal{E}}$ and we want to understand its intrinsic error-correcting capability against a given noise channel $\mathcal{N}$. This was precisely the case in Section \ref{section:Benchmarking single-mode bosonic codes}. By now it is clear that finding an optimal recovery operation $\mathcal{R}^{\star}$ that maximizes the entanglement fidelity $F_{e}(\mathcal{M})$ is a semidefinite programming (SDP) if the encoding map is fixed, i.e., $\mathcal{E}=\bar{\mathcal{E}}$ \cite{Reimpell2005,Fletcher2007}:
\begin{alignat}{2}
&\max_{\hat{X}_{\mathcal{R}}}&& \frac{1}{d^{2}} \mathrm{Tr}[ \hat{X}_{\mathcal{R}}(f_{\mathcal{N}}(\hat{X}_{\bar{\mathcal{E}}}))]
\nonumber\\
&\,\,\,\textrm{s.t.}&& \hat{X}_{\mathcal{R}}=\hat{X}_{\mathcal{R}}^{\dagger}\succeq 0,\,\, \mathrm{Tr}_{\mathcal{H}'}\hat{X}_{\mathcal{R}} = \hat{I}_{\mathcal{H}_{n}}.  \label{eq:SDP decoding optimization}
\end{alignat} 
Here, the constraints are due to the CPTP nature of the recovery operation $\mathcal{R}$. The optimal recovery operations $\mathcal{R}^{\star}$ used in Fig.\ \ref{fig:benchmarking single-mode bosonic codes} are obtained by solving the SDP in Eq.\ \eqref{eq:SDP decoding optimization}. To solve each instance of SDP, we used CVX, a package for specifying and solving convex programs \cite{CVX,Grant2008}      

\subsection{Biconvex optimization of single-mode bosonic codes}

Similarly, optimizing an encoding map $\mathcal{E}$ for a given recovery operation $\mathcal{R} = \bar{\mathcal{R}}$ is also a semidefinite programming. Thus, the entire problem of optimizing the set of encoding and recovery maps is a biconvex optimization problem. This idea was used to optimize multi-qubit error-correcting codes in Ref.\ \cite{Kosut2009}. Note that in the context of bosonic quantum error correction, it is also important to impose an energy constraint to the error-correcting codes to make a fair comparison between different bosonic code families. Furthermore in the optimization perspective, it is essential to impose the energy constraint while still preserving the bi-convexity of the problem.  

With these issues in minds, let us consider an energy observable $\hat{E}\in \mathcal{L}(\mathcal{H}_{n})$ and let $\mathrm{Tr}_{\mathcal{H}_{n}}[\hat{E}\hat{\rho}_{\mathcal{E}}]$ be the average energy of the encoding map $\mathcal{E}$. Here, 
\begin{align}
\hat{\rho}_{\mathcal{E}}\equiv \mathcal{E}\Big{(} \frac{1}{d}\sum_{i=0}^{d-1}|i_{\mathcal{H}'}\rangle\langle i_{\mathcal{H}'}| \Big{)}=\frac{1}{d} \mathrm{Tr}_{\mathcal{H}'} \hat{X}_{\mathcal{E}}
\end{align}
is the state resulting from applying $\mathcal{E}$ to the maximally mixed state in $\mathcal{H}'$. Then, the energy constraint is explicitly given by
\begin{align}
\mathrm{Tr}_{\mathcal{H}_{n}}[\hat{E}\hat{\rho}_{\mathcal{E}}] &= \frac{1}{d}\mathrm{Tr}[(\hat{E}\otimes \hat{I}_{\mathcal{H}'}) \hat{X}_{\mathcal{E}}] \le  \bar{E} .  
\end{align}
Therefore, we end up with the following energy-constrained biconvex encoding and decoding optimization \cite{Noh2019}: 
\begin{alignat}{2}
&\max_{\hat{X}_{\mathcal{E}},\hat{X}_{\mathcal{R}}}&& \mathrm{Tr}[\hat{X}_{\mathcal{R}}^{\dagger}f_{\mathcal{N}}(\hat{X}_{\mathcal{E}})], \,\,
\nonumber\\
&\quad \textrm{s.t.}&& \hat{X}_{\mathcal{R}}=\hat{X}_{\mathcal{R}}^{\dagger}\succeq 0,\,\, \mathrm{Tr}_{\mathcal{H}'}\hat{X}_{\mathcal{R}} = \hat{I}_{\mathcal{H}_{n}}, 
\nonumber\\
&&& \hat{X}_{\mathcal{E}}=\hat{X}_{\mathcal{E}}^{\dagger}\succeq 0,\,\, \mathrm{Tr}_{\mathcal{H}_{n}}\hat{X}_{\mathcal{E}} = \hat{I}_{\mathcal{H}'},  \,\,\, \textrm{and}\,\,\, \mathrm{Tr}[(\hat{E}\otimes \hat{I}_{\mathcal{H}'}) \hat{X}_{\mathcal{E}}] \le  \bar{E}d . 
\label{eq:biconvex optimization formulation} 
\end{alignat}
Note that the last constraint is due to the average energy constraint to the encoding maps.

\subsection{Results} 

In principle, a global optimal solution of Eq.\ \eqref{eq:biconvex optimization formulation} can be deterministically found by a global optimization algorithm outlined in Ref.\ \cite{Floudas1990}. To implement the algorithm, however, one should in general solve exponentially many convex sub-problems in the number of complicating variables (responsible for non-convexity of the problem; see Ref.\ \cite{Floudas1990} and also Ref.\ \cite{Huber2019} for more details), which is intractable in our application below. Thus, we instead solve Eq.\ \eqref{eq:biconvex optimization formulation} heuristically by alternating between encoding and recovery optimization (i.e., SDP sub-problems) starting from a random initial encoding map. We generate the random initial code by taking the first $d$ columns of an $n\times n$ Haar random unitary matrix. To solve each SDP sub-problem we used CVX, a package for specifying and solving convex programs \cite{CVX,Grant2008}.  
 
\begin{figure}[t!]
\centering
\includegraphics[width=5.8in]{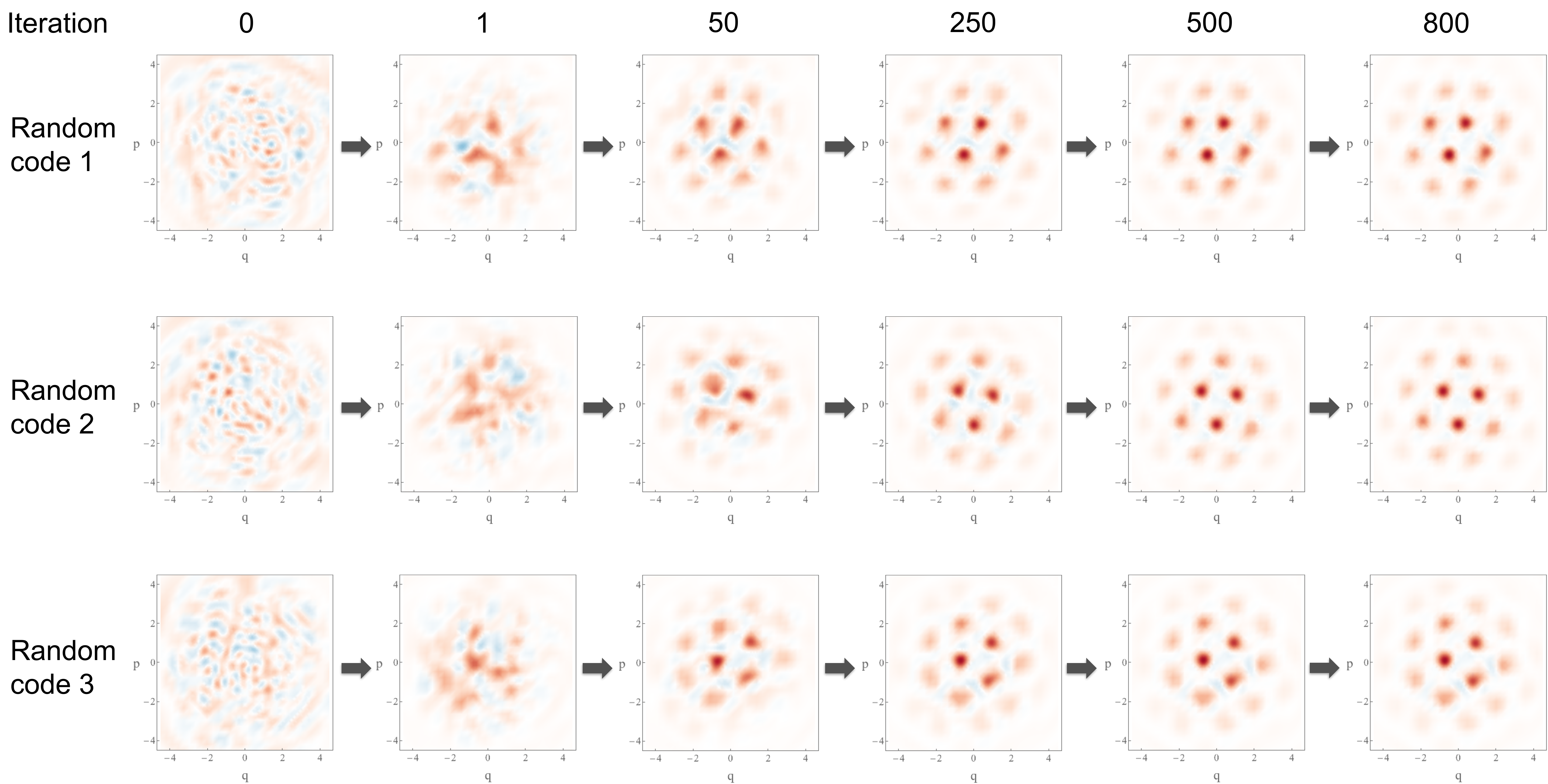}
\caption{[Fig.\ 4 in IEEE Trans. Info. Theory \textbf{65}, 2563–-2582 (2019)] Biconvex optimization of the encoding and recovery maps $\mathcal{E}$ and $\mathcal{R}$ for the bosonic pure-loss channel $\mathcal{N}[\eta,0]$ with $\eta=0.9$, or $\gamma = 1-\eta=  0.1$. We chose $n=20$ and $d=2$ and imposed an average photon number constraint $\mathrm{Tr}[\hat{n}\hat{\rho}_{\mathcal{E}}] \le  \ 3$, where $\hat{\rho}_{\mathcal{E}} = (1/d) \mathrm{Tr}_{\mathcal{H}'}\hat{X}_{\mathcal{E}}$ is the maximally mixed code state. The first column of each row represents the Wigner function of $\hat{\rho}_{\mathcal{E}}$ for a randomly generated encoding map $\mathcal{E}$. From the second to the sixth columns represent the updated code spaces after $1,50,250,500$ and $800$ iterations of the alternating semidefinite programming. }
\label{fig:biconvex optimization combined}
\end{figure}

Let us now specialize Eq.\ \eqref{eq:biconvex optimization formulation} to $\mathcal{N}=\mathcal{N}[\eta,0]$ and $d=2$ to find an optimal qubit-into-an-oscillator code for a bosonic pure-loss channel, subject to an average photon number constraint $\mathrm{Tr}[\hat{n}\hat{\rho}_{\mathcal{E}}] \le \bar{n}_{\textrm{max}}$. To make the optimization tractable, we confine the bosonic Hilbert space to a truncated subspace $\mathcal{H}_{n}\equiv\textrm{span}\lbrace |0\rangle,\cdots,|n-1\rangle \rbrace$ and choose $n\gg \bar{n}_{\textrm{max}}$ to avoid any artifacts caused by truncation. In particular, we use the Kraus representation of a bosonic pure-loss channel $\mathcal{N}[\eta,0](\hat{\rho}) = \sum_{\ell=0}^{n-1}\hat{N}_{\ell}\hat{\rho}\hat{N}_{\ell}^{\dagger}$, where the Kraus operators $\hat{N}_{\ell}$ are given by \cite{Ueda1989,Lee1994,Chuang1997} (see also Table \ref{table:excitation loss errors})  
\begin{equation}
\hat{N}_{\ell} = \sqrt{\frac{(1-\eta)^{\ell}}{\ell!}} \eta^{\frac{\hat{n}}{2}}\hat{a}^{\ell}. 
\end{equation}

In Fig.\ \ref{fig:biconvex optimization combined}, we took $\eta=0.9$, $n=20$, $d=2$ and $\bar{n}_{\textrm{max}}=3$ and plot the Wigner function of the maximally mixed code states of the numerically optimized codes (last column), starting from three different random Haar initial codes (first column). In all instances, the obtained codes are given by a hexagonal-lattice GKP code (see Fig.\ \ref{fig:Wigner functions of bosonic codes for benchmarking}), up to an overall displacement. The optimized code in the second row exhibits the best performance (i.e., $1-F_{\mathcal{N}}^{\star}=0.002092$). 

We emphasize that the biconvex optimization in Eq.\ \eqref{eq:biconvex optimization formulation} explores the most general form of CPTP encoding maps, including the ones involving mixed state encoding. However, from the numerical optimization, we only obtained a pure-state encoding (i.e., $\mathcal{E}(\hat{\rho})=\hat{V}\hat{\rho}\hat{V}^{\dagger}$, where $\hat{V}:\mathcal{H}'\rightarrow\mathcal{H}_{n}$ is an isometry $\hat{V}^{\dagger}\hat{V}=\hat{I}_{\mathcal{H}'}$) as an optimal solution at all iterations of SDP sub-problems. However, we also stress that the alternating semidefinite programming method is not guaranteed to yield a global optimal solution. Despite the latter caveat, the numerical results shown in Fig.\ \ref{fig:biconvex optimization combined} indicate that a hexagonal-lattice GKP code is consistently obtained from independent Haar-random initial codes. Thus, the optimization results suggest that the GKP code family may indeed be the most effective bosonic code family in correcting excitation loss errors. The numerical results also demonstrate the advantage of using an optimal lattice structure that allows the densest sphere packing which, in the case of the $2$-dimensional Euclidean space, is given by the hexagonal-lattice structure \cite{Fejes1942}.

\section{Decoding GKP codes subject to excitation loss errors}
\label{section:Decoding GKP codes subject to excitation loss errors}

It is clear by now that the GKP codes exhibit excellent performance against excitation loss errors and may as well be the optimal codes for this purpose. Again, these results are surprising because the GKP codes are not designed to correct loss errors. Instead, the GKP codes are designed to correct random shift errors. However, this also means that the error recovery schemes that we use for the GKP codes to recover from random shift errors may not really work when they are used to correct for the excitation loss errors. Therefore, to fully take advantage of the excellent error-correcting capability of the GKP codes against loss errors, it is very important to understand how precisely the GKP codes work against excitation loss errors. In other words, we need to understand what the numerically optimized error recovery map $\mathcal{R}^{\star}$ (obtained from an SDP in Eq.\ \eqref{eq:SDP decoding optimization}) does for the GKP codes, ideally in terms of simple operations that we are aware of already.         

As discussed above, it is often said that small loss errors can be regarded as small shift errors and this is why the GKP codes work well against excitation loss errors because the GKP codes can correct small shift errors. This argument may explain the performance of small GKP codes with a small average energy for small loss parameters. However, this argument does not apply for large GKP codes because in that case, even a small fraction of loss can cause a huge shift error in the high-energy domain that is not correctable by the GKP codes. On the other hand, the numerical results in Fig.\ \ref{fig:benchmarking single-mode bosonic codes}(b) and (c) show that the GKP codes perform well even if in the case of large average energy and even for large loss errors (e.g., $\gamma \simeq 0.2$). These numerical results clearly indicate that there are more things going on than ``small loss errors equal small shifts errors''. Below, we will provide an alternative explanation that applies to large GKP codes as well. In one sentence, our explanation is ``the GKP codes work well against loss errors because loss errors can be converted via an amplification to shift errors, which the GKP codes can correct''.

\subsection{Transforming a loss error into a random shift error}

Here, we will show that a bosonic pure-loss channel can be converted via a quantum-limited amplification into a Gaussian random shift error. More precisely, we formulate the following theorem:  

\begin{theorem}[Pure-loss + Amplification = Random shift \cite{Albert2018}]
Let $\mathcal{N}[\eta,0]$ be a bosonic pure-loss channel with a transmissivity $\eta\in[0,1]$. Let $\mathcal{A}[1/\eta , 0]$ be a quantum-limited amplification channel (see Definition \ref{definition:quantum limited amplification}) with gain $G=1/\eta$. Then, we have 
\begin{equation}
\mathcal{A}\Big{[} \frac{1}{\eta} ,0 \Big{]} \cdot \mathcal{N}[\eta,0] = \mathcal{N}_{B_{2}}[\sigma_{\eta,0}], 
\end{equation} 
where the noise variance $(\sigma_{\eta,0})^{2}$ is given by 
\begin{equation}
(\sigma_{\eta,0})^{2} \equiv \frac{1-\eta}{\eta}  = \frac{\gamma}{1-\gamma}.  \label{eq:effective variance of displacement post-amp} 
\end{equation}
Here, $\gamma = 1-\eta$ is the loss probability. See Fig.\ \ref{fig:loss + amp = displacement visualization} for a schematic illustration. \label{theorem:loss plus amplification is displacement post-amplification}
\end{theorem}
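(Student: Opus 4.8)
The plan is to work at the level of Gaussian-channel composition data $(\boldsymbol{T},\boldsymbol{N},\boldsymbol{d})$, since both the pure-loss channel and the quantum-limited amplifier are Gaussian and the composition of Gaussian channels is Gaussian with composition rules that are linear/quadratic in the constituent data. First I would recall the characterizations given in the excerpt: $\mathcal{N}[\eta,0] \leftrightarrow (\sqrt{\eta}\,\boldsymbol{I}_2, \tfrac{1}{2}(1-\eta)\boldsymbol{I}_2, 0)$, and I would invoke Definition \ref{definition:quantum limited amplification} (referenced but not reproduced in the excerpt) for $\mathcal{A}[G,0]$, which I expect to have the form $(\sqrt{G}\,\boldsymbol{I}_2, \tfrac{1}{2}(G-1)\boldsymbol{I}_2, 0)$. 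With $G = 1/\eta$ this reads $(\tfrac{1}{\sqrt{\eta}}\boldsymbol{I}_2, \tfrac{1}{2}(\tfrac{1}{\eta}-1)\boldsymbol{I}_2, 0)$. Likewise I would recall $\mathcal{N}_{B_2}[\sigma] \leftrightarrow (\boldsymbol{I}_2, \sigma^2\boldsymbol{I}_2, 0)$ from Eq.\ \eqref{eq:random shift errors Gaussian channels}.

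The key computation is then the composition law. For Gaussian channels $(\boldsymbol{T}_1,\boldsymbol{N}_1,\boldsymbol{d}_1)$ followed by $(\boldsymbol{T}_2,\boldsymbol{N}_2,\boldsymbol{d}_2)$, the composite has transformation matrix $\boldsymbol{T}_2\boldsymbol{T}_1$, noise matrix $\boldsymbol{T}_2\boldsymbol{N}_1\boldsymbol{T}_2^{T} + \boldsymbol{N}_2$, and displacement $\boldsymbol{T}_2\boldsymbol{d}_1 + \boldsymbol{d}_2$ (this rule is in Appendix \ref{section:Gaussian channels}). Applying this with channel $1$ the pure-loss channel and channel $2$ the amplifier: the composite $\boldsymbol{T}$ is $\tfrac{1}{\sqrt{\eta}}\boldsymbol{I}_2 \cdot \sqrt{\eta}\,\boldsymbol{I}_2 = \boldsymbol{I}_2$, as required for a random-shift channel; the composite displacement is $0$; and the composite noise matrix is
\begin{align}
\boldsymbol{N} &= \frac{1}{\sqrt{\eta}}\boldsymbol{I}_2 \cdot \frac{1-\eta}{2}\boldsymbol{I}_2 \cdot \frac{1}{\sqrt{\eta}}\boldsymbol{I}_2 + \frac{1}{2}\Big(\frac{1}{\eta}-1\Big)\boldsymbol{I}_2 = \frac{1-\eta}{2\eta}\boldsymbol{I}_2 + \frac{1-\eta}{2\eta}\boldsymbol{I}_2 = \frac{1-\eta}{\eta}\boldsymbol{I}_2 . \nonumber
\end{align}
Matching this against $(\boldsymbol{I}_2, \sigma^2\boldsymbol{I}_2, 0)$ gives $\sigma^2 = \tfrac{1-\eta}{\eta}$, which is exactly Eq.\ \eqref{eq:effective variance of displacement post-amp} after substituting $\eta = 1-\gamma$. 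Since a Gaussian channel is uniquely determined by its $(\boldsymbol{T},\boldsymbol{N},\boldsymbol{d})$ data, this establishes the claimed channel identity.

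I would round out the argument with a brief remark that one must check the amplifier gain $G = 1/\eta \ge 1$ so that $\mathcal{A}[1/\eta,0]$ is a legitimate (completely positive) quantum-limited amplifier, which holds for all $\eta \in (0,1]$; the boundary case $\eta = 1$ (so $G=1$, $\sigma = 0$) is the identity channel on both sides and is trivially consistent. The main obstacle, such as it is, is purely bookkeeping: making sure the conventions for the amplifier's $(\boldsymbol{T},\boldsymbol{N},\boldsymbol{d})$ triple and for the Gaussian-channel composition rule in Appendix \ref{appendix:Gaussian states, unitaries, and channels} are used consistently (including factor-of-$\tfrac12$ conventions for $\boldsymbol{N}$ and the vacuum variance $\tfrac12$), since a mismatched convention would shift the noise by a constant and spoil the clean cancellation. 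Once the conventions are pinned down, the proof is a two-line matrix calculation plus the uniqueness-of-Gaussian-channel observation.
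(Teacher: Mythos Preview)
Your proposal is correct and follows essentially the same approach as the paper: both arguments identify the two channels by their Gaussian $(\boldsymbol{T},\boldsymbol{N},\boldsymbol{d})$ data and compose them, with the paper tracing the first two moments $(\boldsymbol{\bar{x}},\boldsymbol{V})$ through the loss and then the amplifier step by step, while you package the same calculation into the composition rule $\boldsymbol{T}=\boldsymbol{T}_2\boldsymbol{T}_1$, $\boldsymbol{N}=\boldsymbol{T}_2\boldsymbol{N}_1\boldsymbol{T}_2^{T}+\boldsymbol{N}_2$. One small caveat: the composition rule you invoke is not stated verbatim in Appendix~\ref{section:Gaussian channels}, so in a self-contained write-up you would want to derive it (in one line) from the moment transformation law there, exactly as the paper does implicitly.
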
   

\begin{figure}[t!]
\centering
\includegraphics[width=5.0in]{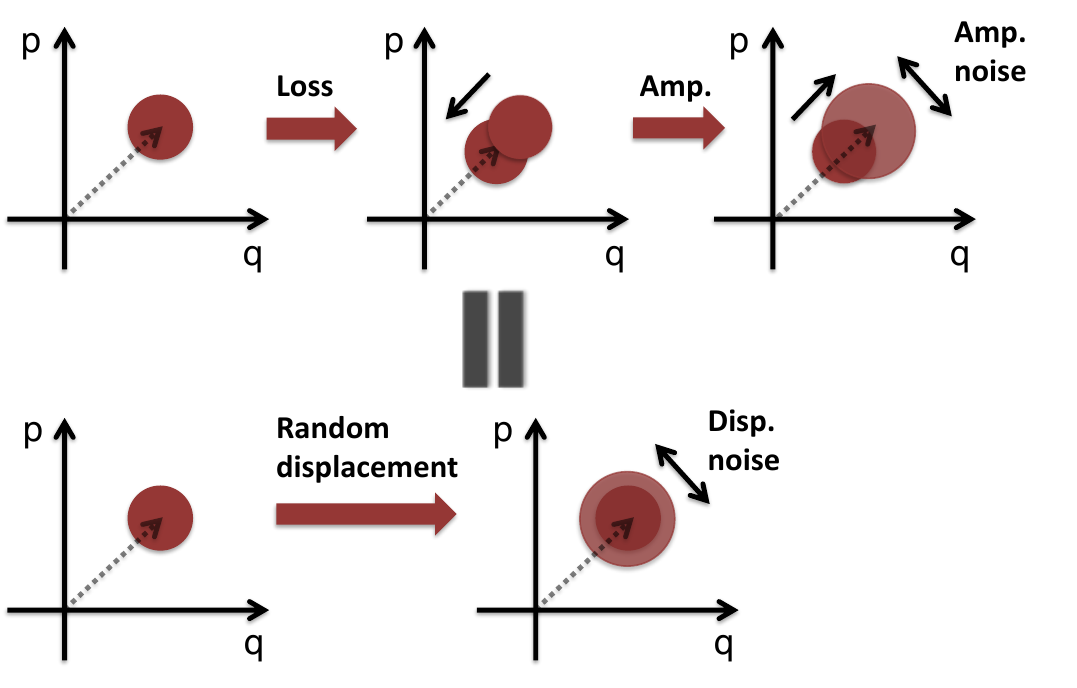}
\caption{Schematic illustration of ``Loss + Amplification = Displacement'' (Theorem \ref{theorem:loss plus amplification is displacement post-amplification}). }
\label{fig:loss + amp = displacement visualization}
\end{figure}

\begin{proof}
The most convenient way to prove Theorem \ref{theorem:loss plus amplification is displacement post-amplification} is to use the notion of Gaussian channels and their characterization. Recall that a Gaussian channel $\mathcal{N}$ is unambiguously characterized by its action on the first two moments of a state, i.e., 
\begin{align}
\boldsymbol{\bar{x}} ' &= \boldsymbol{T} \boldsymbol{\bar{x}}  +\boldsymbol{d}, 
\nonumber\\
\boldsymbol{V} ' &= \boldsymbol{T}  \boldsymbol{V} \boldsymbol{T}^{T} + \boldsymbol{N}.  
\end{align}     
Here, $\boldsymbol{\bar{x}}$ is the mean value of the quadrature operator of a state and $\boldsymbol{V}$ is the covariance matrix of the state. Thus, a Gaussian channel $\mathcal{N}$ is fully characterized by $(\boldsymbol{T}, \boldsymbol{N}, \boldsymbol{d})$. As shown in Section \ref{section:Relevant error models in bosonic systems} (see Table \ref{table:excitation loss errors}), the bosonic pure-loss channel $\mathcal{N}[\eta,0]$ is characterized by 
\begin{align}
\boldsymbol{T}_{1} &= \sqrt{\eta} \boldsymbol{I}_{2}, \quad \boldsymbol{N}_{1} = \frac{1}{2}\sqrt{1-\eta} \boldsymbol{I}_{2}, \quad 
\boldsymbol{d}_{1} =0, 
\end{align}
where $\boldsymbol{I}_{2}$ is the $2\times 2$ matrix. This framework gives a clear picture that the quadrature operators of a state contracts by a factor of $\sqrt{\eta}$ due to the pure-loss error. Moreover, a fraction (i.e., $1-\eta$) of the environmental vacuum noise $\frac{1}{2}\boldsymbol{I}_{2}$ is transferred via the loss error to the system and thus adds noise with a covariance matrix $\frac{1}{2}(1-\eta)\boldsymbol{I}_{2}$.  

The quantum-limited amplification channel $\mathcal{A}[G,0]$ is defined in Definition \ref{definition:quantum limited amplification} and is characterized by 
\begin{align}
\boldsymbol{T}_{2} &= \sqrt{G} \boldsymbol{I}_{2}, \quad \boldsymbol{N}_{2} = \frac{1}{2}\sqrt{G-1} \boldsymbol{I}_{2}, \quad 
\boldsymbol{d}_{2} =0, 
\end{align}
where $G (\ge 1)$ is the gain of the amplifier. Thus, an amplification channel does exactly the opposite to the quadrature operators of a quantum state compared to a loss channel. That is, it amplifies the mean quadrature operators by a factor of $\sqrt{G}$ and in the expense adds a transferred vacuum noise with a covariance matrix $\frac{1}{2}(G-1)\boldsymbol{I}_{2}$. Therefore, by choosing the gain $G$ properly, i.e., $G= \frac{1}{\eta}$, we can compensate the contractive effects of the bosonic pure-loss channel $\mathcal{N}[\eta,0]$ at least at the level of the mean quadrature.  

Putting all these together, we fine that the first two moments of a quantum state $\boldsymbol{\bar{x}}$ and $\boldsymbol{V}$ are transformed via the pure-loss channel $\mathcal{N}[\eta,0]$ into  
\begin{align}
\boldsymbol{\bar{x}} ' &= \sqrt{\eta} \boldsymbol{\bar{x}}  , 
\nonumber\\
\boldsymbol{V} ' &= \eta \boldsymbol{V}  + \frac{1}{2}(1-\eta)\boldsymbol{I}_{2}.  
\end{align}
Then these moments are further transformed via the amplification channel $\mathcal{A}[\frac{1}{\eta},0]$ into 
\begin{align}
\boldsymbol{\bar{x}}'' &= \sqrt{ \frac{1}{\eta} } \boldsymbol{\bar{x}}'    = \boldsymbol{\bar{x}} , 
\nonumber\\
\boldsymbol{V} '' &= \frac{1}{\eta} \boldsymbol{V}'  + \frac{1}{2}\Big{(} \frac{1}{\eta}-1\Big{)}\boldsymbol{I}_{2} 
\nonumber\\
& = \frac{1}{\eta} \Big{(}   \eta \boldsymbol{V}  + \frac{1}{2}(1-\eta)\boldsymbol{I}_{2} \Big{)} + \frac{1}{2}\Big{(} \frac{1}{\eta}-1\Big{)}\boldsymbol{I}_{2}  = \boldsymbol{V} + \Big{(} \frac{1-\eta}{\eta} \Big{)} \boldsymbol{I}_{2}. 
\end{align}
Thus, the net effect is trivial on the mean quadrature as expected. However, there is an added noise to the covariance matrix by $(\frac{1-\eta}{\eta})\boldsymbol{I}_{2}$. Note that this is precisely what a Gaussian random shift error $\mathcal{N}_{B_{2}}[\sigma]$ does to the quadrature operators of a quantum state. In particular, the added noise covariance matrix is given by $\boldsymbol{N} =\sigma^{2} \boldsymbol{I}_{2}$ for the random shift error $\mathcal{N}_{B_{2}}[\sigma]$ (see Table \ref{table:random shift errors}). Thus, we can conclude that the actions of the pure-loss channel $\mathcal{N}[\eta,0]$ followed by the amplification channel $\mathcal{A}[\frac{1}{\eta},0]$ is equivalent to the action of a Gaussian random shift error $\mathcal{N}_{B_{2}}[\sigma_{\eta,0}]$ with the noise variance 
\begin{align}
( \sigma_{\eta,0} )^{2} = \frac{1-\eta}{\eta}. 
\end{align} 
and thus the theorem follows. 
\end{proof}

\subsection{Amplification decoding}

Based on Theorem \ref{theorem:loss plus amplification is displacement post-amplification}, we can immediately come up with a sensible decoding scheme for the GKP codes against excitation loss errors: Given a GKP code and a bosonic pure-loss channel $\mathcal{N}[\eta,0]$ which we want to correct, we can simply apply a quantum-limited amplification $\mathcal{A}[\frac{1}{\eta},0]$ to convert the loss channel into a Gaussian random shift error $\mathcal{N}_{B_{2}}[ \sigma_{\eta,0} ]$ with $\sigma_{\eta,0} = \sqrt{ \frac{1-\eta}{\eta} } = \sqrt{ \frac{\gamma}{1-\gamma} }$. Then, we can use the conventional GKP error correction schemes discussed in detailed in Chapter \ref{chapter:Bosonic quantum error correction} (see Figs.\ \ref{fig:GKP stabilizer measurement position} and \ref{fig:GKP stabilizer measurement momentum}) to correct the resulting random shift error. Then, by using the bounds obtained in Eq.\ \eqref{eq:performance of GKP codes against a Gaussian random shift error}, we find that the failure probability of this error recovery scheme is given by 
\begin{align}
p_{\textrm{fail}}^{(\textrm{sq})}(\eta,0;d) &\le  \exp\Big{[} -\frac{\pi}{4d \cdot (\sigma_{\eta,0})^{2}} \Big{]}  = \exp\Big{[} -\frac{\pi}{4d } \Big{(} \frac{1-\gamma}{\gamma} \Big{)} \Big{]}    , 
\nonumber\\
p_{\textrm{fail}}^{(\textrm{hex})}(\eta,0;d) &\le \exp\Big{[} -\frac{\pi}{2\sqrt{3}d \cdot (\sigma_{\eta,0})^{2}} \Big{]} =  \exp\Big{[} -\frac{\pi}{2\sqrt{3}d}  \Big{(} \frac{1-\gamma}{\gamma} \Big{)} \Big{]} , \label{eq:performance of the single mode GKP codes against loss errors}
\end{align} 
where $d$ is the dimension of the code space (e.g., $d=2$ for qubit-into-an-oscillator codes). Note that the failure probability decrease very rapidly in a non-analytic way as the loss probability $\gamma$ decreases. 

We emphasize that the amplification decoding (i.e., amplification followed by the convention GKP error correction) works for all energy scales. This is because the amplification compensates the large shifts caused by loss errors in the high-energy regime. In particular, the resulting added noise variance $(\sigma_{\eta,0})^{2} = \frac{1-\eta}{\eta}  =\frac{\gamma}{1-\gamma} $ depends only on the loss probability $\gamma$ but not on the size of the GKP codes. This clearly explains why the GKP codes work well against the excitation loss errors even in the high-loss and the large-energy regimes. 

\begin{table}[t!]
  \centering
     \def\arraystretch{1.5}
  \begin{tabular}{ V{3} c V{1.5} c V{1.5} c V{3} }
   \hlineB{3}  
     $\gamma = 0.05$ & Amplification decoding $\mathcal{R}_{\textrm{amp}}$ & Optimal decoding $\mathcal{R}^{\star}$  \\  \hlineB{3} 
     $\bar{n}_{\textrm{max}} = 2$ & N/A & $1.9\times 10^{-3}$  \\  \hlineB{1.5} 
     $\bar{n}_{\textrm{max}} = 5$ & N/A & $2.2\times 10^{-5}$  \\  \hlineB{1.5} 
     $\bar{n}_{\textrm{max}} = 10$ & N/A & $1.5\times 10^{-7}$  \\  \hlineB{1.5} 
     $\bar{n}_{\textrm{max}} \rightarrow \infty$  & $\simeq 1.8\times 10^{-4}$ & N/A  \\  \hlineB{3} 
  \end{tabular}
  \caption{Comparison of the entanglement infidelity $1-F_{e}(\mathcal{M})$ (where $\mathcal{M} = \mathcal{R}\cdot\mathcal{N}\cdot\mathcal{E}$) of the amplification decoding $\mathcal{R} = \mathcal{R}_{\textrm{amp}}$ and the numerically-optimized decoding $\mathcal{R} = \mathcal{R}^{\star}$ for the GKP codes against the bosonic pure-loss channel $\mathcal{N} = \mathcal{N}[\eta,0]$ with a loss probability $\gamma = 1-\eta = 0.05$. For the amplification decoding, the bound on the failure probability for the hexagonal-lattice GKP code is used as an estimate of the entanglement infidelity. For the optimal decoding, the values from the Fig.\ \ref{fig:benchmarking single-mode bosonic codes} are presented.    }
\label{table:amplification decoding verses optimal decoding}
\end{table}

While the amplification decoding can explain all the qualitative features of the excellent performance of the GKP codes numerically observed in Fig.\ \ref{fig:benchmarking single-mode bosonic codes}, it still remains to be answered whether the amplification decoding is indeed identical to the numerically optimized decoding (from an SDP) quantitatively. The answer is unfortunately negative. To see why this is the case, note that the bounds in Eq.\ \eqref{eq:performance of the single mode GKP codes against loss errors} apply to the ideal GKP codes with an infinite energy and therefore will become less favorable as we decrease the allowed energy of the GKP codes. To compare these analytic bounds with the numerical results in Fig.\ \ref{fig:benchmarking single-mode bosonic codes}, let us specialize the bounds in Eq.\ \eqref{eq:performance of the single mode GKP codes against loss errors} to the qubit-into-an-oscillator case (i.e., $d=2$): 
\begin{align}
p_{\textrm{fail}}^{(\textrm{sq})}(\eta,0;d) &\le  \exp\Big{[} -\frac{\pi}{4d \cdot (\sigma_{\eta,0})^{2}} \Big{]}  = \exp\Big{[} -\frac{\pi}{8 } \Big{(} \frac{1-\gamma}{\gamma} \Big{)} \Big{]}    , 
\nonumber\\
p_{\textrm{fail}}^{(\textrm{hex})}(\eta,0;d) &\le \exp\Big{[} -\frac{\pi}{2\sqrt{3}d \cdot (\sigma_{\eta,0})^{2}} \Big{]} =  \exp\Big{[} -\frac{\pi}{4\sqrt{3}}  \Big{(} \frac{1-\gamma}{\gamma} \Big{)} \Big{]} . \label{eq:performance of the single mode GKP codes against loss errors d=2}
\end{align}      
For the hexagonal-lattice GKP code, the above bound yields $\exp[-\frac{\pi}{4\sqrt{3}}  ( \frac{1-\gamma}{\gamma} ) ]  =1.8\times 10^{-4}$ when $\gamma = 0.05$. Again, this bound is obtained by assuming $\bar{n}_{\textrm{max}} \rightarrow \infty$ and the amplification decoding. On the other hand, as shown in Table \ref{table:amplification decoding verses optimal decoding}, the entanglement infidelities of the gkp code family (based on an optimal decoding) are given by $1.9\times 10^{-3}$ ($\bar{n}_{\textrm{max}} =2$), $2.2\times 10^{-5}$ ($\bar{n}_{\textrm{max}} =5$), and $1.5\times 10^{-7}$ ($\bar{n}_{\textrm{max}} =10$) at the same loss parameter $\gamma = 0.05$. That is, the optimal decoding achieves an even lower error rate for a finite-size GKP code than the amplification decoding does for an infinitely large GKP code. That is, the amplification decoding is significantly outperformed by the optimal decoding. While this may be disappointing, it is great news as well because the performance of the amplification decoding is already excellent (as indicated by the rapid non-analytic decrease of the failure probability as the loss probability $\gamma$ decreases) and the optimal decoding will perform even better. 

We remark that the amplification decoding is feasible because the quantum-limited amplification can be implemented by using a vacuum state and a two-mode squeezing operation, which are Gaussian. The only non-Gaussian resource needed in the scheme is the preparation of a computational basis state of the GKP code for implementing the GKP stabilizer measurements (see Figs.\ \ref{fig:GKP stabilizer measurement position} and \ref{fig:GKP stabilizer measurement momentum}). For this reason, we use the amplification decoding scheme when we study the fault-tolerance properties of the GKP codes in Chapter \ref{chapter:Fault-tolerant bosonic quantum error correction}. In particular, we show that it is possible to correct excitation loss errors fault-tolerantly by using the amplification decoding when we concatenate the square-lattice GKP code with the surface code. On the other hand, it is unclear yet whether there is an efficient and structured way to implement the optimal decoding. Moreover, it remains to be answered whether the optimal decoding can be made fault-tolerant.

\section{Open questions}
\label{section:Open questions benchmarking and optimization}

Recall that the alternating SDP method used in Section \ref{section:Optimizing single-mode bosonic codes} is not guaranteed to yield a global optimal solution to the biconvex optimization problem in Eq.\ \eqref{eq:biconvex optimization formulation}. Thus, it remains to be seen whether the hexagonal-lattice GKP code emerges as a global optimal solution when we use optimization methods such as the one in Ref.\ \cite{Huber2019} that are guaranteed to provide a global optimal solution. Moreover, it will be interesting to see in the multi-mode case whether the GKP codes (with a lattice structure allowing the most efficient sphere packing) emerge as an optimal solution from the biconvex optimization.      

Note that while the amplification decoding introduced in Section \ref{section:Decoding GKP codes subject to excitation loss errors} yields a desirable logical error rate that decreases very rapidly as the loss probability $\gamma$ decreases, it is still significantly outperformed by the numerically optimized decoding strategies used in Sections \ref{section:Benchmarking single-mode bosonic codes} and \ref{section:Optimizing single-mode bosonic codes} (see also Table \ref{table:amplification decoding verses optimal decoding}). While the numerically optimized decoding exhibits an excellent performance, it is not clear how this decoding should be implemented in practice. Thus, it will be an interesting research direction to look for an explicit and structured decoding strategy that performs nearly as well as the numerically optimized decoding. Ideally, such a decoding operation should be decomposed into simple operations that can be readily implemented in experiments. 

Note that the quantum-limited amplification used in the amplification decoding is universally applicable to any bosonic states and thus is not specifically tailored to the GKP codes. In the expense of the versatility, the quantum-limited amplification always comes with an inevitable added noise \cite{Heffner1962}. Thus, one possible way to search for such a better decoding scheme is to replace the quantum-limited amplification channel by another amplification channel that is more tailored to the GKP codes.      

\chapter{Fault-tolerant bosonic quantum error correction}
\label{chapter:Fault-tolerant bosonic quantum error correction}

In this chapter, I will discuss fault-tolerant quantum error correction in bosonic systems. This chapter is based on my work on the surface-GKP code in Ref.\ \cite{Noh2020}, which was done in collaboration with Dr.\ Christopher Chamberland. This project was conceived and completed during an internship at IBM T. J. Watson Research Center in the summer of 2019.  


The main goal of this chapter is to investigate the performance of the GKP code in a realistic situation wherein even our attempts to correct for errors can be erroneous. The reasons for choosing to focus on the GKP code are as follows: 
\begin{itemize}
\item As shown in the previous chapter, the GKP code exhibits an excellent, if not optimal, performance against practically relevant excitation loss errors under a set of idealized assumptions. 
\item The GKP code shares many structural similarities with the conventional multi-qubit stabilizer codes. This allows the GKP code to work in concert with multi-qubit stabilizer codes in a seamless way.  
\item Preparation of a GKP state is the only non-Gaussian resource needed to perform a universal set of gates on the GKP code (see Section \ref{section:Translation-symmetric bosonic codes}). Thus, one can simply focus on preparing high-quality GKP states offline. Online non-Gaussian operations (e.g., cubic phase gate or Kerr nonlinearity) are not needed. 
\end{itemize}
Notably, the last property clearly shows that GKP states are a useful non-Gaussian resource. However, ideal GKP states have an infinite energy and thus are not strictly feasible. This means that GKP states are inevitably noisy in realistic situations. It is therefore very important to address the question of whether such noisy GKP states can be used to realize large-scald and fault-tolerant quantum information processing. In this chapter, I will give an affirmative answer in the context of fault-tolerant quantum error correction.


In Section \ref{section:Concatenation of a bosonic code with a multi-qubit code}, I will provide a general overview of concatenation of a bosonic code (even-parity codes, the two-component cat code, and the GKP code) with a multi-qubit code. Then, I will provide an in-depth description of the concatenation of the GKP code with a multi-qubit error-correcting code. Specifically, I will explain how the additional information gathered during the GKP error correction can boost the performance of the next layer of the multi-qubit error correction (see Subsection \ref{subsection:Concatenation of the GKP code with a multi-qubit code}). 

In Section \ref{section:Fault-tolerant bosonic quantum error correction with the surface-GKP code}, I will consider the concatenation of the GKP code with the surface code, i.e., the surface-GKP code, and study the performance of the surface-GKP code, assuming a detailed circuit-level noise model \cite{Noh2020}. In particular, I will demonstrate that fault-tolerant bosonic quantum error correction is possible with the surface-GKP code as long as the noise parameters are smaller than certain fault-tolerance thresholds. The main results and the fault-tolerance thresholds are given in Fig.\ \ref{fig:main results}. Comparison with previous related works is given in Subsection \ref{subsection:Comparison with previous works}. I will conclude the chapter by outlining several open questions in Section \ref{section:Open questions fault-tolerant bosonic quantum error correction}.  

\section{Concatenation of a bosonic code with a multi-qubit code}
\label{section:Concatenation of a bosonic code with a multi-qubit code}

Here, we give a general overview of the concatenation of a single-mode bosonic code with a multi-qubit error-correcting code. 

\subsection{Concatenation of an even-parity code with a multi-qubit code}  

Let us first consider bosonic codes that have an even excitation number parity. Codes of this type include the four-component cat code and the $(1,1)$-binomial code (see Section \ref{section:Rotation-symmetric bosonic codes}). Error recovery processes for these codes are based on the excitation number parity measurement. Such bosonic error correction schemes are hardware efficient because they can be implemented by using a single bosonic mode and an ancilla qubit (e.g., a microwave cavity mode coupled to a transmon qubit in circuit QED systems). In particular, the ancilla qubit is used to measure the excitation number parity.  

One important thing to realize here is that the logical error rates of the even-parity codes cannot be suppressed to an arbitrarily small value by using this minimal architecture (i.e., a single bosonic mode plus an ancilla qubit). Moreover, it is generally not expected either that this minimal error-correction scheme can achieve a logical error rate that is sufficiently low enough to reliably execute a non-trivial quantum algorithm. Thus at some point, it is essential to concatenate such single-mode bosonic codes with a conventional multi-qubit error-correcting code in order to correct errors that are left uncorrected at the bosonic QEC level. Then, it might appear that we need conventional multi-qubit error correction schemes anyway and thus bosonic QEC is not so useful. However, this is certainly not true because, if bosonic QEC is successfully implemented, the error-corrected bosonic qubits will have a lower error rate than that of the best available physical qubit without error correction. If such error-corrected bosonic qubits are used to implement a multi-qubit QEC scheme, the required resource overhead associated with the use of the multi-qubit QEC can be significantly reduced. This is a general idea that applies to any bosonic codes that can correct dominant physical error sources such as excitation loss errors.         

Let us now consider the specifics of the even-parity codes. Suppose that we want to concatenate an even-parity bosonic code with a multi-qubit stabilizer code. Then, we should be able to perform some Clifford operations on the bosonic qubits to measure the stabilizers of the outer multi-qubit code. In circuit QED systems, it is relatively straightforward to implement certain logical gates on the even-parity codes by using a SNAP gate \cite{Krastanov2015,Heeres2015,Reinhold2019,Ma2019} or its variants. Examples of such gates include a single-qubit rotation along the Z axis and a two-qubit gate such as controlled-Z gate (see Subsection \ref{subsection:Logical gates on rotation-symmetric bosonic codes} for more details). However, it is more challenging to implement the logical Hadamard gate on the even-parity bosonic codes \cite{Grimsmo2019}. Hadamard gates are generally essential for multi-qubit QEC schemes. Hence, for the even-parity bosonic codes to be successfully combined with a multi-qubit code, it will be crucial to have a robust and tailored scheme to implement the logical Hadamard gates on the even-parity codes. 

We remark that there is a versatile scheme for implementing a universal set of gates on bosonic codes based on eSWAP gates \cite{Lau2016,Gao2019}. However, this versatile scheme is not necessarily tailored to the even-parity codes Also, this scheme requires clustering of two or four bosonic codes to define a single protected qubit, and we need many of these protected qubits to further concatenate with a multi-qubit QEC scheme. On the other hand, if one can show that eSWAP gates can be implemented robustly by taking advantage of special structures of the gates, the eSWAP-based method will prove to be useful. 

To summarize, while even-parity bosonic qubits may have low idling error rates \cite{Ofek2016,Rosenblum2018,Hu2019,Ma2019a}, it is unclear yet how we can implement all the necessary logical Clifford operations on these bosonic qubits in a robust way.



\subsection{Concatenation of a cat code with a multi-qubit code} 
\label{subsection:Concatenation of a cat code with a multi-qubit code}

Recently, there has been growing interest in concatenating the two-component cat code (instead of the four-component cat code) with a multi-qubit code. The key motivations behind the use of the two-component cat code are different from the ones for the even-parity codes which we discussed above. One clear advantage of the two-component cat code is that it can be implemented more easily than the four-component cat code. That is, low-order nonlinearity suffice to stabilize the two-component cat code manifold. For instance, the two-component cat code can be autonomously stabilized by using an engineered two-photon dissipation 
\begin{align}
\mathcal{D}[ \hat{a}^{2}-\alpha^{2} ],  
\end{align}
which can be realized by coupling the mode $\hat{a}$ with a fast-decaying ancilla mode $\hat{b}$ via a third-order nonlinearity $g\hat{a}^{2}\hat{b}^{\dagger} + g^{*}(\hat{a}^{\dagger})^{2}\hat{b}$ \cite{Mirrahimi2014,Leghtas2015,Touzard2018}. Alternatively, one can also stabilize the two-component cat code by engineering an Hamiltonian of the following form \cite{Puri2017,Grimm2019}
\begin{align}
\hat{H} = -K( \hat{a}^{\dagger} )^{2}\hat{a}^{2} + ( \epsilon_{p}(\hat{a}^{\dagger})^{2} + \epsilon_{p}^{*}\hat{a}^{2} ) , 
\end{align}
and then cooling the system to the ground-state manifold, which is given by the two-component cat code (see Subsection \ref{subsection:Cat codes} for more details). 

Although the two-component cat code can be implemented in a relatively easier way than the four-component cat code, it also has its own drawbacks. That is, the two-component cat code is robust against only the bosonic dephasing errors, but not against the excitation loss errors. This is certainly an issue if we want to construct a bosonic qubit with a low error rate because excitation loss errors are ubiquitous in many realistic bosonic systems. However, it has to be remembered that we will at some point need to concatenate a bosonic code with a multi-qubit code to achieve a sufficiently low error rate for executing a non-trivial quantum algorithm. Therefore, it is not essential to suppress all types of errors at the bosonic QEC level, because errors that are left uncorrected during the cat code QEC can later be taken care of by an outer multi-qubit code.   

With this consideration in mind, let us now take a closer look into the error-correcting capability of the two-component cat code. Recall that a single-excitation loss event causes a logical bit-flip error (or a logical Pauli X error), i.e., 
\begin{align}
\hat{a}|0_{2-\textrm{cat}}^{(\alpha)}\rangle &\propto \hat{a} ( |\alpha\rangle+|-\alpha ) = \alpha ( |\alpha\rangle-|-\alpha ) \propto |1_{2-\textrm{cat}}^{(\alpha)}\rangle, 
\nonumber\\
\hat{a}|1_{2-\textrm{cat}}^{(\alpha)}\rangle &\propto \hat{a} ( |\alpha\rangle-|-\alpha ) = \alpha ( |\alpha\rangle+|-\alpha ) \propto |0_{2-\textrm{cat}}^{(\alpha)}\rangle. 
\end{align}    
Such logical bit-flip errors should later be corrected by using an outer multi-qubit code. If the excitation loss rate is given by $\kappa$, the logical bit-flip error rate is given by 
\begin{align}
\gamma_{\textrm{bit-flip}} = \gamma_{X} \simeq \kappa |\alpha|^{2}, 
\end{align}
where $\bar{n}_{\mathcal{C}_{2-\textrm{cat}}^{(\alpha)}  } \simeq  |\alpha|^{2}$ is the average excitation number of the code \cite{Mirrahimi2014}. On the other hand, the two-component cat code is robust against bosonic dephasing errors and therefore its logical phase-flip rate decreases exponentially as we increase the size of the code $\alpha$: 
\begin{align}
\gamma_{\textrm{phase-flip}} = \gamma_{Z} \simeq 2\kappa_{\phi} |\alpha|^{2} e^{-2|\alpha|^{2}}, 
\end{align}
if the engineered dissipation rate is much higher than the bosonic dephasing rate $\kappa_{\phi}$. See Ref.\ \cite{Mirrahimi2014} for more details. All these suggest that the cat qubits that are constructed using the two-component cat code will have a highly biased noise towards the bit-flip errors (or the Pauli X errors). Again, this clearly illustrates the inability of the two-component cat code in correcting loss errors. On the other hand, the fact that the noise is biased is good news as well because the outer multi-qubit code does not need to correct the Pauli Z errors for most of the time. Instead, they can be more dedicated to correcting the Pauli X errors on the cat qubits that occur due to the excitation loss errors.   

In the past few years, there have been various proposals for taking advantage of the special biased-noise structure of cat qubits \cite{Cohen2017a,Tuckett2018,Tuckett2019,Tuckett2019a,
Guillaud2019,Puri2019}. For instance, In Refs.\ \cite{Cohen2017a,Guillaud2019}, concatenation of the two-component cat code with the repetition code was explored. The repetition code was chosen for the outer multi-qubit code because it can correct Pauli X errors and there is no urgent need to correct Pauli Z errors due to the noise bias. Another approach is to tailor the surface code to biased-noise models. For instance, it has been shown in Refs.\ \cite{Tuckett2018,Tuckett2019,Tuckett2019a} that the fault-tolerance thresholds of the surface code can be relaxed if the noise is biased and the decoder is tailored to the biased noise. In all these schemes, to fully take advantage of the noise bias, it is very important to maintain the noise bias even when logical gates (for implementing an outer multi-qubit code) are being applied to the cat qubits. Indeed, Refs.\ \cite{Guillaud2019,Puri2019} provided schemes for implementing bias-preserving gates on the cat qubits.   

All these recent works on biased-noise cat qubits lead us to a very interesting point. That is, reduction in the logical error rates may not be the only benefit that bosonic QEC provides. The recent progress on the biased-noise cat qubits clearly suggests that bosonic QEC can provide a unique advantage by imposing a special structure on bosonic qubits. In particular, such a special structure can then be used to boost the performance of the outer multi-qubit codes. Thus, bosonic QEC have a great potential in reducing the required resource overhead associated with the use of conventional multi-qubit fault-tolerance schemes.

\subsection{Concatenation of the GKP code with a multi-qubit code}  
\label{subsection:Concatenation of the GKP code with a multi-qubit code}

It is clear by now that bosonic QEC is not just about reducing logical error rates, but also about giving an additional structure to bosonic qubits that can be used to improve the performance of an outer multi-qubit code. This is precisely the case for the GKP codes as well. Here, we will provide a detailed introduction to the concatenation of the square-lattice GKP code with a multi-qubit error-correcting code. Most importantly, we will explain how the additional information gained during the GKP error correction can be used to boost the performance of the outer multi-qubit code \cite{Menicucci2014,Wang2017,Fukui2017,Fukui2018a,Fukui2018b,
Vuillot2019,Fukui2019,Noh2020}.     

Before moving on to any details, we emphasize that the GKP code shares many similarities with multi-qubit stabilizer codes. For instance, the stabilizers of the GKP code are given by displacement operators (analogous to Pauli operators for qubits). Similarly, stabilizers of a multi-qubit stabilizer code are given by a string of Pauli operators. Moreover, any logical Clifford operations on the GKP code can be implemented by using only Gaussian operations (analogous to Clifford operations for qubits). Analogously, any logical Clifford operations on a multi-qubit stabilizer code can be implemented by using only physical Clifford operations. These structural similarities between the GKP code and the conventional stabilizer codes allow them to work in concert very easily.  

Throughout this chapter, we will restrict ourselves to the square-lattice GKP code (encoding a qubit-into-an-oscillator) and will simply refer to it as the GKP code. We will also use the term ``GKP qubit'' to refer to a qubit that is made out of the GKP code. To illustrate that the GKP code can be concatenated with a multi-qubit stabilizer code in a natural way, we review the concatenation of the square-lattice GKP code with the $[[4,1,2]]$ code \cite{Fukui2017,Fukui2018b}. Recall that the stabilizers and the logical Pauli operators of the GKP code are given by
\begin{alignat}{2}
\hat{S}_{q} &= e^{i2\sqrt{\pi}\hat{q}}, & \quad   \hat{Z}_{\textrm{gkp}} &= e^{i\sqrt{\pi}\hat{q}}
\nonumber\\
\hat{S}_{p} &= e^{-i2\sqrt{\pi}\hat{p}} , & \quad  \hat{X}_{\textrm{gkp}} &=   e^{-i\sqrt{\pi}\hat{p}} . 
\end{alignat}
Also, the stabilizers and the logical Pauli operators of the $[[4,1,2]]$ code are given by 
\begin{alignat}{2}
\hat{S}_{Z}^{[1]} &= \hat{Z}_{1}\hat{Z}_{2},  & \quad  \hat{Z}_{L} &=  \hat{Z}_{1}\hat{Z}_{3}, 
\nonumber\\
\hat{S}_{Z}^{[2]} &= \hat{Z}_{3}\hat{Z}_{4},  & \quad \hat{X}_{L} &=  \hat{X}_{1}\hat{X}_{2}, 
\nonumber\\
\hat{S}_{X}^{[1]} &= \hat{X}_{1}\hat{X}_{2}\hat{X}_{3}\hat{X}_{4}.  & & 
\end{alignat}
The logical states of the $[[4,1,2]]$ code are explicitly given by
\begin{align}
|0_{[[4,1,2]]}\rangle &= \frac{1}{\sqrt{2}} ( |0000\rangle + |1111\rangle ), 
\nonumber\\
|1_{[[4,1,2]]}\rangle &= \frac{1}{\sqrt{2}} ( |0011\rangle + |1100\rangle ). 
\end{align}
One can readily check that these logical states are stabilized by the stabilizers $\hat{S}_{Z}^{[1]}$, $\hat{S}_{Z}^{[2]}$, $\hat{S}_{X}^{[1]}$, and are transformed in a desired way by the logical Pauli operators $\hat{Z}_{L}$ and $\hat{X}_{L}$. Note that the $[[4,1,2]]$ code encode one logical qubit as it consists of $4$ physical qubits and has $3$ stabilizers.   

Now to concatenate the GKP code with the $[[4,1,2]]$ code, we need four bosonic modes to construct four GKP qubits. The stabilizers and the logical Pauli operators of these four GKP qubits are given by 
\begin{alignat}{2}
\hat{S}_{q}^{(j)} &= e^{i2\sqrt{\pi}\hat{q}_{j}}, & \quad   \hat{Z}_{\textrm{gkp}}^{(j)} &= e^{i\sqrt{\pi}\hat{q}_{j}}
\nonumber\\
\hat{S}_{p}^{(j)} &= e^{-i2\sqrt{\pi}\hat{p}_{j}} , & \quad  \hat{X}_{\textrm{gkp}}^{(j)} &=   e^{-i\sqrt{\pi}\hat{p}_{j}}, 
\end{alignat}
where $j\in\lbrace 1,2,3,4 \rbrace$ and $\hat{q}_{j}$ and $\hat{p}_{j}$ are the quadrature operators of the $j^{\textrm{th}}$ bosonic mode hosting the $j^{\textrm{th}}$ GKP qubit. Then, the remaining three stabilizers of the concatenated $[[4,1,2]]$-GKP code are given by
\begin{align}
\hat{S}_{Z}^{[1]} &= \hat{Z}_{\textrm{gkp}}^{(1)}\hat{Z}_{\textrm{gkp}}^{(2)} =  e^{i\sqrt{\pi} ( \hat{q}_{1} +  \hat{q}_{2} ) },  
\nonumber\\
\hat{S}_{Z}^{[2]} &= \hat{Z}_{\textrm{gkp}}^{(3)}\hat{Z}_{\textrm{gkp}}^{(4)} =  e^{i\sqrt{\pi} ( \hat{q}_{3} +  \hat{q}_{4} ) },  
\nonumber\\
\hat{S}_{X}^{[1]} &= \hat{X}_{\textrm{gkp}}^{(1)}\hat{X}_{\textrm{gkp}}^{(2)}\hat{X}_{\textrm{gkp}}^{(3)}\hat{X}_{\textrm{gkp}}^{(4)} =  e^{-i\sqrt{\pi} ( \hat{p}_{1} +  \hat{p}_{2} + \hat{p}_{3} + \hat{p}_{4} ) }, 
\end{align}  
corresponding to the three stabilizers of the $[[4,1,2]]$ code. Thus, we need to measure all the quadrature operators 
\begin{align}
\hat{q}_{j} \textrm{ and } \hat{p}_{j} \textrm{ mod }\sqrt{\pi}
\end{align}
for all $j\in\lbrace 1,2,3,4 \rbrace$ to stabilize each bosonic Hilbert space to the GKP code space. Then, we also need to measure 
\begin{align}
\hat{q}_{1} + \hat{q}_{2} , \,\,\, \hat{q}_{3} + \hat{q}_{4}, \textrm{ and } \hat{p}_{1}+\hat{p}_{2}+\hat{p}_{3}+\hat{p}_{4} \textrm{ mod } 2\sqrt{\pi}
\end{align} 
to further stabilizer the four GKP qubits to the $[[4,1,2]]$-GKP code space. We discussed in detail how to measure the quadrature operators $\hat{q}_{j}$ and $\hat{p}_{j}$ modulo $\sqrt{\pi}$ and what to do with the measurement outcomes in Section \ref{section:Translation-symmetric bosonic codes}. Later in the chapter, we will discuss in detail how the multi-GKP-qubit stabilizers (e.g., $\hat{S}_{Z}^{[1]}$, $\hat{S}_{Z}^{[2]}$, $\hat{S}_{X}^{[1]}$ for the $[[4,1,2]]$-GKP code) can be measured when we discuss the surface-GKP code (see, e.g., Fig.\ \ref{fig:Surface code stabilizer measurement}). Here, we will focus instead on what to do with the obtained measurement outcomes.     

In the case of the usual $[[4,1,2]]$ code, a stabilizer measurement yields a binary outcome (i.e., $+1$ or $-1$)
\begin{align}
\hat{S}_{Z}^{[1]} = \pm 1, \quad \hat{S}_{Z}^{[2]} = \pm 1, \quad \hat{S}_{X}^{[1]} = \pm 1, 
\end{align}
because $( \hat{S}_{Z}^{[1]}  )^{2} = ( \hat{S}_{Z}^{[2]}  )^{2} = ( \hat{S}_{X}^{[1]}  )^{2} = \hat{I}$, where $\hat{I}$ is the identity operator. Note that in the case of the $[[4,1,2]]$-GKP code, these binary measurement outcomes correspond to 
\begin{align}
& \hat{q}_{1} + \hat{q}_{2} = \begin{cases}
0 \textrm{ mod } 2\sqrt{\pi} & \hat{S}_{Z}^{[1]} = +1 \\
\sqrt{\pi}  \textrm{ mod } 2\sqrt{\pi} & \hat{S}_{Z}^{[1]} = -1
\end{cases} , 
\nonumber\\
&\hat{q}_{3} + \hat{q}_{4} = \begin{cases}
0 \textrm{ mod } 2\sqrt{\pi} & \hat{S}_{Z}^{[2]} = +1 \\
\sqrt{\pi}  \textrm{ mod } 2\sqrt{\pi} & \hat{S}_{Z}^{[1]} = -1
\end{cases} , 
\nonumber\\
& \hat{p}_{1} + \hat{p}_{2} + \hat{p}_{3} + \hat{p}_{4} = \begin{cases}
0 \textrm{ mod }2\sqrt{\pi} & \hat{S}_{X}^{[1]} = +1 \\
\sqrt{\pi}  \textrm{ mod } 2\sqrt{\pi} & \hat{S}_{X}^{[1]} = -1
\end{cases} . 
\end{align} 
However in the case of the $[[4,1,2]]$-GKP code, the relevant quadrature operators may not take a value that is an integer multiple of $\sqrt{\pi}$. This is especially the case if the measurements are noisy. Nevertheless, we eventually assign a binary value to each stabilizer. To be more specific, we assign $+1$ (or $-1$) to the stabilizer if the measurement outcome of the relevant quadrature is close to an even (or odd) multiple of $\sqrt{\pi}$. For example, given a measurement outcome of the quadrature operator $\hat{q}_{1} + \hat{q}_{2}$ modulo $\sqrt{\pi}$, we assign the value of the stabilizer $\hat{S}_{Z}^{[1]}$ as follows: 
\begin{align}
\hat{S}_{Z}^{[1]} \leftarrow \begin{cases}
+1 &  (n-\frac{1}{2})\sqrt{\pi}< \hat{q}_{1} + \hat{q}_{2} <  (n+\frac{1}{2})\sqrt{\pi} \,\,\, \textrm{for an even }n  \\
-1 &  (n-\frac{1}{2})\sqrt{\pi}< \hat{q}_{1} + \hat{q}_{2} <  (n+\frac{1}{2})\sqrt{\pi} \,\,\, \textrm{for an odd }n 
\end{cases}. 
\end{align} 
Thus, we can see that the multi-GKP-qubit stabilizer measurements are robust against small shift errors, similarly as in the case of the Pauli measurements of the GKP qubits (see Section \ref{section:Translation-symmetric bosonic codes}).   

To get some more intuition on the $[[4,1,2]]$-GKP code, let us consider an explicit error instance. Suppose that the four bosonic modes are initially in the $[[4,1,2]]$-GKP code space. That is, the quadrature operators satisfy 
\begin{align}
&\hat{q}_{j} = \hat{p}_{j} = 0 \textrm{ mod }\sqrt{\pi}, \,\,\, \textrm{for all}\,\,\, j\in\lbrace 1,2,3,4 \rbrace, 
\nonumber\\
&\hat{q}_{1}+\hat{q}_{2} = 0 \textrm{ mod }2\sqrt{\pi},
\nonumber\\
&\hat{q}_{3}+\hat{q}_{4} = 0 \textrm{ mod }2\sqrt{\pi},
\nonumber\\
&\hat{p}_{1} + \hat{p}_{2} + \hat{p}_{3} + \hat{p}_{4} = 0 \textrm{ mod }2\sqrt{\pi} . \label{eq:422-GKP stabilized}
\end{align} 
Then, assume that an independent and identically distributed Gaussian random shift errors are applied to the four bosonic modes, i.e., 
\begin{align}
\hat{q}'_{j} &=\hat{q}_{j} + \xi_{q}^{(j)}, 
\nonumber\\
\hat{p}'_{j} &=\hat{p}_{j} + \xi_{p}^{(j)}, 
\end{align}
where $j\in\lbrace 1,2,3,4 \rbrace$. Here, $\xi_{q}^{(j)}$ and $\xi_{p}^{(j)}$ are the position and the momentum quadrature noise added to the $j^{\textrm{th}}$ mode that are drawn from the distribution $(\xi_{q}^{(1)}, \xi_{p}^{(1)}, \cdots , \xi_{q}^{(4)}, \xi_{p}^{(4)} ) \sim_{\textrm{iid}} \mathcal{N}(0,\sigma^{2})$. Also, $\sigma$ is the standard deviation of the random noise. 

To get the key idea, let us assume that all the stabilizer measurements are noiseless. Noisy stabilizer measurements will be considered below when we discuss the surface-GKP code. Then, consider a specific instance where all the random shifts are small but only the position shift in the first mode is large, i.e., 
\begin{align}
\frac{\sqrt{\pi}}{2} < \xi_{q}^{(1)}  < \frac{3\sqrt{\pi}}{2} \,\,\, \textrm{and}\,\,\,   -\frac{\sqrt{\pi}}{2} < \xi_{q}^{(2)},\cdots, \xi_{q}^{(4)},\xi_{p}^{(1)}, \cdots, \xi_{p}^{(4)}  < \frac{\sqrt{\pi}}{2} . 
\end{align}
Then following the same reasoning given in Section \ref{section:Translation-symmetric bosonic codes}, all the shift errors will be removed except for the position shift in the first mode, which will be incorrectly estimated as
\begin{align}
R_{\sqrt{\pi}}( \xi_{q}^{(1)} ) = \xi_{q}^{(1)}  - \sqrt{\pi}. 
\end{align}  
The definition of the function $R_{s}( z )$ is given in Eq.\ \eqref{eq:definition of the R function}. Thus, this shift error will be under-corrected and thus we are left with a Pauli X error in the first GKP qubit: 
\begin{align}
\exp \Big{[} -i \big{(} \xi_{q}^{(1)} - R_{\sqrt{\pi}} ( \xi_{q}^{(1)} ) \big{)} \hat{p}_{1} \Big{]} = e^{-i\sqrt{\pi}\hat{p}_{1}} = \hat{X}_{\textrm{gkp}}^{(1)}. 
\end{align}
In the case of the usual $[[4,1,2]]$ code, such an X error will be detected by measuring the stabilizer $\hat{S}_{Z}^{[1]} = \hat{Z}_{1}\hat{Z}_{2}$ because it anti-commutes with $\hat{X}_{1}$. We show that the same thing happens in the case of the $[[4,1,2]]$-GKP code as well. To do so, recall that after the GKP error correction, we are left with 
\begin{align}
\hat{q}''_{1} &= \hat{q}'_{1} - R_{\sqrt{\pi}} ( \xi_{q}^{(1)} ) = \hat{q}_{1} + \sqrt{\pi}, 
\nonumber\\
\hat{p}''_{1} &= \hat{p}'_{1} - R_{\sqrt{\pi}} ( \xi_{p}^{(1)} ) = \hat{p}_{1}, 
\nonumber\\
\hat{q}''_{2} &= \hat{q}'_{2} - R_{\sqrt{\pi}} ( \xi_{q}^{(2)} ) = \hat{q}_{2} , 
\nonumber\\
&\,\,\, \vdots
\nonumber\\  
\hat{p}''_{4} &= \hat{p}'_{4} - R_{\sqrt{\pi}} ( \xi_{p}^{(4)} ) = \hat{p}_{4} . 
\end{align}
Then, if we measure the relevant quadrature operators for the $[[4,1,2]]$-code stabilizers, we get
\begin{alignat}{2}
&\hat{q}''_{1} + \hat{q}''_{2} =  \sqrt{\pi} \textrm{ mod }2\sqrt{\pi} &\quad  &\rightarrow \hat{S}_{Z}^{[1]}  =  e^{i\sqrt{\pi} ( \hat{q}''_{1} + \hat{q}''_{2} ) } = -1
\nonumber\\
&\hat{q}''_{3} + \hat{q}''_{4} = 0 \textrm{ mod }2\sqrt{\pi}  &\quad  &\rightarrow  \hat{S}_{Z}^{[2]}  =  e^{i\sqrt{\pi} ( \hat{q}''_{3} + \hat{q}''_{4} ) } = +1
\nonumber\\
&\hat{p}''_{1} + \hat{p}''_{2}  +  \hat{p}''_{3}  + \hat{p}''_{4} = 0 \textrm{ mod }2\sqrt{\pi}  &\quad  &\rightarrow \hat{S}_{X}^{[1]}  =  e^{-i\sqrt{\pi} ( \hat{p}''_{1} + \hat{p}''_{2} + \hat{p}''_{3} + \hat{p}''_{4} ) } = +1 . 
\end{alignat}
Note that the terms like $\hat{q}_{1}+\hat{q}_{2}$ are not presented above because they are $0$ modulo $2\sqrt{\pi}$ (see Eq.\ \eqref{eq:422-GKP stabilized}). Thus, the $[[4,1,2]]$-GKP code can detect the logical Pauli X error on the first GKP qubit $\hat{X}_{\textrm{gkp}}^{(1)}$ via the $\hat{S}_{Z}^{[1]}$ stabilizer measurement, similarly to the usual $[[4,1,2]]$ code.    

Lastly, we discuss the unique advantage of the $[[4,1,2]]$-GKP code over the usual $[[4,1,2]]$ code. That is, we explain how the additional information from the GKP error correction can help the outer $[[4,1,2]]$ code perform better. Note that the usual $[[4,1,2]]$ code cannot correct single-qubit errors although they can detect them: Suppose, for example, that the stabilizer measurement outcomes are given by 
\begin{align}
\hat{S}_{Z}^{[1]} = -1, \quad \hat{S}_{Z}^{[2]} = +1, \quad \hat{S}_{X}^{[1]} = +1 . \label{eq:422 syndrome pattern example X1 vs X2}
\end{align}
As shown above, this might be due to a Pauli X error on the first qubit $\hat{X}_{1}$. However, it may as well be the case that this is due to a Pauli X error on the second qubit $\hat{X}_{2}$ as it produces the same syndrome measurement outcomes. Therefore, we cannot distinguish these two error events especially when all the qubits are equally noisy. This is problematic because if we had a Pauli X error on the first qubit $\hat{X}_{1}$ but believe it was an error on the second qubit $\hat{X}_{2}$ and correct for it, we are causing a logical X error on the $[[4,1,2]]$ code space, i.e., $\hat{X}_{L} = \hat{X}_{1}\hat{X}_{2}$. Therefore if we randomly guess between $\hat{X}_{1}$ and $\hat{X}_{2}$ with an equal probability,      such a logical X error happens with $50\%$ probability given the syndrome pattern in Eq.\ \eqref{eq:422 syndrome pattern example X1 vs X2}. Hence, it is better to discard this specific error instance than to randomly guess. However, such a non-deterministic error detection scheme is clearly not scalable as the success probability decreases exponentially as we repeat the error detection cycles more and more.     

Let us now explain that we are in a better situation if we use GKP qubits instead of usual bare qubits (i.e., the $[[4,1,2]]$-GKP code instead of the usual $[[4,1,2]]$ code). This is because Pauli errors on the GKP qubits are not just given to us from the environment. Instead, we are participating in the Pauli error generation process and therefore get more information about its inner workings. That is, the natural random shift errors do not immediately cause a Pauli error on a GKP qubit. It is only after we incorrectly estimated a large shift error (and then under-correct it) that we have a Pauli error on the GKP qubit. Below, we will see why this is really crucial.       

Recall the example where we have a large position shift error in the first mode, but all the other shift errors are small enough to be corrected by the GKP code. To make the discussion really simple, consider an extreme case where we have
\begin{align}
\xi_{q}^{(1)} = 0.51\sqrt{\pi}, \,\,\, \textrm{and}\,\,\, \xi_{q}^{(2)} = \cdots =\xi_{p}^{(4)} =0.  
\end{align} 
Through the GKP stabilizer measurements, we can measure these shift errors only modulo $\sqrt{\pi}$. Thus, we are given with the following error candidates that are compatible with the GKP stabilizer measurement outcomes: 
\begin{align}
&\xi_{q}^{(1)} \in \lbrace z| z=-0.49\sqrt{\pi}\textrm{ mod }\sqrt{\pi} \rbrace  = \lbrace \cdots , -1.49\sqrt{\pi}, \textcolor{red}{ -0.49\sqrt{\pi} } ,0.51\sqrt{\pi}, 1.51\sqrt{\pi}, \cdots \rbrace, 
\nonumber\\
&\xi_{q}^{(2)} , \cdots, \xi_{p}^{(4)} \in \lbrace z|z=0\textrm{ mod }\sqrt{\pi} \rbrace =  \lbrace \cdots, -\sqrt{\pi}, \textcolor{red}{ 0} ,\sqrt{\pi},\cdots \rbrace . 
\end{align} 
Then, since we perform a maximum likelihood decoding (or a smallest shift decoding), we incorrectly identify that the position shift error in the first mode is $R_{\sqrt{\pi}}( \xi_{q}^{(1)} ) = -0.49\sqrt{\pi}$ as marked in red above. Similarly, we correctly infer that all the other shift errors are $0$. Then, as explained above, this will results in a Pauli X error on the first GKP qubit $\hat{X}_{\textrm{gkp}}^{(1)}$. Of course in practice, we would not know this with certainty. Instead, we are given with the $[[4,1,2]]$-code stabilizer measurement outcomes $\hat{S}_{Z}^{[1]} = -1$, $\hat{S}_{Z}^{[2]} = +1$, and $\hat{S}_{X}^{[1]} = +1$. Then, as explained above, we know that either the first or the second GKP qubit experienced a Pauli X error, and we need to determine which one to correct. 

Again, in the case of the usual $[[4,1,2]]$ code, there is no way for us to tell between the two possibilities $\hat{X}_{1}$ and $\hat{X}_{2}$ in an informed way. On the other hand, this is not the case in the case of the $[[4,1,2]]$-GKP code because we are given with an additional information on the shift errors, i.e., $\xi_{q}^{(1)} = - 0.49\sqrt{\pi}$ mod $\sqrt{\pi}$ and $\xi_{q}^{(2)}=\cdots = \xi_{p}^{(4)} =0$ mod $\sqrt{\pi}$. Given this additional information, we can evaluate the conditional Pauli error probabilities. That is, since each random shift noise follows a Gaussian distribution $\mathcal{N}(0,\sigma^{2})$, we have 
\begin{align}
&\textrm{Pr}[ \hat{X}_{\textrm{gkp}}^{(1)}  \textrm{ happens}\, |\, R_{\sqrt{\pi}}( \xi_{q}^{(1)} ) = -0.49\sqrt{\pi} ] 
\nonumber\\
&=  \frac{ \cdots +  p[\sigma]( -1.49\sqrt{\pi} ) + p[\sigma]( 0.51\sqrt{\pi} )  + \cdots  }{ \cdots +  p[\sigma]( -1.49\sqrt{\pi} ) + p[\sigma]( -0.49\sqrt{\pi} ) + p[\sigma]( 0.51\sqrt{\pi} ) + p[\sigma]( 1.51\sqrt{\pi} )  + \cdots }
\nonumber\\
&\simeq  \frac{  p[\sigma]( 0.51\sqrt{\pi} )    }{  p[\sigma]( -0.49\sqrt{\pi} ) + p[\sigma]( 0.51\sqrt{\pi} )   }  \simeq \frac{1}{2},  
\end{align}
and 
\begin{align}
\textrm{Pr}[ \hat{X}_{\textrm{gkp}}^{(2)} \textrm{ happens}\, |\, R_{\sqrt{\pi}}( \xi_{q}^{(2)} ) =  0  ] &=  \frac{ \cdots +  p[\sigma]( -\sqrt{\pi} ) + p[\sigma]( \sqrt{\pi} )   + \cdots  }{\cdots +  p[\sigma]( -\sqrt{\pi} )  + p[\sigma]( 0 )  + p[\sigma]( \sqrt{\pi} )   + \cdots }
\nonumber\\
&\simeq \frac{  p[\sigma]( -\sqrt{\pi} ) + p[\sigma]( \sqrt{\pi} )   }{ p[\sigma]( 0 )  } \simeq 0. 
\end{align}
Here, $p[\sigma](z) \equiv \frac{1}{\sqrt{2\pi \sigma^{2}}} \exp[ -\frac{z^{2}}{2\sigma^{2}}] $ is the probability density function of the Gaussian distribution $\mathcal{N}(0,\sigma^{2})$. These conditional probabilities indicate that given the shift error information $R_{\sqrt{\pi}}( \xi_{q}^{(1)} ) = -0.49\sqrt{\pi}$, it is quite likely that the first GKP qubit underwent a Pauli X error. On the other hand, given the information $R_{\sqrt{\pi}}( \xi_{q}^{(2)} ) =  0$, it is extremely unlikely that the second GKP qubit went through a Pauli X error. Thus, we should clearly correct for the Pauli X error on the first GKP qubit $\hat{X}_{\textrm{gkp}}^{(1)}$. The chance that this inference is incorrect is extremely low as the conditional probability that the second GKP qubit had a Pauli X error nearly vanishes. This way, we can make a much more informed decision with the $[[4,1,2]]$-GKP code, than in the case of the usual $[[4,1,2]]$ code. It clearly illustrate the unique advantage of the GKP qubits over the usual bare qubits.    

Note that the reason why the first GKP qubit is more likely to have a Pauli error is because the decision on the shift correction is made near the decision boundary $\pm \frac{\sqrt{\pi}}{2}$. Intuitively, we can expect that decisions that are made close to the decision boundary are very sensitive to even slight perturbations. On the other hand, the second GKP qubit is less likely to have a Pauli error because the decision on the shift correction is made deep inside the bulk. In this case, we can intuitively expect that the decision is not significantly influenced by small perturbations. These intuitions will be quantitatively justified below.  

The implication of the above intuition in the context of the $[[4,1,2]]$-GKP code is as follows: When we are given with multiple Pauli error candidates that are compatible with the $[[4,1,2]]$-code stabilizer measurement outcomes, we should correct for the Pauli error on the GKP qubit where the decision on the shift correction is made closest to the decision boundary. Note that if the standard deviation of the shift errors $\sigma$ is sufficiently small (so that the failure probability of the GKP code is tiny), most uncorrectable large shifts occur near the decision boundary $\pm \frac{\sqrt{\pi}}{2}$. Therefore, the corresponding erroneous GKP qubit will be likely (and correctly) identified as the most unreliable GKP qubit. Thus, the $[[4,1,2]]$-GKP code can correct many single-GKP-qubit error events despite the fact that the usual $[[4,1,2]]$ code is not capable of correcting single-qubit errors. Later in the chapter, we will demonstrate an analogous performance improvement using the surface code as an outer multi-qubit code.         

\subsection{Conditional Pauli error probability on a GKP qubit}

To quantitatively justify the intuition that decisions made closer to the decision boundary $\pm \frac{\sqrt{\pi}}{2}$, we explicitly evaluate the conditional Pauli error probability given a shift error information. To do so, let us consider a single GKP qubit subject to a Gaussian random shift error channel $\mathcal{N}_{B_{2}}[\sigma]$, i.e., 
\begin{align}
\hat{q}&\rightarrow \hat{q}+\xi_{q}, 
\nonumber\\
\hat{p}&\rightarrow \hat{p}+\xi_{p},  
\end{align}
where $\xi_{q}$ and $\xi_{p}$ follow a Gaussian random distribution with zero mean and standard deviation $\sigma$, i.e., $\xi_{q},\xi_{p}\sim \mathcal{N}(0,\sigma)$. As discussed in Section \ref{section:Translation-symmetric bosonic codes}, if the random shift $\xi_{q}$ (or $\xi_{p}$) lies in the range $|\xi_{q}-n\sqrt{\pi}|< \sqrt{\pi}/2$ (or $|\xi_{p}-n\sqrt{\pi}|< \sqrt{\pi}/2$) for an odd integer $n$, the GKP error correction protocol results in a Pauli $X$ (or $Z$) error on the GKP qubit. Note that this happens with probability $p_{\textrm{err}}(\sigma)$, where $p_{\textrm{err}}(\sigma)$ is defined as 
\begin{align}
p_{\textrm{err}}(\sigma) &\equiv  \sum_{n\in\mathbb{Z}} \frac{1}{\sqrt{2\pi\sigma^{2}}} \int_{(2n+\frac{1}{2} ) \sqrt{\pi} }^{ (2n+\frac{3}{2})\sqrt{\pi} } d\xi  \exp\Big{[} -\frac{\xi^{2}}{2\sigma^{2}} \Big{]}. \label{eq:definition of perr}
\end{align}
Now, consider a specific instance where, for example, the $\hat{S}_{q}$ stabilizer measurement (i.e., the position measurement modulo $\sqrt{\pi}$) informs us that $\xi_{q}$ is given by $\xi_{q} = z + n\sqrt{\pi}$ for some interger $n$ and $|z| < \sqrt{\pi}/2$. Then, since odd $n$ corresponds to a Pauli $X$ error and even $n$ corresponds to the no error case, we can infer that, given the measured value $z$, there is a Pauli $X$ error with probability $p[\sigma](z)$ where $p[\sigma](z)$ is defined as  
\begin{align}
p[\sigma](z) &\equiv \frac{ \sum_{n\in\mathbb{Z}} \exp[- (z-(2n+1)\sqrt{\pi})^{2}   / (2\sigma^{2})  ] }{ \sum_{n\in\mathbb{Z}} \exp[- (z-n\sqrt{\pi})^{2}  / (2\sigma^{2}) ]  }.  \label{eq:Definition of the p function}
\end{align}

\begin{figure}[t!]
\centering
\includegraphics[width=4.5in]{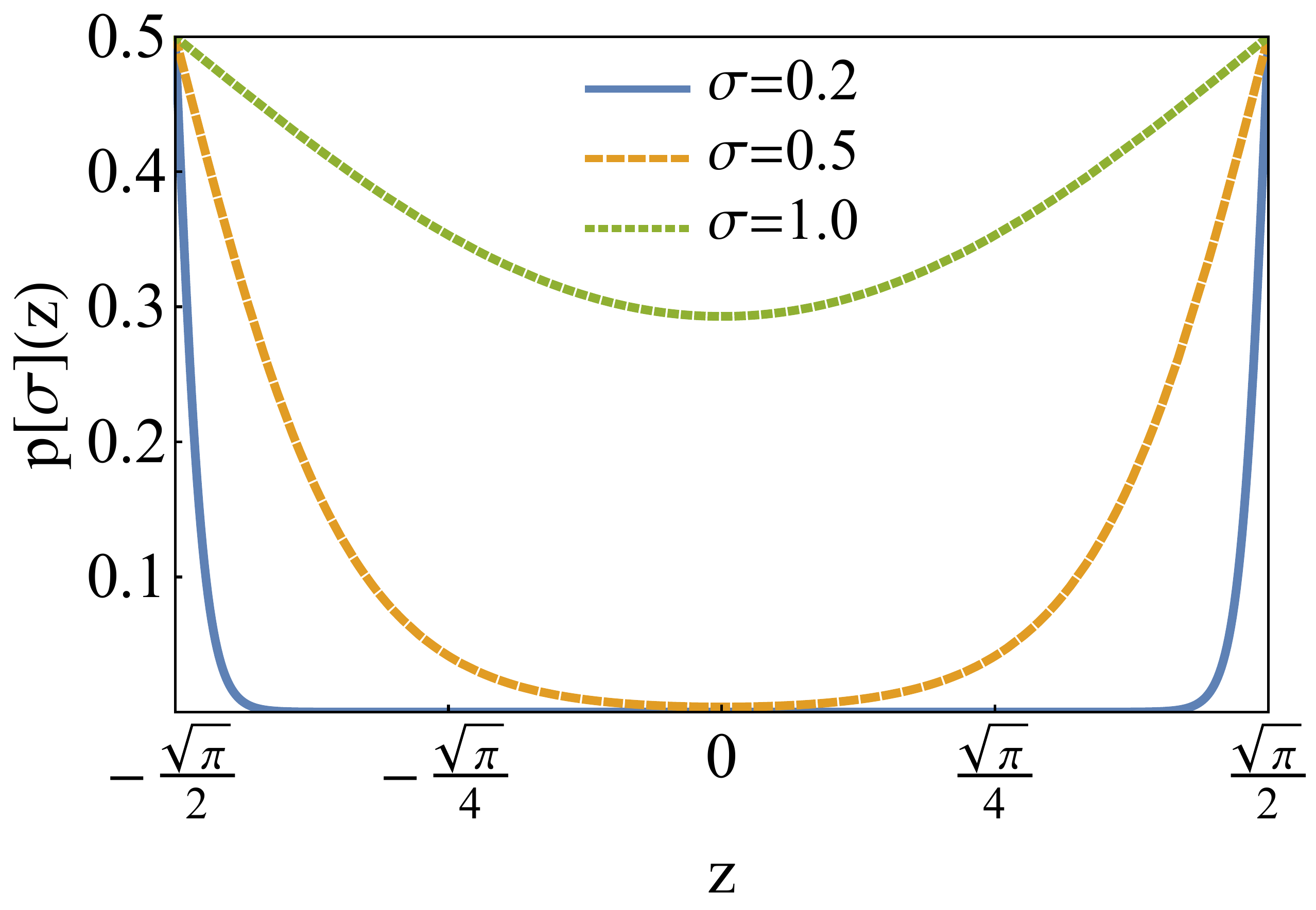}
\caption{[Fig.\ 2 in PRA \textbf{101}, 012316 (2020)] $p[\sigma](z)$ for $\sigma = 0.2$, $0.5$, and $1$. $p[\sigma](z)$ is defined in Eq.\ \eqref{eq:Definition of the p function} and represents the conditional probability of having a Pauli $X$ (or $Z$) error, given the measurement outcome $\xi_{q}=z+n\sqrt{\pi}$ (or $\xi_{p}=z+n\sqrt{\pi}$) for some integer $n$.  }
\label{fig:conditional probability p}
\end{figure}

As shown in Fig.\ \ref{fig:conditional probability p}, the conditional probability $p[\sigma](z)$ becomes larger as $z$ gets closer to the decision boundary $\pm \sqrt{\pi}/2$. Therefore, if the measured shift value modulo $\sqrt{\pi}$ is close to $\pm\sqrt{\pi}/2$, we know that this specific instance of the GKP error correction is less reliable. In other words, the corresponding GKP qubit is more likely to have experienced a Pauli error. This way, the GKP error correction protocol not only corrects the small shift errors but also informs us how reliable the correction is. Various ways of incorporating this additional information in the next level of concatenated error correction have been studied in Refs.\ \cite{Wang2017,Fukui2017,Fukui2018a,Fukui2018b,Fukui2019,Vuillot2019}. In the rest of this chapter, we will show how this additional information for the GKP qubits can be used to boost the performance of the surface code, assuming a detailed circuit-level noise model \cite{Noh2020}.

\section{Fault-tolerant bosonic quantum error correction with the surface-GKP code}
\label{section:Fault-tolerant bosonic quantum error correction with the surface-GKP code}

Here, we demonstrate that fault-tolerant bosonic quantum error correction is possible with the surface-GKP code. Furthermore, we establish fault-tolerance thresholds assuming a detailed circuit-level noise model. The materials in this section are based on Ref.\ \cite{Noh2020}. 

\subsection{GKP qubit}

A pedagogic introduction to the GKP code is provided in Section \ref{section:Translation-symmetric bosonic codes}. Here, we recall several facts about a GKP qubit that are essential for understanding the results presented in this section. Note that we only consider the square-lattice GKP code in this section. Therefore, we will simply refer to the square-lattice GKP code as the GKP code and drop all the superscripts ${}^{(\textrm{sq})}$ that specify the square-lattice structure of the code. Also, we refer to a qubit that is made out of the GKP code as a GKP qubit. 

Stabilizers of the GKP code are given by 
\begin{align}
\hat{S}_{q} \equiv \exp[ i2\sqrt{\pi}\hat{q} ] , \quad \hat{S}_{p} \equiv \exp[ -i2\sqrt{\pi}\hat{p} ] . 
\end{align}
Measuring these two commuting stabilizers is equivalent to measuring the position and momentum operators $\hat{q}$ and $\hat{p}$ modulo $\sqrt{\pi}$. Therefore, any phase space shift error $\exp[i(\xi_{p}\hat{q} - \xi_{q}\hat{p})]$ acting on the ideal GKP qubit can be detected and corrected as long as $|\xi_{q}|,|\xi_{p}| < \sqrt{\pi}/2$. Explicitly, the computational basis states of the ideal GKP qubit are given by 
\begin{align}
|0_{\textrm{gkp}}\rangle &= \sum_{n\in\mathbb{Z}} |\hat{q} = 2n\sqrt{\pi} \rangle,
\nonumber\\
|1_{\textrm{gkp}}\rangle &= \sum_{n\in\mathbb{Z}} |\hat{q} = (2n+1)\sqrt{\pi} \rangle. 
\end{align}
Also, the complementary basis states $|\pm_{\textrm{gkp}}\rangle \equiv \frac{1}{\sqrt{2}}(|0_{\textrm{gkp}}\rangle \pm|1_{\textrm{gkp}}\rangle)$ are given by 
\begin{align}
|+_{\textrm{gkp}}\rangle &=  \sum_{n\in\mathbb{Z}} |\hat{p} = 2n\sqrt{\pi} \rangle,
\nonumber\\
|-_{\textrm{gkp}}\rangle &=  \sum_{n\in\mathbb{Z}} |\hat{p} = (2n+1)\sqrt{\pi} \rangle. 
\end{align}
Clearly, all these basis states have $\hat{q} = \hat{p} =0$ modulo $\sqrt{\pi}$ and thus are stabilized by $\hat{S}_{q}$ and $\hat{S}_{p}$. Pauli operators of the GKP qubit are given by the square root of the stabilizers, i.e., 
\begin{align}
\hat{Z}_{\textrm{gkp}} &= (\hat{S}_{q})^{\frac{1}{2}} = \exp[ i\sqrt{\pi}\hat{q}] , 
\nonumber\\
\hat{X}_{\textrm{gkp}} &= (\hat{S}_{p})^{\frac{1}{2}} = \exp [ -i\sqrt{\pi}\hat{p}] . 
\end{align}
Clifford operations \cite{Gottesman1999} on the GKP qubits can be implemented by using only Gaussian operations. More explicitly, generators of the Clifford group, $\hat{S}_{\textrm{gkp}}, \hat{H}_{\textrm{gkp}}$ and $\textrm{CNOT}_{\textrm{gkp}}^{j\rightarrow k}$ are given by 
\begin{align}
\hat{S}_{\textrm{gkp}} &= \exp\Big{[} i\frac{\hat{q}^{2}}{2} \Big{]} , 
\nonumber\\
\hat{H}_{\textrm{gkp}} &= \exp\Big{[} i \frac{\pi}{2}\hat{a}^{\dagger}\hat{a} \Big{]}  , 
\nonumber\\
\textrm{CNOT}_{\textrm{gkp}}^{j\rightarrow k} &= \textrm{SUM}_{j\rightarrow k} \equiv  \exp[ - i \hat{q}_{j}\hat{p}_{k}], \label{eq:Clifford gates for GKP qubits}
\end{align}   

\begin{figure}[t!]
\centering
\includegraphics[width=5.9in]{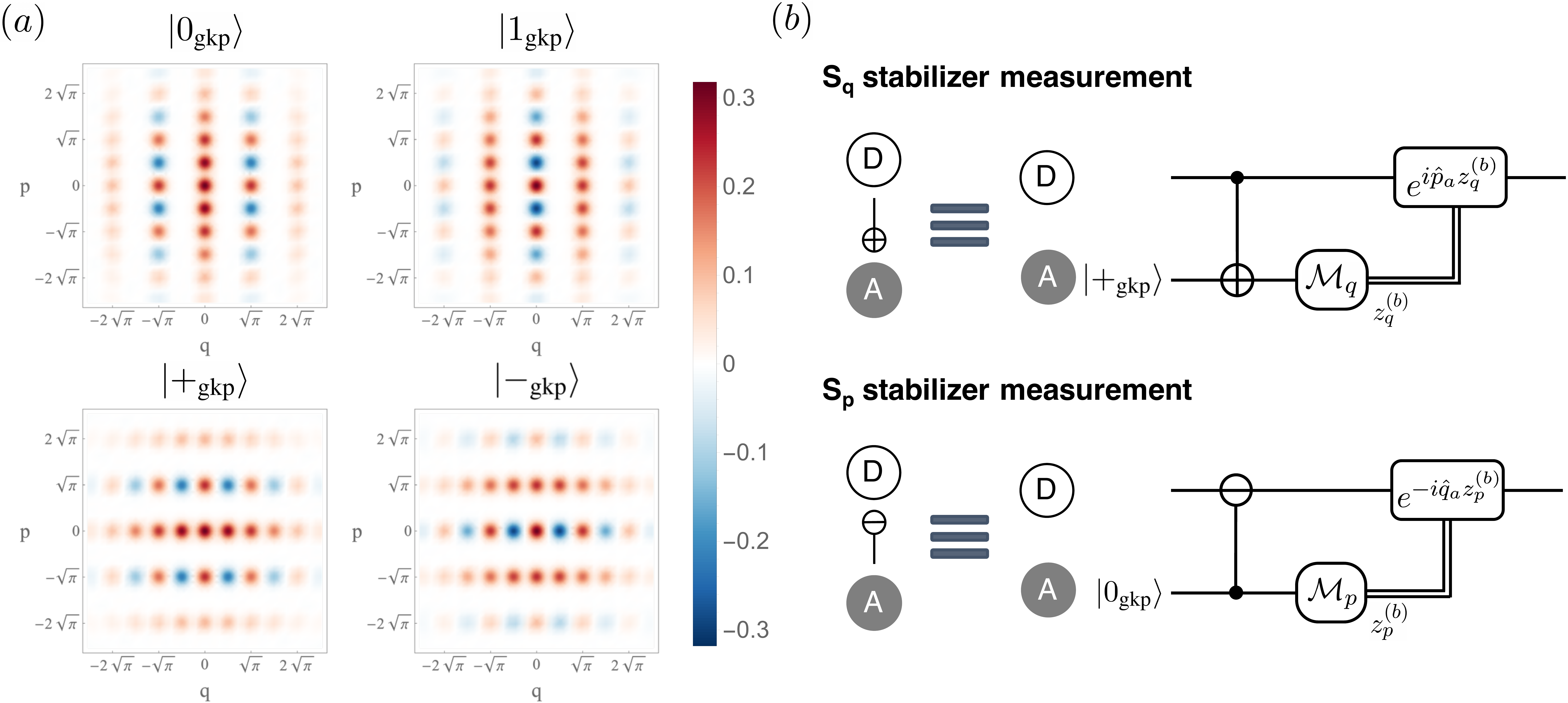}
\caption{[Fig.\ 1 in PRA \textbf{101}, 012316 (2020)] (a) Computational basis states ($|0_{\textrm{gkp}}\rangle$, $|1_{\textrm{gkp}}\rangle$) and complementary basis states ($|+_{\textrm{gkp}}\rangle$, $|-_{\textrm{gkp}}\rangle$) of an approximate GKP qubit with an average photon number $\bar{n}=5$. (b) Circuits for measuring the $\hat{S}_{q}$ and $\hat{S}_{p}$ stabilizers. $\mathcal{M}_{q}$ and $\mathcal{M}_{p}$ represent the homodyne measurement of the position and momentum operators, respectively. Also, the controlled-$\oplus$ symbol represents the SUM gate and similarly the controlled-$\ominus$ symbol represents the inverse-SUM gate (see Eq.\ \eqref{eq:Clifford gates for GKP qubits}). Note that the size of the correction shifts $\exp[i\hat{p}_{a}z_{q}^{(b)}]$ and $\exp[-i\hat{q}_{a}z_{p}^{(b)}]$ in the $\hat{S}_{q}$ and $\hat{S}_{p}$ stabilizer measurements are determined by the homodyne measurement outcomes $z_{q}^{(b)}$ and $z_{p}^{(b)}$.   }
\label{fig:GKP qubit fundamentals}
\end{figure}

The measurements of the GKP stabilizers $\hat{S}_{q}$ and $\hat{S}_{p}$ can be respectively performed by preparing an ancilla GKP state $|+_{\textrm{gkp}}\rangle$ or $|0_{\textrm{gkp}}\rangle$, and then applying the $\textrm{SUM}_{D \rightarrow A}$ or $\textrm{SUM}_{A\rightarrow D}^{\dagger}$ gate, and finally measuring the position or the momentum operator of the ancilla mode via a homodyne detection (see Fig.\ \ref{fig:GKP qubit fundamentals}(b)). Here, $D$ refers to the data mode and $A$ refers to the ancilla mode. Note that the only non-Gaussian resources required for the GKP-stabilizer measurements are the ancilla GKP states $|0_{\textrm{gkp}}\rangle$ and $|+_{\textrm{gkp}}\rangle$.

\subsection{The surface code with GKP qubits}

Recall that shift errors of size larger than $\sqrt{\pi}/2$ cannot be corrected by the single-mode GKP code. Here, to correct arbitrarily large shift errors, we consider the concatenation of the GKP code with the surface code \cite{Bravyi1998,Dennis2002,Fowler2012}, namely, the surface-GKP code. Specifically, we use the family of rotated surface codes \cite{Bombin2007,Tomita2014} that only requires $d^{2}$ data qubits and $d^{2}-1$ syndrome qubits to get a distance-$d$ code. Note that the distance-$d$ surface code can correct arbitrary qubit errors of weight less than or equal to $\lfloor \frac{d-1}{2} \rfloor$.  

\begin{figure}[t!]
\centering
\includegraphics[width=5.9in]{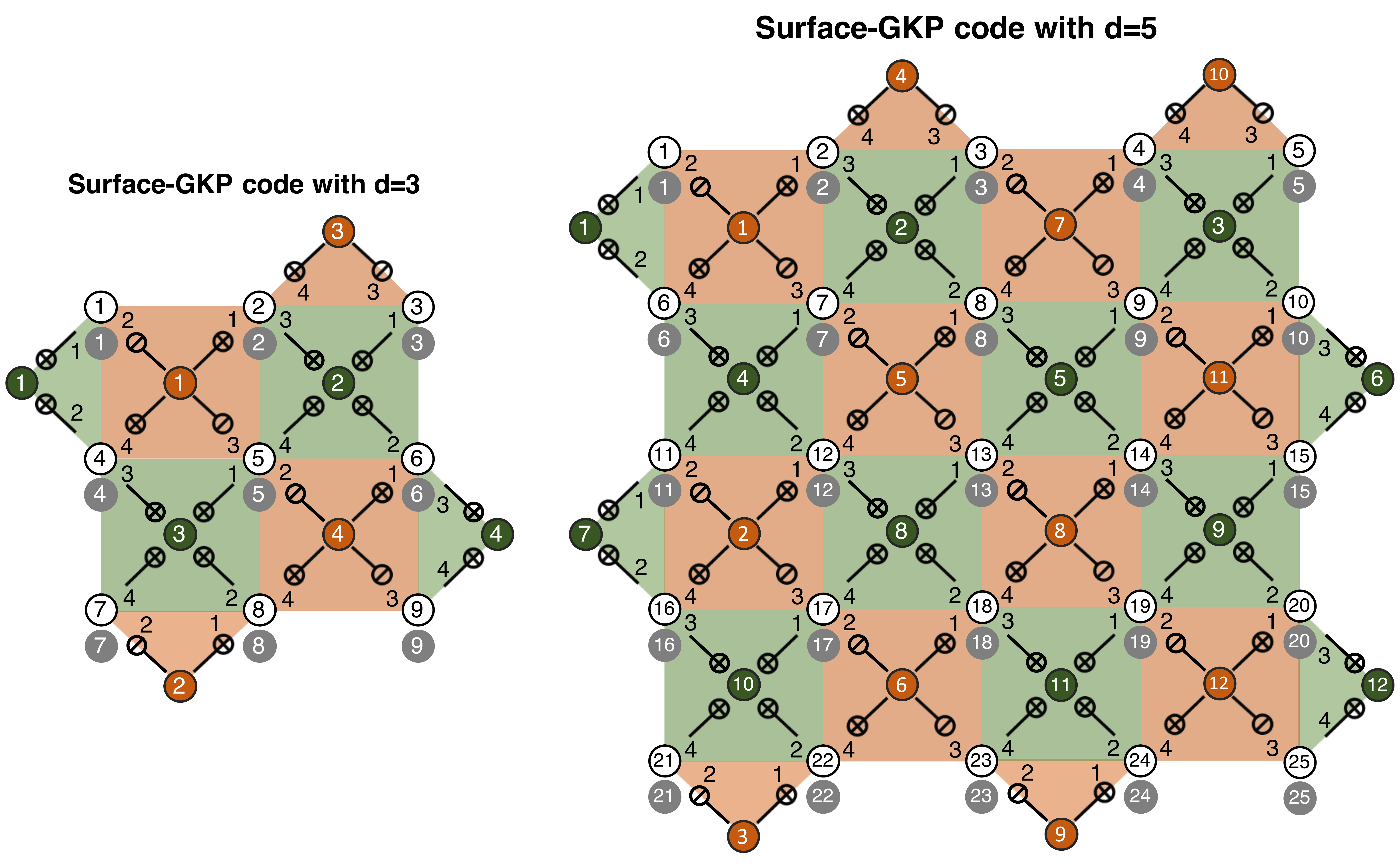}
\caption{[Fig.\ 3 in PRA \textbf{101}, 012316 (2020)] The surface-GKP codes with $d=3$ and $d=5$. White circles represent the data GKP qubits and grey circles represent the ancilla GKP qubits that are used to measure GKP stabilizers of each data GKP qubit. Green and orange circles represent the syndrome GKP qubits that are used to measure the $Z$-type and $X$-type surface code stabilizers of the data GKP qubits, respectively. In general, there are $d^{2}$ data GKP qubits and $(d^{2}-1)/2$ $Z$-type and $X$-type syndrome GKP qubits. See also Fig.\ \ref{fig:Surface code error propagation} for the reason behind our choice of inverse-SUM gates in the $X$-type stabilizer measurements.    }
\label{fig:Surface-GKP codes}
\end{figure}

The layout for the data and ancilla qubits of the surface-GKP code is given in Fig.\ \ref{fig:Surface-GKP codes}. Each of the $d^{2}$ data qubits (white circles in Fig.\ \ref{fig:Surface-GKP codes}) corresponds to a GKP qubit. That is, the distance-$d$ surface-GKP code is stabilized by the following $2d^{2}$ GKP stabilizers 
\begin{align}
\hat{S}_{q}^{(k)} \equiv \exp[ i2\sqrt{\pi}\hat{q}_{k} ], \quad \hat{S}_{p}^{(k)} \equiv \exp[ -i2\sqrt{\pi}\hat{p}_{k} ], 
\end{align}
for $k\in \lbrace 1,\cdots, d^{2}  \rbrace$. These GKP stabilizers are measured by $d^{2}$ ancilla GKP qubits (grey circles in Fig.\ \ref{fig:Surface-GKP codes}) using the circuits given in Fig.\ \ref{fig:GKP qubit fundamentals}(b). Moreover, the data GKP qubits are further stabilized by the $d^{2}-1$ surface code stabilizers. For example, in the $d=3$ case, the $8$ surface code stabilizers are explicitly given by    
\begin{alignat}{2}
\hat{S}_{Z}^{[1]} &= \hat{Z}_{\textrm{gkp}}^{(1)}\hat{Z}_{\textrm{gkp}}^{(4)},\quad & \hat{S}_{Z}^{[2]} &= \hat{Z}_{\textrm{gkp}}^{(2)}\hat{Z}_{\textrm{gkp}}^{(3)}\hat{Z}_{\textrm{gkp}}^{(5)}\hat{Z}_{\textrm{gkp}}^{(6)}, 
\nonumber\\
\hat{S}_{Z}^{[3]} &= \hat{Z}_{\textrm{gkp}}^{(4)}\hat{Z}_{\textrm{gkp}}^{(5)}\hat{Z}_{\textrm{gkp}}^{(7)}\hat{Z}_{\textrm{gkp}}^{(8)}, \quad & \hat{S}_{Z}^{[4]} &= \hat{Z}_{\textrm{gkp}}^{(6)}\hat{Z}_{\textrm{gkp}}^{(9)}, \label{eq:surface code stabilizers Z type d=3}
\end{alignat}
and
\begin{align}
\hat{S}_{X}^{[1]} &= (\hat{X}_{\textrm{gkp}}^{(1)})^{\dagger}\hat{X}_{\textrm{gkp}}^{(2)}\hat{X}_{\textrm{gkp}}^{(4)}(\hat{X}_{\textrm{gkp}}^{(5)})^{\dagger},\quad  \hat{S}_{X}^{[2]} = (\hat{X}_{\textrm{gkp}}^{(7)})^{\dagger}\hat{X}_{\textrm{gkp}}^{(8)},  
\nonumber\\
\hat{S}_{X}^{[3]} &= \hat{X}_{\textrm{gkp}}^{(2)}(\hat{X}_{\textrm{gkp}}^{(3)})^{\dagger}, \quad  \hat{S}_{X}^{[4]} = (\hat{X}_{\textrm{gkp}}^{(5)})^{\dagger}\hat{X}_{\textrm{gkp}}^{(6)}\hat{X}_{\textrm{gkp}}^{(8)}(\hat{X}_{\textrm{gkp}}^{(9)})^{\dagger}, \label{eq:surface code stabilizers X type d=3}
\end{align}
where $\hat{Z}_{\textrm{gkp}}^{(k)} \equiv \exp[i\sqrt{\pi}\hat{q}_{k}]$ and $\hat{X}_{\textrm{gkp}}^{(k)} \equiv \exp[-i\sqrt{\pi}\hat{p}_{k}]$ (see Fig.\ \ref{fig:Surface-GKP codes}).

As shown in Fig.\ \ref{fig:Surface code stabilizer measurement}, the $Z$-type surface code stabilizers are measured by the $Z$-type GKP syndrome qubits (green circles in Fig.\ \ref{fig:Surface-GKP codes}) by using the SUM gates $\textrm{SUM}_{a\rightarrow e},\cdots,\textrm{SUM}_{d\rightarrow e}$ and the position homodyne measurement $\mathcal{M}_{q}$. Similarly, the $X$-type surface code stabilizers are measured by the $X$-type GKP syndrome qubits (orange circles in Fig.\ \ref{fig:Surface-GKP codes}) by using the SUM and the inverse-SUM gates $\textrm{SUM}_{e\rightarrow a}^{\dagger},\textrm{SUM}_{e\rightarrow b},\textrm{SUM}_{e\rightarrow c},\textrm{SUM}_{e\rightarrow d}^{\dagger}$ and the momentum homodyne measurement $\mathcal{M}_{p}$. Note that all the $Z$-type and $X$-type surface code stabilizers can be measured in parallel without conflicting with each other, if the SUM and the inverse-SUM gates are executed in an order that is specified in Figs.\ \ref{fig:Surface-GKP codes} and \ref{fig:Surface code stabilizer measurement}.

\begin{figure}[t!]
\centering
\includegraphics[width=4.0in]{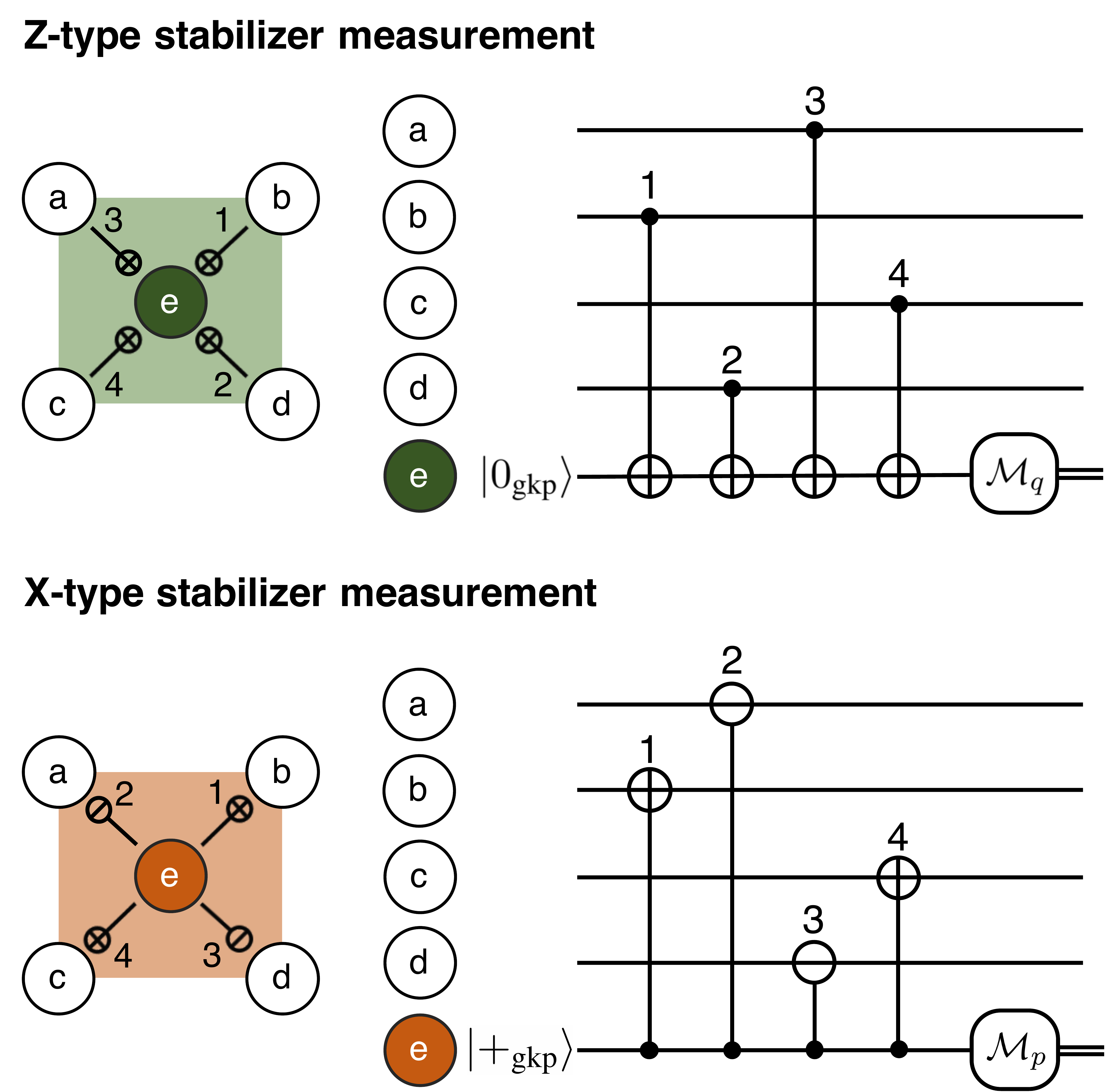}
\caption{[Fig.\ 4 in PRA \textbf{101}, 012316 (2020)] Circuits for surface code stabilizer measurements.   }
\label{fig:Surface code stabilizer measurement}
\end{figure}

We remark that in the usual case where the surface code is implemented with bare qubits (such as transmons \cite{Koch2007,Schreier2008}), it makes no difference to replace, for example, $\hat{S}_{X}^{[1]} = (\hat{X}^{(1)})^{\dagger}\hat{X}^{(2)}\hat{X}^{(4)}(\hat{X}^{(5)})^{\dagger}$ by $\hat{S}_{X}^{[1]} = \hat{X}^{(1)}\hat{X}^{(2)}\hat{X}^{(4)}\hat{X}^{(5)}$ since the Pauli operators are hermitian. Similarly, the action of $(\hat{X}_{\textrm{gkp}}^{(k)})^{\dagger}$ on the GKP qubit subspace is identical to that of $\hat{X}_{\textrm{gkp}}^{(k)}$ and therefore measuring $\hat{S}_{X}^{[1]} = (\hat{X}_{\textrm{gkp}}^{(1)})^{\dagger}\hat{X}_{\textrm{gkp}}^{(2)}\hat{X}_{\textrm{gkp}}^{(4)}(\hat{X}_{\textrm{gkp}}^{(5)})^{\dagger}$ is equivalent to measuring $\hat{S}_{X}^{[1]} = \hat{X}_{\textrm{gkp}}^{(1)}\hat{X}_{\textrm{gkp}}^{(2)}\hat{X}_{\textrm{gkp}}^{(4)}\hat{X}_{\textrm{gkp}}^{(5)}$ in the case of the surface-GKP code if the syndrome measurements are noiseless. 

It is important to note, however, that the actions of $(\hat{X}_{\textrm{gkp}}^{(k)})^{\dagger}$ and $\hat{X}_{\textrm{gkp}}^{(k)}$ are not the same outside of the GKP qubit subspace. Therefore, it does make a difference to choose $(\hat{X}_{\textrm{gkp}}^{(k)})^{\dagger}$ instead of $\hat{X}_{\textrm{gkp}}^{(k)}$ in the noisy measurement case, since shift errors propagate differently depending on the choice. For example, we illustrate in Fig.\ \ref{fig:Surface code error propagation} how the initial position shift error in the fourth $X$-type syndrome GKP qubit (X4 qubit) propagates to the second $Z$-type syndrome GKP qubit (Z2 qubit) through the fifth and the sixth data GKP qubits (D5 and D6 	qubits). Note that an initial random position shift in the X4 qubit (represented by the red lightning symbol) is propagated to the D6 qubit via the SUM gate $\textrm{SUM}_{X4\rightarrow D6}$ and then to the Z2 qubit via $\textrm{SUM}_{D6\rightarrow Z2}$. Additionally, it is also propagated to the D5 qubit via the inverse-SUM gate $\textrm{SUM}^{\dagger}_{X4\rightarrow D5}$ with its sign flipped and then the flipped shift is further propagated to the Z2 qubit via $\textrm{SUM}_{D5\rightarrow Z2}$. Thus, the propagated shift errors eventually cancel out each other at the Z2 qubit (visualized by the empty lightning symbol) due to the sign flip during the inverse-SUM gate.

\begin{figure}[t!]
\centering
\includegraphics[width=5.4in]{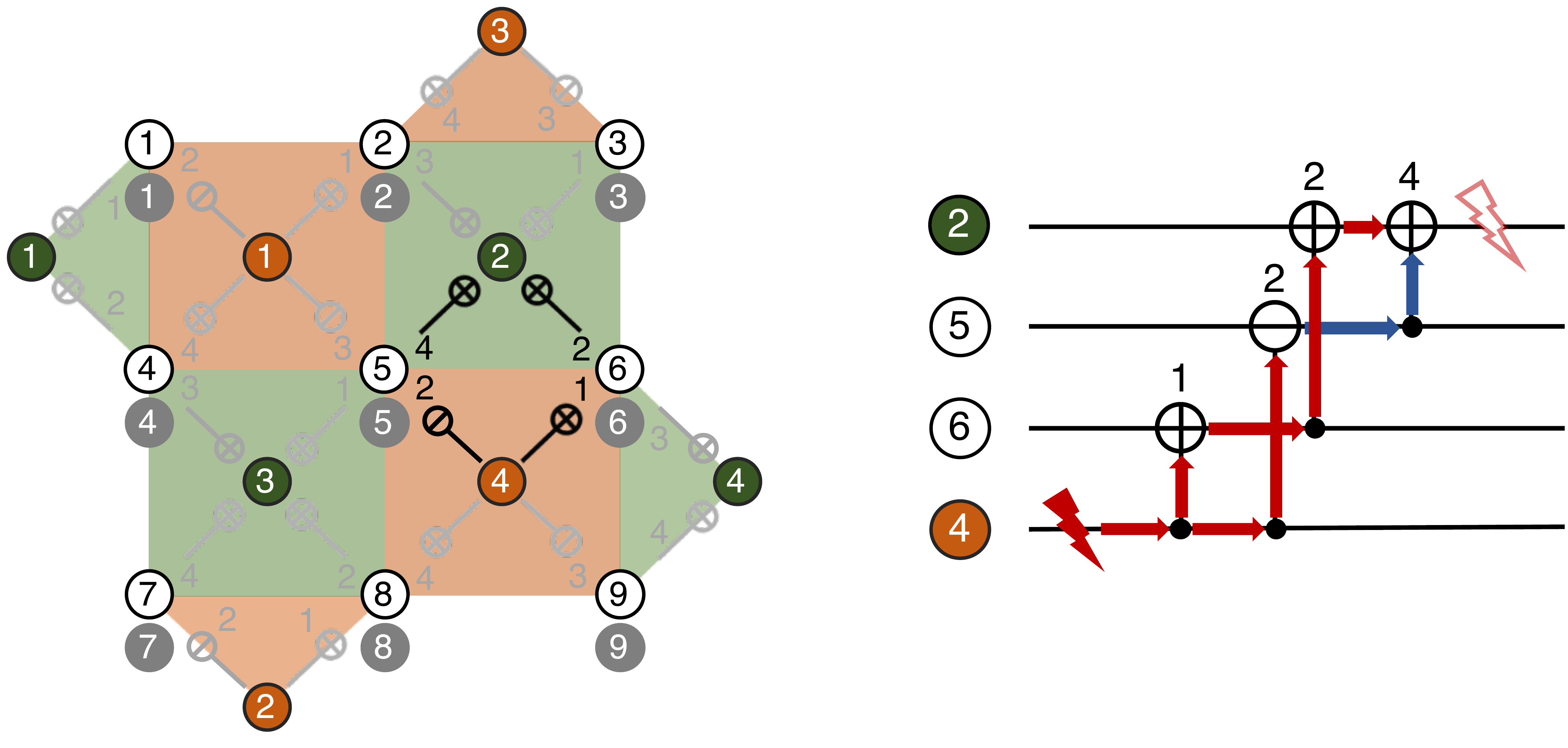}
\caption{[Fig.\ 5 in PRA \textbf{101}, 012316 (2020)] Noise propagation from the X4 qubit to the Z2 qubit during surface code stabilizer measurements. The red lightening symbol represents the initial position of a shift error on the qubit X4. During the propagation of the shift error to the qubit Z2, the sign of the shift error is flipped by the inverse-SUM gate $\textrm{SUM}^{\dagger}_{X4\rightarrow D5}$. This sign flip then results in cancellations of the propagated shift errors on the qubit Z2 (empty lightening symbol).       }
\label{fig:Surface code error propagation}
\end{figure}

Note that if the SUM gate $\textrm{SUM}_{X4\rightarrow D5}$ were used instead of the inverse-SUM gate $\textrm{SUM}^{\dagger}_{X4\rightarrow D5}$, the propagated shift errors would add together and therefore be amplified by a factor of $2$. In this regard, we emphasize that we have carefully chosen the specific pattern of the SUM and the inverse-SUM gates in Fig.\ \ref{fig:Surface-GKP codes} to avoid such noise amplifications.

\subsection{Circuit-level noise model}

Here, we discuss the noise model that we use to simulate the full error correction protocol with the surface-GKP code. To be more specific, the surface-GKP error correction protocol is implemented by repeatedly measuring the $\hat{S}_{q}$ and $\hat{S}_{p}$ GKP stabilizers for each data GKP qubit by using the circuits in Fig.\ \ref{fig:GKP qubit fundamentals}(b), and then measuring the surface code stabilizers shown in Figs.\ \ref{fig:Surface-GKP codes} and \ref{fig:Surface code stabilizer measurement}. Note that the required resources for these measurements are as follows:  
\begin{itemize}
\item Preparation of the GKP states $|0_{\textrm{gkp}}\rangle$ and $|+_{\textrm{gkp}}\rangle$. 
\item SUM and inverse-SUM gates.
\item Position and momentum homodyne measurements.
\item Displacement operations for error correction.
\end{itemize}
We assume that all these components can be noisy except for the displacement operations since in most experimental platforms, the errors associated with the displacement operations are negligible compared to the other errors. Moreover, note that displacement operations are only needed for error correction. Thus, they need not be implemented physically in practice since they can be kept track of by using a Pauli frame \cite{Knill2005, DiVincenzo2007,Terhal2015,Chamberland2018}. Below, we describe the noise model for each component in more detail.

\subsubsection{Finitely-squeezed GKP states}

Let us recall that realistic GKP states have a finite average excitation number, or a finite squeezing. As discussed in Section \ref{section:Translation-symmetric bosonic codes}, a finite-size GKP state can be modeled by applying a Gaussian envelope operator $\exp[-\Delta^{2}\hat{n}]$ to an ideal GKP state, i.e., $|\psi_{\textrm{gkp}}^{\Delta}\rangle\propto \exp[-\Delta^{2}\hat{n}]|\psi_{\textrm{gkp}}\rangle$. Expanding the envelope operator in terms of displacement operators \cite{Cahill1969}, we can write 
\begin{align}
|\psi_{\textrm{gkp}}^{\Delta}\rangle &\propto  \int  \frac{d^{2}\alpha}{\pi}  \mathrm{Tr}\big{[} \exp[-\Delta^{2}\hat{n}] \hat{D}^{\dagger}(\alpha)\big{]} \hat{D}(\alpha) |\psi_{\textrm{gkp}} \rangle 
\nonumber\\
&\propto \int d^{2}\alpha \exp\Big{[} -\frac{|\alpha|^{2}}{ 2\sigma_{\textrm{gkp}}^{2} }  \Big{]} \hat{D}(\alpha) |\psi_{\textrm{gkp}} \rangle , \label{eq:finite GKP expansion in displacements}
\end{align} 
where $\sigma_{\textrm{gkp}}^{2} = (1-e^{-\Delta^{2}}) / (1+e^{-\Delta^{2}}) \xrightarrow{\Delta\ll 1} \Delta^{2} /2$ (see Eq.\ \eqref{eq:approximate GKP laguerre proof}). That is, an approximate GKP state can be understood as the state that results from applying coherent superpositions of displacement operations with a Gaussian envelope to an ideal GKP state. More details about the approximate GKP codes can be found in \cite{Terhal2016,Shi2019,Pantaleoni2019,Matsuura2019}. 

To simplify our analysis of the surface-GKP code, we consider noisy GKP states corrupted by an incoherent mixture of displacement operations, instead of the coherent superposition as in Eq.\ \eqref{eq:finite GKP expansion in displacements}. That is, whenever a fresh GKP state $|0_{\textrm{gkp}}\rangle$ or $|+_{\textrm{gkp}}\rangle$ is supplied to the error correction chain, we assume that a noisy GKP state
\begin{align}
|0_{\textrm{gkp} } \rangle &\rightarrow \mathcal{N}_{B_{2}}[\sigma_{\textrm{gkp}}](|0_{\textrm{gkp} } \rangle\langle 0_{\textrm{gkp}} | ), \textrm{ or}
\nonumber\\
|+_{\textrm{gkp} } \rangle &\rightarrow \mathcal{N}_{B_{2}}[\sigma_{\textrm{gkp}}](|+_{\textrm{gkp} } \rangle\langle +_{\textrm{gkp}} | ) \label{eq:noisy GKP state incoherent displacement error}
\end{align}
is supplied. See Table \ref{table:random shift errors} for the definition of Gaussian random shift error $\mathcal{N}_{B_{2}}[\sigma]$. Note that $\mathcal{N}_{B_{2}}[\sigma]$ models an incoherent mixture of random displacement errors. We remark that the noisy GKP states corrupted by an incoherent displacement error (as in Eq.\ \eqref{eq:noisy GKP state incoherent displacement error}) are noisier than the noisy GKP states corrupted by a coherent displacement error (as in Eq.\ \eqref{eq:finite GKP expansion in displacements}), because the former can be obtained from the latter by applying a technique similar to Pauli twirling \cite{Emerson2007} (see Appendix A in Ref.\ \cite{Noh2020} for more details). In this sense, by adopting the incoherent noise model, we make a conservative assumption about the GKP noise while simplifying the analysis. 

We define the squeezing $s_{\textrm{gkp}}$ of a noisy GKP state $\mathcal{N}_{B_{2}}[\sigma_{\textrm{gkp}}](|\psi_{\textrm{gkp} } \rangle\langle \psi_{\textrm{gkp}} | )$ as $s_{\textrm{gkp}}\equiv -10\log_{10}(2\sigma_{\textrm{gkp}}^{2})$ (aligning our notation with those in Refs.\ \cite{Menicucci2014,Fukui2018a,Fukui2019}), where the unit of $s_{\textrm{gkp}}$ is in dB. We also assume that idling modes are undergoing independent Gaussian random displacement errors $\mathcal{N}_{B_{2}}[\sigma_{p}]$ with variance $\sigma_{p}^{2} = \kappa \Delta t_{p}$ during the GKP state preparation, where $\kappa$ is the photon loss and heating rate (see below) and $\Delta t_{p}$ is the time needed to prepare the GKP states. 

\subsubsection{Noisy SUM and inverse-SUM gates}

We assume that photon loss errors occur continuously during the execution of the SUM or the inverse-SUM gates. To be more specific, we assume that SUM gates are implemented by letting the system evolve under the Hamiltonian $\hat{H} = g\hat{q}_{1}\hat{p}_{2}$ for $\Delta t = 1/g$ (the first mode is the control mode and the second mode is the target mode), during which independent photon loss errors occur continuously in both the control and the target mode. That is, we replace the unitary SUM gate $\textrm{SUM}_{1\rightarrow 2} = \exp[-i\hat{q}_{1}\hat{p}_{2}]$ (or the inverse-SUM gate $\textrm{SUM}_{1\rightarrow 2}^{\dagger} = \exp[i\hat{q}_{1}\hat{p}_{2}]$) by a completely positive and trace-preserving (CPTP) map \cite{Choi1975} $\exp[\mathcal{L}_{+}\Delta t]$ (or $\exp[\mathcal{L}_{-}\Delta t]$) with $\Delta t = 1/g$, where $g$ is the coupling strength and the Lindbladian generator $\mathcal{L}_{\pm}$ is given by 
\begin{align}
\mathcal{L}_{\pm }(\hat{\rho}) = \mp ig[ \hat{q}_{1}\hat{p}_{2}, \hat{\rho}  ] + \kappa \big{(} \mathcal{D}[\hat{a}_{1}] + \mathcal{D}[\hat{a}_{2}] \big{)}\hat{\rho}. \label{eq:noisy SUM or inverse-SUM gates Lindbladian}
\end{align}
Here, $\mathcal{D}[\hat{A}](\hat{\rho}) \equiv \hat{A}\hat{\rho}\hat{A}^{\dagger} - \frac{1}{2}\lbrace \hat{A}^{\dagger}\hat{A},\hat{\rho} \rbrace$, and $\kappa$ is the photon loss rate. 

In a similar spirit as above, we make a more conservative assumption about the gate error to make the analysis more tractable. That is, we make the noisy gate $\exp[\mathcal{L}_{\pm}\Delta t]$ noisier by adding heating errors $\kappa( \mathcal{D}[\hat{a}_{1}^{\dagger}] + \mathcal{D}[\hat{a}_{2}^{\dagger}] )$ to the Lindbladian $\mathcal{L}_{\pm}$, i.e., 
\begin{align}
\mathcal{L}'_{\pm} \equiv  \mathcal{L}_{\pm} + \kappa\big{(} \mathcal{D}[\hat{a}_{1}^{\dagger}] + \mathcal{D}[\hat{a}_{2}^{\dagger}] \big{)} , 
\end{align}
where the heating rate $\kappa$ is the same as the photon loss rate. This is to convert the loss errors into random displacement errors (see Refs.\ \cite{Albert2018,Noh2019}). Indeed, the noisy SUM or the inverse-SUM gate $\exp[\mathcal{L}'_{\pm}\Delta t]$ is equivalent to the ideal SUM or the inverse-SUM gate followed by a correlated Gaussian random displacement error $\hat{q}_{k}\rightarrow \hat{q}_{k} + \xi_{q}^{(k)}$ and $\hat{p}_{k}\rightarrow \hat{p}_{k} + \xi_{p}^{(k)}$ for $k\in \lbrace 1,2 \rbrace$, where the additive shift errors are drawn from bivariate Gaussian distributions $(\xi_{q}^{(1)},\xi_{q}^{(2)}) \sim \mathcal{N}(0,\boldsymbol{N}_{q}^{\pm})$ and $(\xi_{p}^{(1)},\xi_{p}^{(2)}) \sim \mathcal{N}(0,\boldsymbol{N}_{p}^{\pm})$ with the noise covariance matrices 
\begin{align}
\boldsymbol{N}_{q}^{\pm} = \sigma_{c}^{2}\begin{bmatrix}
1 & \pm 1/2\\
\pm 1/2 & 4/3
\end{bmatrix},  \,\,  \boldsymbol{N}_{p}^{\pm} = \sigma_{c}^{2}\begin{bmatrix}
4/3 & \mp 1/2\\
\mp 1/2 & 1
\end{bmatrix}. \label{eq:noise covariance matrix SUM or inverse-SUM}
\end{align}
(See Appendix A in Ref.\ \cite{Noh2020} for more details.) Here, the variance $\sigma_{c}^{2}$ is given by $\sigma_{c}^{2} = \kappa \Delta t = \kappa /g$. The noise covariance matrices $\boldsymbol{N}_{q}^{+}$ and $\boldsymbol{N}_{p}^{+}$ are used for the SUM gate and $\boldsymbol{N}_{q}^{-}$ and $\boldsymbol{N}_{p}^{-}$ are used for the inverse-SUM gate. If there are idling modes during the application of the SUM or the inverse-SUM gates on some other pairs of modes, we assume that the idling modes undergo independent Gaussian random displacement errors $\mathcal{N}_{B_{2}}[\sigma_{c}]$ of the same variance $\sigma_{c}^{2} = \kappa \Delta t = \kappa/g$, because they should wait for the same amount of time until the gates are completed.

\subsubsection{Noisy homodyne measurements} 

Lastly, we model errors in position and momentum homodyne measurements by adding independent Gaussian random displacement errors $\mathcal{N}_{B_{2}}[\sigma_{m}]$ of the variance $\sigma_{m}^{2} = \kappa \Delta t_{m}$ before the ideal homodyne measurements. Here, $\Delta t_{m}$ is the time needed to implement the homodyne measurements. Also, during the homodyne measurements, we assume that idling modes are undergoing independent Gaussian random displacement errors of the same variance $\sigma_{m}^{2} = \kappa \Delta t_{m}$. 

\subsection{Main results}

Let us now rigorously analyze the performance of the surface-GKP code by simulating the full error correction protocol assuming the noise model described so far. We focus on the case $\sigma_{p}=\sigma_{c}=\sigma_{m}\equiv \sigma$ where all circuit elements are comparably noisy. However, we assume that the noise afflicting GKP states $\sigma_{\textrm{gkp}}$ is independent of the circuit noise. Since we have two independent noise parameters $\sigma_{\textrm{gkp}}$ and $\sigma$, the fault-tolerance thresholds would form a curve instead of a single number. Therefore, instead of exhaustively investigating the entire parameter space, we consider the following three representative scenarios:  
\begin{enumerate}[label={Case \Roman*},wide =\parindent]
\item \!\!: $\sigma_{\textrm{gkp}} \neq 0$ and $\sigma=0$ \label{case:1}
\item \!\!: $\sigma_{\textrm{gkp}} = 0$ and $\sigma\neq 0$ \label{case:2}
\item \!\!: $\sigma_{\textrm{gkp}} = \sigma \neq  0$ \label{case:3}
\end{enumerate}  
Then, we find the threshold values for $\sigma_{\textrm{gkp}}$ (\ref{case:1}), $\sigma$ (\ref{case:2}), and $\sigma_{\textrm{gkp}}=\sigma$ (\ref{case:3}), under which fault-tolerant quantum error correction is possible with the surface-GKP code. Specifically, we take the distance $d$ surface-GKP code and repeat the (noisy) stabilizer measurements $d$ times. Then, we construct 3D space-time graphs based on the stabilizer measurement outcomes and apply a minimum-weight perfect matching decoding algorithm \cite{Edmonds1965,Edmonds1965b} to perform error correction. Specifically, we use a simple method to compute the renormalized edge weights of the 3D matching graphs, based on the information obtained during GKP-stabilizer measurements. Such graphs are then used to perform MWPM.  A detailed description of our method is given in Subsection \ref{subsection:Simulation details}. Below, we report the logical $X$ error rates, which are the same as the logical $Z$ error rates. Logical $Y$ error rates are not shown since they are much smaller than the logical $X$ and $Z$ error rates.

\begin{figure}[t!]
\centering
\includegraphics[width=4.8in]{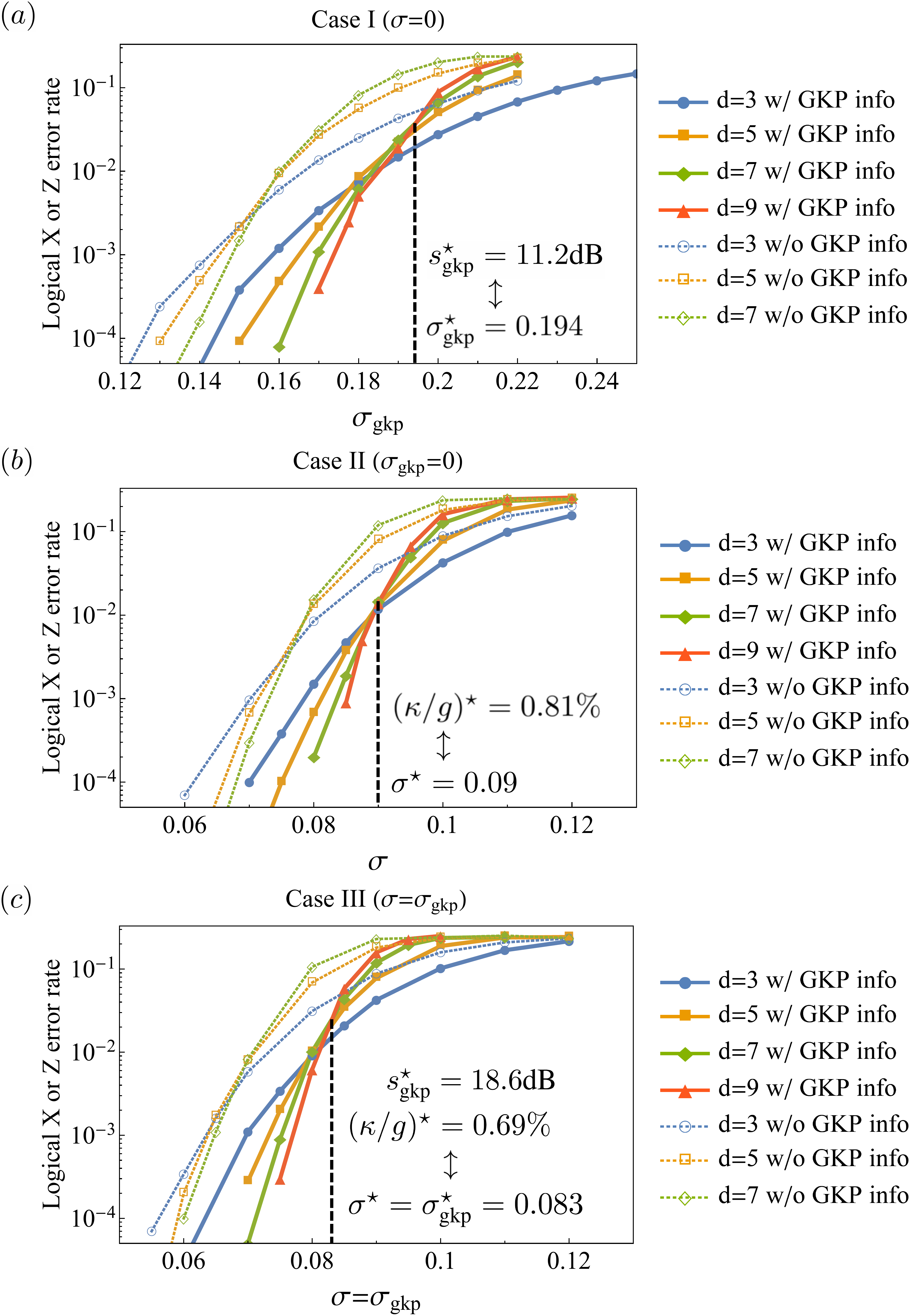}
\caption{[Fig.\ 6 in PRA \textbf{101}, 012316 (2020)] The logical $X$ error rate of the surface-GKP code for various $d$ when (a) $\sigma=0$ (\ref{case:1}), (b) $\sigma_{\textrm{gkp}}=0$ (\ref{case:2}), and (c) $\sigma = \sigma_{\textrm{gkp}}$ (\ref{case:3}), which is the same as the logical $Z$ error rate. The solid lines represent logical error rates when information from the GKP-stabilizer measurements is used to renormalize edge weights in the matching graphs. The dotted lines correspond to the case when information from GKP-stabilizer measurements is ignored. In all cases, given that $\sigma_{\textrm{gkp}}$ and $\sigma$ are below certain fault-tolerance thresholds, the logical $X$ or $Z$ error rates are suppressed to an arbitrarily small value as we increase the code distance $d$.     }
\label{fig:main results}
\end{figure}

In Fig.\ \ref{fig:main results}(a), we consider the case where GKP states are the only noisy components in the scheme, i.e., $\sigma=0$ (\ref{case:1}). We show the performance of the surface-GKP code when both the additional information from GKP-stabilizer measurements is incorporated and when it is ignored. When the additional information is incorporated, the logical $X$ error rate (same as the logical $Z$ error rate) decreases as we increase the code distance $d$ if $\sigma_{\textrm{gkp}}$ is smaller than the threshold value $\sigma_{\textrm{gkp}}^{\star} = 0.194$ (or if the squeezing of the noisy GKP state $s_{\textrm{gkp}}$ is higher than the threshold value $s_{\textrm{gkp}}^{\star} = 11.2$dB). That is, in this case, fault-tolerant error correction is possible with the surface-GKP code if the squeezing of the GKP states is above $11.2$dB. Note that if the additional information from GKP-stabilizer measurements is ignored, the threshold squeezing value decreases and logical error rates can range from one to several orders of magnitude larger for a given $\sigma_{\textrm{gkp}}$.      

In Fig.\ \ref{fig:main results}(b), we consider the case where GKP states are noiseless but the other circuit elements are noisy, i.e., $\sigma_{\textrm{gkp}}=0$ (\ref{case:2}). In this case, if the additional information from the GKP error correction protocol is incorporated, we can suppress the logical $X$ error rate (same as the logical $Z$ error rate) to any desired small value by choosing a sufficiently large code distance $d$ as long as $\sigma$ is smaller than the threshold value $\sigma^{\star} = 0.09$. Note that since $\sigma^{2} = \kappa / g$ the threshold value $\sigma^{\star} = 0.09$ corresponds to $(\kappa / g)^{\star} = 8.1\times 10^{-3} = 0.81\%$, where $\kappa$ is the photon loss rate and $g$ is the coupling strength of the SUM or the inverse-SUM gates. That is, fault-tolerant error correction with the surface-GKP code is possible if the SUM or the inverse-SUM gates can be implemented roughly $120$ times faster than the photon loss processes. Note that if the additional information from GKP-stabilizer measurements is ignored, the threshold value becomes smaller and logical error rates can range from one to several orders of magnitude larger for a given $\sigma$.  

Finally in Fig.\ \ref{fig:main results}(c), we consider the case where the GKP states and the other circuit elements are comparably noisy, i.e., $\sigma = \sigma_{\textrm{gkp}}$ (\ref{case:3}). In this case, fault-tolerant error correction is possible if $\sigma = \sigma_{\textrm{gkp}}$ is smaller than the threshold value $\sigma^{\star}=\sigma^{\star}_{\textrm{gkp}} = 0.083$. This threshold value corresponds to the GKP squeezing $s_{\textrm{gkp}}^{\star}=18.6$dB and $\kappa / g = 6.9\times 10^{-3} = 0.69\%$. Similarly, as in the previous cases, if the additional information from GKP-stabilizer measurements is ignored, the threshold value becomes smaller and logical error rates can range from one to several orders of magnitude larger for a given noise parameter $\sigma = \sigma_{\textrm{gkp}}$. 

For all three cases, we clearly observe that fault-tolerant quantum error correction with the surface-GKP code is possible despite noisy GKP states and noisy circuit elements, given that the noise parameters are below certain fault-tolerance thresholds. Recent state-of-the-art experiments have demonstrated the capability to prepare GKP states of squeezing between $5.5$dB and $9.5$dB \cite{Campagne2019,Fluhmann2018,Fluhmann2019,Fluhmann2019b}, approaching the established squeezing threshold values $s_{\textrm{gkp}}^{\star} \ge 11.2$dB.

In circuit QED systems, beam-splitter interactions between two high-Q cavity modes have been implemented experimentally with $\kappa/ g \sim 10^{-2}$, where $g$ is the relevant coupling strength and $\kappa$ is the photon loss rate \cite{Gao2018}. While the same scheme (based on four-wave mixing processes) may be adapted to realize the SUM or the inverse-SUM gates between two high-Q cavity modes \cite{Zhang2019}, this scheme will induce non-negligible Kerr nonlinearities and thus may not be compatible with the GKP qubits which should be operated in the regime where Kerr nonlinearities are negligible \cite{Campagne2019}. On the other hand, by using three-wave mixing elements \cite{Frattini2017}, it would be possible to implement the SUM or the inverse-SUM gates between two high-Q cavity modes in a way that is not significantly limited by Kerr nonlinearities.

Let us now compare the performance of the surface-GKP code with the usual rotated surface code implemented by bare qubits such as transmon qubits. Assuming a full circuit-level depolarizing noise (both for single- and two-qubit gates), it was numerically demonstrated that fault-tolerant quantum error correction is possible with the rotated surface code if the physical error rate is below the threshold $p^{\star} = 1.2\%$ \cite{Wang2011}. Note that such a high threshold value was obtained by introducing 3D space-time correlated edges (see Figs.\ 3 and 4 in Ref.\ \cite{Wang2011}) and fully optimizing the renormalized edge weights based on the noise parameters.  

Our circuit-level noise model (in terms of shift errors) is quite different from the depolarizing noise model considered in typical qubit-based fault-tolerant error correction schemes. Moreover, we also introduce non-Gaussian resources, i.e., GKP states in our scheme. Therefore, our results cannot be directly compared with the results in Ref.\ \cite{Wang2011}. We nevertheless point out that we obtain comparable threshold values $(\kappa/ g)^{\star} = 0.81\%$ (\ref{case:2}) and $(\kappa/ g)^{\star} = 0.69\%$ (\ref{case:3}) where $\kappa$ is the photon loss rate and $g$ is the coupling strength of the two-mode gates. We stress that we do not introduce 3D space-time correlated edges and provide a simple method for computing the renormalized edge weights. In particular, 3D space-time correlated edges are not necessary in our case with the surface-GKP code. This is because any shift errors that are correlated due to two-mode gates will not cause any Pauli errors to GKP qubits nor trigger syndrome GKP qubits incorrectly, as long as the size of the correlated shifts is smaller than $\sqrt{\pi}/2$, which is the case below the fault-tolerance thresholds computed above.

We also point out that in general, topological codes without leakage reduction units \cite{Aliferis2007} are not robust against leakage errors that occur when a bare qubit state is excited and falls out of its desired two-level subspace \cite{Aliferis2007,Fowler2013,Suchara2015,Brown2019}. In the case of the surface-GKP code, leakage errors do occur as well because each bosonic mode may not be in the desired two-level GKP code subspace. However, the surface-GKP code is inherently resilient to such leakage errors (and thus does not require leakage reduction units) since GKP-stabilizer measurements will detect and correct such events. Indeed, in our simulation of the surface-GKP code, leakage errors continuously occur due to shift errors, but the established fault-tolerance thresholds are nevertheless still favorable since GKP-stabilizer measurements prevent the leakage errors from propagating further. 

We lastly remark that the logical $X$ or $Z$ error rates in Fig.\ \ref{fig:main results} decrease very rapidly as $\sigma_{\textrm{gkp}}$ and $\sigma$ approach zero in the case of the surface-GKP code. This is again because the GKP code can correct any shift errors of size less than $\sqrt{\pi}/2$ and therefore the probability that a Pauli error occurs in a GKP qubit (at the end of GKP-stabilizer measurements) becomes exponentially small as $\sigma_{\textrm{gkp}}$ and $\sigma$ approach zero. More precisely, at the end of each GKP-stabilizer measurement, a bulk data GKP qubit undergoes a Pauli $X$ or $Z$ error with probability 
\begin{align}
p_{\textrm{err}}\Big{(} \sqrt{5\sigma_{\textrm{gkp}}^{2} + \frac{59}{3}\sigma^{2} } \Big{)}, 
\end{align}
where $p_{\textrm{err}}(\sigma)$ is defined in Eq.\ \eqref{eq:definition of perr}. Here, the variance $5\sigma_{\textrm{gkp}}^{2} +  (59/3)\sigma^{2}$ was carefully determined by thoroughly keeping track of how circuit-level noise propagates during stabilizer measurements (see also Subsection \ref{subsection:Simulation details}). As can be seen from Fig.\ \ref{fig:perr}, $p_{\textrm{err}}(\sigma)$ agrees well with the asymptotic expression $p_{\textrm{asy}}(\sigma) = (\sqrt{8}\sigma^{2} / \pi ) \exp[ -\pi/ (8\sigma^{2} )  ]$ in the $\sigma\ll 1$ limit. Thus, $p_{\textrm{err}}(\sigma)$ decreases exponentially as $\sigma$ goes to zero. 

\begin{figure}[t!]
\centering
\includegraphics[width=4.0in]{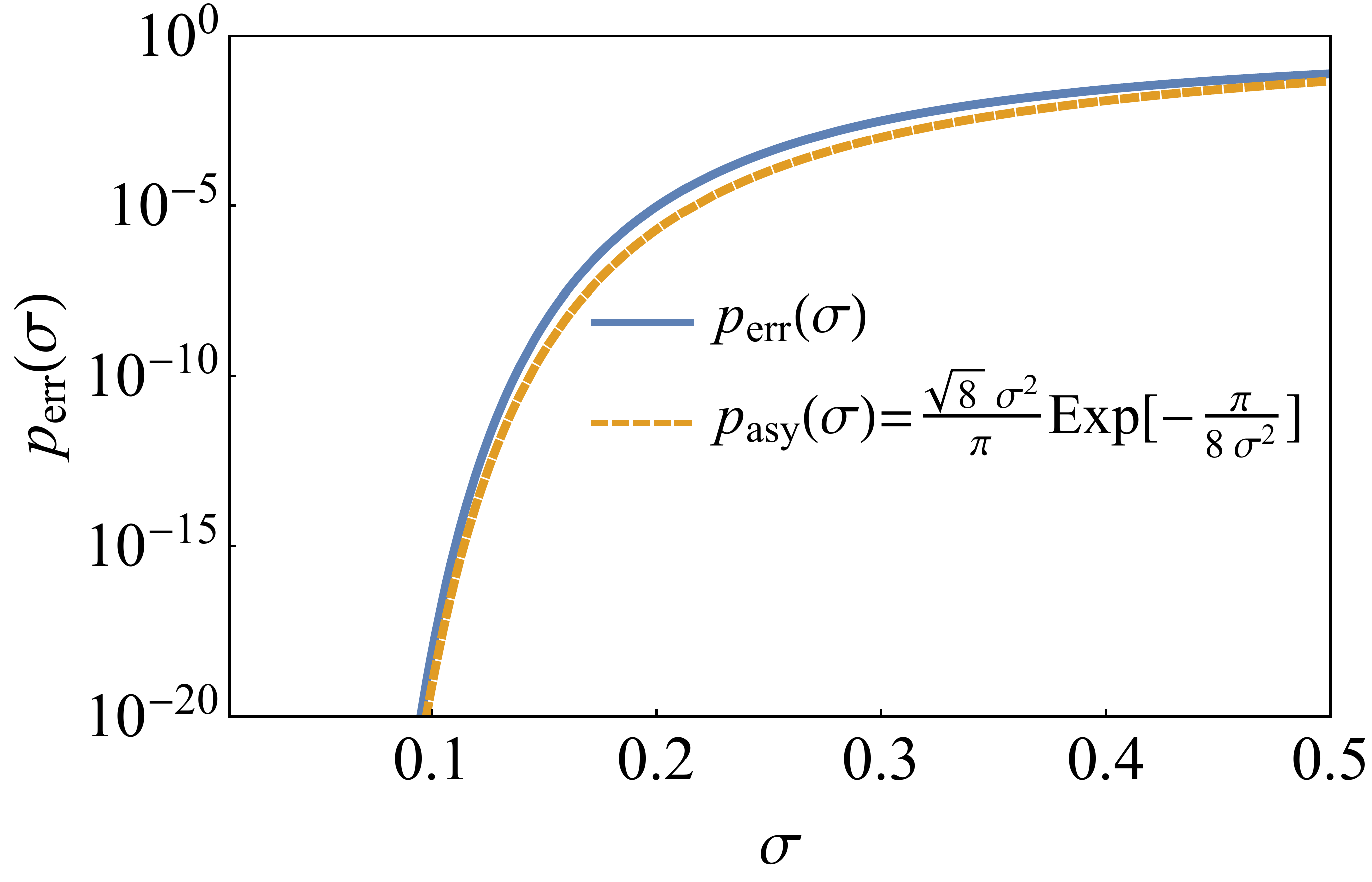}
\caption{[Fig.\ 7 in PRA \textbf{101}, 012316 (2020)] Visualization of the function $p_{\textrm{err}}(\sigma)$ (blue). The asymptotic expression $p_{\textrm{asy}}(\sigma) = (\sqrt{8}\sigma^{2} / \pi ) \exp[ -\pi/ (8\sigma^{2} )  ]$ is represented by the yellow dashed line. $p_{\textrm{err}}(\sigma)$ and $p_{\textrm{asy}}(\sigma)$ agree well with each other in the $\sigma\ll 1$ limit.   }
\label{fig:perr}
\end{figure}

Similarly, the probability that a bulk surface code stabilizer measurement yields an incorrect measurement outcome is given by
\begin{align}
p_{\textrm{err}}\Big{(} \sqrt{7\sigma_{\textrm{gkp}}^{2} + \frac{116}{3}\sigma^{2} } \Big{)}, 
\end{align}
and decays exponentially as $\sigma_{\textrm{gkp}}$ and $\sigma$ approach zero. Therefore, if the circuit-level noise of the physical bosonic modes is very small to begin with, GKP codes will locally provide a significant noise reduction. In this case, the overall resource overhead associated with the next level of global encoding will be modest since a small-distance surface code would suffice. Therefore in this regime, the surface-GKP code may be able to achieve the same target logical error rate in a more hardware-efficient way than the usual surface code. However, since this regime requires high quality GKP states, the additional resource overhead associated with the preparation of such high quality GKP states should also be taken into account for a comprehensive resource estimate. We leave such an analysis to future work.      

\subsection{Comparison with previous works}
\label{subsection:Comparison with previous works}

Here, we compare the results obtained in our work with previous works in Refs.\ \cite{Menicucci2014,Wang2017,Fukui2018a,Fukui2019,Vuillot2019}. Firstly, Refs.\ \cite{Wang2017,Vuillot2019} considered the toric-GKP code and computed fault-tolerance thresholds for both code capacity and phenomenological noise models. In particular, the phenomenological noise models used in these works describe faulty syndrome extraction procedures (due to finitely-squeezed ancilla GKP states) in a way that does not take into account the propagation of the relevant shift errors. More specifically, in Figs.\ 1, 2, and 7 in Ref.\ \cite{Vuillot2019}, shift errors are manually added in the beginning of each stabilizer measurement measurement and right before each homodyne measurement. Therefore, this phenomenological noise model can be understood as a model for homodyne detection inefficiencies while assuming ideal ancilla GKP states. In other words, the fault-tolerance threshold values established in Ref.\ \cite{Vuillot2019} (i.e., $\sigma_{0}^{\star} = 0.235$ and $\sigma_{0}^{\star} = 0.243$; see Fig.\ 12 therein) do not accurately represent the tolerable noise in the ancilla GKP states since the noise propagation was not thoroughly taken into account. Thus, these threshold values can only be taken as a rough upper bound on $\sigma_{\textrm{gkp}}^{\star}$ and cannot be directly compared with the threshold values obtained in our work. Note also that the threshold values in Ref.\ \cite{Vuillot2019} were computed for the toric code which has a different threshold compared to the rotated surface code \cite{Fowler2012}.

On the other hand, in our work we assume that every GKP state supplied to the error correction chain has a finite squeezing and we comprehensively take into account the propagation of such shift errors through the entire error correction circuit. By doing so, we accurately estimate the tolerable noise in the finitely-squeezed ancilla GKP states by computing $\sigma_{\textrm{gkp}}^{\star}$. Related, we stress that when the noise propagation is taken into account, detailed scheduling and design of the syndrome extraction circuits become very crucial and we carefully designed the circuits in a way that mitigates the adverse effects of the noise propagation (see Fig.\ \ref{fig:Surface code error propagation}).

Moreover, we also consider photon loss and heating errors occurring continuously during the implementation of the SUM and inverse-SUM gates.
Thus, we establish fault-tolerance thresholds for the strength of the two-mode coupling relative to the photon loss rate and demonstrate that fault-tolerant quantum error correction with the surface-GKP code is possible in more general scenarios. We also remark that Ref.\ \cite{Vuillot2019} used a minimum-energy decoder based on statistical-mechanical methods in the noisy regime whereas we provide a simple method for computing renormalized edge weights to be used in a MWPM decoder. 

\begin{table}[t!]
  \centering
  \def\arraystretch{2}
  \begin{tabular}{ V{3} c V{1.5} c V{1.5}  c V{1.5} c V{1.5} c V{3} }
   \hlineB{3}  
    \ref{case:1} ($\sigma=0$) & Method & \,\,\, $\sigma_{\textrm{gkp}}^{\star}$ \,\,\, & \,\,\, $s_{\textrm{gkp}}^{\star}$ \,\,\, &  Post-selection? \\  \hlineB{3} 
    Ref.\ \cite{Menicucci2014} & Concatenated codes (MB) & $0.067$ & $20.5$dB &   NO  \\ \hlineB{1.5}  
    Ref.\ \cite{Fukui2018a} & 3D cluster state (MB) & $0.228$ & $9.8$dB &  YES \\ \hlineB{1.5}  
    Ref.\ \cite{Fukui2019} & 3D cluster state (MB) & $0.273$ & $8.3$dB &  YES  \\ \hlineB{1.5}  
    Refs.\ \cite{Wang2017,Vuillot2019} & Toric-GKP code (GB) & N/A  & N/A &  NO   \\ \hlineB{1.5}  
    Our work & Surface-GKP code (GB) & $0.194$ & $11.2$dB &  NO   \\ \hlineB{3}  
  \end{tabular}
   \caption{[Table 1 in PRA \textbf{101}, 012316 (2020)] Threshold values for the squeezing of GKP states for fault-tolerant quantum error correction. Here, we compare the established threshold values obtained by assuming that 
   GKP states are the only noisy components in the error correction circuit (i.e., \ref{case:1}). MB stands for measurement-based and GB stands for gate-based. $d$ is the distance of the code. For the results on the toric-GKP code, $\sigma_{\textrm{gkp}}^{\star}$ and $s_{\textrm{gkp}}^{\star}$ are not available because in the results were obtained by assuming a phenomenological noise model that does not take into account the propagation of shift errors through the entire error correction circuit. That is, the threshold values established in [Phys. Rev. A \textbf{99}, 032344 (2019)] using the toric-GKP code (i.e., $\sigma_{0}^{\star} = 0.235$ and $\sigma_{0}^{\star} = 0.243$ see Fig.\ 12 therein) do not accurately quantify the tolerable noise in the ancilla GKP states. Instead, $\sigma_{0}^{\star}$ can only be taken as a rough upper bound on $\sigma_{\textrm{gkp}}^{\star}$ (see the main text for more details).  }
   \label{table:comparison with other works}
\end{table}

Secondly, Refs.\ \cite{Menicucci2014,Fukui2018a,Fukui2019} considered measurement-based quantum computing with GKP qubits and did establish fault-tolerance thresholds for the squeezing of the GKP states. Assuming that GKP states are the only noisy components (i.e., \ref{case:1}), Ref.\ \cite{Menicucci2014} found the squeezing threshold value $s_{\textrm{gkp}}^{\star} = 20.5$dB, and Refs.\ \cite{Fukui2018a} and \cite{Fukui2019} later brought the value down to $s_{\textrm{gkp}}^{\star} = 9.8$dB and $s_{\textrm{gkp}}^{\star} = 8.3$dB, respectively. Notably, the squeezing thresholds found in Refs.\ \cite{Fukui2018a,Fukui2019} are more favorable than the squeezing threshold found in our work, i.e., $s_{\textrm{gkp}}^{\star} = 11.2$dB (see Fig.\ \ref{fig:main results}(a)). In this regard, we remark that the favorable threshold values obtained in Refs.\ \cite{Fukui2018a,Fukui2019} rely on the use of post-selection. That is, each GKP measurement succeeds with probability strictly less than unity and thus the overall success probability would decrease exponentially as the system size $d$ increases. On the other hand, we do not discard any measurement outcomes and thus our scheme succeeds with unit probability for any distance $d$. Therefore, our scheme with the surface-GKP code deterministically suppresses errors exponentially with the code distance as long as $\sigma_{\textrm{gkp}}$ and $\sigma$ are below the threshold values. The differences between our work and the previous works are summarized in Table \ref{table:comparison with other works}.

\subsection{Simulation details}
\label{subsection:Simulation details}

Here, we provide detailed step-by-step descriptions of the simulation of the surface-GKP code that we used to obtain the main results in Fig.\ \ref{fig:main results}.  

\subsubsection{GKP stabilizer measurements}

Consider the distance-$d$ surface-GKP code consisting of $d^{2}$ data GKP qubits. Each data GKP qubit is stabilized by the two GKP stabilizers $\hat{S}_{q}^{(k)} = \exp[i2\sqrt{\pi} \hat{q}_{k} ]$ and $\hat{S}_{p}^{(k)} =  \exp[- i2\sqrt{\pi} \hat{p}_{k} ]$ where $k\in \lbrace 1, \cdots, d^{2} \rbrace$. In the first step of GKP-stabilizer measurements (left in Fig.\ \ref{fig:GKP stabilizer measurement}), $\hat{S}_{q}^{(k)}$ ($\hat{S}_{p}^{(k)}$) stabilizers are measured for odd (even) $k$. In the second step (right in Fig.\ \ref{fig:GKP stabilizer measurement}), on the other hand, $\hat{S}_{p}^{(k)}$ ($\hat{S}_{q}^{(k)}$) stabilizers are measured for odd (even) $k$. Note that we alternate between $\hat{S}_{q}$ and $\hat{S}_{p}$ measurements in a checkerboard pattern in order to balance the position and momentum quadrature noise.

\begin{figure}[t!]
\centering
\includegraphics[width=5.8in]{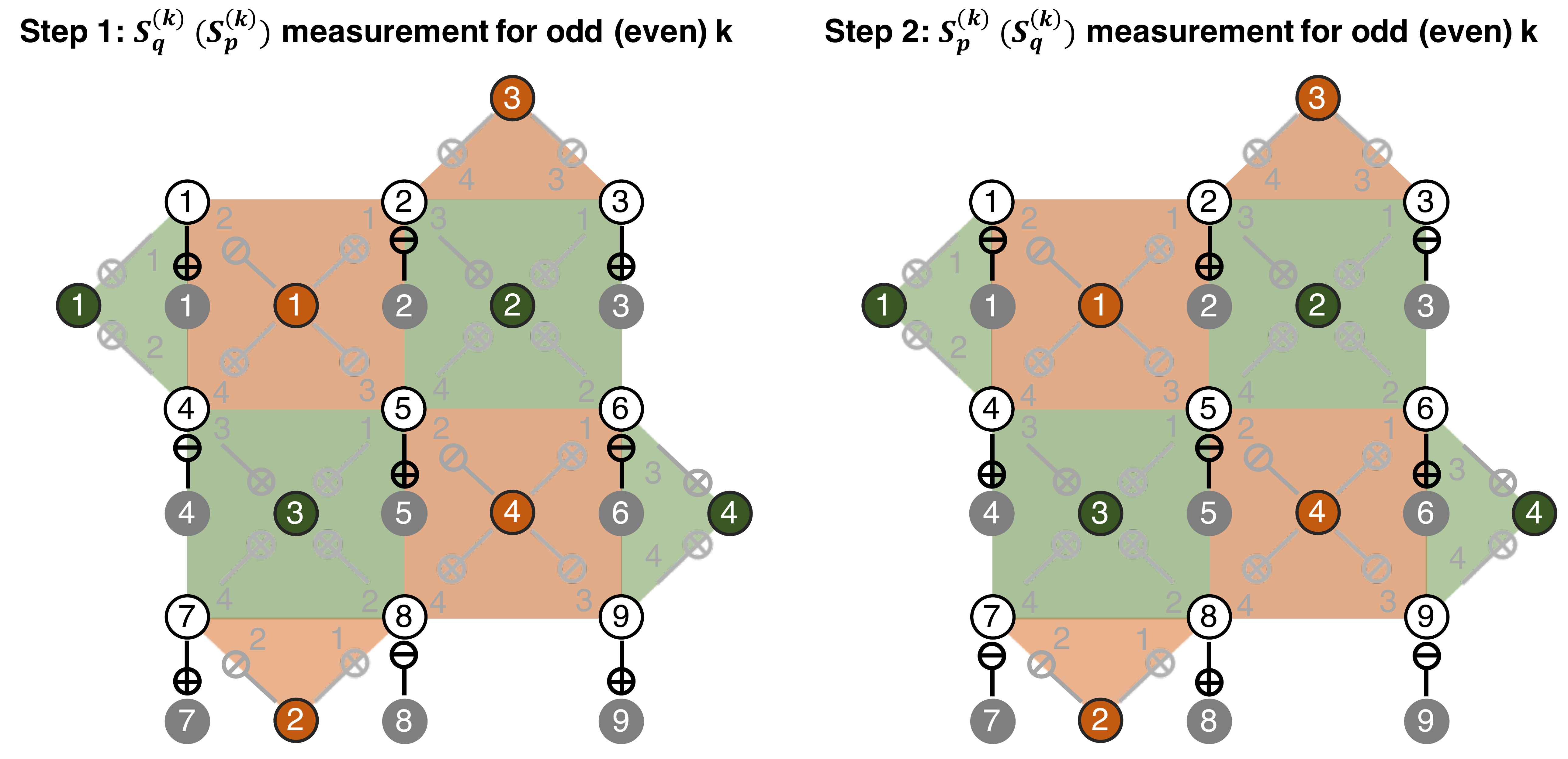}
\caption{[Fig.\ 8 in PRA \textbf{101}, 012316 (2020)] Measurement of the GKP stabilizers for $d=3$. See also Fig.\ \ref{fig:GKP qubit fundamentals}(b) and the caption for the definition of each graphical symbol.   }
\label{fig:GKP stabilizer measurement}
\end{figure}

Let $\xi_{q}^{D}$ and $\xi_{p}^{D}$ ($\xi_{q}^{A}$ and $\xi_{p}^{A}$) be the data (ancilla) position and momentum quadrature noise, where 
\begin{align}
\xi_{q}^{D} &= (\xi_{q}^{(D1)},\cdots, \xi_{q}^{(Dd^{2})}),
\nonumber\\
\xi_{p}^{D} &= (\xi_{p}^{(D1)},\cdots, \xi_{p}^{(Dd^{2})}), 
\nonumber\\
\xi_{q}^{A} &= (\xi_{q}^{(A1)},\cdots, \xi_{q}^{(Ad^{2})}),
\nonumber\\
\xi_{p}^{A} &= (\xi_{p}^{(A1)},\cdots, \xi_{p}^{(Ad^{2})}). 
\end{align}
In Step 1, we add random shift errors occurring during the GKP state preparation as follows: 
\begin{align}
\xi_{q}^{(Dk)} &\leftarrow \xi_{q}^{(Dk)}  + \textrm{randG}(\sigma^{2}), 
\nonumber\\
\xi_{p}^{(Dk)} &\leftarrow \xi_{p}^{(Dk)}  + \textrm{randG}(\sigma^{2}),
\nonumber\\
\xi_{q}^{(Ak)} &\leftarrow  \textrm{randG}(\sigma_{\textrm{gkp}}^{2} ), 
\nonumber\\
\xi_{p}^{(Ak)} &\leftarrow  \textrm{randG}(\sigma_{\textrm{gkp}}^{2} ), \label{eq:noise during gkp preparation} 
\end{align}
for $k \in\lbrace 1,\cdots, d^{2} \rbrace$ where $\textrm{randG}(\boldsymbol{V})$ generates a random vector sampled from a multivariate Gaussian distribution $\mathcal{N}(0, \boldsymbol{V})$ with zero mean and the covariance matrix $\boldsymbol{V}$. Then, due to the SUM and the inverse-SUM gates, the quadrature noise vectors are updated as follows. 
\begin{align}
(\xi_{q}^{(Dk)} , \xi_{q}^{(Ak)} ) &\leftarrow  (\xi_{q}^{(Dk)} , \xi_{q}^{(Ak)} +\xi_{q}^{(Dk)} ) + \textrm{randG}\Big{(} \sigma^{2}\begin{bmatrix}
1 & 1/2 \\
1/2 & 4/3
\end{bmatrix} \Big{)} , 
\nonumber\\
(\xi_{p}^{(Dk)} , \xi_{p}^{(Ak)} ) &\leftarrow  (\xi_{p}^{(Dk)} - \xi_{p}^{(Ak)} , \xi_{p}^{(Ak)} ) + \textrm{randG}\Big{(} \sigma^{2}\begin{bmatrix}
4/3 & -1/2 \\
-1/2 & 1
\end{bmatrix} \Big{)} , \label{eq:SUM between data and ancilla}
\end{align} 
for odd $k$ ($\hat{S}_{q}^{(k)}$ stabilizer measurement) and 
\begin{align}
(\xi_{q}^{(Dk)} , \xi_{q}^{(Ak)} ) &\leftarrow  (\xi_{q}^{(Dk)} - \xi_{q}^{(Ak)} , \xi_{q}^{(Ak)} ) + \textrm{randG}\Big{(} \sigma^{2}\begin{bmatrix}
4/3 & -1/2 \\
-1/2 & 1
\end{bmatrix} \Big{)} , 
\nonumber\\
(\xi_{p}^{(Dk)} , \xi_{p}^{(Ak)} ) &\leftarrow  (\xi_{p}^{(Dk)} , \xi_{p}^{(Ak)} + \xi_{p}^{(Dk)} ) + \textrm{randG}\Big{(} \sigma^{2}\begin{bmatrix}
1 & 1/2 \\
1/2 & 4/3
\end{bmatrix} \Big{)} ,  \label{eq:inverse-SUM between data and ancilla}
\end{align} 
for even $k$ ($\hat{S}_{p}^{(k)}$ stabilizer measurement). Due to the noise before (or during) the homodyne measurement, the noise vectors are updated as 
 \begin{align}
\xi_{q}^{(Dk)} &\leftarrow \xi_{q}^{(Dk)}  + \textrm{randG}(\sigma^{2}), 
\nonumber\\
\xi_{p}^{(Dk)} &\leftarrow \xi_{p}^{(Dk)}  + \textrm{randG}(\sigma^{2}),
\nonumber\\
\xi_{q}^{(Ak)} &\leftarrow \xi_{q}^{(Ak)}  + \textrm{randG}(\sigma^{2} ), 
\nonumber\\
\xi_{p}^{(Ak)} &\leftarrow \xi_{p}^{(Ak)}  + \textrm{randG}(\sigma^{2} ), \label{eq:noise during homodyne measurement}
\end{align}
for all $k\in \lbrace 1,\cdots, d^{2} \rbrace$. Then, through the homodyne measurement and the error correction process, the data noise vectors are transformed as 
\begin{align}
&\xi_{q}^{(Dk)} \leftarrow \xi_{q}^{(Dk)}  - R_{\sqrt{\pi}} \big{(} \xi_{q}^{(Ak)} \big{)}, \label{eq:position correction}
\\
&\xi_{p}^{(Dk)} \leftarrow \xi_{p}^{(Dk)}  - R_{\sqrt{\pi}} \big{(} \xi_{p}^{(Ak)} \big{)} ,  \label{eq:momentum correction}
\end{align}
for odd $k$ (Eq.\ \eqref{eq:position correction}) and even $k$ (Eq.\ \eqref{eq:momentum correction}), respectively. $R_{s}(z)$ is defined as 
\begin{align}
R_{s}(z) \equiv z - s \Big{ \lfloor} \frac{z}{s}+ \frac{1}{2} \Big{\rfloor} . \label{eq:definition of the R function app}
\end{align}

In Step 2, $\hat{S}_{p}^{(k)}$ ($\hat{S}_{q}^{(k)}$) stabilizers are measured for odd (even) $k$ instead of $\hat{S}_{q}^{(k)}$ ($\hat{S}_{p}^{(k)}$). Thus, the noise vectors are updated similarly as in Eqs.\ \eqref{eq:noise during gkp preparation}--\eqref{eq:momentum correction}, except that Eqs.\ \eqref{eq:SUM between data and ancilla} and \eqref{eq:position correction} (Eqs, \eqref{eq:inverse-SUM between data and ancilla} and \eqref{eq:momentum correction}) are applied when $k$ is even (odd) instead of when $k$ is odd (even).

\subsubsection{Surface code stabilizer measurements}

Recall that there are $d'\equiv (d^{2}-1)/2$ $Z$-type and $X$-type syndrome GKP qubits that are used to measure the surface code stabilizers. Let $\xi_{q}^{Z}$ and $\xi_{p}^{Z}$ ($\xi_{q}^{X}$ and $\xi_{p}^{X}$) be the position and momentum noise vectors of the $Z$-type ($X$-type) syndrome GKP qubits, where 
\begin{align}
\xi_{q}^{Z} &= (\xi_{q}^{(Z1)},\cdots, \xi_{q}^{(Zd')}),
\nonumber\\
\xi_{p}^{Z} &= (\xi_{p}^{(Z1)},\cdots, \xi_{p}^{(Zd')}), 
\nonumber\\
\xi_{q}^{X} &= (\xi_{q}^{(X1)},\cdots, \xi_{q}^{(Xd')}),
\nonumber\\
\xi_{p}^{X} &= (\xi_{p}^{(X1)},\cdots, \xi_{p}^{(Xd')}). 
\end{align}
Note that the SUM and the inverse-SUM gates for the syndrome extraction are executed in four time steps (see Steps 3,4,5,6 in Fig.\ \ref{fig:surface code stabilizer measurement four time steps}). Let $Z_{1}(k),\cdots, Z_{4}(k)$ ($X_{1}(k),\cdots, X_{4}(k)$) be the label of the data GKP qubit that the $k^{\textrm{th}}$ $Z$-type ($X$-type) syndrome GKP qubit is coupled with in Steps $3,\cdots ,6$. (If the syndrome GKP qubit is idling, the value is set to be zero). For example when $d=3$, $Z_{1}(k)$ and $X_{1}(k)$ are given by 
\begin{alignat}{4}
&Z_{1}(1) = 1,\,\,  & Z_{1}(2) &= 3,\,\, & Z_{1}(3) &= 5,\,\, & Z_{1}(4) &= 0, 
\nonumber\\
&X_{1}(1) = 2,\,\,  & X_{1}(2) &= 0,\,\, & X_{1}(3) &= 8,\,\, & X_{1}(4) &= 6, 
\end{alignat}
representing the connectivity between the syndrome and the data GKP qubits in Step 3.  

\begin{figure}[t!]
\centering
\includegraphics[width=5.8in]{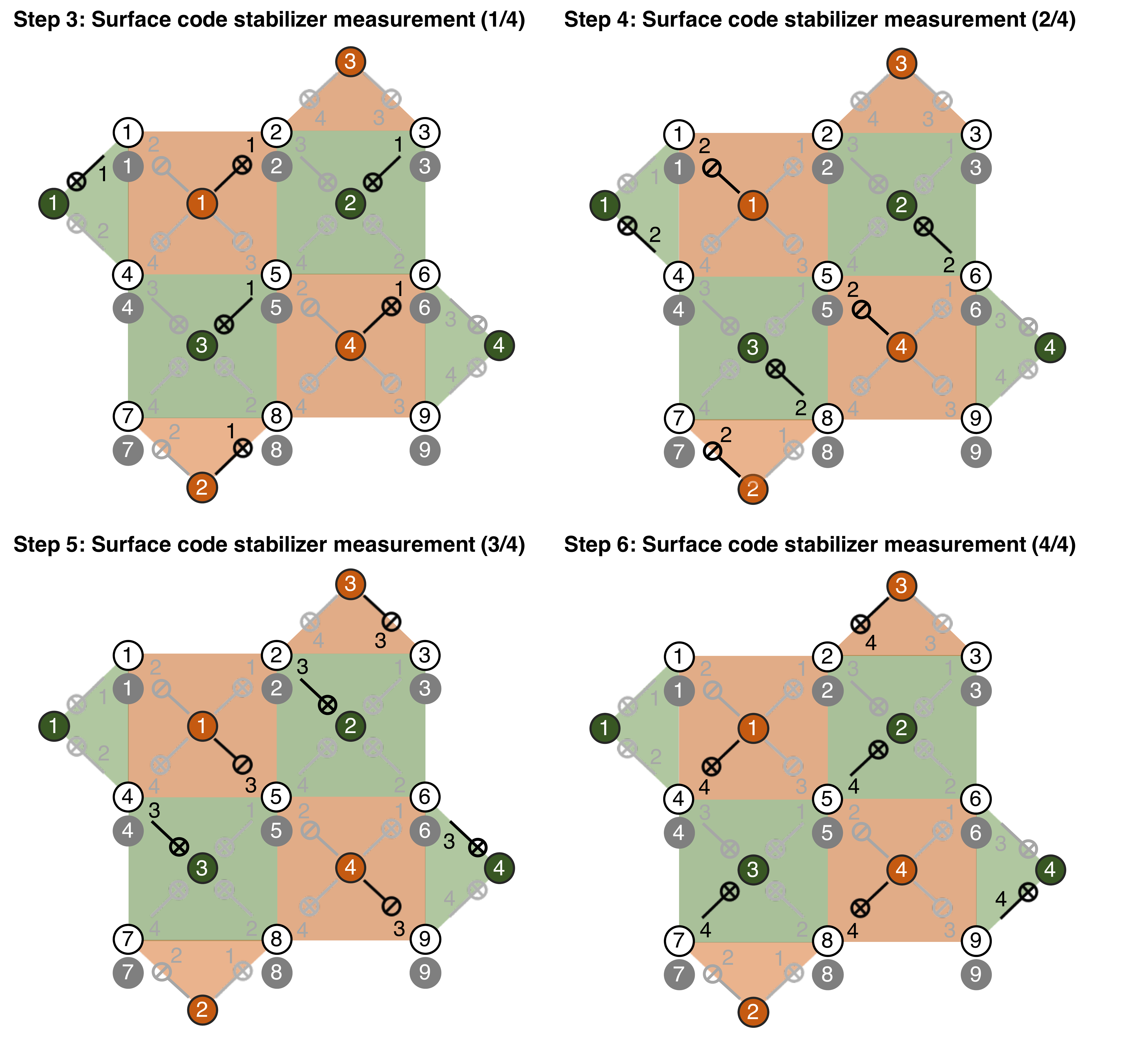}
\caption{[Fig.\ 9 in PRA \textbf{101}, 012316 (2020)] Measurement of the surface code stabilizers for $d=3$.  }
\label{fig:surface code stabilizer measurement four time steps}
\end{figure}

Due to the shift errors occurring during the preparation of GKP states, the noise vectors are updated as follows: 
\begin{align}
\xi_{q}^{(Dk)} &\leftarrow \xi_{q}^{(Dk)}  + \textrm{randG}(\sigma^{2}), 
\nonumber\\
\xi_{p}^{(Dk)} &\leftarrow \xi_{p}^{(Dk)}  + \textrm{randG}(\sigma^{2}),
\nonumber\\
\xi_{q}^{(Z\ell )} &\leftarrow  \textrm{randG}(\sigma_{\textrm{gkp}}^{2} ), 
\nonumber\\
\xi_{p}^{(Z\ell )} &\leftarrow  \textrm{randG}(\sigma_{\textrm{gkp}}^{2} ),
\nonumber\\
\xi_{q}^{(X\ell )} &\leftarrow  \textrm{randG}(\sigma_{\textrm{gkp}}^{2} ), 
\nonumber\\
\xi_{p}^{(X\ell )} &\leftarrow  \textrm{randG}(\sigma_{\textrm{gkp}}^{2} ), \label{eq:noise during gkp preparation surface code} 
\end{align}
for $k \in\lbrace 1,\cdots, d^{2} \rbrace$ and $\ell \in \lbrace 1,\cdots, d'  \rbrace$. In Step 3, the SUM gates transform the noise vectors as 
\begin{align}
(\xi_{q}^{(DZ_{1}(\ell))} , \xi_{q}^{(Z\ell)} ) &\leftarrow  (\xi_{q}^{(DZ_{1}(\ell))} , \xi_{q}^{(Z\ell)} +\xi_{q}^{(DZ_{1}(\ell))} ) + \textrm{randG}\Big{(} \sigma^{2}\begin{bmatrix}
1 & 1/2 \\
1/2 & 4/3
\end{bmatrix} \Big{)} , 
\nonumber\\
(\xi_{p}^{(DZ_{1}(\ell))} , \xi_{p}^{(Z\ell)} ) &\leftarrow  (\xi_{p}^{(DZ_{1}(\ell))} - \xi_{p}^{(Z\ell)} , \xi_{p}^{(Z\ell)} )  + \textrm{randG}\Big{(} \sigma^{2}\begin{bmatrix}
4/3 & -1/2 \\
-1/2 & 1
\end{bmatrix} \Big{)} , \label{eq:SUM gates between Z-type and data step 3}
\end{align}
for all $\ell \in \lbrace 1,\cdots, d' \rbrace$ if $Z_{1}(\ell)\neq 0$ and 
\begin{align}
\xi_{q}^{(Z\ell)} &\leftarrow \xi_{q}^{(Z\ell)} + \textrm{randG}(\sigma^{2}),
\nonumber\\
\xi_{p}^{(Z\ell)} &\leftarrow \xi_{p}^{(Z\ell)} + \textrm{randG}(\sigma^{2}), \label{eq:SUM gates between Z-type and data step 3 idling Z-type}
\end{align}
if $Z_{1}(\ell) =  0$. Similarly, 
\begin{align}
(\xi_{q}^{(DX_{1}(\ell))} , \xi_{q}^{(X\ell)} ) &\leftarrow  (\xi_{q}^{(DX_{1}(\ell))} + \xi_{q}^{(X\ell)}  , \xi_{q}^{(X\ell)}  )  + \textrm{randG}\Big{(} \sigma^{2}\begin{bmatrix}
4/3 & 1/2 \\
1/2 & 1
\end{bmatrix} \Big{)} , 
\nonumber\\
(\xi_{p}^{(DX_{1}(\ell))} , \xi_{p}^{(X\ell)} ) &\leftarrow  (\xi_{p}^{(DX_{1}(\ell))}  , \xi_{p}^{(X\ell)} -\xi_{p}^{(DX_{1}(\ell))} )  + \textrm{randG}\Big{(} \sigma^{2}\begin{bmatrix}
1 & -1/2 \\
-1/2 & 4/3
\end{bmatrix} \Big{)} , \label{eq:SUM gates between $X$-type and data step 3}
\end{align}
for all $\ell \in \lbrace 1,\cdots, d' \rbrace$ if $X_{1}(\ell)\neq 0$ and 
\begin{align}
\xi_{q}^{(X\ell)} &\leftarrow \xi_{q}^{(X\ell)} + \textrm{randG}(\sigma^{2}),
\nonumber\\
\xi_{p}^{(X\ell)} &\leftarrow \xi_{p}^{(X\ell)} + \textrm{randG}(\sigma^{2}), \label{eq:SUM gates between $X$-type and data step 3 idling $X$-type}
\end{align}
if $X_{1}(\ell) = 0$. Since there are idling data GKP qubits, the data noise vectors are updated as 
\begin{align}
\xi_{q}^{(Dk)} &\leftarrow \xi_{q}^{(Dk)} + \textrm{randG}(\sigma^{2}), 
\nonumber\\
\xi_{p}^{(Dk)} &\leftarrow \xi_{p}^{(Dk)} + \textrm{randG}(\sigma^{2}), \label{eq:idling data step 3}
\end{align}
only for $k$ such that $Z_{1}(\ell)\neq k$ and $X_{1}(\ell)\neq k$ for all $\ell \in \lbrace 1,\cdots ,d' \rbrace$.  

In Step 4, the SUM gates between the $Z$-type syndrome GKP qubits and data GKP qubits transform the noise vectors in the same way as in Eqs.\ \eqref{eq:SUM gates between Z-type and data step 3} and \eqref{eq:SUM gates between Z-type and data step 3 idling Z-type} except that $Z_{1}(\ell)$ is replaced by $Z_{2}(\ell)$. However, since the $X$-type syndrome GKP qubits are coupled with the data GKP qubits through inverse-SUM gates instead of SUM gates, the noise vectors are then updated as 
\begin{align}
(\xi_{q}^{(DX_{2}(\ell))} , \xi_{q}^{(X\ell)} ) &\leftarrow  (\xi_{q}^{(DX_{2}(\ell))} - \xi_{q}^{(X\ell)}  , \xi_{q}^{(X\ell)}  ) + \textrm{randG}\Big{(} \sigma^{2}\begin{bmatrix}
4/3 & -1/2 \\
-1/2 & 1
\end{bmatrix} \Big{)} , 
\nonumber\\
(\xi_{p}^{(DX_{2}(\ell))} , \xi_{p}^{(X\ell)} ) &\leftarrow  (\xi_{p}^{(DX_{2}(\ell))}  , \xi_{p}^{(X\ell)} +\xi_{p}^{(DX_{2}(\ell))} )  + \textrm{randG}\Big{(} \sigma^{2}\begin{bmatrix}
1 & 1/2 \\
1/2 & 4/3
\end{bmatrix} \Big{)} , \label{eq:inverse-SUM gates between $X$-type and data step 4}
\end{align}
for all $\ell \in \lbrace 1,\cdots, d' \rbrace$ if $X_{2}(\ell)\neq 0$ and 
\begin{align}
\xi_{q}^{(X\ell)} &\leftarrow \xi_{q}^{(X\ell)} + \textrm{randG}(\sigma^{2}),
\nonumber\\
\xi_{p}^{(X\ell)} &\leftarrow \xi_{p}^{(X\ell)} + \textrm{randG}(\sigma^{2}), \label{eq:inverse-SUM gates between $X$-type and data step 4 idling Z-type}
\end{align}
if $X_{2}(\ell) = 0$, instead of as in Eqs.\ \eqref{eq:SUM gates between $X$-type and data step 3} and \eqref{eq:SUM gates between $X$-type and data step 3 idling $X$-type}. Due to the idling data GKP qubits, the noise vectors are further updated as in Eq.\ \eqref{eq:idling data step 3} only for $k$ such that $Z_{2}(\ell)\neq k$ and $X_{2}(\ell)\neq k$ for all $\ell \in \lbrace 1,\cdots ,d' \rbrace$. 

Note that in Step 5 and Step 6, the $X$-type syndrome GKP qubits are coupled with the data GKP qubits via inverse-SUM gates and SUM gates, respectively. Therefore, in Step 5, the noise vectors are updated in the same way as in Step 4, except that $Z_{2}(\ell)$ and $Z_{2}(\ell)$ are replaced by $Z_{3}(\ell)$ and $X_{3}(\ell)$. On the other hand, in Step 6, the noise vectors are updated in the same way as in Step 3, except that $Z_{1}(\ell)$ and $X_{1}(\ell)$ are replaced by $Z_{4}(\ell)$ and $X_{4}(\ell)$. Due to the noise before (or during) the homodyne measurement, the noise vectors are updated as 
\begin{align}
\xi_{q}^{(Dk)} &\leftarrow \xi_{q}^{(Dk)}  + \textrm{randG}(\sigma^{2}), 
\nonumber\\
\xi_{p}^{(Dk)} &\leftarrow \xi_{p}^{(Dk)}  + \textrm{randG}(\sigma^{2}),
\nonumber\\
\xi_{q}^{(Z\ell )} &\leftarrow \xi_{q}^{(Z\ell )} +  \textrm{randG}(\sigma^{2} ), 
\nonumber\\
\xi_{p}^{(Z\ell )} &\leftarrow \xi_{p}^{(Z\ell )} +  \textrm{randG}(\sigma^{2} ), 
\nonumber\\
\xi_{q}^{(X\ell )} &\leftarrow \xi_{q}^{(X\ell )} +  \textrm{randG}(\sigma^{2} ), 
\nonumber\\
\xi_{p}^{(X\ell )} &\leftarrow \xi_{p}^{(X\ell )} +  \textrm{randG}(\sigma^{2} ),  \label{eq:noise during homodyne measurement surface code}
\end{align}
for all $k\in\lbrace 1,\cdots, d^{2} \rbrace$ and $\ell\in\lbrace 1,\cdots, d' \rbrace$. Then, through the homodyne measurement, we measure $\xi_{q}^{(Z\ell )}$ and $\xi_{p}^{(X\ell )}$ modulo $2\sqrt{\pi}$ and assign stabilizer values as 
\begin{align}
\hat{S}_{Z}^{(\ell)} &\leftarrow \begin{cases}
+1 & | R_{\sqrt{2\pi}}( \xi_{q}^{(Z\ell )} )  |  \le \sqrt{\pi}/2  \\
-1 & | R_{\sqrt{2\pi}}( \xi_{q}^{(Z\ell )} )  |  > \sqrt{\pi}/2 
\end{cases}, 
\nonumber\\
\hat{S}_{X}^{(\ell)} &\leftarrow \begin{cases}
+1 & | R_{\sqrt{2\pi}}( \xi_{p}^{(X\ell )} )  |  \le \sqrt{\pi}/2  \\
-1 & | R_{\sqrt{2\pi}}( \xi_{p}^{(X\ell )} )  |  > \sqrt{\pi}/2 
\end{cases}, \label{eq:stabilizer value assignment}
\end{align}
for all $\ell \in \lbrace 1,\cdots, d' \rbrace$. $R_{s}(z)$ is defined in Eq.\ \eqref{eq:definition of the R function app}.

\subsubsection{Construction of three-dimensional space-time graphs}

Now we construct 3D space-time graphs to which we will apply a minimum-weight perfect matching decoding algorithm. The overall structure is as follows: Since each stabilizer measurement can be faulty, we repeat the noisy stabilizer measurement cycle $d$ times. Then, we perform another round of ideal stabilizer measurement cycle assuming that all circuit elements and supplied GKP states are noiseless. The reason for adding the extra noiseless measurement cycle is to ensure that the noisy states are restored back to the code space so we can later conveniently determine whether the error correction succeed or not. Then, the $Z$-type and the $X$-type 3D space-time graphs are constructed to represent the outcomes of $d+1$ rounds of stabilizer measurement cycles. These space-time graphs will then be used to decode the $Z$-type and the $X$-type syndrome measurement outcomes.

\begin{figure}[t!]
\centering
\includegraphics[width=5.8in]{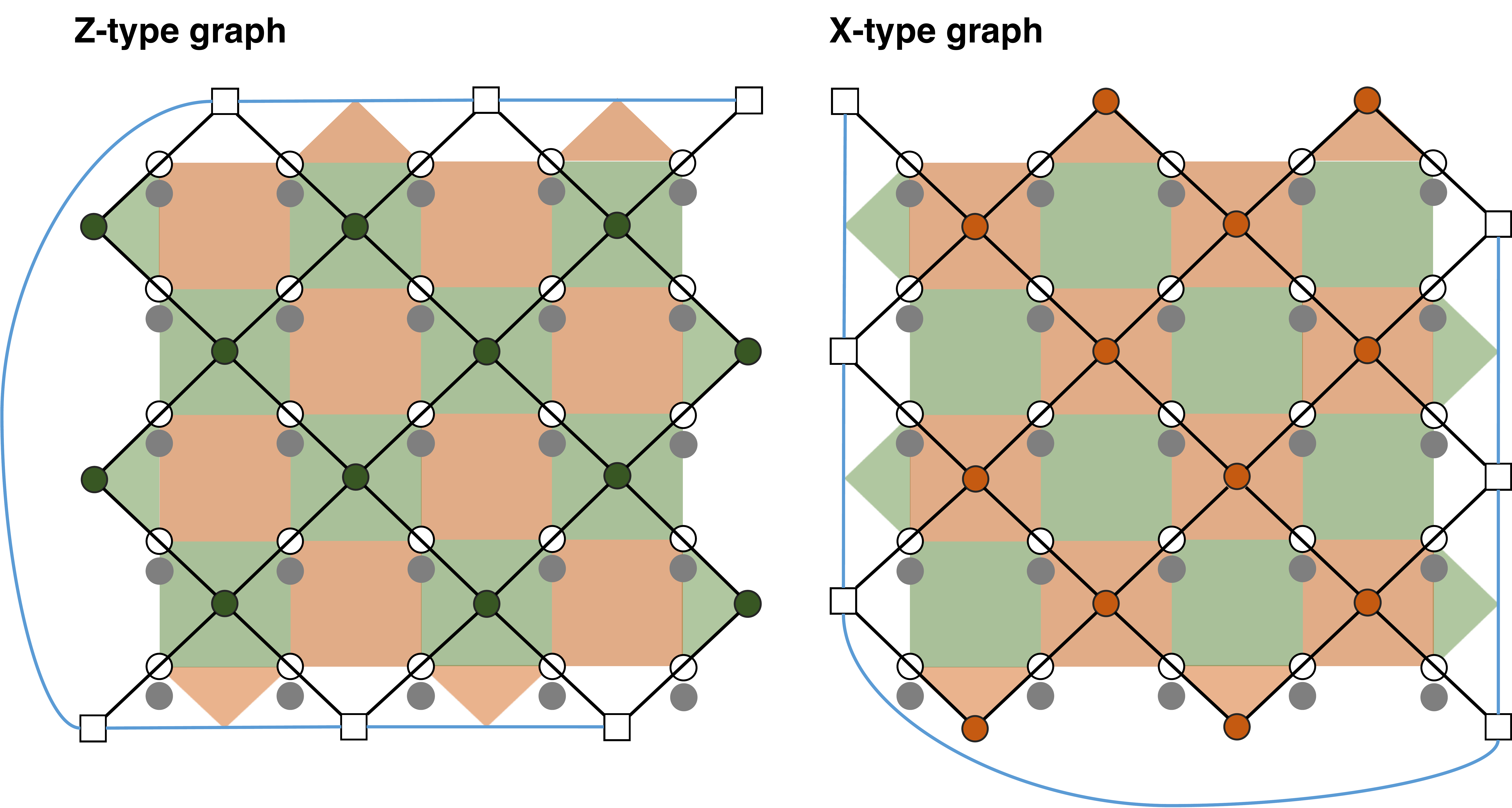}
\caption{[Fig.\ 10 in PRA \textbf{101}, 012316 (2020)] $Z$-type and $X$-type 2D space graphs for the surface-GKP code with $d=5$. These 2D graphs will be stacked up to construct $Z$-type and $X$-type 3D space-time graphs. }
\label{fig:MWPM graphs} 
\end{figure}

We first construct the $Z$-type and $X$-type 2D space graphs as in Fig.\ \ref{fig:MWPM graphs}. Each bulk vertex of the 2D space graph corresponds to a syndrome GKP qubit and each bulk edge corresponds to a data GKP qubit. Note also that there are boundary vertices (squares in Fig.\ \ref{fig:MWPM graphs}) that do not correspond to any syndrome GKP qubits and the corresponding boundary edges (blue lines in Fig.\ \ref{fig:MWPM graphs}) that are not associated with any data GKP qubits. Therefore, the boundary edge weighs are always set to be zero. 

Then, we associate each 2D space graph with one round of stabilizer measurement cycle. So, there are $d+1$ 2D space graphs and these 2D space graphs are stacked up together by introducing vertical edges that connect the same vertices in two adjacent 2D space graphs (corresponding to two adjacent stabilizer measurement rounds). Below, we discuss in detail how the bulk edge weights are assigned. 

We start by initializing the data position and momentum noise vectors to a zero vector: 
\begin{align}
\xi_{q}^{D} &= (\xi_{q}^{(D1)}, \cdots, \xi_{q}^{(Dd^{2})}) = (0,\cdots, 0), 
\nonumber\\
\xi_{p}^{D} &= (\xi_{p}^{(D1)}, \cdots, \xi_{p}^{(Dd^{2})}) = (0,\cdots, 0). 
\end{align} 
These data noise vectors are fed into Step 1 of GKP-stabilizer measurement as described in Eqs.\ \eqref{eq:noise during gkp preparation}--\eqref{eq:noise during homodyne measurement}. Let $w_{Z}^{H}(k)$ and $w_{Z}^{H}(k)$ be the horizontal edge weights of the $Z$-type and $X$-type graphs corresponding to the $k^{\textrm{th}}$ data GKP qubit ($k\in\lbrace 1,\cdots, d^{2} \rbrace$). Then, while updating the data position and momentum noise vectors as prescribed in Eqs.\ \eqref{eq:position correction} and \eqref{eq:momentum correction}, we assign the horizontal edge weights as
\begin{align}
w_{Z}^{H}(k) &\leftarrow \begin{cases}
-\log_{2}\big{(} p[ \sqrt{\sigma_{\textrm{gkp}}^{2} + \frac{10}{3}\sigma^{2}} ]\big{(} R_{\sqrt{\pi}}( \xi_{q}^{(Ak)} ) \big{)} \big{)} & \textrm{round }1 \\
-\log_{2}\big{(} p[ \sigma_{Z}^{H}(k;d) ]\big{(} R_{\sqrt{\pi}}( \xi_{q}^{(Ak)} ) \big{)} \big{)} &  \textrm{round }2\textrm{ to round }d\\
-\log_{2}\big{(} p[ \sqrt{ (\sigma_{Z}^{H}(k;d) )^{2} - \sigma_{\textrm{gkp}}^{2} - \frac{10}{3}\sigma^{2}  } ]\big{(} R_{\sqrt{\pi}}( \xi_{q}^{(Ak)} )   \big{)} \big{)} & \textrm{round }d+1
\end{cases}, 
\end{align}
for odd $k$ and 
\begin{align}
w_{X}^{H}(k) &\leftarrow \begin{cases}
-\log_{2}\big{(} p[ \sqrt{\sigma_{\textrm{gkp}}^{2} + \frac{10}{3}\sigma^{2}} ]\big{(} R_{\sqrt{\pi}}( \xi_{p}^{(Ak)} ) \big{)} \big{)} & \textrm{round }1 \\
-\log_{2}\big{(} p[ \sigma_{X}^{H}(k;d) ]\big{(} R_{\sqrt{\pi}}( \xi_{p}^{(Ak)} ) \big{)} \big{)} &  \textrm{round }2\textrm{ to round }d\\
-\log_{2}\big{(} p[ \sqrt{ (\sigma_{X}^{H}(k;d) )^{2} - \sigma_{\textrm{gkp}}^{2} - \frac{10}{3}\sigma^{2}  } ]\big{(} R_{\sqrt{\pi}}( \xi_{p}^{(Ak)} )   \big{)} \big{)} & \textrm{round }d+1
\end{cases}, 
\end{align}
for even $k$ if the additional GKP information is used. Here, we use $\xi_{q}^{(Ak)}$ and $\xi_{p}^{(Ak)}$ that are obtained after applying Eq.\ \eqref{eq:noise during homodyne measurement}. $p_{\textrm{err}}(\sigma)$ and $p[\sigma](z)$ are defined in Eqs.\ \eqref{eq:definition of perr} and \eqref{eq:Definition of the p function} and $R_{s}(z)$ is defined in Eq.\ \eqref{eq:definition of the R function app}. On the other hand, if the additional GKP information is not used, we assign the horizontal edge weights as 
\begin{align}
w_{Z}^{H}(k) &\leftarrow \begin{cases}
-\log_{2}\big{(} p_{\textrm{err}} \big{(}  \sqrt{\sigma_{\textrm{gkp}}^{2} + \frac{10}{3}\sigma^{2}} \big{)} \big{)} & \textrm{round }1 \\
-\log_{2}\big{(} p_{\textrm{err}} \big{(} \sigma_{Z}^{H}(k;d) \big{)} \big{)} &  \textrm{round }2\textrm{ to round }d\\
-\log_{2}\big{(} p_{\textrm{err}} \big{(} \sqrt{ (\sigma_{Z}^{H}(k;d) )^{2} - \sigma_{\textrm{gkp}}^{2} - \frac{10}{3}\sigma^{2}  } \big{)} \big{)} & \textrm{round }d+1
\end{cases}, 
\end{align}
for odd $k$ and 
\begin{align}
w_{X}^{H}(k) &\leftarrow \begin{cases}
-\log_{2}\big{(} p_{\textrm{err}} \big{(} \sqrt{\sigma_{\textrm{gkp}}^{2} + \frac{10}{3}\sigma^{2}} \big{)} \big{)}  & \textrm{round }1 \\
-\log_{2}\big{(} p_{\textrm{err}} \big{(} \sigma_{X}^{H}(k;d) \big{)} \big{)}  &  \textrm{round }2\textrm{ to round }d\\
-\log_{2}\big{(} p_{\textrm{err}} \big{(} \sqrt{ (\sigma_{X}^{H}(k;d) )^{2} - \sigma_{\textrm{gkp}}^{2} - \frac{10}{3}\sigma^{2}  } \big{)} \big{)} & \textrm{round }d+1
\end{cases}, 
\end{align}
for even $k$. Here, $\sigma_{Z}^{H}(k;d)$ and $\sigma_{X}^{H}(k;d)$ are defined as 
\begin{align}
\sigma_{Z}^{H}(k;d) &\equiv \begin{cases}
\begin{cases}
\sqrt{  4\sigma_{\textrm{gkp}}^{2} + \frac{52}{3}\sigma^{2} }  & \frac{k-1}{d}\in 2\mathbb{Z} \\
\sqrt{  4\sigma_{\textrm{gkp}}^{2} + \frac{58}{3}\sigma^{2} }  & \frac{k-1}{d}\in 2\mathbb{Z}+1
\end{cases} & k\in d\mathbb{Z} +1 \\
\begin{cases}
\sqrt{  4\sigma_{\textrm{gkp}}^{2} + \frac{55}{3}\sigma^{2} }  & \frac{k}{d}\in 2\mathbb{Z}+1 \\
\sqrt{  4\sigma_{\textrm{gkp}}^{2} + \frac{49}{3}\sigma^{2} }  & \frac{k}{d}\in 2\mathbb{Z}
\end{cases} & k\in d\mathbb{Z} \\
\sqrt{  5\sigma_{\textrm{gkp}}^{2} + \frac{59}{3}\sigma^{2} } & \textrm{otherwise}
\end{cases}, 
\nonumber\\
\sigma_{X}^{H}(k;d) &\equiv \begin{cases}
\begin{cases}
\sqrt{  4\sigma_{\textrm{gkp}}^{2} + \frac{49}{3}\sigma^{2} } & k\in 2\mathbb{Z} + 1  \\
\sqrt{  4\sigma_{\textrm{gkp}}^{2} + \frac{55}{3}\sigma^{2} } & k\in 2\mathbb{Z} 
\end{cases} &  k\in \lbrace 1,\cdots, d \rbrace \\
\begin{cases}
\sqrt{  4\sigma_{\textrm{gkp}}^{2} + \frac{58}{3}\sigma^{2} } & k\in 2\mathbb{Z} + 1  \\
\sqrt{  4\sigma_{\textrm{gkp}}^{2} + \frac{52}{3}\sigma^{2} } & k\in 2\mathbb{Z} 
\end{cases} &  k\in \lbrace d^{2}-d+1 ,\cdots, d^{2} \rbrace \\
\sqrt{  5\sigma_{\textrm{gkp}}^{2} + \frac{59}{3}\sigma^{2} }&\textrm{otherwise}
\end{cases}. 
\end{align}
We remark that we have carefully determined $\sigma_{Z}^{H}(k;d)$ and $\sigma_{X}^{H}(k;d)$ by thoroughly keeping tracking of how the circuit-level noise propagates   

Then, moving on to Step 2 of GKP-stabilizer measurement, we update the noise vectors as described in Eq.\ \eqref{eq:noise during gkp preparation}--\eqref{eq:noise during homodyne measurement}, except that Eqs.\ \eqref{eq:SUM between data and ancilla} and \eqref{eq:inverse-SUM between data and ancilla} are applied for even and odd $k$ (instead of odd and even $k$), respectively. Similarly as above, while updating the data position and momentum noise vectors as prescribed in Eqs.\ \eqref{eq:position correction} and \eqref{eq:momentum correction}, we assign the horizontal edge weights as 
\begin{align}
w_{Z}^{H}(k) &\leftarrow \begin{cases}
-\log_{2}\big{(} p[ \sqrt{\sigma_{2\textrm{gkp}}^{2} + \frac{20}{3}\sigma^{2}} ]\big{(} R_{\sqrt{\pi}}( \xi_{q}^{(Ak)} ) \big{)} \big{)} & \textrm{round }1 \\
-\log_{2}\big{(} p[ \sigma_{Z}^{H}(k;d) ]\big{(} R_{\sqrt{\pi}}( \xi_{q}^{(Ak)} ) \big{)} \big{)} &  \textrm{round }2\textrm{ to round }d\\
-\log_{2}\big{(} p[ \sqrt{ (\sigma_{Z}^{H}(k;d) )^{2} - 2\sigma_{\textrm{gkp}}^{2} - \frac{20}{3}\sigma^{2}  } ]\big{(} R_{\sqrt{\pi}}( \xi_{q}^{(Ak)} )   \big{)} \big{)} & \textrm{round }d+1
\end{cases}, 
\end{align}
for even $k$ and 
\begin{align}
w_{X}^{H}(k) &\leftarrow \begin{cases}
-\log_{2}\big{(} p[ \sqrt{2\sigma_{\textrm{gkp}}^{2} + \frac{20}{3}\sigma^{2}} ]\big{(} R_{\sqrt{\pi}}( \xi_{p}^{(Ak)} ) \big{)} \big{)} & \textrm{round }1 \\
-\log_{2}\big{(} p[ \sigma_{X}^{H}(k;d) ]\big{(} R_{\sqrt{\pi}}( \xi_{p}^{(Ak)} ) \big{)} \big{)} &  \textrm{round }2\textrm{ to round }d\\
-\log_{2}\big{(} p[ \sqrt{ (\sigma_{X}^{H}(k;d) )^{2} - 2\sigma_{\textrm{gkp}}^{2} - \frac{20}{3}\sigma^{2}  } ]\big{(} R_{\sqrt{\pi}}( \xi_{p}^{(Ak)} )   \big{)} \big{)} & \textrm{round }d+1
\end{cases}, 
\end{align}
for odd $k$ if the additional GKP information is used. Here, we use $\xi_{q}^{(Ak)}$ and $\xi_{p}^{(Ak)}$ that are obtained after applying Eq.\ \eqref{eq:noise during homodyne measurement}. If on the other hand the additional GKP information is not used, we assign the horizontal edge weights as 
\begin{align}
w_{Z}^{H}(k) &\leftarrow \begin{cases}
-\log_{2}\big{(} p_{\textrm{err}} \big{(}  \sqrt{2\sigma_{\textrm{gkp}}^{2} + \frac{20}{3}\sigma^{2}} \big{)} \big{)} & \textrm{round }1 \\
-\log_{2}\big{(} p_{\textrm{err}} \big{(} \sigma_{Z}^{H}(k;d) \big{)} \big{)} &  \textrm{round }2\textrm{ to round }d\\
-\log_{2}\big{(} p_{\textrm{err}} \big{(} \sqrt{ (\sigma_{Z}^{H}(k;d) )^{2} - 2\sigma_{\textrm{gkp}}^{2} - \frac{20}{3}\sigma^{2}  } \big{)} \big{)} & \textrm{round }d+1
\end{cases}, 
\end{align}
for even $k$ and 
\begin{align}
w_{X}^{H}(k) &\leftarrow \begin{cases}
-\log_{2}\big{(} p_{\textrm{err}} \big{(} \sqrt{2\sigma_{\textrm{gkp}}^{2} + \frac{20}{3}\sigma^{2}} \big{)} \big{)}  & \textrm{round }1 \\
-\log_{2}\big{(} p_{\textrm{err}} \big{(} \sigma_{X}^{H}(k;d) \big{)} \big{)} &  \textrm{round }2\textrm{ to round }d\\
-\log_{2}\big{(} p_{\textrm{err}} \big{(} \sqrt{ (\sigma_{X}^{H}(k;d) )^{2} - 2\sigma_{\textrm{gkp}}^{2} - \frac{20}{3}\sigma^{2}  } \big{)} \big{)} & \textrm{round }d+1
\end{cases}, 
\end{align}
for odd $k$. This way, all the horizontal edge weights are assigned. 

Vertical edge weights are assigned during surface code stabilizer measurements: We follow Steps 3--6 of surface code stabilizer measurements and update the noise vectors as described in Eqs.\ \eqref{eq:noise during gkp preparation surface code}--\eqref{eq:noise during homodyne measurement surface code}. Let $w_{Z}^{V}(\ell)$ and $w_{X}^{V}(\ell)$ be the vertical edge weights of the $Z$-type and $X$-type 3D space-time graphs corresponding to the $\ell^{\textrm{th}}$ $Z$-type and $X$-type syndrome qubit. Then, after assigning the stabilizer values as in Eq.\ \eqref{eq:stabilizer value assignment}, we further assign the vertical edge weights as follows:  
\begin{align}
w_{Z}^{V}(\ell) &\leftarrow  -\log_{2}\big{(} p[\sigma_{Z}^{V}(\ell;d)] \big{(} R_{\sqrt{\pi}}( \xi_{q}^{(Zk)} ) \big{)} \big{)}, 
\nonumber\\
w_{X}^{V}(\ell) &\leftarrow  -\log_{2}\big{(} p[\sigma_{X}^{V}(\ell;d)] \big{(} R_{\sqrt{\pi}}( \xi_{p}^{(Xk)} ) \big{)} \big{)}, 
\end{align}
while in rounds $1$ to $d$ for all $\ell \in\lbrace 1,\cdots, d' = (d^{2}-1)/2 \rbrace$, if the additional GKP information is used. Here, we use $\xi_{q}^{(Zk)}$ and $\xi_{p}^{(Xk)}$ that are obtained after applying Eq.\ \eqref{eq:noise during homodyne measurement surface code} and $\sigma_{Z}^{V}(\ell;d)$ and $\sigma_{X}^{V}(\ell;d)$ are defined as 
\begin{align}
\sigma_{Z}^{V}(\ell;d)&= \begin{cases}
\sqrt{ 4\sigma_{\textrm{gkp}}^{2} + \frac{56}{3}\sigma^{2}  }  & \ell\in 2d'' \mathbb{Z} + 1 \\
\sqrt{ 7\sigma_{\textrm{gkp}}^{2} + \frac{107}{3}\sigma^{2}  }  & \ell\in 2d'' \mathbb{Z} + d''+ 1 \\
\sqrt{ 4\sigma_{\textrm{gkp}}^{2} + \frac{73}{3}\sigma^{2}  }  & \ell\in 2d'' \mathbb{Z} \\
\sqrt{ 7\sigma_{\textrm{gkp}}^{2} + \frac{116}{3}\sigma^{2}  }&\textrm{otherwise}
\end{cases}, 
\nonumber\\
\sigma_{X}^{V}(\ell;d)&= \begin{cases}
\sqrt{  4\sigma_{\textrm{gkp}}^{2} + \frac{56}{3}\sigma^{2}  } & \ell\in 2d''\mathbb{Z} + d''  \\
\sqrt{  4\sigma_{\textrm{gkp}}^{2} + \frac{73}{3}\sigma^{2}  } & \ell\in 2d''\mathbb{Z} +d''+ 1 \\
\sqrt{  7\sigma_{\textrm{gkp}}^{2} + \frac{107}{3}\sigma^{2}  } & \ell\in 2d''\mathbb{Z}  \\
\sqrt{  7\sigma_{\textrm{gkp}}^{2} + \frac{116}{3}\sigma^{2}  } & \textrm{otherwise}\\
\end{cases}. 
\end{align}
Similarly as above, we have carefully determined $\sigma_{Z}^{V}(\ell;d)$ and $\sigma_{X}^{V}(\ell;d)$ by thoroughly keeping track of how the circuit-level noise propagates. If on the other hand the additional GKP information is not used, we assign the vertical edge weights as 
\begin{align}
w_{Z}^{V}(\ell) &\leftarrow  -\log_{2}\big{(} p_{\textrm{err}} \big{(} \sigma_{Z}^{V}(\ell;d) \big{)} \big{)}, 
\nonumber\\
w_{X}^{V}(\ell) &\leftarrow  -\log_{2}\big{(} p_{\textrm{err}} \big{(} \sigma_{X}^{V}(\ell;d) \big{)}  \big{)}. 
\end{align}
This way, all the vertical edge weights are assigned and thus we are left with the complete $Z$-type and $X$-type 3D space-time graphs with all the horizontal and vertical edge weights assigned.  

\subsubsection{Minimum-weight perfect matching} 

Now, given the 3D space-time graphs, the correction is determined by using a minimum-weight perfect matching decoding algorithm. More specifically, we do the following: 
\begin{enumerate}
\item Simulate $d$ rounds of noisy stabilizer measurements followed by one round of ideal stabilizer measurements and construct the $Z$-type and $X$-type 3D space-time graphs as described above. 
\item Highlight all vertices whose assigned stabilizer value is changed from the previous round. If the number of highlighted vertices is odd, highlight a boundary vertex. Thus, the number of highlighted vertices is always even. 
\item For all pairs of highlighted $Z$-type ($X$-type) vertices, find the path with the minimum total weight. Then, save the minimum total weight and all edges in the path. Then, we are left with a $Z$-type ($X$-type) complete graph of highlighted vertices, where the weight of the edge $(v,w)$ is given by the minimum total weight of the path that connects $v$ and $w$. 
\item Apply the minimum-weight perfect matching algorithm \cite{Edmonds1965,Edmonds1965b} on the $Z$-type ($X$-type) complete graph of highlighted vertices. For all matched pairs of $Z$-type ($X$-type) vertices, highlight all the $Z$-type ($X$-type) edges contained in the path that connects the matched vertices. 
\item Suppress all vertical edges and project the $Z$-type ($X$-type) 3D space-time graph onto the 2D plane. For each $Z$-type ($X$-type) horizontal edge, count how many times it was highlighted. If it is highlighted even times, do nothing. Otherwise, apply the Pauli correction operator $\hat{X}_{\textrm{gkp}}$ ($\hat{Z}_{\textrm{gkp}}$) to the corresponding data GKP qubit. Equivalently, update the quadrature noise as $\xi_{q}^{(Dk)}\leftarrow \xi_{q}^{(Dk)} +\sqrt{\pi}$ ($\xi_{p}^{(Dk)}\leftarrow \xi_{p}^{(Dk)} +\sqrt{\pi}$). 
\end{enumerate}

Once the correction is done, we are left with the data noise vectors $\xi_{q}^{D} = (\xi_{q}^{(D1)},\cdots ,\xi_{q}^{(Dd^{2})})$ and $\xi_{p}^{D}=(\xi_{p}^{(D1)},\cdots ,\xi_{p}^{(Dd^{2})})$. Define 
\begin{align}
\textrm{total}(\xi_{q}^{D}) &\equiv \frac{1}{\sqrt{\pi}} \sum_{k=1}^{d^{2}}\xi_{q}^{(Dk)} , 
\nonumber\\
\textrm{total}(\xi_{p}^{D}) &\equiv \frac{1}{\sqrt{\pi}} \sum_{k=1}^{d^{2}}\xi_{p}^{(Dk)} . 
\end{align}
Then, we determine that there is 
\begin{align}
\begin{cases}
\textrm{logical }X & \textrm{total}(\xi_{q}^{D}) = \textrm{odd }\& \textrm{ total}(\xi_{p}^{D}) =\textrm{even}\\
\textrm{logical }Z & \textrm{total}(\xi_{q}^{D})= \textrm{even }\& \textrm{ total}(\xi_{p}^{D}) =\textrm{odd}\\
\textrm{logical }Y & \textrm{total}(\xi_{q}^{D}) = \textrm{odd }\& \textrm{ total}(\xi_{p}^{D}) =\textrm{odd}
\end{cases}
\end{align} 
error. Otherwise if both $\textrm{total}(\xi_{q}^{D})$ and $\textrm{total}(\xi_{p}^{D})$ are even, there is no logical error. 

We use the Monte Carlo method to compute the logical $X,Y,Z$ error probability. In Fig.\ \ref{fig:main results}, we plot the logical $X$ error probability obtained from 10,000--100,000 samples, which is the same as the logical $Z$ error probability. The number of samples is determined such that statistical fluctuations are negligible.

\section{Open questions}
\label{section:Open questions fault-tolerant bosonic quantum error correction}

Note that we modeled noisy GKP states by applying an incoherent random shift error $\mathcal{N}[B_{2}](\sigma_{\textrm{gkp}})$, similarly as in Refs.\ \cite{Menicucci2014,Fukui2018a,Fukui2019}. While we use this noise model for theoretical convenience and justify it by using a twirling argument (similar to the justification of a depolarizing error model for multi-qubit QEC), we remark that it is not practical to use the twirling operation in realistic situations. This is because
the twirling operation increases the average photon number of the GKP states, whereas in practice it is desirable to keep the photon number bounded below a certain cutoff. Therefore, an interesting direction for future work would be to see if one can implement the stabilizer measurements in Figs.\ \ref{fig:GKP qubit fundamentals}, \ref{fig:Surface-GKP codes}, and \ref{fig:Surface code stabilizer measurement} in a manner that prevents the average photon number from diverging as we repeat the stabilizer measurements. It will be especially crucial to keep the average photon number under control when each bosonic mode suffers from dephasing errors and/or undesired nonlinear interactions such as Kerr nonlinearities. 

Given such stabilizer measurement schemes, it will be ideal to analyze the performance of the surface-GKP code by assuming the noise model with a coherent random shift errors, i.e., 
\begin{align}
|\psi_{\textrm{gkp}}^{\Delta}\rangle &\propto \exp[ -\Delta^{2} \hat{n} ] |\psi_{\textrm{gkp}}\rangle . 
\end{align} 
A modular subsystem decomposition formalism \cite{Pantaleoni2019} has recently been proposed as an efficient way to analyze finitely-squeezed approximate GKP codes. While this formalism has been used to analyze various single-mode schemes involving finitely-squeezed GKP states \cite{Tzitrin2019,Wan2019}, it has not been applied to analyze large-scale and fault-tolerant concatenated GKP codes. Thus, it will be interesting to see if the modular subsystem decomposition formalism proves to be useful for analyzing fault-tolerance properties of the GKP codes at scale. 

We also remark that minimum-weight perfect matching is not as good as the maximum likelihood estimation \cite{Dennis2002}. In fact, the work in Ref.\ \cite{Vuillot2019} used the maximum-likelihood decoding instead of the minimum-weight perfect matching decoding to analyze the performance of the toric-GKP code. However, the maximum-likelihood decoding algorithm was not applied to a full circuit-level noise model in Ref.\ \cite{Vuillot2019}. In our work \cite{Noh2020}, on the other hand, we have used a suboptimal minimum-weight perfect matching decoding algorithm and considered a full circuit-level noise model. Thus, it will be interesting to see if the maximum-likelihood decoding algorithm can be adapted to the surface-GKP code for a full circuit-level noise model.   

Lastly, recall that concatenating the GKP with a multi-qubit code is not the most general approach. Instead, it is possible to define a multi-mode GKP code based on a general symplectic lattice that cannot be decomposed into a single-mode GKP code and a multi-qubit error-correcting code \cite{Gottesman2001,Harrington2001,Harrington2004}. Hence, it will be an interesting research direction to see if we can benefit from such a fundamentally different approach for scaling up the GKP code.

\chapter{Quantum capacity of Gaussian thermal-loss channels}
\label{chapter:Quantum capacities of bosonic Gaussian channels}

In this chapter, I will present my contributions to the field of quantum communication theory. The main quantities of interest are quantum capacities of Gaussian thermal-loss channels. This chapter will be based on my works on upper \cite{Noh2019} and lower \cite{Noh2020a} bounds of the Gaussian thermal-loss channel capacity. The work in Ref.\ \cite{Noh2019} was done in collaboration with Dr.\ Victor Albert and Professor Liang Jiang, and the work in Ref.\ \cite{Noh2020a} was done in collaboration with Professors Stefano Pirandola and Liang Jiang.

Recall the benchmarking results for various single-mode bosonic codes in Fig.\ \ref{fig:benchmarking single-mode bosonic codes} and note that the advantage of using an error-corrected bosonic qubit disappears if the excitation loss probability becomes too large ($\gamma \gtrsim 0.44$). This shows a possibility that there may be errors that are too noisy to be corrected no matter how good an error correction scheme is. Even more generally, it indicates that there will be some fundamental limits on the efficiency of quantum error correction schemes, which are determined solely by how noisy an error channel is to begin with. The framework of quantum communication theory provides a way to examine such fundamental aspects of quantum error correction. Specifically, the notion of quantum capacity is particularly useful for understanding the fundamental performance limits of quantum error correction. 

The main goal of this chapter is to investigate the fundamental aspects of bosonic quantum error correction via quantum communication-theoretic tools. I will focus on Gaussian thermal-loss channels because they are good models of realistic quantum communication channels. In particular, I will study their quantum capacity. 

In Section \ref{section:Quantum capacity and quantum error correction}, I will review the close relation between quantum error correction and the notion of quantum capacity. In Section \ref{section:Quantum capacity of bosonic pure-loss channels}, I will review previous results on the quantum capacity of bosonic pure-loss channels, a subclass of Gaussian thermal-loss channels. In Section \ref{section:Upper bound of the Gaussian thermal-loss channel capacity}, I will present an improved upper bound of the Gaussian thermal-loss channel capacity using a data-processing argument \cite{Noh2019}. In Section \ref{section:Lower bound of the Gaussian thermal-loss channel capacity}, I will provide the tightest lower bound of the Gaussian thermal-loss channel capacity and show that higher quantum communication rates can be achieved than previously believed \cite{Noh2020a}. I will conclude the chapter by outlining related open questions in Section \ref{section:Open questions quantum capacity}.

\section{Quantum capacity and quantum error correction}
\label{section:Quantum capacity and quantum error correction}

Here, we will review several known quantum communication-theoretic results on the notion of quantum capacity and its relation to quantum error correction. For a comprehensive and pedagogic introduction to quantum communication theory, see Refs.\ \cite{Holevo2012,Wilde2013,Hayashi2016,Watrous2018}.    

\subsection{Achievable quantum state transmission rate}

Let us consider a general noisy quantum communication channel 
\begin{align}
\mathcal{N}_{A\rightarrow B} : \mathcal{L}( \mathcal{H}_{A} ) \rightarrow \mathcal{L}( \mathcal{H}_{B} ) ,  
\end{align}
that transmits a quantum state of an information sender A (or Alice) to a receiver B (or Bob). $\mathcal{H}_{A}$ and $\mathcal{H}_{B}$ are the Hilbert spaces which Alice and Bob have access to, respectively. We assume that Alice and Bob are far away from each other and therefore it is not possible to directly implement an entangling quantum operation on the joint system of Alice and Bob $\mathcal{H}_{A}\otimes \mathcal{H}_{B}$. Instead, we assume that these two systems are connected only via the noisy channel $\mathcal{N}_{A\rightarrow B}$. On the other hand, we assume that Alice and Bob can implement any local quantum operations on their own Hilbert spaces that they have access to. 

Note that a quantum state that Alice sends will be corrupted by the noisy channel $\mathcal{N}$ and thus Bob will receive a noisy state that does not faithfully carry the quantum information Alice intended to transmit. Therefore, it is essential to use quantum error correction if Alice and Bob want to achieve reliable quantum communication despite the channel noise. To implement quantum error correction, Alice has to encode her quantum state through an error-correcting code, or an encoding map   
\begin{align}
\mathcal{E}_{A_{0} \rightarrow A^{N} } : \mathcal{L}( \mathcal{H}_{A_{0}} ) \rightarrow \mathcal{L}(  \mathcal{H}_{A}^{\otimes N} ) .  
\end{align}
Here, $\mathcal{H}_{A}$ is the Hilbert space associated with Alice's local quantum memory which hosts the quantum states that Alice wants to faithfully transmit to Bob. Note that we are considering a general encoding scheme where Alice encodes her quantum information collectively to $N$ channel inputs $ \mathcal{H}_{A}^{\otimes N} \equiv \mathcal{H}_{A_{1}}\otimes \cdots \otimes \mathcal{H}_{A_{N}}$ that she has access to. $A^{N}$ is an abbreviation for $A_{1}\cdots A_{N}$. Thus, if Alice wants to send a state $\hat{\rho} \in \mathcal{D}(\mathcal{H}_{A})$, she inputs an encoded  
\begin{align}
\mathcal{E}_{A_{0} \rightarrow A^{N} }( \hat{\rho} )
\end{align}
to her channel inputs. Here, $\mathcal{D}(\mathcal{H}) \equiv \lbrace \hat{\rho} \in \mathcal{L}(\mathcal{H}) | \hat{\rho}=\hat{\rho}^{\dagger} \succeq 0, \mathrm{Tr}[\hat{\rho}]  = 1 \rbrace $ is the space of density matrices associated with the Hilbert space $\mathcal{H}$.  

Then, the encoded state $\mathcal{E}_{A_{0} \rightarrow A^{N} }( \hat{\rho} )$ is sent to Bob through $N$ noisy channels 
\begin{align}
( \mathcal{N}^{\otimes N} )_{A^{N} \rightarrow B^{N} }  :  \mathcal{L}(  \mathcal{H}_{A}^{\otimes N}  ) \rightarrow \mathcal{L}(  \mathcal{H}_{B}^{\otimes N} ) . 
\end{align}
Thus, Bob receives a noisy state 
\begin{align}
( \mathcal{N}^{\otimes N} )_{A^{N} \rightarrow B^{N} } \cdot \mathcal{E}_{A_{0} \rightarrow A^{N} }( \hat{\rho} ). 
\end{align}
To recover the quantum information that Alice intended to send, Bob has to perform a recovery map 
\begin{align}
\mathcal{R}_{B^{N} \rightarrow B_{0} } : \mathcal{L}(  \mathcal{H}_{B}^{\otimes N} ) \rightarrow \mathcal{L}( \mathcal{H}_{B_{0}} )
\end{align} 
when he registers the received state in the $N$ channel outputs $\mathcal{H}_{B}^{\otimes N}$ to his local quantum memory $\mathcal{H}_{B_{0}}$. Then, Bob is left with a state 
\begin{align}
\hat{\sigma}  \equiv \mathcal{R}_{B^{N} \rightarrow B_{0} } \cdot ( \mathcal{N}^{\otimes N} )_{A^{N} \rightarrow B^{N} } \cdot \mathcal{E}_{A_{0} \rightarrow A^{N} }( \hat{\rho} ) . 
\end{align}
Thus, through the encoding and the recovery maps $\mathcal{E}$ and $\mathcal{R}$, Alice and Bob perform quantum error correction. 

Ideally, the final state $\hat{\sigma}$ should be identical to the input state $\hat{\rho}$ for any $\hat{\rho} \in \mathcal{D}( \mathcal{H}_{A} )$. Suppose that this is indeed the case and assume $\textrm{dim}( \mathcal{H}_{A} ) = \textrm{dim}( \mathcal{H}_{B} ) = d$. Then, we say that Alice and Bob achieved a quantum state transmission rate 
\begin{align}
R = \frac{1}{N}\log_{2} d . 
\end{align}
This rate quantifies the number of qubits per channel use that are reliably transmitted from Alice to Bob. In general, however, the final state $\hat{\sigma}$ is not exactly the same as the input state $\hat{\rho}$, and there is always some small residual error $\epsilon$ for any finite number of channel uses $N < \infty$. If the encoding and recovery maps are well designed, this residual error is suppressed to an arbitrarily small value by increasing the number of channel uses $N$. Incorporating this general case, achievable quantum state transmission rate of a pair of an encoding map and a recovery map $(\mathcal{E},\mathcal{R})$ is defined as follows: 

\begin{definition}[Achievable quantum state transmission rate]
Consider a noisy channel $\mathcal{N}_{A \rightarrow B} : \mathcal{L}( \mathcal{H}_{A} ) \rightarrow \mathcal{L}( \mathcal{H}_{B} )$ from Alice to Bob, and a pair of an encoding map and a recovery map $(\mathcal{E},\mathcal{R})$ where 
\begin{align}
&\mathcal{E}_{A_{0} \rightarrow A^{N} } : \mathcal{L}( \mathcal{H}_{A_{0}} ) \rightarrow \mathcal{L}(  \mathcal{H}_{A}^{\otimes N} ), 
\nonumber\\
&\mathcal{R}_{B^{N} \rightarrow B_{0} } : \mathcal{L}(  \mathcal{H}_{B}^{\otimes N} ) \rightarrow \mathcal{L}( \mathcal{H}_{B_{0}} ) . 
\end{align}   
Let $d$ be the dimension of the Hilbert spaces $\mathcal{H}_{A_{0}}$ and $\mathcal{H}_{A_{0}}$, i.e., $d \equiv \textrm{dim}( \mathcal{H}_{A} ) = \textrm{dim}( \mathcal{H}_{B} )$. Then, we say that the pair of the encoding and the recovery operations $(\mathcal{E},\mathcal{R})$ is an $(N, R-\delta , \epsilon)$ quantum communication code if the following conditions are satisfied:   
\begin{align}
&\frac{1}{N}\log_{2}d  = R-\delta, 
\nonumber\\
&|\!| \hat{\sigma} - \hat{\rho}  |\!|_{1} \le \epsilon \,\,\, \textrm{for all}\,\,\, \hat{\rho} \in \mathcal{H}_{A_{0}}, 
\end{align}
where $\hat{\sigma} $ is defined as
\begin{align}
\hat{\sigma} \equiv \mathcal{R}_{B^{N} \rightarrow B_{0} } \cdot ( \mathcal{N}^{\otimes N} )_{A^{N} \rightarrow B^{N} } \cdot \mathcal{E}_{A_{0} \rightarrow A^{N} }( \hat{\rho} ) . 
\end{align}
Then we say the pair of the encoding and the recovery maps $(\mathcal{E},\mathcal{R})$ achieves a quantum state transmission rate $R$ against a channel $\mathcal{N}$, if $\delta$ and $\epsilon$ can be made arbitrarily small as we increase the number of channel uses indefinitely, i.e., as $N\rightarrow \infty$. This way, the rate $R$ quantifies the number of qubits per channel use that are reliably sent from Alice to Bob.   
\end{definition}

\subsection{Quantum capacity}

Note that the achievable quantum state transmission rate $R$ depends on the encoding and the recovery maps $\mathcal{E}$ and $\mathcal{R}$, and the noisy channel $\mathcal{N}$, i.e., 
\begin{align}
R = R( \mathcal{E}, \mathcal{R} ; \mathcal{N}) . 
\end{align}
Given a noisy channel $\mathcal{N}$, we will achieve a higher rate if we choose a better pair of encoding and the recovery maps, or a better quantum error correction scheme. Thus, the achievable rate $R( \mathcal{E}, \mathcal{R} ; \mathcal{N})$ characterizes the efficiency of a quantum error correction scheme $(\mathcal{E}, \mathcal{R})$ against a noisy channel $\mathcal{N}$. Then, we define the quantum capacity of a channel $\mathcal{N}$ as follows:  

\begin{definition}[Quantum capacity of a quantum channel]
Let $\mathcal{N}$ be a quantum channel. The quantum capacity $Q(\mathcal{N})$ of a quantum channel $\mathcal{N}$ is defined as 
\begin{align}
C_{Q}(\mathcal{N}) \equiv \max_{ \mathcal{E}, \mathcal{R} } R( \mathcal{E}, \mathcal{R} ; \mathcal{N}), 
\end{align}
the maximum achievable quantum state transmission rate by using an optimal choice of an encoding map and a recovery map. Note that the quantum capacity $Q(\mathcal{N})$ depends only on the channel $\mathcal{N}$. Thus, the quantum capacity of a quantum channel characterizes the channel's intrinsic information-transmission capability.     
\end{definition}

\subsection{Coherent information and quantum capacity} 

While the operational meaning of the quantum capacity is clear, it is not yet clear how we can evaluate this quantity. We can in principle compute the quantum capacity by comprehensively optimizing the achievable rate $R$ over all possible quantum error correction schemes $(\mathcal{E}, \mathcal{R})$. However, such an optimization is not generally feasible in practice. Luckily, however, it is possible to quantify the quantum capacity of a quantum channel without having to optimize over all possible encoding and recovery maps. More specifically, it is possible to characterize the quantum capacity based solely on an entropic quantity, i.e., regularized coherent information \cite{Schumacher1996,Lloyd1997,Devetak2005}, which does not depend on the encoding and recovery maps. It is one of the biggest achievements of quantum communication theory.  

Generalizing the Shannon entropy in the classical information theory \cite{Shannon1948}, entropy of a quantum state is defined as
\begin{align}
S( \hat{\rho} ) = -\mathrm{Tr} [ \hat{\rho}\log_{2}\hat{\rho} ] . 
\end{align}
$S( \hat{\rho} )$ is also called the von Neumann entropy of a quantum state $\hat{\rho}$. Then, we define a complementary channel $\mathcal{N}^{c}$ of a channel $\mathcal{N}$ as follows: 

\begin{definition}[Complementary channel]
Let $\mathcal{N} = \mathcal{N}_{A\rightarrow B}$ be a quantum channel. The channel $\mathcal{N}_{A\rightarrow B}$ can be dilated as 
\begin{align}
\mathcal{N}_{A\rightarrow B}(\hat{\rho}_{A}) = \mathrm{Tr}_{E}[ \hat{U}_{AE\rightarrow BE} ( \hat{\rho}_{A} \otimes |0\rangle\langle 0|_{E} ) \hat{U}_{AE\rightarrow BE}^{\dagger}  ], 
\end{align} 
where $E$ is an environment, $|0\rangle\langle 0|_{E}$ is a pure state, and $\hat{U}_{AE\rightarrow BE}$ is a unitary operation. Then, a complementary channel $\mathcal{N}^{c} = \mathcal{N}^{c}_{A\rightarrow E}$ of a channel $\mathcal{N}$ is defined as 
\begin{align}
\mathcal{N}^{c}_{A\rightarrow E}(\hat{\rho}_{A}) \equiv  \mathrm{Tr}_{B}[ \hat{U}_{AE\rightarrow BE} ( \hat{\rho}_{A} \otimes |0\rangle\langle 0|_{E} ) \hat{U}_{AE\rightarrow BE}^{\dagger}  ]. 
\end{align} 
\end{definition}

Note that there may be multiple unitary operations $\hat{U}_{AE\rightarrow BE}$ that give rise to the same channel $\mathcal{N}_{A\rightarrow B}$ and thus complementary channels are not unique. However, any two complementary channels of $\mathcal{N}_{A\rightarrow B}$ are equivalent to each other up to a unitary operation on the environment $E$. Therefore, the von Nuemann entropy of an output state of any complementary channel of $\mathcal{N}$ is the same because the von Nuemann entropy is invariant under a unitary operation. This means that
\begin{align}
S( \mathcal{N}^{c}_{A\rightarrow E}(\hat{\rho}_{A}) ) 
\end{align}
is well-defined. Then, we define coherent information as follows: 

\begin{definition}[Coherent information of a quantum channel]
The coherent information of a channel $\mathcal{N}$ with respect to an input state $\hat{\rho}$ is defined as
\begin{align}
I_{c}( \mathcal{N} , \hat{\rho} ) &\equiv S( \mathcal{N}(\hat{\rho}) )  - S( \mathcal{N}^{c}(\hat{\rho}) ) . 
\end{align}
Then, the one-shot coherent information of a channel $\mathcal{N}$ is defined as
\begin{align}
Q( \mathcal{N} ) &\equiv \max_{\hat{\rho} } I_{c}( \mathcal{N} , \hat{\rho} ), 
\end{align}
i.e., maximization of the coherent information $I_{c}( \mathcal{N} , \hat{\rho} ) $ over all input states $\hat{\rho}$. Lastly, the regularized coherent information of a channel $\mathcal{N}$ is defined as
\begin{align}
Q_{\reg}( \mathcal{N} ) &\equiv \lim_{N\rightarrow \infty} \frac{1}{N} Q( \mathcal{N}^{\otimes N} ) = \lim_{N\rightarrow \infty} \frac{1}{N} \max_{\hat{\rho} } I_{c}( \mathcal{N}^{\otimes N} , \hat{\rho} ) . 
\end{align}
Here, the state $\hat{\rho}$ should be optimized over all possible input states to the $N$ channels $\mathcal{N}^{\otimes N}$. 
\end{definition}

Remarkably, Refs.\ \cite{Schumacher1996,Lloyd1997,Devetak2005} established that the quantum capacity of a quantum channel equals the channel's regularized coherent information. 

\begin{theorem}[Quantum capacity equals regularized coherent information \cite{Schumacher1996,Lloyd1997,Devetak2005}]
The quantum capacity $C_{Q}(\mathcal{N})$ of a quantum channel $\mathcal{N}$ equals the regularized coherent information $Q_{\reg}( \mathcal{N} )$ of the channel: 
\begin{align}
C_{Q}(\mathcal{N}) &= Q_{\reg}( \mathcal{N} ) . 
\end{align}
\end{theorem}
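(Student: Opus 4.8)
The plan is to establish the two inequalities $C_Q(\mathcal{N}) \le Q_{\mathrm{reg}}(\mathcal{N})$ (the converse) and $C_Q(\mathcal{N}) \ge Q_{\mathrm{reg}}(\mathcal{N})$ (achievability, i.e.\ the coding theorem) separately.

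For the converse I would start from an arbitrary $(N, R-\delta, \epsilon)$ quantum communication code $(\mathcal{E},\mathcal{R})$ and feed the maximally entangled state $|\Phi^{+}\rangle_{R_0 A_0}$ (with $\dim A_0 = d = 2^{N(R-\delta)}$) into it. Writing $\hat{\tau}_{R_0 A^N} = (\mathrm{id}_{R_0}\otimes \mathcal{E})(|\Phi^{+}\rangle\langle\Phi^{+}|)$, $\hat{\omega}_{R_0 B^N} = (\mathrm{id}_{R_0}\otimes \mathcal{N}^{\otimes N})(\hat{\tau})$, and $\hat{\sigma}_{R_0 B_0} = (\mathrm{id}_{R_0}\otimes \mathcal{R})(\hat{\omega})$, the code condition, after converting the worst-case pure-state fidelity bound into a bound on the entanglement fidelity via the standard inequalities relating these quantities, gives $\|\hat{\sigma}_{R_0 B_0} - |\Phi^{+}\rangle\langle\Phi^{+}|\|_1 \le \epsilon'$. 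I would then chain three facts: (i) for the maximally entangled state $I_c(R_0\rangle B_0)_{\Phi^{+}} = \log d = N(R-\delta)$; (ii) an Alicki--Fannes / Audenaert continuity bound for coherent information, which yields $I_c(R_0\rangle B_0)_{\hat{\sigma}} \ge \log d - g(\epsilon',d)$ with $g(\epsilon',d)/N \to 0$ once one passes to a sequence of codes with $\epsilon_N \to 0$; (iii) the quantum data-processing inequality, applied first to the decoder $\mathcal{R}$ acting on $B^N$ and then to the extension of the reference from $R_0$ to a full purification of $\hat{\tau}_{A^N}$, giving $I_c(R_0\rangle B_0)_{\hat{\sigma}} \le I_c(R_0\rangle B^N)_{\hat{\omega}} \le Q(\mathcal{N}^{\otimes N})$. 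Combining, $N(R-\delta) \le Q(\mathcal{N}^{\otimes N}) + g(\epsilon_N,N)$; dividing by $N$ and sending $N\to\infty$, $\delta\to 0$ yields $R \le Q_{\mathrm{reg}}(\mathcal{N})$, hence $C_Q(\mathcal{N}) \le Q_{\mathrm{reg}}(\mathcal{N})$.

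For achievability I would prove the stronger one-shot/i.i.d.\ statement: for any fixed input $\hat{\rho}$ on $\mathcal{H}_A$, rates $R < I_c(\mathcal{N},\hat{\rho})$ are achievable over $\mathcal{N}^{\otimes n}$ as $n\to\infty$. The construction is random coding: purify $\hat{\rho}$ to $|\psi\rangle_{RA}$, dilate $\mathcal{N}$ to the isometry $\hat{U}_{A\to BE}$, consider $|\psi\rangle_{R^nA^n}^{\otimes n}$ passed through $\hat{U}^{\otimes n}$, and restrict the $A^n$ input to a Haar-random subspace $\mathcal{C}$ of the $\hat{\rho}^{\otimes n}$-typical subspace with $\log\dim\mathcal{C} = nR$. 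The key lemma is the decoupling theorem: for such a random code the Haar-averaged trace distance between the state on $R^nE^n$ and a product state $\hat{\psi}_{R^n}\otimes\hat{\omega}_{E^n}$ is small whenever $R$ is below roughly $\tfrac{1}{n}[S(A^n)-S(E^n)] = S(B)-S(E) = I_c(\mathcal{N},\hat{\rho})$; this is where typical-projector estimates and a second-moment (operator Cauchy--Schwarz) bound over the unitary group enter. Once $R^nE^n$ is approximately decoupled, I would invoke Uhlmann's theorem on the purification $|\psi\rangle_{R^nB^nE^n}$ to produce a decoding channel $\mathcal{D}_{B^n\to \hat{A}}$ such that $(\mathrm{id}_{R^n}\otimes\mathcal{D})$ maps the transmitted state close to a maximally entangled state, i.e.\ an $(n,R-\delta,\epsilon)$ code exists. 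Optimizing over $\hat{\rho}$ gives $C_Q(\mathcal{N}) \ge Q(\mathcal{N})$; applying the same argument to the block channel $\mathcal{N}^{\otimes k}$ and using $C_Q(\mathcal{N}) = \tfrac{1}{k}C_Q(\mathcal{N}^{\otimes k})$ gives $C_Q(\mathcal{N}) \ge \tfrac{1}{k}Q(\mathcal{N}^{\otimes k})$ for every $k$, and letting $k\to\infty$ yields $C_Q(\mathcal{N}) \ge Q_{\mathrm{reg}}(\mathcal{N})$. Together with the converse this proves the equality.

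I expect the main obstacle to lie on the achievability side, and within it the decoupling lemma: controlling the Haar average of the environment's state and showing it concentrates on a product state requires delicate operator-norm / second-moment estimates on random subspaces, plus careful bookkeeping of typical-subspace error terms so that all corrections vanish as $n\to\infty$. A secondary subtlety worth flagging is that $Q(\mathcal{N})$ is in general superadditive, so no single-letter formula exists and the regularization $\lim_{k\to\infty}\tfrac{1}{k}Q(\mathcal{N}^{\otimes k})$ is unavoidable; one must also verify this limit exists, via superadditivity of $k\mapsto Q(\mathcal{N}^{\otimes k})$ and Fekete's lemma. As an alternative route for achievability I could fall back on Devetak's original argument, which derives the quantum capacity from the private classical capacity through a coherent-information lower bound on the private information; this replaces the decoupling machinery by a privacy-amplification and information-reconciliation analysis of comparable technical weight.
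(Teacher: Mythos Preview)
The paper does not prove this theorem. It is stated as a foundational result of quantum Shannon theory and attributed to the cited references \cite{Schumacher1996,Lloyd1997,Devetak2005}; the thesis simply quotes it and moves on to use it. Your proposal, by contrast, is a correct outline of the standard modern proof: the converse via the data-processing inequality for coherent information together with an Alicki--Fannes-type continuity bound, and achievability via the decoupling approach with Uhlmann's theorem (or, as you note, Devetak's original route through private classical capacity). Both halves are standard and appear in the textbook references the thesis cites (e.g.\ Wilde). The technical caveats you flag---the second-moment/typicality estimates in decoupling, and the need for Fekete's lemma to guarantee existence of the regularized limit---are the right ones. There is nothing to compare against in the paper itself.
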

Note that the regularized coherent information is a purely entropic quantity that does not depend on any encoding and recovery operations. Nevertheless, it does provide a fundamental limit on the ultimate efficiency of quantum error correction schemes. On the other hand, it is also important to realize that the evaluation of the regularized coherent information involves optimization over all input state states $\hat{\rho}$ to the $N$ channels. In particular, we should take the limit of infinitely many channel uses, i.e., $N\rightarrow \infty$. Therefore, evaluation of the quantum capacity is still intractable in the most general case \cite{Wolf2011,Oskouei2018}. However, if a channel satisfies a certain special property, evaluation of the channel's quantum capacity can be made tractable. For example, it is possible to efficiently compute the quantum capacity of a quantum channel if the channel is degradable of anti-degradable \cite{Devetak2005C,Yard2008,Caruso2006}.  

\begin{definition}[Degradability or anti-degradability of a quantum channel]
A quantum channel $\mathcal{N} = \mathcal{N}_{A\rightarrow B}$ is called degradable if there is a degrading channel $D_{B\rightarrow E}$ such that 
\begin{align}
\mathcal{N}^{c}_{A\rightarrow E} &= D_{B\rightarrow E} \cdot \mathcal{N}_{A\rightarrow B}, 
\end{align}
i.e., if Bob can simulate the environment's complementary channel output. Conversely, a channel is called anti-degradable if there is a degrading channel $\mathcal{D}_{E \rightarrow B }$ such that 
\begin{align}
\mathcal{N}_{A\rightarrow B} &= D_{E\rightarrow B} \cdot \mathcal{N}^{c}_{A\rightarrow E}, 
\end{align}
i.e., if the environment can simulate Bob's channel output. 
\end{definition} 

Then for degradable channels, we have the following desirable properties: 

\begin{theorem}[Additivity of the coherent information of degradable channels \cite{Devetak2005C}]
The coherent information of a degradable channel is additive, i.e.,
\begin{align}
C_{Q}(\mathcal{N}) &= Q_{\reg}( \mathcal{N} ) = Q(\mathcal{N}), 
\end{align}
and thus the regularization (i.e., $N\rightarrow \infty$) is not needed to evaluate the channel's quantum capacity.  \label{theorem:Additivity of the coherent information of degradable channels}
\end{theorem}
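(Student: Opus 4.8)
\textbf{Proof proposal for Theorem~\ref{theorem:Additivity of the coherent information of degradable channels}.}
The plan is to show that the one-shot coherent information $Q(\mathcal{N}) = \max_{\hat{\rho}} I_c(\mathcal{N},\hat{\rho})$ is \emph{additive} under tensor products when $\mathcal{N}$ is degradable, i.e.\ $Q(\mathcal{N}^{\otimes N}) = N\, Q(\mathcal{N})$, from which $Q_{\reg}(\mathcal{N}) = Q(\mathcal{N})$ follows immediately by the definition of the regularization, and hence $C_Q(\mathcal{N}) = Q(\mathcal{N})$. The super-additivity direction $Q(\mathcal{N}^{\otimes N}) \ge N\,Q(\mathcal{N})$ is free: one simply feeds in a tensor power of the optimal single-copy input state and uses the additivity of the von Neumann entropy on product states applied to both $S(\mathcal{N}^{\otimes N}(\hat{\rho}^{\otimes N}))$ and $S((\mathcal{N}^{c})^{\otimes N}(\hat{\rho}^{\otimes N}))$ (noting that $(\mathcal{N}^{\otimes N})^c$ can be taken to be $(\mathcal{N}^c)^{\otimes N}$ up to an environment unitary). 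So the real content is the sub-additivity direction $Q(\mathcal{N}^{\otimes N}) \le N\, Q(\mathcal{N})$.

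The key structural input is the identity, valid for \emph{any} channel, that for a purification $|\psi\rangle_{AA'}$ of the input state $\hat{\rho}_A$,
\begin{align}
I_c(\mathcal{N},\hat{\rho}_A) = S(B) - S(E) = S(B) - S(BA'),
\end{align}
where the second equality uses that $A'BE$ is in a pure state so $S(E) = S(BA')$, and this in turn equals the negative conditional entropy $-S(A'|B) = I(A'\rangle B)$. The plan is then to exploit degradability: because $\mathcal{N}^c = D \cdot \mathcal{N}$ for a degrading channel $D_{B\to E}$, the map $\hat{\rho}_A \mapsto I_c(\mathcal{N},\hat{\rho}_A)$ is a \emph{concave} function of the input state. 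The cleanest route to concavity is to write $I_c(\mathcal{N},\hat{\rho})$, via the degrading channel, as a difference that telescopes into a conditional mutual information / a coherent-information quantity evaluated on the output side, and then invoke strong subadditivity of the von Neumann entropy (equivalently, monotonicity of conditional entropy under the action of the degrading channel on one subsystem). Once concavity of $\hat{\rho}\mapsto I_c(\mathcal{N},\hat{\rho})$ in the input is established for degradable $\mathcal{N}$, I would combine it with the observation that the degrading property is preserved under tensor products ($(\mathcal{N}^{\otimes N})^c = D^{\otimes N}\cdot \mathcal{N}^{\otimes N}$), so $I_c(\mathcal{N}^{\otimes N}, \cdot)$ is concave on the (generally entangled) joint input. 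A standard argument then applies: for an arbitrary joint input $\hat{\rho}_{A^N}$ with purification, use subadditivity of entropy plus the degradability-induced chain-rule bound to split $I_c(\mathcal{N}^{\otimes N},\hat{\rho}_{A^N}) \le \sum_{k=1}^N I_c(\mathcal{N}, \hat{\rho}_{A_k})$ where $\hat{\rho}_{A_k}$ is the $k$-th marginal, each term bounded by $Q(\mathcal{N})$.

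Concretely, the steps in order would be: (i) recall the purification/complementary-channel form of $I_c$ and fix notation for the isometric dilation $\hat{U}_{AE\to BE}$; (ii) prove that degradability implies concavity of $I_c(\mathcal{N},\cdot)$ in the input state, reducing it to strong subadditivity applied to the state $\hat{\rho}_{BE'}$ obtained by further sending $B$ through the degrading channel; (iii) show the tensor-power channel $(\mathcal{N}^{\otimes N})$ is again degradable with degrading channel $D^{\otimes N}$; (iv) for an arbitrary (possibly entangled) input to $\mathcal{N}^{\otimes N}$, apply the single-letter bound using the chain rule for conditional entropy and subadditivity to obtain $I_c(\mathcal{N}^{\otimes N},\hat{\rho}_{A^N}) \le \sum_k I_c(\mathcal{N},\hat{\rho}_{A_k}) \le N\, Q(\mathcal{N})$; (v) combine with the trivial super-additivity to conclude $Q(\mathcal{N}^{\otimes N}) = N\,Q(\mathcal{N})$, hence $Q_{\reg}(\mathcal{N}) = Q(\mathcal{N}) = C_Q(\mathcal{N})$.

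The main obstacle I expect is step (ii)--(iv): making the ``telescoping to a single-letter bound'' rigorous requires care in choosing the right purification of the entangled joint input and in tracking which subsystem plays the role of conditioning system when invoking strong subadditivity — the degrading channel $D$ must act on the correct output factor at each stage, and one must verify that the conditional entropies genuinely chain the way the argument needs. A secondary subtlety, particularly relevant for the Gaussian thermal-loss setting this chapter ultimately cares about, is that the input Hilbert spaces are infinite-dimensional, so entropies can be infinite and the concavity/subadditivity manipulations must be justified under an energy constraint or by a suitable limiting/finite-dimensional-truncation argument; I would either restrict to energy-bounded inputs throughout or cite the known extensions of strong subadditivity and the degradable-capacity formula to the continuous-variable case. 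Everything else is bookkeeping with von Neumann entropies.
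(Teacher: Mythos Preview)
The paper does not actually prove this theorem; it simply states it and cites Devetak--Shor \cite{Devetak2005C}. So there is no ``paper's own proof'' to compare against, and your proposal is being judged on its own merits as a reconstruction of the standard argument.

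Your overall architecture is correct: super-additivity is trivial via product inputs, and the content is showing $I_c(\mathcal{N}^{\otimes N},\hat{\rho}_{A^N}) \le \sum_k I_c(\mathcal{N},\hat{\rho}_{A_k})$ for arbitrary (entangled) inputs, where $\hat{\rho}_{A_k}$ are the marginals. However, your steps (ii) and (iv) conflate two distinct ingredients. Concavity of $I_c(\mathcal{N},\cdot)$ in the input (which the paper records separately as the next theorem, citing \cite{Yard2008}) is not by itself what drives additivity, and the ``chain rule plus subadditivity'' you gesture at in (iv) is somewhat vague. The cleanest route to the sub-additivity bound is more direct: write $I_c(\mathcal{N}^{\otimes N},\hat{\rho}) = S(B^N) - S(E^N)$, then observe that the desired inequality is equivalent to $I(B_1{:}\cdots{:}B_N) \ge I(E_1{:}\cdots{:}E_N)$ (total correlation), which follows immediately from the data-processing inequality for mutual information because each $E_k$ is obtained from $B_k$ by the local degrading map $D$. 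For $N=2$ this is just $I(E_1{:}E_2) = I(D(B_1){:}D(B_2)) \le I(B_1{:}B_2)$, and the general case telescopes or follows by induction. This bypasses the need to invoke concavity or to carefully track conditioning systems in a chain rule, and it makes the role of degradability transparent. Your infinite-dimensional caveat is well taken and appropriate for the thesis's setting.
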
 
 
\begin{theorem}[Concavity of the coherent information of degradable channels \cite{Yard2008}]
The coherent information  of a degradable channel $\mathcal{N}$ is concave in the input states $\hat{\rho}$. That is, for any degradable channel $\mathcal{N}$, we have
\begin{align}
I_{c}\Big{(} \mathcal{N} , \sum_{x}p_{x}\hat{\rho}_{x} \Big{)} \ge \sum_{x}p_{x} I_{c}(\mathcal{N} , \hat{\rho}_{x}), 
\end{align}
for any $p_{x}$ such that $p_{x}\ge 0$ and $\sum_{x}p_{x}=1$. \label{theorem:Concavity of the coherent information of degradable channels}
\end{theorem}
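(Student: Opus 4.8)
The plan is to prove Theorem~\ref{theorem:Concavity of the coherent information of degradable channels}, the concavity of $I_c(\mathcal{N},\cdot)$ for a degradable channel $\mathcal{N}$, by reducing it to the well-known concavity of conditional entropy. First I would purify the convex combination $\hat{\rho}_{AA'} = \sum_x p_x \hat{\rho}_x \otimes |x\rangle\langle x|_{A'}$ on an auxiliary classical register $A'$, so that $\mathrm{Tr}_{A'}\hat{\rho}_{AA'} = \sum_x p_x \hat{\rho}_x$. The key identity I would establish is that for \emph{any} channel, $I_c(\mathcal{N},\hat{\rho}) = S(B)_{\omega} - S(E)_{\omega}$ where $\omega_{BE} = (\mathcal{N}\otimes\mathrm{id})(\psi_{AE}^{\hat{\rho}})$... more precisely, writing the isometric dilation $\hat{U}_{A\to BE}$ of $\mathcal{N}$, applying it to $\hat{\rho}_{AA'}$ gives a state $\omega_{BEA'}$ with $S(B|A')_\omega = \sum_x p_x S(\mathcal{N}(\hat{\rho}_x))$ and $S(E|A')_\omega = \sum_x p_x S(\mathcal{N}^c(\hat{\rho}_x))$, by the block-diagonal structure on $A'$; hence $\sum_x p_x I_c(\mathcal{N},\hat{\rho}_x) = S(B|A')_\omega - S(E|A')_\omega$, whereas $I_c(\mathcal{N},\sum_x p_x\hat{\rho}_x) = S(B)_\omega - S(E)_\omega$.

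The inequality to prove then becomes $S(B)_\omega - S(E)_\omega \ge S(B|A')_\omega - S(E|A')_\omega$, i.e. $S(B)_\omega - S(B|A')_\omega \ge S(E)_\omega - S(E|A')_\omega$, which is $I(B:A')_\omega \ge I(E:A')_\omega$ in terms of quantum mutual information. This is where degradability enters: since $\mathcal{N}$ is degradable, there is a channel $D_{B\to E}$ with $\mathcal{N}^c = D\cdot\mathcal{N}$, so $\omega_{EA'} = (D_{B\to E}\otimes\mathrm{id}_{A'})(\omega_{BA'})$. The data-processing inequality for quantum mutual information (monotonicity under local channels, equivalently strong subadditivity) then gives $I(E:A')_\omega \le I(B:A')_\omega$ immediately, which is exactly the desired bound. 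I would need to note that all entropies here are finite (or handle the infinite-dimensional bosonic case with the usual energy-constrained caveats, following the argument as in Ref.~\cite{Yard2008}), but structurally the proof is short.

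The main obstacle I anticipate is not the degradable step — that is a clean one-line application of data processing — but rather setting up the conditional-entropy bookkeeping cleanly, in particular verifying the block-diagonal identities $S(B|A')_\omega = \sum_x p_x S(\mathcal{N}(\hat{\rho}_x))$ and $S(E|A')_\omega = \sum_x p_x S(\mathcal{N}^c(\hat{\rho}_x))$. These hold because conditioning on the classical pointer variable $A'$ decomposes the state into a direct sum, and $S(B|A')_\omega = \sum_x p_x S(B)_{\omega_x}$ with a vanishing mixing term on the $B$ side that does \emph{not} appear because $A'$ is classical — but one must be careful that it is $S(B|A')$, not $S(BA')$, that has this property, and similarly for $E$; the $A'$-entropy contributions cancel between the two sides. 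A secondary subtlety, relevant because the channels of interest (Gaussian thermal-loss channels) act on infinite-dimensional spaces, is that $S(B)$ and $S(E)$ can individually diverge; one circumvents this by working throughout with the \emph{differences} $S(B) - S(E)$ and $S(B|A') - S(E|A')$, which remain finite under an energy constraint, and by invoking the continuity/finiteness results already used implicitly in Theorem~\ref{theorem:Additivity of the coherent information of degradable channels}. Given those technical accommodations, the proof reduces to: purify onto a classical register, rewrite both sides as (conditional) coherent informations, and apply data processing through the degrading map $D$.
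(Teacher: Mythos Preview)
Your proof is correct and is essentially the standard argument from Yard--Hayden--Devetak \cite{Yard2008}: introduce a classical flag register $A'$, rewrite both sides as (conditional) entropy differences on the state $\omega_{BEA'}$, and reduce the concavity statement to $I(B{:}A')_\omega \ge I(E{:}A')_\omega$, which follows from the data-processing inequality applied to the degrading map $D_{B\to E}$. The bookkeeping you flag (block-diagonal conditional entropies on a classical register, cancellation of $S(A')$) is routine and poses no real obstacle.

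Note, however, that the paper does not actually supply its own proof of this theorem: it is stated as a cited result from \cite{Yard2008} and used as a black box (specifically, in the proof of Lemma~\ref{lemma:Optimality of diagonal states in the Fock basis}). So there is no ``paper's proof'' to compare against; your argument simply reproduces the original literature proof. Your caution about infinite-dimensional issues is well placed for the bosonic applications in Chapter~\ref{chapter:Quantum capacities of bosonic Gaussian channels}, though the paper itself does not explicitly address those subtleties either.
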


Theorem \ref{theorem:Additivity of the coherent information of degradable channels} shows that evaluation of the one-shot coherent information suffices for computing the quantum capacity of degradable channels. To evaluate the one-shot coherent information, we still need to perform the maximization of coherent information over all input states $\hat{\rho}$ (to a single channel). Theorem \ref{theorem:Additivity of the coherent information of degradable channels} shows that this a convex minimization problem which can be solved efficiently \cite{Boyd2004}. This is because the set of density matrices is convex and the objective function $I_{c}(\mathcal{N},\hat{\rho})$ (to be maximized) is concave in the input state $\hat{\rho}$. Note that concave maximization is equivalent to convex minimization. Thus, the quantum capacity of degradable channels can be efficiently evaluated. In the case of anti-degradable channels, the situation is even simpler: 

\begin{theorem}[Quantum capacity of anti-degradable channels \cite{Caruso2006}]
The quantum capacity of an anti-degradable channel vanishes, i.e., 
\begin{align}
C_{Q}(\mathcal{N}) =0 ,
\end{align} 
for any anti-degradable channel $\mathcal{N}$. \label{theorem:Quantum capacity of anti-degradable channels}
\end{theorem}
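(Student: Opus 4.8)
The statement to prove is Theorem \ref{theorem:Quantum capacity of anti-degradable channels}: the quantum capacity of any anti-degradable channel vanishes. The plan is to give a short no-cloning / no-signaling type argument combined with the coding theorem established earlier. The core idea is that if quantum information could be reliably transmitted through an anti-degradable channel, then the environment could also recover it (by applying the degrading channel $D_{E\rightarrow B}$ to its own output), producing two copies of an arbitrary unknown quantum state shared between Bob and the environment, contradicting the no-cloning theorem.

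First I would set up the contradiction hypothesis: suppose $\mathcal{N}_{A\rightarrow B}$ is anti-degradable with degrading channel $D_{E\rightarrow B}$ satisfying $\mathcal{N}_{A\rightarrow B} = D_{E\rightarrow B}\cdot\mathcal{N}^{c}_{A\rightarrow E}$, and suppose for contradiction that $C_Q(\mathcal{N})>0$. Then there exists a nonzero rate $R>0$ and, for every $\epsilon>0$, a large enough $N$ and an $(N,R,\epsilon)$ quantum communication code $(\mathcal{E}_{A_0\rightarrow A^N},\mathcal{R}_{B^N\rightarrow B_0})$ with logical dimension $d=2^{NR}\ge 2$. I would track the Stinespring dilation of $\mathcal{N}^{\otimes N}$: the environments of the $N$ channel uses form a joint system $E^N$, and the $N$-fold channel to the environment is $(\mathcal{N}^{c})^{\otimes N}_{A^N\rightarrow E^N}$. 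Since each use is anti-degradable, $(\mathcal{N}^{\otimes N})_{A^N\rightarrow B^N} = (D^{\otimes N})_{E^N\rightarrow B^N}\cdot(\mathcal{N}^{c})^{\otimes N}_{A^N\rightarrow E^N}$. Therefore the ``environment's decoder'' $\mathcal{R}'_{E^N\rightarrow B_0}\equiv \mathcal{R}_{B^N\rightarrow B_0}\cdot(D^{\otimes N})_{E^N\rightarrow B^N}$, composed with the encoding and the complementary channel, reproduces exactly the same effective channel on $B_0$ that Bob obtains: $\mathcal{R}'\cdot(\mathcal{N}^c)^{\otimes N}\cdot\mathcal{E} = \mathcal{R}\cdot\mathcal{N}^{\otimes N}\cdot\mathcal{E}$, which by the code property is within trace distance $\epsilon$ of the identity channel on $\mathcal{D}(\mathcal{H}_{A_0})$.

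The contradiction then comes from considering Alice encoding half of a maximally entangled state $|\Phi^+\rangle_{A_0 A_0'}$ (with $\mathrm{dim}=d\ge 2$) and sending $A_0$ through the encoder and the channel. By the Stinespring dilation, after the channel the global pure state lives on $B^N E^N A_0'$; Bob applies $\mathcal{R}$ and the environment applies $\mathcal{R}'$. Because both decoders recover the input channel to within $\epsilon$, both the $B_0 A_0'$ reduced state and the (hypothetical) $B_0' A_0'$ reduced state are $2\epsilon$-close (in trace distance, using monotonicity and the triangle inequality) to $|\Phi^+\rangle\langle\Phi^+|$. But the joint state on $B_0 B_0' A_0'$ is obtained by applying the local channels $\mathcal{R}$ and $\mathcal{R}'$ to \emph{complementary} parts $B^N$ and $E^N$ of a pure state, so $B_0$ and $B_0'$ cannot both be (nearly) maximally entangled with the single system $A_0'$ — this is precisely the monogamy-of-entanglement / no-cloning obstruction. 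Quantitatively, I would invoke a continuity bound: if $\omega_{B_0 A_0'}$ and $\omega_{B_0' A_0'}$ are each within $2\epsilon$ of $|\Phi^+\rangle\langle\Phi^+|$ for $d\ge 2$, then for sufficiently small $\epsilon$ the existence of a common extension $\omega_{B_0 B_0' A_0'}$ is impossible, because a purification of $|\Phi^+\rangle_{B_0 A_0'}$ has $B_0'$ in a product state with $A_0'$, so $\omega_{B_0' A_0'}$ would be forced $O(\sqrt{\epsilon})$-close to a product state, contradicting its $2\epsilon$-closeness to $|\Phi^+\rangle\langle\Phi^+|$ once $\epsilon$ is small enough. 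Since $R>0$ allows $\epsilon$ to be taken arbitrarily small, this is a contradiction, forcing $C_Q(\mathcal{N})=0$.

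The main obstacle I anticipate is making the final ``two approximate copies are impossible'' step fully rigorous with explicit constants: one needs a clean continuity/monogamy estimate (e.g., via the triangle inequality for trace distance, the Fuchs–van de Graaf inequalities, and the fact that $\|\omega_{B_0'A_0'}-\omega_{B_0'}\otimes\omega_{A_0'}\|_1$ is controlled by the entanglement of $\omega_{B_0 A_0'}$ with $A_0'$) so that a fixed $\epsilon_0>0$ can be named below which the contradiction bites. An alternative, perhaps cleaner, route that I could use instead is purely information-theoretic: anti-degradability implies, for every input $\hat\rho$ and every $N$, that $S(\mathcal{N}^{\otimes N}(\hat\rho))\le S((\mathcal{N}^c)^{\otimes N}(\hat\rho))$ by monotonicity of entropy under the degrading channel $D^{\otimes N}$ applied to the pure global state (data processing: $S(B^N)\le S(E^N)$ when $B^N$ is a further processing of $E^N$), hence $I_c(\mathcal{N}^{\otimes N},\hat\rho)=S(B^N)-S(E^N)\le 0$ for all $N$ and all $\hat\rho$; taking the maximum and the regularized limit gives $Q_{\reg}(\mathcal{N})\le 0$, and since coherent information can always be made $\ge 0$ by sending half of a maximally entangled state through, $Q_{\reg}(\mathcal{N})=0$, so by the coding theorem $C_Q(\mathcal{N})=Q_{\reg}(\mathcal{N})=0$. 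I would likely present this entropic argument as the primary proof and mention the no-cloning picture as intuition, since it avoids delicate continuity constants while still using only results stated earlier in the excerpt.
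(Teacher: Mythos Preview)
The paper does not supply a proof of this theorem; it states the result with a citation to \cite{Caruso2006}. So there is no paper proof to compare against, and I evaluate your proposal on its own merits.

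Your no-cloning argument is correct in spirit and can be made rigorous along the lines you sketch; the monogamy step is indeed the delicate part, but standard estimates (Uhlmann plus Fuchs--van de Graaf) suffice.

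Your preferred entropic argument, however, contains a genuine error. You claim ``$S(B^N)\le S(E^N)$ when $B^N$ is a further processing of $E^N$,'' invoking ``monotonicity of entropy'' or ``data processing.'' But von Neumann entropy is \emph{not} monotone under quantum channels: a channel can either raise or lower entropy (a depolarizing channel raises it; a reset-to-$|0\rangle$ channel drops it to zero). There is simply no inequality between $S(D(\sigma))$ and $S(\sigma)$ in general, so anti-degradability alone gives no direct ordering of $S(\mathcal{N}(\hat\rho))$ and $S(\mathcal{N}^c(\hat\rho))$.

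The fix is to apply data processing to the \emph{coherent information} itself, which is monotone because it involves a reference system. Let $|\psi\rangle_{RA}$ purify $\hat\rho$ and let $\omega_{RBE}$ be the (pure) global state after the Stinespring isometry. Anti-degradability as an equality of \emph{channels} gives $\omega_{RB}=(\mathrm{id}_R\otimes D)(\omega_{RE})$ (not merely equality of the $B$ and $E$ marginals), so by data processing for coherent information $S(B)-S(RB)\le S(E)-S(RE)$. Purity of $\omega_{RBE}$ gives $S(RB)=S(E)$ and $S(RE)=S(B)$, hence $S(B)-S(E)\le S(E)-S(B)$, i.e.\ $I_c(\mathcal{N},\hat\rho)=S(B)-S(E)\le 0$. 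The same argument applies to $\mathcal{N}^{\otimes N}$ (anti-degradable via $D^{\otimes N}$), so $Q_{\reg}(\mathcal{N})\le 0$; a pure input gives $I_c=0$, whence $Q_{\reg}(\mathcal{N})=0$ and $C_Q(\mathcal{N})=0$ by the coding theorem. This is the argument you were reaching for, but the key inequality needs the reference system, not bare entropy monotonicity.
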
    

\subsection{Superadditivity of coherent information}

It is possible that a quantum channel is neither degradable nor anti-degradable. In this case, the one-shot coherent information does not necessarily equal the quantum capacity. Indeed, it has been demonstrated that the coherent information may be superadditive for many non-degradable channels and thus the regularization is essential \cite{DiVincenzo1998,Smith2007,Smith2008,Fern2008,Smith2011,Cubitt2015,
Lim2018,Leditzky2018,Lim2019,Bausch2020}. For these channels, only lower and upper bounds of the quantum capacity are known. We remark that such a superadditivity is a unique feature of quantum communication theory, not present in its classical counterpart. In Section \ref{section:Lower bound of the Gaussian thermal-loss channel capacity}, we will demonstrate that similar superadditive behavior is also observed for the practically relevant Gaussian thermal-loss channels. In particular, we will show that the coherent information of Gaussian thermal-loss channels is superadditive with respect to Gaussian input states \cite{Noh2020a}.

\subsection{Energy-constrained quantum capacity}

In the rest of this chapter, we will focus on Gaussian thermal-loss channels which act on bosonic Hilbert spaces. As previously discussed in the context of benchmarking various bosonic codes (see Section \ref{section:Benchmarking single-mode bosonic codes}), it is important to control the allowed average energy because bosonic codes with a larger energy generally perform better than the ones with a smaller energy. This is also true in the context of quantum communication. As we allow bosonic channels to support more energy (or excitations), the amount of information that the channels can carry increase. In practice, however, realistic quantum communication channels are only able to support states that have an average energy that is smaller than a certain critical value. Thus, if we evaluate the quantum capacity of these channels without imposing an energy constraint, we will overestimate the channel's information-transmission capability. Therefore, to understand the limitations coming from an energy constraint, it is important to generalize the notion of quantum capacity to energy-constrained scenarios.

In essence, the energy-constrained quantum capacity of a quantum channel equals the channel's energy-constrained regularized coherent information. That is, one can compute the energy-constrained quantum capacity by replacing the optimization over all input states in the regularized coherent information with an optimization over all input states that satisfy a desired energy constraint. For more details see Ref.\ \cite{Wilde2018}.

\section{Quantum capacity of bosonic pure-loss channels}
\label{section:Quantum capacity of bosonic pure-loss channels}

With all the necessary facts ready, let us now consider the quantum capacity of a bosonic pure-loss channel $\mathcal{N}[\eta,0]$ (see Definition \ref{definition:Bosonic pure loss channels} for the definition of bosonic pure-loss channels and Table \ref{table:excitation loss errors} for their various other representations). Note that we used bosonic pure-loss channels for benchmarking and optimizing single-mode bosonic codes in Chapter \ref{chapter:Benchmarking and optimizing single-mode bosonic codes} due to their experimental relevance. Note also that bosonic pure-loss channels are a subclass of Gaussian thermal-loss channels $\mathcal{N}[\eta,\nth]$ (defined in Definition \ref{definition:Gaussian loss channels}) with $\nth =0$. The quantum capacity of a general Gaussian thermal-loss channel with $\nth\neq 0$ will be considered in Sections \ref{section:Upper bound of the Gaussian thermal-loss channel capacity} and \ref{section:Lower bound of the Gaussian thermal-loss channel capacity}. Here, we will review the known results on the bosonic pure-loss channel capacity.     

First, we show that a bosonic pure-loss channel is degradable or anti-degradable. 

\begin{lemma}[Degradability or anti-degradability of bosonic pure-loss channels \cite{Caruso2006}]
A bosonic pure-loss channel $\mathcal{N}[\eta,0]$ with a transmissivity $\eta \in (\frac{1}{2},1]$ (or $\eta \in [0, \frac{1}{2}]$) is degardable (or anti-degradable). \label{lemma:Degradability or anti-degradability of bosonic pure-loss channels} 
\end{lemma}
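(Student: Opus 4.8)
The plan is to exploit the fact that a bosonic pure-loss channel has an especially simple Stinespring dilation --- a beam-splitter coupling to a vacuum environment --- together with the fact that the family of pure-loss channels is closed under composition. Throughout, ``up to a reversible Gaussian unitary on the environment'' will suffice, since degradability and anti-degradability are insensitive to post-composing the complementary channel with an isometry.

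First I would write down the dilation explicitly. The channel $\mathcal{N}[\eta,0]$ is realized by the passive Gaussian unitary $\hat{U}_{\mathrm{BS}}$ implementing, in the Heisenberg picture, $\hat{b} = \sqrt{\eta}\,\hat{a} + \sqrt{1-\eta}\,\hat{e}$ and $\hat{e}' = -\sqrt{1-\eta}\,\hat{a} + \sqrt{\eta}\,\hat{e}$ on the system mode $\hat{a}$ and an ancillary environment mode $\hat{e}$ prepared in the vacuum, followed by tracing out $\hat{e}'$. Tracing out the system output $\hat{b}$ instead yields a complementary channel $\mathcal{N}^{c}$. From the second relation, the map $\hat{a}\mapsto\hat{e}'$ is itself a pure-loss channel with transmissivity $1-\eta$; the overall sign is a passive $\pi$-phase rotation on $\hat{e}'$, which is a reversible Gaussian unitary and hence irrelevant here. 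Equivalently, using the $(\boldsymbol{T},\boldsymbol{N},\boldsymbol{d})$ characterization reviewed in Appendix~\ref{section:Gaussian channels}, $\mathcal{N}^{c}$ is characterized by $(\sqrt{1-\eta}\,\boldsymbol{I}_{2},\,\tfrac{1}{2}\eta\,\boldsymbol{I}_{2},\,0)$ up to a passive unitary, so $\mathcal{N}^{c} = \mathcal{N}[1-\eta,0]$ up to a reversible Gaussian unitary on $E$. Next I would record the composition rule $\mathcal{N}[\eta_{2},0]\cdot\mathcal{N}[\eta_{1},0] = \mathcal{N}[\eta_{1}\eta_{2},0]$, which follows either by cascading two beam-splitters or directly from the Gaussian-channel composition formulas $\boldsymbol{T}=\boldsymbol{T}_{2}\boldsymbol{T}_{1}$ and $\boldsymbol{N}=\boldsymbol{T}_{2}\boldsymbol{N}_{1}\boldsymbol{T}_{2}^{T}+\boldsymbol{N}_{2}$, which give $\boldsymbol{T}=\sqrt{\eta_{1}\eta_{2}}\,\boldsymbol{I}_{2}$ and $\boldsymbol{N}=\tfrac{1}{2}(1-\eta_{1}\eta_{2})\,\boldsymbol{I}_{2}$.

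With these two facts the result is immediate. For $\eta\in(\tfrac{1}{2},1]$, set $\eta_{D}\equiv(1-\eta)/\eta\in[0,1)$ and let the degrading map be $\mathcal{D}\equiv\mathcal{N}[\eta_{D},0]$ (composed with the benign environment unitary if needed). Then $\mathcal{D}\cdot\mathcal{N}[\eta,0]=\mathcal{N}[\eta\cdot(1-\eta)/\eta,0]=\mathcal{N}[1-\eta,0]=\mathcal{N}^{c}$, so $\mathcal{N}[\eta,0]$ is degradable. For $\eta\in[0,\tfrac{1}{2}]$, set $\eta_{D}\equiv\eta/(1-\eta)\in[0,1]$ and $\mathcal{D}\equiv\mathcal{N}[\eta_{D},0]$; then $\mathcal{D}\cdot\mathcal{N}^{c}=\mathcal{D}\cdot\mathcal{N}[1-\eta,0]=\mathcal{N}[\eta,0]$, so $\mathcal{N}[\eta,0]$ is anti-degradable. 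Note that $\mathcal{D}$ is automatically CPTP because $\eta_{D}\in[0,1]$ in each regime, so no separate physicality check is needed. At $\eta=\tfrac{1}{2}$ one has $\eta_{D}=1$, the identity, so the channel is both degradable and anti-degradable; at $\eta=1$ the channel is the identity and $\mathcal{N}^{c}$ is the map that reprepares the vacuum, recovered by $\eta_{D}=0$.

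The one step that genuinely requires care is the first: verifying that the channel-to-environment map is exactly $\mathcal{N}[1-\eta,0]$ and that the phase/sign ambiguities in the choice of beam-splitter convention and of dilation change $\mathcal{N}^{c}$ only by a reversible unitary on $E$. Once that is pinned down, the remaining content is just the multiplicativity of transmissivities under composition and the elementary observation that $(1-\eta)/\eta\le 1\iff \eta\ge\tfrac{1}{2}$ and $\eta/(1-\eta)\le 1\iff\eta\le\tfrac{1}{2}$, which is precisely the dichotomy in the statement.
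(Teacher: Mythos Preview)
Your proposal is correct and follows essentially the same approach as the paper: identify the complementary channel as $\mathcal{N}[1-\eta,0]$ and then use another pure-loss channel $\mathcal{N}[(1-\eta)/\eta,0]$ or $\mathcal{N}[\eta/(1-\eta),0]$ as the degrading map, with the dichotomy at $\eta=\tfrac{1}{2}$ arising from the requirement that this transmissivity lie in $[0,1]$. You are more explicit than the paper about the beam-splitter dilation, the phase ambiguity in the complementary channel, and the composition rule for pure-loss channels, but the logical skeleton is identical.
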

\begin{proof}
A complementary channel $\mathcal{N}^{c}[\eta,0]$ of the bosonic pure-loss channel $\mathcal{N}[\eta,0]$ is given by 
\begin{align}
\mathcal{N}^{c}[\eta,0] &= \mathcal{N}[1-\eta,0] . 
\end{align}
Thus for $\eta \in (\frac{1}{2},1]$, one can degrade the channel by a degrading map $\mathcal{D} = \mathcal{N}[\frac{1-\eta}{\eta},0]$ to get the complementary channel, i.e.,
\begin{align}
\mathcal{N}^{c}[\eta,0] &= \mathcal{N}[1-\eta,0]  = \mathcal{N}\Big{[} \frac{1-\eta}{\eta}, 0 \Big{]} \cdot \mathcal{N}[\eta,0],  
\end{align}
and thus the bosonic pure-loss channel $\mathcal{N}[\eta,0]$ is degradable if $\eta \in (\frac{1}{2},1]$. On the other hand if $\eta \in [0,\frac{1}{2}]$, one can degrade the complementary channel by a degrading map $\mathcal{N}[ \frac{\eta}{1-\eta},0 ]$ to get the channel, i.e., 
\begin{align}
\mathcal{N}[\eta,0] &= \mathcal{N}\Big{[}\frac{\eta}{1-\eta},0\Big{]} \cdot \mathcal{N}[1-\eta,0] , 
\end{align}
and thus the bosonic pure-loss channel $\mathcal{N}[\eta,0]$ is anti-degradable if $\eta \in [0,\frac{1}{2}]$. 
\end{proof}

For $\eta\in [0,\frac{1}{2}]$, the quantum capacity of the bosonic pure-loss channel $\mathcal{N}[\eta,0]$ vanishes because the channel is anti-degradable (see Theorem \ref{theorem:Quantum capacity of anti-degradable channels}). Thus for any loss probability $\gamma = 1-\eta \ge \frac{1}{2}$, the bosonic pure-loss channel $\mathcal{N}[\eta = 1-\gamma,0]$ does not have any information-transmission capability. This is related to the fact that the single-mode bosonic codes stop being useful as the loss probability approaches $50\%$ (see Fig.\ \ref{fig:benchmarking single-mode bosonic codes}). 

For $\eta \in (\frac{1}{2},1]$, the bosonic pure-loss channel is degradable so the one-shot coherent information of the channel equals the channel's quantum capacity (see Theorem \ref{theorem:Additivity of the coherent information of degradable channels}). However, we still need to maximize the coherent information $I_{c}(\mathcal{N}[\eta,0],\hat{\rho})$ by optimizing the input states $\hat{\rho}$.    

\begin{lemma}[Optimality of Gaussian input states \cite{Wolf2007}]
The coherent information of a bosonic pure-loss channel $\mathcal{N}[\eta,0]$ is maximized by a Gaussian state. \label{lemma:Optimality of Gaussian input states}
\end{lemma}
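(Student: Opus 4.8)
The plan is to prove that the coherent information $I_c(\mathcal{N}[\eta,0],\hat\rho)$ is maximized over all input states $\hat\rho$ (subject to a fixed mean photon number constraint $\mathrm{Tr}[\hat n\hat\rho]\le \bar n$) by a Gaussian state, in fact by a thermal state of mean photon number $\bar n$. The strategy rests on two ingredients already available in the excerpt: degradability of $\mathcal{N}[\eta,0]$ for $\eta\in(\tfrac12,1]$ (Lemma \ref{lemma:Degradability or anti-degradability of bosonic pure-loss channels}), which via Theorem \ref{theorem:Concavity of the coherent information of degradable channels} gives concavity of $\hat\rho\mapsto I_c(\mathcal{N}[\eta,0],\hat\rho)$; and the phase-space rotation symmetry of the pure-loss channel, i.e.\ $\mathcal{N}[\eta,0]$ commutes with the passive unitary $\hat U(\theta)=e^{i\theta\hat n}$ in the sense that $\mathcal{N}[\eta,0](\hat U(\theta)\hat\rho\hat U^\dagger(\theta))=\hat U(\theta)\mathcal{N}[\eta,0](\hat\rho)\hat U^\dagger(\theta)$, and the same holds for the complementary channel $\mathcal{N}^c[\eta,0]=\mathcal{N}[1-\eta,0]$.

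First I would fix the energy constraint $\mathrm{Tr}[\hat n\hat\rho]\le\bar n$ and argue that the optimum may be taken to be phase-invariant. Given any candidate $\hat\rho$, form the phase-averaged (``dephased'') state $\bar{\hat\rho}=\tfrac{1}{2\pi}\int_0^{2\pi}d\theta\, \hat U(\theta)\hat\rho\hat U^\dagger(\theta)$. Since the photon number is unchanged by $\hat U(\theta)$, $\bar{\hat\rho}$ obeys the same energy constraint; since $I_c$ is invariant under the simultaneous unitary rotation on input and output (entropies are unitarily invariant) and concave in the input, Jensen's inequality gives $I_c(\mathcal{N}[\eta,0],\bar{\hat\rho})\ge \tfrac{1}{2\pi}\int d\theta\, I_c(\mathcal{N}[\eta,0],\hat U(\theta)\hat\rho\hat U^\dagger(\theta)) = I_c(\mathcal{N}[\eta,0],\hat\rho)$. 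Hence it suffices to optimize over phase-invariant (hence diagonal-in-Fock-basis) states. Next, among phase-invariant states with $\mathrm{Tr}[\hat n\hat\rho]=\bar n$, I would use an entropy-comparison argument: write $I_c=S(\mathcal{N}[\eta,0](\hat\rho))-S(\mathcal{N}[1-\eta,0](\hat\rho))$ and show this is bounded above by its value on the thermal state $\hat\rho_{\mathrm{th}}(\bar n)$. The natural tool is the extremality of Gaussian states: among all states with a fixed first and second moments (here, fixed covariance matrix $\mathrm{diag}(\bar n+\tfrac12,\bar n+\tfrac12)$ after phase-averaging), the Gaussian state maximizes the von Neumann entropy, and this extremality propagates through Gaussian channels. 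Concretely, $\mathcal{N}[\eta,0](\hat\rho)$ and $\mathcal{N}[\eta,0](\hat\rho_{\mathrm{th}})$ have the same covariance matrix, and likewise for the complementary channel, so one compares $S(\mathcal{N}[\eta,0](\hat\rho))\le S(\mathcal{N}[\eta,0](\hat\rho_{\mathrm{th}}))$ directly, while the subtracted term requires more care.

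The main obstacle is precisely the subtracted entropy $S(\mathcal{N}[1-\eta,0](\hat\rho))$: naive Gaussian-entropy maximization goes the wrong way for a term with a minus sign. The clean route around this is to invoke the general result of Wolf, P\'erez-Garc\'ia, and Giedke (the reference \cite{Wolf2007} cited in the lemma): for a degradable Gaussian channel, the single-letter coherent information, optimized under a quadratic (energy) constraint, is attained by a Gaussian input. Their proof uses the decomposition $I_c(\mathcal{N},\hat\rho)=S(\mathcal{N}(\hat\rho))-S(\mathcal{N}^c(\hat\rho))$ together with the fact that for a degradable channel $\mathcal{N}^c=\mathcal{D}\cdot\mathcal{N}$, so $I_c(\mathcal{N},\hat\rho)=S(\mathcal{N}(\hat\rho))-S(\mathcal{D}(\mathcal{N}(\hat\rho)))$, which is a conditional-entropy-like quantity in the output state $\sigma=\mathcal{N}(\hat\rho)$; one then shows this functional $\sigma\mapsto S(\sigma)-S(\mathcal{D}(\sigma))$ is maximized, under the induced second-moment constraint on $\sigma$, by a Gaussian $\sigma$, using subadditivity/strong subadditivity of entropy and the Gaussian extremality theorem applied to a purification. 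So the plan is: (i) reduce to phase-invariant inputs via concavity plus symmetry; (ii) rewrite $I_c$ in the degradable form as a functional of the channel output $\sigma=\mathcal{N}[\eta,0](\hat\rho)$; (iii) apply the Gaussian-extremality argument of \cite{Wolf2007} to conclude $\sigma$, hence $\hat\rho$, may be taken Gaussian; (iv) finally, within the Gaussian family, a short covariance-matrix optimization under the photon-number constraint picks out the thermal state $\hat\rho_{\mathrm{th}}(\bar n)$, yielding the explicit formula $Q(\mathcal{N}[\eta,0])=\max\{0,\,g(\eta\bar n)-g((1-\eta)\bar n)\}$ with $g(x)=(x+1)\log_2(x+1)-x\log_2 x$. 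I expect step (iii) to carry essentially all the analytic weight, and I would lean directly on \cite{Wolf2007} for it rather than reproving the Gaussian extremality machinery from scratch.
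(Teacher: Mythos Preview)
Your proposal is essentially correct and, at its core, matches the paper: the paper gives no self-contained proof of this lemma but simply cites \cite{Wolf2007}, and your step (iii) does the same. The phase-averaging reduction you describe in step (i) is not needed to prove this lemma as stated (which only claims the optimizer is Gaussian, not thermal); the paper in fact isolates that argument as the \emph{next} lemma (Lemma \ref{lemma:Optimality of diagonal states in the Fock basis}), and then combines the two in Theorem \ref{theorem:Energy-constrained quantum capacity of bosonic pure-loss channels} to conclude the optimizer is thermal. So your plan merges Lemmas \ref{lemma:Optimality of Gaussian input states} and \ref{lemma:Optimality of diagonal states in the Fock basis} and part of Theorem \ref{theorem:Energy-constrained quantum capacity of bosonic pure-loss channels} into one argument, which is fine but slightly more than the lemma itself requires.

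One small clarification: your step (iv), the covariance-matrix optimization within the Gaussian family, is logically redundant once you have both ``optimizer is Gaussian'' and ``optimizer is Fock-diagonal,'' since the only Fock-diagonal Gaussian states are thermal. The paper exploits exactly this observation rather than performing a separate covariance optimization.
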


Furthermore, by using the concavity of the coherent information for degradable channels, we can show that a diagonal state in the Fock basis maximizes the coherent information: 

\begin{lemma}[Optimality of diagonal states in the Fock basis \cite{Noh2019}]
Let $\mathcal{N}[\eta,0]$ be a bosonic pure-loss channel and $\hat{\rho}=\sum_{m,n=0}^{\infty}\rho_{mn}|m\rangle\langle n|$ be an arbitrary bosonic state represented in the Fock basis. Then, we have
\begin{equation}
I_{c}(\mathcal{N}[\eta,0],\hat{\rho}) \le I_{c}\Big{(} \mathcal{N}[\eta,0], \sum_{n=0}^{\infty}\rho_{nn}|n\rangle\langle n|\Big{)}.  
\end{equation}
Thus, the coherent information of a bosonic pure-loss channel is maximized by a diagonal state in the Fock basis. \label{lemma:Optimality of diagonal states in the Fock basis}
\end{lemma}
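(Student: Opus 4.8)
The plan is to exploit the phase-space rotation covariance of the pure-loss channel together with the concavity of the coherent information for degradable channels (Theorem~\ref{theorem:Concavity of the coherent information of degradable channels}). I will restrict attention to $\eta\in(\tfrac{1}{2},1]$, where $\mathcal{N}[\eta,0]$ is degradable by Lemma~\ref{lemma:Degradability or anti-degradability of bosonic pure-loss channels}; this is the only regime relevant to the quantum capacity, since for $\eta\le\tfrac12$ the channel is anti-degradable so $C_{Q}(\mathcal{N}[\eta,0])=0$ by Theorem~\ref{theorem:Quantum capacity of anti-degradable channels} (and at $\eta=\tfrac12$ one has $\mathcal{N}^{c}[\eta,0]=\mathcal{N}[\eta,0]$, hence $I_{c}\equiv0$ and the inequality is a trivial equality).

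The key steps, in order: (i) Establish phase covariance of the channel, i.e.\ $\mathcal{N}[\eta,0]\big(e^{i\theta\hat{n}}\hat{\rho}\,e^{-i\theta\hat{n}}\big)=e^{i\theta\hat{n}}\,\mathcal{N}[\eta,0](\hat{\rho})\,e^{-i\theta\hat{n}}$ for every $\theta\in\mathbb{R}$. This follows from the Kraus form $\hat{N}_{\ell}=\sqrt{(1-\eta)^{\ell}/\ell!}\,\eta^{\hat{n}/2}\hat{a}^{\ell}$, the identity $e^{i\theta\hat{n}}\hat{a}^{\ell}=e^{-i\ell\theta}\hat{a}^{\ell}e^{i\theta\hat{n}}$, and $[e^{i\theta\hat{n}},\eta^{\hat{n}/2}]=0$, since the $\ell$-dependent phases arising from commuting $\hat{a}^{\ell}$ past $e^{i\theta\hat{n}}$ cancel between $\hat{N}_{\ell}$ and $\hat{N}_{\ell}^{\dagger}$. (ii) Note that the complementary channel is $\mathcal{N}^{c}[\eta,0]=\mathcal{N}[1-\eta,0]$ (as used in the proof of Lemma~\ref{lemma:Degradability or anti-degradability of bosonic pure-loss channels}), which is phase covariant by the same argument. (iii) Since the von Neumann entropy is unitarily invariant, combining (i) and (ii) gives $I_{c}\big(\mathcal{N}[\eta,0],\,e^{i\theta\hat{n}}\hat{\rho}\,e^{-i\theta\hat{n}}\big)=S\big(\mathcal{N}[\eta,0](\hat{\rho})\big)-S\big(\mathcal{N}^{c}[\eta,0](\hat{\rho})\big)=I_{c}\big(\mathcal{N}[\eta,0],\hat{\rho}\big)$ for all $\theta$. (iv) Use the phase twirl $\frac{1}{2\pi}\int_{0}^{2\pi}d\theta\,e^{i\theta\hat{n}}\hat{\rho}\,e^{-i\theta\hat{n}}=\sum_{n=0}^{\infty}\rho_{nn}|n\rangle\langle n|$, which annihilates every off-diagonal term $\rho_{mn}$ with $m\ne n$. (v) Apply the concavity of $\hat{\rho}\mapsto I_{c}(\mathcal{N}[\eta,0],\hat{\rho})$ to this uniform convex combination: $I_{c}\big(\mathcal{N}[\eta,0],\sum_{n}\rho_{nn}|n\rangle\langle n|\big)\ge\frac{1}{2\pi}\int_{0}^{2\pi}d\theta\,I_{c}\big(\mathcal{N}[\eta,0],e^{i\theta\hat{n}}\hat{\rho}\,e^{-i\theta\hat{n}}\big)=I_{c}\big(\mathcal{N}[\eta,0],\hat{\rho}\big)$, which is the claimed inequality. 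The same chain applies verbatim under an average-photon-number constraint, since the twirl does not change $\mathrm{Tr}[\hat{n}\hat{\rho}]$, so it also yields the energy-constrained version of the statement.

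The main obstacle is technical rather than conceptual: Theorem~\ref{theorem:Concavity of the coherent information of degradable channels} is stated for \emph{finite} convex combinations, whereas step (v) requires it for the continuous uniform average over $U(1)$. I would close this gap by discretizing the twirl (a uniform average over $e^{i2\pi k\hat{n}/M}$, $k=0,\dots,M-1$, which already diagonalizes a finite-dimensional truncation of $\hat{\rho}$ once $M$ exceeds its photon-number support), running the finite-$M$ concavity inequality on that truncation, and then passing to the limit using lower semicontinuity of the coherent information together with continuity of $\mathcal{N}[\eta,0]$ in trace norm. A secondary, minor point is the usual care needed with infinite-dimensional entropies: $S(\mathcal{N}[\eta,0](\hat{\rho}))$ and $S(\mathcal{N}^{c}[\eta,0](\hat{\rho}))$ need not be separately finite, so the differences in step (iii) and the limits in step (v) should be handled on the energy-bounded sets where $I_{c}$ is finite and continuous — which is in any case the natural setting here, since the quantum capacity of interest is the energy-constrained one.
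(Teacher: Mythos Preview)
Your proposal is correct and follows essentially the same argument as the paper: phase-rotation covariance of $\mathcal{N}[\eta,0]$ and its complement gives $I_{c}(\mathcal{N}[\eta,0],e^{i\theta\hat n}\hat\rho e^{-i\theta\hat n})=I_{c}(\mathcal{N}[\eta,0],\hat\rho)$, and concavity of the coherent information (degradability) applied to the uniform $U(1)$ twirl yields the inequality. The paper's proof is terser and does not spell out the finite-versus-continuous-mixture and infinite-dimensional entropy issues you flag, but the strategy is identical.
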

\begin{proof}
Define $\hat{\rho}_{\theta} \equiv \mathcal{U}[\theta]\hat{\rho} = e^{i\theta\hat{n}}\hat{\rho}e^{-i\theta\hat{n}}$ and let $p(\theta)$ be a probability density function defined over $\theta\in [0,2\pi)$. Since a bosonic pure-loss channel is degradable, its coherent information $I_{c}(\mathcal{N}[\eta,0],\hat{\rho})$ is concave in the input state (see Theorem \ref{theorem:Concavity of the coherent information of degradable channels}): 
\begin{equation}
\int_{0}^{2\pi} d\theta p(\theta)I_{c}(\mathcal{N}[\eta,0],\hat{\rho}_{\theta}) \le I_{c}\Big{(}\mathcal{N}[\eta,0] , \int_{0}^{2\pi} d\theta p(\theta)\hat{\rho}_{\theta}\Big{)}.  \label{eq:thermal sufficiency proof intermediate}
\end{equation}
The rotational invariance of bosonic pure-loss channels implies $\mathcal{N}[\eta,0](\hat{\rho}_{\theta}) = \mathcal{U}[\theta]\cdot \mathcal{N}[\eta,0](\hat{\rho})$ and similarly $\mathcal{N}^{c}[\eta,0](\hat{\rho}_{\theta}) = \mathcal{U}[\theta]\cdot \mathcal{N}^{c}[\eta,0](\hat{\rho})$. Since quantum entropy is invariant under a unitary transformation (i.e., $H(\hat{\rho}) = H(\hat{U}\hat{\rho}\hat{U}^{\dagger})$ for a unitary $\hat{U}$), we have $I_{c}(\mathcal{N}[\eta,0],\hat{\rho}_{\theta}) = I_{c}(\mathcal{N}[\eta,0],\hat{\rho})$. The left hand side of Eq.\ \eqref{eq:thermal sufficiency proof intermediate} is then given by $I_{c}(\mathcal{N}[\eta,0],\hat{\rho})$ since $\int_{0}^{2\pi}p(\theta)=1$. Choosing $p(\theta)$ to be a flat distribution $p(\theta) = 1/(2\pi)$, we find 
\begin{align}
\int_{0}^{2\pi} d\theta p(\theta) \hat{\rho}_{\theta} &= \sum_{m,n=0}^{\infty}\frac{1}{2\pi} \int_{0}^{2\pi}  d\theta e^{i\theta(m-n)} \rho_{mn}|m\rangle\langle n| = \sum_{n=0}^{\infty} \rho_{nn}|n\rangle\langle n|, \label{eq:rotation twirling} 
\end{align}
where we used $\int_{0}^{2\pi}  d\theta e^{i\theta(m-n)}=2\pi \delta_{mn}$ to derive the last equality. Plugging Eq.\ \eqref{eq:rotation twirling} into the right hand side of Eq.\ \eqref{eq:thermal sufficiency proof intermediate}, the lemma follows. 
\end{proof}

Putting all these facts together, we can finally determine the energy-constrained quantum capacity of bosonic pure-loss channels. 

\begin{theorem}[Energy-constrained quantum capacity of bosonic pure-loss channels] 
The quantum capacity of a bosonic pure-loss channel $\mathcal{N}[\eta,0]$ subject to an energy constraint $\mathrm{Tr}[\hat{\rho}\hat{n}] \le \bar{n}$ is given by
\begin{align}
C_{Q}^{n \le \bar{n}}( \mathcal{N}[\eta,0] ) = \max\Big{[}  g(\eta \bar{n}) - g((1-\eta)\bar{n}) , 0 \Big{]},  
\end{align}
where $g(x)$ is the von Neumann entropy of a thermal state $\hat{\tau}(x) \equiv \sum_{n=0}^{\infty} \frac{x^{n}}{(x+1)^{n+1}}|n\rangle\langle n|$:
\begin{align}
g(x) \equiv S(\hat{\tau}(x)) =  (x+1)\log_{2}(x+1) - x\log_{2}x. 
\end{align}
In the infinite-energy limit (i.e., $\bar{n}\rightarrow \infty$), we have 
\begin{align}
C_{Q}( \mathcal{N}[\eta,0] )  = \lim_{\bar{n}\rightarrow \infty} C_{Q}^{ n \le \bar{n}}( \mathcal{N}[\eta,0] ) = \max\Big{[}  \log_{2}\Big{(} \frac{\eta}{1-\eta} \Big{)} , 0  \Big{]} . 
\end{align} \label{theorem:Energy-constrained quantum capacity of bosonic pure-loss channels}
\end{theorem}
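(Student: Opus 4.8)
The plan is to assemble Theorem~\ref{theorem:Energy-constrained quantum capacity of bosonic pure-loss channels} from the four ingredients already established: the quantum-capacity $=$ regularized-coherent-information theorem, the degradability/anti-degradability dichotomy of Lemma~\ref{lemma:Degradability or anti-degradability of bosonic pure-loss channels}, the optimality of Gaussian inputs (Lemma~\ref{lemma:Optimality of Gaussian input states}), and the optimality of Fock-diagonal inputs (Lemma~\ref{lemma:Optimality of diagonal states in the Fock basis}). First I would dispose of the anti-degradable regime: for $\eta\in[0,\tfrac12]$ the channel is anti-degradable by Lemma~\ref{lemma:Degradability or anti-degradability of bosonic pure-loss channels}, hence $C_Q=0$ by Theorem~\ref{theorem:Quantum capacity of anti-degradable channels}, and the claimed formula $\max[g(\eta\bar n)-g((1-\eta)\bar n),0]$ reduces to $0$ there because $g$ is increasing and $\eta\le 1-\eta$. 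So the content is in the range $\eta\in(\tfrac12,1]$, where the channel is degradable.

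For $\eta\in(\tfrac12,1]$, degradability and Theorem~\ref{theorem:Additivity of the coherent information of degradable channels} (in its energy-constrained form, which I would invoke as the energy-constrained analogue sketched around Ref.~\cite{Wilde2018}) give $C_Q^{n\le\bar n}(\mathcal N[\eta,0]) = \max_{\hat\rho:\,\mathrm{Tr}[\hat\rho\hat n]\le\bar n} I_c(\mathcal N[\eta,0],\hat\rho)$, with no regularization needed. The next step is to reduce this optimization to a single-parameter one. By Lemma~\ref{lemma:Optimality of Gaussian input states} it suffices to optimize over Gaussian states, and by Lemma~\ref{lemma:Optimality of diagonal states in the Fock basis} it suffices to optimize over Fock-diagonal states; the intersection of ``Gaussian'' and ``Fock-diagonal'' is exactly the family of thermal states $\hat\tau(N)$ (zero first moments, isotropic covariance), so the optimum is attained on $\{\hat\tau(N):N\ge 0\}$. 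I would then compute $I_c(\mathcal N[\eta,0],\hat\tau(N))$ explicitly: since $\mathcal N[\eta,0](\hat\tau(N))=\hat\tau(\eta N)$ and the complementary channel output is $\mathcal N[1-\eta,0](\hat\tau(N))=\hat\tau((1-\eta)N)$, one gets $I_c=g(\eta N)-g((1-\eta)N)$ using the definition $g(x)=S(\hat\tau(x))$. For $\eta>\tfrac12$ this is nonnegative and, as a function of $N$, I expect it to be monotonically increasing (this is where a short calculation with the derivative of $g$, namely $g'(x)=\log_2((x+1)/x)>0$ together with $\eta>1-\eta$, seals it), so the energy constraint is saturated at $N=\bar n$, yielding $C_Q^{n\le\bar n}=g(\eta\bar n)-g((1-\eta)\bar n)$. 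Combining with the anti-degradable case gives the stated $\max[\,\cdot\,,0]$ form uniformly in $\eta$.

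Finally, for the infinite-energy limit I would use the asymptotics $g(x)=\log_2 x+\log_2 e+o(1)$ as $x\to\infty$, so that $g(\eta\bar n)-g((1-\eta)\bar n)\to \log_2(\eta/(1-\eta))$ for $\eta>\tfrac12$, and $0$ for $\eta\le\tfrac12$; monotonicity of $C_Q^{n\le\bar n}$ in $\bar n$ justifies passing the limit inside. The main obstacle I anticipate is not any single step but making the reduction to thermal inputs fully rigorous: Lemma~\ref{lemma:Optimality of Gaussian input states} and Lemma~\ref{lemma:Optimality of diagonal states in the Fock basis} are each proved for a \emph{fixed} channel, and I need them to hold jointly and compatibly under the energy constraint (the Fock-diagonalization twirl in Lemma~\ref{lemma:Optimality of diagonal states in the Fock basis} is energy-preserving, which is the key fact that lets me chain the two reductions without violating $\mathrm{Tr}[\hat\rho\hat n]\le\bar n$), plus I must be careful that the energy-constrained additivity/single-letterization statement is the one actually available for degradable bosonic channels. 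A secondary technical point is the monotonicity of $g(\eta N)-g((1-\eta)N)$ in $N$ for $\eta>\tfrac12$, which is elementary but needs the explicit form of $g'$.
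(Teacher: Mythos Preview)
Your proposal is correct and follows essentially the same route as the paper: split into the anti-degradable regime $\eta\le\tfrac12$ (capacity zero) and the degradable regime $\eta>\tfrac12$, use additivity to single-letterize, combine Lemmas~\ref{lemma:Optimality of Gaussian input states} and~\ref{lemma:Optimality of diagonal states in the Fock basis} to reduce to thermal inputs, evaluate $I_c(\mathcal N[\eta,0],\hat\tau(N))=g(\eta N)-g((1-\eta)N)$, and invoke monotonicity in $N$ to saturate the energy constraint. Your explicit attention to the energy-preservation of the Fock-diagonal twirl and to the energy-constrained version of additivity is more careful than the paper's own presentation, but the argument is the same.
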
 

\begin{proof}
For $\eta \in [0,\frac{1}{2}]$, the quantum capacity of the bosonic pure-loss channel $\mathcal{N}[\eta,0]$ vanishes. For $\eta \in (\frac{1}{2},1]$, combining the results in Lemmas \ref{lemma:Optimality of Gaussian input states} and \ref{lemma:Optimality of diagonal states in the Fock basis}, we find the optimal input state that maximizes the coherent information of the bosonic pure-loss channel should be a thermal state, i.e., a Gaussian state that is diagonal in the Fock basis. Thus, we have 
\begin{align}
C_{Q}^{n \le \bar{n}}( \mathcal{N}[\eta,0] ) &= \max_{0\le x\le\bar{n}} I_{c}( \mathcal{N}[\eta,0], \hat{\tau}(x) ) 
\nonumber\\
&= \max_{0\le x\le\bar{n}} ( g(\eta x) - g((1-\eta)x) ) = g(\eta \bar{n}) - g((1-\eta)\bar{n}) .  
\end{align}  
See, e.g., Refs.\ \cite{Bombelli1986,Holevo2001} for the second equality. The last equality follows from the fact that $g(\eta x) - g((1-\eta)x)$ monotonically increases in $x$ for any $\eta\in(\frac{1}{2},1]$. In the energy-unconstrained case, by using $g(x) = \log_{2}(ex) + \mathcal{O}(\frac{1}{x})$, we find
\begin{align}
C_{Q}( \mathcal{N}[\eta,0] ) &= \lim_{\bar{n}\rightarrow \infty} C_{Q}^{n \le \bar{n}}( \mathcal{N}[\eta,0] ) 
\nonumber\\
&= \lim_{\bar{n}\rightarrow \infty}  \Big{[} g(\eta \bar{n}) - g((1-\eta)\bar{n}) \Big{]}
\nonumber\\
&= \lim_{\bar{n}\rightarrow \infty}  \Big{[}  \log_{2}\Big{(} \frac{\eta}{1-\eta} \Big{)} + \mathcal{O}\Big{(}\frac{1}{\bar{n}}\Big{)} \Big{]} = \log_{2}\Big{(} \frac{\eta}{1-\eta} \Big{)}. 
\end{align} 
Combining these results with the fact that the quantum capacity vanishes when $\eta\in[0,\frac{1}{2}]$, the theorem follows. 
\end{proof}

\section{Upper bounds of the Gaussian thermal-loss channel capacity}
\label{section:Upper bound of the Gaussian thermal-loss channel capacity}

As shown above, the quantum capacity of a bosonic pure-loss channel is analytically determined thanks to the channel's degradability or anti-degradability. However, a Gaussian thermal-loss channel $\mathcal{N}[\eta,\nth]$ with $\nth \neq 0$ is neither degradable nor anti-degradable. Thus, its coherent information is not necessarily additive and its quantum capacity has not been analytically determined. Various upper bounds of the Gaussian thermal-loss channel capacity have been established \cite{Holevo2001,Pirandola2017,Sharma2018,Rosati2018,Noh2019}. Here, we present the upper bounds obtained by using variations of data-processing arguments \cite{Sharma2018,Rosati2018,Noh2019}.  

A key step towards establishing the data-processing upper bounds is to decompose a Gaussian thermal-loss channel $\mathcal{N}[\eta,\nth]$ in terms of a bosonic pure-loss channel and a quantum-limited amplification channel. This decomposition is useful as it allows us to relate the quantum capacity of a Gaussian thermal-loss channel (which we want to evaluate) with the quantum capacity of a bosonic pure-loss channel (which we understand already).    

\begin{lemma}[Thermal-loss  = Amplification + Pure-loss \cite{Sharma2018,Noh2019}]
A Gaussian thermal-loss channel $\mathcal{N}[\eta,\nth]$ can be decomposed into a bosonic pure-loss channel and a quantum-limited amplification channel as follows: 
\begin{align}
\mathcal{N}[\eta,\nth] = \mathcal{A}[ G',0 ] \cdot \mathcal{N}[\eta', 0 ] ,  
\end{align}
where $G'$ and $\eta'$ are given by 
\begin{align}
G' &= (1-\eta)\nth + 1 \,\,\, \textrm{and} \,\,\,  \eta' = \frac{\eta}{G'} = \frac{\eta}{(1-\eta)\nth + 1}. 
\end{align} \label{lemma:Thermal-loss  = Amplification + Pure-loss}
\end{lemma}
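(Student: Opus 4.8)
The plan is to carry out the entire argument at the level of the phase-space characterization $(\boldsymbol{T},\boldsymbol{N},\boldsymbol{d})$ of Gaussian channels, using the fact (reviewed in Appendix~\ref{section:Gaussian channels}) that a Gaussian channel is uniquely determined by this triple, together with the composition rule that composing $(\boldsymbol{T}_{1},\boldsymbol{N}_{1},\boldsymbol{d}_{1})$ followed by $(\boldsymbol{T}_{2},\boldsymbol{N}_{2},\boldsymbol{d}_{2})$ yields $(\boldsymbol{T}_{2}\boldsymbol{T}_{1},\,\boldsymbol{T}_{2}\boldsymbol{N}_{1}\boldsymbol{T}_{2}^{T}+\boldsymbol{N}_{2},\,\boldsymbol{T}_{2}\boldsymbol{d}_{1}+\boldsymbol{d}_{2})$. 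First I would record the three relevant characterizations: the bosonic pure-loss channel $\mathcal{N}[\eta',0]$ has $(\sqrt{\eta'}\,\boldsymbol{I}_{2},\tfrac{1}{2}(1-\eta')\boldsymbol{I}_{2},0)$ (Table~\ref{table:excitation loss errors}), the quantum-limited amplification channel $\mathcal{A}[G',0]$ has $(\sqrt{G'}\,\boldsymbol{I}_{2},\tfrac{1}{2}(G'-1)\boldsymbol{I}_{2},0)$ (Definition~\ref{definition:quantum limited amplification}), and the Gaussian thermal-loss channel $\mathcal{N}[\eta,\nth]$ has $(\sqrt{\eta}\,\boldsymbol{I}_{2},\tfrac{1}{2}(1-\eta)(2\nth+1)\boldsymbol{I}_{2},0)$ (Definition~\ref{definition:Gaussian loss channels}).

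Second, I would apply the composition rule to $\mathcal{A}[G',0]\cdot\mathcal{N}[\eta',0]$ (loss acting first, amplification second). The transmission matrix becomes $\sqrt{G'}\sqrt{\eta'}\,\boldsymbol{I}_{2}=\sqrt{G'\eta'}\,\boldsymbol{I}_{2}$, which equals $\sqrt{\eta}\,\boldsymbol{I}_{2}$ precisely because $\eta'=\eta/G'$. The added-noise matrix becomes $\big[\tfrac{1}{2}G'(1-\eta')+\tfrac{1}{2}(G'-1)\big]\boldsymbol{I}_{2}=\tfrac{1}{2}(2G'-G'\eta'-1)\boldsymbol{I}_{2}=\tfrac{1}{2}(2G'-\eta-1)\boldsymbol{I}_{2}$, and the only remaining computation is the elementary identity $2G'-\eta-1=2[(1-\eta)\nth+1]-\eta-1=(1-\eta)(2\nth+1)$, which matches the added noise of $\mathcal{N}[\eta,\nth]$. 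The displacement vector is $0$ for all three channels, so the composite channel $\mathcal{A}[G',0]\cdot\mathcal{N}[\eta',0]$ carries exactly the triple of $\mathcal{N}[\eta,\nth]$.

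Third, I would verify that the decomposition is into bona fide channels of the stated types: $G'=(1-\eta)\nth+1\ge 1$ since $\eta\in[0,1]$ and $\nth\ge 0$, so $\mathcal{A}[G',0]$ is a legitimate quantum-limited amplification channel; and $\eta'=\eta/G'$ satisfies $0\le\eta'\le 1$ because $0\le\eta\le G'$, so $\mathcal{N}[\eta',0]$ is a legitimate bosonic pure-loss channel. Invoking uniqueness of the Gaussian-channel characterization then yields $\mathcal{N}[\eta,\nth]=\mathcal{A}[G',0]\cdot\mathcal{N}[\eta',0]$, which is the claim.

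I do not anticipate a serious obstacle; the points demanding care are purely bookkeeping ones — getting the order of composition right (amplification acts second, which is why $\boldsymbol{T}_{2}$ conjugates $\boldsymbol{N}_{1}$), and using the correct convention $\tfrac{1}{2}(1-\eta)(2\nth+1)$ for the added noise of the thermal-loss channel so that the identity $2G'-\eta-1=(1-\eta)(2\nth+1)$ closes cleanly. As a remark, one could note that an analogous ``amplification-first'' decomposition $\mathcal{N}[\eta,\nth]=\mathcal{N}[\eta'',0]\cdot\mathcal{A}[G'',0]$ exists and is obtained by the same method, but the ``loss-first'' form above is the one used for the data-processing upper bound.
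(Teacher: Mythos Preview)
Your proposal is correct. The paper does not actually supply a proof of this lemma (it is stated with citations to \cite{Sharma2018,Noh2019} and used immediately in Theorem~\ref{theorem:data processing bound}), but your argument via the $(\boldsymbol{T},\boldsymbol{N},\boldsymbol{d})$ characterization is exactly the methodology the paper uses for the analogous decomposition in Theorem~\ref{theorem:loss plus amplification is displacement post-amplification}, so your approach is fully in line with the paper's own techniques.
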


Then, a data-processing argument lets us to upper bound the quantum capacity of the Gaussian thermal-loss channel $\mathcal{N}[\eta,\nth]$ by the quantum capacity of the bosonic pure-loss channel $\mathcal{N}[\eta', 0 ] $. 

\begin{theorem}[Data-processing bound of the Gaussian thermal-loss channel capacity \cite{Sharma2018}]
The quantum capacity of a Gaussian thermal-loss channel $\mathcal{N}[\eta,\nth]$ subject to an energy constraint $\mathrm{Tr}[\hat{\rho} \hat{n}] \le \bar{n}$ is upper bounded by the following data-processing bound $Q^{n\le \bar{n}}_{\scriptsize{\textrm{DP}}}(\eta,\nth)$:
\begin{align}
C_{Q}^{n \le \bar{n}}(\mathcal{N}[\eta,\nth]) &\le Q^{n \le \bar{n}}_{\scriptsize{\textrm{DP}}}(\eta,\nth) \equiv C_{Q}^{ n \le \bar{n}}(\mathcal{N}[\eta',0]),   \label{eq:data processing upper bound energy constrained}
\end{align}
where $\eta' = \frac{\eta}{(1-\eta)\nthtiny+1}$. More explicitly, $Q^{n\le \bar{n}}_{\scriptsize{\textrm{DP}}}(\eta,\nth)$ is given by
\begin{align}
Q^{n\le \bar{n}}_{\scriptsize{\textrm{DP}}}(\eta,\nth) &= \max\Big{[} g\Big{(} \frac{\eta \bar{n}}{(1-\eta)\nth+1} \Big{)} - g\Big{(}  \frac{ (1-\eta)(\nth+1)\bar{n} }{(1-\eta)\nth+1} \Big{)} , 0  \Big{]} . 
\end{align}
In the energy-unconstrained case (i.e., $\bar{n}\rightarrow\infty$), we have
\begin{align}
C_{Q}(\mathcal{N}[\eta,\nth]) &\le \lim_{\bar{n}\rightarrow \infty}  Q^{n\le \bar{n}}_{\scriptsize{\textrm{DP}}}(\eta,\nth) = \max\Big{[} \log_{2}\Big{(}  \frac{\eta }{ (1-\eta)(\nth+1) } \Big{)} , 0 \Big{]} .  \label{eq:data processing upper bound energy unconstrained}
\end{align} \label{theorem:data processing bound}
\end{theorem}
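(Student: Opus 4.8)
The plan is to combine Lemma~\ref{lemma:Thermal-loss  = Amplification + Pure-loss} with a data-processing (monotonicity) argument for the quantum capacity. First I would write $\mathcal{N}[\eta,\nth] = \mathcal{A}[G',0]\cdot\mathcal{N}[\eta',0]$ with $G' = (1-\eta)\nth+1$ and $\eta' = \eta/G'$, as guaranteed by the lemma. The key observation is that the quantum capacity cannot increase under post-processing by a fixed CPTP map: if $\mathcal{M}' = \mathcal{B}\cdot\mathcal{M}$ for some channel $\mathcal{B}$, then any $(N,R-\delta,\epsilon)$ code for $\mathcal{M}$ yields an $(N,R-\delta,\epsilon)$ code for $\mathcal{M}'$ simply by absorbing $\mathcal{B}^{\otimes N}$ into the recovery map $\mathcal{R}$; hence $C_Q(\mathcal{M}')\le C_Q(\mathcal{M})$. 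Since here $\mathcal{M} = \mathcal{N}[\eta',0]$ and the extra piece $\mathcal{B} = \mathcal{A}[G',0]$ sits on the \emph{output} side, this gives $C_Q(\mathcal{N}[\eta,\nth]) \le C_Q(\mathcal{N}[\eta',0])$, and the same inequality holds under an input energy constraint $\mathrm{Tr}[\hat\rho\hat n]\le\bar n$ since that constraint restricts the codewords entering $\mathcal{N}[\eta',0]$ identically in both scenarios. This establishes Eq.~\eqref{eq:data processing upper bound energy constrained}.

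Next I would make the bound explicit by invoking Theorem~\ref{theorem:Energy-constrained quantum capacity of bosonic pure-loss channels}, which gives $C_Q^{n\le\bar n}(\mathcal{N}[\eta',0]) = \max[\,g(\eta'\bar n) - g((1-\eta')\bar n),\,0\,]$. Substituting $\eta' = \eta/((1-\eta)\nth+1)$, a short computation gives $\eta'\bar n = \eta\bar n/((1-\eta)\nth+1)$ and $(1-\eta')\bar n = (1-\eta)(\nth+1)\bar n/((1-\eta)\nth+1)$, which is exactly the claimed form of $Q^{n\le\bar n}_{\scriptsize\textrm{DP}}(\eta,\nth)$. One small point to check is that $\eta'\in(0,1]$ whenever $\eta\in(0,1]$ and $\nth\ge 0$, so that Theorem~\ref{theorem:Energy-constrained quantum capacity of bosonic pure-loss channels} genuinely applies; this follows since $(1-\eta)\nth+1\ge 1\ge\eta$ with equality only at $\eta=1$, and $(1-\eta)(\nth+1)\ge 0$.

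For the energy-unconstrained bound, Eq.~\eqref{eq:data processing upper bound energy unconstrained}, I would take $\bar n\to\infty$ using the asymptotic expansion $g(x) = \log_2(ex) + \mathcal{O}(1/x)$ already used in the proof of Theorem~\ref{theorem:Energy-constrained quantum capacity of bosonic pure-loss channels}. The leading $\log_2(e\,\cdot\,)$ terms combine to $\log_2(\eta'/(1-\eta'))$, and the $\mathcal{O}(1/\bar n)$ corrections vanish, yielding $\lim_{\bar n\to\infty}Q^{n\le\bar n}_{\scriptsize\textrm{DP}}(\eta,\nth) = \max[\log_2(\eta/((1-\eta)(\nth+1))),0]$ after simplifying $\eta'/(1-\eta') = \eta/((1-\eta)(\nth+1))$. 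Taking the limit inside the bound $C_Q\le Q^{n\le\bar n}_{\scriptsize\textrm{DP}}$ is justified because the unconstrained capacity is the supremum over $\bar n$ of the constrained capacities.

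I do not anticipate a genuine obstacle here: the substance is entirely in Lemma~\ref{lemma:Thermal-loss  = Amplification + Pure-loss} and Theorem~\ref{theorem:Energy-constrained quantum capacity of bosonic pure-loss channels}, both of which may be assumed. The only place demanding care is the data-processing step itself --- making precise that appending a fixed channel on the output side can only weaken (never strengthen) the quantum capacity, and that this argument is compatible with the \emph{input-side} energy constraint (the constraint is placed on states fed into $\mathcal{N}[\eta',0]$, which are precisely the encoded codewords, so it transfers verbatim). Once that monotonicity is stated cleanly, the rest is bookkeeping with the explicit formula for $g$.
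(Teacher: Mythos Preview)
Your approach is essentially identical to the paper's: use Lemma~\ref{lemma:Thermal-loss  = Amplification + Pure-loss} to write $\mathcal{N}[\eta,\nth] = \mathcal{A}[G',0]\cdot\mathcal{N}[\eta',0]$, absorb the amplification into the recovery map to compare capacities, then invoke Theorem~\ref{theorem:Energy-constrained quantum capacity of bosonic pure-loss channels} and the asymptotics of $g$. One slip to fix: in your generic data-processing statement you have the direction of the code transfer reversed --- it is a code for $\mathcal{M}' = \mathcal{B}\cdot\mathcal{M}$ that yields a code for $\mathcal{M}$ (via the new recovery $\mathcal{R}\cdot\mathcal{B}^{\otimes N}$), not the other way around; this is what actually gives $C_Q(\mathcal{M}')\le C_Q(\mathcal{M})$, and it is exactly how the paper argues (take an optimal code for $\mathcal{N}[\eta,\nth]$ and observe that $\{\mathcal{E},\,\mathcal{R}\cdot\mathcal{A}[G',0]\}$ is then a code for $\mathcal{N}[\eta',0]$ with the same energy-constrained encoder).
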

\begin{proof}
Since the quantum capacity $C_{Q}(\mathcal{N}[\eta,\nth])$ is the maximum achievable quantum state transmission rate, there exists a set of encoding and recovery channels, denoted by $\lbrace \mathcal{E},\mathcal{R}\rbrace$, which achieves a rate $R = C_{Q}(\mathcal{N}[\eta,\nth])$ for the Gaussian thermal-loss channel $\mathcal{N}[\eta,\nth]$. Since $\mathcal{N}[\eta,\nth] = \mathcal{A}[G',0]\cdot\mathcal{N}[\eta',0]$ (see Lemma \ref{lemma:Thermal-loss  = Amplification + Pure-loss}), this implies that the following pair of encoding and recovery maps $\lbrace \mathcal{E} , \mathcal{R}\cdot\mathcal{A}[G',0] \rbrace$ achieves a rate $R = C_{Q}(\mathcal{N}[\eta,\nth])$ for the bosonic pure-loss channel $\mathcal{N}[\eta',0]$. Since an achievable rate $R$ is upper bounded by the quantum capacity $C_{Q}^{ n \le \bar{n}}(\mathcal{N}[\eta',0])$, Eq.\ \eqref{eq:data processing upper bound energy constrained} follows. Eq.\ \eqref{eq:data processing upper bound energy unconstrained} is derived by taking the $\bar{n}\rightarrow \infty$ limit and using $g(x) = \log_{2}(ex)  + \mathcal{O}(\frac{1}{x})$.   
\end{proof}

It has been realized in Refs.\ \cite{Rosati2018,Sharma2018,Noh2019} that the data-processing bound in Theorem \ref{theorem:data processing bound} can be improved by using another decomposition of Gaussian thermal-loss channels with a twisted order or pure-loss and amplification channels.  

\begin{lemma}[Thermal-loss = Pure-loss + Amplification \cite{Rosati2018,Sharma2018,Noh2019}]
A Gaussian thermal-loss channel $\mathcal{N}[\eta,\nth]$ can be decomposed into a quantum limited amplification channel and a bosonic pure-loss channel as follows: 
\begin{align}
\mathcal{N}[\eta,\nth] &= \mathcal{N}[\tilde{\eta}',0] \cdot \mathcal{A}[ \tilde{G}',0 ] , 
\end{align}
where $\tilde{G}'$ and $\tilde{\eta}'$ are given by 
\begin{align}
\tilde{G}' &= \frac{\eta}{ \eta - (1-\eta)\nth } \,\,\,\textrm{and}\,\,\, \tilde{\eta}' = \frac{\eta}{\tilde{G}'} = \eta - (1-\eta)\nth . 
\end{align}
Note that this decomposition is only applicable if $ \eta'\ge 0 \leftrightarrow \eta \ge  \frac{\nth}{\nth+1}$, i.e., when the Gaussian thermal-loss channel $\mathcal{N}[\eta,\nth]$ is not entanglement-breaking \cite{Holevo2008}.  \label{lemma:Thermal-loss = Pure-loss + Amplification}
\end{lemma}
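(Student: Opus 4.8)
The plan is to prove the decomposition $\mathcal{N}[\eta,\nth] = \mathcal{N}[\tilde{\eta}',0] \cdot \mathcal{A}[\tilde{G}',0]$ by working entirely at the level of the Gaussian channel characterization $(\boldsymbol{T},\boldsymbol{N},\boldsymbol{d})$, exactly as was done for Theorem~\ref{theorem:loss plus amplification is displacement post-amplification} and for Lemma~\ref{lemma:Thermal-loss  = Amplification + Pure-loss}. Recall that a Gaussian channel is uniquely determined by its action on the first two moments, $\boldsymbol{\bar{x}}' = \boldsymbol{T}\boldsymbol{\bar{x}} + \boldsymbol{d}$ and $\boldsymbol{V}' = \boldsymbol{T}\boldsymbol{V}\boldsymbol{T}^{T} + \boldsymbol{N}$, and that composition of Gaussian channels corresponds to composing these affine maps. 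So it suffices to check that the right-hand side, computed as a composition, reproduces the characterization of $\mathcal{N}[\eta,\nth]$, namely $(\boldsymbol{T},\boldsymbol{N},\boldsymbol{d}) = (\sqrt{\eta}\,\boldsymbol{I}_{2}, \frac{1}{2}(1-\eta)(2\nth+1)\boldsymbol{I}_{2}, 0)$ (see Definition~\ref{definition:Gaussian loss channels} and the discussion of the thermal-loss channel's noise matrix).

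The concrete steps I would carry out are as follows. First, write down the characterizations of the two constituent channels: the quantum-limited amplifier $\mathcal{A}[\tilde{G}',0]$ has $(\boldsymbol{T}_{1},\boldsymbol{N}_{1},\boldsymbol{d}_{1}) = (\sqrt{\tilde{G}'}\,\boldsymbol{I}_{2}, \frac{1}{2}(\tilde{G}'-1)\boldsymbol{I}_{2}, 0)$ (Definition~\ref{definition:quantum limited amplification}), and the pure-loss channel $\mathcal{N}[\tilde{\eta}',0]$ has $(\boldsymbol{T}_{2},\boldsymbol{N}_{2},\boldsymbol{d}_{2}) = (\sqrt{\tilde{\eta}'}\,\boldsymbol{I}_{2}, \frac{1}{2}(1-\tilde{\eta}')\boldsymbol{I}_{2}, 0)$. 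Second, compose them in the order ``amplify first, then lose'': the net transmission matrix is $\boldsymbol{T}_{2}\boldsymbol{T}_{1} = \sqrt{\tilde{\eta}'\tilde{G}'}\,\boldsymbol{I}_{2}$ and the net noise matrix is $\boldsymbol{T}_{2}\boldsymbol{N}_{1}\boldsymbol{T}_{2}^{T} + \boldsymbol{N}_{2} = \left(\tfrac{1}{2}\tilde{\eta}'(\tilde{G}'-1) + \tfrac{1}{2}(1-\tilde{\eta}')\right)\boldsymbol{I}_{2}$. Third, impose the two matching conditions $\tilde{\eta}'\tilde{G}' = \eta$ (so the attenuation matches) and $\tilde{\eta}'(\tilde{G}'-1) + (1-\tilde{\eta}') = (1-\eta)(2\nth+1)$ (so the added noise matches); solving this linear pair in $\tilde{\eta}'$ and $\tilde{G}'$ yields $\tilde{\eta}' = \eta - (1-\eta)\nth$ and $\tilde{G}' = \eta/\tilde{\eta}' = \eta/(\eta - (1-\eta)\nth)$, which are exactly the claimed expressions. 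Fourth, record the validity constraint: since $\tilde{G}' \ge 1$ requires $\tilde{\eta}' \le \eta$ (automatic) and since $\tilde{\eta}' \ge 0$ is needed for $\mathcal{N}[\tilde{\eta}',0]$ to be a legitimate pure-loss channel, the decomposition holds precisely when $\eta - (1-\eta)\nth \ge 0$, i.e. $\eta \ge \nth/(\nth+1)$, which is the non-entanglement-breaking condition; I would cite \cite{Holevo2008} for the identification of this threshold with entanglement-breaking.

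The bookkeeping here is genuinely routine — the only place to be careful is the \emph{order} of composition, since the whole point of this lemma (versus Lemma~\ref{lemma:Thermal-loss  = Amplification + Pure-loss}) is that the amplifier acts \emph{before} the loss channel, and getting the conjugation $\boldsymbol{T}_{2}\boldsymbol{N}_{1}\boldsymbol{T}_{2}^{T}$ on the right side rather than $\boldsymbol{T}_{1}\boldsymbol{N}_{2}\boldsymbol{T}_{1}^{T}$ is what produces the different (and, for the data-processing bound, sharper) parameters. I do not anticipate a serious obstacle; the main subtlety worth flagging explicitly in the write-up is the restricted regime of validity, and the fact that outside this regime the thermal-loss channel is entanglement-breaking so its quantum capacity is zero anyway, meaning no information is lost by excluding that case when this lemma is later used to derive an upper bound. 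I would also note in passing that one could alternatively prove the lemma by composing the single-mode symplectic dilations directly, but the covariance-matrix argument is the cleanest and matches the style already established in the chapter.
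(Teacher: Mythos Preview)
Your proposal is correct. The paper does not actually give an explicit proof of this lemma (it is stated and cited to \cite{Rosati2018,Sharma2018,Noh2019}), but your argument via the $(\boldsymbol{T},\boldsymbol{N},\boldsymbol{d})$ characterization is exactly the method the paper uses for the analogous results (Theorem~\ref{theorem:loss plus amplification is displacement post-amplification} and implicitly Lemma~\ref{lemma:Thermal-loss  = Amplification + Pure-loss}), and your algebra and validity analysis check out.
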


Then using a data-processing argument, we can similarly relate the quantum capacity of the Gaussian thermal-loss channel $\mathcal{N}[\eta,\nth]$ with the quantum capacity of the bosonic pure-loss channel $\mathcal{N}[\tilde{\eta}',0] $. 

\begin{theorem}[Improved data-processing bound of the Gaussian thermal-loss channel capacity \cite{Rosati2018,Sharma2018,Noh2019}]
The quantum capacity of a Gaussian thermal-loss channel $\mathcal{N}[\eta,\nth]$ subject to an energy constraint $\mathrm{Tr}[\hat{\rho} \hat{n}] \le \bar{n}$ is upper bounded by the following improved data-processing bound $Q^{n\le \bar{n}}_{\scriptsize{\textrm{IDP}}}(\eta,\nth)$:
\begin{align}
C_{Q}^{n\le \bar{n}}(\mathcal{N}[\eta,\nth]) &\le Q^{n\le \bar{n}}_{\scriptsize{\textrm{IDP}}}(\eta,\nth)  \equiv C_{Q}^{n\le \tilde{G}'\bar{n}+(\tilde{G}'-1)}(\mathcal{N}[\tilde{\eta}',0]),  
\end{align}
where $\tilde{\eta}' =  \eta - (1-\eta)\nth $ and $\tilde{G}' = \frac{\eta}{ \eta - (1-\eta)\nth }$. More explicitly, $Q^{n\le \bar{n}}_{\scriptsize{\textrm{IDP}}}(\eta,\nth)$ is given by
\begin{align}
Q^{n\le \bar{n}}_{\scriptsize{\textrm{IDP}}}(\eta,\nth) &= \max\Big{[}g(\eta \bar{n}+(1-\eta)\nth)  - g\Big{(} \frac{ (1-\eta)(\nthtiny+1)(\eta\bar{n}+(1-\eta)\nthtiny)   }{\eta - (1-\eta)\nthtiny} \Big{)} , 0\Big{]}. 
\end{align}   
In the energy-unconstrained case (i.e., $\bar{n}\rightarrow\infty$), we have 
\begin{align}
C_{Q}(\mathcal{N}[\eta,\nth]) &\le \lim_{\bar{n}\rightarrow \infty}  Q^{n\le \bar{n}}_{\scriptsize{\textrm{IDP}}}(\eta,\nth) = \max\Big{[} \log_{2}\Big{(} \frac{\eta - (1-\eta)\nth }{ (1-\eta)(\nth+1) } \Big{)} ,0 \Big{]} . 
\end{align} \label{theorem:improved data processing bound}
\end{theorem}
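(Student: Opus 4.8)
\textbf{Proof plan for Theorem~\ref{theorem:improved data processing bound}.}
The plan mirrors the proof of Theorem~\ref{theorem:data processing bound}, but uses the twisted decomposition of Lemma~\ref{lemma:Thermal-loss = Pure-loss + Amplification} instead of Lemma~\ref{lemma:Thermal-loss  = Amplification + Pure-loss}. First I would invoke the definition of quantum capacity: there exists a sequence of encoding/recovery maps $\lbrace \mathcal{E}, \mathcal{R} \rbrace$ that achieves a rate arbitrarily close to $C_{Q}^{n\le \bar{n}}(\mathcal{N}[\eta,\nth])$ for the energy-constrained Gaussian thermal-loss channel, where the input states entering the channel obey $\mathrm{Tr}[\hat{\rho}\hat{n}] \le \bar{n}$. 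Since by Lemma~\ref{lemma:Thermal-loss = Pure-loss + Amplification} we have $\mathcal{N}[\eta,\nth] = \mathcal{N}[\tilde{\eta}',0] \cdot \mathcal{A}[\tilde{G}',0]$ with $\tilde{\eta}' = \eta - (1-\eta)\nth$ and $\tilde{G}' = \eta/(\eta-(1-\eta)\nth)$, the composite map $\lbrace \mathcal{A}[\tilde{G}',0]\cdot\mathcal{E}, \mathcal{R} \rbrace$ achieves the same rate for the bosonic pure-loss channel $\mathcal{N}[\tilde{\eta}',0]$. The subtlety here, compared with Theorem~\ref{theorem:data processing bound}, is that the amplifier now sits \emph{before} the pure-loss channel, so I must track how the energy constraint transforms: if $\hat{\rho}$ satisfies $\mathrm{Tr}[\hat{\rho}\hat{n}]\le\bar{n}$, then the amplified state $\mathcal{A}[\tilde{G}',0](\hat{\rho})$ has mean photon number $\le \tilde{G}'\bar{n} + (\tilde{G}'-1)$, using the Heisenberg-picture action of the quantum-limited amplifier on the covariance matrix and first moments. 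Hence the induced pure-loss code obeys an energy constraint $n \le \tilde{G}'\bar{n}+(\tilde{G}'-1)$, and an achievable rate for that channel is upper bounded by $C_{Q}^{n\le \tilde{G}'\bar{n}+(\tilde{G}'-1)}(\mathcal{N}[\tilde{\eta}',0])$, giving the claimed bound $Q^{n\le\bar{n}}_{\scriptsize\textrm{IDP}}(\eta,\nth)$.

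Next I would substitute the closed-form expression for the energy-constrained pure-loss capacity from Theorem~\ref{theorem:Energy-constrained quantum capacity of bosonic pure-loss channels}, namely $C_{Q}^{n\le\bar{m}}(\mathcal{N}[\tilde\eta',0]) = \max[g(\tilde\eta'\bar{m}) - g((1-\tilde\eta')\bar{m}), 0]$ with $\bar m = \tilde{G}'\bar{n}+(\tilde{G}'-1)$, and simplify. A routine algebraic computation gives $\tilde\eta'\bar m = \eta\bar n + (1-\eta)\nth$ (this is the clean cancellation that motivates the twisted decomposition) and $(1-\tilde\eta')\bar m = (1-\eta)(\nth+1)(\eta\bar n+(1-\eta)\nth)/(\eta-(1-\eta)\nth)$, which reproduces the stated formula for $Q^{n\le\bar{n}}_{\scriptsize\textrm{IDP}}(\eta,\nth)$. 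I would double-check signs and the branch where $\tilde\eta' \le 1/2$, in which case the pure-loss capacity is zero and the bound is trivially $0$.

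Finally, for the energy-unconstrained statement I would take $\bar n \to \infty$ and use the asymptotic expansion $g(x) = \log_2(ex) + \mathcal{O}(1/x)$, so that $g(\tilde\eta'\bar m) - g((1-\tilde\eta')\bar m) \to \log_2(\tilde\eta'/(1-\tilde\eta')) = \log_2((\eta-(1-\eta)\nth)/((1-\eta)(\nth+1)))$, which matches the claimed limit after taking the maximum with $0$. I would also remark that Lemma~\ref{lemma:Thermal-loss = Pure-loss + Amplification} requires $\eta \ge \nth/(\nth+1)$ (equivalently $\tilde\eta' \ge 0$), i.e.\ the channel is not entanglement-breaking; when the channel is entanglement-breaking its quantum capacity is zero and the bound holds trivially, so no generality is lost.

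The main obstacle I anticipate is not conceptual but bookkeeping: correctly propagating the energy constraint through the amplifier (getting $\tilde{G}'\bar n + (\tilde G'-1)$ rather than, say, $\tilde G' \bar n$) and then verifying the two algebraic identities for $\tilde\eta'\bar m$ and $(1-\tilde\eta')\bar m$ so that the final expression lands exactly in the stated form. One must also be slightly careful that the data-processing inequality is being applied to the \emph{energy-constrained} capacity correctly — the amplifier is a fixed Gaussian channel applied as part of the encoder, so the composite encoder is legitimate, but its output energy is what is constrained on the pure-loss channel, and that is the only place the constraint enters.
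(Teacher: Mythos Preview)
Your proposal is correct and follows essentially the same approach as the paper's own proof: absorb the pre-amplifier $\mathcal{A}[\tilde{G}',0]$ from Lemma~\ref{lemma:Thermal-loss = Pure-loss + Amplification} into the encoder, track the induced energy constraint $\tilde{G}'\bar{n}+(\tilde{G}'-1)$ on the pure-loss channel, and invoke Theorem~\ref{theorem:Energy-constrained quantum capacity of bosonic pure-loss channels} together with the asymptotic $g(x)=\log_{2}(ex)+\mathcal{O}(1/x)$ for the unconstrained limit. Your write-up is in fact more detailed than the paper's, which omits the explicit algebraic verification of the $g$-arguments and the entanglement-breaking edge case that you correctly flag.
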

\begin{proof}
Let $\lbrace \mathcal{E}^{n\le \bar{n}},\mathcal{D} \rbrace$ be the set of encoding and decoding which achieves a rate $R=C_{Q}^{n\le \bar{n}}(\mathcal{N}[\eta,\nth])$ for the Gaussian thermal-loss channel $\mathcal{N}[\eta,\nth]$. Then, the encoding and decoding set $\lbrace \mathcal{A}[\tilde{G}',0]\cdot\mathcal{E}^{n\le \bar{n}} , \mathcal{D} \rbrace$ achieves a rate $R = C_{Q}^{n\le \bar{n}}(\mathcal{N}[\eta,\nth])$ for the bosonic pure-loss channel $\mathcal{N}[\tilde{\eta}',0]$. Since the new encoding $\mathcal{E}' \equiv \mathcal{A}[\tilde{G}']\cdot\mathcal{E}^{n\le \bar{n}}$ has an average photon number  
\begin{align}
\bar{n}' = \mathrm{Tr}[ \mathcal{A}[ \tilde{G}',0 ](\hat{\rho}) ] = \tilde{G}'\bar{n}+(\tilde{G}'-1) , 
\end{align}
the rate $R$ should be less than $C_{Q}^{n\le \tilde{G}'\bar{n}+(\tilde{G'}-1)}(\mathcal{N}[\tilde{\eta}',0])$. The result for the energy-unconstrained case is obtained by taking $\bar{n}\rightarrow \infty$ and using $g(x) = \log_{2}(ex) + \mathcal{O}(\frac{1}{x})$.  
\end{proof}

\begin{figure}[t!]
\centering
\includegraphics[width=4.5in]{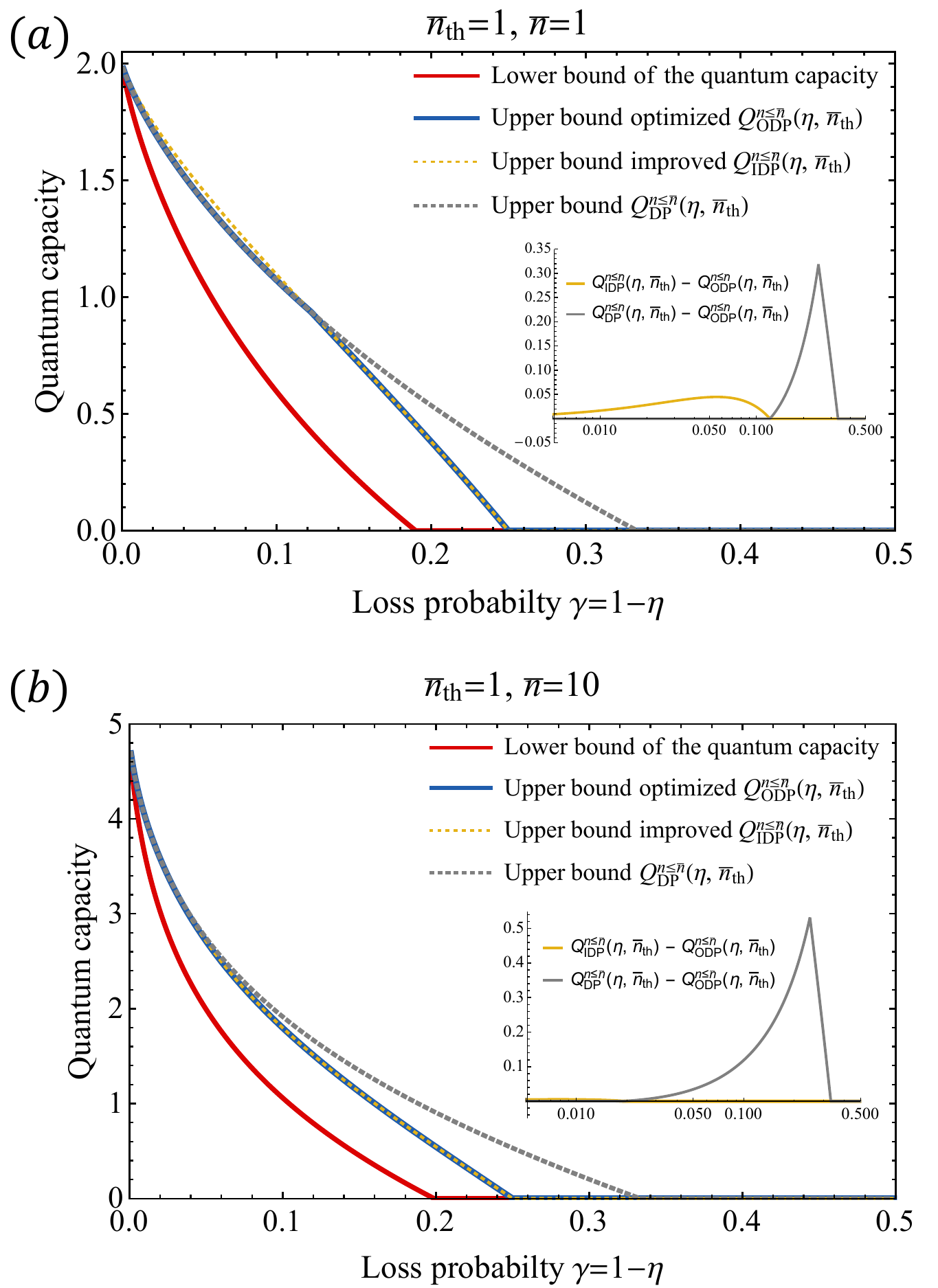}
\caption{[Fig.\ 1 in IEEE Trans. Info. Theory \textbf{65}, 2563--2582 (2019)] Bounds of the quantum capacity of a Gaussian thermal-loss channel $\mathcal{N}[\eta,\nthtiny]$ for (a) $(\nth,\bar{n})=(1,1)$ and (b) $(\nth,\bar{n})=(1,10)$. The solid red lines represent a lower bound of the Gaussian thermal-loss channel capacity which is obtained by evaluating the coherent information with respect to a thermal input state (see Eq.\ \eqref{eq:lower bound Gaussian thermal loss channel HW}). The dashed yellow line and the dashed grey line represent the improved data-processing bound and the data-processing bound, respectively. The solid blue line represents the optimized data-processing bound $Q^{n\le \bar{n}}_{\tiny{\textrm{ODP}}}(\eta,\nthtiny) \equiv \max[ Q^{n\le \bar{n}}_{\tiny{\textrm{IDP}}}(\eta,\nthtiny), Q^{n\le \bar{n}}_{\tiny{\textrm{DP}}}(\eta,\nthtiny) ]$. The improved data-processing bound $Q^{n\le \bar{n}}_{\tiny{\textrm{IDP}}}(\eta,\nthtiny)$ is identical to the optimized data-processing bound $Q^{n\le \bar{n}}_{\tiny{\textrm{ODP}}}(\eta,\nthtiny)$ in a wide range of parameter space, and is very close to the optimal one even when it is not optimal. The data-processing bound $Q^{n\le \bar{n}}_{\tiny{\textrm{DP}}}(\eta,\nthtiny)$ is optimal when $\eta \ge  \eta^{\star}(\nth,\bar{n})$ for some $\eta^{\star}(\nth,\bar{n})$ (e.g., $\eta^{\star}(1,1) = 0.8775\cdots$).  }
\label{fig:Gaussian thermal-loss channel capacity upper bound}
\end{figure}

Let us now compare the improved data-processing bound with the data-processing bound. Note that in the energy-unconstrained case (i.e., $\bar{n}\rightarrow\infty$), we have
\begin{equation}
Q_{\scriptsize{\textrm{IDP}}}(\eta,\nth) < Q_{\scriptsize{\textrm{DP}}}(\eta,\nth)\equiv\lim_{\bar{n}\rightarrow\infty} Q^{n\le \bar{n}}_{\scriptsize{\textrm{DP}}}(\eta,\nth) \label{eq:IDP better than DP in energy-unconstrained}
\end{equation}
for all $\eta\in[0,1)$, since 
\begin{align}
\eta -(1-\eta)\nth & <  \eta.  \label{eq:comparison of etas in data-processing bound}
\end{align}
Thus, the improved data-processing bound $Q_{\scriptsize{\textrm{IDP}}}(\eta,\nth)$ is a strictly tighter upper bound of the Gaussian thermal-loss channel capacity than the data-processing bound $Q_{\scriptsize{\textrm{DP}}}(\eta,\nth)$. On the other hand in the energy-constrained case, the improved data-processing bound $Q^{n\le \bar{n}}_{\scriptsize{\textrm{IDP}}}(\eta,\nth)$ is not always tighter than $Q^{n\le \bar{n}}_{\scriptsize{\textrm{DP}}}(\eta,\nth)$ (see the insets in Fig.\ \ref{fig:Gaussian thermal-loss channel capacity upper bound}). Physically, this is because the increased encoding energy due to the pre-amplification allows a larger quantum capacity, which is crucial if the allowed average photon number in the encoding is small, i.e., $\bar{n}\ll 1$. Thus, we take the maximum between $Q^{n\le \bar{n}}_{\scriptsize{\textrm{DP}}}(\eta,\nth)$ and $Q^{n\le \bar{n}}_{\scriptsize{\textrm{IDP}}}(\eta,\nth)$ to optimize the bound: 
\begin{align}
Q^{n\le \bar{n}}_{\tiny{\textrm{ODP}}}(\eta,\nthtiny) \equiv \max[ Q^{n\le \bar{n}}_{\tiny{\textrm{IDP}}}(\eta,\nthtiny), Q^{n\le \bar{n}}_{\tiny{\textrm{DP}}}(\eta,\nthtiny) ]. \label{eq:ODP}
\end{align}  
We numerically observe that 
\begin{align}
Q^{n\le \bar{n}}_{\scriptsize{\textrm{ODP}}}(\eta,\nth) = \begin{cases}
Q^{n\le \bar{n}}_{\scriptsize{\textrm{DP}}}(\eta,\nth) & \eta \ge \eta^{\star}(\nth,n)\\
Q^{n\le \bar{n}}_{\scriptsize{\textrm{DP}}}(\eta,\nth) & \eta < \eta^{\star}(\nth,n)
\end{cases} . 
\end{align}
for some $\eta^{\star}(\nth,n)$. In the energy-unconstrained case (i.e., $\bar{n}\rightarrow \infty$), we observe that $\lim_{\bar{n}\rightarrow\infty }\eta^{\star}(\nth,\bar{n}) =1$ and thus $Q^{n\le \bar{n}}_{\scriptsize{\textrm{ODP}}}(\eta,\nth)=Q^{n\le \bar{n}}_{\scriptsize{\textrm{IDP}}}(\eta,\nth)$ for all $\eta\in [0,1]$, which is consistent with Eq.\ \eqref{eq:IDP better than DP in energy-unconstrained}. We refer to Ref.\ \cite{Sharma2018} (e.g., Fig.\ 6 therein) for a more comprehensive comparison of the existing upper bounds, also including approximate degradability bounds \cite{Sutter2017}.

\section{Lower bounds of the Gaussian thermal-loss channel capacity}
\label{section:Lower bound of the Gaussian thermal-loss channel capacity}

Let us now move on to lower bounds of the Gaussian thermal-loss channel capacity. Recall that for a bosonic pure-loss channel $\mathcal{N}[\eta,0]$, its quantum capacity can be found by evaluating its one-shot coherent information with respect to an input thermal state $\hat{\tau}(\bar{n})$. For a Gaussian thermal-loss channel $\mathcal{N}[\eta,\nth]$ with $\nth\neq 0$, however, the thermal state $\hat{\tau}(\bar{n})$ is not necessarily an optimal input state that maximizes the coherent information. Nevertheless, the coherent information with respect to an input thermal state is a valid lower bound to the quantum capacity of a Gaussian thermal-loss channel capacity subject to an average energy constraint $\mathrm{Tr}[\hat{\rho}\hat{n}]\le \bar{n}$, i.e., 
\begin{align}
C_{Q}^{n\le \bar{n}}( \mathcal{N}[\eta,\nth] ) \ge I_{c}( \mathcal{N}[\eta,\nth] , \hat{\tau}(\bar{n}) ) &= g(\eta\bar{n}+(1-\eta)\nth)
\nonumber\\
&\quad -g\Big{(} \frac{D+(1-\eta)(\bar{n}-\nth)-1}{2} \Big{)}
\nonumber\\
&\quad -g\Big{(} \frac{D-(1-\eta)(\bar{n}-\nth)-1}{2} \Big{)} , \label{eq:lower bound Gaussian thermal loss channel HW}
\end{align}
where $D$ is defined as $D \equiv \sqrt{((1+\eta)\bar{n}+(1-\eta)\nth+1)^{2}-4\eta\bar{n}(\bar{n}+1)}$ \cite{Holevo2001}. This bound has been the best known lower bound for the past two decades. Here, we present an improved lower bound by showing that there is a non-trivial multi-channel strategy that can outperform the single-channel strategy with a thermal input state \cite{Noh2020a}.  

\subsection{Correlated multi-mode thermal states}

We first construct a family of Gaussian multi-mode states, called correlated multi-mode thermal states, which is the key ingredient for improving the lower bound of the Gaussian thermal-loss channel capacity. Recall that $\hat{\tau}(\bar{n})$ denotes the single-mode thermal state with an average photon number $\mathrm{Tr}[\hat{n} \hat{\tau}(\bar{n})] = \bar{n}$, i.e., 
\begin{align}
\hat{\tau}(\bar{n}) \equiv \sum_{n=0}^{\infty}\frac{\bar{n}^{n}}{(1+\bar{n})^{n+1}}|n\rangle\langle
n|, 
\end{align}
where $|n\rangle$ is a Fock state. Uncorrelated multi-mode
thermal states would then simply be given by a tensor product of
single-mode thermal states
$\big{\lbrace}\hat{\tau}(\bar{n})\big{\rbrace}^{\otimes N}$. Now
we define correlated multi-mode thermal states as follows:
\begin{align}
\hat{\mathcal{T}}(\vec{N},\vec{n}) \equiv \hat{U}_{\textrm{GFT}}^{(N)} \Big{[} \big{\lbrace}\hat{\tau}(\bar{n}_{1})\big{\rbrace}^{\otimes N_{1}}\otimes \cdots \otimes \big{\lbrace}\hat{\tau}(\bar{n}_{r})\big{\rbrace}^{\otimes N_{r}} \Big{]} \big{(} \hat{U}_{\textrm{GFT}}^{(N)} \big{)}^{\dagger}.
\end{align}
Here, $\vec{N} = (N_{1},\cdots, N_{r})$ such that $\sum_{k=1}^{r}N_{k} = N$ and $\vec{n} = (\bar{n}_{1},\cdots ,\bar{n}_{r})$. $\hat{U}_{\textrm{GFT}}^{(N)}$ is the $N$-mode Gaussian Fourier transformation whose action on the $j^{\textrm{th}}$ annihilation operator $\hat{a}_{j}$ is given by
\begin{align}
\big{(} \hat{U}_{\textrm{GFT}}^{(N)} \big{)}^{\dagger} \hat{a}_{j} \hat{U}_{\textrm{GFT}}^{(N)} &= \frac{1}{\sqrt{N}}\sum_{k=1}^{N}e^{i\frac{2\pi}{N}(j-1)(k-1)}\hat{a}_{k}, \label{eq:Gaussian Fourier transforamtion}
\end{align}
for all $j\in\lbrace 1,\cdots, N \rbrace$. Hence, the correlated multi-mode thermal state $\hat{\mathcal{T}}(\vec{N},\vec{n})$ is a collection of single-mode thermal states (where each of the first $N_{1}$ modes supports on average $\bar{n}_{1}$ photons, each of the next $N_{2}$ modes supports on average $\bar{n}_{2}$ photons and so on) which are uniformly mixed by the Gaussian Fourier transformation $\hat{U}_{\textrm{GFT}}^{(N)}$ (see Fig.\ \ref{fig:correlated multi-mode thermal states}). We remark that each mode in the correlated $N$-mode thermal state $\hat{\mathcal{T}}(\vec{N},\vec{n})$ supports on average $\bar{n} = \frac{1}{N}\sum_{k=1}^{r} N_{k}\bar{n}_{k}$ photons. 

\begin{figure}[t!]
\centering
\includegraphics[width=3.0in]{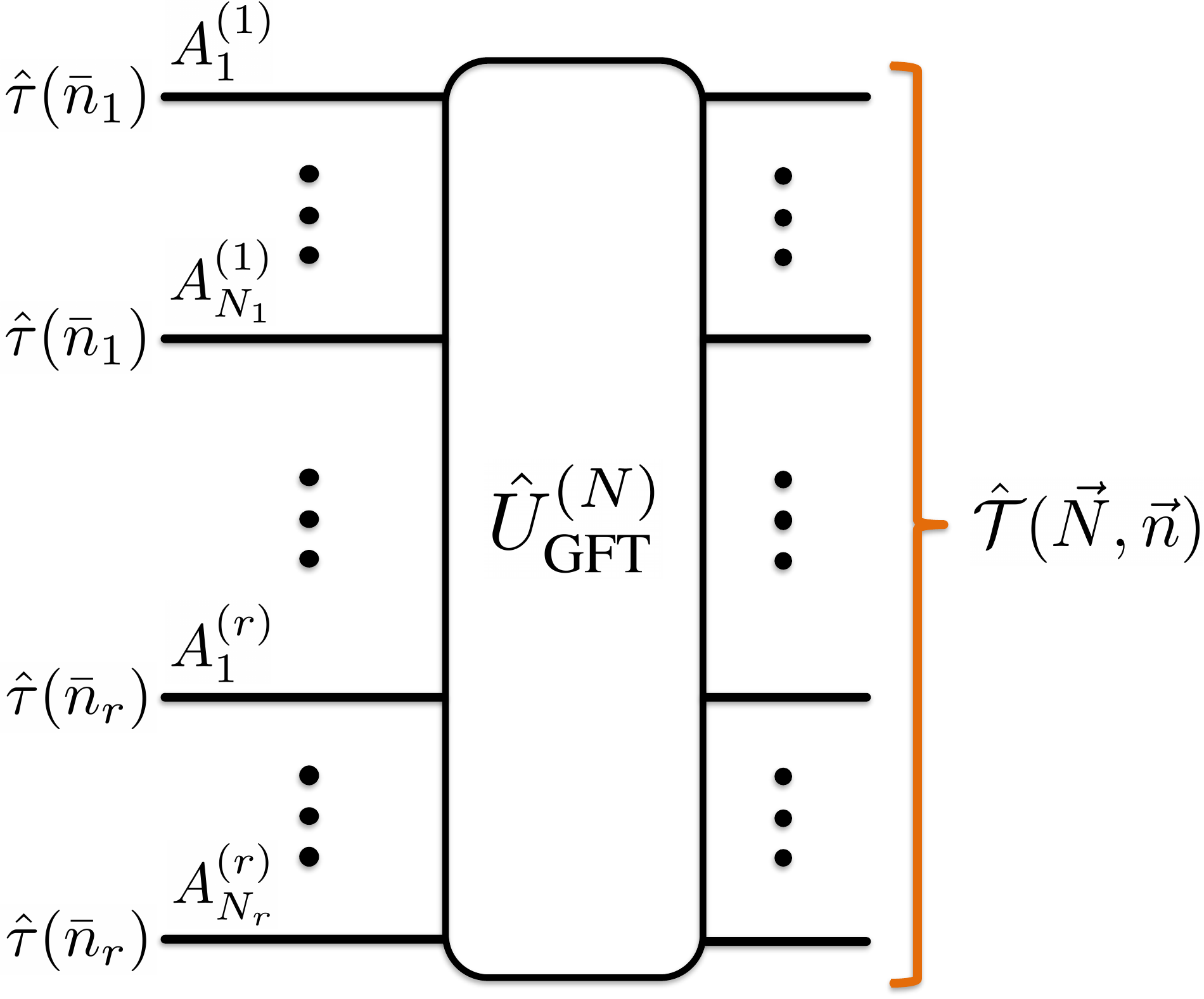}
\caption{[Fig.\ 1 in Nature Communications \textbf{11}, 457 (2020)] A correlated multi-mode thermal state $\hat{\mathcal{T}}(\vec{N},\vec{n})$ with $\vec{N} = (N_{1},\cdots, N_{r})$ and $\vec{n} = (\bar{n}_{1},\cdots ,\bar{n}_{r})$ such that $\sum_{k=1}^{r}N_{k} = N$. $\hat{\mathcal{T}}(\vec{N},\vec{n})$ can be generated by applying the $N$-mode Gaussian Fourier transformation $\hat{U}_{\textrm{GFT}}^{(N)}$ to an uncorrelated thermal state $\big{\lbrace}\hat{\tau}(\bar{n}_{1})\big{\rbrace}^{\otimes N_{1}}\otimes \cdots \otimes \big{\lbrace}\hat{\tau}(\bar{n}_{r})\big{\rbrace}^{\otimes N_{r}}$. }
\label{fig:correlated multi-mode thermal states}
\end{figure}

A simple non-trivial example of correlated multi-mode thermal
states would be $\hat{\mathcal{T}}(\vec{N},\vec{n})$ with $\vec{N}
= (1,N-1)$ and $\vec{n} = (N\bar{n},0)$ and its covariance
matrix is given by
\begin{align}
\boldsymbol{V}  = \begin{bmatrix}
(\bar{n}+\frac{1}{2})\boldsymbol{I}_{2}& \bar{n}\boldsymbol{I}_{2} & \cdots & \bar{n}\boldsymbol{I}_{2}\\
\bar{n}\boldsymbol{I}_{2}&(\bar{n}+\frac{1}{2})\boldsymbol{I}_{2}&\cdots& \bar{n}\boldsymbol{I}_{2}\\
\vdots&\vdots&\ddots&\vdots\\
\bar{n}\boldsymbol{I}_{2}&\bar{n}\boldsymbol{I}_{2}&\cdots &(\bar{n}+\frac{1}{2})\boldsymbol{I}_{2}
\end{bmatrix},
\end{align}
where $\boldsymbol{I}_{2}$ is the $2\times 2$ identity matrix. As can be seen from the diagonal elements of the covariance matrix, every mode supports on average $\bar{n}$ photons. Therefore, the reduced density matrix of each mode is given by a single-mode thermal state $\hat{\tau}(\bar{n})$. On the other hand, the off-diagonal elements of the covariance matrix indicate that the position (or the momentum) quadratures of every pair of modes are positively correlated: This is what distinguishes $\hat{\mathcal{T}}(\vec{N},\vec{n})$ from the uncorrelated $N$-mode thermal state $\big{\lbrace}\hat{\tau}(\bar{n})\big{\rbrace}^{\otimes N}$ and why we call it a correlated multi-mode thermal state. We remark that correlated multi-mode thermal states can be efficiently prepared because the Gaussian Fourier transformation $\hat{U}_{\textrm{GFT}}^{(N)}$ can be implemented efficiently by using a variant of the fast Fourier transform technique \cite{Cooley1965}.

\subsection{Superadditivity with respect to Gaussian input states}

Now we present the main result: 

\begin{theorem}
Consider a correlated $N$-mode thermal state $\hat{\mathcal{T}}(\vec{N},\vec{n})$ with $\vec{N} = (M,N-M)$ and $\vec{n} = (\frac{N}{M}\bar{n},0)$ and let $x = \frac{M}{N}$, where $M\in \lbrace 1,\cdots, N \rbrace$. Then, the coherent information with respect to the input state $\hat{\mathcal{T}}(\vec{N},\vec{n})$ is given by
\begin{align}
\frac{1}{N}I_{\textrm{c}}\big{(} \mathcal{N}[\eta,\nth]^{\otimes N} , \hat{\mathcal{T}}(\vec{N},\vec{n}) \big{)} = x I_{\textrm{c}}(\mathcal{N}[\eta,\nth], \hat{\tau}\Big{(}\frac{\bar{n}}{x}\Big{)} ). \label{eq:thermal loss correlated multi mode thermal state coherent information}
\end{align}
Since $x$ can be any rational number in $(0,1]$ and the set of rational numbers is a dense subset of the set of real numbers, we have the following improved lower bound of the quantum capacity of the Gaussian thermal-loss channels.
\begin{align}
C_{\textrm{Q}}^{n \le \bar{n}}(\mathcal{N}[\eta,\nth]) &\ge \max_{0< x\le 1}x I_{\textrm{c}}(\mathcal{N}[\eta,\nth], \hat{\tau}\Big{(}\frac{\bar{n}}{x}\Big{)} ) .  \label{eq:thermal loss KN lower bound}
\end{align}
\label{theorem:Quantum capacity thermal loss}
\end{theorem}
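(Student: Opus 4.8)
The plan is to compute the coherent information of $N$ copies of $\mathcal{N}[\eta,\nth]$ on the specific input state $\hat{\mathcal{T}}(\vec{N},\vec{n})$ directly, by exploiting the fact that the Gaussian Fourier transformation $\hat{U}_{\textrm{GFT}}^{(N)}$ commutes (in the appropriate sense) with the $N$-fold tensor product of identical Gaussian channels. First I would recall that a Gaussian thermal-loss channel acts on each mode in the same way, so that $\mathcal{N}[\eta,\nth]^{\otimes N}$ is invariant under conjugation by any passive Gaussian unitary that permutes or mixes the modes --- in particular, $(\hat{U}_{\textrm{GFT}}^{(N)})^{\dagger} \circ \mathcal{N}[\eta,\nth]^{\otimes N} \circ \hat{U}_{\textrm{GFT}}^{(N)} = \mathcal{N}[\eta,\nth]^{\otimes N}$, since the beamsplitter network defining $\hat{U}_{\textrm{GFT}}^{(N)}$ can be ``pushed through'' $N$ identical loss channels coupled to identical vacuum (or thermal) environments (this is the standard fact that a passive unitary followed by i.i.d.\ loss equals i.i.d.\ loss followed by the passive unitary, with the environment modes also mixed by a conjugate unitary). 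The same invariance holds for the complementary channel. Therefore, writing $\hat{\tau}_{\vec{N},\vec{n}} \equiv \{\hat{\tau}(\tfrac{N}{M}\bar n)\}^{\otimes M} \otimes \{\hat{\tau}(0)\}^{\otimes(N-M)}$ for the uncorrelated state so that $\hat{\mathcal{T}}(\vec{N},\vec{n}) = \hat{U}_{\textrm{GFT}}^{(N)} \hat{\tau}_{\vec{N},\vec{n}} (\hat{U}_{\textrm{GFT}}^{(N)})^{\dagger}$, I would use unitary invariance of the von Neumann entropy to get
\begin{align}
I_{\textrm{c}}\big(\mathcal{N}[\eta,\nth]^{\otimes N}, \hat{\mathcal{T}}(\vec{N},\vec{n})\big) &= S\big(\mathcal{N}[\eta,\nth]^{\otimes N}(\hat{U}_{\textrm{GFT}}^{(N)}\hat{\tau}_{\vec{N},\vec{n}}(\hat{U}_{\textrm{GFT}}^{(N)})^{\dagger})\big) \nonumber\\
&\quad - S\big((\mathcal{N}[\eta,\nth]^{c})^{\otimes N}(\hat{U}_{\textrm{GFT}}^{(N)}\hat{\tau}_{\vec{N},\vec{n}}(\hat{U}_{\textrm{GFT}}^{(N)})^{\dagger})\big) \nonumber\\
&= I_{\textrm{c}}\big(\mathcal{N}[\eta,\nth]^{\otimes N}, \hat{\tau}_{\vec{N},\vec{n}}\big),
\end{align}
where in the last line the channel invariance lets me strip off the conjugating unitaries from inside the channel and re-absorb them into the entropies.

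Next, since $\hat{\tau}_{\vec{N},\vec{n}}$ is a tensor product and $\mathcal{N}[\eta,\nth]^{\otimes N}$ is a tensor product of identical factors, the coherent information is additive over the factors: the first $M$ modes each contribute $I_{\textrm{c}}(\mathcal{N}[\eta,\nth], \hat{\tau}(\tfrac{N}{M}\bar n))$ and the remaining $N-M$ modes each contribute $I_{\textrm{c}}(\mathcal{N}[\eta,\nth], \hat{\tau}(0))$. The key observation is that $\hat{\tau}(0) = |0\rangle\langle 0|$ is a pure state, and $\mathcal{N}[\eta,\nth]$ applied to the vacuum yields the environment's thermal state $\hat{\tau}((1-\eta)\nth)$ on the output while the complementary channel yields $\hat{\tau}(\eta\nth)$ on the environment --- but more to the point, I want to argue $I_{\textrm{c}}(\mathcal{N}[\eta,\nth], |0\rangle\langle 0|) = 0$. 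This follows because for a pure input state $|\psi\rangle$, the output $\mathcal{N}(|\psi\rangle\langle\psi|)$ and the complementary output $\mathcal{N}^c(|\psi\rangle\langle\psi|)$ are related by the Stinespring dilation: both are reduced states of the \emph{same} pure tripartite state $\hat U_{AE\to BE}(|\psi\rangle_A \otimes |0\rangle_E)$, hence $S(\mathcal{N}(|\psi\rangle\langle\psi|)) = S(\mathcal{N}^c(|\psi\rangle\langle\psi|))$ by Schmidt symmetry, giving $I_{\textrm{c}} = 0$. Thus only the $M$ ``active'' modes contribute, and
\begin{align}
\frac{1}{N}I_{\textrm{c}}\big(\mathcal{N}[\eta,\nth]^{\otimes N}, \hat{\mathcal{T}}(\vec{N},\vec{n})\big) = \frac{M}{N} I_{\textrm{c}}\Big(\mathcal{N}[\eta,\nth], \hat{\tau}\Big(\frac{N}{M}\bar n\Big)\Big) = x\, I_{\textrm{c}}\Big(\mathcal{N}[\eta,\nth], \hat{\tau}\Big(\frac{\bar n}{x}\Big)\Big),
\end{align}
which is Eq.~\eqref{eq:thermal loss correlated multi mode thermal state coherent information}. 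Finally, I would check the energy constraint: each mode of $\hat{\mathcal{T}}(\vec{N},\vec{n})$ has reduced state $\hat\tau(\bar n)$ (the Fourier transform spreads the $M \cdot \tfrac{N}{M}\bar n = N\bar n$ total photons uniformly over $N$ modes), so the average photon number per channel use is exactly $\bar n$ and the state is a valid input for the energy-constrained capacity. Since the one-shot regularized coherent information lower-bounds the capacity, $C_{\textrm{Q}}^{n\le\bar n}(\mathcal{N}[\eta,\nth]) \ge \tfrac1N I_{\textrm{c}}(\mathcal{N}[\eta,\nth]^{\otimes N}, \hat{\mathcal{T}}(\vec{N},\vec{n}))$ for every $N$ and every $M\le N$; taking the supremum over $M/N = x$ ranging over rationals in $(0,1]$ and using density of the rationals together with continuity of $x\mapsto x\,I_{\textrm{c}}(\mathcal{N}[\eta,\nth],\hat\tau(\bar n/x))$ gives Eq.~\eqref{eq:thermal loss KN lower bound}.

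The main obstacle I anticipate is justifying the channel-invariance identity $(\hat{U}_{\textrm{GFT}}^{(N)})^{\dagger} \circ \mathcal{N}[\eta,\nth]^{\otimes N} \circ \hat{U}_{\textrm{GFT}}^{(N)} = \mathcal{N}[\eta,\nth]^{\otimes N}$ rigorously, and doing the same for the complementary channel with the correct bookkeeping of the environment modes. The cleanest route is probably to use the Stinespring dilation of $\mathcal{N}[\eta,\nth]$ as a beamsplitter-plus-two-mode-squeezer coupling each signal mode to its own i.i.d.\ environment, then observe that $\hat{U}_{\textrm{GFT}}^{(N)}$ on the signal modes can be commuted past the array of identical couplings at the cost of applying a conjugate passive unitary on the environment modes (which are in an i.i.d.\ product state and hence invariant); tracing out the environment then gives the claim, and tracing out the signals instead gives the complementary-channel version. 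An alternative, perhaps more transparent, route avoids invariance altogether: since all states involved are Gaussian, I could just compute the two output covariance matrices of $\mathcal{N}[\eta,\nth]^{\otimes N}$ and $(\mathcal{N}[\eta,\nth]^{c})^{\otimes N}$ acting on $\hat{\mathcal{T}}(\vec{N},\vec{n})$ explicitly, diagonalize them (the Fourier structure makes them circulant-like and block-diagonalizable), read off the symplectic eigenvalues, and sum the $g(\cdot)$ entropies --- this is more computational but entirely mechanical and sidesteps any subtlety about pushing unitaries through channels. I would present the invariance argument as the main proof and keep the direct Gaussian computation in reserve as a cross-check.
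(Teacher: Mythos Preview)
Your proposal is correct and follows essentially the same approach as the paper: use the commutation of the passive Gaussian Fourier transform with $\mathcal{N}[\eta,\nth]^{\otimes N}$ to reduce to the uncorrelated product of thermal states, then apply additivity of coherent information over tensor products. You are in fact slightly more explicit than the paper in two places --- you spell out why the vacuum modes contribute zero coherent information (via the Schmidt-symmetry argument for pure inputs) and you verify the per-mode energy constraint directly --- whereas the paper absorbs these into a one-line specialization and a closing remark about the role of the Fourier transform in spreading photons uniformly.
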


\begin{proof}
Let $\mathcal{U}_{\textrm{GFT}}^{(N)}(\hat{\rho})\equiv \hat{U}_{\textrm{GFT}}^{(N)} \hat{\rho} (\hat{U}_{\textrm{GFT}}^{(N)})^{\dagger}$ be the unitary quantum channel associated with the $N$-mode Gaussian Fourier transformation. Then, $\mathcal{U}_{\textrm{GFT}}^{(N)}$ commutes with the tensor product of Gaussian thermal-loss channels, i.e.,
\begin{align}
\mathcal{U}_{\textrm{GFT}}^{(N)} \mathcal{N}[\eta,\nth]^{\otimes N} = \mathcal{N}[\eta,\nth]^{\otimes N} \mathcal{U}_{\textrm{GFT}}^{(N)}. \label{method_eq:commutation between GFT and thermal loss}
\end{align}
This is a direct consequence of the fact that the $N$-mode Gaussian Fourier transformation is a passive linear optical operation with an orthogonal transformation matrix $T$. Now, recall that the correlated multi-mode thermal state $\hat{\mathcal{T}}(\vec{N},\vec{n})$ with $\vec{N}=(N_{1},\cdots,N_{r})$ and $\vec{n} = (\bar{n}_{1},\cdots, \bar{n}_{r})$ is defined as
\begin{align}
\hat{\mathcal{T}}(\vec{N},\vec{n}) = \mathcal{U}_{\textrm{GFT}}^{(N)} \Big{(}  \big{\lbrace}\hat{\tau}(\bar{n}_{1})\big{\rbrace}^{\otimes N_{1}}\otimes \cdots \otimes \big{\lbrace}\hat{\tau}(\bar{n}_{r})\big{\rbrace}^{\otimes N_{r}} \Big{)} . \label{method_eq:correlated multi-mode thermal state recalled}
\end{align}

Combining Eq.\ \eqref{method_eq:commutation between GFT and thermal loss} and Eq.\ \eqref{method_eq:correlated multi-mode thermal state recalled}, one can see that sending the correlated multi-mode thermal state $\hat{\mathcal{T}}(\vec{N},\vec{n})$ to the $N$ copies of Gaussian thermal-loss channels is equivalent to sending a collection of thermal states $\big{\lbrace}\hat{\tau}(\bar{n}_{1})\big{\rbrace}^{\otimes N_{1}}\otimes \cdots \otimes \big{\lbrace}\hat{\tau}(\bar{n}_{r})\big{\rbrace}^{\otimes N_{r}}$ to the Gaussian thermal-loss channels and then the receiver performing the Gaussian Fourier transformation. Since any local operations are assumed to be free, the achievable communication rates with the correlated multi-mode thermal state $\hat{\mathcal{T}}(\vec{N},\vec{n})$ is the same as the rates achievable with the collection of thermal states $\big{\lbrace}\hat{\tau}(\bar{n}_{1})\big{\rbrace}^{\otimes N_{1}}\otimes \cdots \otimes \big{\lbrace}\hat{\tau}(\bar{n}_{r})\big{\rbrace}^{\otimes N_{r}}$. 

Recall that coherent information is an achievable quantum state transmission rate. Since
\begin{align}
I_{\textrm{c}} ( \mathcal{N}[\eta,\nth]^{\otimes N}, \hat{\mathcal{T}}(\vec{N},\vec{n}) ) &=I_{\textrm{c}}\Big{(} \mathcal{N}[\eta,\nth]^{\otimes N}, \big{\lbrace}\hat{\tau}(\bar{n}_{1})\big{\rbrace}^{\otimes N_{1}}\otimes \cdots \otimes \big{\lbrace}\hat{\tau}(\bar{n}_{r})\big{\rbrace}^{\otimes N_{r}} \Big{)}
\nonumber\\
&= \sum_{k=1}^{r} N_{k} I_{\textrm{c}}( \mathcal{N}[\eta,\nth] , \hat{\tau}(\bar{n}_{k}) ) , \label{eq:theorem 1 proof intermediate}
\end{align}
the correlated multi-mode thermal state $\hat{\mathcal{T}}(\vec{N},\vec{n})$ can achieve the quantum state transmission rate
\begin{align}
\frac{1}{N}\sum_{k=1}^{r} N_{k} I_{\textrm{c}}( \mathcal{N}[\eta,\nth] , \hat{\tau}(\bar{n}_{k}) )
\end{align}
per channel use. Specializing this to $\vec{N} = (M,N-M)$ and $\vec{n} = (\frac{N}{M}\bar{n},0)$, we get the rate
\begin{align}
\frac{M}{N}I_{\textrm{c}}(\mathcal{N}[\eta,\nth], \hat{\tau}\Big{(} \frac{N}{M}\bar{n} \Big{)} ) = xI_{\textrm{c}}( \mathcal{N}[\eta,\nth], \hat{\tau}\Big{(} \frac{\bar{n}}{x} \Big{)} )
\end{align}
as stated in Eq.\ \eqref{eq:thermal loss correlated multi mode thermal state coherent information} in Theorem \ref{theorem:Quantum capacity thermal loss}, where $x\equiv M/N$. Following the rest of the arguments given in Theorem \ref{theorem:Quantum capacity thermal loss}, the theorem follows.

Note that it might appear that the use of Gaussian Fourier transformation is not necessary because as shown in Eq.\ \eqref{eq:theorem 1 proof intermediate}, the coherent information of the correlated multi-mode thermal state $\hat{\mathcal{T}}(\vec{N},\vec{n})$ is the same as the coherent information of the uncorrelated multi-mode thermal state $\big{\lbrace}\hat{\tau}(\bar{n}_{1})\big{\rbrace}^{\otimes N_{1}}\otimes \cdots \otimes \big{\lbrace}\hat{\tau}(\bar{n}_{r})\big{\rbrace}^{\otimes N_{r}}$. It is nevertheless essential to use the Gaussian Fourier transformation because it uniformly spreads the excessive photons in the uncorrelated multi-mode thermal state across all modes such that the energy constraint is fulfilled (see also the discussion below Eq.\ \eqref{eq:Gaussian Fourier transforamtion}). 
\end{proof}


\begin{figure}
\centering
\includegraphics[width=4.0in]{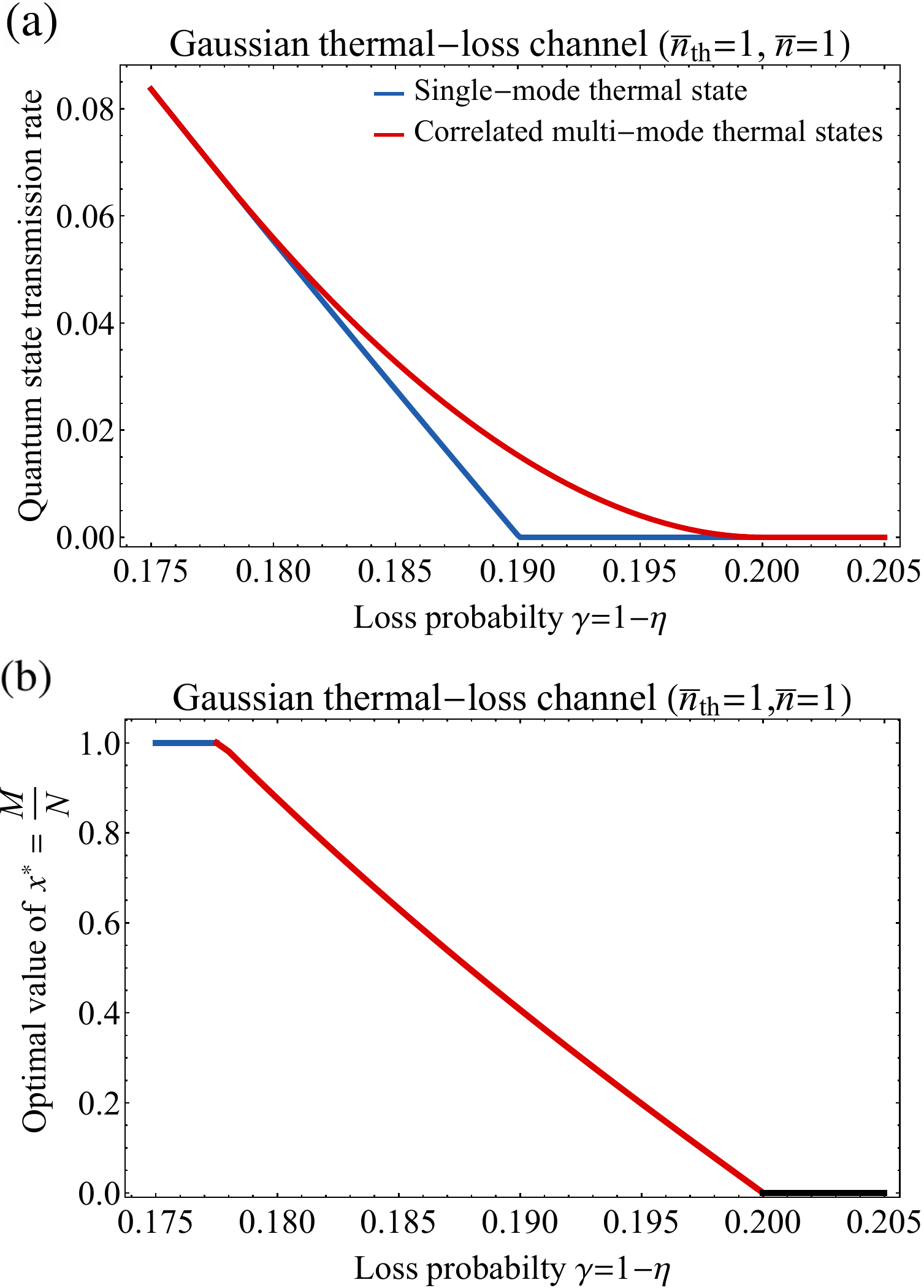}
\caption{[Fig.\ 2 in Nature Communications \textbf{11}, 457 (2020)] (a) Quantum state transmission rate of Gaussian thermal-loss channels $\mathcal{N}[\eta,\nth=1]$ as a function of the loss probability $\gamma=1-\eta$ achievable with the single-mode thermal state $\hat{\tau}(\bar{n})$ (blue, Eq.\ \eqref{eq:lower bound Gaussian thermal loss channel HW}) and with a correlated multi-mode thermal state $\hat{\mathcal{T}}(\vec{N},\vec{n})$ (red, Eq.\ \eqref{eq:thermal loss KN lower bound}) subject to the maximum allowed average photon number $\bar{n}=1$ per channel use. For the correlated multi-mode thermal states, the achievable rate was evaluated by taking $M/N = x^{\star}= \textrm{argmax}_{0 < x\le 1} x I_{\textrm{c}}(\mathcal{N}[\eta,\nth], \hat{\tau}(\bar{n}/x) )$, where $\vec{N} = (M,N-M)$ and $\vec{n} =(\frac{N}{M}\bar{n},0)$. (b) The optimal value of $x^{\star} = M/N$ at each $\gamma$, color-coded by the type of optimizer (blue: single-mode thermal states; red: correlated multi-mode thermal states), that yields the maximum quantum state transmission rate. We set $x^{\star}=M/N=0$ when all the states we consider yield vanishing quantum state transmission rate (black). }
\label{fig:Quantum capacity thermal loss}
\end{figure}

Our new bound in Eq.\ \eqref{eq:thermal loss KN lower bound} is at least as tight as the previous bound in Eq.\ \eqref{eq:lower bound Gaussian thermal loss channel HW} since the previous bound can be recovered by plugging in $x=1$ to the objective function. To demonstrate that our new bound can be strictly tighter than the previous bound, we take a family of Gaussian thermal-loss channels $\mathcal{N}[\eta,\nth]$ with $\nth=1$ and compute the new bound in Eq.\ \eqref{eq:thermal loss KN lower bound} for each $\eta=1-\gamma$, assuming that the maximum allowed average photon number per channel is $\bar{n}=1$. In Fig.\ \ref{fig:Quantum capacity thermal loss}(a), we plot the quantum state transmission rates achievable with the single-mode thermal state $\hat{\tau}(\bar{n}=1)$ and with the correlated multi-mode thermal states $\hat{\mathcal{T}}(\vec{N},\vec{n})$. When the loss probability is low (i.e., $\gamma \le 0.1775$), the single-mode thermal state yields the largest coherent information. However, when the loss probability is higher ($\gamma\ge 0.1775$), there exists a correlated multi-mode thermal state that outperforms the single-mode thermal state. Thus, we established a tighter lower bound to the quantum capacity of Gaussian thermal-loss channels than previously known \cite{Holevo2001}. In Fig.\ \ref{fig:Quantum capacity thermal loss}(b), we plot the optimal value of $M/N$ as a function of $\gamma$ that allows such a higher communication rate. It is important to note that only a finite number of modes is required if the optimal value of $x$ is a rational number. For example, $x^{\star}=3/8$ corresponds to the correlated $8$-mode thermal state $\hat{\mathcal{T}}(\vec{N},\vec{n})$ with $\vec{N}=(M,N-M)=(3,5)$ and $\vec{n} = (8\bar{n}/3,0)$. On the other hand, if $x^{\star}$ is irrational, one needs infinitely many modes to accurately obtain the rate $x I_{\textrm{c}}(\mathcal{N}[\eta,\nth], \hat{\tau} ( \bar{n}/x )  )|_{x=x^{\star}}$. 

\subsection{Convexity of coherent information and superadditivity}   
\label{subsection:Convexity of coherent information and superadditivity}

We now explain the non-trivial behavior shown in Fig.\ \ref{fig:Quantum capacity thermal loss} (i.e., $x^{\star} < 1$) in an intuitive way. Specifically, we relate the observed non-trivial behavior with the convexity of the coherent information $I_{\textrm{c}}(\mathcal{N}[\eta,\nth],\hat{\tau}(\bar{n}))$ in the allowed average photon number $\bar{n}$ for fixed values of $\eta$ and $\nth$. For concreteness, we take the Gaussian thermal-loss channel $\mathcal{N}[\eta,\nth]$ with $\eta = 0.81$ (or $\gamma = 0.19$) and $\nth = 1$ and plot its coherent information $I_{\textrm{c}}(\mathcal{N}[\eta,\nth],\hat{\tau}(\bar{n}))$ with respect to single-mode thermal states $\hat{\tau}(\bar{n})$ as a function of $\bar{n}$. As can be seen from the solid blue line in Fig.\ \ref{fig:achievable rates verses nbar}a, the coherent information $I_{\textrm{c}}(\mathcal{N}[\eta,\nth],\hat{\tau}(\bar{n}))$ is convex in $\bar{n}$ for small $\bar{n}$ and concave for large $\bar{n}$. Consider the region of rates achievable by the single-mode thermal states $A^{(1)}_{\eta,\nth} \equiv \lbrace (\bar{n},R) | \bar{n}\ge 0  \textrm{ and } R \le I_{\textrm{c}}(\mathcal{N}[\eta,\nth],\hat{\tau}(\bar{n}))  \rbrace$ (shaded blue region in Fig.\ \ref{fig:achievable rates verses nbar}a) and also its convex hull $A^{(\infty)}_{\eta,\nth} \equiv \textrm{ConvexHull}(A^{(1)}_{\eta,\nth})$ (shaded red and blue regions in Fig.\ \ref{fig:achievable rates verses nbar}a). We observe that the region $A^{(\infty)}_{\eta,\nth}$ is achievable by correlated multi-mode thermal states: Consider a generic convex combination of $r$ points in $A^{(1)}_{\eta,\nth}$, i.e.,
\begin{align}
\sum_{k=1}^{r} \lambda_{k} \Big{(} \bar{n}_{k} , I_{\textrm{c}}( \mathcal{N}[\eta,\nth],\hat{\tau}(\bar{n}_{k}) )  \Big{)}, \label{eq:rate convex combination}
\end{align}
where $\lambda_{k}\ge 0$ for all $k\in\lbrace 1,\cdots, r\rbrace$ and $\sum_{k=1}^{r} \lambda_{k}=1$. Then, the rate in Eq.\ \eqref{eq:rate convex combination} can be achieved by a correlated multi-mode thermal state $\hat{\mathcal{T}}(\boldsymbol{\mathrm{N}},\boldsymbol{\mathrm{n}})$ with $\boldsymbol{\mathrm{N}} = (N_{1},\cdots, N_{r})$ and $\boldsymbol{\mathrm{n}} = (\bar{n}_{1},\cdots, \bar{n}_{r})$ such that $\lambda_{k} = N_{k}/N$ for all $k\in \lbrace1,\cdots, r\rbrace$ where $N=\sum_{k=1}^{r}N_{k}$. Note that $\lambda_{k}$ should be a rational number. Similarly as above, however, by choosing a sufficiently large $N$ one can approximate any irrational $\lambda_{k}$ to a desired accuracy which can be arbitrarily small.    

\begin{figure}[t!]
\centering
\includegraphics[width=4.0in]{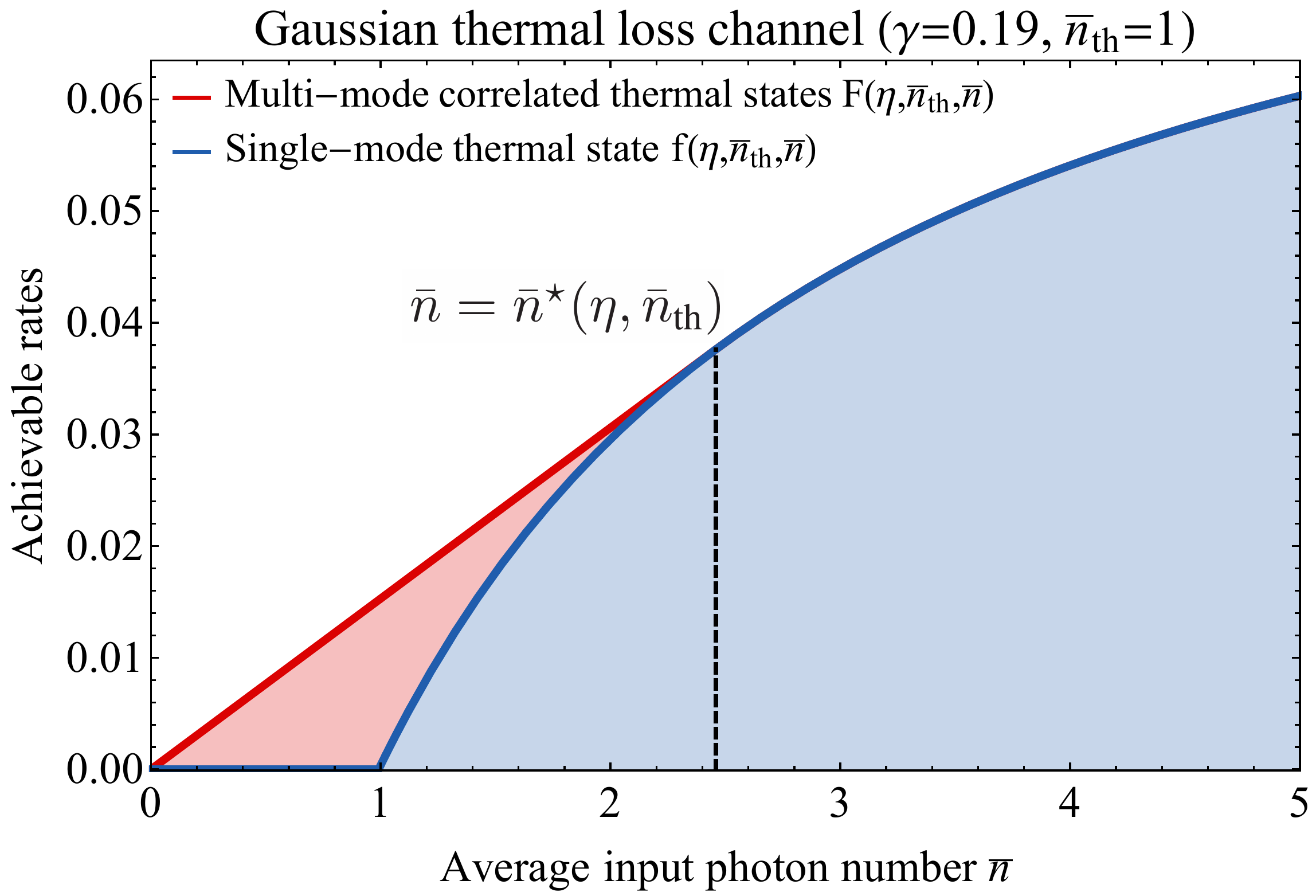}
\caption{[Fig.\ 3(a) in Nature Communications \textbf{11}, 457 (2020)] Achievable quantum state transmission rate of the single-mode (blue) and correlated multi-mode (red) thermal states as a function of $\bar{n}$ for the Gaussian thermal-loss channel $\mathcal{N}[\eta=0.81,\nth = 1]$. Note that the first-order contact point is given by $\bar{n}^{\star}(\eta=0.81,\nth=1) = 2.458$ which corresponds to $x^{\star} = 1/2.458 = 0.407$. This value agrees with $x^{\star}=0.407$ which is independently obtained in Fig.\ \ref{fig:Quantum capacity thermal loss}(b) for $\gamma=0.19$ and $\nth = 1$ (see also the main text).  }
\label{fig:achievable rates verses nbar}
\end{figure}

Importantly, due to the convexity of the coherent information $I_{\textrm{c}}(\mathcal{N}[\eta,\nth],\hat{\tau}(\bar{n}))$ in the small $\bar{n}$ regime, the region $A^{(\infty)}_{\eta,\nth}$ properly contains the region $A^{(1)}_{\eta,\nth}$, as indicated by the shaded red region in Fig.\ \ref{fig:achievable rates verses nbar}. This is why correlated multi-mode thermal states outperform single-mode thermal states in the noisy channel regime. In particular, the highest achievable rate can be obtained by taking the convex combination of the origin $(0,0)$ and the first-order contact point $(\bar{n}^{\star}(\eta,\nth) ,  I_{\textrm{c}}(\mathcal{N}[\eta,\nth], \hat{\tau} ( \bar{n}^{\star}(\eta,\nth) ) )$ with some weights $\lambda$ and $1-\lambda$, respectively (see the solid red line in Fig.\ \ref{fig:achievable rates verses nbar}). Note that the rate $x I_{\textrm{c}}(\mathcal{N}[\eta,\nth], \hat{\tau} ( \bar{n}/x )  )$ in Eq.\ \eqref{eq:thermal loss KN lower bound} can be understood as the one that is derived from such a convex combination with $1-\lambda = x = \bar{n}/\bar{n}^{\star}(\eta,\nth)$. For example, in the case of the Gaussian thermal-loss channel $\mathcal{N}[\eta,\nth]$ with $\eta = 0.81$ (or $\gamma = 0.19$) and $\nth =1$, the first-order contact point is given by $\bar{n}^{\star}(\eta,\nth) = 2.458$ (see Fig.\ \ref{fig:achievable rates verses nbar}) which corresponds to $x = 0.407$ for $\bar{n}=1$: This agrees with the optimal value $x^{\star} = 0.407$ in Fig.\ \ref{fig:Quantum capacity thermal loss}(b) for $\eta = 0.81$ (or $\gamma = 0.19$), $\nth=1$, and $\bar{n}=1$.

In a related work \cite{Lupo2009}, it was shown that a global encoding scheme with a correlated Gaussian input state can yield larger coherent information than a local encoding scheme with an uncorrelated Gaussian input state for lossy bosonic channels with correlated environmental noise. We remark that our work differs from this previous work in that we show a correlated Gaussian input state can outperform its uncorrelated counterpart even for the usual thermal-loss channels with uncorrelated environmental noise. Note that the loss model with uncorrelated environmental noise which we consider here has greater practical relevance because noise in realistic optical and microwave communication channels is well approximated by thermal-loss channels with uncorrelated environmental thermal noise \cite{Xiang2017,Axline2018}.  

\section{Open questions}
\label{section:Open questions quantum capacity} 

Note that our result in Theorem \ref{theorem:Quantum capacity thermal loss} can be understood as the establishment of the superadditivity of the coherent information of Gaussian thermal-loss channels with respect to Gaussian input states: As shown in Ref.\ \cite{Holevo2001}, the single-mode thermal state $\hat{\tau}(\bar{n})$ is the optimal single-mode Gaussian input state for the coherent information of Gaussian thermal-loss channels. Since we show that multi-mode correlated thermal states (which are Gaussian) sometimes outperform the single-mode thermal state, it means that the coherent information of Gaussian thermal-loss channels is superadditive with respect to Gaussian input states. On the other hand, it is still unclear whether the coherent information of Gaussian thermal-loss channels is genuinely superadditive with respect to all input states. This is because technically there is still a possibility that some non-Gaussian input state may outperform all Gaussian input states. It will thus be interesting to see if this is the case, and thus whether our result in Theorem \ref{theorem:Quantum capacity thermal loss} implies the genuine superadditivity of Gaussian thermal-loss channels.  

Another interesting open question is whether the convexity argument presented in Subsection \ref{subsection:Convexity of coherent information and superadditivity} can be adapted to explain the known superadditivity behavior of the qubit
depolarization \cite{DiVincenzo1998,Smith2007,Fern2008} and dephrasure \cite{Leditzky2018,Pirandola2019c,Bausch2020} channels. To contrast, we remark that the coherent information of a degradable channel is concave with respect to input states and its quantum capacity is additive  \cite{Devetak2005C,Caruso2006,Yard2008} (see also Ref.\ \cite{Watanabe2012}). 

We also remark that our improvement of the lower bounds is not strong enough to close the gap between the lower bound and the best-known upper bounds of the energy-constrained quantum quantum capacity of Gaussian thermal-loss channels \cite{Sharma2018,Rosati2018,Noh2019}. It will thus be interesting to see whether it is possible to further improve the lower and upper bounds to get a better understanding of the quantum capacity of Gaussian thermal-loss channels.   

\chapter{Achievable quantum state transmission rates with bosonic codes}
\label{chapter:Achievable communication rates with bosonic codes}

In this chapter, I will present results in Ref.\ \cite{Noh2019} on the achievable quantum state transmission rates with bosonic codes against Gaussian thermal-loss channels. While the quantum capacity is an achievable rate, evaluation of the quantum capacity does not lend explicit error correction strategies that achieve the quantum capacity. The main goal of this chapter is to provide explicit bosonic quantum error correction schemes that nearly achieve the fundamental limits set by the quantum communication theory.   

In Section \ref{section:Achievable rates of multi-mode GKP codes}, I will show that in the energy-unconstrained case, there exists a class of multi-mode GKP codes that achieves the quantum capacity of Gaussian thermal-loss channels up to at most a constant gap from the optimized data-processing upper bound established in the previous chapter. In Section \ref{section:Achievable rates of the numerically optimized single-mode codes}, I will apply the biconvex optimization technique which I introduced in Chapter \ref{chapter:Benchmarking and optimizing single-mode bosonic codes} to find optimal qudit-into-an-oscillator bosonic codes. Then, I will compute the achievable rates of these numerically optimized codes and demonstrate that the optimized single-mode codes achieve the energy-constrained quantum capacity of Gaussian thermal-loss channels up to at most a constant gap which is smaller than that of the energy-unconstrained case. I will conclude the chapter by outlining several open questions in Section \ref{section:Open questions achievable rates}.   

\section{Achievable rates of multi-mode GKP codes}
\label{section:Achievable rates of multi-mode GKP codes}

Recall Section \ref{section:Translation-symmetric bosonic codes} and note that it is possible to increase the size of the correctable shifts by using the hexagonal-lattice GKP code instead of the square-lattice GKP code. The reason why the hexagonal-lattice GKP code outperforms the square-lattice GKP code is because the hexagonal lattice allows more efficient circle packing than the square lattice. Furthermore, we can improve the performance of the hexagonal-lattice GKP code by using multiple (say $N$) modes collectively and using a $2N$-dimensional symplectic lattice allowing more efficient sphere packing than the $2$-dimensional hexagonal lattice. 

It is known that there exists a $2N$-dimensional lattice in the Euclidean space allowing $d_{\min}\ge \sqrt{N/(\pi e)}$ \cite{Hlawka1943} and a stronger statement was proven in Ref.\ \cite{Buser1994} that the same holds also for symplectic lattices. Choosing such a lattice to define the GKP code, one can correct all random displacement errors within the radius $r\le   \sqrt{N/( 2 e d)}$. For the Gaussian random displacement channel $\mathcal{N}_{B_{2}}[\sigma]$, the probability of a displacement with radius larger than $ \sqrt{2N}\sigma$ occurring vanishes in the limit of infinitely many modes $N\rightarrow\infty$. Thus, if $\sqrt{N/( 2 e d)}\ge  \sqrt{2N}\sigma$ is satisfied, i.e., 
\begin{equation}
d\le d_{\sigma} \equiv  \frac{1}{4e\sigma^{2}}, 
\end{equation}
encoded information can be transmitted faithfully with an asymptotically vanishing decoding error probability as $N\rightarrow\infty$. Then, it follows that a communication rate 
\begin{align}
R=\log_{2} \lfloor d_{\sigma} \rfloor = \log_{2} \Big{\lfloor} \frac{1}{4e\sigma^{2}} \Big{\rfloor} 
\end{align}
can be achieved for the Gaussian random displacement channel $\mathcal{N}_{B_{2}}[\sigma]$ (see Eq.\ (55) in Ref.\ \cite{Harrington2001}; the floor function is due to the fact that $d$ can only be an integer).

Note that the above estimation is overly conservative since we did not take into account correctable displacements outside the correctable sphere. With an improved estimation of the decoding error probability, the following statement was ultimately established: 

\begin{lemma}[Eq.\ (66) in \cite{Harrington2001}]  
Let $\mathcal{N}_{B_{2}}[\sigma]$ be a Gaussian random shift channel. Then, there exists a family of symplectic lattices generated by symplectic matrices $\boldsymbol{S}$ such that the corresponding GKP code family achieves the following rate for the Gaussian random displacement channel $\mathcal{N}_{B_{2}}[\sigma]$ in the $N\rightarrow\infty$ limit.  
\begin{equation}
R = \max \Big{(} \log_{2} \Big{\lfloor} \frac{1}{e\sigma^{2}} \Big{\rfloor} , 0  \Big{)}.  
\end{equation}
Here, $\lfloor x \rfloor$ is the floor function, due to the fact that the dimension of the GKP code space $d$ can only be an integer.  \label{lemma:achievable GKP communication rate for Gaussian random displacement channel}
\end{lemma}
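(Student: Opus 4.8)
The plan is to follow the random-coding / sphere-packing argument of Harrington and Preskill but with a more careful accounting of the decoding error probability than the crude ``stay inside the correctable ball'' bound used above. First I would set up the lattice picture: an $N$-mode symplectic lattice code is specified by a symplectic matrix $\boldsymbol{S}$ (so $\boldsymbol{S}\boldsymbol{\Omega}\boldsymbol{S}^{T}=d\boldsymbol{\Omega}$ after rescaling), the logical dimension is $d^{N}$, and maximum-likelihood decoding against $\mathcal{N}_{B_{2}}[\sigma]$ amounts to the closest-vector problem in the dual lattice with generator $\sqrt{2\pi}\boldsymbol{S}^{-1}$; the decoding succeeds whenever the random shift vector $\boldsymbol{\xi}\in\mathbb{R}^{2N}$ lies in the Voronoi cell of the origin. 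The covolume of the relevant lattice is fixed by $\det(\sqrt{2\pi}\boldsymbol{S}^{-1})=(2\pi)^{N}d^{-N}$, so the "density" of the code is controlled entirely by $d$, and the communication rate is $R=\log_2 d$ per mode (taking the logical alphabet to have $d$ letters per mode).

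The heart of the argument is an averaging (Minkowski--Hlawka type) bound: averaging the decoding failure probability over an ensemble of symplectic lattices of the prescribed covolume, one gets an upper bound on the average failure probability of the form $\overline{P_{\mathrm{fail}}}\le$ (a Gaussian integral over $\mathbb{R}^{2N}$ weighted by the lattice point count). The key step I would carry out is to evaluate this integral carefully: $\int_{\mathbb{R}^{2N}}\frac{d^{2N}\boldsymbol{\xi}}{(2\pi\sigma^{2})^{N}}\exp[-|\boldsymbol{\xi}|^{2}/(2\sigma^{2})]\cdot(\text{number of nonzero lattice points within }|\boldsymbol{\xi}|)$, which by the covolume constraint behaves like $(2\pi\sigma^{2})^{-N}\cdot d^{N}(2\pi)^{-N}\cdot\mathrm{vol}(B_{2N})\cdot(\text{moment of }|\boldsymbol{\xi}|^{2N})$. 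Using $\mathrm{vol}(B_{2N})=\pi^{N}/N!$, the $2N$-th moment of the Gaussian $\sim (2\sigma^{2})^{N}N!$ (up to polynomial-in-$N$ factors from Stirling), the factorials cancel and one is left with $\overline{P_{\mathrm{fail}}}\lesssim \mathrm{poly}(N)\cdot(e\sigma^{2}d)^{N}$ — here the crucial factor of $e$ appears precisely through Stirling's approximation of $N!/N^{N}$. Hence whenever $d< 1/(e\sigma^{2})$ the averaged failure probability vanishes as $N\to\infty$, so by the probabilistic method there exists a sequence of symplectic lattices achieving $R=\log_2\lfloor 1/(e\sigma^{2})\rfloor$; the floor is forced because $d\in\mathbb{Z}$, and the $\max(\cdot,0)$ covers the regime $e\sigma^{2}\ge 1$ where no positive rate survives.

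The main obstacle I anticipate is the symplectic restriction in the Minkowski--Hlawka averaging step. The classical Minkowski--Hlawka theorem gives the existence of good lattices of prescribed covolume, but we need the averaged bound to hold over an ensemble of \emph{symplectic} lattices (so that $\boldsymbol{S}$ defines a legitimate multi-mode GKP code with commuting stabilizers, cf.\ Eq.~\eqref{eq:condition on the matrix multi-mode GKP}). The resolution is to invoke the Siegel-type averaging result over the symplectic modular group $Sp(2N,\mathbb{Z})\backslash Sp(2N,\mathbb{R})$ — this is exactly the content of Buser--Sarnak \cite{Buser1994}, which guarantees that symplectic lattices enjoy the same asymptotic packing density (up to the relevant constants) as general lattices. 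Once that averaging identity is in hand, the remaining steps — the Gaussian integral estimate, the Stirling cancellation producing the constant $e$, and the probabilistic-method conclusion — are routine, though care is needed to confirm the polynomial-in-$N$ prefactors do not spoil the threshold and that the tail contribution from shifts far outside the first Brillouin zone is genuinely negligible. Finally I would note that combining this with the identity ``Pure-loss $+$ Amplification $=$ Random shift'' (Theorem~\ref{theorem:loss plus amplification is displacement post-amplification}) immediately converts this into an achievable rate for the bosonic pure-loss channel, and more generally for Gaussian thermal-loss channels via the decomposition in Lemma~\ref{lemma:Thermal-loss = Pure-loss + Amplification}, which is presumably how the next section proceeds.
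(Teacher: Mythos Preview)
The paper does not prove this lemma; it is stated as a citation to Eq.~(66) of Harrington and Preskill \cite{Harrington2001} and used as a black box. The surrounding text only sketches the weaker bound $R=\log_{2}\lfloor 1/(4e\sigma^{2})\rfloor$ via the ``stay inside the correctable ball'' argument you allude to, and then simply remarks that ``with an improved estimation of the decoding error probability'' the sharper constant follows, deferring entirely to the original reference.

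Your proposal is a faithful high-level outline of the Harrington--Preskill argument itself, so in that sense it goes well beyond what the thesis does. The ingredients you isolate --- a Minkowski--Hlawka/Siegel averaging over lattices of fixed covolume, the Buser--Sarnak \cite{Buser1994} extension guaranteeing the same packing asymptotics for \emph{symplectic} lattices, the union-bound estimate on the failure probability, and the Stirling cancellation between $\mathrm{vol}(B_{2N})=\pi^{N}/N!$ and the $2N$-th Gaussian moment that produces the constant $e$ --- are exactly the right ones. Two small remarks: first, your phrasing ``number of nonzero lattice points within $|\boldsymbol{\xi}|$'' is slightly imprecise --- the union bound runs over lattice points whose Voronoi cell could capture $\boldsymbol{\xi}$, and it is the Siegel mean-value formula that converts the lattice-averaged count into a volume; second, your forward reference to combining this with ``Loss $+$ Amplification $=$ Random shift'' is indeed how the thesis proceeds (Theorem~\ref{theorem:achievable rate of the GKP codes for Gaussian loss channel}), though there the pre-amplification variant of Lemma~\ref{lemma:Thermal-loss + Amplification = Random Shift} is used rather than the post-amplification Theorem~\ref{theorem:loss plus amplification is displacement post-amplification}.
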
  

In Section \ref{section:Decoding GKP codes subject to excitation loss errors}, we showed that the GKP codes work well against excitation loss errors because we can convert an excitation loss error via an amplification to a random shift error, and then use the conventional decoding strategies for the GKP codes as described in Section \ref{section:Translation-symmetric bosonic codes}. Here, we generalize this observation to the Gaussian thermal-loss channels. 

\begin{lemma}[Thermal-loss + Amplification = Random Shift \cite{Noh2019}]
Let $\mathcal{N}[\eta,\nth]$ be a Gaussian thermal-loss channel and $\mathcal{A}[\frac{1}{\eta},0]$ be a quantum-limited amplification channel with gain $\frac{1}{\eta}$. Then, we have 
\begin{align}
\mathcal{N}[\eta,\nth] \cdot \mathcal{A}\Big{[}\frac{1}{\eta},0\Big{]} =  \mathcal{N}_{B_{2}}[\tilde{\sigma}_{\eta,\nth}], 
\end{align}
where the noise standard variance $(\tilde{\sigma}_{\eta,\nth})^{2}$ is given by
\begin{align}
(\tilde{\sigma}_{\eta,\nth})^{2} = (1-\eta)(\nth+1). \label{eq}
\end{align} \label{lemma:Thermal-loss + Amplification = Random Shift}
\end{lemma}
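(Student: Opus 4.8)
The cleanest route is to mimic the proof of Theorem~\ref{theorem:loss plus amplification is displacement post-amplification} (``Loss + Amplification = Random shift'') verbatim, working entirely at the level of the Gaussian characterization $(\boldsymbol{T},\boldsymbol{N},\boldsymbol{d})$ of each channel. The plan is to recall that a Gaussian channel acts on the first two moments of a state as $\boldsymbol{\bar{x}}' = \boldsymbol{T}\boldsymbol{\bar{x}} + \boldsymbol{d}$ and $\boldsymbol{V}' = \boldsymbol{T}\boldsymbol{V}\boldsymbol{T}^{T} + \boldsymbol{N}$, and that composition of Gaussian channels corresponds to composing these affine actions. Since $\mathcal{N}_{B_{2}}[\sigma]$ is characterized by $(\boldsymbol{I}_{2},\sigma^{2}\boldsymbol{I}_{2},0)$ (Table~\ref{table:random shift errors}), the whole statement reduces to a $2\times 2$ matrix identity plus a scalar computation of the added noise variance.

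First I would write down the characterizations of the two channels being composed. The quantum-limited amplification channel $\mathcal{A}[1/\eta,0]$ has $(\boldsymbol{T}_{A},\boldsymbol{N}_{A},\boldsymbol{d}_{A}) = (\sqrt{1/\eta}\,\boldsymbol{I}_{2}, \tfrac{1}{2}(1/\eta - 1)\boldsymbol{I}_{2}, 0)$, exactly as used in the proof of Theorem~\ref{theorem:loss plus amplification is displacement post-amplification}. The Gaussian thermal-loss channel $\mathcal{N}[\eta,\nth]$ has $(\boldsymbol{T}_{L},\boldsymbol{N}_{L},\boldsymbol{d}_{L}) = (\sqrt{\eta}\,\boldsymbol{I}_{2}, \tfrac{1}{2}(1-\eta)(2\nth+1)\boldsymbol{I}_{2}, 0)$; the only change from the pure-loss case is that the transferred environment noise is now a thermal state with covariance $\tfrac{1}{2}(2\nth+1)\boldsymbol{I}_{2}$ rather than the vacuum $\tfrac{1}{2}\boldsymbol{I}_{2}$, so the added-noise matrix picks up the factor $(2\nth+1)$. (If this characterization of $\mathcal{N}[\eta,\nth]$ is not already quoted in the text, I would derive it in one line from the beam-splitter dilation with a thermal ancilla, or cite the appendix on Gaussian channels.)

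Next I would compose the two maps in the order dictated by the statement, $\mathcal{N}[\eta,\nth]\cdot\mathcal{A}[1/\eta,0]$, i.e.\ amplify first, then apply the thermal-loss channel. The transmission matrix is $\boldsymbol{T}_{L}\boldsymbol{T}_{A} = \sqrt{\eta}\cdot\sqrt{1/\eta}\,\boldsymbol{I}_{2} = \boldsymbol{I}_{2}$, so the net effect on the mean is trivial. The net added-noise matrix is $\boldsymbol{T}_{L}\boldsymbol{N}_{A}\boldsymbol{T}_{L}^{T} + \boldsymbol{N}_{L} = \eta\cdot\tfrac{1}{2}(1/\eta-1)\boldsymbol{I}_{2} + \tfrac{1}{2}(1-\eta)(2\nth+1)\boldsymbol{I}_{2} = \tfrac{1}{2}(1-\eta)\boldsymbol{I}_{2} + \tfrac{1}{2}(1-\eta)(2\nth+1)\boldsymbol{I}_{2} = (1-\eta)(\nth+1)\boldsymbol{I}_{2}$. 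Comparing with the characterization $(\boldsymbol{I}_{2},\sigma^{2}\boldsymbol{I}_{2},0)$ of $\mathcal{N}_{B_{2}}[\sigma]$ immediately gives $\tilde{\sigma}_{\eta,\nth}^{2} = (1-\eta)(\nth+1)$, which is the claimed identity. Since both sides are Gaussian channels and a Gaussian channel is uniquely determined by its action on the first two moments, equality of the characterizations implies equality of the channels, completing the proof.

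\textbf{Main obstacle.} There is essentially no hard step here: the computation is a two-line $2\times 2$ matrix algebra once the characterizations are in hand. The only point requiring minor care is bookkeeping of the order of composition --- note that this lemma amplifies \emph{before} the thermal-loss channel (in contrast to Theorem~\ref{theorem:loss plus amplification is displacement post-amplification}, where the loss acts first and amplification second) --- and, relatedly, confirming that one is allowed to commute or reorder these Gaussian channels as needed in the later application (Section~\ref{section:Achievable rates of multi-mode GKP codes}). I would also double-check the normalization convention for the thermal covariance ($\tfrac{1}{2}(2\nth+1)$ versus $(\nth+\tfrac12)$, which are the same) to make sure the factor $(2\nth+1)$ is consistent with whatever convention Appendix~\ref{appendix:Gaussian states, unitaries, and channels} fixes for the vacuum variance $\tfrac12$.
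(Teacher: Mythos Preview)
Your proposal is correct and follows exactly the approach the paper uses for the closely related Theorem~\ref{theorem:loss plus amplification is displacement post-amplification}: compute the composite Gaussian characterization $(\boldsymbol{T},\boldsymbol{N},\boldsymbol{d})$ and identify it with that of $\mathcal{N}_{B_{2}}[\sigma]$. The paper does not write out a separate proof for this lemma, but your derivation is precisely the intended one, and your bookkeeping of the order (amplify first, then thermal-loss) and the thermal covariance $(1-\eta)(\nth+\tfrac12)\boldsymbol{I}_{2}$ from Definition~\ref{definition:Gaussian loss channels} is correct.
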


Note that we have reversed the order of the amplification and the loss channel when compared with the earlier channel conversion scheme in Theorem \ref{theorem:loss plus amplification is displacement post-amplification}. Thus, the resulting noise variance $(\tilde{\sigma}_{\eta,\nth})^{2}$ is smaller than the noise variance $(\sigma_{\eta,0})^{2}$ obtained in Theorem \ref{theorem:loss plus amplification is displacement post-amplification} when $\nth =0 $, i.e., 
\begin{align}
(\tilde{\sigma}_{\eta,0})^{2} = (1-\eta) < \frac{1-\eta}{\eta} = (\sigma_{\eta,0})^{2}, 
\end{align}
for all $\eta \in [0,1)$. The reason for the smaller noise variance is that we have applied the amplification channel before sending a quantum state to the thermal-loss channel. This way, the added noise from the amplification channel is reduced by a factor of $\eta$ due to the loss channel and the added noise from the loss channel is not amplified. On the other hand, if we send a quantum state to the loss channel and then amplify it later on the receiver side, the added noise from the loss channel is amplified by a factor of $\frac{1}{\eta}$ due to the amplification channel and the added noise from the amplification channel is not reduced. 

In Section \ref{section:Decoding GKP codes subject to excitation loss errors}, we did not consider this twisted order of the amplification and the loss channel because we restricted ourselves to situations that any decoding attempts have to be made after the noisy channel is applied. However, in the context of quantum communication, the encoder (or information sender) can pre-amplify the transmitted quantum state so the channel conversion strategy in Lemma \ref{lemma:Thermal-loss + Amplification = Random Shift} is allowed. Now, based on this lemma, we establish achievable quantum state transmission rates of the GKP codes against Gaussian thermal-loss channels: 

\begin{theorem}[Achievable rates of the GKP codes against Gaussian thermal-loss channels \cite{Noh2019}]
Let $\mathcal{N}[\eta,\nth]$ be a Gaussian thermal-loss channel. Then, there exists a family of symplectic lattices generated by symplectic matrices $\boldsymbol{S}$ such that the corresponding GKP code family achieves the following rate for the Gaussian random displacement channel $\mathcal{N}[\eta,\nth]$ in the $N\rightarrow\infty$ limit.  
\begin{equation}
R = \max \Big{(} \log_{2} \Big{\lfloor} \frac{1}{e\tilde{\sigma}_{\eta,\nth}^{2}} \Big{\rfloor} , 0  \Big{)} =  \max \Big{(} \log_{2} \Big{\lfloor} \frac{1}{e(1-\eta)(\nth+1)} \Big{\rfloor} , 0  \Big{)} . 
\end{equation}
Here, $\lfloor x \rfloor$ is the floor function, due to the fact that the dimension of the GKP code space $d$ can only be an integer. \label{theorem:achievable rate of the GKP codes for Gaussian loss channel}
\end{theorem}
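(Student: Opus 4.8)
The plan is to reduce Theorem \ref{theorem:achievable rate of the GKP codes for Gaussian loss channel} to the already-established achievable rate for the Gaussian random displacement channel (Lemma \ref{lemma:achievable GKP communication rate for Gaussian random displacement channel}) via the channel-conversion identity of Lemma \ref{lemma:Thermal-loss + Amplification = Random Shift}. The key observation is that the encoder is free to apply a quantum-limited amplification channel $\mathcal{A}[1/\eta,0]$ to each of the $N$ transmitted modes \emph{before} they enter the Gaussian thermal-loss channel, and the decoder can then treat the composite channel $\mathcal{N}[\eta,\nth]\cdot\mathcal{A}[1/\eta,0]$ as a single effective channel. By Lemma \ref{lemma:Thermal-loss + Amplification = Random Shift}, this composite channel is exactly the Gaussian random shift channel $\mathcal{N}_{B_2}[\tilde{\sigma}_{\eta,\nth}]$ with $\tilde{\sigma}_{\eta,\nth}^2 = (1-\eta)(\nth+1)$.

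First I would make precise the sense in which pre-amplification is a legitimate part of an encoding scheme: the amplification $\mathcal{A}[1/\eta,0]^{\otimes N}$ is a CPTP map, so if $\mathcal{E}_{A_0 \to A^N}$ is any encoding map achieving rate $R$ for the channel $\mathcal{N}_{B_2}[\tilde{\sigma}_{\eta,\nth}]^{\otimes N}$ with decoder $\mathcal{R}_{B^N \to B_0}$, then $\mathcal{A}[1/\eta,0]^{\otimes N}\cdot\mathcal{E}_{A_0 \to A^N}$ together with the same $\mathcal{R}$ achieves the identical rate for $\mathcal{N}[\eta,\nth]^{\otimes N}$, because
\begin{align}
\mathcal{N}[\eta,\nth]^{\otimes N}\cdot\mathcal{A}\Big[\frac{1}{\eta},0\Big]^{\otimes N} = \Big(\mathcal{N}[\eta,\nth]\cdot\mathcal{A}\Big[\frac{1}{\eta},0\Big]\Big)^{\otimes N} = \mathcal{N}_{B_2}[\tilde{\sigma}_{\eta,\nth}]^{\otimes N}.
\end{align}
Next I would invoke Lemma \ref{lemma:achievable GKP communication rate for Gaussian random displacement channel} directly with $\sigma = \tilde{\sigma}_{\eta,\nth}$: there exists a family of symplectic-lattice GKP codes on $N$ modes whose decoding error probability against $\mathcal{N}_{B_2}[\tilde{\sigma}_{\eta,\nth}]$ vanishes as $N\to\infty$ while achieving rate $\max(\log_2\lfloor 1/(e\tilde{\sigma}_{\eta,\nth}^2)\rfloor, 0)$. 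Composing this GKP code with the pre-amplification step yields an encoding/decoding pair for $\mathcal{N}[\eta,\nth]$ with the same rate and the same vanishing error, which is exactly the claim after substituting $\tilde{\sigma}_{\eta,\nth}^2 = (1-\eta)(\nth+1)$.

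The only subtlety — and the step I would be most careful about — is the energy bookkeeping. Lemma \ref{lemma:achievable GKP communication rate for Gaussian random displacement channel} is stated for the energy-unconstrained setting, and indeed pre-amplification by gain $1/\eta$ inflates the average photon number of the code states (the amplification map sends $\bar{n}\mapsto \bar{n}/\eta + (1-\eta)/\eta \cdot$ something, i.e.\ it strictly increases energy), so the resulting scheme is genuinely an \emph{unconstrained}-capacity statement; this is consistent with the theorem being phrased without an energy constraint. I would make this explicit so that the reader does not mistakenly read an energy-constrained achievability into the statement. A second minor point to address is that the GKP lattice family from Ref.\ \cite{Harrington2001} is parametrized by a single noise scale $\sigma$ appearing isotropically in both quadratures, which matches $\mathcal{N}_{B_2}[\tilde{\sigma}_{\eta,\nth}]$ exactly (isotropic, i.i.d.\ shifts), so no anisotropic generalization of the lattice construction is needed. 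With these two observations in place the proof is essentially a one-line composition, and no new estimate of decoding error probabilities is required beyond what Lemma \ref{lemma:achievable GKP communication rate for Gaussian random displacement channel} already provides.
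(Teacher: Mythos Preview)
Your proposal is correct and follows essentially the same approach as the paper: convert $\mathcal{N}[\eta,\nth]$ into $\mathcal{N}_{B_2}[\tilde{\sigma}_{\eta,\nth}]$ via the pre-amplification identity of Lemma~\ref{lemma:Thermal-loss + Amplification = Random Shift}, then invoke Lemma~\ref{lemma:achievable GKP communication rate for Gaussian random displacement channel} with $\sigma=\tilde{\sigma}_{\eta,\nth}$. The paper's proof is in fact the one-line composition you describe; your additional remarks on energy bookkeeping and isotropy are sound elaborations rather than departures.
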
 
\begin{proof}
Lemma \ref{lemma:Thermal-loss + Amplification = Random Shift} states that a Gaussian thermal-loss channel $\mathcal{N}[\eta,\nth]$ can be converted via a quantum-limited amplification $\mathcal{A}[1/\eta]$ into a Gaussian random shift channel $\mathcal{N}_{B_{2}}[\tilde{\sigma}^{2}_{\eta,\nthtiny}]$, where $\tilde{\sigma}^{2}_{\eta,\nthtiny} = (1-\eta)(\nth+1)$. Combining this with Lemma \ref{lemma:achievable GKP communication rate for Gaussian random displacement channel}, the theorem follows. 
\end{proof}

\begin{figure}[!t]
\centering
\includegraphics[width=4.3in]{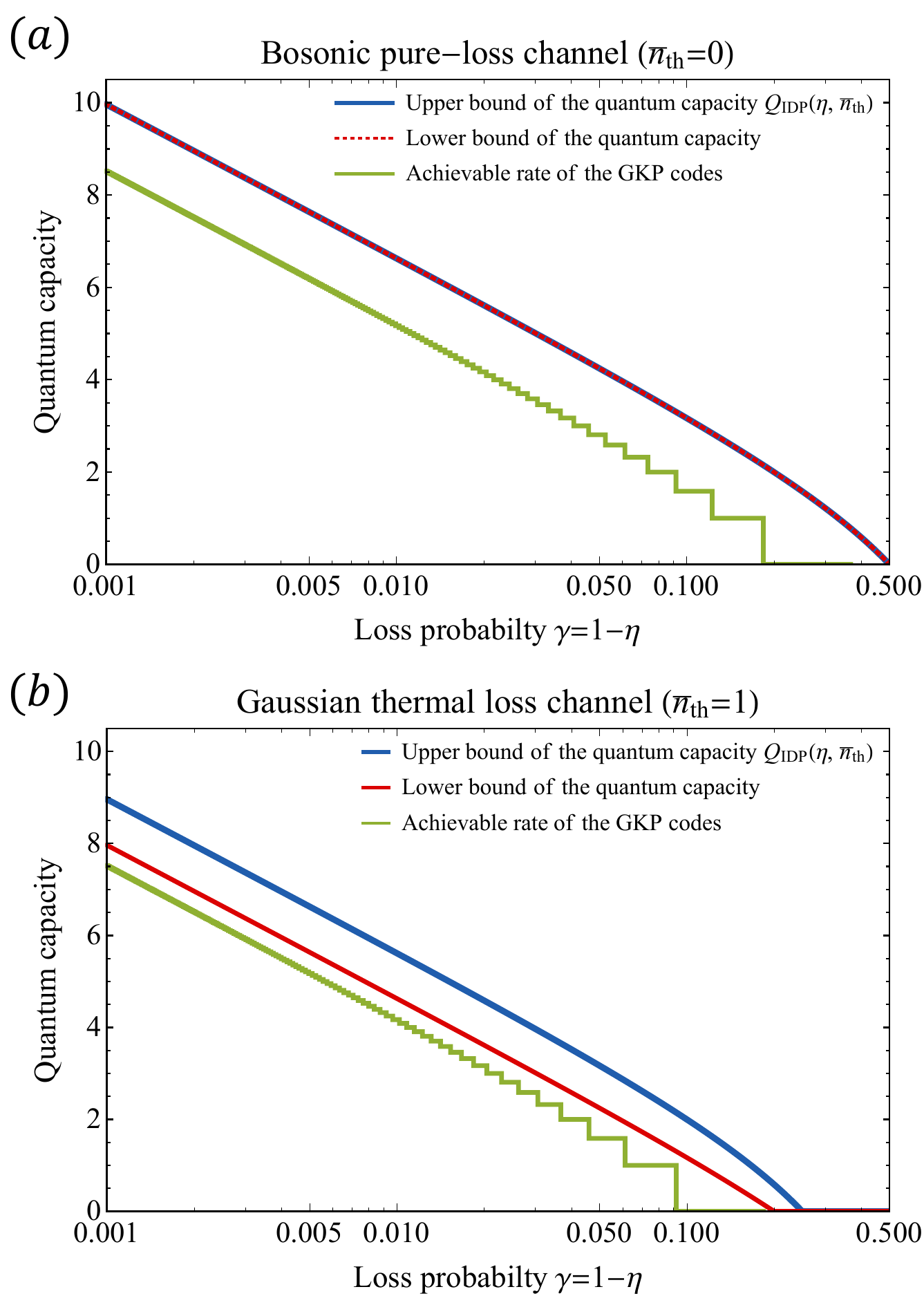}
\caption{[Fig.\ 3 in IEEE Trans. Info. Theory \textbf{65}, 2563--2582 (2019)] Achievable quantum state transmission rate of the GKP codes (Theorem \eqref{theorem:achievable rate of the GKP codes for Gaussian loss channel}; green) compared with a lower bound (Theorem \ref{theorem:Energy-constrained quantum capacity of bosonic pure-loss channels} and Eq.\ \eqref{eq:lower bound Gaussian thermal loss channel HW}; red) and an upper bound (Theorem \ref{theorem:improved data processing bound}; blue) of the quantum capacity of (a) the bosonic pure-loss channels $\mathcal{N}[\eta,\nthtiny = 0]$ and (b) the Gaussian thermal-loss channels $\mathcal{N}[\eta,\nthtiny = 1]$. The red line in the (a) panel overlaps with the blue line, since for the bosonic pure-loss channels, the lower and upper bounds coincide with each other and the quantum capacity is analytically determined.}
\label{fig:Achievable rate of GKP codes compared with capacity}
\end{figure} 

Recall Theorem \ref{theorem:improved data processing bound} and note that $C_{Q}(\mathcal{N}[\eta,\nth]) \le Q_{\scriptsize{\textrm{IDP}}}(\eta,\nth)$ where
\begin{equation}
Q_{\scriptsize{\textrm{IDP}}}(\eta,\nth) = \max \Big{[}\log_{2}\Big{(} \frac{\eta}{(1-\eta)(\nth+1)} \Big{)}, 0\Big{]} < \max \Big{[}\log_{2}\Big{(} \frac{1}{(1-\eta)(\nth+1)} \Big{)}, 0\Big{]} . 
\end{equation}
Comparing this with the rate established in Theorem \ref{theorem:achievable rate of the GKP codes for Gaussian loss channel}, we find 
\begin{align}
C_{Q}(\mathcal{N}[\eta,\nth])- R \lesssim \log_{2} e = 1.44269\cdots , 
\end{align}
where $\sim$ is due to the floor function. Thus, a family of the GKP code defined over an optimal symplectic lattice achieves the quantum capacity of Gaussian thermal-loss channels up to at most a constant gap from an upper bound of the quantum capacity (see Fig.\ \ref{fig:Achievable rate of GKP codes compared with capacity} for an illustration).

The established rate in Theorem \ref{theorem:achievable rate of the GKP codes for Gaussian loss channel} relies on the existence of a symplectic lattice in higher dimensions satisfying a certain desired condition (see Eqs.\ (56) and (57) in \cite{Harrington2001}). In this regard, we remark that the $E_{8}$ lattice and the Leech lattice $\Lambda_{24}$ (both symplectic; see appendix of \cite{Buser1994}) were recently shown to support the densest sphere packing in 8 and 24 dimensional Euclidean spaces, respectively \cite{Viazovska2017,Cohn2017}, and can be used to define a $4$-mode and a $12$-mode GKP code, respectively.

\section{Achievable rates of the numerically optimized single-mode codes}
\label{section:Achievable rates of the numerically optimized single-mode codes}

Here, we look for explicit bosonic codes that near achieve the quantum capacity of Gaussian thermal-loss channels in the energy-constrained case. To do so, we apply the biconvex optimization method developed in Section \ref{section:Optimizing single-mode bosonic codes} to find an optimal qudit-into-an-oscillator code with $d\in\lbrace 2,3,4,5\rbrace$ for the bosonic pure-loss channel $\mathcal{N}[\eta,\nth]=\mathcal{N}[0.9,0]$ and also for the Gaussian thermal-loss channel $\mathcal{N}[\eta,\nth]=\mathcal{N}[0.9,1]$. Then, we estimate the achievable rates of these numerically optimized codes. 

To represent the Gaussian thermal loss channel $\mathcal{N}[\eta,\nth]$ in the Fock basis, we use the decomposition in Lemma \ref{lemma:Thermal-loss  = Amplification + Pure-loss}, i.e., 
\begin{align}
\mathcal{N}[\eta,\nth] = \mathcal{A}[G']\mathcal{N}[\eta',0], \,\,\, \textrm{where}\,\,\, G'=(1-\eta)\nth +1  = \frac{\eta}{\eta'}  , 
\end{align}
and the Kraus representation of the quantum-limited amplification $\mathcal{A}[G](\hat{\rho}) = \sum_{g=0}^{\infty}\hat{A}_{g}\hat{\rho}\hat{A}_{g}^{\dagger}$ where \cite{Ivan2011}  
\begin{align}
\hat{A}_{g} &=  \sqrt{ \frac{1}{g!} \Big{(}1-\frac{1}{G}\Big{)}^{g} } (\hat{a}^{\dagger})^{g} \Big{(} \frac{1}{G} \Big{)}^{\frac{\hat{n}+1}{2}}. 
\end{align} 
Note that the photon gain parameter $g$ can take any non-negative integer values even when we consider a truncated bosonic space $\mathcal{H}_{n}$ and thus we should also truncate $g$ at some sufficiently large $g_{\max}$. 

\begin{figure}[!t]
\centering
\includegraphics[width=5.9in]{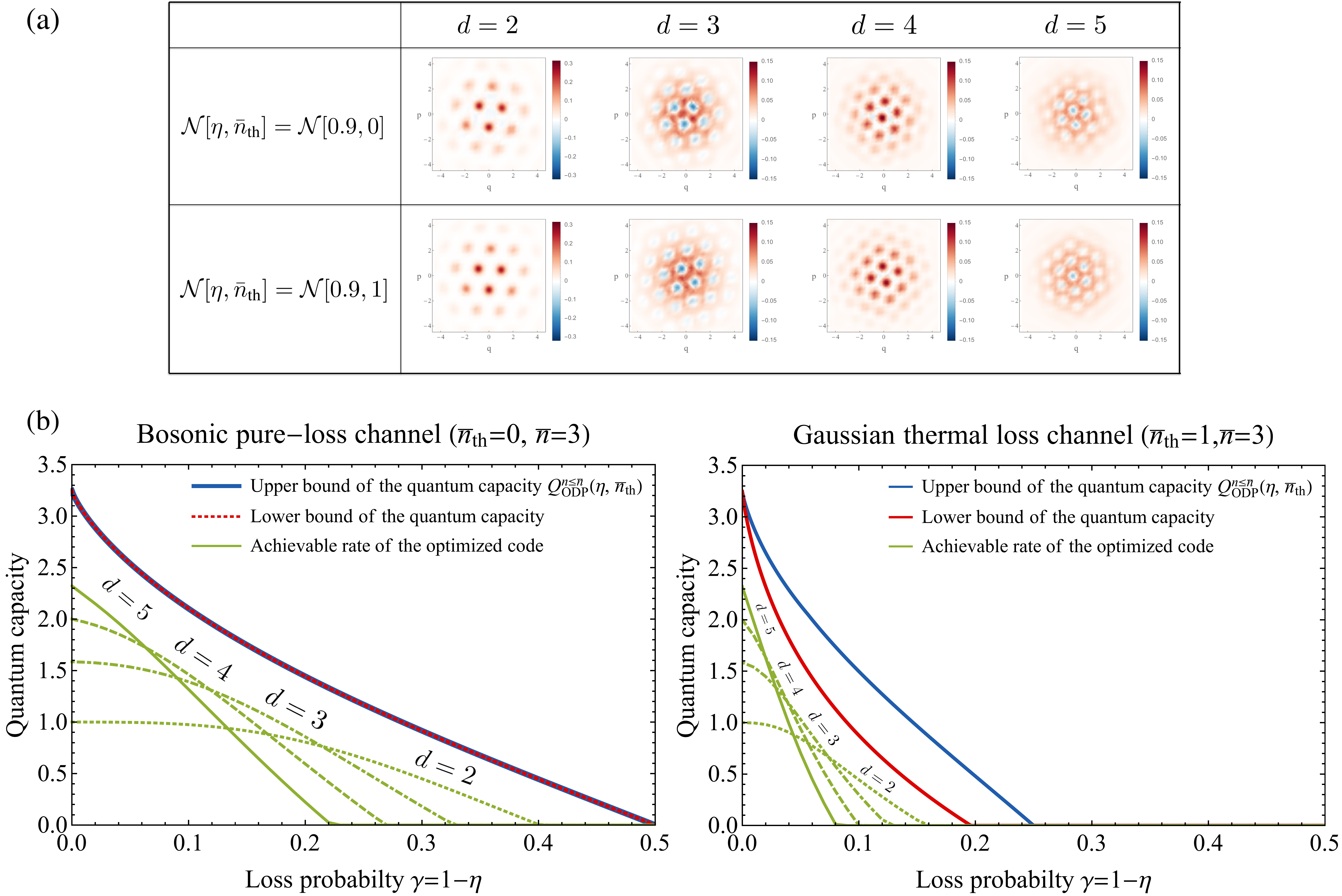}
\caption{[Fig.\ 5 in IEEE Trans. Info. Theory \textbf{65}, 2563--2582 (2019)] (a) Wigner function of the maximally mixed code state $\hat{\rho}_{\mathcal{E}}= (1/d) \mathrm{Tr}_{\mathcal{H}'}\hat{X}_{\mathcal{E}}$ of an optimized qudit-into-an-oscillator code ($d\in\lbrace 2,3,4,5 \rbrace$) for the bosonic pure-loss channel $\mathcal{N}[0.9,0]$ (top) and the Gaussian thermal-loss channel $\mathcal{N}[0.9,1]$ (bottom), subject to an average photon number constraint $\mathrm{Tr}[\hat{n}\hat{\rho}_{\mathcal{E}}] \le  \bar{n}= 3$, obtained by the alternating SDP method (Section \ref{section:Optimizing single-mode bosonic codes}). (b) Lower bounds of the achievable rates of the numerically optimized one-mode qudit-into-an-oscillator codes (at $\eta=0.9$) for the bosonic pure-loss channels $\mathcal{N}[\eta,0]$ (left) and the Gaussian thermal-loss channels $\mathcal{N}[\eta,1]$ (right) for each $d\in\lbrace 2,3,4,5\rbrace$, compared with lower (Eq.\ \eqref{eq:lower bound Gaussian thermal loss channel HW}) and upper (Eq.\ \eqref{eq:ODP}) bounds of the energy-constrained quantum capacity of the Gaussian thermal-loss channels.      }
\label{fig:biconvex optimization qudit into an oscillator}
\end{figure}

In Fig.\ \ref{fig:biconvex optimization qudit into an oscillator}(a), we took $\eta=0.9$, $\nth\in\lbrace 0,1\rbrace$, $n=30$, $d\in\lbrace 2,3,4,5 \rbrace$, $g_{\max}=15$ (for $\nth=1$) and plot the Wigner function of the maximally mixed code state $\hat{\rho}_{\mathcal{E}}$ of the optimized qudit-into-an-oscillator codes, subject to an average photon number constraint $\mathrm{Tr}[\hat{n}\hat{\rho}_{\mathcal{E}}] \le \bar{n}= 3$. Similarly as in the qubit-into-an-oscillator case, the alternating SDP method (starting from a Haar random initial code) yields a hexagonal-lattice GKP code (up to an overall displacement) as an optimal solution of the biconvex optimization for all values of $d\in\lbrace 2,3,4,5 \rbrace$ and $\nth\in \lbrace 0,1\rbrace$.

To characterize the achievable rates of the numerically optimized codes, we take the obtained codes (optimized at $\eta=0.9$) for each $(d,\nth)$ and compute the optimal entanglement fidelity $F_{e}^{\star}(\eta)$ by optimizing the decoding map for bosonic pure-loss channels $\mathcal{N}[\eta,0]$ and for Gaussian thermal loss channels $\mathcal{N}[\eta,1]$ for $0.5\le \eta \le 1$ via a semidefinite program (see Eq.\ \eqref{eq:SDP decoding optimization}). Note that any bipartite state $\hat{\rho}\in \mathcal{D}(\mathcal{H}'\otimes \mathcal{H}')$ ($\textrm{dim}(\mathcal{H}')=d$) with an entanglement fidelity $F_{e}$ can be converted into an Werner state, 
\begin{equation}
\hat{W}(F_{e},d) \equiv F_{e}|\Phi^{+}\rangle\langle \Phi^{+}|  + \frac{(1-F_{e})}{d^{2}-1} (\hat{I}-|\Phi^{+}\rangle\langle \Phi^{+}|),  
\end{equation}   
by an LOCC protocol based on the isotropic twirling $\int d\hat{U} \hat{U}\otimes \hat{U}^{*} \hat{\rho} (\hat{U}\otimes \hat{U}^{*})^{\dagger}$, where $\hat{U}^{*}$ is the complex conjugate of $\hat{U}$ \cite{Werner1989}. Thus, the coherent information of the Werner state 
\begin{equation}
R(F_{e},d) = \log d +F_{e}\log F_{e} + (1-F_{e})\log \Big{(} \frac{1-F_{e}}{d^{2}-1} \Big{)}
\end{equation} 
is a lower bound of the achievable rate of the state $\hat{\rho}$ with $\langle \Phi^{+}|\hat{\rho}|\Phi^{+}\rangle = F_{e}$.

In Fig.\ \ref{fig:biconvex optimization qudit into an oscillator}(b), we plot the lower bound of the achievable rate of the numerically optimized qudit-into-an-oscillator codes (obtained by evaluating $R(F_{e}^{\star},d)$ for the optimized entanglement fidelity $F_{e}^{\star}$) for bosonic pure-loss channels $\mathcal{N}[\eta,0]$ and for Gaussian thermal loss channels $\mathcal{N}[\eta,1]$ and compare it with the lower and upper bounds of the energy constrained quantum capacity (with $\bar{n}=3$). The achievable rate deviates from an upper bound of the quantum capacity at most by $0.923$ and $1.141$ qubits per channel use in the case of $\mathcal{N}[\eta,0]$ and $\mathcal{N}[\eta,1]$ respectively, which are smaller than the gap $\log e  = 1.44269\cdots$ we found in Section \ref{section:Achievable rates of multi-mode GKP codes}. This is because the amplification decoding we considered in Section \ref{section:Achievable rates of multi-mode GKP codes} is not the optimal decoding. We remark that the gap between the achievable rate and the upper bound in Fig.\ \ref{fig:biconvex optimization qudit into an oscillator}(b) may be further reduced if we use a qudit-into-$N$-oscillators encoding (such as the GKP code defined over an optimal $2N$-dimensional symplectic lattice for $N\ge 2$), instead of qudit-into-an-oscillator encoding.

\section{Open questions}
\label{section:Open questions achievable rates}

Recall that we used the amplification decoding in Section \ref{section:Decoding GKP codes subject to excitation loss errors} to establish the achievable rates in Theorem \ref{theorem:achievable rate of the GKP codes for Gaussian loss channel} for the multi-mode GKP codes in the energy-unconstrained case. Similarly as in Chapter \ref{chapter:Benchmarking and optimizing single-mode bosonic codes}, an immediate open question is whether a higher quantum communication rate can be achieved if we use an optimal decoding strategy instead of the sub-optimal amplification decoding scheme. It will be especially interesting to see if such an optimal decoding can be used to make the GKP code achieve a higher quantum state transmission rate than the lower bound of the quantum capacity given in Eq.\ \eqref{eq:lower bound Gaussian thermal loss channel HW}. Note also that the numerical biconvex optimization was applied only to single-mode bosonic codes. It will thus be interesting to see if the gap between the achievable rates and the upper bounds of the quantum capacity in Fig.\ \ref{fig:biconvex optimization qudit into an oscillator} can be reduced as we apply the same analysis to multi-mode bosonic codes.     

\chapter{Non-Gaussian resources for bosonic quantum information processing} 
\label{chapter:Non-Gaussian resources for bosonic quantum information processing}

In this chapter, I will discuss the importance of non-Gaussian resources for continuous-variable quantum information processing. This chapter is based on my work on oscillator encoding in Ref.\ \cite{Noh2019a} and an unpublished negative result on cubic phase state distillation. The work in Ref.\ \cite{Noh2019a} was done in collaboration with Professors Steve Girvin and Liang Jiang. 

Gaussian states, operation, and measurements can be efficiently simulated by using a classical computer \cite{Weedbrook2012}. Thus, non-Gaussian resources \cite{Zhuang2018,Takagi2018} are essential for realizing any non-trivial quantum computation beyond the reach of classical computation. Examples of non-Gaussian resources include the single-photon Fock state and photon-number-resolving measurements \cite{Knill2001,Aaronson2011}, Kerr nonlinearities \cite{Lloyd1999}, cubic phase state and gate \cite{Gottesman2001}, SNAP gate \cite{Krastanov2015}, Schr\"odinger cat states \cite{Cochrane1999}, and GKP states \cite{Gottesman2001,Baragiola2019}. Non-Gaussian resources are also crucial for bosonic quantum error correction. This is due to the established no-go results \cite{Eisert2002,Niset2009,Vuillot2019} which state that Gaussian errors cannot be corrected by using only Gaussian operations. Since Gaussian errors are ubiquitous in many realistic bosonic systems, these no-go results set a hard limit on the practical utility of the Gaussian QEC schemes.  

In Chapters \ref{chapter:Bosonic quantum error correction}, \ref{chapter:Benchmarking and optimizing single-mode bosonic codes}, \ref{chapter:Fault-tolerant bosonic quantum error correction}, \ref{chapter:Achievable communication rates with bosonic codes}, it has been shown that GKP states are valuable resources for realizing error-corrected discrete-variable quantum information processing. Here, I will show that GKP states are also a valuable non-Gaussian resource for implementing error-corrected continuous-variable quantum information processing. 

In Section \ref{section:GKP state as a non-Gaussian resource}, I will circumvent the no-go results on Gaussian QEC \cite{Eisert2002,Niset2009,Vuillot2019} and provide a non-Gaussian oscillator-into-oscillators encoding scheme that can correct practically relevant Gaussian errors such as random shift errors and excitation loss errors. The only non-Gaussian resource needed in the scheme is preparation of the canonical GKP states. I will also discuss adverse effects of the finite-squeezing in approximate GKP states. In Section \ref{section:Cubic phase state as a non-Gaussian resource}, I will discuss cubic phase states \cite{Gottesman2001} which are analogous to magic states \cite{Bravyi2005} for the conventional discrete-variable quantum computation. In particular, I will present some of my failed attempts on cubic phase state distillation.

\section{GKP state as a non-Gaussian resource}
\label{section:GKP state as a non-Gaussian resource}

\subsection{Canonical GKP state and modular quadrature measurement} 

The canonical GKP state was introduced in Section \ref{section:Translation-symmetric bosonic codes}. Here, we summarize the properties of the canonical GKP states that are referenced in this section. The Heisenberg uncertainty principle states that the position and momentum operators $\hat{q} \equiv (\hat{a}^{\dagger}+\hat{a})/\sqrt{2}$ and $\hat{p}\equiv i(\hat{a}^{\dagger}-\hat{a})/\sqrt{2}$ cannot be measured simultaneously because they do not commute with each other (i.e., $[\hat{q},\hat{p}]=i\neq 0$). Despite the Heisenberg uncertainty principle, however, the following displacement operators
\begin{align}
\hat{S}_{q}\equiv e^{i\sqrt{2\pi}\hat{q}} \,\,\,\textrm{and}\,\,\,\hat{S}_{p} \equiv e^{-i\sqrt{2\pi}\hat{p}}
\end{align}
do commute with each other and therefore can be measured simultaneously \cite{Gottesman2001}. Note that measuring $\hat{S}_{q} = \exp[i\sqrt{2\pi}\hat{q}]$ and $\hat{S}_{p} \equiv \exp[-i\sqrt{2\pi}\hat{p}]$ is equivalent to measuring their exponents (or phase angles) $i\sqrt{2\pi}\hat{q}$ and $-i\sqrt{2\pi}\hat{p}$ modulo $2\pi i$. Thus, the commutativity of $\hat{S}_{q}$ and $\hat{S}_{p}$ implies that the position and momentum operators can indeed be measured simultaneously if they are measured modulo $\sqrt{2\pi}$. The canonical GKP state (or the grid state) \cite{Gottesman2001,Duivenvoorden2017} is then defined as the unique (up to an overall phase) simultaneous eigenstate of the two commuting displacement operators $\hat{S}_{q}$ and $\hat{S}_{p}$ with unit eigenvalues. Explicitly, the canonical GKP state is given by
\begin{align}
|\textrm{GKP}\rangle \propto \sum_{n\in\mathbb{Z}}|\hat{q}=\sqrt{2\pi}n\rangle \propto \sum_{n\in\mathbb{Z}}|\hat{p}=\sqrt{2\pi}n\rangle. 
\end{align}     
Clearly, the canonical GKP state has definite values of both the position and momentum operators modulo $\sqrt{2\pi}$, i.e., $\hat{q}=\hat{p}=0$ mod $\sqrt{2\pi}$.

\begin{figure}[t!]
\centering
\includegraphics[width=5.8in]{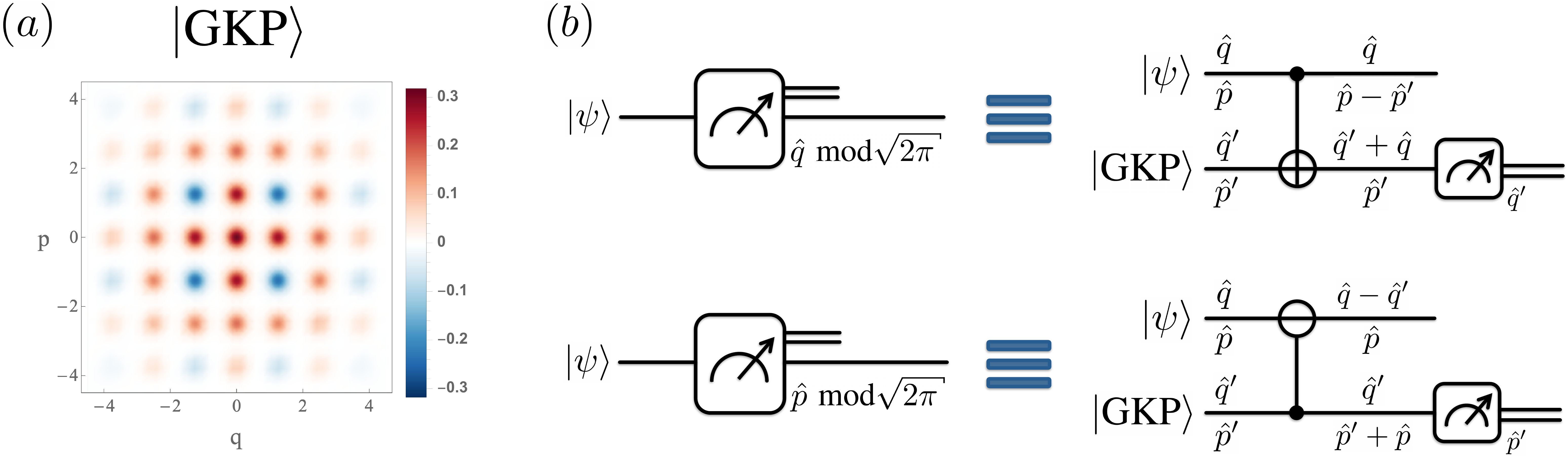}
\caption{[Fig.\ 1 in 	arXiv:1903.12615 (2019)] (a) An approximate GKP state with an average photon number $\bar{n}=5$. (b) Measurement of the position or momentum operator modulo $\sqrt{2\pi}$. The controlled-$\oplus$ and $\ominus$ symbols respectively represent the SUM and the inverse-SUM gates. }
\label{fig:GKP state and measurements}
\end{figure}

Observe that the canonical GKP state has an infinite average photon number as it is superpositions of infinitely many ($\sum_{n\in\mathbb{Z}}$) infinitely squeezed states ($|\hat{q}=\sqrt{2\pi}n\rangle$ or $|\hat{p}=\sqrt{2\pi}n\rangle$). Thus, the canonical GKP state is unphysical. However, one can define an approximate GKP state with a finite average photon number (or a finite squeezing) by applying a non-unitary operator $\exp[-\Delta^{2} \hat{n}]$ to the canonical GKP state and then normalizing the output state: $|\textrm{GKP}_{\Delta}\rangle \propto \exp[-\Delta \hat{n}]|\textrm{GKP}\rangle$ \cite{Gottesman2001}. In Fig.\ \ref{fig:GKP state and measurements}(a), we plot the Wigner function of the canonical GKP state with an average photon number $\bar{n}=5$. Note that negative peaks in the Wigner function indicate that the canonical GKP state is a non-Gaussian state \cite{Takagi2018}. In the interest of clarity, we only consider the ideal canonical GKP state when we present our main results. Adverse effects of the finite squeezing will be discussed later in the section.  

Clearly, the ability to measure the position and momentum operators modulo $\sqrt{2\pi}$ allows us to prepare the canonical GKP state. Remarkably, the converse is also true. That is, we can measure the quadrature operators modulo $\sqrt{2\pi}$ given GKP states and Gaussian operations as resources: As shown in Fig.\ \ref{fig:GKP state and measurements}(b), one can measure the position (momentum) operator modulo $\sqrt{2\pi}$ by using a canonical GKP state, the SUM (inverse-SUM) gate and the homodyne measurement of the position (momentum) operator. The SUM gate is a Gaussian operation and is defined as $\textrm{SUM}_{j\rightarrow k} \equiv \exp[ -i\hat{q}_{j}\hat{p}_{k}]$, which maps $\hat{q}_{k}$ to $\hat{q}_{k}+\hat{q}_{j}$. The inverse-SUM gate is defined as the inverse of the SUM gate. The canonical GKP state and the modulo simultaneous quadrature measurement are the key non-Gaussian resources of our oscillator-into-oscillators encoding schemes which we introduce below.    

\subsection{GKP-two-mode-squeezing code}

Here, we construct a non-Gaussian oscillator-into-oscillators code, namely the GKP-two-mode-squeezing code. Let $|\psi\rangle =\int dq\psi(q)|\hat{q}_{1} =q \rangle$ be an arbitrary bosonic state which we want to encode into two bosonic modes. We define the encoded state of the GKP-two-mode-squeezing code as follows:
\begin{align}
|\psi_{L}\rangle &=  \textrm{TS}_{1,2}(G) |\psi\rangle\otimes |\textrm{GKP}\rangle. \label{eq:logical states two-mode GKP-squeezed-repetition code}
\end{align}
Here, $|\textrm{GKP}\rangle$ is the canonical GKP state in the second mode and $\textrm{TS}_{1,2}(G)$ is the two-mode squeezing operation acting on the modes $1$ and $2$ with a gain $G\ge 1$ (hence the name of the code; see Fig.\ \ref{fig:GKP-two-mode-squeezing code}(a)). Since the logical information is encoded in the first mode before the two-mode squeezing, we refer to the first mode as the data mode and the second mode as the ancilla mode. In the Heisenberg picture, the two-mode squeezing operation $\textrm{TS}_{1,2}(G)$ transforms the quadrature operator $\boldsymbol{x}=(\hat{q}_{1},\hat{p}_{1},\hat{q}_{2},\hat{p}_{2})^{T}$ into $\boldsymbol{x'}=(\hat{q}'_{1},\hat{p}'_{1},\hat{q}'_{2},\hat{p}'_{2})^{T} = \boldsymbol{S_{\textrm{TS}}}(G)\boldsymbol{x}$, where the $4\times 4$ symplectic matrix $\boldsymbol{S_{\textrm{TS}}}(G)$ associated with $\textrm{TS}_{1,2}(G)$ is given by   
\begin{align}
\boldsymbol{S_{\textrm{TS}}}(G) = \begin{bmatrix}
\sqrt{G}\boldsymbol{I}_{2}&\sqrt{G-1}\boldsymbol{Z}_{2}\\
\sqrt{G-1}\boldsymbol{Z}_{2}&\sqrt{G}\boldsymbol{I}_{2}
\end{bmatrix}. 
\end{align}  
Here, $\boldsymbol{I}_{2}=\textrm{diag}(1,1)$ is the $2\times 2$ identity matrix and $\boldsymbol{Z}_{2}=\textrm{diag}(1,-1)$ is the Pauli Z matrix. Note that the gain $G$ can be chosen at will to optimize the performance of the error correction scheme.

\begin{figure}[t!]
\centering
\includegraphics[width=5.8in]{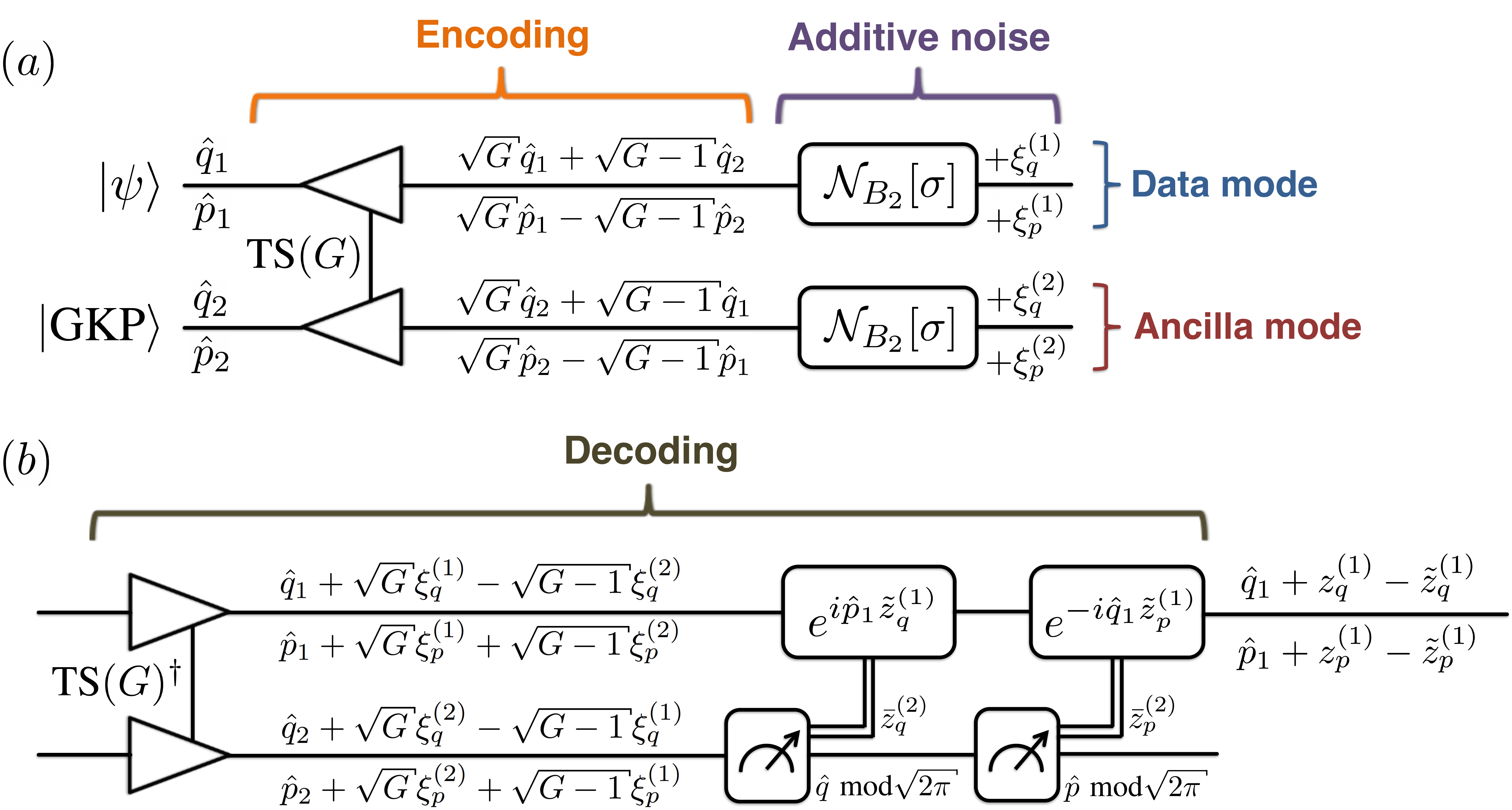}
\caption{[Fig.\ 4 in 	arXiv:1903.12615 (2019)] (a) Encoding circuit of the GKP-two-mode-squeezing code subject to independent and identically distributed additive Gaussian noise errors. (b) Decoding circuit of the GKP-two-mode-squeezing code. Note that the circuits for the measurements of the position and the momentum operators modulo $\sqrt{2\pi}$ (at the end of the decoding) are defined in Fig.\ \ref{fig:GKP state and measurements}(b).  }
\label{fig:GKP-two-mode-squeezing code}
\end{figure}

For the noise model, we consider the independent and identically distributed random shift errors, i.e., $\mathcal{N}^{(1)}_{B_{2}}[\sigma] \otimes \mathcal{N}^{(2)}_{B_{2}}[\sigma]$. In the Heisenberg picture, $\mathcal{N}^{(k)}_{B_{2}}[\sigma]$ adds Gaussian random noise $\xi_{q}^{(k)}$ and $\xi_{p}^{(k)}$ to the position and the momentum quadrature of the $k^{\textrm{th}}$ mode. Thus, the quadrature operator $\boldsymbol{\hat{x}'}$ is further transformed via the additive Gaussian noise error into $\boldsymbol{\hat{x}''} = \boldsymbol{\hat{x}'} + \boldsymbol{\xi}$, where $\boldsymbol{\xi} = (\xi_{q}^{(1)},\xi_{p}^{(1)},\xi_{q}^{(2)},\xi_{p}^{(2)})^{T}$ is the quadrature noise vector obeying $(\xi_{q}^{(1)},\xi_{p}^{(1)},\xi_{q}^{(2)},\xi_{p}^{(2)})\sim_{\textrm{iid}}\mathcal{N}(0,\sigma^{2})$ (see Fig.\ \ref{fig:GKP-two-mode-squeezing code}(a)).  

The decoding procedure (shown in Fig.\ \ref{fig:GKP-two-mode-squeezing code}(b)) starts with an application of the inverse of the encoding circuit $(\textrm{TS}_{1,2}(G))^{\dagger}$. Then, the quadrature operator is transformed into $
\boldsymbol{\hat{x}'''} = \boldsymbol{(S_{\textrm{TS}}}(G)\boldsymbol{)^{-1}}\boldsymbol{\hat{x}''} = \boldsymbol{\hat{x}} + \boldsymbol{z}$, where $\boldsymbol{z}\equiv (z_{q}^{(1)},z_{p}^{(1)},z_{q}^{(2)},z_{p}^{(2)})^{T}$ is the reshaped quadrature noise vector which is given by
\begin{align}
\boldsymbol{z} &= \boldsymbol{(S_{\textrm{TS}}}(G)\boldsymbol{)^{-1}}\boldsymbol{\xi}  = \begin{bmatrix}
\sqrt{G}\xi_{q}^{(1)}-\sqrt{G-1}\xi_{q}^{(2)} \\
\sqrt{G}\xi_{p}^{(1)}+\sqrt{G-1}\xi_{p}^{(2)} \\
\sqrt{G}\xi_{q}^{(2)}-\sqrt{G-1}\xi_{q}^{(1)}  \\
\sqrt{G}\xi_{p}^{(2)}+\sqrt{G-1}\xi_{p}^{(1)}
\end{bmatrix} \equiv \begin{bmatrix}
z_{q}^{(1)}\\
z_{p}^{(1)}\\
z_{q}^{(2)}\\
z_{p}^{(2)}
\end{bmatrix}. \label{eq:reshaped quadrature noises two-mode GKP-squeezed-repetition code}
\end{align}
The role of the two-mode squeezing operations in the encoding and the decoding circuits is clear by now: They transform uncorrelated additive noise $\boldsymbol{
\xi}$ into correlated additive noise $\boldsymbol{z}$. This means that after noise reshaping, we can extract useful information about the reshaped data quadrature noise $z_{q}^{(1)}$, $z_{p}^{(1)}$ by measuring only the reshaped ancilla quadrature noise $z_{q}^{(2)}$, $z_{p}^{(2)}$. Importantly, the encoded logical information in the data mode is not revealed through this process because the data mode needs not be measured.       

Note that we need to measure both the position and momentum noise in the ancilla mode. This is precisely the reason why we measure both the position and momentum operators of the ancilla mode modulo $\sqrt{2\pi}$ at the end of the decoding circuit (see Fig.\ \ref{fig:GKP-two-mode-squeezing code}(b)). Note that $\hat{q}_{2}=\hat{p}_{2}=0$ modulo $\sqrt{2\pi}$ holds because the ancilla mode is initialized to the canonical GKP state. Thus, measuring the output quadrature operators $\hat{q}_{2}''' = \hat{q}_{2}+z_{q}^{(2)}$ and $\hat{p}_{2}'''=\hat{p}_{2} + z_{p}^{(2)}$ modulo $\sqrt{2\pi}$ is equivalent to measuring just the reshaped ancilla quadrature noise $z_{q}^{(2)}$ and $z_{p}^{(2)}$ modulo $\sqrt{2\pi}$. 

Based on the outcomes of the simultaneous measurement of the ancilla quadrature noise modulo $\sqrt{2\pi}$, we estimate that the ancilla quadrature noise $z_{q}^{(2)}$ and $z_{p}^{(2)}$ are the smallest ones that are compatible with the modular measurement outcomes, i.e.,  
\begin{align}
\bar{z}_{q}^{(2)} &= R_{\sqrt{2\pi}}(z_{q}^{(2)}) \,\,\,\textrm{and}\,\,\,\bar{z}_{p}^{(2)} = R_{\sqrt{2\pi}}(z_{p}^{(2)}),  \label{eq:estimates of the ancilla quadrature noises two-mode GKP-squeezed-repetition code}
\end{align}  
where $R_{s}(z)\equiv z-n^{\star}(z)s$ and $n^{\star}(z)\equiv \textrm{argmin}_{n\in\mathbb{Z}}|z-ns|$. More explicitly, $R_{s}(z)$ equals a displaced sawtooth function with an amplitude and period $s$ and is given by $R_{s}(z) = z$ if $z\in[-s/2,s/2]$. We then further estimate that the reshaped data quadrature noise $z_{q}^{(1)}$ and $z_{p}^{(1)}$ are 
\begin{align}
\tilde{z}_{q}^{(1)} &=  - \frac{2\sqrt{G(G-1)}}{2G-1} \bar{z}_{q}^{(2)}, 
\nonumber\\
\tilde{z}_{p}^{(1)} &=   \frac{2\sqrt{G(G-1)}}{2G-1} \bar{z}_{p}^{(2)},  \label{eq:estimates of the data quadrature noises two-mode GKP-squeezed-repetition code}
\end{align}   
which are obtained from the maximum likelihood estimation as detailed in Subsection \ref{subsection:Detailed analysis of the GKP-two-mode squeezing code}. 

The decoding operation is simply to remove the estimated noise in the data mode by applying the counter displacement operations $\exp[i\hat{p}_{1}\tilde{z}_{q}^{(1)}]$ and $\exp[-i\hat{q}_{1}\tilde{z}_{p}^{(1)}]$ to the data mode (see Fig.\ \ref{fig:GKP-two-mode-squeezing code}(b)). As a result, we are left with the following logical position and momentum quadrature noise
\begin{align}
\xi_{q} &\equiv z_{q}^{(1)}-\tilde{z}_{q}^{(1)} = z_{q}^{(1)} +\frac{2\sqrt{G(G-1)}}{2G-1} R_{\sqrt{2\pi}}(z_{q}^{(2)}), 
\nonumber\\
\xi_{p} &\equiv z_{p}^{(1)}-\tilde{z}_{p}^{(1)} = z_{p}^{(1)} - \frac{2\sqrt{G(G-1)}}{2G-1} R_{\sqrt{2\pi}}(z_{p}^{(2)}). \label{eq:logical quadrature noises two-mode GKP-squeezed-repetition code}
\end{align} 
Then, the variance $(\sigma_{q})^{2} = (\sigma_{p})^{2} = (\sigma_{L})^{2}$ of the output logical quadrature noise $\xi_{q}$ and $\xi_{p}$ is given by 
\begin{align}
(\sigma_{L})^{2} = \frac{\sigma^{2}}{2G-1} + \sum_{n\in\mathbb{Z}} \frac{ 4G(G-1) }{ (2G-1)^{2} } 2\pi n^{2} \times q_{n}(\sigma). 
\end{align}   
(See Subsection \ref{subsection:Detailed analysis of the GKP-two-mode squeezing code} for the proof.) Here, $q_{n}(\sigma)$ is defined as
\begin{align}
q_{n}(\sigma) &\equiv \int_{(n-\frac{1}{2})\sqrt{2\pi}}^{(n+\frac{1}{2})\sqrt{2\pi}} dz p[\sqrt{2G-1}\sigma](z) 
\end{align}
where $p[\sigma](z) \equiv \frac{1}{\sqrt{2\pi\sigma^{2}}} \exp[ -\frac{z^{2}}{2\sigma^{2}} ]$ is the probability density function of the Gaussian normal distribution $\mathcal{N}( 0, \sigma^{2} )$.  

Recall that we can freely choose the gain $G$ to optimize the performance of the GKP-two-mode-squeezing code. Here, we choose $G$ such that the standard deviation of the output logical quadrature noise $\sigma_{L}$ is minimized. In Fig.\ \ref{fig:performance of the two-mode GKP-squeezed-repetition code}, we plot the minimum standard deviation of the output logical quadrature noise $\sigma_{L}^{\star}$ (see Fig.\ \ref{fig:performance of the two-mode GKP-squeezed-repetition code}(a)) and the optimal gain $G^{\star}$ (see Fig.\ \ref{fig:performance of the two-mode GKP-squeezed-repetition code}(b)) that achieves the minimum output standard deviation given an input noise standard deviation $\sigma$. These optimal values are obtained via a brute-force numerical optimization. Note that in Fig.\ \ref{fig:performance of the two-mode GKP-squeezed-repetition code}(b), we show the strength of the required single-mode squeezing operations to achieve the optimal gain $G^{\star}$ in the unit of decibel (i.e., $20\log_{10}\lambda^{\star}$ where $\lambda^{\star} = \sqrt{G^{\star}}+\sqrt{G^{\star}-1}$; see Subsection \ref{subsection:Detailed analysis of the GKP-two-mode squeezing code}). 

\begin{figure}[t!]
\centering
\includegraphics[width=4.2in]{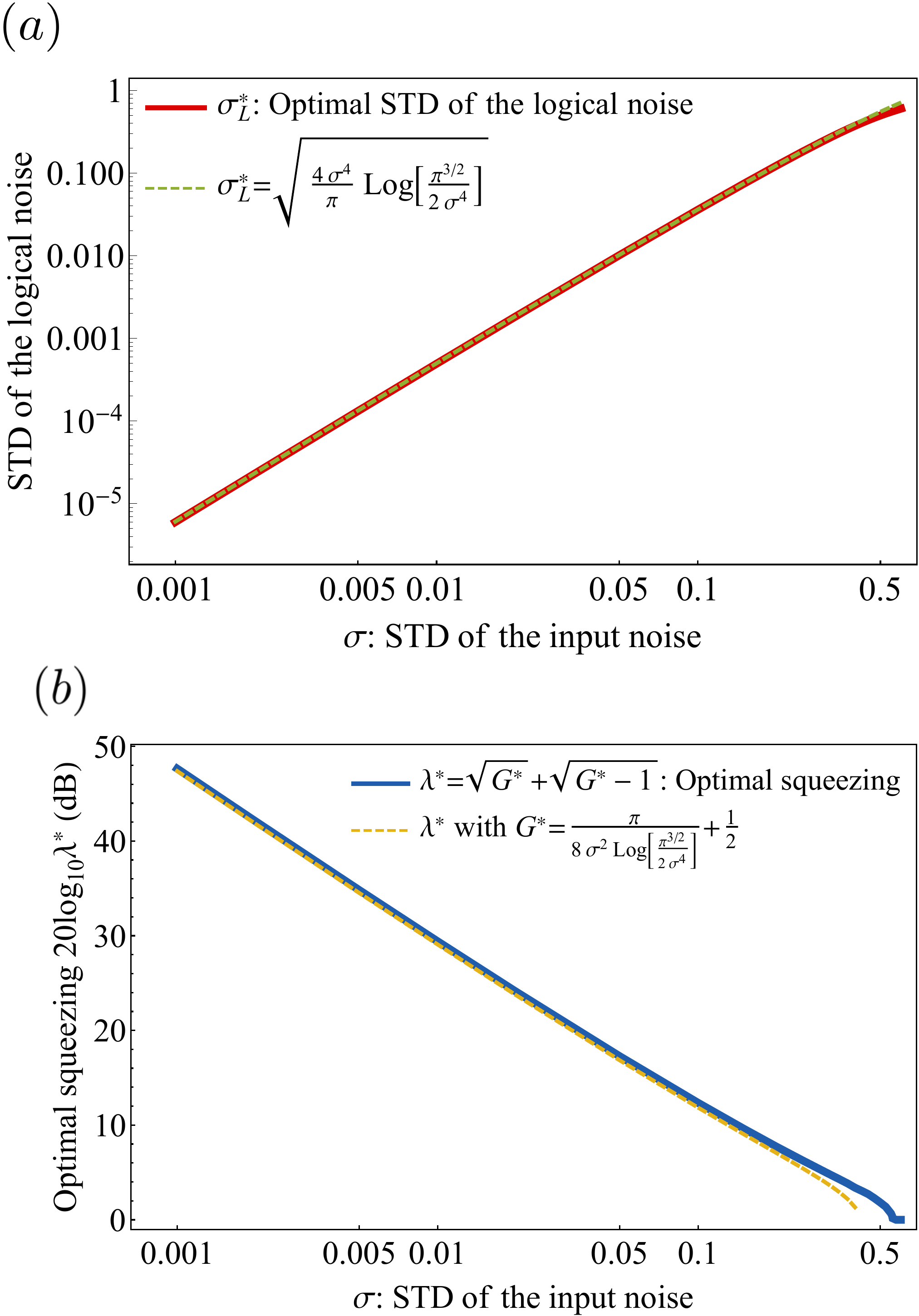}
\caption{[Fig.\ 5 in 	arXiv:1903.12615 (2019)] (a) The minimum standard deviations of the output logical quadrature noise $\sigma_{q}=\sigma_{p}=\sigma_{L}^{\star}$ as a function of the input standard deviation $\sigma$ for the GKP-two-mode-squeezing code and (b) the optimal two-mode squeezing gain $G^{\star}$ that achieves $\sigma_{L}^{\star}$, translated to the required single-mode squeezing in the unit of decibel $20\log_{10}\lambda^{\star}$ where $\lambda^{\star} \equiv \sqrt{G^{\star}} + \sqrt{G^{\star}-1}$. The green dashed line in (a) represents $\sigma_{L}^{\star} = \frac{2\sigma^{2}}{\sqrt{\pi}}\sqrt{\log_{e}[\frac{\pi^{3/2}}{2\sigma^{4}}]}$ and the yellow dashed line in (b) represents $G^{\star} = \frac{\pi}{8\sigma^{2}}(\log_{e}[\frac{\pi^{3/2}}{2\sigma^{4}}])^{-1} +\frac{1}{2}$.   }
\label{fig:performance of the two-mode GKP-squeezed-repetition code}
\end{figure}

From the numerical optimization, we find that for $\sigma\ge \sigma_{c} \equiv 0.558$, the optimal gain $G^{\star}$ is trivially given by $G^{\star}=1$ and thus the GKP-two-mode-squeezing code cannot reduce the noise standard deviation: $\sigma_{L}^{\star} = \sigma$. On the other hand, if the input noise is small enough, i.e., $\sigma < \sigma_{c}= 0.558$, the optimal gain $G^{\star}$ is strictly larger than $1$ and the minimum standard deviation of the output quadrature noise $\sigma_{L}$ can be made smaller than the standard deviation of the input quadrature noise $\sigma$: $\sigma_{L}^{\star}< \sigma$. Moreover in the $\sigma \ll 1$ regime, we analytically find that the optimal gain $G^{\star}$ is asymptotically given by  
\begin{align}
G^{\star} \xrightarrow{\sigma\ll 1} \frac{\pi}{8\sigma^{2}} \Big{(} \log_{e}\Big{[}\frac{\pi^{3/2}}{2\sigma^{4}}\Big{]} \Big{)}^{-1} +\frac{1}{2},  \label{eq:optimal G asymptotic}
\end{align} 
and the optimal standard deviation of the logical quadrature noise $\sigma_{L}^{\star}$ is given by
\begin{align}
\sigma_{L}^{\star} \xrightarrow{\sigma \ll 1} \frac{2\sigma^{2}}{\sqrt{\pi}} \sqrt{ \log_{e}\Big{[}\frac{\pi^{3/2}}{2\sigma^{4}}\Big{]} }.  \label{eq:optimal STD asymptotic}
\end{align}
(See Subsection \ref{subsection:Detailed analysis of the GKP-two-mode squeezing code} for the detailed derivation.) As shown in Fig.\ \ref{fig:performance of the two-mode GKP-squeezed-repetition code}, these asymptotic expressions agree well with the exact numerical results in the small $\sigma$ regime. Note that the asymptotic expression in Eq.\ \eqref{eq:optimal STD asymptotic} implies that the minimum standard deviation of the output quadrature noise $\sigma_{L}^{\star}$ decreases quadratically as $\sigma$ decreases (i.e., $\sigma_{L}^{\star}\propto \sigma^{2}$) modulo a small logarithmic correction.

Excitation loss errors with external thermal noise are described by Gaussian thermal-loss channels. In general, Gaussian thermal-loss channels can be converted via a quantum-limited amplification to an additive Gaussian noise channel \cite{Albert2018,Noh2019}. For instance, the bosonic pure-loss channel with loss probability $\gamma \in [0,1]$ can be converted to an additive Gaussian noise channel $\mathcal{N}_{B_{2}}[\sigma]$ with $\sigma = \sqrt{\gamma}$. Hence, the GKP-two-mode-squeezing code can also handle the excitation loss errors because we can simply convert the loss errors into the additive noise errors and then apply the same decoding scheme presented above. (Note, however, that this amplification decoding may not be the optimal decoding strategy. See Chapters \ref{chapter:Benchmarking and optimizing single-mode bosonic codes} and \ref{chapter:Achievable communication rates with bosonic codes} for more details.) 

Assuming the amplification decoding, the critical value of the standard deviation $\sigma_{c} = 0.558$ corresponds to the critical loss probability $\gamma_{c} = (\sigma_{c})^{2} = 0.311$ in the case of the pure-excitation loss channel. Thus, the GKP-two-mode-squeezing code helps when the loss probability is below $31.1\%$. For example, consider the pure-loss channel with $1\%$ loss probability (i.e., $\gamma = 0.01$ and $\sigma= \sqrt{\gamma} =0.1$). Then, the optimal gain is given by $G^{\star} = 4.806$ which requires $20\log_{10}\lambda^{\star} = 12.35$dB single-mode squeezing operations. Also in this case, the resulting standard deviation of the output noise is given by $\sigma_{L}^{\star} = 0.036$ which corresponds to the loss probability $0.13\%$. This corresponds to a QEC ``gain'' for the protocol of $1/0.13\simeq 7.7$ in terms of the loss probability and $0.1/0.036\simeq 2.8$ in terms of displacement errors.  

\subsection{Adverse effects due to finite GKP squeezing} 

Here, we discuss experimental realization of the GKP-two-mode-squeezing code and the effects of realistic imperfections. Note that the only required non-Gaussian resource for implementing the GKP-two-mode-squeezing code is preparation of the canonical GKP states. While Gaussian operations are readily available in many realistic bosonic systems, preparing a canonical GKP state is not strictly possible because it would require infinite squeezing. Recently, however, finitely-squeezed approximate GKP states have been realized in a trapped ion system \cite{Fluhmann2018,Fluhmann2019,Fluhmann2019b} by using a heralded preparation scheme with post-selection \cite{Travaglione2002} and in a circuit QED system by using a deterministic scheme \cite{Terhal2016,Campagne2019}. Thus, the GKP-two-mode-squeezing code can in principle be implemented in the state-of-the-art quantum computing platforms.  

Imperfections in realistic GKP states such as finite squeezing will add additional quadrature noise to the system. Therefore in near-term experiments, the performance of the GKP-two-mode-squeezing code will be mainly limited by the finite squeezing of the approximate GKP states. Indeed, we show below that a non-trivial QEC gain $\sigma^{2} / (\sigma_{L}^{\star})^{2} > 1$ is achievable with the GKP-two-mode-squeezing code only when the supplied GKP states have a squeezing larger than $11.0$dB. On the other hand, the squeezing of the experimentally realized GKP states ranges from $5.5$dB to $9.5$dB \cite{Fluhmann2019,Campagne2019}. In this regard, we emphasize that our oscillator encoding scheme is compatible with non-deterministic GKP state preparation schemes. This is because the required GKP states can be prepared offline and then supplied to the error correction circuit in the middle of the decoding procedure (similar to the magic state injection for the qubit-based universal quantum computation \cite{Bravyi2005}). Thus in near-term experiments, it will be more advantageous to sacrifice the success probability of the GKP state preparation schemes and aim to prepare a GKP state of higher quality (with a squeezing larger than the critical value $11.0$dB) by using post-selection.

In general, the imperfections in GKP states may be especially detrimental to a GKP-two-mode-squeezing code involving a large squeezing parameter. This is because such imperfections may be significantly amplified by the large squeezing operations. Indeed, the optimal gain $G^{\star}$ is asymptotically given by $G^{\star} \propto 1/\sigma^{2}$ in the $\sigma\ll 1$ limit. Therefore, if the standard deviation of the input noise is very small, we indeed have a huge gain parameter $G^{\star} \gg 1$ (or $\lambda^{\star} = \sqrt{G^{\star}}+\sqrt{G^{\star}-1} \gg 1$). However, we explain in detail below that we have designed the GKP-two-mode-squeezing code very carefully so that any imperfections in the GKP states are not amplified by the large squeezing operations. 

With these potential issues in mind, let us now analyze the adverse effects of the finite squeezing in a rigorous and quantitative manner. Recall that an approximate GKP state with a finite squeezing can be modeled by $|\textrm{GKP}_{\Delta}\rangle \propto \exp[-\Delta^{2} \hat{n}]|\textrm{GKP}\rangle$. As shown in Ref.\ \cite{Noh2020} (see also Section \ref{section:Translation-symmetric bosonic codes}), one can convert the finitely-squeezed GKP state $|\textrm{GKP}_{\Delta}\rangle$ via a noise twirling into 
\begin{align}
\mathcal{N}_{B_{2}}[\sigma_{\textrm{gkp}}] ( |\textrm{GKP}\rangle\langle \textrm{GKP}| ), \label{eq:noisy GKP state after the twirling} 
\end{align}
i.e., an ideal canonical GKP state corrupted by an incoherent random shift error $\mathcal{N}_{B_{2}}[\sigma_{\textrm{gkp}}]$. Here, $\sigma_{\textrm{gkp}}^{2} = (1-e^{-\Delta^{2}}) / (1+e^{-\Delta^{2}})$ is the variance of the additive noise associated with the finite GKP squeezing. The noise standard deviation $\sigma_{\textrm{gkp}}$ characterizes the width of each peak in the Wigner function of an approximate GKP state. The GKP squeezing is then defined as $s_{\textrm{gkp}}\equiv -10\log_{10}(2\sigma_{\textrm{gkp}}^{2})$. The GKP squeezing $s_{\textrm{gkp}}$ quantifies how much an approximate GKP state is squeezed in both the position and the momentum quadrature in comparison to the vacuum noise variance $1/2$.  

\begin{figure}[t!]
\centering
\includegraphics[width=5.8in]{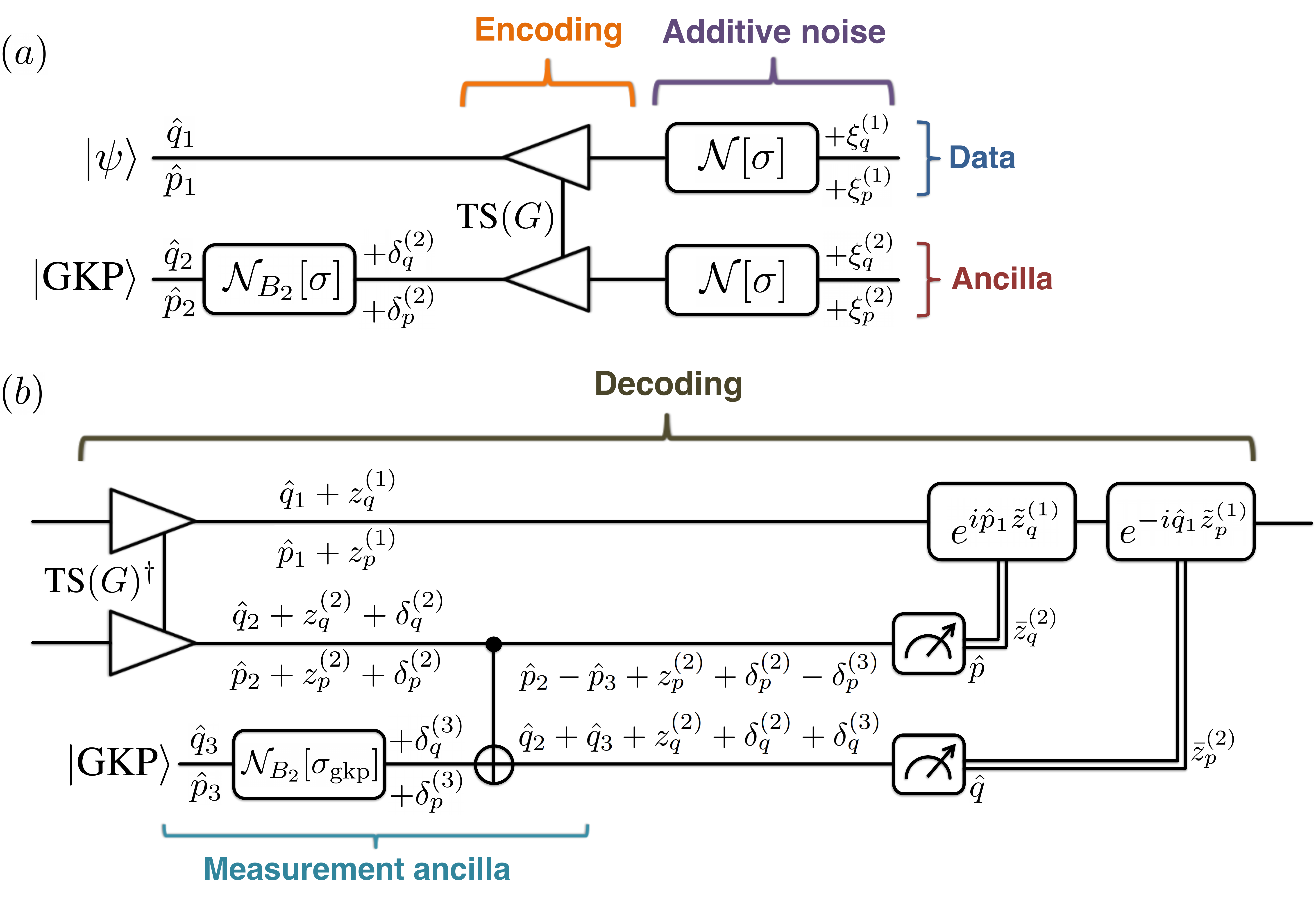}
\caption{[Fig.\ 8 in 	arXiv:1903.12615 (2019)] (a) Encoding circuit of the GKP-two-mode-squeezing code subject to independent and identically distributed additive Gaussian noise errors. The input GKP state in the ancilla mode is assumed to be a noisy canonical GKP state with a noise standard deviation $\sigma_{\textrm{gkp}}$. (b) Decoding circuit of the GKP-two-mode-squeezing code. Note that the simultaneous position and momentum quadrature measurement modulo $\sqrt{2\pi}$ is implemented by using a third ancilla mode (i.e., the measurement ancilla mode) initialized to a noisy GKP state with a noise standard deviation $\sigma_{\textrm{gkp}}$.} 
\label{fig:GKP-two-mode-squeezing code with noisy GKP states}
\end{figure}

In Fig.\ \ref{fig:GKP-two-mode-squeezing code with noisy GKP states}, we present the full circuit for the implementation of the GKP-two-mode-squeezing code. Note that the third mode (or the measurement mode) in the decoding scheme is introduced to simultaneously measure the position and momentum operators of the ancilla mode modulo $\sqrt{2\pi}$. That is, we consume one GKP state to perform the simultaneous and modular position and momentum measurements. We remark that we would have consumed two GKP states for the simultaneous and modular quadrature measurements if we were to use the measurement circuits in Fig.\ \ref{fig:GKP state and measurements}(b) (i.e., one for the modular position measurement and the other for the modular momentum measurement). While this scheme certainly works, it is not the most efficient strategy. This is because the measurement circuits in Fig.\ \ref{fig:GKP state and measurements}(b) are for non-destructive measurements. While the first measurement (e.g., the modular position measurement) has to be performed in a non-destructive way, the following measurement (e.g., the modular momentum measurement) can be done in a destructive way since we no longer need the quantum state in the ancilla mode and instead only need the classical measurement outcomes $\bar{z}_{q}^{(2)}$ and $\bar{z}_{p}^{(2)}$. This is the reason why we simply measure the momentum quadrature of the second mode (i.e., the ancilla mode) in a destructive way after the modular position measurement (see Fig.\ \ref{fig:GKP-two-mode-squeezing code with noisy GKP states}(b)). Such a non-Gaussian resource saving is especially crucial in the regime where the finite squeezing of an approximate GKP state is the limiting factor.

Thanks to the resource saving described above, we only need to supply two GKP states to implement the GKP-two-mode-squeezing code (one in the input of the ancilla mode and the other for the simultaneous and modular ancilla quadrature measurements). We assume that these two GKP states are corrupted by an additive Gaussian noise channel $\mathcal{N}[\sigma_{\textrm{gkp}}]$, i.e., $(\delta_{q}^{(2)},\delta_{p}^{(2)},\delta_{q}^{(3)},\delta_{p}^{(3)}) \sim_{\textrm{iid}}\mathcal{N}(0, \sigma_{\textrm{gkp}}^{2})$ (see Eq.\ \eqref{eq:noisy GKP state after the twirling} and Fig.\ \ref{fig:GKP-two-mode-squeezing code with noisy GKP states}). Due to this additional noise associated with the finite squeezing of the GKP states, the estimated reshaped ancilla quadrature noise is corrupted as follows: 
\begin{align}
\bar{z}_{q}^{(2)} &= R_{\sqrt{2\pi}}( z_{q}^{(2)} + \xi_{q}^{(\textrm{gkp})} ) , 
\nonumber\\
\bar{z}_{p}^{(2)} &= R_{\sqrt{2\pi}}( z_{p}^{(2)} + \xi_{p}^{(\textrm{gkp})} ) , \label{eq:estimated ancilla quadrature noise noisy GKP states}
\end{align}
Here, $\xi_{q}^{(\textrm{gkp})} \equiv\delta_{q}^{(2)} + \delta_{q}^{(3)} $ and $\xi_{p}^{(\textrm{gkp})} \equiv \delta_{p}^{(2)} - \delta_{p}^{(3)}$ are the additional noise due to the finite GKP squeezing and follow $(\xi_{q}^{(\textrm{gkp})}, \xi_{p}^{(\textrm{gkp})} ) \sim_{\textrm{iid}} \mathcal{N}( 0, 2\sigma_{\textrm{gkp}}^{2} )$. Such additional noise will then be propagated to the data mode through the miscalibrated counter displacement operations based on noisy estimates. In the presence of additional GKP noise, the sizes of the optimal counter displacements $\exp[i\hat{p}_{1}\tilde{z}_{q}^{(1)}]$ and $\exp[-i\hat{q}_{1}\tilde{z}_{p}^{(1)}]$ are given by 
\begin{align}
\tilde{z}_{q}^{(1)} &=  - \frac{2\sqrt{G(G-1)} \sigma^{2} }{(2G-1)\sigma^{2} + 2\sigma_{\textrm{gkp}}^{2} } \bar{z}_{q}^{(2)} \xrightarrow{G\gg 1} -\bar{z}_{q}^{(2)}, 
\nonumber\\
\tilde{z}_{p}^{(1)} &=   \frac{2\sqrt{G(G-1)} \sigma^{2} }{(2G-1)\sigma^{2} + 2\sigma_{\textrm{gkp}}^{2} } \bar{z}_{p}^{(2)} \xrightarrow{G\gg 1} \bar{z}_{p}^{(2)} 
\end{align}
and do not explicitly depend on $G$ in the $G \gg 1$ limit. Therefore, the additional GKP noise $\xi_{q}^{(\textrm{gkp})}$ and $\xi_{p}^{(\textrm{gkp})}$ will simply be added to the data quadrature operators without being amplified by the large gain parameter $G \gg 1$. This absence of the noise amplification is a critically important feature of our scheme and is generally not the case for a generic GKP-stabilizer code involving large squeezing operations. 

\begin{figure}[t!]
\centering
\includegraphics[width=4.2in]{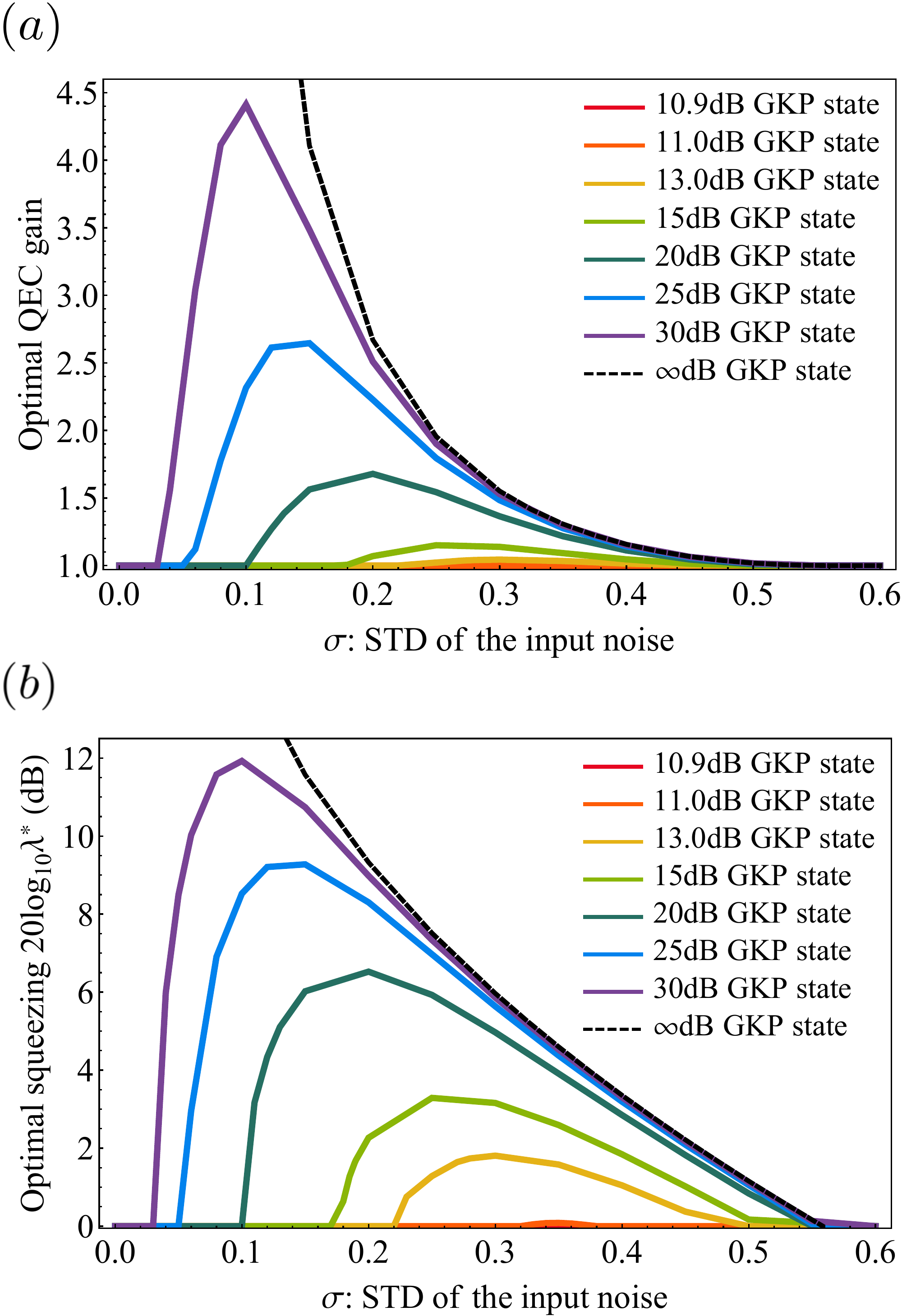}
\caption{[Fig.\ 9 in 	arXiv:1903.12615 (2019)] (a) The optimal QEC gain $\sigma^{2} / (\sigma_{L}^{\star})^{2}$ as a function of input noise standard deviation $\sigma$ for various values of the GKP squeezing $s_{\textrm{gkp}} = -10\log_{10}(2\sigma_{\textrm{gkp}}^{2})$ ranging from $12.8$dB to $30$dB and (b) the optimal two-mode squeezing gain $G^{\star}$, translated to the required single-mode squeezing in the unit of decibel $20\log_{10}\lambda^{\star}$ where $\lambda^{\star} \equiv \sqrt{G^{\star}} +\sqrt{G^{\star}-1 }$. The non-trivial QEC gain $\sigma^{2} / (\sigma_{L}^{\star})^{2}>1$ is achievable only when the squeezing of the supplied approximate GKP states is larger than the critical squeezing $11.0$dB. The dashed black lines represent the asymptotic results for the infinitely squeezed canonical GKP states which are shown in Fig.\ \ref{fig:performance of the two-mode GKP-squeezed-repetition code}.  }
\label{fig:performance of the two-mode GKP-two-mode-squeezing code noisy GKP states}
\end{figure}

In Subsection \ref{subsection:Detailed analysis of the GKP-two-mode squeezing code}, we provide a through analysis of the adverse effects of the finitely-squeezed GKP states. In particular, we derive the variance of the output logical quadrature noise $(\sigma_{L})^{2}$ as a function of the input noise standard deviation $\sigma$, the GKP noise standard deviation $\sigma_{\textrm{gkp}}$, and the gain of the two-mode squeezing $G$. Similarly as above, we optimize the gain $G$ of the two-mode squeezing to minimize the output logical noise standard deviation $\sigma_{L}$ for given $\sigma$ and $\sigma_{\textrm{gkp}}$.   

In Fig.\ \ref{fig:performance of the two-mode GKP-two-mode-squeezing code noisy GKP states}(a), we plot the maximum achievable QEC gain $\sigma^{2} / (\sigma_{L}^{\star})^{2}$ as a function of the input noise standard deviation $\sigma$ for various values of the GKP squeezing ranging from $10.9$dB to $30$dB. In Fig.\ \ref{fig:performance of the two-mode GKP-two-mode-squeezing code noisy GKP states}(b), we plot the optimal gain $G^{\star}$ of the two-mode squeezing, translated to the required single-mode squeezing in the unit of decibel. We observe that the non-trivial QEC gain $\sigma^{2} / (\sigma_{L}^{\star})^{2}>1$ can be achieved only when the supplied GKP states have a squeezing larger than the critical value $11.0$dB. Also, when the supplied GKP states have a squeezing of $30$dB, the maximum QEC gain is given by $\sigma^{2} / (\sigma_{L}^{\star})^{2} = 4.41$, which is achieved when $\sigma = 0.1$. For comparison, the QEC gain at the same input noise standard deviation $\sigma=0.1$ is given by $\sigma^{2} / (\sigma_{L}^{\star})^{2} = 7.7$ when the ideal canonical GKP states are used to implement the GKP-two-mode-squeezing code (see above). The fact that these two values ($4.41$ verses $7.7$) are close to each other is an indicative of the fact that the additional GKP noise is not catastrophically amplified by the large ($12.3$dB) single-mode squeezing operations needed in this regime.    

\subsection{Detailed analysis of the GKP-two-mode squeezing code}
\label{subsection:Detailed analysis of the GKP-two-mode squeezing code}

Here, we first explain the underlying reasons behind our choice of the estimates $\tilde{z}_{q}^{(1)}$ and $\tilde{z}_{p}^{(1)}$ in Eq.\ \eqref{eq:estimates of the data quadrature noises two-mode GKP-squeezed-repetition code} for the GKP-two-mode-squeezing code. Note that the covariance matrix of the reshaped noise $\boldsymbol{z} = (z_{q}^{(1)},z_{p}^{(1)},z_{q}^{(2)},z_{p}^{(2)})^{T}$ is given by 
\begin{align}
V_{\boldsymbol{z}} = \sigma^{2}\begin{bmatrix}
(2G-1)\boldsymbol{I}_{2}&-2\sqrt{G(G-1)}\boldsymbol{Z}_{2}\\
-2\sqrt{G(G-1)}\boldsymbol{Z}_{2}&(2G-1)\boldsymbol{I}_{2}
\end{bmatrix}. 
\end{align}
For now, let us ignore the fact that we can measure $z_{q}^{(2)}$ and $z_{p}^{(2)}$ only modulo $\sqrt{2\pi}$ and instead assume that we know their exact values. Note that $z_{q}^{(1)}$ is only correlated with $z_{q}^{(2)}$, whereas $z_{p}^{(1)}$ is only correlated with $z_{p}^{(2)}$. Consider the estimates of the form $\bar{z}_{q}^{(1)} = c_{q}z_{q}^{(2)}$ and $\bar{z}_{p}^{(1)} = c_{p}z_{p}^{(2)}$, where $c_{q}$ and $c_{p}$ are constants. We choose $c_{q}$ and $c_{p}$ such that the variances of $z_{q}^{(1)}-\bar{z}_{q}^{(1)}$ and $z_{p}^{(1)}-\bar{z}_{p}^{(1)}$ are minimized: Since $\textrm{Var}(z_{q}^{(1)}-\bar{z}_{q}^{(1)})$ and $\textrm{Var}(z_{p}^{(1)}-\bar{z}_{p}^{(1)})$ are given by
\begin{align}
\textrm{Var}(z_{q}^{(1)}-\bar{z}_{q}^{(1)}) &= \textrm{Var}(z_{q}^{(1)}) - 2c_{q} \cdot \textrm{Cov}(z_{q}^{(1)},z_{q}^{(2)})  + c_{q}^{2} \textrm{Var}(z_{q}^{(2)}) , 
\nonumber\\
\textrm{Var}(z_{p}^{(1)}-\bar{z}_{p}^{(1)}) &= \textrm{Var}(z_{p}^{(1)}) - 2c_{p} \cdot \textrm{Cov}(z_{p}^{(1)},z_{p}^{(2)})    + c_{p}^{2} \textrm{Var}(z_{p}^{(2)}),  
\end{align} 
they are minimized when 
\begin{align}
c_{q} &= \frac{\textrm{Cov}(z_{q}^{(1)},z_{q}^{(2)})}{\textrm{Var}(z_{q}^{(2)})} = -\frac{2\sqrt{G(G-1)}}{2G-1} ,
\nonumber\\
c_{q} &= \frac{\textrm{Cov}(z_{p}^{(1)},z_{p}^{(2)})}{\textrm{Var}(z_{p}^{(2)})} = \frac{2\sqrt{G(G-1)}}{2G-1}. 
\end{align} 
Therefore, if both $z_{q}^{(2)}$ and $z_{p}^{(2)}$ are precisely known, the optimal estimates of $z_{q}^{(1)}$ and $z_{p}^{(1)}$ are given by 
\begin{align}
\bar{z}_{q}^{(1)} &=  -\frac{2\sqrt{G(G-1)}}{2G-1} z_{q}^{(2)}, 
\nonumber\\
\bar{z}_{p}^{(1)} &=  \frac{2\sqrt{G(G-1)}}{2G-1} z_{p}^{(2)}. 
\end{align}  
Since, however, we can only measure $z_{q}^{(2)}$ and $z_{p}^{(2)}$ modulo $\sqrt{2\pi}$, we replace $z_{q}^{(2)}$ and $z_{p}^{(2)}$ by $\bar{z}_{q}^{(2)} = R_{\sqrt{2\pi}}(z_{q}^{(2)})$ and $\bar{z}_{p}^{(2)} = R_{\sqrt{2\pi}}(z_{p}^{(2)})$ and get the estimates $\tilde{z}_{q}^{(1)}$ and $\tilde{z}_{p}^{(1)}$ in Eq.\ \eqref{eq:estimates of the data quadrature noises two-mode GKP-squeezed-repetition code}. 

Now we provide explicit expression for the probability density functions of the logical quadrature noise $\xi_{q}$ and $\xi_{p}$ for the GKP-two-mode-squeezing code. Recall Eq.\ \eqref{eq:logical quadrature noises two-mode GKP-squeezed-repetition code}: 
\begin{align}
\xi_{q} & = z_{q}^{(1)} - \tilde{z}_{q}^{(1)} = z_{q}^{(1)} + \frac{2\sqrt{G(G-1)}}{2G-1}R_{\sqrt{2\pi}}(z_{q}^{(2)}),
\nonumber\\
\xi_{p} & = z_{p}^{(1)} - \tilde{z}_{p}^{(1)} = z_{p}^{(1)} - \frac{2\sqrt{G(G-1)}}{2G-1}R_{\sqrt{2\pi}}(z_{p}^{(2)}), 
\end{align}
where $\boldsymbol{z} = (z_{q}^{(1)},z_{p}^{(1)},z_{q}^{(2)},z_{p}^{(2)})^{T}$ follows a Gaussian distribution with zero means and the covariance matrix $V_{\boldsymbol{z}}$. By using 
\begin{align}
R_{s}(z) \equiv \sum_{n\in\mathbb{Z}} (z-ns)\cdot  I\Big{\lbrace} z\in\Big{[}\Big{(}n-\frac{1}{2}\Big{)}s,\Big{(}n+\frac{1}{2}\Big{)}s\Big{]} \Big{\rbrace},  \label{supp_eq:identity on Rs function}
\end{align}
we find that the probability density functions of the quadrature noise are given by  
\begin{align}
Q(\xi_{q})&\equiv \int_{-\infty}^{\infty}dz_{q}^{(1)}\int_{-\infty}^{\infty}dz_{q}^{(2)} \delta\Big{(} \xi_{q}-z_{q}^{(1)} - \frac{2\sqrt{G(G-1)}}{2G-1}R_{\sqrt{2\pi}}(z_{q}^{(2)}) \Big{)} 
\nonumber\\
&\qquad\qquad\qquad\times  \frac{1}{2\pi\sigma^{2}}  \exp\Big{[} -\frac{(2G-1)}{2\sigma^{2}}\big{(}(z_{q}^{(1)})^{2}+(z_{q}^{(2)})^{2}\big{)} - \frac{2\sqrt{G(G-1)}}{\sigma^{2}}z_{q}^{(1)}z_{q}^{(2)} \Big{]}
\nonumber\\
&= \sum_{n\in\mathbb{Z}}\int_{-\infty}^{\infty}dz_{q}^{(1)}\int_{-\infty}^{\infty}dz_{q}^{(2)}   \delta\Big{(} \xi_{q}-z_{q}^{(1)} - \frac{2\sqrt{G(G-1)}}{2G-1}(z_{q}^{(2)}-\sqrt{2\pi}n) \Big{)}
\nonumber\\
&\qquad \times I\Big{\lbrace} z_{q}^{(2)}\in \Big{[} \Big{(} n-\frac{1}{2} \Big{)}\sqrt{2\pi},\Big{(} n+\frac{1}{2} \Big{)}\sqrt{2\pi} \Big{]} \Big{\rbrace} 
\nonumber\\
&\qquad\times  \frac{1}{2\pi\sigma^{2}}\exp\Big{[} -\frac{(2G-1)}{2\sigma^{2}}\Big{(} z_{q}^{(1)} + \frac{2\sqrt{G(G-1)}}{2G-1}z_{q}^{(2)} \Big{)}^{2} - \frac{1}{2(2G-1)\sigma^{2}}(z_{q}^{(2)})^{2}\Big{]}
\nonumber\\
&= \sum_{n\in\mathbb{Z}} \int_{(n-\frac{1}{2})\sqrt{2\pi}}^{(n+\frac{1}{2})\sqrt{2\pi}}dz_{q}^{(2)} \frac{1}{\sqrt{2\pi(2G-1)\sigma^{2}}}\exp\Big{[} - \frac{1}{2(2G-1)\sigma^{2}}(z_{q}^{(2)})^{2}\Big{]}
\nonumber\\
&\qquad\qquad\qquad\qquad\qquad\times \sqrt{\frac{2G-1}{2\pi\sigma^{2}}}\exp\Big{[} -\frac{(2G-1)}{2\sigma^{2}}\Big{(} \xi_{q} + \frac{2\sqrt{G(G-1)}}{2G-1}\sqrt{2\pi}n \Big{)}^{2} \Big{]}
\nonumber\\
&= \sum_{n\in\mathbb{Z}}q_{n}\cdot p\Big{[}\frac{\sigma}{\sqrt{2G-1}}\Big{]}(\xi_{q}-\mu_{n}) , \label{appendix_eq:probability density functions of the position quadrature noise two-mode GKP-squeezed-repetition code without GKP noise}
\end{align}
and similarly
\begin{align}
P(\xi_{p})&\equiv \int_{-\infty}^{\infty}dz_{p}^{(1)}\int_{-\infty}^{\infty}dz_{p}^{(2)}  \delta\Big{(} \xi_{p}-z_{p}^{(1)} + \frac{2\sqrt{G(G-1)}}{2G-1}R_{\sqrt{2\pi}}(z_{p}^{(2)}) \Big{)} 
\nonumber\\
&\qquad\qquad\qquad\times  \frac{1}{2\pi\sigma^{2}}\exp\Big{[} -\frac{(2G-1)}{2\sigma^{2}}\big{(}(z_{p}^{(1)})^{2}+(z_{p}^{(2)})^{2}\big{)} + \frac{2\sqrt{G(G-1)}}{\sigma^{2}}z_{p}^{(1)}z_{p}^{(2)} \Big{]}
\nonumber\\
&= \sum_{n\in\mathbb{Z}}q_{n}\cdot p\Big{[}\frac{\sigma}{\sqrt{2G-1}}\Big{]}(\xi_{q}-\mu_{n}),    \label{appendix_eq:probability density functions of the momentum quadrature noise two-mode GKP-squeezed-repetition code without GKP noise}
\end{align}
where $q_{n}(=q_{-n})$ and $\mu_{n}$ are as defined as
\begin{align}
q_{n} &= \int_{(n-\frac{1}{2})\sqrt{2\pi}}^{(n+\frac{1}{2})\sqrt{2\pi}} dz p[\sqrt{2G-1}\sigma](z) \,\,\,\textrm{and}\,\,\,  \mu_{n} = \frac{2\sqrt{G(G-1)}}{2G-1}\sqrt{2\pi}n. \label{eq:qn and mun}
\end{align} 

We derive the asymptotic expressions for the optimal gain $G^{\star}$ and the minimum standard deviation $\sigma_{L}^{\star}$ given in Eqs.\ \eqref{eq:optimal G asymptotic} and \eqref{eq:optimal STD asymptotic}. By assuming $\sqrt{2G-1}\sigma \ll 1$, we find 
\begin{align}
(\sigma_{L})^{2} \simeq \frac{\sigma^{2}}{2G-1} + \frac{8\pi\sqrt{G(G-1)}}{(2G-1)^{2}}\textrm{erfc}\Big{(} \frac{\sqrt{\pi}}{2\sqrt{2G-1}\sigma} \Big{)}. \label{appendix_eq:variance of the logical quadrature noise asymptotic} 
\end{align}
Assuming further that $G\gg 1$ (which is relevant when $\sigma\ll 1$) and using the asymptotic formula for the complementary error function, i.e.,  
\begin{align}
\textrm{erfc}(x) \xrightarrow{x\rightarrow \infty} \frac{1}{x\sqrt{\pi}}\exp[-x^{2}], 
\end{align}
we can simplify Eq.\ \eqref{appendix_eq:variance of the logical quadrature noise asymptotic} as 
\begin{align}
(\sigma_{L})^{2} \simeq \sigma^{2}x + \frac{4\sigma}{\sqrt{x} }\exp\Big{[} -\frac{\pi}{4\sigma^{2}}x \Big{]} \equiv f(x), 
\end{align}
where $x\equiv 1/(2G-1)$. The optimum $x^{\star}$ can be found by solving $f'(x^{\star})=0$. Note that $f'(x)$ is given by
\begin{align}
f'(x) = \sigma^{2} - \Big{(} \frac{\pi}{\sigma \sqrt{x}} + \frac{2\sigma}{\sqrt{x^{3}}} \Big{)}\exp\Big{[} -\frac{\pi}{4\sigma^{2}}x \Big{]} . 
\end{align} 
Thus, $x^{\star}$ should satisfy 
\begin{align}
x^{\star} &= \frac{4\sigma^{2}}{\pi} \log_{e}\Big{(} \frac{\pi}{\sigma^{3}\sqrt{x^{\star}}} + \frac{2}{\sigma \sqrt{(x^{\star})^{3}}} \Big{)}
\nonumber\\
&= \frac{4\sigma^{2}}{\pi} \log_{e}\Big{(} \frac{\pi^{3/2}}{2\sigma^{4}\sqrt{\log_{e}(\cdots)}} + \frac{\pi^{3/2}}{4\sigma^{4} \sqrt{(\log_{e}(\cdots))^{3}}} \Big{)} . 
\end{align} 
where we iteratively plugged in the first equation into itself to get the second equation. Since $\log_{e}(\cdots) \gg 1$, we can disregard the second term in the second line. By further neglecting the logarithmic factor $\sqrt{\log_{e}(\cdots)}$, we get 
\begin{align}
x^{\star} \simeq \frac{4\sigma^{2}}{\pi} \log_{e}\Big{(} \frac{\pi^{3/2}}{2\sigma^{4}} \Big{)}. 
\end{align}
Since $G^{\star} = \frac{1}{2x^{\star}} + \frac{1}{2}$, Eq.\ \eqref{eq:optimal G asymptotic} follows: 
\begin{align}
G^{\star}  \xrightarrow{\sigma \ll 1} \frac{\pi}{8\sigma^{2}} \Big{(} \log_{e}\Big{[} \frac{\pi^{3/2}}{2\sigma^{4}} \Big{]} \Big{)}^{-1} + \frac{1}{2}. 
\end{align}
The optimal value $(\sigma^{\star}_{L})^{2} = f(x^{\star})$ is then approximately given by 
\begin{align}
(\sigma^{\star}_{L})^{2} \simeq \frac{4\sigma^{4}}{\pi}\log_{e}\Big{[} \frac{\pi^{3/2}}{2\sigma^{4}} \Big{]}  + \frac{4\sigma^{4}}{\pi}  \Big{(}\log_{e}\Big{[} \frac{\pi^{3/2}}{2\sigma^{4}} \Big{]}\Big{)}^{-\frac{1}{2}}
\end{align}
Since $\log_{e}(\pi^{3/2}/(2\sigma^{4}))\gg 1$, we can disregard the second term and obtain Eq.\ \eqref{eq:optimal STD asymptotic}: 
\begin{align}
\sigma^{\star}_{L} \xrightarrow{\sigma \ll 1} \frac{2\sigma^{2}}{\sqrt{\pi}}\Big{(}\log_{e}\Big{[} \frac{\pi^{3/2}}{2\sigma^{4}} \Big{]} \Big{)}^{\frac{1}{2}}. 
\end{align}  

Let us now consider the case with noisy GKP states (see Fig.\ \ref{fig:GKP-two-mode-squeezing code with noisy GKP states} and Eq.\ \eqref{eq:estimated ancilla quadrature noise noisy GKP states}). Then, we have 
\begin{align}
\xi_{q} &\equiv z_{q}^{(1)} +\frac{2\sqrt{G(G-1)} \sigma^{2} }{ (2G-1)\sigma^{2} + 2\sigma_{\textrm{gkp}}^{2} } R_{\sqrt{2\pi}}(z_{q}^{(2)} +\xi_{q}^{(\textrm{gkp})} ) , 
\end{align}
where the GKP noise $\xi_{q}^{(\textrm{gkp})}$ is independent of $z_{q}^{(1)}$ and $z_{q}^{(2)}$ and follows $\xi_{q}^{(\textrm{gkp})}\sim \mathcal{N}( 0, 2\sigma_{\textrm{gkp}}^{2} )$. Then, the probability density function $Q(\xi_{q})$ is given by
\begin{align}
Q(\xi_{q})&\equiv \int_{ \mathbb{R}^{3} }dz_{q}^{(1)}dz_{q}^{(2)}  d\xi_{q}^{(\textrm{gkp})}  \delta\Big{(} \xi_{q}-z_{q}^{(1)} - \frac{2\sqrt{G(G-1)} \sigma^{2} }{ (2G-1)\sigma^{2} +2\sigma_{\textrm{gkp}}^{2} }R_{\sqrt{2\pi}}(z_{q}^{(2)} + \xi_{q}^{(\textrm{gkp})} ) \Big{)}  
\nonumber\\
&\qquad \times  \frac{1}{\sqrt{ 16\pi^{3}\sigma^{4}\sigma_{\textrm{gkp}}^{2} }} \exp\Big{[} -\frac{(2G-1)}{2\sigma^{2}}\big{(}(z_{q}^{(1)})^{2}+(z_{q}^{(2)})^{2}\big{)} - \frac{2\sqrt{G(G-1)}  }{\sigma^{2}}z_{q}^{(1)}z_{q}^{(2)}   \Big{]} 
\nonumber\\
&\qquad\times  \exp\Big{[} - \frac{1}{4\sigma_{\textrm{gkp}}^{2}}  ( \xi_{q}^{(\textrm{gkp})} )^{2}  \Big{]}
\nonumber\\
&=\sum_{n\in\mathbb{Z}} \int_{ \mathbb{R}^{3} }dz_{q}^{(1)}dz_{q}^{(2)}  d\xi_{q}^{(\textrm{gkp})}  \delta\Big{(} \xi_{q}-z_{q}^{(1)} - \frac{2\sqrt{G(G-1)} \sigma^{2} }{ (2G-1)\sigma^{2} + 2\sigma_{\textrm{gkp}}^{2} }(z_{q}^{(2)} + \xi_{q}^{(\textrm{gkp})} -\sqrt{2\pi}n) \Big{)} 
\nonumber\\
&\qquad\times  \frac{1}{\sqrt{ 16\pi^{3}\sigma^{4}\sigma_{\textrm{gkp}}^{2} }} \exp\Big{[} -\frac{(2G-1)}{2\sigma^{2}}\Big{(} z_{q}^{(1)} + \frac{2\sqrt{G(G-1)}}{2G-1}z_{q}^{(2)} \Big{)}^{2} - \frac{1}{2(2G-1)\sigma^{2}}(z_{q}^{(2)})^{2}\Big{]} 
\nonumber\\
&\qquad\times \exp\Big{[} - \frac{1}{4\sigma_{\textrm{gkp}}^{2}}  ( \xi_{q}^{(\textrm{gkp})} )^{2}  \Big{]}  I\Big{\lbrace} z_{q}^{(2)} + \xi_{q}^{(\textrm{gkp})} \in \Big{[} \Big{(} n-\frac{1}{2} \Big{)}\sqrt{2\pi},\Big{(} n+\frac{1}{2} \Big{)}\sqrt{2\pi} \Big{]} \Big{\rbrace} 
\nonumber\\
&= \sum_{n\in\mathbb{Z}}\int_{\mathbb{R}^{2}} dz_{q}^{(2)}  d\xi_{q}^{(\textrm{gkp})}  \frac{1}{\sqrt{ 16 \pi^{3}\sigma^{4}\sigma_{\textrm{gkp}}^{2} }} \exp\Big{[} - \frac{1}{2(2G-1)\sigma^{2}}(z_{q}^{(2)})^{2}\Big{]} \exp\Big{[} - \frac{1}{4\sigma_{\textrm{gkp}}^{2}}  ( \xi_{q}^{(\textrm{gkp})} )^{2}  \Big{]}  
\nonumber\\
&\qquad\times \exp\Big{[} -\frac{(2G-1)}{2\sigma^{2}}\Big{(}  \xi_{q} - \frac{2\sqrt{G(G-1)}\sigma^{2}}{ (2G-1)\sigma^{2} +2\sigma_{\textrm{gkp}}^{2} }( \xi_{q}^{(\textrm{gkp})} -\sqrt{2\pi}n) 
\nonumber\\
&\qquad\qquad\qquad\qquad\qquad\qquad + \frac{2\sqrt{G(G-1)} 2\sigma_{\textrm{gkp}}^{2} }{ (2G-1)( (2G-1)\sigma^{2} + 2\sigma_{\textrm{gkp}}^{2} ) } z_{q}^{(2)}  \Big{)}^{2} \Big{]}
\nonumber\\
&\qquad \times I\Big{\lbrace} z_{q}^{(2)} +\xi_{q}^{(\textrm{gkp})} \in \Big{[} \Big{(} n-\frac{1}{2} \Big{)}\sqrt{2\pi},\Big{(} n+\frac{1}{2} \Big{)}\sqrt{2\pi} \Big{]} \Big{\rbrace} . 
\end{align}
Thus, the variance of the output logical quadrature noise $(\sigma_{L})^{2} = \textrm{Var}[\xi_{q}] = \mathbb{E}[ (\xi_{q})^{2} ]$ is given by 
\begin{align}
(\sigma_{L})^{2} &= \sum_{n\in\mathbb{Z}}\int_{\mathbb{R}^{3}} dz_{q}^{(2)}  d\xi_{q}^{(\textrm{gkp})}  d\xi_{q}  \frac{1}{\sqrt{ 16\pi^{3}\sigma^{4}\sigma_{\textrm{gkp}}^{2} }}  \exp\Big{[} - \frac{1}{2(2G-1)\sigma^{2}}(z_{q}^{(2)})^{2}\Big{]} 
\nonumber\\
&\qquad \times (\xi_{q})^{2} \exp\Big{[} -\frac{(2G-1)}{2\sigma^{2}}\Big{(}  \xi_{q} - \frac{2\sqrt{G(G-1)}\sigma^{2}}{ (2G-1)\sigma^{2} +2\sigma_{\textrm{gkp}}^{2} }( \xi_{q}^{(\textrm{gkp})} -\sqrt{2\pi}n) 
\nonumber\\
&\qquad\qquad\qquad\qquad\qquad\qquad\qquad + \frac{2\sqrt{G(G-1)} 2\sigma_{\textrm{gkp}}^{2} }{ (2G-1)( (2G-1)\sigma^{2} + 2\sigma_{\textrm{gkp}}^{2} ) } z_{q}^{(2)}  \Big{)}^{2} \Big{]}
\nonumber\\
&\qquad \times\exp\Big{[} - \frac{1}{4\sigma_{\textrm{gkp}}^{2}}  ( \xi_{q}^{(\textrm{gkp})} )^{2}  \Big{]}   I\Big{\lbrace} z_{q}^{(2)} +\xi_{q}^{(\textrm{gkp})} \in \Big{[} \Big{(} n-\frac{1}{2} \Big{)}\sqrt{2\pi},\Big{(} n+\frac{1}{2} \Big{)}\sqrt{2\pi} \Big{]} \Big{\rbrace} 
\nonumber\\
&= \sum_{n\in\mathbb{Z}}\int_{\mathbb{R}^{2}} dz_{q}^{(2)}  d\xi_{q}^{(\textrm{gkp})}    p [  \sqrt{2G-1}\sigma ] ( z_{q}^{(2)} ) \cdot p[\sqrt{2}\sigma_{\textrm{gkp}}]( \xi_{q}^{(\textrm{gkp})} ) 
\nonumber\\
&\qquad\times \Big{[} \frac{\sigma^{2}}{2G-1} +  \Big{(}\frac{2\sqrt{G(G-1)}\sigma^{2}}{ (2G-1)\sigma^{2} +2\sigma_{\textrm{gkp}}^{2} }( \xi_{q}^{(\textrm{gkp})} -\sqrt{2\pi}n) 
\nonumber\\
&\qquad\qquad\qquad\qquad\qquad- \frac{2\sqrt{G(G-1)} 2\sigma_{\textrm{gkp}}^{2} }{ (2G-1)( (2G-1)\sigma^{2} + 2\sigma_{\textrm{gkp}}^{2} ) } z_{q}^{(2)} \Big{)}^{2}  \Big{]}
\nonumber\\
&\qquad \times I\Big{\lbrace} z_{q}^{(2)} +\xi_{q}^{(\textrm{gkp})} \in \Big{[} \Big{(} n-\frac{1}{2} \Big{)}\sqrt{2\pi},\Big{(} n+\frac{1}{2} \Big{)}\sqrt{2\pi} \Big{]} \Big{\rbrace}.  \label{eq:variance of the output logical quadrature noise noisy GKP states better}
\end{align}
Fig.\ \ref{fig:performance of the two-mode GKP-two-mode-squeezing code noisy GKP states} was obtained by optimizing the gain $G$ to minimize $(\sigma_{L})^{2}$ in Eq.\ \eqref{eq:variance of the output logical quadrature noise noisy GKP states better}.  

\section{Cubic phase state as a non-Gaussian resource}
\label{section:Cubic phase state as a non-Gaussian resource}

Here, we review the cubic phase state and gate \cite{Gottesman2001} and discuss distillation of noisy cubic phase states. 

\subsection{Cubic phase state and gate}

The cubic phase gate $\hat{V}_{\gamma}$ \cite{Gottesman2001} is defined as 
\begin{align}
\hat{V}_{\gamma} \equiv \exp[ i\gamma \hat{q}^{3}] , 
\end{align}
and is a non-Gaussian operation analogous to the $T$ gate $\hat{T} = |0\rangle\langle 0| + \exp[i\pi/4]|1\rangle\langle 1|$ for the multi-qubit universal quantum computation. The cubic phase state $|\gamma\rangle$ is defined as 
\begin{align}
|\gamma\rangle \equiv \hat{V}_{\gamma} |\hat{p}=0\rangle  = \frac{1}{\sqrt{2\pi}} \int_{-\infty}^{\infty}dq e^{i\gamma q^{3}}  |\hat{q}=q\rangle, 
\end{align}
and is analogous to the magic state $|T\rangle = \hat{T} |+\rangle \propto |0\rangle + \exp[i\pi/4]|1\rangle$ for qubits.  

We make the following important observation: The cubic phase state $|\gamma\rangle$ is an eigenstate of a Gaussian operator $\hat{V}_{\gamma}\hat{p}\hat{V}_{\gamma}^{\dagger} = \hat{p}+[i\gamma\hat{q}^{3},\hat{p}] = \hat{p}-3\gamma \hat{q}^{2}$, i.e., 
\begin{align}
(\hat{p}-3\gamma \hat{q}^{2}) |\gamma\rangle = \hat{V}_{\gamma}\hat{p}\hat{V}_{\gamma}^{\dagger}|\gamma\rangle  = \hat{V}_{\gamma}\hat{p}|\hat{p}=0\rangle = 0. 
\end{align}   
Note that the displaced cubic phase state 
\begin{align}
|\gamma,p\rangle &\equiv e^{ip\hat{q}} |\gamma\rangle = \hat{V}_{\gamma}|\hat{p}=p\rangle
\end{align}
is also an eigenstate of the Gaussian operator $\hat{p}-3\gamma \hat{q}^{2}$ with an eigenvalue $p$ 
\begin{align}
(\hat{p}-3\gamma \hat{q}^{2})|\gamma,p\rangle = \hat{V}_{\gamma}\hat{p}\hat{V}_{\gamma}^{\dagger}|\gamma,p\rangle =  \hat{V}_{\gamma} \hat{p}|\hat{p}=p\rangle = p\hat{V}_{\gamma}|\hat{p}=p\rangle = p|\gamma,p\rangle,   \label{eq:magic state being an eigenstate of a Gaussian operator}
\end{align} 
where $p$ can be any real number ($|\gamma,p=0\rangle = |\gamma\rangle$). Therefore, the set of displaced cubic phase states $\lbrace |\gamma,p\rangle | p\in\mathbb{R} \rbrace$ spans the entire bosonic Hilbert space. That is, any state $\hat{\rho}$ can be expressed as 
\begin{align}
\hat{\rho} = \int dpdp'P(p,p')|\gamma,p\rangle\langle \gamma,p'|, \label{eq:state in terms of magic basis} 
\end{align}     
where $P(p,p') = (P(p',p))^{*}$ and $\int_{-\infty}^{\infty}dp P(p,p) =1$.

Now, we introduce the magic twirling (consisting only of Gaussian operations) and show that it can transform any state $\hat{\rho}$ into a mixture of displaced cubic phase states 
\begin{align}
\hat{\rho}' = \int_{-\infty}^{\infty}dp P(p,p)|\gamma,p\rangle\langle \gamma,p|.  
\end{align}

We define the magic twirling as a random application of Gaussian operations $\exp[i(\hat{p}-3\gamma \hat{q}^{2})x]$ with $x \sim \mathcal{N}(0,\Sigma^{2}\rightarrow \infty)$, i.e., 
\begin{align}
\mathcal{T}(\hat{\rho}) &\equiv \lim_{\Sigma\rightarrow\infty} \int_{-\infty}^{\infty} dx \frac{1}{\sqrt{2\pi\Sigma^{2}}}e^{-\frac{x^{2}}{2\Sigma^{2}}}  e^{i(\hat{p}-3\gamma \hat{q}^{2})x} \hat{\rho} e^{-i(\hat{p}-3\gamma \hat{q}^{2})x}. \label{eq:magic twirling definition}
\end{align} 
Plugging in Eq.\ \eqref{eq:state in terms of magic basis} to Eq.\ \eqref{eq:magic twirling definition} and using $(\hat{p}-3\gamma\hat{q}^{2})|\gamma,p\rangle = p|\gamma,p\rangle$ (see Eq.\ \eqref{eq:magic state being an eigenstate of a Gaussian operator}), we get the desired result: 
\begin{align}
\mathcal{T}(\hat{\rho}) &=\int dpdp' P(p,p') \Big{[}  \lim_{\Sigma\rightarrow\infty}  \int_{-\infty}^{\infty} dx \frac{1}{\sqrt{2\pi\Sigma^{2}}}e^{-\frac{x^{2}}{2\Sigma^{2}}}  e^{i(p-p')x} \Big{]} |\gamma,p\rangle\langle\gamma,p'| 
\nonumber\\
&=  \int dpdp' P(p,p') \Big{[} \lim_{\Sigma\rightarrow\infty} e^{-\frac{\Sigma^{2}}{2}(p-p')^{2}} \Big{]}  |\gamma,p\rangle\langle\gamma,p'| 
\nonumber\\
&=  \int dpdp' P(p,p') \delta_{pp'}  |\gamma,p\rangle\langle\gamma,p'|
\nonumber\\
&=  \frac{1}{\delta(0)} \int dpdp' P(p,p') \delta(p-p')  |\gamma,p\rangle\langle\gamma,p'|
\nonumber\\
&= \frac{1}{\delta(0)} \int_{-\infty}^{\infty} dp P(p,p)  |\gamma,p\rangle\langle\gamma,p|. \label{eq:state after magic twirling}
\end{align}
Note that the normalization constant $1/\delta(0)$ is due to the fact that displaced cubic phase states $|\gamma,p\rangle$ are orthonormalized by the Dirac delta function (i.e., $\langle \gamma,p|\gamma,p'\rangle = \delta(p-p')$, yielding $\langle \gamma|\gamma\rangle = \delta(0)$), whereas the state $\hat{\rho}$ is normalized to unity (i.e., $\mathrm{Tr}[\hat{\rho}]=1$).     

Since $|\gamma,p\rangle = \exp[ip\hat{q}]|\gamma\rangle$, one understand the mixture of displaced cubic phase states in Eq.\ \eqref{eq:state after magic twirling} as the ideal cubic phase state $|\gamma\rangle\langle \gamma|$ corrupted by a random displacement error in the momentum direction, i.e., 
\begin{align}
\mathcal{T}(\hat{\rho}) &\propto \mathcal{N}_{P}(|\gamma\rangle\langle\gamma|) \,\,\,\textrm{where}\,\,\, \mathcal{N}_{P}(\hat{\rho}) \equiv \int_{-\infty}^{\infty} dp P(p,p) e^{ip\hat{q}} \hat{\rho} e^{-ip\hat{q}}. 
\end{align} 
The purpose of the cubic phase state distillation will then be to reduce, e.g., the variance of the random displacement error $\mathcal{N}_{P}$ by using only Gaussian states, operations, and homodyne measurements. Before moving on to the cubic phase state distillation, we will discuss the cubic phase state injection procedure (in the presence of the random displacement noise $\mathcal{N}_{P}$) in the next section.

Let us now show that the ability to perform a noisy cubic phase gate $\mathcal{N}_{P}(\hat{V}_{\gamma} \bullet \hat{V}_{\gamma}^{\dagger})$ is equivalent to the ability to prepare a noisy cubic phase state $\mathcal{N}_{P}(|\gamma\rangle\langle\gamma|)$ if we assume Gaussian states, operations, and homodyne measurements are free. 

\begin{figure}[t!]
\centering
\includegraphics[width=5.6in]{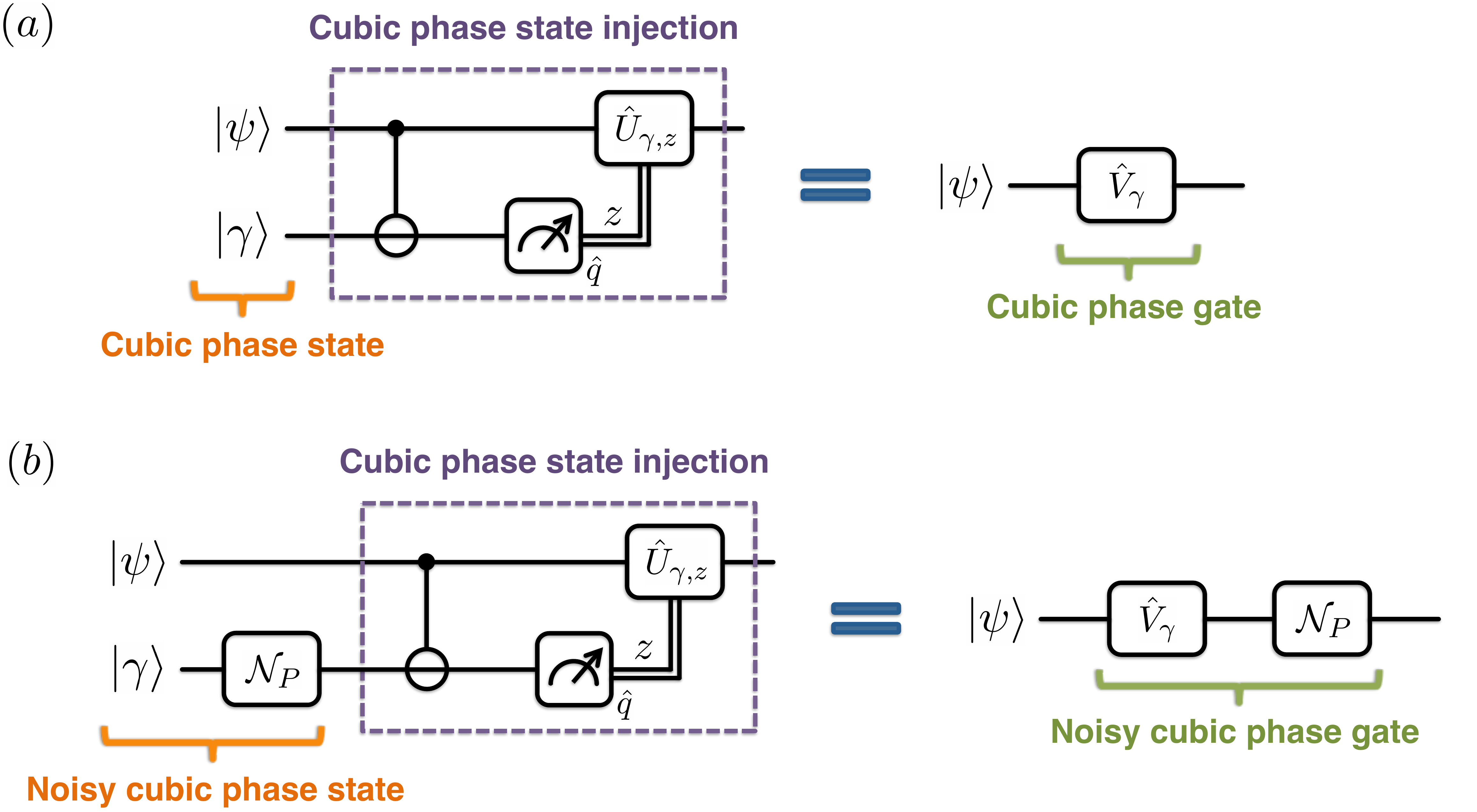}
\caption{Cubic phase state injection procedure (a) with an ideal cubic phase state $|\gamma\rangle$ and (b) with a noisy cubic phase state $\mathcal{N}_{P}(|\gamma\rangle\langle\gamma|)$. $\hat{U}_{\gamma,z}$ is given by $\hat{U}_{\gamma,z} = \exp[-3i\gamma(z\hat{q}^{2}+z^{2}\hat{q})]$.    }
\label{fig:cubic phase state injection}
\end{figure}

It is clear that the ability to perform the noisy cubic phase gate $\mathcal{N}_{P}(\hat{V}_{\gamma} \bullet \hat{V}_{\gamma}^{\dagger})$ allows us to prepare the noisy cubic phase state $\mathcal{N}_{P}(|\gamma\rangle\langle\gamma|)$ because one can simply apply $\mathcal{N}_{P}(\hat{V}_{\gamma} \bullet \hat{V}_{\gamma}^{\dagger})$ to the position eigenstate $|\hat{p}=0\rangle \langle \hat{p}=0|$: 
\begin{align}
\mathcal{N}_{P}(|\gamma\rangle\langle\gamma|)  &= \mathcal{N}_{P}\Big{(}\hat{V}_{\gamma} |\hat{p}=0\rangle \langle \hat{p}=0| \hat{V}_{\gamma}^{\dagger}\Big{)} . 
\end{align}        
Now, we show that the converse is also true. That is, we show that one can perform the noisy cubic phase gate $\mathcal{N}_{P}(\hat{V}_{\gamma} \bullet \hat{V}_{\gamma}^{\dagger})$ via the cubic phase state injection procedure by consuming a single noisy cubic phase state $\mathcal{N}_{P}(|\gamma\rangle\langle\gamma|)$. Below we recall the cubic phase state injection procedure introduced in Ref.\ \cite{Gottesman2001} (see also Fig.\ \ref{fig:cubic phase state injection}(a)) and show that it converts a displaced cubic phase state $|\gamma,p\rangle$ into the cubic phase gate $\hat{V}_{\gamma}$ followed by an unwanted shift $e^{ip\hat{q}}$:    
\begin{align}
&\textrm{\textbf{Cubic phase state injection procedure}}
\nonumber\\
&\qquad\quad\textrm{Input:}
\nonumber\\
&\qquad\qquad\qquad |\psi\rangle \otimes |\gamma,p\rangle = \int_{-\infty}^{\infty}dq \psi(q)|\hat{q}_{1}=q\rangle \otimes  \int_{-\infty}^{\infty}dq' e^{i\gamma q'^{3}} e^{ipq'} |\hat{q}_{2}=q'\rangle 
\nonumber\\
&\qquad\quad\textrm{Step 1: Apply }e^{i\hat{q}_{1}\hat{p}_{2}}
\nonumber\\
&\qquad\qquad\qquad \rightarrow   \int_{-\infty}^{\infty}dq\int_{-\infty}^{\infty}dq' \psi(q)e^{i\gamma q'^{3}} e^{ipq'} |\hat{q}_{1}=q\rangle \otimes |\hat{q}_{2}=q'-q\rangle
\nonumber\\
&\qquad\qquad\qquad =\int_{-\infty}^{\infty}dq\int_{-\infty}^{\infty}dz \psi(q)e^{i\gamma (z+q)^{3}} e^{ip(z+q)} |\hat{q}_{1}=q\rangle \otimes |\hat{q}_{2}=z\rangle 
\nonumber\\
&\qquad\quad\textrm{Step 2: Measure }\hat{q}_{2} 
\nonumber\\
&\qquad\qquad\qquad \rightarrow  \int_{-\infty}^{\infty}dq \psi(q)e^{i\gamma (z+q)^{3}} e^{ip(z+q)} |\hat{q}_{1}=q\rangle \,\,\, +\textrm{ classical side information }z
\nonumber\\
&\qquad\quad\textrm{Step 3: Apply }\hat{U}_{\gamma,z} \equiv e^{-3i\gamma(z\hat{q}_{1}^{2}+z^{2}\hat{q}_{1})}
\nonumber\\
&\qquad\qquad\qquad \rightarrow  \int_{-\infty}^{\infty}dq \psi(q)e^{i\gamma q^{3}} e^{ip q } |\hat{q}_{1}=q\rangle 
\nonumber\\
&\qquad\quad\textrm{Output:}
\nonumber\\
&\qquad\qquad\qquad = e^{ip\hat{q}} \hat{V}_{\gamma} |\psi\rangle 
\end{align}
Thus, the cubic phase injection procedure converts a noisy cubic phase state $\mathcal{N}_{P}(|\gamma\rangle\langle\gamma|)$ into a noisy cubic phase gate $\mathcal{N}_{P}(\hat{V}_{\gamma}\bullet \hat{V}_{\gamma}^{\dagger})$, i.e., 
\begin{align}
\hat{\rho} \otimes \mathcal{N}_{P}(|\gamma\rangle\langle\gamma|) \xrightarrow{\textrm{CPS injection}} \mathcal{N}_{P}(\hat{V}_{\gamma}\hat{\rho} \hat{V}_{\gamma}^{\dagger}). 
\end{align} 
See also Fig.\ \ref{fig:cubic phase state injection}(b).

\subsection{A general set up for cubic phase state distillation}

Finally, we discuss distillation of noisy cubic phase states: We want to convert many noisy cubic phase states into fewer but less noisy cubic phase states. In other words, we aim to develop continuous-variable (CV) analog of magic state distillation introduced in Ref.\ \cite{Bravyi2005}. We present a general setup for cubic phase state distillation in Fig.\ \ref{fig:cubic phase distillation setup}. Our goal is to consume $n$ noisy cubic phase states and distill $k(<n)$ less noisy cubic phase states by using only Gaussian states, operations, and homodyne measurements.    

\begin{figure}[t!]
\centering
\includegraphics[width=5.6in]{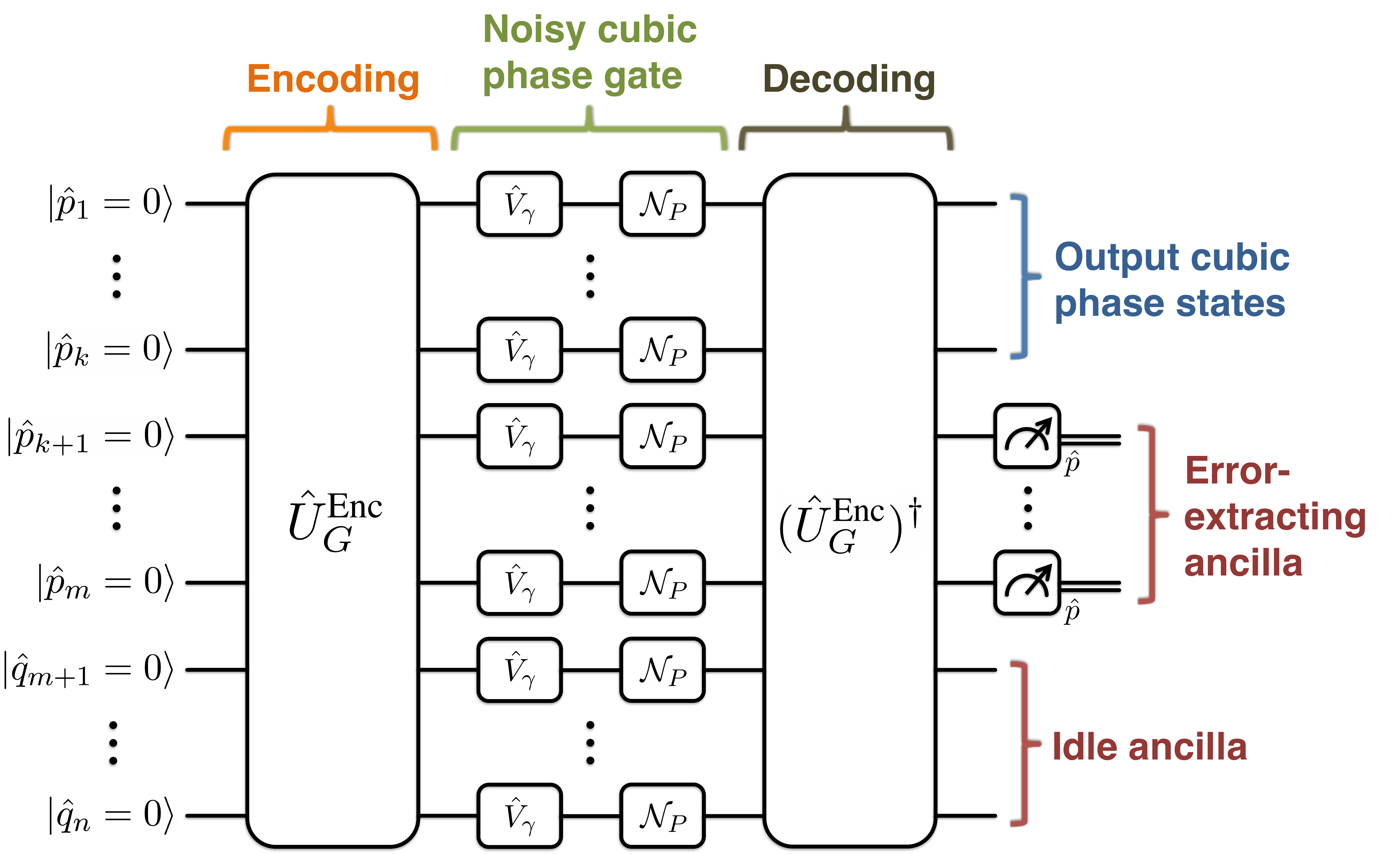}
\caption{A general setup for distilling $k$ less noisy cubic phase states out of $n$ noisier cubic phase states. }
\label{fig:cubic phase distillation setup}
\end{figure}

Let us consider $n$ bosonic modes where the first $m$ modes are initialized to momentum eigenstates $|\hat{p}_{1}=0\rangle,\cdots|\hat{p}_{m}=0\rangle$ and the last $n-m$ modes are initialized to position eigenstates $|\hat{q}_{m+1}=0\rangle,\cdots,|\hat{q}_{n}=0\rangle$. Note that these initial states are Gaussian states. Then, we apply an encoding Gaussian circuit $\hat{U}_{G}^{\textrm{Enc}}$ and then apply noisy cubic phase gates $\mathcal{N}_{P}(\hat{V}_{\gamma}\bullet\hat{V}_{\gamma}^{\dagger})$ transversally to all bosonic modes. Ideally, we want that the transversal cubic phase gates interleaved with the encoding and the inverse of the encoding Gaussian circuit, i.e., $(\hat{U}_{G}^{\textrm{Enc}})^{\dagger}\cdot (\bigotimes_{j=1}^{n}\hat{V}_{\gamma}^{(j)}) \cdot \hat{U}_{G}^{\textrm{Enc}}$ to implement cubic phase gates to the first $k$ modes and identity operations to the next $m-k$ modes which were initialized to momentum eigenstates. That is, we want 
\begin{align}
&(\hat{U}_{G}^{\textrm{Enc}})^{\dagger}\cdot \Big{(}\bigotimes_{j=1}^{n}\hat{V}_{\gamma}^{(j)}\Big{)} \cdot \hat{U}_{G}^{\textrm{Enc}} \Big{(} \bigotimes_{j=1}^{m} |\hat{p}_{j}=0\rangle \Big{)}\Big{(} \bigotimes_{j=m+1}^{n} |\hat{q}_{j}=0\rangle \Big{)} 
\nonumber\\
&\quad = \Big{(} \bigotimes_{j=1}^{k} \hat{V}_{\gamma}^{(j)}|\hat{p}_{j}=0\rangle \Big{)}  \Big{(} \bigotimes_{j=k+1}^{m} |\hat{p}_{j}=0\rangle \Big{)} \Big{(} \bigotimes_{j=m+1}^{n} |\hat{q}_{j}=0\rangle \Big{)}, \label{eq:transversality} 
\end{align}      
where $\hat{V}_{\gamma}^{(j)}$ is the cubic phase gate acting on the $j^{\textrm{th}}$ mode, i.e., $\hat{V}_{\gamma}^{(j)}\equiv \exp[i\gamma\hat{q}_{j}^{3}]$. Note that in this desired case the first $k$ modes support cubic phase states $\hat{V}_{\gamma}^{(j)}|\hat{p}_{j}=0\rangle$, while the next $m-k$ modes are in the momentum eigenstates $\bigotimes_{j=1}^{m} |\hat{p}_{j}=0\rangle$.     

In the next subsection, we discuss under which condition (on the encoding circuit $\hat{U}_{G}^{\textrm{Enc}}$) we have the desired property in Eq.\ \eqref{eq:transversality}. For now, we focus on what to do next: Since the modes $k+1,\cdots,m$ are supposed to be in the momentum eigenstates $\bigotimes_{j=k+1}^{m} |\hat{p}_{j}=0\rangle$ if there were no errors, we can use these modes to detect momentum quadrature noises. In particular, the inverse of the encoding Gaussian circuit $(\hat{U}_{G}^{\textrm{Enc}})^{\dagger}$ transforms the uncorrelated momentum quadrature noises $\bigotimes_{j=1}^{k}\mathcal{N}_{P}^{(j)}$ into correlated momentum quadrature noises $(\mathcal{U}_{G}^{\textrm{Enc}})^{\dagger}\cdot (\bigotimes_{j=1}^{n}\mathcal{N}_{P}^{(j)})\cdot \mathcal{U}_{G}^{\textrm{Enc}}$, where $\mathcal{U}_{G}^{\textrm{Enc}} \equiv \hat{U}_{G}^{\textrm{Enc}}\bullet (\hat{U}_{G}^{\textrm{Enc}})^{\dagger}$ and $\mathcal{N}_{P}^{(j)}$ is the momentum quadrature noise channel acting on the $j^{\textrm{th}}$ mode, i.e., 
\begin{align}
\mathcal{N}_{P}^{(j)}(\hat{\rho}) \equiv \int_{-\infty}^{\infty}dp P(p,p)e^{ip\hat{q}_{j}}\hat{\rho}e^{-ip\hat{q}_{j}}. 
\end{align}
Therefore, the noises in the modes $k+1,\cdots,m$ will be correlated with the noises in the modes $1,\cdots, k$. If the noise correlation is strong enough, we can infer noises in the output cubic phase states in the modes $1,\cdots,k$ indirectly by extracting the noises in the modes $k+1,\cdots,m$ via momentum homodyne measurements. This information can then be used to reduce the noises in the modes $1,\cdots,k$ and ideally we end up with less noisy cubic phase states. 

Note that the modes $m+1,\cdots,n$ are in position eigenstates and therefore cannot be used to detect momentum quadrature noises. Nevertheless these idle modes are essential to ensure the desired property in Eq.\ \eqref{eq:transversality}.

\subsection{Triorthogonality and transversality} 

We now discuss under which condition (on $\hat{U}_{G}^{\textrm{Enc}}$) the desired property in Eq.\ \eqref{eq:transversality} holds. For simplicity, we restrict ourselves to CSS-type encoding circuit, i.e., 
\begin{align}
\begin{bmatrix}
\boldsymbol{\hat{q}'}\\
\boldsymbol{\hat{p}'}
\end{bmatrix} &= \begin{bmatrix}
\boldsymbol{A} & 0 \\
0 & (\boldsymbol{A}^{T})^{-1}
\end{bmatrix}\begin{bmatrix}
\boldsymbol{\hat{q}}\\
\boldsymbol{\hat{p}}
\end{bmatrix},  
\end{align}   
where $\boldsymbol{\hat{q}} \equiv (\hat{q}_{1},\cdots, \hat{q}_{n})^{T}$ and $\boldsymbol{\hat{p}} \equiv (\hat{p}_{1},\cdots, \hat{p}_{n})^{T}$ are the input quadrature operators and $\boldsymbol{\hat{q}'} \equiv ((\hat{U}_{G}^{\textrm{Enc}})^{\dagger}\hat{q}_{1}\hat{U}_{G}^{\textrm{Enc}},\cdots, (\hat{U}_{G}^{\textrm{Enc}})^{\dagger}\hat{q}_{n}\hat{U}_{G}^{\textrm{Enc}})^{T}$ and $\boldsymbol{\hat{p}'} \equiv ((\hat{U}_{G}^{\textrm{Enc}})^{\dagger}\hat{p}_{1}\hat{U}_{G}^{\textrm{Enc}},\cdots, (\hat{U}_{G}^{\textrm{Enc}})^{\dagger}\hat{p}_{n}\hat{U}_{G}^{\textrm{Enc}})^{T}$ are the output quadrature operators. The only constraint on the $n\times n$ matrix $\boldsymbol{A}$ is that it should be invertible. Note that the transformation matrix $(\boldsymbol{A}^{T})^{-1}$ for the momentum quadrature operators was chosen to make the matrix $\boldsymbol{S}=\textrm{diag}( \boldsymbol{A}, (\boldsymbol{A}^{T})^{-1})$ symplectic, i.e., 
\begin{align}
\boldsymbol{S}\boldsymbol{\Omega}\boldsymbol{S}^{T} = \boldsymbol{\Omega}\,\,\, \textrm{where}\,\,\,\boldsymbol{\Omega} \equiv \begin{bmatrix}
0 & \boldsymbol{I}_{n}\\
-\boldsymbol{I}_{n} & 0
\end{bmatrix}, 
\end{align}   
and $\boldsymbol{I}_{n}$ is the $n\times n$ identity matrix. 

Let us now recall the left hand side of the desired relation in Eq.\ \eqref{eq:transversality}. In particular, we need to inspect $(\hat{U}_{G}^{\textrm{Enc}})^{\dagger}\cdot (\bigotimes_{j=1}^{n}\hat{V}_{\gamma}^{(j)}) \cdot \hat{U}_{G}^{\textrm{Enc}}$. Note that 
\begin{align}
(\hat{U}_{G}^{\textrm{Enc}})^{\dagger}\cdot \Big{(}\bigotimes_{j=1}^{n}\hat{V}_{\gamma}^{(j)}\Big{)} \cdot \hat{U}_{G}^{\textrm{Enc}} &= (\hat{U}_{G}^{\textrm{Enc}})^{\dagger} \cdot \exp\Big{[} \sum_{j=1}^{n} i\gamma \hat{q}_{j}^{3}\Big{]} \cdot \hat{U}_{G}^{\textrm{Enc}} 
\nonumber\\
&=  \exp\Big{[} \sum_{a,b,c=1}^{n} i\gamma \Big{(} \sum_{j=1}^{n}A_{j,a}A_{j,b}A_{j,c} \Big{)} \hat{q}_{a}\hat{q}_{b}\hat{q}_{c}\Big{]} . 
\end{align}
If we act this operation on the state $( \bigotimes_{j=1}^{m} |\hat{p}_{j}=0\rangle )( \bigotimes_{j=m+1}^{n} |\hat{q}_{j}=0\rangle )$, we get  
\begin{align}
&(\hat{U}_{G}^{\textrm{Enc}})^{\dagger}\cdot \Big{(}\bigotimes_{j=1}^{n}\hat{V}_{\gamma}^{(j)}\Big{)} \cdot \hat{U}_{G}^{\textrm{Enc}} \Big{(} \bigotimes_{j=1}^{m} |\hat{p}_{j}=0\rangle \Big{)}\Big{(} \bigotimes_{j=m+1}^{n} |\hat{q}_{j}=0\rangle \Big{)} 
\nonumber\\
&\quad= \Big{(} \exp\Big{[} \sum_{a,b,c=1}^{m} i\gamma \Big{(} \sum_{j=1}^{n}A_{j,a}A_{j,b}A_{j,c} \Big{)} \hat{q}_{a}\hat{q}_{b}\hat{q}_{c}\Big{]} \bigotimes_{j=1}^{m} |\hat{p}_{j}=0\rangle \Big{)}\Big{(} \bigotimes_{j=m+1}^{n} |\hat{q}_{j}=0\rangle \Big{)}, 
\end{align}
because $\hat{q}_{a}\hat{q}_{b}\hat{q}_{c}=0$ unless $1\le a,b,c\le m$.

\begin{figure}[t!]
\centering
\includegraphics[width=5.9in]{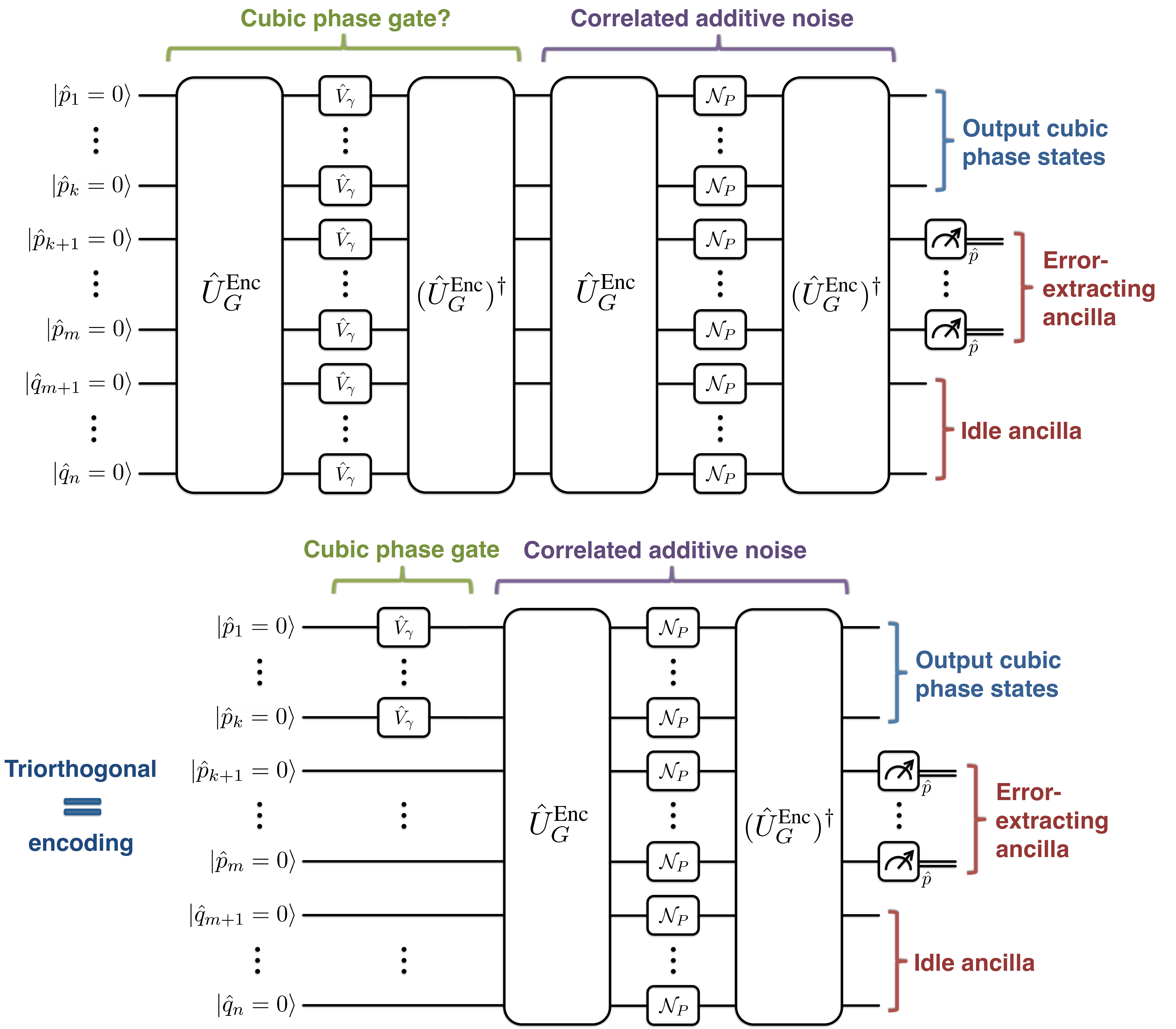}
\caption{Cubic phase state distillation with a $(n,m,k)$-triorthogonal encoding circuit $\hat{U}_{G}^{\textrm{Enc}}$. }
\label{fig:cubic phase state distillation triorthogonality}
\end{figure}

Let us consider a $n\times m$ submatrix
\begin{align}
\boldsymbol{\bar{A}} = \begin{bmatrix}
A_{11} & \cdots & A_{1m}\\
\vdots & \ddots & \vdots \\
A_{n1} & \cdots & A_{nm} 
\end{bmatrix} 
\end{align}   
of the $n\times n$ matrix $\boldsymbol{A}$. We say that the submatrix $\boldsymbol{\bar{A}}$ is $(n,m,k)$-triorthogonal if it is full-rank and its elements satisfy
\begin{align}
\sum_{j=1}^{n}A_{j,a}A_{j,b}A_{j,c} = \begin{cases}
1 & (a,b,c)=(1,1,1),\cdots,(k,k,k) \\
0 & \textrm{otherwise}
\end{cases}, \label{eq:triorthogonality}
\end{align}  
for all $1\le a,b,c\le m$. Therefore, if we choose $\hat{U}_{G}^{\textrm{Enc}}$ such that the submatrix $\boldsymbol{\bar{A}}$ is $(n,m,k)$-triorthogonal, we have the desired property in Eq.\ \eqref{eq:transversality}. 
\begin{align}
&(\hat{U}_{G}^{\textrm{Enc}})^{\dagger}\cdot \Big{(}\bigotimes_{j=1}^{n}\hat{V}_{\gamma}^{(j)}\Big{)} \cdot \hat{U}_{G}^{\textrm{Enc}} \Big{(} \bigotimes_{j=1}^{m} |\hat{p}_{j}=0\rangle \Big{)}\Big{(} \bigotimes_{j=m+1}^{n} |\hat{q}_{j}=0\rangle \Big{)} 
\nonumber\\
&\quad= \Big{(} \exp\Big{[} \sum_{a,b,c=1}^{m} i\gamma \Big{(} \sum_{j=1}^{n}A_{j,a}A_{j,b}A_{j,c} \Big{)} \hat{q}_{a}\hat{q}_{b}\hat{q}_{c}\Big{]} \bigotimes_{j=1}^{m} |\hat{p}_{j}=0\rangle \Big{)}\Big{(} \bigotimes_{j=m+1}^{n} |\hat{q}_{j}=0\rangle \Big{)}
\nonumber\\
&\quad = \underbrace{ \Big{(} \bigotimes_{j=1}^{k} \exp[i\gamma\hat{q}_{j}^{3}]|\hat{p}_{j}=0\rangle \Big{)} }_{ \textrm{Output cubic phase states} } \underbrace{ \Big{(} \bigotimes_{j=k+1}^{m} |\hat{p}_{j}=0\rangle \Big{)} }_{ \textrm{Error-extracting ancilla} } \underbrace{  \Big{(} \bigotimes_{j=m+1}^{n} |\hat{q}_{j}=0\rangle \Big{)}  }_{ \textrm{Idle ancilla} }. 
\end{align} 
Note that the triorthogonality we introduced in Eq.\ \eqref{eq:triorthogonality} is analogous to the triorthogonality introduced in Ref.\ \cite{Bravyi2012} (defined with respect to modulo $2$ arithmetic) for magic state distillation. Consequences of the triorthogonality is summarized graphically in Fig.\ \ref{fig:cubic phase state distillation triorthogonality}.

We now give several examples of triorthogonal matrices inspired by the triorthogonal matrices for magic state distillation introduced in Refs.\ \cite{Bravyi2005,Bravyi2012}. First, by carefully modifying the triorthogonal matrix that corresponds to the 15-qubit Reed-Muller code originally considered in Ref.\ \cite{Bravyi2005}, we find 
\begin{align}
\boldsymbol{\bar{A}}_{\textrm{RM}}^{\textrm{mod }2} = \begin{bmatrix}
1 & 1 & 0 & 0 & 0 \\
1 & 0 & 1 & 0 & 0 \\
1 & 1 & 1 & 0 & 0 \\
1 & 0 & 0 & 1 & 0 \\
1 & 1 & 0 & 1 & 0 \\
1 & 0 & 1 & 1 & 0 \\
1 & 1 & 1 & 1 & 0 \\
1 & 0 & 0 & 0 & 1 \\
1 & 1 & 0 & 0 & 1 \\
1 & 0 & 1 & 0 & 1 \\
1 & 1 & 1 & 0 & 1 \\
1 & 0 & 0 & 1 & 1 \\
1 & 1 & 0 & 1 & 1 \\
1 & 0 & 1 & 1 & 1 \\
1 & 1 & 1 & 1 & 1 
\end{bmatrix} \rightarrow \boldsymbol{\bar{A}}_{\textrm{RM}} \equiv  \begin{bmatrix}
1 & 1 & 0 & 0 & 0 \\
1 & 0 & 1 & 0 & 0 \\
-1 & -1 & -1 & 0 & 0 \\
1 & 0 & 0 & 1 & 0 \\
-1 & -1 & 0 & -1 & 0 \\
-1 & 0 & -1 & -1 & 0 \\
1 & 1 & 1 & 1 & 0 \\
1 & 0 & 0 & 0 & 1 \\
-1 & -1 & 0 & 0 & -1 \\
-1 & 0 & -1 & 0 & -1 \\
1 & 1 & 1 & 0 & 1 \\
-1 & 0 & 0 & -1 & -1 \\
1 & 1 & 0 & 1 & 1 \\
1 & 0 & 1 & 1 & 1 \\
-1 & -1 & -1 & -1 & -1 
\end{bmatrix}. \label{eq:triorthogonal matrix RM}
\end{align}  
One can check that $\boldsymbol{\bar{A}}_{\textrm{RM}}$ is $(15,5,1)$-triorthogonal. Similarly, by modifying the triorthogonal matrix constructed in Ref.\ \cite{Bravyi2012}, we find
\begin{align}
\boldsymbol{\bar{A}}_{\textrm{BH}}^{\textrm{mod }2}(k=2) = \begin{bmatrix}
0 & 0 & 0 & 0 & 1 \\
0 & 0 & 1 & 0 & 1 \\
0 & 0 & 0 & 1 & 1 \\
0 & 0 & 1 & 1 & 1 \\
1 & 1 & 0 & 0 & 1 \\
1 & 1 & 1 & 0 & 1 \\
1 & 1 & 0 & 1 & 1 \\
1 & 1 & 1 & 1 & 1 \\
1 & 0 & 1 & 0 & 0 \\
1 & 0 & 0 & 1 & 0 \\
1 & 0 & 1 & 1 & 0 \\
0 & 1 & 1 & 0 & 0 \\
0 & 1 & 0 & 1 & 0 \\
0 & 1 & 1 & 1 & 0 \\
\end{bmatrix}  \rightarrow \boldsymbol{\bar{A}}_{\textrm{BH}}(k=2) \equiv \begin{bmatrix}
0 & 0 & 0 & 0 & -1 \\
0 & 0 & 1 & 0 & 1 \\
0 & 0 & 0 & 1 & 1 \\
0 & 0 & -1 & -1 & -1 \\
1 & 1 & 0 & 0 & 1 \\
-1 & -1 & -1 & 0 & -1 \\
-1 & -1 & 0 & -1 & -1 \\
1 & 1 & 1 & 1 & 1 \\
1 & 0 & -1 & 0 & 0 \\
1 & 0 & 0 & -1 & 0 \\
-1 & 0 & 1 & 1 & 0 \\
0 & 1 & 1 & 0 & 0 \\
0 & 1 & 0 & 1 & 0 \\
0 & -1 & -1 & -1 & 0 \\
\end{bmatrix}.  
\end{align} 
Similarly, one can check that $\boldsymbol{\bar{A}}_{\textrm{BH}}(k=2)$ is $(14,5,2)$-triorthogonal.

\subsection{Failed attempts on cubic phase distillation} 

Here, we analyze the performance of an error-correcting scheme with a $(n,m,k)$-triorthogonal matrix $\boldsymbol{\bar{A}}$. For simplicity, we restrict ourselves to the $k=1$ case. That is, we aim to consume $n$ noisy cubic phase states and distill one less noisier cubic phase state.

Recall that each cubic phase gate is corrupted by a random displacement channel
\begin{align}
\mathcal{N}_{P}(\hat{\rho}) \equiv \int_{-\infty}^{\infty} dp P(p,p) e^{ip\hat{q}} \hat{\rho} e^{-ip\hat{q}}. 
\end{align}       
In the Heisenberg picture, this channel adds a random shift $\xi_{p}$ to the momentum quadrature, i.e., 
\begin{align}
\hat{q} &\rightarrow \hat{q}'\equiv \hat{q},
\nonumber\\
\hat{p} &\rightarrow \hat{p}'\equiv \hat{p} + \xi_{p},
\end{align}
where the random variable $\xi_{p}$ follows a probability distribution $\textrm{PDF}(\xi_{p})\equiv P(\xi_{p},\xi_{p})$. Similarly, the iid random displacement error $\bigotimes_{j=1}^{n}\mathcal{N}_{P}^{(j)}$ adds a random shift $\xi_{q}^{(j)}$ to the $j^{\textrm{th}}$ mode, where the shifts $\boldsymbol{\xi_{p}}\equiv (\xi_{q}^{(1)},\cdots,\xi_{q}^{(n)})^{T}$ follow the probability distribution $\textrm{PDF}(\xi_{p}^{(1)},\cdots,\xi_{p}^{(n)}) \equiv \bigotimes_{j=1}^{n}P(\xi_{p}^{(j)},\xi_{p}^{(j)})$. 

Let us now consider a specific instance where $\bigotimes_{j=1}^{n}\mathcal{N}_{P}^{(j)}$ generates shifts $\boldsymbol{\xi_{p}} = (\xi_{p}^{(1)},\cdots,\xi_{p}^{(n)})^{T}$. These shifts are described by a displacement operator $\exp[i\sum_{j=1}^{n}\xi_{p}^{(j)}\hat{q}_{j}]$. Note that the displacement error interleaved with the encoding and the inverse of the encoding circuit is given by
\begin{align}
(\hat{U}_{G}^{\textrm{Enc}})^{\dagger} \cdot  \exp\Big{[}i\sum_{j=1}^{n}\xi_{p}^{(j)}\hat{q}_{j}\Big{]} \cdot \hat{U}_{G}^{\textrm{Enc}} &=  \exp\Big{[} \sum_{a=1}^{n} i \Big{(} \sum_{j=1}^{n}A_{ja}\xi_{p}^{(j)} \Big{)} \hat{q}_{a}\Big{]} 
\nonumber\\
&= \exp\Big{[} \sum_{a=1}^{n} i (\boldsymbol{A}^{T}\cdot \boldsymbol{\xi_{p}})_{a} \hat{q}_{a}\Big{]} , 
\end{align}
where we used $(\hat{U}_{G}^{\textrm{Enc}})^{\dagger}\hat{q}_{j}\hat{U}_{G}^{\textrm{Enc}} = \sum_{a=1}^{n}A_{ja}\hat{q}_{a}$. That is, the encoding and the inverse of the encoding circuit converts the noise $\boldsymbol{\xi_{p}}$ into $\boldsymbol{z_{p}} \equiv \boldsymbol{A}^{T}\cdot\boldsymbol{\xi_{p}} = (z_{p}^{(1)},\cdots,z_{p}^{(n)})^{T}$.   

Since the last $n-m$ modes are initialized to the position eigenstates $|\hat{q}_{m+1}=0\rangle,\cdots,|\hat{q}_{n}=0\rangle$, the momentum shifts act trivially on these modes, i.e., $\exp[iz_{p}^{(j)}\hat{q}_{j}]|\hat{q}_{j}=0\rangle =0$ for all $j\in\lbrace m+1,\cdots,n \rbrace$. In other words, the shifts $z_{p}^{(m+1)},\cdots,z_{p}^{(n)}$ do not matter and we only need to keep track of the shifts in the first $m$ modes
\begin{align}
\boldsymbol{\bar{z}_{p}} &= \boldsymbol{\bar{A}}^{T}\cdot \boldsymbol{\xi_{p}} \equiv (z_{p}^{(1)},\cdots,z_{p}^{(m)})^{T},  
\end{align}
where $\boldsymbol{\bar{A}}$ is a $(n,m,1)$-triorthogonal matrix. Note that the shifts $z_{p}^{(1)},\cdots,z_{p}^{(m)}$ are generally correlated. To see this, consider the covariance matrices of the original shift vector $\boldsymbol{\xi_{p}}$ and the reshaped shift vector $\boldsymbol{\bar{z}_{p}}$. Since the original shifts are not mutually correlated, the covariance matrix $\boldsymbol{V_{\xi}}$ of the original shifts is proportional to the $n\times n$ identity matrix, i.e., $\boldsymbol{V_{\xi}} = \sigma^{2}\boldsymbol{I}_{n}$, where $\sigma^{2}$ is the variance of the probability distribution $\textrm{PDF}(\xi_{p}) = P(\xi_{p},\xi_{p})$. Also, the covariance matrix $\boldsymbol{V_{\bar{z}}}$ of the reshaped shifts is given by 
\begin{align}
\boldsymbol{V_{\bar{z}}} &= \boldsymbol{\bar{A}^{T}} \cdot \boldsymbol{V_{\xi}} \cdot \boldsymbol{\bar{A}} = \sigma^{2}\boldsymbol{\bar{A}^{T}} \cdot \boldsymbol{\bar{A}}  , 
\end{align} 
and is generally not proportional to the $m\times m$ identity matrix. 

The goal of the remaining distillation procedure is to extract information about the shift in the first mode $z_{p}^{(1)}$ based on the extracted shifts in the next $m-1$ modes $z_{p}^{(2)},\cdots,z_{p}^{(m)}$. Consider the estimate of the following form: 
\begin{align}
\tilde{z}_{p}^{(1)}  = \sum_{j=2}^{m}c_{j}z_{p}^{(j)}. 
\end{align}  
The constants $c_{j}$ should be chosen such that the variance of the corrected shift $z_{p}^{(1)} - \tilde{z}_{p}^{(1)}$ is minimized. Note that  
\begin{align}
\textrm{Var}(z_{p}^{(1)} - \tilde{z}_{p}^{(1)}) &= \textrm{Cov}(z_{p}^{(1)},z_{p}^{(1)}) - \sum_{a=2}^{m}c_{a}\textrm{Cov}(z_{p}^{(1)},z_{p}^{(a)}) 
\nonumber\\
&\quad - \sum_{a=2}^{m}c_{a}\textrm{Cov}(z_{p}^{(a)},z_{p}^{(1)}) + \sum_{a,b=2}^{m} c_{a}c_{b}\textrm{Cov}(z_{p}^{(a)},z_{p}^{(b)})  
\nonumber\\
&= \boldsymbol{( V_{\bar{z}} )_{\textrm{ul}}}  - \boldsymbol{c}^{T}\cdot  \boldsymbol{( V_{\bar{z}} )_{\textrm{ll}}} -  \boldsymbol{( V_{\bar{z}} )_{\textrm{ur}}}  \cdot \boldsymbol{c} + \boldsymbol{c}^{T}\cdot  \boldsymbol{( V_{\bar{z}} )_{\textrm{lr}}} \cdot   \boldsymbol{c} ,  
\end{align}
where $\boldsymbol{c} \equiv (c_{2},\cdots,c_{m})^{T}$, 
\begin{align}
\boldsymbol{V_{\bar{z}}} &= \begin{bmatrix}
\boldsymbol{( V_{\bar{z}} )_{\textrm{ul}}} & \boldsymbol{( V_{\bar{z}} )_{\textrm{ur}}}\\
\boldsymbol{( V_{\bar{z}} )_{\textrm{ll}}} & \boldsymbol{( V_{\bar{z}} )_{\textrm{lr}}}
\end{bmatrix}, 
\end{align}
and $ \boldsymbol{( V_{\bar{z}} )_{\textrm{ur}}} =  \boldsymbol{( V_{\bar{z}} )}_{\textrm{ll}}^{T}$. Therefore, the optimal constant vector $\boldsymbol{c}^{\star}$ is given by 
\begin{align}
\boldsymbol{c}^{\star} &= \boldsymbol{( V_{\bar{z}} )^{-1}_{\textrm{lr}}} \cdot \boldsymbol{( V_{\bar{z}} )_{\textrm{ll}}} , 
\end{align}
and the optimal output variance $\Sigma^{2}\equiv \textrm{Var}(z_{p}^{(1)} - \tilde{z}_{p}^{(1)})$ is given by
\begin{align}
\Sigma^{2} &= \boldsymbol{( V_{\bar{z}} )_{\textrm{ul}}} - \boldsymbol{( V_{\bar{z}} )^{T}_{\textrm{ll}}} \cdot  \boldsymbol{( V_{\bar{z}} )^{-1}_{\textrm{lr}}} \cdot \boldsymbol{( V_{\bar{z}} )_{\textrm{ll}}}. 
\end{align}

Recall the $(15,5,1)$-triorthogonal matrix $\boldsymbol{\bar{A}}_{\textrm{RM}}$ in Eq.\ \eqref{eq:triorthogonal matrix RM}. The covariance matrix $\boldsymbol{ V_{\bar{z}} }$ is given by 
\begin{align}
\boldsymbol{ V_{\bar{z}} } &= \sigma^{2} \boldsymbol{\bar{A}}^{T}_{\textrm{RM}} \boldsymbol{\bar{A}}_{\textrm{RM}} = \sigma^{2}\begin{bmatrix}
15&8&8&8&8\\
8&8&4&4&4\\
8&4&8&4&4\\
8&4&4&8&4\\
8&4&4&4&8
\end{bmatrix}, 
\end{align}
and thus
\begin{align}
\boldsymbol{( V_{\bar{z}} )_{\textrm{ul}}} = 15\sigma^{2}, \quad \boldsymbol{( V_{\bar{z}} )_{\textrm{ll}}} =\sigma^{2} \begin{bmatrix}
8\\
8\\
8\\
8
\end{bmatrix}, \quad \boldsymbol{( V_{\bar{z}} )_{\textrm{lr}}} =  \sigma^{2} \begin{bmatrix}
8&4&4&4\\
4&8&4&4\\
4&4&8&4\\
4&4&4&8
\end{bmatrix}. 
\end{align}
Unfortunately, the optimal output variance $\Sigma^{2}$ is given by 
\begin{align}
\Sigma^{2} = 15\sigma^{2} -\sigma^{2} \begin{bmatrix}
8&8&8&8
\end{bmatrix} \begin{bmatrix}
8&4&4&4\\
4&8&4&4\\
4&4&8&4\\
4&4&4&8
\end{bmatrix}^{-1}   \begin{bmatrix}
8\\
8\\
8\\
8
\end{bmatrix} = \frac{11}{5}\sigma^{2}
\end{align}
which is strictly larger than the input variance $\sigma^{2}$. That is, the Reed-Muller code that works well for multi-qubit magic state distillation does not work for cubic phase state distillation. Therefore, we should search for another $(n,m,1)$-triorthogonal matrix $\boldsymbol{\bar{A}}$ that can indeed yield $\Sigma^{2} < \sigma^{2}$.  

\subsection{Partial no-go result on cubic phase state distillation}
\label{subsection:Partial no-go result on cubic phase state distillation}

Here, we show that we cannot have a $(n,m,1)$-triorthogonal matrix $\boldsymbol{\bar{A}}$ that yields $\Sigma^{2} < \sigma^{2}$. Suppose that $\boldsymbol{\bar{A}}$ is a $(n,m,1)$-triorthogonal matrix. Consider the transformation 
\begin{align}
\boldsymbol{\bar{A}} \rightarrow \boldsymbol{\bar{A}'} \equiv \boldsymbol{\bar{A}} \begin{bmatrix}
1 & 0 \\
\boldsymbol{\lambda} & \boldsymbol{\Lambda}
\end{bmatrix} , \label{eq:transformation between triorthogonal matrices}
\end{align}
where $\boldsymbol{\lambda}$ is an $(m-1)\times 1$ column vector and $\boldsymbol{\Lambda}$ is an $(m-1)\times (m-1)$ invertible matrix. One can see that the transformed matrix $\boldsymbol{\bar{A}'}$ is also a $(n,m,1)$-triorthogonal matrix if $\boldsymbol{\bar{A}}$ is. Now, we show that the new triorthogonal matrix $\boldsymbol{\bar{A}'}$ performs identically to the old one $\boldsymbol{\bar{A}}$. Recall that $\boldsymbol{V_{\bar{z}}} \equiv \sigma^{2}\boldsymbol{\bar{A}^{T}} \boldsymbol{\bar{A}}$ and consider 
\begin{align}
\boldsymbol{V'_{\bar{z}}} &\equiv \sigma^{2} \boldsymbol{\bar{A}'^{T}} \boldsymbol{\bar{A}'} 
\nonumber\\
&= \begin{bmatrix}
1 & \boldsymbol{\lambda^{T}} \\
0 & \boldsymbol{\Lambda^{T}} 
\end{bmatrix} \boldsymbol{V_{\bar{z}}}  \begin{bmatrix}
1 & 0 \\
\boldsymbol{\lambda} & \boldsymbol{\Lambda}
\end{bmatrix} = \begin{bmatrix}
1 & \boldsymbol{\lambda^{T}} \\
0 & \boldsymbol{\Lambda^{T}} 
\end{bmatrix}
\begin{bmatrix}
\boldsymbol{( V_{\bar{z}} )_{\textrm{ul}}} & \boldsymbol{( V_{\bar{z}} )_{\textrm{ur}}}\\
\boldsymbol{( V_{\bar{z}} )_{\textrm{ll}}} & \boldsymbol{( V_{\bar{z}} )_{\textrm{lr}}}
\end{bmatrix}
  \begin{bmatrix}
1 & 0 \\
\boldsymbol{\lambda} & \boldsymbol{\Lambda}
\end{bmatrix} 
\nonumber\\
&= \begin{bmatrix}
\boldsymbol{( V_{\bar{z}} )_{\textrm{ul}}} + 2\boldsymbol{( V_{\bar{z}} )_{\textrm{ll}}^{T}} \boldsymbol{\lambda} + \boldsymbol{\lambda^{T}} \boldsymbol{( V_{\bar{z}} )_{\textrm{lr}}} \boldsymbol{\lambda} & \big{(} \boldsymbol{( V_{\bar{z}} )_{\textrm{ll}}^{T}} + \boldsymbol{\lambda^{T}}\boldsymbol{( V_{\bar{z}} )_{\textrm{lr}}^{T}}  \big{)}\boldsymbol{\Lambda} \\
\boldsymbol{\Lambda^{T}}\big{(} \boldsymbol{( V_{\bar{z}} )_{\textrm{ll}}} + \boldsymbol{( V_{\bar{z}} )_{\textrm{lr}}} \boldsymbol{\lambda} \big{)}& \boldsymbol{\Lambda^{T}} \boldsymbol{( V_{\bar{z}} )_{\textrm{lr}}} \boldsymbol{\Lambda} 
\end{bmatrix}. 
\end{align}
We used $\boldsymbol{( V_{\bar{z}} )_{\textrm{ur}}} = \boldsymbol{( V_{\bar{z}} )_{\textrm{ll}}^{T}}$ to get the last equality. The output variance $\Sigma'^{2}$ associated with the covariance matrix $\boldsymbol{V'_{\bar{z}}}$ is then given by 
\begin{align}
\Sigma'^{2} &= \boldsymbol{( V_{\bar{z}} )_{\textrm{ul}}} + 2\boldsymbol{( V_{\bar{z}} )_{\textrm{ll}}^{T}} \boldsymbol{\lambda} + \boldsymbol{\lambda^{T}} \boldsymbol{( V_{\bar{z}} )_{\textrm{lr}}} \boldsymbol{\lambda} 
\nonumber\\
&\quad - \Big{(} \boldsymbol{( V_{\bar{z}} )_{\textrm{ll}}^{T}} + \boldsymbol{\lambda^{T}}\boldsymbol{( V_{\bar{z}} )_{\textrm{lr}}^{T}}  \Big{)}\boldsymbol{\Lambda} \Big{(} \boldsymbol{\Lambda^{T}} \boldsymbol{( V_{\bar{z}} )_{\textrm{lr}}} \boldsymbol{\Lambda}  \Big{)}^{-1} \boldsymbol{\Lambda^{T}}\Big{(} \boldsymbol{( V_{\bar{z}} )_{\textrm{ll}}} + \boldsymbol{( V_{\bar{z}} )_{\textrm{lr}}} \boldsymbol{\lambda} \Big{)}
\nonumber\\
&= \boldsymbol{( V_{\bar{z}} )_{\textrm{ul}}} - \boldsymbol{( V_{\bar{z}} )_{\textrm{ll}}^{T}} \boldsymbol{( V_{\bar{z}} )_{\textrm{lr}}^{-1}} \boldsymbol{( V_{\bar{z}} )_{\textrm{ll}} } 
\nonumber\\
&= \Sigma^{2},  
\end{align}
which is the same as the one associated with the original covariance matrix $\boldsymbol{V_{\bar{z}}}$. That is, the new triorthogonal matrix $\boldsymbol{\bar{A}'}$ performs identically to the old one $\boldsymbol{\bar{A}}$. 

Since the performance is invariant under the transformation given in Eq.\ \eqref{eq:transformation between triorthogonal matrices}, we can restrict ourselves to $(n,m,1)$-triorthogonal matrices of the form 
\begin{align}
\boldsymbol{\bar{A}} &= \begin{bmatrix}
\uparrow & & \uparrow \\
\vec{v}_{1} & \cdots & \vec{v}_{m} \\
\downarrow & & \downarrow 
\end{bmatrix}, \,\,\, \textrm{where} \,\,\, \vec{v}_{1}^{T} \vec{v}_{a} =0 \,\,\,\textrm{for all}\,\,\, a\in \lbrace 2,\cdots, m\rbrace, 
\end{align}
without loss of generality. (One can always transform a $(n,m,1)$-triorthogonal matrix into the above form by setting $\boldsymbol{\lambda} = \boldsymbol{( V_{\bar{z}} )_{\textrm{ll}}^{-1}} \boldsymbol{( V_{\bar{z}} )_{\textrm{ll}}}$ to eliminate the off-diagonal blocks in $\boldsymbol{V'_{\bar{z}}}$). In this case, the output variance is simply given by
\begin{align}
\Sigma^{2} = \sigma^{2} \vec{v}_{1}^{T} \vec{v}_{1} = \sigma^{2} \sum_{a=1}^{n} (\vec{v}_{1})_{a}^{2}. 
\end{align}
Meanwhile, due to the $(n,m,1)$-triorthogonality, the vector $\vec{v}_{1}$ should satisfy 
\begin{align}
\sum_{a=1}^{n} (\vec{v}_{1})_{a}^{3 }=1. 
\end{align} 
One can then prove that any vector $\vec{v}_{1}$ that satisfies $\sum_{a=1}^{n} (\vec{v}_{1})_{a}^{3 }=1$ should have $\sum_{a=1}^{n} (\vec{v}_{1})_{a}^{2} \ge 1$. This implies that $\Sigma^{2} \ge \sigma^{2}$ for any $(n,m,1)$-triorthogonal matrix and thus there cannot be a $(n,m,1)$-triorthogonal matrix that yields $\Sigma^{2} < \sigma^{2}$. 

\subsection{Magic variance}

Here, we consider the magic variance of Gaussian states. Recall that we can convert any state $\hat{\rho}$ into a magic-diagonal state $\mathcal{T}(\hat{\rho}) =  \int_{-\infty}^{\infty} dp P(p,p) |\gamma,p\rangle\langle \gamma,p|$, where $P(p,p)$ is given by 
\begin{align}
P(p,p) &= \langle \gamma,p| \hat{\rho} |\gamma,p\rangle. 
\end{align}  
We define the magic mean and magic variance of a state $\hat{\rho}$ as the mean and variance of the magic distribution $P(p,p)$. That is, 
\begin{align}
\textrm{Magic mean} &\equiv \int_{-\infty}^{\infty} dp \cdot p  P(p.p) 
\nonumber\\
&= \int_{-\infty}^{\infty} dp \cdot p  \mathrm{Tr}\Big{[} \hat{\rho}|\gamma,p\rangle\langle \gamma,p| \Big{]}  
\nonumber\\
&= \int_{-\infty}^{\infty} dp \mathrm{Tr}\Big{[} \hat{\rho}e^{i\gamma \hat{q}^{3}} p |\hat{p}=p\rangle \langle \hat{p}=p| e^{-i\gamma\hat{q}^{3} } \Big{]} 
\nonumber\\
&=  \mathrm{Tr}\Big{[} \hat{\rho}e^{i\gamma \hat{q}^{3}} \hat{p} e^{-i\gamma\hat{q}^{3} } \Big{]} 
\nonumber\\
&= \langle \hat{p}-3\gamma\hat{q}^{2} \rangle , 
\end{align}
and 
\begin{align}
\textrm{Magic variance} &\equiv  \int_{-\infty}^{\infty} dp \cdot p^{2}  P(p.p) - (\textrm{magic mean})^{2} =  \big{\langle} (\hat{p}-3\gamma\hat{q}^{2})^{2} \big{\rangle} - \big{\langle} \hat{p}-3\gamma\hat{q}^{2} \big{\rangle}^{2}. 
\end{align}
Apparently, displaced cubic phase states $|\gamma,p\rangle$ have vanishing magic variance. Now, we consider the minimum possible magic variance of single mode Gaussian states. To do so, we need to compute 
\begin{align}
\langle \hat{p}\rangle, \,\,\,  \langle \hat{q}^{2} \rangle,  \,\,\,  \langle \hat{q}^{2}\hat{p}\rangle = \langle  \hat{p}\hat{q}^{2} \rangle^{*}, \,\,\, \langle \hat{q}^{4}\rangle . 
\end{align}

Note that a Gaussian state is fully characterized by its first two moments $\boldsymbol{\bar{x}}$ and $\boldsymbol{V}$. Specifically, the characteristic function of a Gaussian state is given by  
\begin{align}
\chi( \boldsymbol{\xi} ) \equiv \mathrm{Tr}\Big{[} \hat{\rho} \exp[i\boldsymbol{\hat{x}^{T}} \boldsymbol{\Omega} \boldsymbol{\xi} ] \Big{]} = \exp\Big{[} -\frac{1}{2}\boldsymbol{\xi^{T}} (\boldsymbol{\Omega V \Omega^{T} }) \boldsymbol{\xi} -i(\boldsymbol{\Omega\bar{x}})^{T} \boldsymbol{\xi}  \Big{]}. 
\end{align}
We now make these expressions more explicit for single-mode Gaussian states. Note that 
\begin{align}
\exp[i\boldsymbol{\hat{x}^{T}} \boldsymbol{\Omega} \boldsymbol{\xi} ] &= e^{ i \xi_{p}\hat{q} -i\xi_{q}\hat{p} } = e^{ i \xi_{p}\hat{q}} e^{-i\xi_{q}\hat{p}} e^{-\frac{i}{2}\xi_{q}\xi_{p} }, 
\end{align}
and 
\begin{align}
-\frac{1}{2}\boldsymbol{\xi^{T}} (\boldsymbol{\Omega V \Omega^{T} }) \boldsymbol{\xi} -i(\boldsymbol{\Omega\bar{x}})^{T} \boldsymbol{\xi}  &= -\frac{1}{2}\Big{[} V_{pp}\xi_{q}^{2} - 2V_{qp} \xi_{q}\xi_{p} + V_{qq}\xi_{p}^{2} \Big{]} -i\bar{p} \xi_{q} + i\bar{q}\xi_{p}. 
\end{align}
Thus, we have
\begin{alignat}{2}
\frac{\partial \chi}{\partial \xi_{q}} \Big{|}_{\xi_{q},\xi_{p}=0} &= \langle -i\hat{p} \rangle  = -i\bar{p}  &&\rightarrow \langle \hat{p} \rangle = \bar{p}, 
\nonumber\\
\frac{\partial \chi}{\partial \xi_{p}} \Big{|}_{\xi_{q},\xi_{p}=0} &= \langle i\hat{q} \rangle  = i\bar{q} &&\rightarrow \langle \hat{q} \rangle = \bar{q}, 
\nonumber\\
\frac{\partial^{2} \chi}{\partial \xi_{q}^{2} } \Big{|}_{\xi_{q},\xi_{p}=0} &= \langle -\hat{p}^{2} \rangle = -V_{pp} - \bar{p}^{2} \quad && \rightarrow \langle \hat{p}^{2} \rangle = V_{pp}+\bar{p}^{2}, 
\nonumber\\
\frac{\partial^{2} \chi}{\partial \xi_{q} \partial \xi_{p} } \Big{|}_{\xi_{q},\xi_{p}=0} &= \langle \hat{q}\hat{p} \rangle - \frac{i}{2} = V_{qp} + \bar{q}\bar{p}  \quad && \rightarrow \langle \hat{q}\hat{p} \rangle = V_{qp}+\bar{q}\bar{p} + \frac{i}{2},  
\nonumber\\
\frac{\partial^{2} \chi}{\partial \xi_{p}^{2} } \Big{|}_{\xi_{q},\xi_{p}=0} &= \langle -\hat{q}^{2} \rangle = -V_{qq} - \bar{q}^{2} \quad && \rightarrow \langle \hat{q}^{2} \rangle = V_{qq}+\bar{q}^{2}, 
\end{alignat}
and more importantly, 
\begin{alignat}{2}
\frac{\partial^{3} \chi}{ \partial \xi_{q} \partial \xi_{p}^{2} } \Big{|}_{\xi_{q},\xi_{p}=0} &= \langle i\hat{q}^{2}\hat{p} +\frac{1}{2}\hat{q}  \rangle = i\bar{p}V_{qq} +2i\bar{q}V_{qp} +i\bar{q}^{2}\bar{p}  && \rightarrow \langle \hat{q}^{2}\hat{p} \rangle = V_{qq}\bar{p} + 2V_{qp}\bar{q} + \bar{q}^{2}\bar{p} +\frac{i}{2} \bar{q} , 
\nonumber\\
\frac{\partial^{4} \chi}{ \partial \xi_{p}^{4} } \Big{|}_{\xi_{q},\xi_{p}=0} &= \langle \hat{q}^{4}  \rangle = 3V_{qq}^{2} + 6V_{qq}\bar{q}^{2} + \bar{q}^{4}  && \rightarrow \langle \hat{q}^{4} \rangle =   3V_{qq}^{2} + 6V_{qq}\bar{q}^{2} + \bar{q}^{4}. 
\end{alignat}
Equipped with all these, let us now compute the magic variance for single-mode Gaussian states: 
\begin{align}
\textrm{Magic variance} &= \big{\langle} \hat{p}^{2} -3\gamma\hat{q}^{2}\hat{p} -3\gamma\hat{p}\hat{q}^{2} + 9\gamma^{2}\hat{q}^{4} \big{\rangle} - \big{(} \langle \hat{p}\rangle - 3\gamma\langle \hat{q}^{2}\rangle \big{)}^{2} 
\nonumber\\
&= V_{pp}+\bar{p}^{2} -6\gamma ( V_{qq}\bar{p}+2V_{qp}\bar{q} + \bar{q}^{2}\bar{p}  ) + 9\gamma^{2} (3V_{qq}^{2}+6V_{qq}\bar{q}^{2} + \bar{q}^{4})
\nonumber\\
&\quad - \bar{p}^{2} + 6\gamma(V_{qq}+\bar{q}^{2})\bar{p} - 9\gamma^{2} (V_{qq}^{2}+2V_{qq}\bar{q}^{2} + \bar{q}^{4}) 
\nonumber\\
&= V_{pp} + 18 \gamma^{2}V_{qq}^{2} + 36\gamma^{2}V_{qq} \Big{[} \Big{(}\bar{q} - \frac{ V_{qp}}{6\gamma V_{qq}}  \Big{)}^{2} - \frac{V_{qp}^{2}}{36\gamma^{2} V_{qq}^{2}}  \Big{]}
\nonumber\\
&\ge V_{pp} - \frac{V_{qp}^{2}}{V_{qq}} + 18\gamma^{2}V_{qq}^{2}. 
\end{align}
Note that Heisenberg uncertainty principle implies $V_{qq}V_{pp} \ge V_{qp}^{2} + \frac{1}{4}$. Therefore, we have 
\begin{align}
\textrm{Magic variance} &\ge \frac{1}{4V_{qq}} + 18\gamma^{2}V_{qq}^{2}  \ge 3 \Big{(} \frac{1}{8V_{qq}} \frac{1}{8V_{qq}} 18\gamma^{2}V_{qq}^{2} \Big{)}^{\frac{1}{3}} = \frac{3}{2} \Big{(} \frac{3\gamma}{2} \Big{)}^{\frac{2}{3}}. 
\end{align}
Note that the inequality is saturated when 
\begin{align}
\bar{q} &= \frac{ V_{qp}}{6\gamma V_{qq}} ,  \quad V_{qp}=0, \quad V_{qq} = \Big{(} \frac{1}{72\gamma^{2}} \Big{)}^{\frac{1}{3}}. \label{eq:magic variance saturation}
\end{align}
Thus, it is impossible to have a cubic phase state distillation scheme that works for all $\sigma^{2} \le \sigma_{c}^{2}$ for some $\sigma_{c}^{2} \ge  \frac{2}{3}( \frac{3\gamma}{2} )^{\frac{2}{3}}$. This is because then we can start from the Gaussian state defined by the parameters in Eq.\ \eqref{eq:magic variance saturation} and reach the ideal cubic phase state which is non-Gaussian. However, it is still technically possible that there is a cubic phase distillation scheme that works if $\sigma^{2} < \frac{2}{3}( \frac{3\gamma}{2} )^{\frac{2}{3}}$. 

\section{Open questions}

The biggest open question is whether there exists a family of noisy non-Gaussian states that can be distilled indefinitely to a noiseless non-Gaussian state by using only Gaussian operations. In Ref.\ \cite{Takagi2018}, it was shown that noisy cubic phase states can be distilled to a less noisy cubic phase state. However, the distillation scheme presented in Ref.\ \cite{Takagi2018} does not reduce the noise in cubic phase states to an arbitrarily small value. The results given in Subsection \ref{subsection:Partial no-go result on cubic phase state distillation} show that the direct translation of the conventional magic state distillation schemes to bosonic systems do not work either. Thus, it will be interesting to look for (with optimism) a completely different approach towards cubic phase state distillation and see if cubic phase state distillation is possible. Conversely, it will also be interesting to see if (with pessimism) a resource-theoretic framework \cite{Chitambar2019} can be used to provide an argument against the possibility of cubic phase state distillation, or more generally distillation of any non-Gaussian resources (e.g., GKP states and Fock states) by using only Gaussian operations.







\appendix

\chapter{Gaussian states, unitaries, and channels}
\label{appendix:Gaussian states, unitaries, and channels}

\section{Bosonic modes}

Let $\mathcal{H}$ denote an infinite-dimensional Hilbert space. Quantum states of $N$ bosonic modes are in a tensor product of $N$ such Hilbert spaces $\mathcal{H}^{\otimes N}$. Each bosonic mode is associated with an annihilation and a creation operator $\hat{a}_{k}$ and $\hat{a}^{\dagger}_{k}$, satisfying the bosonic communication relation 
\begin{align}
[\hat{a}_{i},\hat{a}_{j}] = [\hat{a}^{\dagger}_{i},\hat{a}^{\dagger}_{j}] = 0, \quad [\hat{a}_{i},\hat{a}^{\dagger}_{j}] =\delta_{ij}, 
\end{align}
where $\delta_{ij}$ is the Kronecker delta function and $[\hat{A},\hat{B}]\equiv \hat{A}\hat{B}-\hat{B}\hat{A}$. The Hilbert space $\mathcal{H}$ is spanned by the eigenstates of the excitation number operator $\hat{n}\equiv \hat{a}^{\dagger}\hat{a}$. That is, $\mathcal{H}=\textrm{span}\lbrace |n\rangle \rbrace_{n=0}^{\infty}$ where $\hat{n}|n\rangle= n|n\rangle$. In the excitation number basis (or the Fock basis), the annihilation and creation operators are given by 
\begin{align}
\hat{a} = \sum_{n=1}^{\infty}\sqrt{n}|n-1\rangle\langle n|, \quad \hat{a}^{\dagger} = \sum_{n=0}^{\infty}\sqrt{n+1}|n+1\rangle\langle n|. 
\end{align}

A coherent state $|\alpha\rangle$ is an eigenstate of the annihilation operator $\hat{a}$ with a complex eigenvalue $\alpha$, i.e., $\hat{a}|\alpha\rangle = \alpha|\alpha\rangle$. In the Fock basis, $|\alpha\rangle$ is given by $|\alpha\rangle = e^{-\frac{1}{2}|\alpha|^{2}}\sum_{n=0}^{\infty}\frac{\alpha^{n}}{\sqrt{n!}}|n\rangle$. 
Note that the vacuum state $|0\rangle$ is a special case of coherent states with $\alpha=0$. The displacement operator $\hat{D}(\alpha)$ is defined as $\hat{D}(\alpha) \equiv \exp(\alpha\hat{a}^{\dagger}-\alpha^{*}\hat{a})$, and a coherent state $|\alpha\rangle$ can be understood as a displaced vacuum state:
\begin{equation}
|\alpha\rangle = \hat{D}(\alpha)|0\rangle.  \label{eq:coherent state equals to displacec vacuum}
\end{equation}

Quadrature operators are defined as 
\begin{align}
\hat{q}_{k}\equiv \frac{1}{\sqrt{2}}(\hat{a}_{k}+\hat{a}_{k}^{\dagger}), \quad \hat{p}_{k}\equiv \frac{i}{\sqrt{2}}(\hat{a}_{k}^{\dagger}-\hat{a}_{k}), 
\end{align}
and are called position and momentum operator, respectively. Note that we follow the same convention as used for GKP codes \cite{Gottesman2001,Albert2018} which differs from Ref.\ \cite{Weedbrook2012} by a factor of $\sqrt{2}$ in the definition of $\hat{q}_{k}$ and $\hat{p}_{k}$. Define $\boldsymbol{\hat{x}}\equiv  (\hat{q}_{1},\hat{p}_{1},\cdots,\hat{q}_{N},\hat{p}_{N})^{T}$. 
Then, the bosonic commutation relation reads $[\boldsymbol{\hat{x}}_{i},\boldsymbol{\hat{x}}_{j}]=i \boldsymbol{\Omega}_{ij}$, where $\boldsymbol{\Omega}$ is defined as 
\begin{equation}
\boldsymbol{\Omega} \equiv \begin{bmatrix}
\boldsymbol{\omega} &  &  \\
& \ddots & \\
& & \boldsymbol{\omega}
\end{bmatrix} \,\,\,  \textrm{and}\,\,\, \boldsymbol{\omega} \equiv 
\begin{bmatrix}
0 & 1    \\
-1 & 0
\end{bmatrix}. 
\end{equation}
Eigenvalue spectrum of the quadrature operators are continuous, $\hat{q}|q\rangle = q|q\rangle, \quad \hat{p}|p\rangle = p|p\rangle$, where $q,p\in(-\infty,\infty)$. The position and momentum eigenstates $|q\rangle$ and $|p\rangle$ are normalized by the Dirac delta function, i.e., $\langle q|q'\rangle = \delta(q-q')$ and $\langle p|p'\rangle = \delta(p-p')$. Also, $|q\rangle$ and $|p\rangle$ are related by a Fourier transformation $|q\rangle = \frac{1}{\sqrt{2\pi}}\int_{-\infty}^{\infty}dp e^{-i q p}|p\rangle$, $|p\rangle = \frac{1}{\sqrt{2\pi}}\int_{-\infty}^{\infty}dp e^{i q p}|q\rangle$. Note that the position and momentum eigenstates are transformed via the displacement operator as follows: 
\begin{alignat}{2} 
\hat{D}(\xi_{1}/\sqrt{2})|q\rangle &= e^{-i\xi_{1}\hat{p}}|q\rangle\, &&= |q+\xi_{1}\rangle , 
\nonumber\\
\hat{D}(i\xi_{2}/\sqrt{2})|p\rangle &= e^{i\xi_{2}\hat{q}}|p\rangle &&= |p+\xi_{2}\rangle . 
\end{alignat}

\section{Gaussian states}

Let $\mathcal{L}(\mathcal{H})$ be the space of linear operators on the Hilbert space $\mathcal{H}$. A general quantum state (pure or mixed) is described by a density operator $\hat{\rho}\in\mathcal{D}(\mathcal{H})$, where $\mathcal{D}(\mathcal{H})\equiv \lbrace \hat{\rho}\in\mathcal{L}(\mathcal{H})\, |\, \hat{\rho}^{\dagger}=\hat{\rho}\succeq 0, \mathrm{Tr}[\hat{\rho}]=1\rbrace$. The expectation value of an observable $\hat{E}$ of a state $\hat{\rho}$ is given by $\langle \hat{E} \rangle = \mathrm{Tr}[\hat{\rho}\hat{E}]$. The Wigner characteristic function $\chi(\boldsymbol{\xi})$ is defined as 
\begin{align}
\chi(\boldsymbol{\xi})\equiv \mathrm{Tr}[\hat{\rho}\exp(i\boldsymbol{\hat{x}^{T}\Omega\xi})], 
\end{align}
where $\hat{\rho}\in\mathcal{D}(\mathcal{H}^{\otimes N})$ and $\boldsymbol{\xi} = (\xi_{1},\cdots,\xi_{2N})^{T}$. The Weyl operator $\exp(i\boldsymbol{\hat{x}^{T}\Omega \xi})$ is of the form of a displacement operator and satisfies the orthogonality relation
\begin{align}
\mathrm{Tr}[\exp(-i\boldsymbol{\hat{x}^{T}\Omega \xi})\exp(i\boldsymbol{\hat{x}^{T}\Omega \xi'})] = (2\pi)^{N}\delta(\boldsymbol{\xi}-\boldsymbol{\xi'}). 
\end{align}
The Wigner characteristic function $\chi(\boldsymbol{\xi})$ is in one-to-one correspondence with a state $\hat{\rho}$ and the inverse function is explicitly given by
\begin{align}
\hat{\rho} = \frac{1}{(2\pi)^{N}}\int d^{2N}\boldsymbol{\xi} \chi(\boldsymbol{\xi})  \exp(-i \boldsymbol{\hat{x}^{T}\Omega\xi}). 
\end{align}
The Wigner function $W(\boldsymbol{x})$ is the Fourier transformation of $\chi(\boldsymbol{\xi})$, i.e., 
\begin{equation}
W(\boldsymbol{x}) = \frac{1}{(2\pi)^{2N}}\int d^{2N}\boldsymbol{\xi} \chi(\boldsymbol{\xi})  \exp(-i \boldsymbol{x^{T}\Omega \xi}), \label{eq:definition of Wigner function}
\end{equation}  
where $\boldsymbol{x} = (x_{1},\cdots,x_{N})^{T}$ is the eigenvalue of the quadrature operator $\boldsymbol{\hat{x}}$. 

A quantum state $\hat{\rho}$ is called a Gaussian state if its Wigner characteristic function and Wigner function are Gaussian \cite{Weedbrook2012}:
\begin{align} 
\chi(\boldsymbol{\xi}) &= \exp\big{[} -\frac{1}{2}\boldsymbol{\xi^{T}(\Omega V \Omega^{T}) \xi} -i \boldsymbol{(\Omega \bar{x}) \xi} \big{]},  
\nonumber\\
W(\boldsymbol{x}) &= \frac{\exp\big{[} -\frac{1}{2}(\boldsymbol{x}-\boldsymbol{\bar{x}})^{T}\boldsymbol{V^{-1}}(\boldsymbol{x}-\boldsymbol{\bar{x}})\big{]}}{(2\pi)^{N}\sqrt{\textrm{det}\boldsymbol{V}}}. \label{eq:Wigner characteristic function for Gaussian states}
\end{align}
Here, $\boldsymbol{\bar{x}}$ and $\boldsymbol{V}$ are the first and second moments of the state $\hat{\rho}$: 
\begin{equation}
\boldsymbol{\bar{x}} \equiv  \langle \boldsymbol{\hat{x}} \rangle  = \mathrm{Tr}[\hat{\rho}\boldsymbol{\hat{x}}], \quad \boldsymbol{V}_{ij} \equiv \frac{1}{2}\langle \lbrace \boldsymbol{\hat{x}}_{i}-\boldsymbol{\bar{x}}_{i} , \boldsymbol{\hat{x}}_{j}-\boldsymbol{\bar{x}}_{j}  \rbrace \rangle , 
\end{equation} 
where $\lbrace \hat{A},\hat{B}\rbrace\equiv \hat{A}\hat{B}+\hat{B}\hat{A}$. Thus, a Gaussian state is fully characterized by its first two moments, i.e., $\hat{\rho}=\hat{\rho}_{G}(\boldsymbol{\bar{x}},\boldsymbol{V})$. The Heisenberg uncertainty relation reads 
\begin{align}
\boldsymbol{V}+\frac{i}{2}\boldsymbol{\Omega} \succeq 0
\end{align}
and implies $V(\hat{q}_{k})V(\hat{p}_{k})\ge \frac{1}{4}$ for all $k\in\lbrace 1,\cdots,N \rbrace$, where $V(\hat{x}_{i}) \equiv V_{ii}$. 

The vacuum state $|0\rangle\langle 0|$ is the simplest example of a one-mode Gaussian state with $\boldsymbol{\bar{x}}=\boldsymbol{0}$ and $\boldsymbol{V} = \frac{\boldsymbol{I}_{2}}{2}$, where $\boldsymbol{I}_{n}$ is defined as the $n\times n$ identity matrix. A coherent state $|\alpha\rangle\langle\alpha|$ is also a Gaussian state: $|\alpha\rangle\langle\alpha| = \hat{\rho}_{G}(\boldsymbol{\bar{x}}_{\alpha},\frac{\boldsymbol{I}_{2}}{2})$ with $\boldsymbol{\bar{x}}_{\alpha}\equiv \sqrt{2}(\alpha_{R},\alpha_{I})^{T}$ and $\alpha= \alpha_{R}+i\alpha_{I}$. Coherent states (including the vacuum state) saturate the uncertainty relation and thus have the minimum uncertainty. A thermal state is an example of a Gaussian mixed state and is given by
\begin{equation}
\hat{\tau}(\nth) \equiv \sum_{n=0}^{\infty} \frac{(\nth)^{n}}{(\nth+1)^{n+1}} |n\rangle\langle n| =\hat{\rho}_{G}\Big{(}\boldsymbol{0},\Big{(}\nth+\frac{1}{2}\Big{)}\boldsymbol{I}_{2}\Big{)} ,   \label{eq:thermal state definition and Gaussian specification} 
\end{equation} 
in the Fock basis. Here, $\nth$ is the average photon number, i.e., $\nth = \mathrm{Tr}[\hat{\rho}_{\nthtiny}\hat{n}]$. The quantum von Neumann entropy of a state $\hat{\rho}$ is defined as $H(\hat{\rho})\equiv -\mathrm{Tr}[\hat{\rho}\log_{2}\hat{\rho}]$. The entropy of a thermal state $\hat{\tau}(\nth)$ is given by 
\begin{equation}
H(\hat{\rho}_{\nthtiny}) = g(\nth), \label{eq:thermal state entropy}  
\end{equation}
where $g(x) \equiv (x +1) \log_{2}(x+1)-x\log_{2} x$. Since a thermal state is a mixed state, we have $H(\hat{\tau}(\nth))\ge 0$ where the equality holds only when the state is the vacuum, i.e., $\nth=0$.  

\section{Gaussian unitaries}

A unitary operation that maps a Gaussian state to another Gaussian state is called a Gaussian unitary operation. A Gaussian unitary operation is generated by a second-order polynomial of $\boldsymbol{\hat{a}} = (\hat{a}_{1},\cdots,\hat{a}_{N})^{T}$ and $\boldsymbol{\hat{a}^{\dagger}} = (\hat{a}_{1}^{\dagger},\cdots,\hat{a}_{N}^{\dagger})^{T}$, i.e., $\hat{U}_{G} = \exp(-i\hat{H})$ with $\hat{H} = i (\boldsymbol{\alpha^{T}\hat{a}^{\dagger}} + \boldsymbol{\hat{a}^{\dagger}F\hat{a}}+ \boldsymbol{\hat{a}^{\dagger} G \hat{a}^{\dagger} }) + \textrm{h.c.}$, where $\boldsymbol{\alpha^{T}} = (\alpha_{1},\cdots,\alpha_{N})$ and $\boldsymbol{F},\boldsymbol{G}$ are $N\times N$ complex matrices. In the Heisenberg picture, the annihilation operator $\boldsymbol{\hat{a}}$ is transformed into $\hat{U}_{G}^{\dagger} \boldsymbol{\hat{a}} \hat{U}_{G}  = \boldsymbol{A \hat{a}} + \boldsymbol{B \hat{a}^{\dagger}} +\boldsymbol{\alpha}$, where $N\times N$ complex matrices $\boldsymbol{A},\boldsymbol{B}$ (determined by $\boldsymbol{F},\boldsymbol{G}$) satisfy $\boldsymbol{AB^{T}} = \boldsymbol{BA^{T}}$ and $\boldsymbol{A A^{\dagger}}=\boldsymbol{B B^{\dagger}}+\boldsymbol{I}_{N}$. In terms of the quadrature operators, the transformation reads 
\begin{align}
\boldsymbol{x}\rightarrow \hat{U}_{G}^{\dagger}\boldsymbol{x}\hat{U}_{G} = \boldsymbol{Sx}+ \boldsymbol{d}, 
\end{align}
where $\boldsymbol{d} = (d_{1},\cdots,d_{2N})^{T}=\sqrt{2}(\alpha_{1}^{R},\alpha_{1}^{I},\cdots,\alpha_{N}^{R},\alpha_{N}^{I})^{T}$ and the $2N\times 2N$ matrix $\boldsymbol{S}$ is symplectic: 
\begin{equation}
\boldsymbol{S \Omega S^{T}}  =\boldsymbol{\Omega}. \label{eq:symplectic matrix defining property}
\end{equation}
A Gaussian unitary operation is thus fully characterized by $\boldsymbol{S},\boldsymbol{d}$, and under $\hat{U}_{\boldsymbol{S},\boldsymbol{d}}$ the first two moments of a Gaussian state $\hat{\rho}_{G}(\boldsymbol{\bar{x}},\boldsymbol{V})$ are transformed as 
\begin{equation}
\boldsymbol{\bar{x}} \rightarrow \boldsymbol{S\bar{x}}+\boldsymbol{d}, \quad \boldsymbol{V}\rightarrow \boldsymbol{S V S^{T}}.  \label{eq:transformation of moments of quadrature by Gaussian unitary}
\end{equation}

The displacement operator $\hat{D}(\alpha)$ is a one-mode Gaussian unitary operation with $\boldsymbol{\alpha} = \alpha$ and $\boldsymbol{F}=\boldsymbol{G}=0$, yielding $\boldsymbol{A}=1,\boldsymbol{B}=0$ and $\boldsymbol{S}=\boldsymbol{I}_{2}$, $\boldsymbol{d}=\sqrt{2}(\alpha_{R},\alpha_{I})^{T}$. The squeezing operator $\hat{S}(r)\equiv \exp(\frac{r}{2}(\hat{a}^{2}-\hat{a}^{\dagger 2}) )$ is a one-mode Gaussian unitary operation and transforms quadrature operators by $\hat{q}\rightarrow e^{-r}\hat{q}$ and $\hat{p}\rightarrow e^{r}\hat{p}$, i.e., $\boldsymbol{S}=\textrm{diag}(e^{-r},e^{r})$. A quadrature eigenstate can be understood as an infinitely squeezed state: For example, $|\hat{q}=0\rangle \propto \lim_{r \rightarrow+\infty} \hat{S}(r)|0\rangle$ and $|\hat{p}=0\rangle \propto \lim_{r \rightarrow-\infty} \hat{S}(r)|0\rangle$, where $|0\rangle$ is the vacuum state. The phase rotation operator is defined as $\hat{U}(\theta)\equiv \exp(i\theta\hat{a}^{\dagger}\hat{a})$. Under the phase rotation, quadrature operators are transformed as 
\begin{equation}
\boldsymbol{\hat{x}} \rightarrow \boldsymbol{R}(\theta)\boldsymbol{\hat{x}} \,\,\, \textrm{where} \,\,\,  \boldsymbol{R}(\theta) \equiv \begin{bmatrix}
\cos\theta & -\sin\theta    \\
\sin\theta & \cos\theta
\end{bmatrix}, \label{eq:symplectic transformation phase rotation}
\end{equation}
yielding, e.g., $\hat{U}(\theta)|\alpha\rangle = |\alpha e^{i\theta}\rangle$. 

The beam splitter unitary is a two-mode Gaussian unitary operation generated by a Hamiltonian of the form $\hat{H}\propto i(\hat{a}_{1}^{\dagger}\hat{a}_{2}-\hat{a}_{1}\hat{a}_{2}^{\dagger})$. The beam splitter unitary $\hat{B}(\eta)$ transforms the annihilation operators by $\hat{a}_{1}\rightarrow \sqrt{\eta}\hat{a}_{1}+\sqrt{1-\eta}\hat{a}_{2}$ and $\hat{a}_{2}\rightarrow -\sqrt{1-\eta}\hat{a}_{1}+\sqrt{\eta}\hat{a}_{2}$, where $\eta\in[0,1]$ is called the transmissivity. In terms of the quadrature operator $\boldsymbol{\hat{x}}=(\hat{q}_{1},\hat{p}_{1},\hat{q}_{2},\hat{p}_{2})^{T}$, the transformation reads
\begin{equation}
\boldsymbol{\hat{x}}\rightarrow \boldsymbol{B}(\eta) \boldsymbol{\hat{x}} \,\,\,\textrm{where}\,\,\,  \boldsymbol{B}(\eta)\equiv \begin{bmatrix}
\sqrt{\eta}\boldsymbol{I}_{2} & \sqrt{1-\eta}\boldsymbol{I}_{2}   \\
-\sqrt{1-\eta}\boldsymbol{I}_{2} & \sqrt{\eta}\boldsymbol{I}_{2}
\end{bmatrix}. \label{eq:symplectic transformation beam splitter}
\end{equation}
Another example of the two-mode Gaussian operation is the two-mode squeezing, generated by $\hat{H}\propto i(\hat{a}_{1}\hat{a}_{2}-\hat{a}_{1}^{\dagger}\hat{a}_{2}^{\dagger})$. Under the two-mode squeezing $\hat{S}_{2}(G)$, the annihilation operators are transformed as $\hat{a}_{1}\rightarrow \sqrt{G}\hat{a}_{1}+\sqrt{G-1}\hat{a}_{2}^{\dagger}$ and $\hat{a}_{2}\rightarrow \sqrt{G-1}\hat{a}_{1}^{\dagger}+\sqrt{G}\hat{a}_{2}$, where $G\ge 1$ is the gain of the two-mode squeezing operation. Under the two-mode squeezing $\hat{S}_{2}(G)$, the quadrature operators are transformed as 
\begin{equation}
\boldsymbol{\hat{x}}\rightarrow \boldsymbol{S_{2}}(G) \boldsymbol{\hat{x}} \,\,\,\textrm{where}\,\,\,  \boldsymbol{S_{2}}(G)\equiv \begin{bmatrix}
\sqrt{G}\boldsymbol{I}_{2} & \sqrt{G-1}\boldsymbol{Z}_{2}   \\
\sqrt{G-1}\boldsymbol{Z}_{2} & \sqrt{G}\boldsymbol{I}_{2}
\end{bmatrix},  \label{eq:symplectic transformation two mode squeezer}
\end{equation}
where $\boldsymbol{Z}_{2}\equiv \textrm{diag}(1,-1)$. 

\section{Gaussian channels}
\label{section:Gaussian channels}

A quantum channel $\mathcal{N} : \mathcal{D}(\mathcal{H}^{\otimes N})\rightarrow \mathcal{D}(\mathcal{H}^{\otimes N})$ maps a quantum state $\hat{\rho}\in\mathcal{D}(\mathcal{H}^{\otimes N})$ to another state in $\mathcal{D}(\mathcal{H}^{\otimes N})$ via a completely positive and trace-preserving (CPTP) map \cite{Choi1975}. Gaussian channels map a Gaussian state to another Gaussian state and can be simulated by 
\begin{align}
\mathcal{N}(\hat{\rho}) = \mathrm{Tr}_{E}[\hat{U}_{G}(\hat{\rho}\otimes \hat{\rho}_{E})\hat{U}_{G}^{\dagger}]. 
\end{align}
Here, $\hat{U}_{G}$ is a Gaussian unitary operation on the system plus the environmental modes, $\hat{\rho}_{E}$ is a Gaussian state, and $\mathrm{Tr}_{E}$ is the partial trace with respect to the environmental mode. Let $\boldsymbol{\hat{X}^{T}}=(\boldsymbol{\hat{x}^{T}},\boldsymbol{\hat{y}^{T}})$ be a collection of quadrature operators of the system mode $\boldsymbol{\hat{x}}$ and the environmental mode $\boldsymbol{\hat{y}}$, and assume that the initial system and environmental states are given by $\hat{\rho}_{G}(\boldsymbol{\bar{x}},\boldsymbol{V_{x}})$ and $\hat{\rho}_{G}(\boldsymbol{\bar{y}},\boldsymbol{V_{y}})$, respectively. Here, $\boldsymbol{\bar{x}}$, $\boldsymbol{\bar{y}}$ are the first moments and $\boldsymbol{V_{x}}$, $\boldsymbol{V_{y}}$ are the second moments of the system and the environment. If the Gaussian unitary operation $\hat{U}_{G}$ acting on the joint system is characterized by
\begin{equation}
\boldsymbol{S} = \begin{bmatrix}
\boldsymbol{S_{xx}} & \boldsymbol{S_{xy}}   \\
\boldsymbol{S_{yx}} & \boldsymbol{S_{yy}}
\end{bmatrix} 
\,\,\, \textrm{and} \,\,\,
\boldsymbol{D}= \begin{bmatrix}
\boldsymbol{d_{x}}    \\
\boldsymbol{d_{y}}
\end{bmatrix},  \label{eq:Gaussian unitary characterization for Gaussian channels}
\end{equation}  
the first two moments of the system mode are transformed as $\boldsymbol{\bar{x}}\rightarrow \boldsymbol{S_{xx}}\boldsymbol{\bar{x}}+\boldsymbol{S_{xy}}\boldsymbol{\bar{y}}+\boldsymbol{d_{x}}$ and $\boldsymbol{V_{x}}\rightarrow \boldsymbol{S_{xx}V_{x}S^{T}_{xx}}+\boldsymbol{S_{xy}V_{y}S^{T}_{xy}}$, as can be derived by specializing Eq.\ \eqref{eq:transformation of moments of quadrature by Gaussian unitary} to Eq.\ \eqref{eq:Gaussian unitary characterization for Gaussian channels}. After tracing out the environment, the resulting effective Gaussian channel for the system is characterized by 
\begin{equation}
\boldsymbol{\bar{x}}\rightarrow \boldsymbol{T\bar{x}} + \boldsymbol{d}, \quad \boldsymbol{V_{x}}\rightarrow \boldsymbol{TV_{x}T^{T}}+\boldsymbol{N}, 
\end{equation}
where 
\begin{align}
\boldsymbol{T}&=\boldsymbol{S_{xx}}, 
\nonumber\\
\boldsymbol{N}&=\boldsymbol{S_{xy}V_{y}S^{T}_{xy}}, 
\nonumber\\
\boldsymbol{d}&=\boldsymbol{S_{xy}\bar{y}}+\boldsymbol{d_{x}}. \label{eq:characterization of a Gaussian channel from the underlying Gaussian unitary}
\end{align}
Thus in general, a Gaussian channel is fully characterized by $\boldsymbol{T}$, $\boldsymbol{N}$, and $\boldsymbol{d}$. Below, we introduce three important examples of Gaussian channels, namely, Gaussian thermal-loss channels, noisy amplification channels, and additive Gaussian noise channels.  

\begin{definition}[Gaussian thermal-loss channels]
Let $\hat{B}(\eta)$ be a beam splitter unitary with transmissivity $\eta\in[0,1]$, acting on the modes $1$ and $2$. Then, the Gaussian thermal-loss channel is defined as 
\begin{align}
\mathcal{N}[\eta,\nth](\hat{\rho}_{1}) \equiv \mathrm{Tr}_{2}[\hat{B}(\eta)(\hat{\rho}_{1}\otimes \hat{\tau}(\nth))\hat{B}^{\dagger}(\eta)], 
\end{align}
where $\mathrm{Tr}_{2}$ is the partial trace with respect to the mode $2$ which is initially in a thermal state $\hat{\tau}(\nth)$ with an average photon number $\nth$. The Gaussian thermal-loss channel $\mathcal{N}[\eta,\nth]$ is characterized by 
\begin{align}
\mathcal{N}[\eta,\nth] &\leftrightarrow (\boldsymbol{T},\boldsymbol{N},\boldsymbol{d}) = \Big{(}\sqrt{\eta}\boldsymbol{I}_{2},(1-\eta)\Big{(}\nth+\frac{1}{2}\Big{)}\boldsymbol{I}_{2},0\Big{)}, 
\end{align}
The above characterization can be derived by specializing Eq.\ \eqref{eq:characterization of a Gaussian channel from the underlying Gaussian unitary} to the case of beam splitter unitaries and the initial thermal environmental state. 
\label{definition:Gaussian loss channels} 
\end{definition}

\begin{definition}[Bosonic pure-loss channels]
Bosonic pure-loss channels are a special case of Gaussian thermal-loss channels with $\nth=0$. That is, the bosonic pure-loss channel with transmissivity $\eta\in[0,1]$ is defined as 
\begin{align}
\mathcal{N}[\eta,0](\hat{\rho}_{1}) \equiv \mathrm{Tr}_{2}[\hat{B}(\eta)(\hat{\rho}_{1}\otimes |0\rangle\langle 0|_{2})\hat{B}^{\dagger}(\eta)]. 
\end{align}
The bosonic pure-loss channel $ \mathcal{N}[\eta,0]$ is characterized by 
\begin{align}
\mathcal{N}[\eta,0] &\leftrightarrow (\boldsymbol{T},\boldsymbol{N},\boldsymbol{d}) = \Big{(}\sqrt{\eta}\boldsymbol{I}_{2}, \frac{1}{2} (1-\eta)\boldsymbol{I}_{2},0\Big{)}, 
\end{align}
\label{definition:Bosonic pure loss channels} 
\end{definition} 

Now, we move on to the amplification channels. 

\begin{definition}[Noisy amplification channels]
Let $\hat{S}_{2}(G)$ be a two-mode squeezing unitary operation with gain $G\ge 1$, acting on modes $1$ and $2$. Then, the noisy amplification channel is defined as 
\begin{align}
\mathcal{A}[G,\nth](\hat{\rho}_{1})  \equiv \mathrm{Tr}_{2}[\hat{S}_{2}(G)(\hat{\rho}_{1}\otimes \hat{\tau}(\nth) )\hat{S}_{2}^{\dagger}(G)]. 
\end{align}
The noisy amplification channel $\mathcal{A}[G,\nth]$ is characterized by 
\begin{align}
\mathcal{A}[G,\nth]&\leftrightarrow (\boldsymbol{T},\boldsymbol{N},\boldsymbol{d}) = \Big{(}\sqrt{G}\boldsymbol{I}_{2},(G-1) \Big{(}\nth+\frac{1}{2}\Big{)} \boldsymbol{I}_{2},0\Big{)}. 
\end{align}
The above characterization can be derived by specializing Eq.\ \eqref{eq:characterization of a Gaussian channel from the underlying Gaussian unitary} to the case of two-mode squeezing unitaries and the initial thermal environmental state. 
\label{definition:noisy amplification channels} 
\end{definition}

\begin{definition}[Quantum-limited amplification channels] 
Quantum-limited amplification channels are a special case of noisy amplification channels with $\nth=0$. That is, the quantum limited amplification channel with gain $G\ge 1$ is defined as 
\begin{align}
\mathcal{A}[G,0](\hat{\rho}_{1})  \equiv \mathrm{Tr}_{2}[\hat{S}_{2}(G)(\hat{\rho}_{1}\otimes |0\rangle\langle 0|_{2} )\hat{S}_{2}^{\dagger}(G)] . 
\end{align}
The quantum-limited amplification channel $\mathcal{A}[G,0]$ is characterized by
\begin{align}
\mathcal{A}[G,0]&\leftrightarrow (\boldsymbol{T},\boldsymbol{N},\boldsymbol{d}) = \Big{(}\sqrt{G}\boldsymbol{I}_{2},\frac{1}{2}(G-1)\boldsymbol{I}_{2},0\Big{)}. 
\end{align}
\label{definition:quantum limited amplification} 
\end{definition}

Note that the noise $\boldsymbol{N}=\frac{(G-1)}{2}\boldsymbol{I}_{2}$ is due to the variance of the ancillary vacuum state, transferred to the system via the two-mode squeezing operation. Since the vacuum state has the minimum variance allowed by the Heisenberg uncertainty principle, the quantum-limited amplification incurs the least noise among all linear amplification channels \cite{Heffner1962}.
  
Lastly, we introduce additive Gaussian noise channels which are also sometimes referred to as Gaussian random displacement channels. 
\begin{definition}[Additive Gaussian noise channels]
The additive Gaussian noise channel is defined as 
\begin{equation}
\mathcal{N}_{B_{2}}[\sigma^{2}](\hat{\rho}) \equiv \frac{1}{\pi\sigma^{2}} \int d^{2}\alpha e^{-\frac{|\alpha|^{2}}{\sigma^{2}}} \hat{D}(\alpha)\hat{\rho}\hat{D}^{\dagger}(\alpha),  
\end{equation}
where $\hat{D}(\alpha)$ is the displacement operator and $\sigma^{2}$ is the variance of the random shifts.
The additive Gaussian noise channels belong to the class $B_{2}$ channel \cite{Caruso2006B,Holevo2007} (hence the subscript) and is characterized by 
\begin{align}
\mathcal{N}_{B_{2}}[\sigma^{2}] \leftrightarrow (\boldsymbol{T},\boldsymbol{N},\boldsymbol{d}) = (\boldsymbol{I}_{2},\sigma^{2}\boldsymbol{I}_{2},0). 
\end{align}
\label{definition:Gaussian displacement channel}
\end{definition}

\backmatter

\nocite{apsrev41Control}
\bibliography{KN_dissertation_v5,revtex-custom}

\end{document}